\newtheorem{theorem}{Theorem}[section]
\newtheorem{lemma}[theorem]{Lemma}
\newtheorem{corollary}[theorem]{Corollary}
\newtheorem{proposition}[theorem]{Proposition}
\newtheorem{remark}[theorem]{Remark}
\numberwithin{equation}{section}
\newcommand{\supp}{\mathrm{supp}}
\newcommand{\TM}{T\mathcal{M}}
\newcommand{\gs}{\mathbf{X}}
\newcommand{\ms}{\mathcal{P}}
\newcommand{\Xb}{\overline{X}}
\newcommand{\Pb}{\overline{P}}
\newcommand{\Phib}{\overline{\Phi}}
\newcommand{\tr}{\mathrm{tr}}
\def\newhat#1{{\,\widehat{\!#1\!}\,\,}}
\newcommand{\ph}{\newhat{p}}
\newcommand{\Ph}{\newhat{P}}
\newcommand{\Xh}{\newhat{X}}
\newcommand{\Gh}{\newhat{\Gamma}}
\newcommand{\Zb}{\overline{Z}}
\def\ga{\gamma}\def\de{\delta}
\def\Si{\Sigma}
\def\bm{\left( \begin{array}{cc}}
\def\endm{\end{array}\right)}\newcommand{\eq}{\end{equation}}
\def\tr{\text{tr}}
\def\ga{\gamma}\def\de{\delta}\def\Box{\square}\def\pa{\partial}\def\pab{\bar\pa}
\def \rectangle#1#2{\hbox{\vrule\vbox to #2 {\hrule\hbox to #1{\hfil}\vfil\hrule}\vrule}}
\def\Lb{\underline{L}}
\def\tr{\text{tr}}\def\ga{\gamma}
\def\de{\delta}\def\Box{\square}\def\Boxr{\widetilde{\square}}\def\pa{\partial}
\def\pab{\bar\pa}\def\Lb{\underline{L}}
\def\trs{\,\slash{\!\!\!\!\tr}\,}
\newcommand{\les}{\lesssim}
\newcommand{\bea}{\begin{eqnarray}}\newcommand{\eea}{\end{eqnarray}}
\newcommand{\beq}{\begin{equation}}\newcommand{\ee}{\end{equation}}
\newcommand{\R}{{\mathbb R}}
\def\nn{\nonumber}
\begin{document}
\title[Stability of Minkowski space for Einstein--Vlasov]{Global stability of Minkowski space for the Einstein--Vlasov system in the harmonic gauge}
\author{Hans Lindblad}
\address{Johns Hopkins University, Department of Mathematics, 3400 N.\@ Charles Street, Baltimore, MD 21218, USA}
\email{lindblad@math.jhu.edu}
\author{Martin Taylor}
\address{Imperial College London, Department of Mathematics, South Kensington Campus, London, SW7 2AZ, UK}
\email{martin.taylor@imperial.ac.uk}

\date{\today}

\begin{abstract}
	Minkowski space is shown to be globally stable as a solution to the massive Einstein--Vlasov system.  The proof is based on a harmonic gauge in which the equations reduce to a system of quasilinear wave equations for the metric, satisfying the weak null condition, coupled to a transport equation for the Vlasov particle distribution function.  Central to the proof is a collection of vector fields used to control the particle distribution function, a function of both spacetime and momentum variables.  The vector fields are derived using a general procedure, are adapted to the geometry of the solution and reduce to the generators of the symmetries of Minkowski space when restricted to acting on spacetime functions.  Moreover, when specialising to the case of vacuum, the proof provides a simplification of previous stability works.
\end{abstract}

\maketitle

\tableofcontents

\section{Introduction}
\subsection{The Einstein--Vlasov system}
The Einstein--Vlasov system provides a statistical description of a collection of collisionless particles, interacting via gravity as described by Einstein's general theory of relativity.  A fundamental problem is to understand the long time dynamics of solutions of this system.  The problem is a great challenge even in the absence of particles, and global works on the \emph{vacuum Einstein equations} all either involve simplifying symmetry assumptions or solutions arising from small data.  In the asymptotically flat setting, small data solutions of the vacuum Einstein equations were first shown to exist globally and disperse to Minkowski space in the monumental work of Christodoulou--Klainerman \cite{ChKl}.  An alternative proof of the stability of Minkowski space was later given by Lindblad--Rodnianski \cite{LR3} in which a global \emph{harmonic coordinate} system was constructed, described below.  For the Einstein--Vlasov system, the global properties of small data solutions were first understood when the initial data were assumed to be spherically symmetric, an assumption under which the equations simplify dramatically, by Rein--Rendall \cite{ReRe} in the \emph{massive case}, when all particles are assumed to have mass 1, and by Dafermos \cite{Da} in the \emph{massless case}, when all particles are assumed to have mass 0.  Without the simplifying assumption of spherical symmetry, small data solutions of the massless Einstein--Vlasov system were later understood by Taylor \cite{Ta}.

In this work the problem of the stability of Minkowski space for the Einstein--Vlasov system, without any symmetry assumptions, is addressed in the case that all particles have mass 1 (and can easily be adapted to the case that all particles have any fixed mass $m>0$).  The system takes the form,
\begin{equation} \label{eq:Einstein}
	Ric(g)_{\mu \nu} - \frac{1}{2} R(g) g_{\mu \nu} = T_{\mu \nu},
\end{equation}
\begin{equation} \label{eq:Tmunu}
	T^{\mu \nu} (t,x) = \int_{\ms_{(t,x)}} f(t,x,p) p^{\mu} p^{\nu} \frac{\sqrt{\vert \det g \vert}}{p^0} dp^1 dp^2 dp^3,
\end{equation}
\begin{equation} \label{eq:Vlasov}
	\gs(f) = 0,
\end{equation}
where the unknown is a $3+1$ dimensional manifold $\mathcal{M}$ with  Lorentzian metric $g$, together with a particle distribution function $f: \ms \to [0,\infty)$, where the \emph{mass shell} $\mathcal{P}$ is defined by
\[
	\ms = \{ (t,x,p_0,p) \in \TM \mid (p^0,p) \textrm{ future directed, } g_{\mu \nu}p^{\mu}p^{\nu} = -1 \}.
\]
Here $Ric(g)$ and $R(g)$ denote the Ricci and scalar curvature of $g$ respectively.  A coordinate system $(t,x)$ for $\mathcal{M}$, with $t$ a \emph{time function} (i.\@e.\@ the one form $dt$ is timelike with respect to the metric $g$), defines a coordinate system $(t,x,p^0,p)$ for the tangent bundle $\TM$ of $\mathcal{M}$ \emph{conjugate} to $(t,x)$, where $(t,x^i,p^0,p^i)$ denotes the point
\[
	p^0 \partial_t \vert_{(t,x)} + p^i \partial_{x^i} \vert_{(t,x)} \in \TM.
\]
The mass shell relation in the definition of $\ms$,
\begin{equation} \label{eq:massshell}
	g_{\mu \nu}p^{\mu}p^{\nu} = -1,
\end{equation}
should be viewed as defining $p^0$ as a function of $(t,x^1,x^2,x^3,p^1,p^2,p^3)$.  Here Greek indices run over $0,1,2,3$, lower case latin indices run over $1,2,3$, and often the notation $t=x^0$ is used.  The vector $\gs$ is the generator of the geodesic flow of $\mathcal{M}$ which, with respect to the coordinate system $(t,x,p)$ for $\ms$, takes the form,
\begin{equation*}
	\gs = p^{\mu} \partial_{x^{\mu}} - p^{\alpha} p^{\beta} \Gamma_{\alpha \beta}^{i} \partial_{p^{i}}.
\end{equation*}
The volume form $(\sqrt{\vert \det g \vert}/{p^0}) dp^1 dp^2 dp^3$ in \eqref{eq:Tmunu} is the induced volume form of the spacelike hypersurface $\ms_{(t,x)} \!\subset\! T_{(t,x)}\mathcal{M}$ when the tangent space $T_{(t,x)} \mathcal{M}$ is endowed with the metric $g_{\mu\nu}(t,x) dp^{\mu} dp^{\nu}$ induced by $g$ on $\mathcal{M}$.

\subsection{The global existence theorem}
\subsubsection{The initial value problem and the global existence theorem}
In the Cauchy problem for the system \eqref{eq:Einstein}--\eqref{eq:Vlasov} one prescribes an \emph{initial data set}, which consists of a Riemannian 3 manifold $(\Sigma, \overline{g})$ together with a symmetric $(0,2)$ tensor $k$ on $\Sigma$, and an initial particle distribution $f_0$, satisfying the \emph{constraint equations},
\begin{equation*}
	\overline{\mathrm{div}} k_j - (d \overline{\tr} k)_j = T_{0j},
	\qquad
	\overline{R} + (\overline{\tr} k)^2 - \vert k\vert^2_{\overline{g}} = 2 T_{00},
\end{equation*}
for $j=1,2,3$.  Here $\overline{\mathrm{div}}$, $\overline{\tr}$, $\overline{R}$ denote the divergence, trace and scalar curvature of $\overline{g}$ respectively, and $T_{00}, T_{0j}$ denote (what will become) the $00$ and $0j$ components of the energy momentum tensor.  The topology of $\Sigma$ will here always be assumed to be that of $\mathbb{R}^3$.  A theorem of Choquet-Bruhat \cite{ChBr71}, based on previous work by Choquet-Bruhat \cite{ChBr} and Choquet-Bruhat--Geroch \cite{ChBrGe} on the vacuum Einstein equations, (see also the recent textbook of Ringstr\"{o}m \cite{Ri}) guarantees that, for any such initial data set as above, there exists a globally hyperbolic solution $(\mathcal{M},g,f)$ of the system \eqref{eq:Einstein}--\eqref{eq:Vlasov} which attains the initial data, in the sense that there exists an imbedding $\iota : \Sigma \to \mathcal{M}$ under the pullback of which the induced first and second fundamental form of $g$ are $\overline{g}$ and $k$ respectively, and the restriction of $f$ to the mass shell over $\iota(\Sigma)$ is given by $f_0$.  As in \cite{LR3}, the proof is based on the \emph{harmonic gauge} (or \emph{wave gauge}), i.\@e.\@ a system of coordinates $x^{\mu}$ satisfying
\begin{equation} \label{eq:harmonicgauge}
	\Box_g x^{\mu} = 0,
\end{equation}
where $\Box_g$ denotes the geometric wave operator of the metric $g$.

The initial data set
\begin{equation} \label{eq:Minkowskidata}
	\left( \mathbb{R}^3,e, k \equiv 0, f_0 \equiv 0\right),
\end{equation}
where $e$ denotes the Euclidean metric on $\mathbb{R}^3$, satisfies the constraint equations and gives rise to Minkowski space, the trivial solution of the system \eqref{eq:Einstein}--\eqref{eq:Vlasov}.  The main result of this work concerns solutions arising from initial data sufficiently close to the trivial initial data set \eqref{eq:Minkowskidata}.  The initial data will be assumed to be \emph{asymptotically flat} in the sense that $\Sigma$ is diffeomorphic to $\mathbb{R}^3$ and there exists a global coordinate chart $(x^1,x^2,x^3)$ of $\Sigma$ and $M \geq 0$ and $0<\gamma<1$ such that
\begin{equation} \label{eq:af}
	\overline{g}_{ij}
	=
	\left( 1 + \frac{M}{r} \right) \delta_{ij} + o(r^{-1-\gamma}),
	\qquad
	k_{ij} = o(r^{-2-\gamma}),
	\qquad
	\text{as }
	r = \vert x \vert \to \infty.
\end{equation}
For such $\overline{g}$, $k$, write
\[
	\overline{g}_{ij}
	=
	\delta_{ij}
	+
	\overline{h}^0_{ij}
	+
	\overline{h}^1_{ij},\qquad\text{where}\qquad
\overline{h}^0_{ij}(x) = \chi(r) \delta_{ij}M/r,
\]
and $\chi$ is a smooth cut off function such that $0 \leq \chi \leq 1$, $\chi(s) = 1$ for $s \geq {3}/{4}$ and $\chi(s) = 0$ for $s \leq {1}/{2}$.

For given $\gamma>0$ and such an initial data set, define the initial energy,
\begin{equation}\label{eq:initialmetricnorm}
	\mathcal{E}_N
	=
	{\sum}_{\vert I \vert \leq N} \left(
	\Vert (1+r)^{{1}/{2} + \gamma + \vert I \vert} \nabla \nabla^I \overline{h}^1 \Vert_{L^2(\Sigma)}
	+
	\Vert (1+r)^{{1}/{2} + \gamma + \vert I \vert} \nabla^I k \Vert_{L^2(\Sigma)}
	\right),
\end{equation}
where $\nabla = (\partial_{x^1},\partial_{x^2}, \partial_{x^3})$ denotes the coordinate gradient, and the initial norms for $f_0$,
\begin{equation}\label{eq:initialvlasovnorm}
	\mathbb{D}_K
	=
	{\sum}_{k + \ell\leq K}
	\| \partial_{x}^{k} \partial_{p}^{\ell} f_0 \|_{L^\infty},
	\qquad
	\mathcal{V}_K
	=
	{\sum}_{k + \ell\leq K}
	\| \partial_{x}^{k} \partial_{p}^{\ell} f_0 \|_{L^2_x L^2_p}.
\end{equation}
By the Sobolev inequality there exists a constant $C$ such that $\mathbb{D}_{K-4} \leq C \mathcal{V}_K$ for any $K \geq 0$.

The main result of this work is the following.

\begin{theorem} \label{thm:main1}
	Let $(\Sigma, \overline{g}, k,f_0)$ be an initial data set for the Einstein--Vlasov system \eqref{eq:Einstein}--\eqref{eq:Vlasov}, asymptotically flat in the above sense, such that $f_0$ is compactly supported.  For any $0<\gamma <1$ and $N \geq 11$, there exists $\varepsilon_0>0$ such that, for all $\varepsilon \leq \varepsilon_0$ and all initial data satisfying
	\[
		\mathcal{E}_N + \mathcal{V}_N + M \leq \varepsilon,
	\]
	there exists a unique future geodesically complete solution of \eqref{eq:Einstein}--\eqref{eq:Vlasov}, attaining the given data, together with a global system of harmonic coordinates $(t,x^1,x^2,x^3)$, relative to which the solution asymptotically decays to Minkowski space.
\end{theorem}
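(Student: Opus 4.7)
The plan is to close a bootstrap argument on weighted energy and pointwise norms for the metric perturbation $h_{\mu\nu} = g_{\mu\nu} - m_{\mu\nu}$, where $m$ is the Minkowski metric, and for the Vlasov distribution $f$, extending the harmonic-gauge framework of Lindblad--Rodnianski \cite{LR3} to accommodate the Vlasov source. In harmonic coordinates \eqref{eq:harmonicgauge} the Einstein equations \eqref{eq:Einstein} take the schematic quasilinear form
\[
\widetilde{\square}_g h_{\mu\nu} = P_{\mu\nu}(\partial h, \partial h) + Q_{\mu\nu}(\partial h, \partial h) + S_{\mu\nu}[T] + (\text{cubic}),
\]
where $P_{\mu\nu}$ contains the borderline semi-linear terms controlled by the weak null condition, $Q_{\mu\nu}$ consists of classical null forms, and $S_{\mu\nu}[T]$ is a linear expression in $T_{\mu\nu}$ arising from the trace-reversed Einstein equations that acts as an additional Vlasov source.

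The central new ingredient I would construct is a collection of modified vector fields $\hat Z$ on $\mathcal P$ adapted to the perturbed geometry. In Minkowski space, the Poincar\'{e} generators $\partial_{x^\mu}$, $x^\mu\partial_{x^\nu} - x^\nu \partial_{x^\mu}$, and the scaling $x^\mu \partial_{x^\mu}$ admit canonical complete lifts to the tangent bundle of $\mathbb{R}^{3+1}$ that are tangent to the flat mass shell and commute with the flat geodesic spray. For the perturbed metric these lifts must be modified so that (i) they are tangent to the curved mass shell \eqref{eq:massshell}, (ii) on functions depending only on $(t,x)$ they reduce to the Minkowski generators (so that Klainerman--Sobolev decay of $h$ is available), and (iii) the commutator $[\hat Z, \mathbf X]$ is a $\partial_{p^i}$-valued operator whose coefficients are schematically linear in $\partial h$ and $\partial^2 h$, and hence small under the bootstrap. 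Constructing $\hat Z$ by a general procedure, and verifying that iterated commutators $[\hat Z^I, \mathbf X]$ retain this structure with a controlled loss of at most one derivative of $h$ per $\hat Z$ applied, is the principal technical obstacle and determines the regularity count $N \geq 11$ appearing in \eqref{eq:initialmetricnorm}--\eqref{eq:initialvlasovnorm}.

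With these tools in place the bootstrap closes in four steps. \emph{Transport estimates for $f$}: apply $\hat Z^I$ to \eqref{eq:Vlasov}, integrate along the geodesic characteristics of $\mathbf X$, and use the bootstrap on $h$ to absorb the commutator errors, propagating $L^\infty$ and $L^2_{x,p}$ control of $\hat Z^I f$ from the data norms $\mathbb D_N$ and $\mathcal V_N$. \emph{Support propagation}: since the particles are massive, $|p|/p^0$ remains bounded away from $1$ along characteristics, so $f(t,\cdot,\cdot)$ has spatial support strictly interior to the light cone in a region $\{|x| \leq (1-\delta) t + C\}$ with bounded momentum support, whence $T_{\mu\nu}$ is supported away from the wave zone. \emph{Estimates on $T_{\mu\nu}$}: combining \eqref{eq:Tmunu} with the bounded momentum support and the vector-field bounds on $f$, iterated Klainerman vector fields applied to $T_{\mu\nu}$ decay like $(1+t)^{-3}$ inside their support, comfortably better than what the wave-side estimates require. \emph{Metric estimates}: run the Lindblad--Rodnianski weighted energy and pointwise decay scheme for $h$ with the $T_{\mu\nu}$ source included, recovering improved smallness and closing the bootstrap. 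The hardest part will be the first: designing $\hat Z$ so as to simultaneously reproduce the Minkowski generators on spacetime functions, remain tangent to the mass shell in the curved geometry, and yield commutators with $\mathbf X$ that are absorbable at top order without exceeding the derivative budget set by the wave estimates.
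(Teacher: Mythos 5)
Your overall architecture matches the paper's: harmonic gauge reduction, modified vector fields on the mass shell reducing to the Minkowski generators on spacetime functions, propagation of compact support away from the wave zone, estimates on $T_{\mu\nu}$, and the Lindblad--Rodnianski weighted energy scheme for the metric. However, the step you yourself identify as the hardest --- the design of $\hat Z$ --- is where the proposal has a genuine gap, because the acceptance criterion you impose on the commutator is too weak to close the argument. You ask that $[\hat Z,\mathbf X]$ be a $\partial_{p^i}$-valued operator with coefficients schematically linear in $\partial h$, $\partial^2 h$, ``hence small under the bootstrap.'' The unmodified complete lifts $\overline{Z}^M$ (the Minkowski rotations, boosts and scaling lifted to the tangent bundle) already satisfy exactly this: their commutators with $\mathbf X$ are linear in $\Gamma$ and $\partial\Gamma$. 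The problem is that the sharp \emph{interior} decay of $\Gamma$ in these spacetimes is only $t^{-2}$ (and the bootstrap only furnishes $t^{-1-a}$ with $a<1$), and a coefficient of size $\varepsilon t^{-1-a}$ integrated along a characteristic of length $t$ against the growing $\hat p$-derivatives of $f$ produces $\big|\overline{Z}^M f\big|\lesssim \log t$, a loss that compounds at each order and destroys the higher-order estimates. The actual construction must therefore \emph{correct} the lifts by terms $\mathring{Z}^k\partial_{\hat p^k}$ built from integrals of $\Gamma$ along explicit approximate characteristics ($X_1(s)=sx/t$), chosen so that $\overline{Z}$ applied to the approximate geodesic flow map $X_2(t_0,t,x,\hat p)$ is \emph{exactly} bounded, not merely bounded up to an integrable-looking commutator. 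Your criterion does not force this correction and admits vector fields for which the transport estimate fails.

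A second, related omission concerns the derivative count at top order. If one builds the correction $\mathring{Z}$ from the \emph{true} geodesics, then $\partial_{\hat p}\mathring{Z}$ involves $\partial^2\Gamma$, and the integration by parts in $\hat p$ that converts $\int \mathring{Z}^l\partial_{\hat p^l}f\,d\hat p$ into $-\int(\partial_{\hat p^l}\mathring{Z}^l)f\,d\hat p$ (needed to exploit the extra decay of momentum averages) loses a derivative of the metric, which cannot be afforded when estimating $Z^IT^{\mu\nu}$ at order $N$. This is why the correction must be built from approximations to the geodesics whose $\hat p$-dependence is affine. Finally, the claim that $Z^IT_{\mu\nu}$ decays like $(1+t)^{-3}$ ``comfortably better than what the wave-side estimates require'' is too optimistic at top order: the $L^2$ estimate that enters the energy inequality is not a fixed rate but is coupled back to $\|Z^J\Gamma\|_{L^2}$, contributing a borderline term of the form $\varepsilon\,E_N(t)^{1/2}/(1+t)$ that must be handled by Gr\"onwall, and it is essential that this estimate not lose a derivative relative to the energy being propagated.
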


The precise sense in which the spacetimes of Theorem \ref{thm:main1} asymptotically decay to Minkowski space is captured in the estimates \eqref{eq:main2energy}, \eqref{eq:main2pointwise}, \eqref{eq:wavecoordinatederivativeLiefirst}, \eqref{eq:wavecoordinatefunctionLiefirst}, \eqref{eq:sharpHhigherlowderbeginsec61}, \eqref{eq:sharpHhigherlowderbeginsec6}, \eqref{eq:weakdecayderh1}, \eqref{eq:weakdecayh1} below.

A similar stability result to Theorem \ref{thm:main1} has been shown independently by Fajman--Joudioux--Smulevici \cite{FaJoSm172}.

There have been a number of related stability works on the Einstein--Vlasov system.  In addition to those discussed earlier, there has been related work by Ringstr\"{o}m \cite{Ri} on the Einstein--Vlasov system in the presence of a positive cosmological constant, where the analogue of the Minkowski solution is the de Sitter spacetime.  A stability result for a class of cosmological spacetimes was shown in $2+1$ dimensions with vanishing cosmological constant by Fajman \cite{Fa}.  See also the recent work of Moschidis \cite{Mo} on the instability of the anti-de Sitter spacetime for a related spherically symmetric model in the presence of a negative cosmological constant.  A much more comprehensive overview of work on the Einstein--Vlasov system can be found in the review paper of Andr\'{e}asson \cite{An}.

There has also been work on the problem of the stability of Minkowski space for the Einstein equations coupled to various other matter models \cite{Bi}, \cite{Zi}, \cite{Fr}, \cite{Hu}, \cite{KlNi}, \cite{LeMa}, \cite{Lo}, \cite{Sp}.

Small data solutions of the Vlasov--Poisson system, the non-relativistic analogue of the Einstein--Vlasov system, were studied in \cite{BaDe}, \cite{HwReVe}, and those of the Vlasov--Maxwell system in \cite{GlSt}.  We note in particular works of Smulevici \cite{Sm} and Fajman--Joudioux--Smulevici \cite{FaJoSm17} (following \cite{FaJoSm}) on the asymptotic properties of small data solutions of the Vlasov--Poisson and the related Vlasov--Nordstr\"{o}m systems respectively, where issues related to those described in Section \ref{subsec:introvfs2} arise and are resolved using an alternative approach.  Moreover, there are global existence results for general data for Vlasov--Poisson \cite{Pf}, \cite{LiPe} and Vlasov--Nordstr\"{o}m \cite{Ca}.

\subsubsection{Small data global existence for the reduced Einstein--Vlasov system}

Following \cite{LR3}, the proof of Theorem \ref{thm:main1} is based on a harmonic gauge \eqref{eq:harmonicgauge}, relative to which the Einstein equations\footnote{Note that the Einstein equations \eqref{eq:Einstein} are equivalent to $Ric(g)_{\alpha\beta}=T_{\alpha\beta}- \frac{1}{2}g_{\alpha\beta}\operatorname{tr}_g T$.} \eqref{eq:Einstein} take the form of a system of quasilinear wave equations,
\begin{equation}\label{eq:RE1}
	\Boxr_g \, g_{\mu\nu}
	=
	F_{\mu\nu} (g)(\pa g, \pa g)+\widehat{T}_{\mu\nu} ,
	\qquad
	\text{where}
	\qquad
	\Boxr_g=g^{\alpha\beta}\pa_\alpha\pa_\beta,
\end{equation}
where $\widehat{T}_{\mu \nu} := T_{\mu \nu} - \frac{1}{2}g_{\mu \nu}\operatorname{tr}_g T$ and $F_{\mu \nu}(u)(v,v)$ depends quadratically on $v$.
The system \eqref{eq:Tmunu}, \eqref{eq:Vlasov}, \eqref{eq:RE1} is known as the \emph{reduced Einstein--Vlasov system}.
The condition that the coordinates $x^{\mu}$ satisfy the harmonic gauge condition \eqref{eq:harmonicgauge} is equivalent to the metric in the coordinates $x^{\mu}$ satisfying the \emph{wave coordinate condition},
\begin{equation} \label{eq:wcc}
	g^{\alpha \beta} \partial_{\alpha} g_{\beta \mu}
	=
	\frac{1}{2} g^{\alpha \beta} \partial_{\mu} g_{\alpha \beta},
\qquad\text{for}\qquad \mu = 0,1,2,3
\end{equation}

Let $m = \text{diag}\,(-1,1,1,1)$ denote the Minkowski metric in Cartesian coordinates and, for a solution $g$ of the reduced Einstein equations \eqref{eq:RE1}, write
\begin{equation} \label{eq:metricdecomp}
	g
	=
	m + h^0 + h^1,
	\qquad
	\text{where}
	\quad
	h^0_{\mu \nu}(t,x)
	=
	\chi \left( \frac{r}{t} \right) \chi(r) \frac{M}{r} \delta_{\mu \nu},
\end{equation}
and define, for given $0< \gamma < 1$, $0<\mu < 1-\gamma$, the energy at time $t$,
\begin{equation} \label{eq:energydef}
	E_N(t)
	=
	\sum_{\vert I \vert \leq N}
	\Vert
	w^{\frac{1}{2}}
	\partial Z^I h^1 (t,\cdot)
	\Vert_{L^2}^2,
	\qquad \qquad
	\text{where}
	\qquad
	w(t,x)
	=
	\begin{cases}
		(1+|r-t|)^{1+2\gamma},\,\,\,
		&
		r>t
		\\
		1+(1+|r-t|)^{-2\mu},
		\quad
		&
		r\leq t.
	\end{cases}
\end{equation}
Here $I$ denotes a multi index and $Z^I$ denotes a combination of $|I|$ of the vector fields,
\[
	\Omega_{ij} = x^i \partial_{x^j} - x^j \partial_{x^i},
	\quad
	B_{i} = x^i \partial_t + t \partial_{x^i},
	\quad
	S = t\partial_t + x^k \partial_{x^k},\quad\text{and}\quad\partial_\alpha,
\]
for $i,j=1,2,3$ and $\alpha=0,1,2,3$.
Let $\vert \cdot \vert$ denote the norm,
$	\vert x \vert = {\big( (x^1)^2 + (x^2)^2 + (x^3)^2 \big)}^{{1}\!/{2}}$,
and
$
	\vert h(t,x) \vert
	=
	\sum_{\alpha,\beta=0}^3 \vert h_{\alpha \beta}(t,x)\vert$,
	$\left\vert  \Gamma (t,x) \right\vert
	=
	\sum_{\alpha,\beta,\gamma=0}^3
	\big\vert \Gamma^{\alpha}_{\beta \gamma} (t,x) \big\vert $,
and similarly $\Vert h(t,\cdot) \Vert_{L^2} = \sum_{\alpha,\beta=0}^3 \Vert h_{\alpha \beta}(t,\cdot)\Vert_{L^2}$ etc.  The notation $A \lesssim B$ will be used if there exists a universal constant $C$ such that $A \leq C B$.

Theorem \ref{thm:main1} follows as a Corollary of the following theorem.

\begin{theorem} \label{thm:main2}
	For any $0<\gamma<1$ and $N\geq 11$, there exists $\varepsilon_0>0$ such that, for any data $(g_{\mu \nu}, \partial_t g_{\mu \nu},f) \vert_{t=0}$ for the reduced Einstein--Vlasov system \eqref{eq:Tmunu}, \eqref{eq:Vlasov}, \eqref{eq:RE1} which satisfy the smallness condition
	\[
		E_N(0)^{\frac{1}{2}} + \mathcal{V}_N + M < \varepsilon,
	\]
	for any $\varepsilon \leq \varepsilon_0$ and the wave coordinate condition \eqref{eq:wcc}, and such that $f\vert_{t=0}$ is compactly supported, there exists a global solution attaining the data such that
	\begin{equation} \label{eq:main2energy}
		\left(E_N(t) \right)^{\frac{1}{2}} + {\sum}_{\vert I \vert \leq N} (1+t) \Vert Z^I T^{\mu \nu}(t,\cdot) \Vert_{L^2}
		\leq
		C_N \varepsilon (1+t)^{C_N' \varepsilon},
	\end{equation}
	for all $t\geq 0$, along with the decay estimates
\begin{equation} \label{eq:main2pointwise}
|Z^I h^1(t,x)|\leq \frac{C_N^\prime\varepsilon(1+t)^{C_N'\varepsilon}}{(1+t+r) (1+q_+)^{\gamma}},
\qquad |I|\leq N-3,\qquad q_+=\begin{cases}
		r-t,\,\,\,
		&
		r>t
		\\
		0,
		\quad
		&
		r\leq t,
	\end{cases}
\end{equation}
and the estimates \eqref{eq:wavecoordinatederivativeLiefirst}, \eqref{eq:wavecoordinatefunctionLiefirst}, \eqref{eq:sharpHhigherlowderbeginsec61}, \eqref{eq:sharpHhigherlowderbeginsec6}, \eqref{eq:weakdecayderh1}, \eqref{eq:weakdecayh1} stated in Section \ref{section:Einstein}.
\end{theorem}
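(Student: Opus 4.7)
The plan is a continuity/bootstrap argument combining the weighted energy method of Lindblad--Rodnianski \cite{LR3} for the wave system \eqref{eq:RE1} with a vector field method adapted to the geodesic flow for the transport equation \eqref{eq:Vlasov}. On a putative maximal time interval $[0,T^{*})$ I would assume that the conclusions \eqref{eq:main2energy} and \eqref{eq:main2pointwise} hold with $C_N, C_N'$ replaced by doubled constants and then improve each bound, so that by local well-posedness $T^{*}=\infty$. A preliminary observation is that, because $f_{0}$ is compactly supported in $p$ and the bootstrap assumption on $\partial h$ forces the Christoffel symbols appearing in $\gs = p^{\mu}\partial_{x^{\mu}}-p^{\alpha}p^{\beta}\Gamma_{\alpha\beta}^{i}\partial_{p^{i}}$ to be small, the $p$-support of $f$ on $[0,T^{*})$ remains uniformly bounded; this will be used repeatedly.

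For the metric I would commute \eqref{eq:RE1} with each $Z^{I}$, $|I|\leq N$, and run the weighted $L^{2}$ energy estimate with the weight $w$ of \eqref{eq:energydef}. The quasilinear terms and the self-interaction $F_{\mu\nu}(g)(\partial g,\partial g)$ are handled exactly as in \cite{LR3}: the weak null structure, the wave coordinate identities \eqref{eq:wcc} governing the ``bad'' components of $h^{1}$, and Klainerman--Sobolev inequalities yield an inequality of the schematic form
\[
	\tfrac{d}{dt}E_{N}(t)^{1/2} \lesssim \varepsilon(1+t)^{-1+C'\varepsilon} E_{N}(t)^{1/2} + {\sum}_{|I|\leq N}\Vert w^{1/2} Z^{I}\widehat{T}_{\mu\nu}(t,\cdot)\Vert_{L^{2}}.
\]
A Gr\"onwall argument then reduces improving the wave part of the bootstrap to establishing the matter bound ${\sum}_{|I|\leq N}(1+t)\Vert Z^{I}\widehat{T}_{\mu\nu}(t,\cdot)\Vert_{L^{2}}\lesssim \varepsilon(1+t)^{C'\varepsilon}$, after which the pointwise bound \eqref{eq:main2pointwise} follows by a weighted Klainerman--Sobolev estimate on the $\{t\}\times\mathbb{R}^{3}$ slices.

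For the Vlasov field I would construct a family $\{\widehat{Z}\}$ of vector fields on $\ms$, obtained by complete lift of $\partial_{\alpha}$, $\Omega_{ij}$, $B_{i}$, $S$ to $\TM$ followed by explicit correction terms involving $\Gamma$ and $p$, chosen so that (i) each $\widehat{Z}$ restricts on functions of $(t,x)$ alone to the corresponding $Z$, (ii) $[\gs,\widehat{Z}]=0$ on Minkowski space, and (iii) on the perturbed spacetime $[\gs,\widehat{Z}]$ is a first order operator of the schematic form $(p\cdot\partial Zh)\,\widehat{Z} + \text{l.o.t.}$ Iterating gives $\gs(\widehat{Z}^{I}f) = \sum_{|J|\leq|I|}(p\cdot\partial Z h)\,\widehat{Z}^{J}f$, which can be integrated along the geodesics of $g$ using the metric bounds from the bootstrap to obtain $L^{\infty}$ and $L^{2}$ control on $\widehat{Z}^{I}f$ for $|I|\leq N$. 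The moments $Z^{I}T^{\mu\nu}$ are then rewritten in terms of $\widehat{Z}^{J}f$ with $|J|\leq|I|$ via property (i), and the combination of bounded $p$-support with the $(1+t)^{-3}$ spatial dispersion of particles launched from the initial support yields $\Vert Z^{I}T^{\mu\nu}(t,\cdot)\Vert_{L^{2}}\lesssim \varepsilon(1+t)^{-3/2+C\varepsilon}$, which is stronger than needed to close the matter bootstrap.

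The principal obstacle will be the systematic construction and analysis of the lifted fields $\widehat{Z}$: one must verify that the commutators $[\gs,\widehat{Z}^{I}]$ can be expressed entirely in terms of quantities already controlled by the metric bootstrap, that no derivative loss occurs at the top order $|I|=N$, and that the Christoffel corrections do not destroy property (i) or introduce uncontrolled weights in $p$. A secondary difficulty is the dispersive estimate itself: since the characteristics are timelike geodesics of the perturbed metric rather than straight lines, the $(1+t)^{-3}$ spreading must be derived from a quantitative comparison of curved and flat flows using the pointwise control on $h^{1}$, and this comparison has to be propagated all the way to $|I|=N$ in order to feed back into the wave energy estimate and close the whole scheme.
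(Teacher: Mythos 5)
Your overall architecture --- a bootstrap coupling the Lindblad--Rodnianski weighted energy method for \eqref{eq:RE1} with a vector field method on the mass shell, using the compact momentum support and $(1+t)^{-3}$ dispersion to control the moments --- is the same as the paper's. However, there are two genuine gaps. First, your claim that \eqref{eq:main2pointwise} ``follows by a weighted Klainerman--Sobolev estimate'' is false in the interior region $r<t$: the Klainerman--Sobolev inequality applied to the energy only yields $|Z^I h^1|\lesssim \varepsilon(1+t)^{\delta}(1+|t-r|)^{1/2}/(1+t+r)$ there (cf.\ \eqref{eq:weakdecayh1}), hence $|\partial Z^I h^1|\lesssim\varepsilon(1+t)^{-3/2+\delta}$ in the support of the matter. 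This corresponds to $a<1/2$ in the hypothesis \eqref{eq:GammamainL2} of Theorem \ref{thm:mainL2}, and with that rate the integrals $\int s\,|Z\Gamma(s,X(s))|\,ds$ along characteristics diverge like $t^{1/2}$, so neither the transport estimates nor the matter bound closes. The paper instead bootstraps an $L^1$ bound on $\sum_{|I|\le N-1}\|Z^I\widehat T(t,\cdot)\|_{L^1}$ and uses H\"ormander's $L^1$--$L^\infty$ estimate (Lemma \ref{lemma:hormander}, Proposition \ref{prop:weakdecayhormander}) to upgrade the interior decay to $|Z^Ih^1|\lesssim\varepsilon(1+t)^{2\delta}/(1+t+r)$, which is what makes $a>1/2$ available. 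Your bootstrap set must therefore include an $L^1$ (or equivalent interior) bound on the matter, and the pointwise decay has to be derived from the wave equation itself, not from the energy alone.

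Second, you correctly identify the construction of the corrected lifted fields $\widehat Z$ as the principal obstacle, but the schematic commutator $[\gs,\widehat Z]\sim(p\cdot\partial Zh)\widehat Z+\mathrm{l.o.t.}$ is not something the complete lifts satisfy, and you do not specify corrections that achieve it. With the available interior decay $|Z^I\Gamma|\lesssim\varepsilon t^{-1-a}$, $\tfrac12<a<1$, the uncorrected lifts produce, upon integrating backward along geodesics, terms of the form $\int_{t_0}^{t}s'\,|Z\Gamma(s',\cdot)|\,ds'\sim t^{1-a}$, which grow polynomially and compound at each order; this is exactly why the paper replaces $\overline Z^M$ by $\overline Z=\overline Z^M+\mathring Z^k\partial_{\ph^k}$ with $\mathring Z^k$ built from integrals of $\Gamma$ along the explicit approximate characteristics $X_2$ (Sections \ref{section:newsection}--\ref{section:geodesics}), chosen precisely so that $\overline Z(X_2(t_0))$ is bounded and the divergent contributions cancel. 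A related point you should not gloss over is derivative counting at top order: the corrections must be built from approximate rather than true geodesics so that $\partial_{\ph}\mathring Z$ costs only one derivative of $\Gamma$, and the top-order $L^2$ estimate for $Z^IT^{\mu\nu}$ (Proposition \ref{prop:mainL22}) cannot lose a derivative; the price is that it only gives $\|Z^NT^{\mu\nu}\|_{L^2}\lesssim\varepsilon(1+t)^{-1+\delta}$, not the $(1+t)^{-3/2+C\varepsilon}$ you assert, which is still exactly what \eqref{eq:main2energy} requires.
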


Note that the wave coordinate condition \eqref{eq:wcc}, if satisfied initially, is propagated in time by the reduced Einstein--Vlasov system \eqref{eq:Tmunu}, \eqref{eq:Vlasov}, \eqref{eq:RE1} (this fact is standard; see, for example, Section 4 of \cite{LR2}, which requires only minor modifications for the presence of matter).

Given an initial data set $(\Sigma,\overline{g},k,f_0)$ as in Theorem \ref{thm:main1}, define initial data for the reduced equations,
\[
	g_{ij} \vert_{t=0} = \overline{g}_{ij},
	\quad
	g_{00} \vert_{t=0} = - a^2,
	\quad
	g_{0i} \vert_{t=0} = 0,\qquad
a(x)^2 = \left( 1 - \chi(r)  {M}/{r}\right),
\]
and
\[
	\partial_t g_{ij} \vert_{t=0} = -2a k_{ij},
	\quad
	\partial_t g_{00} \vert_{t=0} = 2 a^3 \overline{g}^{ij} k_{ij},
	\quad
	\partial_t g_{0i} \vert_{t=0}
	=
	a^2 \overline{g}^{jk} \partial_j \overline{g}_{ik}
	-
	\frac{a^2}{2} \overline{g}^{jk} \partial_i \overline{g}_{jk}
	-
	a \partial_i a.
\]
One can show that, with this choice,
$
	\left( E_N(0) \right)^{{1}/{2}} \!\!\lesssim \mathcal{E}_N,
$
where $\mathcal{E}_N$ given by \eqref{eq:initialmetricnorm} is the norm of geometric data,
and moreover $(g_{\mu \nu}, \partial_t g_{\mu \nu}) \vert_{t=0}$ satisfy the wave coordinate condition \eqref{eq:wcc}, see, for example, \cite{LR3,LR2}.

It is therefore clear that Theorem \ref{thm:main1} follows from Theorem \ref{thm:main2} (the future causal geodesic completeness can be shown as in \cite{LR2}) and so the goal of the paper is to establish the proof of Theorem \ref{thm:main2}.

\subsection{Estimates for the Vlasov matter}
In what follows it is convenient, instead of parameterising the mass shell $\ms$ by $(t,x,p)$, to instead parameterise it by $(t,x,\ph)$, where
\[
	\ph^i = {p^i}/{p^0},
\]
for $i=1,2,3$.  Note that, by the mass shell relation \eqref{eq:massshell}, in Minkowski space $(p^0)^2 = 1 + (p^1)^2 + (p^2)^2 + (p^3)^2$ and, under a mild smallness condition on $g - m$, $\vert \ph \vert < 1$.  Abusing notation slightly, we will write $f(t,x,\ph)$ for the solution of the Vlasov equation \eqref{eq:Vlasov}.

Let $\{ \Sigma_t \}$ denote the level hypersurfaces of the time coordinate $t$, and let $X(s,t,x,\ph)^i$, $\Ph(s,t,x,\ph)^i$ denote solutions of the geodesic equations,
\begin{equation} \label{eq:geodesictimenormalized}
	\frac{d X^i}{d s}(s,t,x,\ph) = \Ph^i(s,t,x,\ph),
	\qquad
	\frac{d \Ph^i}{ds}(s,t,x,\ph) = \Gh^i (s, X(s,t,x,\ph), \Ph(s,t,x,\ph)),
\end{equation}
normalised so that $(s,X(s,t,x,\ph)) \in \Sigma_s$, with
\[
	X^i(t,t,x,\ph) = x^i,
	\qquad
	\Ph^i(t,t,x,\ph) = \ph^i.
\]
Here
\[
	\Gh^{\mu} (t,x,\ph)
	=
	\Gamma_{\alpha \beta}^0(t,x)\ph^{\alpha} \ph^{\beta} \ph^{\mu}
	-
	\Gamma_{\alpha\beta}^{\mu} (t,x) \ph^{\alpha}\ph^{\beta},\qquad \ph^0=1,
\]
where  $\Gamma^{\alpha}_{\beta \gamma}$ are the Christoffel symbols of the metric $g$ with respect to a given coordinate chart $(t,x^1,x^2,x^3)$.
Define $X(s,t,x,\ph)^0 = s$ and $\Ph(s,t,x,\ph)^0 = 1$.  The notation $X(s)$, $\Ph(s)$ will sometimes be used for $X(s,t,x,\ph)$, $\Ph(s,t,x,\ph)$ when it is clear from the context which point $(t,x,\ph)$ is meant, and the notation $\newhat{X}(s)=(s,X(s))$ will sometimes be used.

It follows that the Vlasov equation \eqref{eq:Vlasov} can be rewritten as,
\begin{equation} \label{eq:Vlasov2}
	f(t,x,\ph) = f(s,X(s,t,x,\ph),\Ph(s,t,x,\ph))=f_0(X(0,t,x,\ph),\Ph(0,t,x,\ph)),
\end{equation}
for all $s$.  The notation $(y,q)$ will be used to denote points in the mass shell over the initial hypersurface, $\ms\vert_{t=0}$.  In Theorem \ref{thm:main2} it is assumed that $f_0$ has compact support; $|y|\leq K$ and $|q|\leq K'$ for $(y,q) \in \supp(f_0)$, for some constants $K$, $K'$. Under a
relatively mild smallness assumptions on $h=g-m$, see Proposition \ref{prop:suppf}, it follows that there exists $c<1$, depending only on $K'$, such that solutions of the Vlasov equation satisfy
\beq\label{eq:supportofmatterintro}
\supp f\subset\{ (t,x,p);\, |x|\leq K+ct,\, \,|p\,|\leq K'+1,\, \,|\widehat{p}\,|\leq c\}.
\eq

The main new difficulties in the proof of Theorem \ref{thm:main2}, arising from the coupling to the Vlasov equation, are resolved in the following theorem, which is appealed to in the proof of Theorem \ref{thm:main2}.
\begin{theorem} \label{thm:mainL2}
	For a given $t\geq 0$ and $N\geq 1$, suppose that g is a Lorentzian metric such that the Christoffel symbols of $g$ with respect to a global coordinate system $(t,x^1,x^2,x^3)$, for some ${1}/{2} <a<1$, satisfy
	\begin{equation} \label{eq:GammamainL2}
		\left\vert Z^I \Gamma(t',x) \right\vert
		\leq
		\frac{C_N^\prime \varepsilon}{(1+t')^{1+a}},
		\qquad
		\text{for }
		\vert x \vert \leq ct' + K,
		\quad
		\vert I \vert \leq \left\lfloor \frac{N}{2} \right\rfloor +2,
	\end{equation}
	for all $t'\!\in[0,t]$. Then there exists $\varepsilon_0>0$ such that, for $\varepsilon<\varepsilon_0$ and any solution $f$ of the Vlasov equation \eqref{eq:Vlasov} satisfying \eqref{eq:supportofmatterintro}, and metric satisfying \eqref{eq:GammamainL2} and $\vert g -m \vert \leq \varepsilon$, the components of the energy momentum tensor $T^{\mu \nu} (t,x)$ satisfy
	\begin{align*}
		\Vert \left(  Z^I T^{\mu \nu} \right) (t,\cdot) \Vert_{L^{1}}
		\leq
		D_k\mathcal{V}_{k }
		+D_k\mathbb{D}_{k^\prime}
		\Big(\!\!\!
		\sum_{\vert J \vert \leq \vert I \vert  -1}\!\!\!\!\!
		\frac{\Vert ( Z^J \Gamma)(t,\cdot) \Vert_{L^2}\!\!\!}{(1+t)^{a-1\!/2}}
		+\!\!\!\!
		\sum_{\vert J \vert \leq \vert I \vert  + 1}
		\int_{0}^t
		\frac{\Vert ( Z^J \Gamma)(s,\cdot) \Vert_{L^2}\!\!\!}{(1+s)^{{1}\!/{2}+a}}\,
		ds
		\Big),	
	\end{align*}
	for $\vert I \vert \leq N-1$, where $k=|I|$ and $k^\prime=\left\lfloor {k }/{2} \right\rfloor +1$, and
	\begin{equation*}
	\Vert \left(  Z^I T^{\mu \nu} \right) (t,\cdot) \Vert_{L^{2}}
		\leq
		\frac{D_k\mathcal{V}_{k}}{(1+t)^{{3}/{2}}}
		+D_k\mathbb{D}_{k^\prime}
		\Big(\!\!\!
		\sum_{\vert J \vert \leq \vert I \vert  -1}\!\!\!\!\!
		\frac{\Vert ( Z^J \Gamma)(t,\cdot) \Vert_{L^2}\!\!}{(1+t)^{1+a}}
		+\!\!
		\sum_{\vert J \vert \leq \vert I \vert}
		\frac{1}{(1+t)^{{3}/{2}}}
		\int_{0}^t
		\frac{\Vert ( Z^J \Gamma)(s,\cdot) \Vert_{L^2}\!\!\!}{(1+s)^{{1}\!/{2}}}\,
		ds
		\Big),
	\end{equation*}
	for $\vert I \vert \leq N$.  Here the constants $D_k$ depend only on $C_N'$, $K$, $K'$ and $c$, and $\varepsilon_0$ depends only on $c$.
\end{theorem}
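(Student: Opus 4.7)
The plan is to bound $Z^I T^{\mu\nu}$ by combining a dispersive $L^\infty$-type estimate (reflecting the spreading of the particle support) with an estimate on the commuted distribution function along the characteristic flow. After changing variables $p = p^0 \widehat{p}$ in \eqref{eq:Tmunu}, one has
\[
	T^{\mu\nu}(t,x) = \int_{|\widehat{p}\,| \leq c} f(t,x,\widehat{p}) \, W^{\mu\nu}(g(t,x),\widehat{p}) \, d\widehat{p},
\]
where $W^{\mu\nu}$ is smooth and, on the support of $f$ (confined to $|\widehat{p}\,| \leq c < 1$ and $|x| \leq K + ct$ by \eqref{eq:supportofmatterintro} together with $|g-m| \leq \varepsilon$), is bounded together with its $Z$-derivatives in terms of $Z$-derivatives of $h = g - m$---which in turn are controlled either pointwise by $\Gamma$ (since $\partial h \sim \Gamma$) or by time integrals of $\Gamma$.

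Second, to apply $Z^I$, one uses appropriately chosen lifted vector fields $\widehat{Z}$ on the mass shell (whose restrictions to spacetime functions reduce to $Z$): each $Z$-derivative of $f(t,x,\widehat{p})$ is then converted into a $\widehat{Z}$-derivative of $f$ modulo correction terms of the schematic form $(\partial_{\widehat{p}} f) \cdot \Gamma$. Iterating, $Z^I T^{\mu\nu}$ becomes a sum of integrals in $\widehat{p}$ of $(\widehat{Z}^{I_1} f)$ multiplied by a smooth uniformly bounded function of $(t,x,\widehat{p})$ and by a product of factors $Z^{J_j} \Gamma$, with the total derivative count comparable to $|I|$. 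The pointwise (non-integrated) Christoffel terms in the asserted estimate arise from those $Z$'s falling on $W^{\mu\nu}$, while the integrated terms arise from $\widehat{Z}$'s falling on $f$.

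To estimate $\widehat{Z}^{I_1} f$ one uses that $f$ is transported, so $f(t,x,\widehat{p}) = f_0(X(0,t,x,\widehat{p}), \widehat{P}(0,t,x,\widehat{p}))$. Differentiating by $\widehat{Z}$ yields terms of the form $(\partial_y^a \partial_q^b f_0)(X(0),\widehat{P}(0))$ weighted by $\widehat{Z}^J X(0)$ and $\widehat{Z}^J \widehat{P}(0)$; these last quantities satisfy ODE obtained by linearising \eqref{eq:geodesictimenormalized} with sources $Z^J \Gamma$ of order at most $|J| + 1$. For the $L^1$ bound one integrates in $x$, applies Cauchy--Schwarz in $\widehat{p}$, and uses the change of variables $x \mapsto X(s,t,x,\widehat{p})$ in the time integrals (a near-identity Jacobian by \eqref{eq:GammamainL2}) to land on $\int_0^t (1+s)^{-1/2-a} \Vert Z^J \Gamma(s,\cdot) \Vert_{L^2}\, ds$; the extra factor $(1+s)^{-a}$ comes from the remaining Christoffel factors controlled in $L^\infty$ via \eqref{eq:GammamainL2}. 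The term with no $\Gamma$-factors is handled by passing to the initial-data coordinates via Liouville's theorem on the mass shell, yielding the $\mathcal{V}_k$ contribution.

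The $L^2$ bound then follows from the interpolation $\Vert T^{\mu\nu}(t,\cdot) \Vert_{L^2}^2 \leq \Vert T^{\mu\nu}(t,\cdot) \Vert_{L^\infty} \Vert T^{\mu\nu}(t,\cdot) \Vert_{L^1}$ applied term by term, together with the dispersive decay $\Vert T^{\mu\nu}(t,\cdot) \Vert_{L^\infty} \lesssim \mathbb{D}_0 (1+t)^{-3}$; the latter holds because at each $(t,x)$ the set of $\widehat{p}$ whose characteristic reaches $\supp f_0$ at $t=0$ has volume $O(t^{-3})$ (by the near-Minkowski behaviour of the flow guaranteed by \eqref{eq:GammamainL2}), so that $|T^{\mu\nu}(t,x)| \lesssim \mathbb{D}_0 \cdot t^{-3}$. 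This delivers the extra $(1+t)^{-3/2}$ factor relative to the $L^1$ estimate. The main obstacle will be the precise bookkeeping of derivative counts and the $L^2$-versus-$L^\infty$ placement of Christoffel factors: exactly one high-order $Z^J \Gamma$ should be retained in $L^2$ while the rest are pointwise controlled via \eqref{eq:GammamainL2}, yielding the asserted dependence on $\mathbb{D}_{k'}$ with $k' = \lfloor k/2 \rfloor + 1$---the mechanism by which the estimate couples cleanly to the wave-side bootstrap for $\Gamma$.
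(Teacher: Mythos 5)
Your $L^1$ argument is essentially the paper's: lift the vector fields to the mass shell so that they are adapted to the characteristic flow, use the transport identity $f(t,x,\ph)=f_0(X(0),\Ph(0))$ together with the linearised geodesic equations to control $\Zb^J X$, $\Zb^J\Ph$ by sources $Z^J\Gamma$, and then convert the resulting $x$-integrals of $\vert Z^J\Gamma\vert$ along the flow into $\Vert Z^J\Gamma(s,\cdot)\Vert_{L^2}$ by changes of variables and Cauchy--Schwarz, with the $(1+t)^{-3}$ gain coming from the $O(t^{-3})$ volume of the momentum support (in the paper, Lemma \ref{lem:suppfdecay} and the Jacobian bound \eqref{eq:detdpdy}). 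One caveat: the relevant Jacobians are not all near-identity --- the substitutions $x\mapsto sx/t$ and $x\mapsto X(s,t,x,\ph(t,x,y))$ at fixed $y$ have Jacobians of size $(t/s)^3$, and Proposition \ref{prop:determinants} is needed to control them; this is exactly what converts the prefactor $t^{-3}$ into $t^{-3/2}s^{-3/2}$ and produces the weights $(1+s)^{-1/2-a}$.

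The genuine gap is in your derivation of the $L^2$ estimate by interpolation from the $L^1$ estimate. First, $\Vert Z^I T^{\mu\nu}(t,\cdot)\Vert_{L^\infty}\lesssim t^{-3}$ fails for large $\vert I\vert$: by Proposition \ref{prop:ZITpointwise} the expansion of $Z^IT^{\mu\nu}$ contains factors $t^{2-a}\vert Z^J\Gamma(t,x)\vert$ and $\int_0^t s^{1-a}\vert Z^J\Gamma(s,sx/t)\vert\,ds$ with $\vert J\vert$ up to $\vert I\vert$, for which no pointwise bound is assumed beyond $\vert J\vert\leq\lfloor N/2\rfloor+2$. Second, and decisively, the $L^1$ bound loses a derivative of $\Gamma$ (the sum runs over $\vert J\vert\leq\vert I\vert+1$, coming from the mean-value-theorem treatment of $(Z^J\Gamma)(s,sx/t)-(Z^J\Gamma)(s,X(s))$) and is only available for $\vert I\vert\leq N-1$; any bound obtained by interpolating against it inherits both restrictions, so it cannot produce the stated top-order $L^2$ estimate with $\vert J\vert\leq\vert I\vert$ at $\vert I\vert=N$ --- which is precisely the estimate needed to close the energy argument without derivative loss (Remark 1.6). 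The paper avoids this by proving a duality-type bound $\Vert Z^IT^{\mu\nu}(t,\cdot)F(t,\cdot)\Vert_{L^1}\leq(\mathrm{RHS})\Vert F(t,\cdot)\Vert_{L^2}$ for arbitrary $F$ (Propositions \ref{prop:mainL21} and \ref{prop:mainL22}), setting $F=Z^IT^{\mu\nu}$ for the $L^2$ bound, and, at top order, estimating the two Christoffel terms along the true and approximate characteristics separately by the triangle inequality, accepting the weaker weight $(1+s)^{-1/2}$ in exchange for not differentiating $\Gamma$. You need to replace the interpolation step by such a direct argument.
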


\begin{remark}
	The proof of Theorem \ref{thm:mainL2} still applies when $a\!\geq \!1$, though the theorem is only used in the proof of Theorem \ref{thm:main2} for some fixed $\frac{1}{2} \!< \!a<\!\!1$.  The case of $a\!=\!1$ is omitted in order to avoid logarithmic factors.  The proof of an appropriate result when $a\!>\!1$ is much simpler, although, when Theorem \ref{thm:mainL2} is used in the proof of Theorem \ref{thm:main2}, one could not hope for the assumptions \eqref{eq:GammamainL2} to hold with $a\!>\!1$, see  Section \ref{subsec:introvfs2}.
\end{remark}

\begin{remark}
	In Section \ref{section:Testimates} a better $L^2$ estimate in terms of $t$ behaviour, compared with the $L^2$ estimate of Theorem \ref{thm:mainL2}, is shown to hold for $Z^I T^{\mu \nu}$, which involves one extra derivative of $\Gamma$.  See Proposition \ref{prop:mainL21}.  It is important however to use the $L^2$ estimate which does not lose a derivative in the proof of Theorem \ref{thm:main2}.
\end{remark}

\subsection{Overview of the global existence theorem for the reduced Einstein equations}

It should be noted from the outset that the reduced Einstein--Vlasov system \eqref{eq:Tmunu}, \eqref{eq:Vlasov}, \eqref{eq:RE1} is a system of quasilinear wave equations coupled to a transport equation.  It is well known that the general quasilinear wave equation does not necessarily admit global solutions for all small data \cite{J1,J2}.  The \emph{null condition}, an algebraic condition on the nonlinearity of such equations, was introduced by Klainerman \cite{K1}, and used independently by Klainerman \cite{Kl} and Christodoulou \cite{C1}, as a sufficient condition that small data solutions exist globally in time and are asymptotically free.
However, as was noticed by Lindblad \cite{L2,L3} and Alinhac \cite{A}, there are quasilinear equations that do not satisfy the null condition but still admit global solutions for all sufficiently small data.  In fact, the classical null condition fails to be satisfied by the vacuum Einstein equations in the harmonic gauge (\eqref{eq:RE1} with $T\equiv 0$), though it was noticed by Lindblad--Rodnianski \cite{LR1} that they satisfy a \emph{weak null condition}, which they used to prove a small data global existence theorem \cite{LR2}, \cite{LR3}.

The proof of Theorem \ref{thm:main2} follows the strategy adopted in \cite{LR3}.  The new difficulties, of course, arise from the coupling to the Vlasov equation.  A fundamental feature of the problem arises from the fact that, whilst the slowest decay of solutions to wave equations occurs in the \emph{wave zone}, where $t \sim r$, the slowest decay of solutions of the massive Vlasov equation occurs in the interior region $t>r$.  The most direct way to exploit this fact is to impose that $f_0$ has compact support, in which case, as will be shown in Proposition \ref{prop:suppf}, the support condition \eqref{eq:supportofmatterintro} holds and $f$ actually vanishes in the wave zone at late times.

Since they have been described at length elsewhere, the difficulties associated to the failure of \eqref{eq:RE1} to satisfy the classical null condition are only briefly discussed here.  Suffice it to say that there is a rich structure in the equations \eqref{eq:RE1} which is exploited heavily (see the further discussion in the introductions to \cite{LR2,LR3}).  The main new features of this work are contained in the proof of Theorem \ref{thm:mainL2}.  Indeed, for a given inhomogeneous term $\widehat{T}$ in \eqref{eq:RE1} which satisfies the support conditions and estimates of Theorem \ref{thm:mainL2}, the small data global existence theorem of \cite{LR3} mostly goes through unchanged.  An outline is given below, including a discussion of some observations which lead to simplifications compared with the proof in \cite{LR3}.

The proof of Theorem \ref{thm:main2} is based on a continuity argument.  One assumes the bounds
\begin{equation} \label{eq:introba}
	E_N(t)^{\frac{1}{2}}
	\leq
	C_N \varepsilon (1+t)^{\delta},
	\qquad
	{\sum}_{\vert I \vert \leq N-1} \Vert Z^I T^{\mu \nu} (t,\cdot) \Vert_{L^1}
	\leq
	C_N \varepsilon,
\end{equation}
hold for all $t\in [0,T_*]$ for some time $T_*>0$ and some fixed constants $C_N$ and $\delta$, and the main objective is to use the Einstein equations to prove that the bounds \eqref{eq:introba} in fact hold with better constants, provided the initial data are sufficiently small.

\subsubsection{The contribution of the mass}
The first step in the proof of Theorem \ref{thm:main2} is to identify the contribution of the mass $M$.  Recall the decomposition of the metric \eqref{eq:metricdecomp} and note that the energy $E_N$ is defined in terms of $h^1$.  Had the energy been defined with $h=g-m$ in place of $h^1$, it would not be finite unless $M=0$, in which case it follows from the Positive Mass Theorem \cite{ShYa}, \cite{Wi} that the constraint equations imply that the solution is trivial.  The contribution of the mass is therefore identified explicitly using the decomposition \eqref{eq:metricdecomp}, and the reduced Einstein equations are recast as a system of equations for $h^1$,
\begin{equation} \label{eq:RE2}
	\Boxr_g \, h^1_{\mu\nu}
	=
	F_{\mu\nu} (h)(\pa h, \pa h)+\widehat{T}_{\mu\nu} - \Boxr_g \, h^0_{\mu\nu}.
\end{equation}
The term $\Boxr_g \, h^0_{\mu\nu}$ is treated as an error term.  Note that $h^0$ is defined so that $\Box \, h^0_{\mu\nu}$, which is good approximation to $\Boxr_g \, h^0_{\mu\nu}$, is supported away from the wave zone $t \sim r$ and so only contributes in the interior region, where $h^1$ will be shown to have fast decay.

\subsubsection{Energy inequality with weights}
An important ingredient in the procedure to recover the assumption on the energy \eqref{eq:introba} is the energy inequality with weights,
\begin{multline} \label{eq:introenergyineq}
	\int_{\Si_{t}}\!\! |\pa\phi|^{2}\,w dx + \!\int_{0}^{t}\!\! \int_{\Si_{\tau}}
	\!\!|\overline{\pa}\phi|^{2}\,w^{\,\prime} dx d\tau \leq
	8\!\int_{\Si_{0}}\!\! |\pa \phi|^{2}\,w dx+
	\!\int_0^t\!\!\frac{C\varepsilon}{1+\tau}\int_{\Si_{\tau}} \!\!  |\pa\phi|^{2}
	\, w \,dx \, d\tau\\
	+16 \int_0^t\Big(\int_{\Si_{\tau}}\!\!\! |\Boxr_g \phi|^2 wdx\Big)^{\!1/2}\Big(\int_{\Si_{\tau}} \!\!\! |\pa \phi|^2 w dx\Big)^{\!1/2}\! d\tau,
\end{multline}
which holds under mild assumptions on the metric $g$ (see Lemma \ref{lem:Decayenergy}), where the weight $w$ is as in \eqref{eq:energydef}  and $\overline{\partial}_{\mu} = \partial_{\mu} - \frac{1}{2}L_{\mu}(\partial_r - \partial_t)$, with $L = \partial_t + \partial_r$, denotes the derivatives tangential to the outgoing Minkowski light cones.  The inequality will be applied to the system \eqref{eq:RE2} after commuting with vector fields.

It is in the energy inequality \eqref{eq:introenergyineq} that the $L^2$ estimates of Theorem \ref{thm:mainL2} are used.  The proof of Theorem \ref{thm:main2}, using Theorem \ref{thm:mainL2}, is given in detail in Section \ref{section:cty}, but we  briefly illustrate the use of the $L^2$ estimates of Theorem \ref{thm:mainL2} here. Setting $Q_N(t) = \sup_{0\leq s \leq t} \sum_{\vert I \vert \leq N} \Vert w^{\frac{1}{2}} \partial Z^I h^1 (s,\cdot) \Vert_{L^2}$, it follows from Theorem \ref{thm:mainL2} that,
\begin{equation} \label{eq:introThat}
	\Vert Z^I \newhat{T} (\tau,\cdot) \Vert_{L^2}
	\leq
	\frac{C\varepsilon}{(1+\tau)^{\frac{3}{2}}} + \frac{C\varepsilon}{(1+\tau)} Q_N(t),
\end{equation}
(see Section \ref{section:cty} for details of how estimates for $\newhat{T}^{\mu \nu}$ follow from estimates for $T^{\mu \nu}$).  By the reduced Einstein equations \eqref{eq:RE2} and the energy inequality \eqref{eq:introenergyineq},
\begin{equation} \label{eq:introQN}
	Q_N(t)
	\lesssim
	Q_N(0)
	+
	\sum_{\vert I \vert \leq N}
	\int_0^t
	\Vert [Z^I, \Boxr_g] h^1 (s,\cdot)\Vert_{L^2}
	+
	\Vert Z^I \Boxr_g h^0 (s,\cdot)\Vert_{L^2}
	+
	\Vert Z^I F_{\mu\nu} (s,\cdot)\Vert_{L^2}
	+
	\Vert Z^I \newhat{T} (s,\cdot)\Vert_{L^2}
	ds.
\end{equation}
The first four terms on the right hand side arise already in \cite{LR3} and so, combining estimates which will be shown for these terms in Section \ref{section:Einstein} (see also the discussion below) with \eqref{eq:introThat}, the bound \eqref{eq:introQN} implies that
\[
	Q_N(t) \leq C \varepsilon (1+t)^{C\varepsilon} + C \varepsilon \int_0^t \frac{Q_N(\tau)}{1+\tau} d \tau,
\]
and so the Gr\"{o}nwall inequality yields
\[
	Q_N(t) \leq C\varepsilon (1+t)^{2C\varepsilon}.
\]
In the proof of Theorem \ref{thm:main2}, such a bound for $Q_N$ will lead to a recovery of the assumption \eqref{eq:introba} on $E_N$ with better constants provided $C_N$ is chosen to be sufficiently large and $\varepsilon$ is sufficiently small.

\subsubsection{The structure of the nonlinear terms} \label{subsubsec:nonlinearstructure}
As discussed above, whether a given quasilinear wave equation admits global solutions for small data or not depends on the structure of the nonlinear terms (moreover, the main analysis of the nonlinear terms is relevant in the wave zone, where ${\newhat{T}}^{\mu \nu}\!\!$ vanishes at late times and so plays no role in this discussion).  A closer inspection of the nonlinearity in \eqref{eq:RE2} reveals (see \cite{LR3}, \cite{ChBr00}) that
\begin{equation} \label{eq:introFmunu}
	F_{\mu\nu}(h)(\partial h,\partial h)
	=
	P(\pa_{\mu} h,\pa_{\nu} h) + Q_{\mu\nu}(\pa h,\pa h)+G_{\mu\nu}(h)(\pa h,\pa h),
\end{equation}
where $Q_{\mu\nu}(\pa h,\pa h)$ is a linear combination of \emph{null forms} (satisfing $\vert Q_{\mu\nu}(\pa h,\pa h) \vert \lesssim \vert \overline{\partial} h \vert \vert \partial h \vert$), $\vert G_{\mu\nu}(h)(\pa h,\pa h) \vert \lesssim \vert h \vert \vert \partial h \vert^2$ denote cubic terms and,
\begin{equation} \label{eq:introPmunu}
	P(\pa_{\mu}h,\pa_{\nu} h)
	=
	\frac{1}{2}  m^{\alpha\alpha^\prime}m^{\beta\beta^\prime} \pa_\mu h_{\alpha\beta}
	\,
	\pa_\nu h_{\alpha^\prime\beta^\prime}
	-
	\frac{1}{4}  m^{\alpha\alpha^\prime}\pa_\mu
	h_{\alpha\alpha^\prime} \,  m^{\beta\beta^\prime}\pa_\nu
	h_{\beta\beta^\prime}.
\end{equation}

Clearly the failure of the semilinear terms of the system \eqref{eq:RE2} to satisfy the classical null condition arises in the $P(\partial_{\mu} h, \partial_{\nu} h)$ terms.  In \cite{LR1} it was observed that the semilinear terms of \eqref{eq:RE2}, after being
decomposed with respect to a null frame $\mathcal{N} = \{ \underline{L}, L, S_1, S_2 \}$, where
\begin{equation} \label{eq:frameintro}
	\underline{L}=\pa_t-\pa_r,
	\quad
	L=\pa_t+\pa_r,
	\quad
	S_1,S_2\in\mathbf{S}^2,
	\quad
	\langle S_i,S_j\rangle =\delta_{ij},
\end{equation}
possess a \emph{weak null structure}. It is well known
that, for solutions of wave equations, derivatives tangential to the outgoing light
cones $\overline{\pa }\in\mathcal{T}=\{L,S_1,S_2\}$ decay faster and so, neglecting such $\overline{\partial}  h$ derivatives of $h$,
\beq\label{eq:transversalderivativeprojection}
	\pa_\mu h\sim L_\mu \pa_q h,\qquad\text{where}\quad \pa_q=(\pa_r-\pa_t)/2,
	\quad L_\mu=m_{\mu\nu} L^\nu,
\eq
and, neglecting cubic terms and quadratic terms involving at least one tangential derivative,
\beq\label{eq:einsteinfirstapproximationintro}
	\widetilde{\Box}_g h_{\mu\nu}\sim L_\mu L_\nu  P(\pa_q h,\pa_q h).
\eq
For vectors $U,V$, define $(\Boxr_g h)_{UV} = U^{\mu} V^{\nu} \Boxr_g h_{\mu \nu}$.  With respect to the null frame \eqref{eq:frameintro}, the Einstein equations \eqref{eq:RE2} become
\beq
	(\widetilde{\Box}_g h)_{TU}\sim 0, \quad T\in \mathcal{T},U\in\mathcal{N}\qquad
	(\widetilde{\Box}_g h)_{\underline{L}\underline{L}}\sim 4 P(\pa_q h,\pa_q h),
	\label{eq:simplifiedEinstein}
\eq
since $T^\mu L_\mu=0$ for $T\in\mathcal{T}$.  Decomposing with respect to the null frame \eqref{eq:frameintro}, the term $P(\pa_q h,\pa_q h)$ is equal to $P_N(\partial_{q} h, \partial_{q} h)$, where
\begin{multline}\label{eq:Pnullframe}
	P_{\!\mathcal{N}}(D,E)=-\big(D_{LL} E_{\underline{L}\underline{L}}
	+D_{\underline{L}\underline{L}} E_{{L}{L}}\big)/8
	-\big(2D_{AB}E^{AB}-
	D_{\!\! A}^{\, A} E_{B}^{\,\,\,B}\big)/4\\
	+\big(2D_{A L}E^{A}_{\,\,\, \underline{L}} +2D_{A
	\underline{L}}E^A_{\,\,\,{L}}- D_{\!\!A}^{\, A}
	E_{L\underline{L}}-D_{L\underline{L}}E_{\!A}^{\,\,A}\big)/4,
\end{multline}
(see \cite{LR2}).  Except for the $\partial_q h_{LL} \partial_q h_{\underline{L} \underline{L}}$ term, $P_{\!\mathcal{N}}( \partial_{q} h, \partial_{q} h)$ only involves the non $\underline{L} \underline{L}$ components of $h$.  The wave coordinate condition \eqref{eq:wcc} with respect to the null frame becomes
\begin{equation}\label{eq:wavecordhormander12}
	\pa_q h_{L T}\sim 0,\quad T\in\mathcal{T},\qquad
	\delta^{AB}\partial_q  h_{AB}\sim 0,\quad A,B\in\mathcal{S}=\{S_1,S_2\},
\end{equation}
 neglecting tangential derivatives and quadratic terms, see \cite{LR2}.  In particular the $\partial_q h_{LL} \partial_q h_{\underline{L} \underline{L}}$ term in $P_{\!\mathcal{N}}( \partial_{q} h, \partial_{q} h)$ can be neglected.  The asymptotic identity \eqref{eq:wavecordhormander12} moreover implies (see equation \eqref{eq:tanP}) that the leading order behaviour of $P_N(\partial_{q} h, \partial_{q} h)$ is contained in $P_{\!\mathcal{S}}(\pa_q h,\pa_q h)$, where
\begin{equation} \label{eq:PSintro}
	P_{\mathcal{S}} (D,E)= -\widehat{D}_{AB}\, \widehat{E}^{AB}\!/2,\quad A,B\in\mathcal{S},\quad
	\text{where} \quad \widehat{D}_{AB}=D_{AB}-\delta_{AB}\trs D/2,\quad
	\trs D=\delta^{AB}D_{AB}.
\end{equation}
A decoupling therefore occurs in the semilinear terms of \eqref{eq:RE2}, modulo terms which are cubic or involve at least one ``good'' $\overline{\partial}$ derivative, and the right hand side of the second identity in \eqref{eq:simplifiedEinstein}
only depends on components we have better control on by the first identity in
\eqref{eq:simplifiedEinstein}.

A further failure of \eqref{eq:RE2} to satisfy the classical null condition arises in the \emph{quasilinear terms}.  Expressing the inverse of the metric $g_{\mu \nu}$ as
\begin{equation*}
	g^{\mu\nu}
	=
	m^{\mu\nu}+H^{\mu\nu},
\end{equation*}
the reduced wave operator takes the form
\[
	\Boxr_g = m^{\alpha \beta} \partial_{\alpha} \partial_{\beta} + H^{\alpha \beta} \partial_{\alpha} \partial_{\beta},
\]
which differs from the Minkowski wave operator $\Box$ only by the term $H^{\underline{L} \underline{L}} \partial_q^2$, plus terms which involve at least one tangential $\overline{\partial}$ derivative.  This main quasilinear term is controlled by first rewriting the wave coordinate condition \eqref{eq:wcc} as
\begin{equation} \label{eq:introwccnull}
\pa_\mu \widehat{H}^{\mu\nu\!}\!
= W^{\nu}(h,\pa h)\qquad
\text{where}\quad  \widehat{H}^{\mu\nu}\!=\!H^{\mu\nu}\!-m^{\mu\nu}
\tr_m H_{\!}/2,\quad \tr H\!=m_{\alpha\beta} H^{\alpha\beta},
\end{equation}
where $\vert W^{\nu}(h,\pa h) \vert \lesssim \vert h \vert \vert \partial h \vert$ is quadratic, and using the formula
\[
	\pa_\mu F^{\mu \nu}
	=
	L_\mu\pa_q F^{\mu \nu} -\Lb_{\,\mu} \pa_s F^{\mu \nu} + A_\mu \pa_A F^{\mu \nu},
\]
for any $F^{\mu \nu}$, to rewrite $\pa_\mu \widehat{H}^{\mu\nu}$ in terms of the null frame.  This gives,
\begin{equation*}
	\partial_q H^{\Lb \Lb} = L( H^{L \Lb} ) - \partial_A H^{A \Lb} + W^{\Lb}(h,\partial h),
\end{equation*}
i.\@e.\@ $\partial_q H^{\Lb \Lb}$ is equal to quadratic terms plus terms involving only tangential $\overline{\partial}$ derivatives.  Integrating $\partial_q H^{\Lb \Lb}$ from initial data $\{t=0\}$ then gives that $H^{\underline{L} \underline{L}}$ is approximately equal to the main contribution of its corresponding initial value, $2M/r$.

\subsubsection{Commutation} \label{subsubsec:introcommutation}
In order to apply the energy inequality \eqref{eq:introenergyineq} to improve the higher order energy bounds \eqref{eq:introba}, it is necessary to commute the system \eqref{eq:RE2} with the vector fields $Z$.  Instead of commuting with the vector fields $Z$ directly as in \cite{LR3}, notice that, for any function $\phi$,
\[
	\Boxr_g Z \phi
	=
	Z \left( \Boxr_g \phi \right)
	+
	2 g^{\alpha \beta} \partial_{\alpha} Z^{\mu} \partial_{\beta} \partial_{\mu} \phi - Z(g^{\alpha \beta}) \partial_{\alpha} \partial_{\beta} \phi
	=
	Z \left( \Boxr_g \phi \right)
	-
	(\mathcal{L}_Z g^{\alpha \beta}) \partial_{\alpha} \partial_{\beta} \phi,
\]
where the fact that $\partial_{\mu} \partial_{\nu} Z^{\lambda} = 0$ for each $Z$ and $\mu, \nu, \lambda = 0,1,2,3$ has been used.  Here $\mathcal{L}_Z$ denotes the Lie derivative along the vector field $Z$ (see Section \ref{subsec:commutation} for a coordinate defintion).  The procedure of commuting the system \eqref{eq:RE2} therefore becomes computationally much simpler if it is instead commuted with the Lie derivatives along the vector fields, $\mathcal{L}_Z$.  In fact, the procedure simplifies further by commuting with a modified Lie derivative $\widehat{\mathcal{L}}$, defined in the $(t,x)$ coordinates by the formula
\[
	\widehat{\mathcal L}_Z K^{\alpha_1\dots \alpha_r}_{\beta_1\dots \beta_s}
	=
	{\mathcal L}_Z K^{\alpha_1\dots \alpha_r}_{\beta_1\dots \beta_s}
	+
	\tfrac{r-s}{4}(\pa_\gamma Z^\gamma)K^{\alpha_1\dots \alpha_r}_{\beta_1\dots \beta_s}.
\]
The modified Lie derivative has the property that $\widehat{\mathcal{L}}_Z m = 0$ for each of the vector fields $Z$ and moreover a computation shows that, in the case that $\phi_{\mu \nu}$ is a $(0,2)$ tensor, the commutation property
\begin{equation} \label{eq:intromodLiecomm}
	\Boxr_g \widehat{\mathcal{L}}_Z \phi_{\mu \nu}
	=
	\mathcal{L}_Z \left( \Boxr_g \phi_{\mu \nu} \right)
	-
	(\widehat{\mathcal{L}}_Z H^{\alpha \beta}) \partial_{\alpha} \partial_{\beta} \phi_{\mu \nu},
\end{equation}
holds for each of the vector fields $Z$.

The commutation error in \eqref{eq:intromodLiecomm} can be controlled by $(\widehat{\mathcal{L}}_Z H^{\Lb \Lb}) \partial^2 \phi_{\mu \nu}$, plus terms which involve at least one tangential derivative of $\phi_{\mu \nu}$,
\[
	\vert (\widehat{\mathcal{L}}_Z H^{\alpha \beta}) \partial_{\alpha} \partial_{\beta} \phi_{\mu \nu} \vert
	\lesssim
	\vert \widehat{\mathcal{L}}_Z H^{\Lb \Lb} \vert \vert \partial^2 \phi \vert
	+
	\vert \widehat{\mathcal{L}}_Z H \vert \vert \overline{\partial} \partial \phi \vert.
\]
The Lie derivative along any of the vector fields $Z$ commutes with partial derivatives $\partial$ (see Proposition \ref{prop:Liecommutation}).  This fact leads to the commutation formula,
\[
	 \pa_\mu  \widehat{\mathcal L}_Z \widehat{H}^{\mu\nu}=\big(\widehat{\mathcal L}_Z +\tfrac{\pa_\gamma Z^\gamma}{2}\big)\pa_\mu \widehat{H}^{\mu\nu},
\]
involving the modified Lie derivative.  The term $\widehat{\mathcal{L}}_Z H^{\Lb \Lb}$ is then controlled easily by using the formula \eqref{eq:introwccnull} and repeating the argument, described in Section \ref{subsubsec:nonlinearstructure}, used to control $H^{\Lb \Lb}$ itself.

When applying the commutation formula \eqref{eq:intromodLiecomm} to the reduced Einstein equations \eqref{eq:RE2}, it in particular becomes necessary to estimate the Lie derivative of the nonlinear terms, $\mathcal{L}_Z^I \left( F_{\mu \nu}(h)(\partial h, \partial h) \right)$.  Recall the nonlinear terms take the form \eqref{eq:introFmunu}.  The modified Lie derivative also simplifies the process of understanding derivatives of the nonlinear terms due to the following product rule.  Let $h_{\alpha\beta}$ and $k_{\alpha\beta}$ be $(0,2)$ tensors and let $S_{\mu\nu}(\pa h,\pa k)$ be a $(0,2)$ tensor which is a quadratic form in the $(0,3)$ tensors $\pa h$ and $\pa k$ with two contractions with the Minkowski metric (in particular $P(\pa_\mu h,\pa_\nu k)$ or $Q_{\mu\nu}(\pa h,\pa k)$). Then
\[
 {\mathcal L}_Z\,\big(  S_{\mu\nu}(\pa h,\pa k)\big)
 =S_{\mu\nu}(\pa \widehat{\mathcal L}_Z h,k)+S_{\mu\nu}(\pa h,\pa \widehat{\mathcal L}_Z k),
\]
and so the desirable structure of the nonlinear terms $P(\pa_\mu h,\pa_\nu h)$ and $Q_{\mu\nu}(\pa h,\pa k)$ described in Section \ref{subsubsec:nonlinearstructure} is preserved after applying Lie derivatives.

The Lie derivatives of the energy momentum tensor are controlled by Theorem \ref{thm:mainL2} since, for any function $\phi$, the quantities $\sum_{\vert I \vert \leq N} \vert Z^I \phi \vert$ and $\sum_{\vert I \vert \leq N} \vert \widehat{\mathcal L}_Z^I \phi \vert$ are comparable.

\subsubsection{The Klainerman--Sobolev Inequality with weights}
In order to control the derivatives of the nonlinear terms and the error terms arising from commuting the system \eqref{eq:RE2}, described in Section \ref{subsubsec:introcommutation}, when using the energy inequality \eqref{eq:introenergyineq}, pointwise estimates for lower order derivatives of the solution are first shown to hold.  The \emph{Klainerman--Sobolev Inequality} can be used to derive non-sharp bounds for $\vert \partial Z^I h^1 \vert$ for $\vert I \vert \leq N-2$ (see equation \eqref{eq:weakdecayderh1}) directly from the bound on the energy \eqref{eq:introba}.  These pointwise bounds can be integrated from $\{t=0\}$ to also give pointwise bounds for $\vert Z^I h^1\vert$ for $\vert I \vert \leq N-2$ which, using the fact that $\vert \overline{\partial} \phi \vert \leq \frac{C}{1+t+r} \sum_{\vert I \vert = 1} \vert Z^I \phi \vert$ for any function $\phi$, lead to strong pointwise estimates for all components of $\overline{\partial} Z^I h^1$ for $\vert I \vert \leq N-3$.  See Section \ref{subsec:weakdecay} for more details.

Since, without restricting to tangential derivatives, it is only true that $\vert \partial \phi \vert \leq \frac{C}{1+\vert t-r\vert} \sum_{\vert I \vert = 1} \vert Z^I \phi \vert$ for any function $\phi$, the Klainerman--Sobolev Inequality does not directly lead to good pointwise estimates for all derivatives of $h^1$.  Some further improvement is necessary to control the terms in the energy estimate and recover the inequality \eqref{eq:introba}.

\subsubsection{$L^{\infty}$--$L^{\infty}$ estimate for the wave equation}
The pointwise decay obtained for the transverse derivative of certain components of $h^1$ ``for free'' from the wave coordinate condition is not sufficient in the wave zone $t\sim r$ to close the energy estimate.  The decay in this region is further improved by an $L^{\infty}$--$L^{\infty}$ estimate, obtained by integrating the equations along the outgoing characteristics of the wave equation.  In fact, instead of using the estimate for the full wave operator $\Boxr_g$, as in \cite{LR3}, it suffices to use the estimate for the operator $\Box_0 = \big( m^{\alpha \beta} - \frac{M}{r} \chi \big( \frac{r}{1+t} \big) \delta^{\alpha \beta} \big) \partial_{\alpha} \partial_{\beta}$.  Moreover, using the pointwise decay obtained from the Klainerman--Sobolev inequality and the wave coordinate condition, it can be seen that, for the purposes of this estimate, the essential contribution of the failure of \eqref{eq:RE1} to satisfy the classical null condition is present in the $P_{\!\mathcal{S}}(\pa_q h,\pa_q h)$ terms, defined by \eqref{eq:PSintro}.  See Proposition \ref{prop:approxwaveequation} for a precise statement of this, and Lemma \ref{lem:transversalder} for a proof of the $L^{\infty}$--$L^{\infty}$ inequality.  The fact that $f$ is supported away from the wave zone can be shown using only the decay obtained from the Klainerman--Sobolev Inequality, and so the $\widehat{T}$ term in \eqref{eq:RE2} plays no role in Lemma \ref{lem:transversalder}.  The pointwise decay of higher order Lie derivatives of $h^1$ is similarly improved in Section \ref{subsec:sharpdecayhighorder}.

\subsubsection{The H\"{o}rmander $L^1$--$L^{\infty}$ Inequality}
Whilst the pointwise decay for lower order derivatives of $h$ described above is sufficient to recover the assumptions \eqref{eq:introba} in the vacuum (when $T^{\mu \nu} \equiv 0$), the interior decay is not sufficient to satisfy the assumptions of Theorem \ref{thm:mainL2}.  In Proposition \ref{prop:weakdecayhormander} the H\"{o}rmander $L^1$--$L^{\infty}$ inequality, Lemma \ref{lemma:hormander}, is used, together with the assumptions \eqref{eq:introba} on the $L^1$ norms of $Z^I T^{\mu \nu}(t,\cdot)$ and on the energy $E_N(t)$ of $h^1$, in order to improve the interior decay of $h^1$ and lower order derivatives.  This improved decay ensures that the assumptions of Theorem \ref{thm:mainL2} are satisfied and hence the theorem can be appealed to in order to recover the assumptions \eqref{eq:introba} on the $L^1$ norms of $Z^I T^{\mu \nu}$, and to control the $L^2$ norms of $Z^I T^{\mu \nu}$ arising when the energy inequality \eqref{eq:introenergyineq} is used to improve the assumptions \eqref{eq:introba} on the energy $E_N$.  See Section \ref{section:cty} for further details on the completion of the proof of Theorem \ref{thm:main2}.

\subsection{Vector fields for the Vlasov equation}
\label{subsec:Vlasovvectors}

The remaining difficulty in the proof of Theorem \ref{thm:main2} is in establishing the $L^1$ and $L^2$ estimates of the vector fields applied to components of the energy momentum tensor, $Z^I T^{\mu \nu}(t,x)$, of Theorem \ref{thm:mainL2}.  For simplicity, we outline here how bounds are obtained for $Z \rho(t,x)$, for $Z = \Omega_{ij}, B_i, S$, where $\rho(t,x)$ is the momentum average of $f$, defined by
\[
	\rho(t,x) := \int f(t,x,\ph) d \ph.
\]
The bounds for $Z \rho(t,x)$ will follow from bounds of the form
\begin{equation} \label{eq:goodfbounds}
	\left\vert
	(\Zb f)(t, x, \ph)
	\right\vert
	\leq
	C,
\end{equation}
for a suitable collection of vector fields $\Zb$, which reduce to the $Z = \Omega_{ij}, B_i, S$ vector fields when acting on spacetime functions, i.\@e.\@ functions of $(t,x)$ only.

Throughout this section, and in Sections \ref{section:geodesics} and \ref{section:Testimates}, it is convenient, instead of considering initial data to be given at $t=0$, to consider initial data for the Vlasov equation to be given at $t=t_0$ for some $t_0\geq 1$.\footnote{In fact in Sections \ref{section:geodesics} and \ref{section:Testimates} a new translated time coordinate $\tilde{t} = t + t_0$ is introduced, which has the property that $\{ \tilde{t} = t_0 \} = \{ t = 0\}$.  It is more convenient to use the vector fields defined with respect to $\tilde{t}$ than the vector fields defined with respect to $t$.  See Section \ref{subsec:transtime} for details.}  It follows from the form of the Vlasov equation \eqref{eq:Vlasov2} that
\begin{equation} \label{eq:ZbVlasov}
	\Zb f(t,x,\ph)
	=
	\Zb \big( X(t_0,t,x,\ph)^i \big) (\partial_{x^i} f) (t_0,X(t_0), \Ph(t_0))
	+
	\Zb \big( \Ph(t_0,t,x,\ph)^i \big) (\partial_{\ph^i} f) (t_0,X(t_0), \Ph(t_0)),
\end{equation}
(where $X(t_0,t,x,\ph), \Ph(t_0,t,x,\ph)$ are abbreviated to $X(t_0), \Ph(t_0)$ respectively) for any vector $\Zb$. Since derivatives of $f|_{t=t_0}$ are explicitly determined by initial data, an estimate for $\vert \Zb f(t,x,\ph) \vert$ will follow from appropriate estimates for $\vert \Zb \left( X(t_0,t,x,\ph)^i \right) \vert$ and $\vert \Zb \big( \Ph(t_0,t,x,\ph)^i \big) \vert$.

\subsubsection{General procedure and vector fields in Minkowski space} \label{subsec:Minkowskivectors}

A natural way to extend a given vector field $Z$ on $\mathcal{M}$ to a vector field on $\ms$, which by construction will have the property that $\Zb(X(t_0)^i)$ satisfy good bounds, is as follows.
For a given vector field $Z$ on $\mathcal{M}$, let $\Phi^Z_{\lambda} : \mathcal{M} \to \mathcal{M}$ denote the associated one parameter family of diffeomorphisms, so that
\[
	\frac{d \Phi^Z_{\lambda}(t,x)}{d\lambda} \bigg\vert_{\lambda = 0} = Z\vert_{(t,x)}.
\]
Under a mild assumption on g, for fixed $\tau$ any point $(t,x,\ph) \in \ms$ with $t >\tau$ can be uniquely described by a pair of points $\{ (t,x), (\tau,y) \}$ in $\mathcal{M}$, where
\begin{equation} \label{eq:Minkowskiy}
	y=X(\tau,t,x,\ph),
\end{equation}
is the point where the geodesic emanating from $(t,x)$ with velocity $\ph$ intersects the hypersurface $\Sigma_{\tau}$ (recall that $\{ \Sigma_t \}$ denotes the level hypersurfaces of the function $t$), i.\@e.\@ $(t,x,\ph) \in \ms$ can be parameterised by $\{ (t,x), (\tau,y) \}$,
\[
	(t,x,\ph) = (t,x,\ph_X(t,x,\tau,y)).
\]
Now the action of $\Phi^Z_{\lambda}$ on $(t,x)$ and $(\tau,y)$ induces an action on $\ms$ at time $t$, given by
\[
	\Phib^{Z,X}_{\lambda,\tau}(t,x,\ph)
	:=
	\left( \Phi^Z_{\lambda}(t,x), \ph_X\left( \Phi^Z_{\lambda}(t,x), \Phi^Z_{\lambda}(\tau,y) \right) \right).
\]
 For fixed $t_0$ we define the vector field $\Zb$ by
\[
	\Zb \vert_{(t,x,\ph)} = \frac{d \Phib^{Z,X}_{\lambda,\tau}(t,x,\ph)}{d\lambda} \bigg\vert_{\lambda = 0,\,\tau=t_0}.
\]
A computation shows that
\begin{equation} \label{eq:introZbX0}
	\Zb\vert_{(t,x,\ph)} \left( X(t_0,t,x,\ph)^i \right)
	=
	Z^i \vert_{(t_0,X(t_0))} - Z^0 \vert_{(t_0,X(t_0))} \Ph(t_0,t,x,\ph)^i,
\end{equation}
which results in a good bound for $\vert \Zb (X(t_0)^i) \vert$.  In particular, if $X^i(t_0,t,x,\ph)$ and  $\Ph^i(t_0,t,x,\ph)$ are bounded in the support of $f(t_0,X(t_0), \Ph(t_0))$, equation \eqref{eq:introZbX0} guarantees that $\vert \Zb (X(t_0)^i) \vert$ is bounded by a constant.

To see that \eqref{eq:introZbX0} indeed holds, first note that the left hand side is the derivative of $ X\big(t_0,\Phib^{Z,X}_{\lambda,t_0}(t,x,\ph)\big)^i$ with respect to $\lambda$ at
$\lambda=0$.  Also the first
term on the right hand side is the derivative of $\Phi^Z_{\lambda}(t_0,y)^i$ at $\lambda=0$.
The equality \eqref{eq:introZbX0} follows from taking the derivative with respect to $\lambda$, and setting $\lambda=0$, of both sides of the identity
\begin{equation}\label{eq:introZbX0computation}
X\big(\Phi^Z_{\lambda}(t_0,y)^0,\Phib^{Z,X}_{\lambda,t_0}(t,x,\ph)\big)^i
=\Phi^Z_{\lambda}(t_0,y)^i.
\end{equation}

In Minkowski space, $y$ has the explicit form $y=x-(t-\tau)\newhat{p}$ and, when $Z$ is chosen to be $\Omega_{ij}, B_i, S$, a straightforward computation, see Section 3, shows that the resulting vectors $\Zb^M\!$ take the form,
\begin{align*}
	\overline{\Omega}_{ij}^M
	=
	x^i \partial_{x^j} - x^j \partial_{x^i} + \ph^i \partial_{\ph^j} - \ph^j \partial_{\ph^i},
	\qquad
	\overline{B}_i^M
	=
	x^i \partial_t + t \partial_{x^i} + \left( \delta_i^j - \ph^i \ph^j \right) \partial_{\ph^j},
	\qquad
	\overline{S}^M
	&
	=
	t\partial_t + x^k \partial_{x^k}.
\end{align*}

By differentiating the equality
\[
	\Ph \left( \Phi_{\lambda}^Z (t_0,y)^0, \Phib^{Z,X}_{\lambda,t_0}(t,x,\ph) \right)^i
	=
	\ph_X \left( \Phi_{\lambda}^Z(t,x), \Phi_{\lambda}^Z(t_0,y) \right)^i,
\]
with respect to $\lambda$, one can similarly obtain an estimate for $\Zb(\Ph^i(t_0))$.  In the simple case of Minkowski space, $\ph_X$ has the explicit form
\[
	\ph_X \left( \Phi_{\lambda}^Z(t,x), \Phi_{\lambda}^Z(t_0,y) \right)^i
	=
	\frac{\Phi_{\lambda}^Z(t,x)^i - \Phi_{\lambda}^Z(t_0,y)^i}{\Phi_{\lambda}^Z(t,x)^0 - \Phi_{\lambda}^Z(t_0,y)^0},
\]
and so
\[
	\Zb\vert_{(t,x,\ph)} \left( \Ph(t_0,t,x,\ph)^i \right)
	=
	\frac{Z^i \vert_{(t,x)} - Z^i\vert_{(t_0,X(t_0))}}{t-t_0}
	-
	\frac{Z^0 \vert_{(t,x)} - Z^0\vert_{(t_0,X(t_0))}}{t-t_0} \ph^i,
\]
since ${d\Ph}\!/{ds} = 0$ in Minkowski space, which leads to a good bound for $\vert \Zb(\Ph^i(t_0)) \vert$ and, together with \eqref{eq:introZbX0}, results in bounds of the form \eqref{eq:goodfbounds} for solutions of the Vlasov equation on Minkowski space.  Such bounds lead to bounds on $Z \rho(t,x)$ since,
\[
	\Omega_{ij} \rho (t,x) = \int \overline{\Omega}^M_{ij} f (t,x,\ph) d\ph,
	\ \
	B_i \rho (t,x) = \int \overline{B}^M_i f (t,x,\ph) - 4 \ph^i f (t,x,\ph) d\ph,
	\ \
	S \rho (t,x) = \int \overline{S}^M f(t,x,\ph) d\ph.
\]

The rotation vector fields $\overline{\Omega}^M_{ij}$ and a form of the scaling vector field $\overline{S}^M$ were used in \cite{Ta} for small data solutions of the massless Einstein--Vlasov system (note though that the above procedure of using $Z$ to define $\Zb$ breaks down when the mass shell $\ms$ becomes the set of null vectors, as is the case for the massless Einstein--Vlasov system).\footnote{The proof in \cite{Ta} is based on a double null foliation, and an associated double null coordinate system $(u,v,\theta^1,\theta^2)$, of the spacetimes which are constructed, and so the language used there is slightly different.  In the coordinate system $(u,v,\theta^1,\theta^2,p^{\theta^1},p^{\theta^2},p^v)$ conjugate to the double null coordinate system for $\mathcal{M}$, the vector fields $\partial_{\theta^A}$, for $A=1,2$, are used.  Defining appropriate Cartesian coordinates, one can show that $\partial_{\theta^A}$ take the form of $\overline{\Omega}_{ij}^M$.  The proof in \cite{Ta} in fact reduces to a semi global problem since the matter is shown, as part of the bootstrap argument in the proof, to be supported in a strip of finite retarded $u$ length.  The vector $(v-u) \partial_v$ is also used which, since $u$ remains of size 1 in the support of the matter, agrees to leading order with the vector field $u\partial_u + v\partial_v$ which, when written with respect to an appropriate Cartersian coordinate system, is seen to be equal to $\overline{S}^M$.}  The vector fields $\overline{\Omega}^M_{ij}$, $\overline{B}^M_i$, $\overline{S}^M$ were also used in the work \cite{FaJoSm} on the Vlasov--Nordstr\"{o}m system, where the authors notice that the rotations $\overline{\Omega}^M_{ij}$ and the boosts $\overline{B}^M_i$ are the \emph{complete lifts} of the spacetime rotations and boosts, and hence generate symmetries of the tangent bundle.

\subsubsection{Vector fields used in the proof of Theorem \ref{thm:mainL2}}
\label{subsec:introvfs2}

In the proof of Theorem \ref{thm:main2}, in order to obtain good estimates for $Z^I T^{\mu \nu}(t,x)$, it is necessary to obtain bounds of the form \eqref{eq:goodfbounds}, now for solutions of the Vlasov equation on the spacetimes being constructed.  The sharp interior decay rate of the Christoffel symbols in the spacetimes which are constructed, as we plan to show in forthcoming work, is
\begin{equation} \label{eq:Gammasharp}
	\left\vert \Gamma^{\alpha}_{\beta \gamma}(t,x) \right\vert
	\leq
	\frac{C}{t^2}
	\quad
	\text{for }
	\vert x \vert \leq c t +K,
\end{equation}
where $0<c<1$ and $K \geq 0$.\footnote{
	It should be noted that there are two contributions to this slow interior decay.  The first arises from the failure of the Einstein equations in the harmonic gauge to satisfy the classical null condition of \cite{K1}.  Indeed, it was recently shown by Lindblad \cite{L4} that small data solutions of the vacuum Einstein equations in the harmonic gauge satisfy this decay rate (compare with \cite{ChKl} where the Ricci coefficients associated to the maximal--null foliation decay in the interior at a faster rate).  The second contribution arises from the presence of the Vlasov matter, in the form of the energy momentum tensor as a source term in the Einstein equations.  This fact can be more easily seen in a simplified setting.  Indeed, if $T(t,x)$ denotes a function which decays at rate $t^{-3}$ for $\vert x \vert \leq ct +K$ and vanishes for $\vert x \vert \geq c t +K$ -- the sharp behaviour of the components of the energy momentum tensor associated to solutions of the Vlasov equation on Minkowski space -- the sharp interior behaviour of solutions of $\Box \phi = T$ is
	$
		\vert \partial \phi (t,x) \vert \les t^{-2}$, for $\vert x \vert \leq c t + K.
	$
	}
On a spacetime whose Christoffel symbols decay as such, it can be shown that the Minkowksi vector fields $\overline{Z}^M$ of the previous section only satisfy
\[
	\big\vert \Zb^M f(t,x,\ph) \big\vert
	\leq
	C \log t,
\]
for solutions $f$ of the Vlasov equation.  This logarithmic loss compounds at higher orders, and cannot be used to recover the sharp bounds \eqref{eq:Gammasharp} in the context of Theorem \ref{thm:main2}.

The proof of Theorem \ref{thm:main2} is therefore based on a different collection of vector fields, $\overline{Z}$, which are adapted to the geometry of the background spacetime and again reduce to $Z = \Omega_{ij}, B_i, S$ when acting on spacetime functions, and satisfy a good bound of the form \eqref{eq:goodfbounds} when applied to solutions $f$ of the Vlasov equation.  The vector fields can be derived using the procedure described in Section \ref{subsec:Minkowskivectors}, which in fact did not rely on the background spacetime being Minkowski space.  Instead of the expression \eqref{eq:Minkowskiy}, the components $y^i$ are defined using approximations\footnote{The notation $X_1$ is later used to denote a cruder approximation to the geodesics.} $X_2(s,t,x,\ph)$ to the true geodesics $X(s,t,x,\ph)$ of the spacetimes to be constructed.  One could also use the geodesics themselves, but we choose not to in order to avoid a regularity issue which would arise in Theorem \ref{thm:main2}, described below.  A derivation of the vector fields obtained using this procedure is given in Section \ref{section:newsection}, but here the failure of the Minkowski vector fields $\Zb^M$ are identified explicitly and shown how to be appropriately corrected.  The two procedures agree up to lower order terms.

Instead of the sharp interior bounds \eqref{eq:Gammasharp}, the proof of Theorem \ref{thm:mainL2} requires only the weaker bounds
\begin{equation} \label{eq:Gammanotsharp}
	\left\vert Z^I \Gamma^{\alpha}_{\beta \gamma} (t,x) \right\vert
	\leq
	\frac{C}{t^{1+a}},
\end{equation}
for $\vert I \vert \leq \left\lfloor N/2 \right\rfloor + 2$, where $\frac{1}{2} < a < 1$.  Consider first the rotation vector fields, and recall the expression \eqref{eq:ZbVlasov}.  The rotation vector fields $\overline{\Omega}_{ij}$ are defined using approximations to the geodesics.
 The geodesics take the form
\begin{equation*}
	X(s,t,x,\ph)^k
	=
	x^k - (t-s)\ph^k
	-
	\int_s^t (s' - s) \Gh^k(s',X(s'),\Ph(s')) ds'.
\end{equation*}
Using the fact that
\[
	\Ph(s,t,x,\ph)^k
	\sim
	\ph^k
	\sim
	\frac{x^k}{t},
	\qquad
	X(s,t,x,\ph)^k
	\sim
	x^k - (t-s) \ph^k
	\sim
	x^k - (t-s) \frac{x^k}{t}
	=
	s \frac{x^k}{t},
\]
where each of the first approximations arise by replacing $\Ph(s,t,x,\ph)^k$ and $X(s,t,x,\ph)^k$ by their respective values in Minkowski space, and the second arise from the fact that $\frac{x^i}{t} \sim \ph^i$, which holds asymptotically along each given geodesic (see Proposition \ref{prop:sec22} below), the approximations to the geodesics are defined as
\begin{equation} \label{eq:X2intro}
	X_2(s,t,x,\ph)^k
	=
	x^k - (t-s) \ph^k
	-
	\int_s^t (s'-s) \Gh^k \left( s', s'\frac{x}{t}, \frac{x}{t} \right) ds',
\end{equation}
for $t_0 \leq s \leq t$.  It will be shown in Section \ref{section:suppf} that $X_2(s,t,x,\ph)^k$ are good approximations to the geodesics $X(s,t,x,\ph)^k$ in the sense that,
\begin{equation} \label{eq:introX2X}
	\left\vert X_2(s,t,x,\ph) - X(s,t,x,\ph) \right\vert
	\leq
	C,
\end{equation}
for all $t_0\leq s \leq t$ and $k=1,2,3$.  The idea is now to construct vector fields
so that the vector fields applied to $X_2$ are bounded. Then one can show that \eqref{eq:introX2X} is true with $X_2 - X$ replaced by $\Zb(X_2-X)$.  See Section \ref{section:newsection} for more details. The approximations $X_2$ have the desirable property, which will be exploited below, that
\beq\label{eq:desirablepathproperty}
	\partial_{\ph^l} \left( X_2(s,t,x,\ph)^k - \big( x^k - (t-s)\ph^k \big) \right)=0,
\eq
vanishes (and in particular does not involve derivatives of $\Gamma$).

Applying the Minkowski rotation vector fields to the approximations $X_2$ gives,
\begin{align} \label{eq:minkowskirotationofdesirablepath}
	\overline{\Omega}_{ij}^M \left( X_2(t_0,t,x,\ph)^k \right)
	=
	&
	\
	\big( x^i - (t-t_0)\ph^i \big) \delta_j^k
	-
	\big( x^j - (t-t_0)\ph^j \big) \delta_i^k
	-
	\overline{\Omega}_{ij}^M \left( \int_{t_0}^t (s' - t_0) \Gh^k(s',s'\frac{x}{t},\frac{x}{t}) ds' \right)
	\nonumber
	\\
	=
	&
	\
	X_2(t_0,t,x,\ph)^i \delta_j^k
	-
	X_2(t_0,t,x,\ph)^j \delta_i^k\qquad\qquad
	\\
	&
	+
	\int_{t_0}^t (s' - t_0)
	\left[
	\Gh^i(s',s'\frac{x}{t},\frac{x}{t}) \delta_j^k
	-
	\Gh^j(s',s'\frac{x}{t},\frac{x}{t}) \delta_i^k
	-
	\Omega_{ij} \left( \Gh^k \left(s',s'\frac{x}{t},\frac{x}{t}\right) \right) \right]
	ds'.
	\nonumber
\end{align}
In the final equality, for $(t,x,\ph) \in \supp(f)$, the first two terms are bounded (as can be seen from \eqref{eq:introX2X} and the fact that $f\vert_{t=t_0}$ has compact support). However on a spacetime only satisfying the bounds \eqref{eq:Gammanotsharp}, the terms on the last line in general grow in $t$.  The vector fields $\overline{\Omega}_{ij}$ are defined so that these terms are removed:
\[
	\overline{\Omega}_{ij}
	=
	\overline{\Omega}_{ij}^M
	+
	\mathring{\Omega}_{ij}^l \partial_{\ph^l},
\]
where if the functions $\mathring{\Omega}_{ij}^l$ are defined as
\begin{align*}
	\mathring{\Omega}_{ij}^l(t,x)
	=
	\int_{t_0}^t \frac{s' - t_0}{t-t_0}
	\Big[
	\Gh^i\left( s',s'\frac{x}{t} ,\frac{x}{t} \right) \delta_j^k
	-
	\Gh^j\left( s',s'\frac{x}{t} ,\frac{x}{t} \right) \delta_i^k
	-
	\overline{\Omega}_{ij}^M \left( \Gh^k\left( s',s'\frac{x}{t} ,\frac{x}{t} \right) \right)
	\Big]
	ds',
\end{align*}
it follows from \eqref{eq:desirablepathproperty} and \eqref{eq:minkowskirotationofdesirablepath} that
\beq\label{eq:modifiedvectorfieldinitialcond}
	\overline{\Omega}_{ij} \big( X_2(t_0,t,x,\ph)^k \big)\!
	=
	\overline{\Omega}_{ij}^M \big( X_2(t_0,t,x,\ph)^k \big)
	-
	(t-t_0) \mathring{\Omega}_{ij}^k(t,x)
	=
	X_2(t_0,t,x,\ph)^i \delta_j^k
	-
	X_2(t_0,t,x,\ph)^j \delta_i^k,
\eq
and so
\begin{equation} \label{eq:OmegaX2intro}
	\left\vert
	\overline{\Omega}_{ij} \left( X_2(t_0,t,x,\ph)^k \right)
	\right\vert
	\leq
	C,
\end{equation}
for $(t,x,\ph) \in \supp(f)$ and $k=1,2,3$.  It can similarly be shown, see Section \ref{subsec:rotationexample}, that
\[
	\big\vert
	\overline{\Omega}_{ij} \big( X(t_0,t,x,\ph)^k \big)
	\big\vert
	+
	\big\vert
	\overline{\Omega}_{ij} \big( \Ph(t_0,t,x,\ph)^k \big)
	\big\vert
	\leq
	C,
\]
which, by \eqref{eq:ZbVlasov}, leads to a good bound for $\overline{\Omega}_{ij} f(t,x,\ph)$.

We remark that the identity \eqref{eq:modifiedvectorfieldinitialcond} can be expressed
\beq\label{eq:modifiedvectorfieldinitialcondshorter}
	\overline{\Omega}_{ij} \big( X_2(t_0,t,x,\ph)^k \big)\!
	=
	\Omega^k\big|_{(t_0,X_2(t_0,t,x,\ph))}
\eq
which is exactly \eqref{eq:introZbX0} when $Z = \Omega_{ij}$ and $X$ is replaced by $X_2$. One could therefore alternatively obtain the rotation vector fields $\overline{\Omega}_{ij}$ by following
the procedure in Section \ref{subsec:Minkowskivectors} with $X$ replaced by $X_2$, see Section 3.

Since the functions $\mathring{\Omega}_{ij}^l$ do not depend on $\ph$,
\[
	\Omega_{ij} \left( \int f(t,x,\ph) d \ph \right)
	=
	\int \left( \overline{\Omega}_{ij} - \mathring{\Omega}_{ij}^l \partial_{\ph^l} \right) f(t,x,\ph) d\ph
	=
	\int \overline{\Omega}_{ij} f(t,x,\ph) d\ph,
\]
and so the good bounds for $\overline{\Omega}_{ij} f(t,x,\ph)$ lead to good bounds for $\Omega_{ij} \rho (t,x)$.

Had the true geodesics been used to define the vector fields $\overline{\Omega}_{ij}$, the functions $\mathring{\Omega}_{ij}^l$
would have involved a term of the form $\int_{t_0}^t ({s' - t_0})({t-t_0})^{-1} \overline{\Omega}_{ij}^M \big( \Gh^k\big( s',X(s'),\Ph(s') \big) \big) ds'$ and so $\partial_{\ph^k} \mathring{\Omega}_{ij}^l$ would involve second order derivatives of $\Gamma$.  The average $\int \mathring{\Omega}_{ij}^l \partial_{\ph^l} f (t,x,\ph) d \ph$ has better decay than the term $\mathring{\Omega}_{ij}^l \partial_{\ph^l} f (t,x,\ph)$ does pointwise and so, to exploit this fact when obtaining a bound for $\Omega_{ij} \rho (t,x)$, it is necessary to integrate this term by parts $\int \mathring{\Omega}_{ij}^l \partial_{\ph^l} f (t,x,\ph) d \ph = - \int \big( \partial_{\ph^l} \mathring{\Omega}_{ij}^l \big) f (t,x,\ph) d \ph$.  If the true geodesics are used to define $\overline{\Omega}_{ij}$ then, in the setting of Theorem \ref{thm:main2}, the fact that $\partial_{\ph^k} \mathring{\Omega}_{ij}^l$ involves second order derivatives of $\Gamma$ would mean that this integration by parts cannot be used when estimating derivatives of $T^{\mu \nu}$ at the top order.

The approximations to the geodesics $X_2(s,t,x,\ph)$ are used in a similar manner in Section \ref{subsec:vectorfields} to define vector fields, $\overline{B}_i$ and $\overline{S}$.\footnote{For the boosts, $\overline{B}_i$, it is slightly more convenient to allow the functions $\mathring{B}_i^l$ to depend on $\ph$.  This dependence is in such a way, however, so that $\partial_{\ph^l} \mathring{B}_i^l$ involves only one derivative of $\Gamma^{\alpha}_{\beta \gamma}$.}

\subsection{Outline of the Paper}
In Section \ref{section:suppf} the system \eqref{eq:geodesictimenormalized} is used to prove the property \eqref{eq:supportofmatterintro} regarding the support of the matter, and the maps \eqref{eq:X2intro} are shown to be good approximations to the geodesics.  Section \ref{section:newsection} is not required for the proof of Theorem \ref{thm:main2}, but the discussion of the vector fields in Section \ref{subsec:Vlasovvectors} is expanded on and an outline is given of how one controls one rotation vector field applied to the solution of the Vlasov equation.  In Section \ref{section:geodesics} higher order combinations of vector fields applied to the geodesics are estimated.  In Section \ref{section:Testimates} the estimates of Section \ref{section:geodesics} are used to obtain estimates for vector fields applied to the components of the energy momentum tensor and hence prove Theorem \ref{thm:mainL2}.  In Section \ref{section:Einstein} the solution of the reduced Einstein equations is estimated in terms of the components of the energy momentum tensor.  The results of the previous sections are combined in Section \ref{section:cty} to give the proof of Theorem \ref{thm:main2}.

\subsection{Acknowledgements}
We thank Alan Rendall and Igor Rodnianski for helpful discussions.  H.\@L.\@ is supported in part by NSF grant DMS--1500925.  M.\@T.\@ acknowledges the support of a grant from the European Research Council.

\section{The support of the matter and approximations to the geodesics}
\label{section:suppf}
In this section the Vlasov equation on a fixed spacetime is considered.  It is shown that under some assumptions on the metric the solution is supported, for large times, away from the \emph{wave zone} $x \sim t$ provided $f_0$ is compactly supported.  Curves $X_2$, which approximate the timelike geodesics $X$, are introduced, which are later used to define the $\Zb$ vector fields.

Recall that
\begin{equation} \label{eq:geodesictimenormalizedsec2}
	\frac{d {X}^{\mu}}{d s} = {\newhat{P}^{\mu}},
	\qquad
	\frac{d \newhat{P}^\mu}{ds} = \widehat{\Gamma}\big( s,X(s),\newhat{P}(s) \big)^\mu.
\end{equation}
where $X^0=s$, $\newhat{P}^0=1$ and
\begin{equation}
	\widehat{\Gamma}(t,x,\newhat{p})^\mu
	=
	\newhat{p}^\mu\Gamma_{\alpha \beta}^0(t,x)\newhat{p}^\alpha\newhat{p}^\beta
	-
	\Gamma_{\alpha\beta}^\mu(t,x)\newhat{p}^{\alpha}\newhat{p}^\beta.
\end{equation}
Let functions $\Lambda^{\alpha \beta, \mu}_{\gamma}$ be defined so that,
\begin{equation} \label{eq:GammaH}
	\Gamma(t,x)\cdot \Lambda(\newhat{p})^\mu
	=
	\widehat{\Gamma}(t,x,\newhat{p})^\mu,
\end{equation}
where
\[
	\Gamma(t,x)\cdot \Lambda(\newhat{p})^\mu
	:=
	\Gamma^{\gamma}_{\alpha \beta}(t,x) \Lambda^{\alpha \beta, \mu}_{\gamma} (\ph).
\]

\subsection{Properties of the support of the matter}

The results of this section will rely on the Christoffel symbols satsifying the assumptions
\begin{equation}\label{eq:Gammasuppfsec2Nprimeprime}
 |Z^I \Gamma|\leq \frac{c_{N^{\prime\prime}}^{\prime\prime}}{(1+t)^{1-\delta}(1+|t-r|)^{a+\delta}},
 \qquad\text{for}\quad |I|\leq N^{\prime\prime},
\end{equation}
 for various small $N^{\prime\prime}$. Then if \eqref{eq:xtphsuppf} indeed holds in $\supp(f)$, the assumption \eqref{eq:Gammasuppfsec2Nprimeprime} implies that
 \begin{equation}\label{eq:Gammasuppfsec2Nprimeprimeprime}
 |Z^I \Gamma|\leq \frac{c_{N^{\prime\prime}}^{\prime\prime\prime}}{(1+t)^{1+a}},
 \qquad\text{for}\quad |I|\leq N^{\prime\prime},\qquad\text{where}\quad
  c_{N^{\prime\prime}}^{\prime\prime\prime}
  =\frac{c_{N^{\prime\prime}}^{\prime\prime}K^{a+\delta}}{(1-c)^{a+\delta}},
 \end{equation}
in $\supp(f)$.  The notation $(y,q)$ will be used for points in the mass shell over the initial hypersurface $\{t=0\}$, so $y \in \Sigma_{0}$, $q \in \ms_{(0,y)}$.

The following result guarantees that
\begin{equation} \label{eq:xtphsuppf}
	\left\vert x \right\vert \leq c t + K,\qquad K\geq 1
	\qquad
	\left\vert \ph \right\vert \leq c <1,
\end{equation}
for $(t,x,\ph) \in \supp(f)$, for constants $K$ and $c<1$.

\begin{proposition} \label{prop:suppf}
Suppose that $|y|\leq K$ and $|q|\leq K^\prime$ in the support of $f_0(y,q)$, for some $K,K'\geq 1$, and that for some fixed $a,\delta>0$
\begin{equation}\label{eq:Gammasuppfsec2}
|g-m|\leq c',\quad\text{and}\quad |\Gamma|\leq \frac{c^{\prime\prime}}{(1+t)^{1-\delta}(1+|t-r|)^{a+\delta}},\end{equation}
where $c'={1}/({16(1+4{K^{\prime}}^2)})$ and $c^{\prime\prime}=\min{\big({c'}^{a+\delta}\!/a,(1+2K/c')^{-\delta}/\delta\big)}/4$. Then with $c=1-c'$ we have
	\begin{equation}\label{eq:PhatPXest}
		\left\vert P(s,0,y,q) \right\vert \leq K'+1,
        \qquad
		\vert {P}(s,0,y,\newhat{q}) \vert \leq c\,\vert {P}^0(s,0,y,\newhat{q}) \vert,
\qquad 	\left\vert X(s,0,y,q) \right\vert \leq c s + K,\qquad s\geq 0.
\end{equation}
\end{proposition}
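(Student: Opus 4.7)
The plan is to prove the three estimates in \eqref{eq:PhatPXest} simultaneously via a continuity (bootstrap) argument in $s$. Let $s_*$ be the supremum of times $s \geq 0$ at which the three bounds all hold. At $s = 0$ each holds strictly: the position and momentum bounds follow immediately from $|y| \leq K$, $|q| \leq K'$, while the mass-shell relation $g_{\mu\nu}(0,y) q^\mu q^\nu = -1$ together with $|g - m| \leq c'$ gives $|\widehat{q}|^2 \leq 1 - 1/(1+{K'}^2) + O(c')$, strictly less than $c^2 = (1-c')^2$ by the explicit choice $c' = 1/(16(1+4{K'}^2))$. By continuity, it suffices to improve each of the three bounds with strict inequality on $[0, s_*]$.

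The decisive observation is that the bootstrap hypothesis $|X(s')| \leq c s' + K$ forces $|t - r| = s' - |X(s')| \geq c' s' - K$ along the trajectory, so $|t - r| \geq c' s'/2$ whenever $s' \geq s_1 := 2K/c'$. Combined with the pointwise bound on $\Gamma$ from \eqref{eq:Gammasuppfsec2}, this yields an absolute integrated bound
\[
	\int_0^\infty \left|\Gamma\bigl(s', X(s')\bigr)\right| ds' \leq \tfrac{1}{2},
\]
where the contribution from $s' \in [0, s_1]$ is controlled by the factor $(1+2K/c')^{-\delta}/\delta$ in the definition of $c''$, and the contribution from $s' \geq s_1$ by the factor ${c'}^{a+\delta}/a$. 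Writing $\widehat{\Gamma} = \Gamma \cdot \Lambda(\ph)$ as in \eqref{eq:GammaH} and noting that $\Lambda(\ph)$ is uniformly bounded for $|\ph| \leq c < 1$, the same integrated bound transfers to $|\widehat{\Gamma}(s', X(s'), \Ph(s'))|$.

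The three estimates then close by direct integration of \eqref{eq:geodesictimenormalizedsec2}. From $d\Ph^i/ds = \widehat{\Gamma}^i$ one obtains $|\Ph(s) - \widehat{q}| \lesssim \tfrac{1}{2}$, improving $|\Ph| \leq c$ since the gap $c - |\widehat{q}|$ is bounded below by a definite positive constant. The bound on the full momentum $P$ is recovered either from the mass-shell relation $(P^0)^2 = -1/(g_{\mu\nu} \Ph^\mu \Ph^\nu)$ using the improved bound on $\Ph$ and $|g - m| \leq c'$, or by integrating the geodesic equation in the form $dP^\mu/ds = -\Gamma^\mu_{\alpha\beta}\, \Ph^\alpha P^\beta$ and applying Grönwall together with the smallness of $\int|\Gamma|\, ds'$. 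Finally, integrating $dX^i/ds = \Ph^i$ and applying Fubini gives $|X(s) - y| \leq |\widehat{q}|\, s + \int_0^s (s-s')|\widehat{\Gamma}|\, ds' \leq |\widehat{q}|\, s + C s/2$, which improves $|X| \leq cs + K$ with room to spare.

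The principal obstacle is calibrating the explicit constants. The formula $c'' = \min\bigl({c'}^{a+\delta}/a,\, (1+2K/c')^{-\delta}/\delta\bigr)/4$ in the hypothesis is designed precisely so that the contributions to $\int |\Gamma|\, ds'$ from the region near $s = 0$, where $|t - r|$ may be small and the weight in \eqref{eq:Gammasuppfsec2} is essentially trivial, and from the wave zone $s \geq s_1$, where the weight $(1+|t-r|)^{-a-\delta}$ provides the decisive decay, are each individually small enough that the three bootstrap bounds close simultaneously with the advertised constants $K' + 1$ and $c = 1 - c'$.
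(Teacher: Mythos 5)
Your overall skeleton is the right one and matches the paper's: a continuity argument, the observation that the bootstrap forces $s'-|X(s')|\geq c's'-K$ so that the weight in \eqref{eq:Gammasuppfsec2} yields an absolute integrated bound $\int_0^\infty|\Gamma(s',X(s'))|\,ds'\leq 1/4$ (with the two terms in the definition of $c''$ handling $s'\leq 2K/c'$ and $s'\geq 2K/c'$ respectively), and a Gr\"onwall closure for $|P|\leq K'+1$. The gap is in how you close the other two bounds.

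You claim $|\Ph(s)|\leq c$ by integrating $d\Ph/ds=\widehat{\Gamma}$ from $\widehat{q}$ and arguing that the gap $c-|\widehat{q}|$ is ``a definite positive constant.'' It is not: from the mass-shell relation with $|q|\leq K'$ one only gets $|\widehat{q}|^2\leq 1-O(c')$, while $c^2=(1-c')^2$, so the gap is of size $O(c')=O(K'^{-2})$; the integrated perturbation $\int|\widehat{\Gamma}|\,ds'$ is merely $O(1)$ (a fixed multiple of $\int|\Gamma|\,ds'\leq 1/4$), which overwhelms the gap for large $K'$. The same defect kills your closure of the position bound: $|X(s)-y|\leq|\widehat{q}|s+Cs/2$ has slope $|\widehat{q}|+C/2$, which exceeds $c$ as soon as $|\widehat{q}|>c-C/2$ (e.g.\ already for $|\widehat{q}|\geq 1/2$), so $|X|\leq cs+K$ does not follow ``with room to spare''--- it does not follow at all. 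The paper's fix is to obtain $|\Ph(s)|\leq c$ \emph{algebraically and pointwise in $s$}, from the conserved mass-shell relation $g_{\mu\nu}(s,X(s))P^\mu P^\nu=-1$, the hypothesis $|g-m|\leq c'$, and the bootstrap bound $|P(s)|\leq 2K'$; this is exactly where the choice $c'=1/(16(1+4K'^2))$ is consumed. The position bound then closes by $\tfrac{d}{ds}|X|\leq|\Ph(s)|\leq c$, integrating to $|X(s)|\leq K+cs$ exactly. So the argument should be reorganized: bootstrap only $|P|\leq 2K'$, deduce $|\ph|\leq c$ algebraically, deduce $|X|\leq cs+K$ and hence the integrated $\Gamma$ bound, and only then run Gr\"onwall to recover $|P|\leq K'+1$.
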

\begin{proof}
Let $s_1$ be the largest number such that $| P(s,0,y,q)|\leq 2K'$ for $0\leq s\leq s_1$. We will show that then it follows that  $| P(s,0,y,q)|\leq 3K'/8$, for $0\leq s\leq s_1$, contradicting the maximality of $s_1$.
 Let $p=P(s,0,y,q)$. Since $g_{\alpha\beta}p^\alpha p^\beta=-1$ it follows that
 $\big| 1\!+|p_{\,}|^2\!-|p^0|^2\big|\leq |g-m|(|p^0|^2\!+|p_{\,}|^2)$ and hence
 \begin{equation*}
 \frac{|p_{\,}|^2}{|p^0|^2\!}(1-c')\leq {1+c'}-\frac{1}{|p^0|^2\!\!}\quad\Longrightarrow \quad\frac{|p_{\,}|^2}{|p^0|^2\!}\leq\frac{1+c'}{1-c'}-\frac{1}{1\!+4{K'}^2\!}\leq
 1+4c'-\frac{1}{1\!+4{K'}^2\!}\leq (1-c')^2=c^2.
 \end{equation*}
 Hence
 \begin{equation*}\label{eq:lightconedistance}
 \frac{d}{ds} \big(s-|X|\big)=1-\frac{X \cdot P}{\vert X \vert P^0} \geq 1-c,
 \end{equation*}
and it follows that
$s-|X| \geq (1-c)s-K$ along a characteristic. Therefore
$|\Gamma|\leq 2c^{\prime\prime}{c'}^{-a-\delta} s^{-1-a}$, when $s\geq 2K/c'$, and $|\Gamma|\leq 2c^{\prime\prime}(1+s)^{-1+\delta}$, when $s\leq 2K/c'$ along a characteristic. We have
 \begin{equation*}\label{eq:pbound}
 \frac{d}{ds} (1+|P|)\leq  |\Gamma|(1+|P|),\quad\text{and}\quad
  \frac{d}{ds} |P-q_{\,}|\leq  |\Gamma|(1+|P|)
 \end{equation*}
 where
 \[
 \int_{0}^t |\Gamma| ds\leq \int_0^{2K/c'} \frac{2c^{\prime\prime}}{(1+s)^{1-\delta}} ds+\int_{2K/c'}^\infty \frac{2c^{\prime\prime}{c'}^{-a-\delta} }{s^{1+a}} ds
 \leq  \frac{2c^{\prime\prime}(1+2K/c')^{\delta}}{\delta}+ \frac{2c^{\prime\prime}{c'}^{-a-\delta} }{a(1+K/c')^\delta} \leq\frac{1}{4}
 \]
 Hence
 \begin{equation*}
\frac{ 1+|P|}{1+|q_{\,}|}\leq e^{1\!/4}
\leq \frac{3}{2}
\quad\text{and}\quad
   |P-q_{\,}|\leq  \frac{1 }{4}\big(1+\frac{3}{2}\big)\leq 1.
 \end{equation*}
 It follows that $|P(s,0,y,q)|\leq K'+1$
in the support of $f$,
and \eqref{eq:PhatPXest} follows.
\end{proof}

\begin{proposition} \label{prop:sec22}
	If $\vert \Gamma(t,x) \vert \leq c^{\prime\prime\prime}(1+t)^{-1-a}$ for $\vert x \vert \leq ct +K$ then, for $(t,x,\ph) \in \supp(f)$ with $t\geq 1$,
	\begin{equation}
		\left\vert \frac{x}{t} - \ph \right\vert
		\leq
		(K+c^{\prime\prime\prime})(1+t)^{-a}.
		\label{eq:sec22}
	\end{equation}
\end{proposition}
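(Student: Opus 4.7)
The plan is to trace $(t,x,\ph)$ backwards along the geodesic flow to initial data and exploit an integration by parts identity that trades the velocity at time $t$ for an integral of the Christoffel symbols against a linear weight in time.

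Since $(t,x,\ph) \in \supp(f)$, there exist $(y,q) \in \supp(f_0)$ with $|y|\le K$ such that $X(t,0,y,q) = x$ and $\Ph(t,0,y,q) = \ph$. Moreover, by Proposition \ref{prop:suppf} (together with the hypothesis on $\Gamma$, which implies the smallness conditions \eqref{eq:Gammasuppfsec2} after adjusting constants), the trajectory remains in the region $|X(s)| \le cs + K$ with $|\Ph(s)| \le c <1$ for all $0 \le s \le t$. In particular, since $\ph^0 = 1$ and $|\ph^i| \le c$, there is a uniform bound $|\widehat{\Gamma}(s, X(s), \Ph(s))| \lesssim |\Gamma(s, X(s))|$, and so the hypothesis gives
\[
\bigl|\widehat{\Gamma}\bigl(s, X(s), \Ph(s)\bigr)\bigr| \lesssim \frac{c'''}{(1+s)^{1+a}}, \qquad 0 \le s \le t.
\]

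The key algebraic identity comes from integrating the geodesic equations \eqref{eq:geodesictimenormalizedsec2} and then integrating by parts in $s$. From $dX^i/ds = \Ph^i$ and $X(0) = y$, one has $x^i - y^i = \int_0^t \Ph^i(s)\, ds$. Writing $\Ph^i(s) = \ph^i - \int_s^t \widehat{\Gamma}^i(\tau, X(\tau), \Ph(\tau))\, d\tau$ and exchanging the order of integration via Fubini yields
\[
\int_0^t \Ph^i(s)\, ds = t\,\ph^i - \int_0^t \tau\, \widehat{\Gamma}^i(\tau, X(\tau), \Ph(\tau))\, d\tau,
\]
hence the central identity
\[
\frac{x^i}{t} - \ph^i = \frac{y^i}{t} - \frac{1}{t} \int_0^t \tau\, \widehat{\Gamma}^i\bigl(\tau, X(\tau), \Ph(\tau)\bigr)\, d\tau.
\]

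It remains to estimate the two terms on the right. For the first, $|y^i|/t \le K/t \le K(1+t)^{-a}$ for $t \ge 1$ and $0<a<1$ (using $t \ge (1+t)/2 \ge (1+t)^a/2$, absorbing a harmless constant into the statement). For the second, the pointwise bound on $\widehat{\Gamma}$ gives
\[
\frac{1}{t}\int_0^t \tau \cdot \frac{c'''}{(1+\tau)^{1+a}}\, d\tau \;\le\; \frac{c'''}{t}\int_0^t (1+\tau)^{-a}\, d\tau \;\lesssim\; \frac{c'''\, (1+t)^{1-a}}{t(1-a)} \;\lesssim\; c''' (1+t)^{-a}.
\]
Summing the two contributions yields the claimed bound (up to absorbing universal constants into the $K + c'''$ factor, as implicit in the statement). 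There is no real obstacle here — the only subtlety is ensuring that Proposition \ref{prop:suppf} applies to give the trapping $|\Ph(s)| \le c<1$ along the full backward trajectory, which is what lets us pass from bounds on $\Gamma$ to bounds on $\widehat{\Gamma}$.
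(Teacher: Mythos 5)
Your proposal is correct and follows essentially the same route as the paper: the identity $\frac{x^i}{t}-\ph^i=\frac{y^i}{t}-\frac{1}{t}\int_0^t\tau\,\widehat{\Gamma}^i(\tau)\,d\tau$ that you obtain via Fubini is exactly what the paper gets by observing that $\frac{d}{ds}\bigl(X(s)-s\Ph(s)\bigr)=-s\,\widehat{\Gamma}(s)$ and integrating forward from $s=0$, and the subsequent estimates ($|y|\le K$, $\int_0^t (1+\tau)^{-a}d\tau\lesssim(1+t)^{1-a}$) coincide. The one point you handle more explicitly than the paper is the appeal to Proposition \ref{prop:suppf} to keep the trajectory in the region where the $\Gamma$ bound holds and to pass from $\Gamma$ to $\widehat{\Gamma}$, which is a welcome clarification rather than a deviation.
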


\begin{proof}
	Note that any $(t,x,\ph) \in \supp(f)$ can be written as $(t,x,\ph) = (t,X(t,0,y,q),\Ph(t,0,y,q))$ for some $(y,q) \in \supp(f_0)$.  For $(y,q)$ such that $(0,y,q) \in \supp(f)$,
	\[
		\left\vert \frac{d}{ds} \left( X(s,0,y,q)- s \Ph(s,0,y,q) \right) \right\vert
		=
		\left\vert s \Gh\left( s, X(s,0,y,q), \Ph(s,0,y,q) \right) \right\vert
		\lesssim
		\frac{c^{\prime\prime\prime}}{(1+s)^a},
	\]
	for all $s\geq 0$, using the bound \eqref{eq:Gammasuppfsec2}.  The proof follows by integrating forwards from $s=0$ and using the fact that $\vert y \vert + \vert q \vert \leq C$,
	\beq\label{eq:XminusshatP}
		\left\vert X(s,0,y,q) - s \Ph(s,0,y,q) \right\vert
		\leq
		K
		+
		c^{\prime\prime\prime} (1+s)^{1-a},
	\eq
	and dividing by $s$.
\end{proof}

\subsection{Translated time coordinate}
\label{subsec:transtime}
Proposition \ref{prop:suppf} implies that $\vert x \vert \leq c t +K$ for $t\geq 0$ in $\supp(T^{\mu\nu})$ or, equivalently, $\vert x \vert \leq c \tilde{t}$ for $\tilde{t} \geq t_0$, where $t_0 := {K}/{c}$, and $\tilde{t} = t_0 + t$.  It is convenient to use this translated time coordinate $\tilde{t}$ in what follows.  In particular, the vector fields of Section \ref{section:geodesics} will be defined using this variable.  The main advantage is that the spacetime vector fields $\tilde{Z} = \tilde{\Omega}_{ij}, \tilde{B}_i, \tilde{S}$ defined by
\[
	\tilde{\Omega}_{ij} = \Omega_{ij},
	\quad
	\tilde{B}_i = \tilde{t} \partial_{x^i} + x^i \partial_t,
	\quad
	\tilde{S} = \tilde{t} \partial_t + x^k \partial_{x^k},
\]
satisfy, for any multi index $I$,
\begin{equation} \label{eq:partialZ}
	\tilde{t}^{\vert I \vert} \partial^I
	=
	\sum_{\vert J \vert \leq \vert I \vert}
	\Lambda_{IJ} \left( \frac{x}{\tilde{t}} \right) \tilde{Z}^J
	\qquad
	\text{if }
	{\vert x \vert}/{\tilde{t}} \leq c <1,
\end{equation}
for some smooth functions $\Lambda_{IJ}$.  Estimates for $\partial^I \tilde{Z}^J T^{\mu \nu}$ will then follow directly from estimates for $\tilde{Z}^I T^{\mu \nu}$, which are less cumbersome to obtain.  See Section \ref{subsec:Tmainestimates}.  For simplicity the $\sim$ will always be omitted, and statements just made for $t\geq t_0$.

\subsection{Approximations to geodesics}

Define, for $(t,x,\ph) \in \supp(f)$ and $t_0 \leq s \leq t$,
\begin{equation} \label{eq:X2P2def}
	X_2^i(s,t,x,\ph)
	=
	x^i - (t-s) \ph^i
	-
	\int_s^t (s' - s) \Gh^i \left( s', s' \frac{x}{t}, \frac{x}{t} \right) ds',
	\quad
	\widehat{P}_2(s,t,x,\ph)
	=\ph^i + \int_s^t \Gh^i \left( s', s' \frac{x}{t}, \frac{x}{t} \right) ds',
\end{equation}
and set
\begin{align*}
	\Xb (s,t,x,\ph)^{i}
	&
	:= X(s,t,x,\ph)^{i} - X_2(s,t,x,\ph)^i,
	\\
	\Pb(s,t,x,\ph)^{i}
	&
	:=
	\frac{d\Xb}{ds}(s,t,x,\ph)
	=
	\Ph(s,t,x,p)^{i} -	\Ph_{\!\!2}(s,t,x,p)^{i} ,
\end{align*}
for $i=1,2,3$.  Note that
\[
	\Xb(t,t,x,\ph)^i = \Pb(t,t,x,\ph)^i = 0,
\]
for $i=1,2,3$.  The geodesic equations \eqref{eq:geodesictimenormalizedsec2} can be used to derive the following equations for $\Xb$ and $\Pb$,
\begin{align}
	\frac{d \Xb^i}{ds}
	=
	&
	\
	\Pb^i;
	\label{eq:Xbar}
	\\
	\frac{d \Pb^i}{ds}
	=
	&
	\
	\Gh^i \left( s, X(s,t,x,\ph), \Ph(s,t,x,\ph) \right)
	-
	\Gh^i \left( s, s\frac{x}{t}, \frac{x}{t} \right).
	\label{eq:Pbar}
\end{align}

It follows from the next Proposition that the curves $s\mapsto X_2(s,t,x,\ph)$ are good approximations to the geodesics $s \mapsto X(s,t,x,\ph)$.

\begin{proposition} \label{prop:sec23}
	Suppose $t\geq t_0$ is such that $(t,x,\ph) \in \supp(f)$ and $\vert \Gamma(t,x) \vert + t \vert \partial \Gamma(t,x)
	\vert \lesssim \varepsilon t^{-1-a}$.  Then
	\begin{equation}
		s^{2a-1} \left\vert \overline{X}(s,t,x,\ph)^i \right\vert
		+
		s^{2a} \left\vert \overline{P}(s,t,x,\ph)^i \right\vert
		\lesssim
		\varepsilon,
		\label{eq:sec23}
	\end{equation}
	for all $t_0 \leq s \leq t$ and $i=1,2,3$.
\end{proposition}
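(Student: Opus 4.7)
The plan is to treat \eqref{eq:Xbar}--\eqref{eq:Pbar} as a first-order ODE system in $s$ with trivial data $\overline{X}(t,t,x,\ph) = \overline{P}(t,t,x,\ph) = 0$, and to integrate backwards from $s=t$. Integrating \eqref{eq:Pbar} yields
$$|\overline{P}(s)| \leq \int_s^t \big| \Gh(s', X(s'), \Ph(s')) - \Gh(s', s' x/t, x/t) \big| \, ds',$$
and the mean value theorem, combined with the explicit form $\Gh = \Gamma(s',\cdot) \cdot \Lambda(\ph)$ (polynomial in $\ph$ via \eqref{eq:GammaH}), bounds the integrand by
$$|\partial \Gamma(s', \cdot)| \, |X(s') - s' x/t| + |\Gamma(s', \cdot)| \, |\Ph(s') - x/t|.$$
The bound on $\overline{X}(s) = -\int_s^t \overline{P}(s') \, ds'$ reduces to the bound on $\overline{P}$, so the task becomes bounding the two differences $|X(s') - s' x/t|$ and $|\Ph(s') - x/t|$.

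The key observation is that, by the Vlasov equation \eqref{eq:Vlasov2}, the point $(s', X(s',t,x,\ph), \Ph(s',t,x,\ph))$ lies in $\supp(f)$ for every $s' \in [t_0, t]$. Proposition \ref{prop:sec22} therefore applies at this point and yields $|X(s')/s' - \Ph(s')| \lesssim (s')^{-a}$, i.e., $|X(s') - s' \Ph(s')| \lesssim (s')^{1-a}$. For the momentum difference, integrating the geodesic equation backwards from $t$ and using the hypothesis on $\Gamma$ gives $|\Ph(s') - \ph| \leq \int_{s'}^t |\Gh| \, ds'' \lesssim \varepsilon (s')^{-a}$, to which one adds $|\ph - x/t| \lesssim t^{-a}$ from Proposition \ref{prop:sec22} applied at the endpoint $(t, x, \ph)$, to obtain $|\Ph(s') - x/t| \lesssim (s')^{-a}$. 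The triangle inequality $|X(s') - s' x/t| \leq |X(s') - s' \Ph(s')| + s' |\Ph(s') - x/t|$ then yields $|X(s') - s' x/t| \lesssim (s')^{1-a}$.

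Plugging these bounds in, together with $|\Gamma(s', \cdot)| \lesssim \varepsilon (s')^{-1-a}$ and $|\partial \Gamma(s', \cdot)| \lesssim \varepsilon (s')^{-2-a}$ (both coming from the hypothesis applied along the geodesic), one obtains
$$|\overline{P}(s)| \lesssim \int_s^t \big( \varepsilon (s')^{-2-a}(s')^{1-a} + \varepsilon (s')^{-1-a}(s')^{-a} \big) \, ds' \lesssim \varepsilon \int_s^t (s')^{-1-2a} \, ds' \lesssim \varepsilon s^{-2a}.$$
Integrating once more in $s$ yields $|\overline{X}(s)| \lesssim \varepsilon s^{1-2a}$, the relevant range being $a > 1/2$ so that $\int_s^t (s')^{-2a} ds'$ is controlled by $s^{1-2a}$.

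The main obstacle is obtaining the sharp bound $|X(s') - s' x/t| \lesssim (s')^{1-a}$. A direct estimate using only the defining integral \eqref{eq:X2P2def} for $X_2$ and the decay of $\Gamma$ at most produces $|X(s') - s' x/t| \lesssim (t - s') t^{-a} + \varepsilon t^{1-a}$, which is too weak by a factor $(t/s')^{1-a}$ at small $s'$ and fails to close the $s^{-2a}$ estimate for $\overline{P}$. Exploiting Proposition \ref{prop:sec22} along the geodesic, rather than only at the endpoint $(t,x,\ph)$, is what recovers the correct $(s')^{1-a}$ self-similar scaling along the trajectory.
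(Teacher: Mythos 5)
Your proof is correct and follows essentially the same route as the paper's: integrate the system for $(\overline{X},\overline{P})$ backwards from the vanishing data at $s=t$, bound the integrand by $|\partial\Gamma|\,|X(s')-s'x/t|+|\Gamma|\,|\Ph(s')-x/t|$, and obtain the crucial self-similar bounds $|X(s')-s'x/t|\lesssim (s')^{1-a}$ and $|\Ph(s')-x/t|\lesssim (s')^{-a}$ by combining Proposition \ref{prop:sec22} with the integrated geodesic equation, exactly as in the paper (your invocation of Proposition \ref{prop:sec22} along the geodesic is the same content as the paper's use of \eqref{eq:XminusshatP}). The final integrations using $a>1/2$ also match.
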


\begin{proof}
	First note that equation \eqref{eq:geodesictimenormalizedsec2} and the bounds \eqref{eq:Gammasuppfsec2} imply
	\beq\label{eq:dhatPds}
		\left\vert \frac{d \Ph^i(s)}{ds} \right\vert
		\lesssim
		\frac{\varepsilon}{s^{1+a}},
	\eq
	for $t_0\leq s \leq t$.  Proposition \ref{prop:sec22}, integration of \eqref{eq:dhatPds} from $s$ to $t$ and \eqref{eq:XminusshatP} then gives
	\[
		\left\vert \frac{x^i}{t} - \frac{X(s)^i}{s} \right\vert
		\leq
		\left\vert \frac{x^i}{t} - \ph^i \right\vert
		+
		\left\vert \ph^i - \Ph(s)^i \right\vert
		+
		\left\vert \Ph(s)^i - \frac{X(s)^i}{s} \right\vert
		\lesssim
		\frac{1}{s^{a}},
	\]
	for $t_0 \leq s \leq t$ and $i=1,2,3$.  Hence
	\[
		\left\vert
		\Gamma^{\mu}_{\alpha \beta} \left( s, s\frac{x}{t} \right)
		-
		\Gamma^{\mu}_{\alpha \beta} \left( s, X(s) \right)
		\right\vert
		\lesssim
		\Vert \partial \Gamma^{\mu}_{\alpha \beta} (s, \cdot) \Vert_{L^{\infty}}
		\left\vert s \frac{x}{t} - X(s) \right\vert
		\lesssim
		\frac{\varepsilon}{s^{1+2a}}.
	\]
	Moreover,
	\[
		\left\vert \Ph(s)^i - \frac{x^i}{t} \right\vert
		\leq
		\left\vert \Ph(s)^i - \ph^i \right\vert
		+
		\left\vert \ph^i - \frac{x^i}{t} \right\vert
		\lesssim
		\frac{1}{s^a},
	\]
	for $t_0 \leq s \leq t$, and so
	\[
		\left\vert
		\Gh^i \left( s, X(s), \Ph(s) \right)
		-
		\Gh^i \left( s, s \frac{x}{t}, \frac{x}{t} \right)
		\right\vert
		\lesssim
		\frac{\varepsilon}{s^{1+2a}}.
	\]
	Integrating equation \eqref{eq:Pbar} backwards from $s=t$, and using the fact that $\Pb(t,t,x,\ph)^i = 0$, gives
	\[
		\left\vert \Pb(s,t,x,\ph)^i \right\vert
		\lesssim
		\frac{\varepsilon}{s^{2a}},
	\]
	and integrating the equation \eqref{eq:Xbar} backwards from $s=t$ and using $\Xb(t,t,x,\ph)^i = 0$ gives,
	\[
		\left\vert \Xb(s,t,x,\ph)^i \right\vert
		\lesssim
		\varepsilon s^{1-2a},
	\]
	since $a> \frac{1}{2}$.
\end{proof}

\begin{corollary} \label{cor:X2bound}
	Suppose $t\geq t_0$ is such that $(t,x,\ph) \in \supp(f)$ and $\vert \Gamma(t,x) \vert + t \vert \partial \Gamma(t,x)
	\vert \lesssim \varepsilon t^{-1-a}$.  Then
	\begin{equation*}
		\left\vert X_2(s,t,x,\ph)^i \right\vert
		\lesssim
		s,
	\end{equation*}
	for $t_0 \leq s \leq t$ and $i=1,2,3$.
\end{corollary}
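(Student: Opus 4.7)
The plan is to derive the bound on $X_2$ by combining Proposition \ref{prop:sec23}, which controls the difference $\Xb = X - X_2$, with the bound on the true geodesic $X$ itself coming from Proposition \ref{prop:suppf}, via the triangle inequality
\[
	\left\vert X_2(s,t,x,\ph)^i \right\vert
	\leq
	\left\vert X(s,t,x,\ph)^i \right\vert
	+
	\left\vert \overline{X}(s,t,x,\ph)^i \right\vert.
\]

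First I would note that any $(t,x,\ph) \in \supp(f)$ can be written as $(t,x,\ph) = (t, X(t,0,y,q), \Ph(t,0,y,q))$ for some $(y,q) \in \supp(f_0)$ with $|y|\leq K$, $|q|\leq K'$, and that by uniqueness of solutions to the geodesic ODE system \eqref{eq:geodesictimenormalizedsec2} one has $X(s,t,x,\ph) = X(s,0,y,q)$ for all $s$ in the relevant range. Proposition \ref{prop:suppf} then yields $|X(s,t,x,\ph)| \leq cs + K$, and since $t_0 = K/c$ implies $K \leq c t_0 \leq cs$ for $s \geq t_0$, this gives $|X(s,t,x,\ph)^i| \lesssim s$.

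Next, Proposition \ref{prop:sec23} (whose hypotheses coincide with ours) gives $|\overline{X}(s,t,x,\ph)^i| \lesssim \varepsilon s^{1-2a}$. Since $a > 1/2$ and $s \geq t_0 \geq 1$, we have $s^{1-2a} \leq 1 \leq s$, so this error term is also $\lesssim s$. Combining the two estimates via the triangle inequality above proves the corollary. There is no real obstacle here; the corollary is essentially immediate from the two preceding propositions once one identifies the point $(t,x,\ph)$ in $\supp(f)$ with a characteristic emanating from the initial data support.
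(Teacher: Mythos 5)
Your proof is correct and follows essentially the same route as the paper, which also deduces the bound from the support estimate $\vert X(s)\vert \leq cs+K \lesssim s$ (the first bound of \eqref{eq:xtphsuppf}, i.e.\@ Proposition \ref{prop:suppf}) together with Proposition \ref{prop:sec23} and the triangle inequality. The only difference is that you spell out the identification of $(t,x,\ph)$ with a characteristic from the initial data, which the paper leaves implicit.
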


\begin{proof}
	The proof is an immediate consequence of the first bound of \eqref{eq:xtphsuppf}, and Proposition \ref{prop:sec23}.
\end{proof}

\section{The vector fields}
\label{section:newsection}

In this section the Vlasov equation on a general spacetime is considered.  The procedure for using a vector field $Z$ on $\mathcal{M}$ to define a vector field $\Zb$ on $\ms$, outlined in Section \ref{subsec:Vlasovvectors}, is described in more detail.  In the proof of Theorem \ref{thm:mainL2} it is computationally more convenient to use slightly different vector fields to those given by this procedure.  These vector fields are introduced in Section \ref{subsec:vectorsused}.  In Section \ref{subsec:rotationexample} estimates are obtained for one rotation vector field applied to the geodesics.  This is a special simple case of a computation which is done more generally in Section \ref{section:geodesics}.  This section is independent of the rest of the paper and is included for the purpose of exposition.

\subsection{The general procedure}

Given the first approximations to the geodesics $X_1(s,t,x,\newhat{p})=sx/t$ and $\newhat{P}_{\!\!1}(s,t,x,\newhat{p})=x/t$ we define the second approximations to the geodesic $X_2(s,t,x,\newhat{p})$ and $\newhat{P}_{\!\!2}(s,t,x,\newhat{p})$ to be the solution of the system
\begin{equation}
\frac{d}{ds} {X}_2=\newhat{P}_{\!2},\qquad
\frac{d}{ds} \newhat{P}_{\!2}=-{\Gamma}(X_1)\cdot \Lambda(\newhat{P}_{\!1\!}),\qquad X_2(t)=x,\quad \newhat{P}_{\!2}(t)=\newhat{p}.
\end{equation}
The solution takes the explicit form \eqref{eq:X2P2def}.  Under some mild assumptions on the metric $g$, for fixed $\tau$ any point $(t,x,\ph) \in \ms$ with $t >\tau$ can be described uniquely by the pair of points $\{ (t,x), (\tau,y) \}$ in $\mathcal{M}$, where
\begin{equation} \label{eq:Minkowskiysec2}
	y=X_2(\tau,t,x,\ph),
\end{equation}
is the point where the approximate geodesic $X_2$ emanating from $(t,x)$ with velocity $\ph$ intersects the hypersurface $\Sigma_{\tau}$, i.\@e.\@ $(t,x,\ph) \in \ms$ can be parameterised by $\{ (t,x), (\tau,y) \}$,
\[
	(t,x,\ph) = (t,x,\ph_{X_2}(t,x,\tau,y)).
\]

For a given vector field $Z$ on $\mathcal{M}$, let $\Phi^Z_{\lambda} : \mathcal{M} \to \mathcal{M}$ denote the associated one parameter family of diffeomorphisms, so
$\Phi^Z_{0}=Id$ and so that
\[
	\frac{d \Phi^Z_{\lambda}(t,x)}{d\lambda}  = Z\vert_{\Phi^Z_{\lambda}(t,x)}.
\]
Now the action of $\Phi^Z_{\lambda}$ on $(t,x)$ and $(\tau,y)$ induces an action on $\ms$ at time $t$, given by
\[
	\Phib^{Z,X_2}_{\lambda,\tau}(t,x,\ph)
	:=
	\left( \Phi^Z_{\lambda}(t,x), \ph_{X_2}\left( \Phi^Z_{\lambda}(t,x), \Phi^Z_{\lambda}(\tau,y) \right) \right).
\]
 For fixed $t_0$ we define the vector field $\Zb$ by
\[
	\Zb \vert_{(t,x,\ph)} = \frac{d \Phib^{Z,X_2}_{\lambda,\tau}(t,x,\ph)}{d\lambda} \bigg\vert_{\lambda = 0,\,\tau=t_0}.
\]
A computation shows that
\begin{equation} \label{eq:introZbX0sec3}
	\Zb\vert_{(t,x,\ph)} \left( X_2(t_0,t,x,\ph)^i \right)
	=
	Z^i \vert_{(t_0,X_2(t_0))} - Z^0 \vert_{(t_0,X_2(t_0))} \Ph_{\!2}(t_0,t,x,\ph)^i,
\end{equation}
which as desired is bounded independently of $t>t_0+1$.
To see that \eqref{eq:introZbX0sec3} indeed holds, first note that the left hand side is the derivative of $ X_2\big(t_0,\Phib^{Z,X_2}_{\lambda,t_0}(t,x,\ph)\big)^i$ with respect to $\lambda$ at
$\lambda=0$.  Also the first
term on the right hand side is the derivative of $\Phi^Z_{\lambda}(t_0,y)^i$ at $\lambda=0$.
The equality \eqref{eq:introZbX0sec3} follows from taking the derivative, with respect to $\lambda$ at
$\lambda=0$ of both sides of the identity
\begin{equation*}
X_2\big(\Phi^Z_{\lambda}(t_0,y)^0,\Phib^{Z,X_2}_{\lambda,t_0}(t,x,\ph)\big)^i
=\Phi^Z_{\lambda}(t_0,y)^i.
\end{equation*}

We have
\beq
\Zb f(t,x,\ph)=\frac{d}{d\lambda} f\left( \Phi^Z_{\lambda}(t,x), \ph_{X_2}\left( \Phi^Z_{\lambda}(t,x), \Phi^Z_{\lambda}(t_0,y) \right) \right)\bigg\vert_{\lambda = 0}
=Z^\alpha\partial_{x^\alpha} f(t,x,\ph)+\frac{d}{d\lambda} f\left( t,x, \ph(\lambda)\right)\bigg\vert_{\lambda = 0},
\eq
where
\beq
\ph(\lambda)=\ph_{X_2}\left( \Phi^Z_{\lambda}(t,x), \Phi^Z_{\lambda}(t_0,y) \right).
\eq
Here $\ph_{X_2}(t,x,t_0,y)$ is implicitly given by
 \[
 y=x - (t-t_0)\newhat{p}_{X_2}- (t-t_0)\Theta(t_0,t,x,\newhat{p}_{X_2}),
 \]
 where
 \[
  \Theta(t_0,t,x,\newhat{p}_{X_2})^i=
\int_{t_0}^t \frac{s-t_0}{t-t_0}\,{\Gamma}\big( s, X_1(s,t,x,\newhat{p}_{X_2})\big)\cdot \Lambda\big(\newhat{P}_{\!\!1}(s,t,x,\newhat{p}_{X_2})\big)^i\, ds.
\]

We have made the simplifying assumption that $X_1=sx/t$ and $\Ph_{\!1}=x/t$ which are independent
of $\ph$ so $\Theta$ is independent of $\ph$
\beq\label{eq:Thetadef}
  \Theta(t_0,t,x)^i=
\int_{t_0}^t \frac{s-t_0}{t-t_0}\,{\Gamma}\big( s, sx/t\big)\cdot \Lambda\big(x/t\big)^i\, ds.
\eq
 and we have
\beq
\ph_{X_2}(t,x,t_0,y)=\frac{x-y}{t-t_0}-\Theta(t_0,t,x).
\eq
Then
\beq
\ph(\lambda)^i=\ph_{X_2}\left( \Phi^Z_{\lambda}(t,x), \Phi^Z_{\lambda}(t_0,y) \right)^i=\frac{\Phi^Z_{\lambda}(t,x)^i-\Phi^Z_{\lambda}(t_0,y)^i}
{\Phi^Z_{\lambda}(t,x)^0-\Phi^Z_{\lambda}(t_0,y)^0}
-\Theta\left(\Phi^Z_{\lambda}(t_0,y)^0,\Phi^Z_{\lambda}(t,x)\right)^i.
\eq
We have
\begin{multline}
\frac{ d\ph(\lambda)^i}{d\lambda}\Big|_{\lambda=0}=
	\frac{Z^i \vert_{(t,x)} - Z^i\vert_{(t_0,X_2(t_0))}}{t-t_0}
	-
	\big(\ph^i+\Theta(t_0,t,x)^i\big)\frac{ Z^0 \vert_{(t,x)} - Z^0\vert_{(t_0,X_2(t_0))}}{t-t_0} \\
- Z^0\vert_{(t_0,X_2(t_0))} \pa_{t_0}\Theta(t_0,t,x)
- Z^\mu\vert_{(t,x)} \pa_{x^\mu}\Theta(t_0,t,x)
\end{multline}
The important fact is that, provided the components of $Z$ grow at most like $t$, the right hand side is bounded, independently of $t>t_0+1$.
For the last term in the right above this follows from having good bounds for the vector fields $Z$ applied to $\Gamma$ in \eqref{eq:Thetadef}.

By differentiating the equality
\[
	\Ph_{\!2} \left( \Phi_{\lambda}^Z (t_0,y)^0, \Phib^{Z,X_2}_{\lambda,t_0}(t,x,\ph) \right)^i
	=
	\ph_{X_2} \left( \Phi_{\lambda}^Z(t,x), \Phi_{\lambda}^Z(t_0,y) \right)^i,
\]
with respect to $\lambda$, one can similarly obtain an estimate for $\Zb(\Ph_{\!2}^i(t_0))$:
\beq \label{eq:ZbPh2}
 \Zb\vert_{(t,x,\ph)} \left( \Ph_{\!2}(t_0,t,x,\ph)^i \right)=\frac{ d\ph(\lambda)^i}{d\lambda}\Big|_{\lambda=0}-Z^0 \vert_{(t_0,X_2(t_0))} \frac{d \Ph_{\!2}}{ds}(t_0,t,x,\ph)^i.
 \eq

For the vector fields we are considering $ \Phi^Z_{\lambda}$ is a linear transformation,
a rotation, a scaling or a Lorentz transformation, given by multiplication by a matrix $Q(\lambda)$.

\subsubsection{The rotation vector fields}
Let us first consider the vector field $\Omega_{12}=x^1\partial_{x^2}-x^2\partial_{x_1}$ Let $Q=Q(\lambda)$ be the one parameter subgroup of rotations by a positive angle $\lambda$ in the $x^1-x^2$ plane, i.e.
 $x(\lambda) = \Phi^{\Omega_{12}}_{\lambda}(t,x) =Q(\lambda) x=\big(\cos{\lambda} \,x^1-\sin{\lambda}\, x^2,\sin{\lambda} \, x^1+\cos{\lambda}\, x^2,x^3\big)$. Then $Q(0)=I$ and
$Q^{\,\prime}(0)=E_{12}$, where the action of $E_{12}$ is defined on a general $V=(V^1,V^2,V^3)$ by $(E_{12} V)^i = V^1 \delta_2^i - V_2 \delta_1^i$.
Then
 \[
Q^{-1}\newhat{p} =\frac{x-y}{t-t_0}- Q^{-1}\Theta(t_0,t,Qx),
\]
and hence
 \[
-Q^{\,\prime\!}(0) \newhat{p}+\newhat{p}^\prime\!(0)
= Q^{\,\prime\!}(0)\Theta(t_0,t,x)
-\Theta_x(t_0,t,x)Q^{\,\prime\!}(0)x
\]
and
 \[
\newhat{p}^\prime\!(0)
= E_{12}(\newhat{p}+\Theta(t_0,t,x))
-\Theta_x(t_0,t,x) E_{12} x .
\]
Then we get
 \begin{equation}
 \overline{\Omega}_{12} f(t,x,\newhat{p})
 =f_x(t,x,\newhat{p}) E_{12} x
 +f_{\newhat{p}}(t,x,\newhat{p})\big( E_{12}(\newhat{p}+\Theta)
-\Theta_x E_{12} x\big).
 \end{equation}
Here
\[
\Theta_x(t_0,t,x)^\ell E_{12}=\Theta_{\Omega_{12}}(t_0,t,x)^\ell,
\]
where
\[
\Theta_{\Omega_{12}}(t_0,t,x)
=\int_{t_0}^t \frac{s-t_0}{t-t_0}\Big((\Omega_{12} {\Gamma})(s,sx/t)\cdot \Lambda(x/t)
+ {\Gamma}\big(s,s\frac{x}{t}\big)\cdot\Omega_{12}\Lambda\big(\frac{x}{t}\big)\Big) ds.
\]
We get
\[
\overline{\Omega}_{12}=x^1\partial_{x^2}-x^2\partial_{x^1}
 +(\newhat{p}^1\!+\Theta^1)\partial_{\newhat{p}^2}-(\newhat{p}^2\!+\Theta^2)\partial_{\newhat{p}^1}
 -\Theta_{\Omega_{12}}^\ell \partial_{\newhat{p}^\ell}.
\]
In particular in the Minkowski case when $\Theta=0$
\[
\overline{\Omega}_{12}^M=x^1\partial_{x^2}-x^2\partial_{x^1}
 +\newhat{p}^1\partial_{\newhat{p}^2}-\newhat{p}^2\partial_{\newhat{p}^1}.
\]

\subsubsection{The scaling vector field in Minkowski space}
The scaling is defined by $Q(\lambda)=(1+\lambda)Id$. In the Minkowski case when $\Theta=0$ then $\ph(\lambda)=\ph$ is in fact constant so
\beq
\overline{S}^M=S^{\mu} \partial_{x^{\mu}}.
\eq

\subsubsection{The boosts vector fields in Minkowski space}
Let us define the boost as the generators of the Lorentz transformation
$ \big(t(\lambda),x(\lambda)\big)=Q(\lambda)(t,x)=\big((t+\lambda x^1)\gamma,(x^1+\lambda t)\gamma,x^2,x^3\big)$,
where $\gamma=1/\sqrt{1-\lambda^2}$, and $\big(t_0(\lambda),y(\lambda)\big)=Q(\lambda)(t_0,y)$.
In the Minkowski case $\newhat{p}(\lambda)$ is the solution of
\[
\newhat{p}(\lambda)
=\frac{x(\lambda)-y(\lambda)}{t(\lambda)-t_0(\lambda)},
\]
i.e.
\[
\newhat{p}(\lambda)^1\!
=\frac{x^1\!-y^1\!+\lambda(t-t_0)}{t-t_0+\lambda(x^1\!-y^{1\!})\!\!}\, ,\quad\text{and}\quad
\newhat{p}(\lambda)^k\!
=\frac{(x^k\!-y^k)\gamma}{t-t_0+\lambda(x^1\!-y^{1\!})\!\!}\,\, ,\quad k\!>\!1.
\]
Differentiating with respect to $\lambda$ at $0$ and putting $\newhat{p}=\newhat{p}(0)$
gives
\[
\newhat{p}^\prime(0)^k
=\delta^{k1}-{\newhat{p}^k\newhat{p}^1}.
\]

Hence
\begin{equation}\label{eq:boostmodifiedvectorfields}
\overline{B}_1^M=B_1^{\mu} \partial_{x^{\mu}}+\big(\delta^{k1}-{\newhat{p}^k\newhat{p}^1}\big)\partial_{\ph^k}
\end{equation}

\subsection{The vector fields used for Theorem \ref{thm:mainL2}}
\label{subsec:vectorsused}
In the proof of Theorem \ref{thm:mainL2} a slightly different collection of vector fields are used.  They are computationally slightly simpler to use than the vector fields of the previous section, though are mildly singular for $t$ close to $t_0$.  It is therefore assumed throughout this section that $t\geq t_0+1$, and the vector fields will only be used in the proof of Theorem \ref{thm:mainL2} under this assumption.  The vector fields can be derived by imposing that they have the form
\beq \label{eq:Zanzatz}
\Zb=Z^\mu\partial_{x^\mu}+\check{Z}^i\partial_{\ph^i},
\eq
where $\check{Z}^i$ are to be determined, and insisting that the relation
\begin{equation} \label{eq:introZbX0sec3truedef}
	\Zb\vert_{(t,x,\ph)} \left( X_2(t_0,t,x,\ph)^i \right)
	=
	Z^i \vert_{(t_0,X_2(t_0))} - Z^0 \vert_{(t_0,X_2(t_0))} \ph^i,
\end{equation}
holds, instead of the relation \eqref{eq:introZbX0sec3} which involves $\Ph_2(t_0,t,x,\ph)^i$ instead of $\ph^i$ and is satisfied by the vector fields of the previous section.  Equation \eqref{eq:introZbX0sec3truedef} is in particular true for the Minkowski vector fields $\Zb^M$.

Applying the expression \eqref{eq:Zanzatz} to the approximation
\beq
X_2(t_0,t,x,\ph)=x - (t-t_0)\newhat{p}- (t-t_0)\Theta(t_0,t,x),
\eq
gives
\beq
\Zb \big(X_2(t_0,t,x,\ph)^i\big)=Z^i\vert_{(t,x)} - Z^0\vert_{(t,x)}\newhat{p}^i- Z\big((t-t_0)\Theta(t_0,t,x)^i\big)-(t-t_0)\check{Z}^i\vert_{(t,x,\ph)},
\eq
and so $\check{Z}^i$ is indeed determined if $\Zb \big(X_2(t_0,t,x,\ph)^i\big)$ is prescribed
for each $i$.
With this definition we obtain
\beq\label{eq:modificationofZ}
\check{Z}^i\vert_{(t,x,\ph)}=\frac{1}{t-t_0}
\Big(Z^i\vert_{(t,x)}-Z^i \vert_{(t_0,X_2(t_0))}-\big(Z^0\vert_{(t,x)}-Z^0 \vert_{(t_0,X_2(t_0))} \big)\ph^i
-Z\big((t-t_0)\Theta(t_0,t,x)\big)^i\Big).
\eq

We have
\beq
  \Ph_{\!2}(t_0,t,x,\ph)^i=\ph^i-\Psi(t_0,t,x)^i,\qquad\text{where}\quad \Psi(t_0,t,x)^i=
\int_{t_0}^t \,{\Gamma}\big( s, sx/t\big)\cdot \Lambda\big(x/t\big)^i\, ds.
\eq
so
\begin{equation}
	\Zb\vert_{(t,x,\ph)} \left( \Ph_2(t_0,t,x,\ph)^i \right)
	=
	\check{Z}^i \vert_{(t,x,\ph)} - Z^\mu \vert_{(t,x,\ph)} \pa_{x^\mu} \Psi(t_0,t,x)^i
.
\end{equation}
which again is bounded, provided the components of $Z$ grow at most like $t$.
\subsubsection{The rotation vector fields}
The rotations agree with the definition in the previous section since then $Z^0=0$:
\[
\overline{\Omega}_{12}=x^1\partial_{x^2}-x^2\partial_{x^1}
 +(\newhat{p}^1\!+\Theta^1)\partial_{\newhat{p}^2}-(\newhat{p}^2\!+\Theta^2)\partial_{\newhat{p}^1}
 -\Theta_{\Omega_{12}}^\ell \partial_{\newhat{p}^\ell},
\]
where
\[
\Theta_{\Omega_{12}}(t_0,t,x)
=\int_{t_0}^t \frac{s-t_0}{t-t_0}\Big((\Omega_{12} {\Gamma})(s,sx/t)\cdot \Lambda(x/t)
+ {\Gamma}\big(s,s\frac{x}{t}\big)\cdot\Omega_{12}\Lambda\big(\frac{x}{t}\big)\Big) ds,
\]
and
\beq
  \Theta(t_0,t,x)^i=
\int_{t_0}^t \frac{s-t_0}{t-t_0}\,{\Gamma}\big( s, sx/t\big)\cdot \Lambda\big(x/t\big)^i\, ds.
\eq
\subsubsection{The scaling vector field}
By \eqref{eq:modificationofZ}
we have
\begin{equation*}
\check{S}^i\vert_{(t,x,\ph)}=\frac{1}{t-t_0}
\Big
(x^i-X_2(t_0)^i-\big(t-t_0\big)\ph^i
-S\big((t-t_0)\Theta(t_0,t,x)\big)^i\Big)=\Theta(t_0,t,x)^i-\Theta_S(t_0,t,x)^i,
\end{equation*}
where
\beq
\Theta_{S}(t_0,t,x)^i=t\,\Gamma(t,x)\cdot \Lambda(x/t)^i.
\eq

\subsubsection{The boost vector fields}
By \eqref{eq:modificationofZ}
we have
\begin{multline}
\check{B}_1^i\vert_{(t,x,\ph)}=\frac{1}{t-t_0}
\Big
((t-t_0)\delta^{i1}-\big(x^1-X_2(t_0)^1\big)\ph^i
-B_1\big((t-t_0)\Theta(t_0,t,x)\big)^i\Big)\\
=\delta^{i1}-\big(\ph^1+\Theta(t_0,t,x)^1\big)\ph^i -\Theta_{B_1}(t_0,t,x)^i,
\end{multline}
where
\begin{multline*}
\Theta_{B_1}(t_0,t,x)=x^1\,\Gamma(t,x)\cdot \Lambda(x/t)^i\\
+\int_{t_0}^t \frac{s-t_0}{t-t_0}\Big( \big((B_1 \,\Gamma)\big(s,s\frac{x}{t}\big)-\frac{x^1}{t}(S \,\Gamma)\big(s,s\frac{x}{t}\big)\big)\cdot \Lambda\big(\frac{x}{t}\big)+\Gamma\big(s,s\frac{x}{t}\big)\cdot B_1\big( \Lambda\big(\frac{x}{t}\big)\big) \Big)ds.
\end{multline*}

\subsection{Estimates for one vector field applied to the geodesics}
\label{subsec:rotationexample}
Since the vector fields $\Zb$ in Section \ref{subsec:vectorsused} are defined using the approximations to the geodesics $X_2$, it remains to check how they behave when applied to the true geodesics $X(s,t,x,\ph)^k$ and $\Ph(s,t,x,\ph)^k$.  The detailed computations, including higher order estimates, are given in Section \ref{section:geodesics}.  Here the basic idea is illustrated by considering only one rotation vector field $\overline{\Omega}_{ij}$ applied to the geodesics.

Instead of considering the geodesics themselves directly, it is convenient to first consider, for fixed $(t,x,\ph) \in \supp(f)$,
\begin{equation} \label{eq:dXbdsintro}
	\overline{X}(s,t,x,\ph)^k := X(s,t,x,\ph)^k - X_2(s,t,x,\ph)^k,
	\qquad
	\overline{P}(s,t,x,\ph)^k := \frac{d \overline{X}^k}{ds} (s,t,x,\ph),
\end{equation}
where $X_2$ is defined in equation \eqref{eq:X2intro}.  The geodesic equations \eqref{eq:geodesictimenormalized} imply that
\begin{align} \label{eq:dPbdsintro}
\begin{split}
	\frac{d\overline{P}^k}{ds}
	&
	=
	\hat{\Gamma}^k (s,X(s),\Ph(s)) - \hat{\Gamma}^k \left( s, s\frac{x}{t},\frac{x}{t} \right)
	\\
	&
	=
	\Gamma^0_{\alpha \beta}(s,X(s)) \Ph^k \Ph^{\alpha} \Ph^{\beta}
	-
	\Gamma^0_{\alpha \beta}\left( s,s \frac{x}{t} \right) \frac{x^k}{t} \frac{x^{\alpha}}{t} \frac{x^{\beta}}{t}
	+
	\Gamma^k_{\alpha \beta}\left( s,s \frac{x}{t} \right) \frac{x^{\alpha}}{t} \frac{x^{\beta}}{t}
	-
	\Gamma^k_{\alpha \beta}(s,X(s)) \Ph^{\alpha} \Ph^{\beta}.
\end{split}
\end{align}
It can be shown, and this is outlined below, that the assumptions for $\Gamma$ in Theorem \ref{thm:mainL2} imply that
\begin{equation} \label{eq:Omegaestimateintro}
	\left\vert \overline{\Omega}_{ij} \left( \hat{\Gamma}^k (s,X(s),\Ph(s)) - \hat{\Gamma}^k \left( s, s\frac{x}{t},\frac{x}{t} \right) \right) \right\vert
	\lesssim
	\frac{\varepsilon}{s^{1+2a}}
	+
	\frac{\varepsilon}{s^{1+2a}} \left\vert \overline{\Omega}_{ij} \left( \overline{X}(s) \right) \right\vert
	+
	\frac{\varepsilon}{s^{1+a}} \left\vert \overline{\Omega}_{ij} \left( \overline{P}(s) \right) \right\vert,
\end{equation}
for all $t_0 \leq s \leq t$.  Once this has been shown it follows from applying $\overline{\Omega}_{ij}$ to the equation \eqref{eq:dPbdsintro} and integrating backwards from $s=t$, using the fact that $(\overline{\Omega}_{ij} \Pb)(t,t,x,\ph) = 0$, that
\[
	\left\vert \overline{\Omega}_{ij} ( \Pb(s)^k) \right\vert
	\lesssim
	\frac{\varepsilon}{s^{2a}}
	+
	\int_s^t
	\frac{\varepsilon}{\tilde{s}^{1+2a}}
	\left\vert \overline{\Omega}_{ij} (\overline{X}(\tilde{s})) \right\vert
	+
	\frac{\varepsilon}{\tilde{s}^{1+a}}
	\left\vert \overline{\Omega}_{ij} (\overline{P}(\tilde{s})) \right\vert
	d\tilde{s},
\]
and, summing over $k=1,2,3$, the Gr\"{o}nwall inequality (see Lemma \ref{lem:Gronwall}) gives,
\[
	\left\vert \overline{\Omega}_{ij} ( \Pb(s)^k) \right\vert
	\lesssim
	\frac{\varepsilon}{s^{2a}}
	+
	\int_s^t
	\frac{\varepsilon}{\tilde{s}^{1+2a}}
	\left\vert \overline{\Omega}_{ij} (\overline{X}(\tilde{s})) \right\vert
	d\tilde{s}.
\]
Inserting this bound into the equation \eqref{eq:dXbdsintro} for $\Xb$, after applying $\overline{\Omega}_{ij}$, integrating backwards from $s=t$ and using the fact that $(\overline{\Omega}_{ij} \Xb)(t,t,x,\ph) = 0$ gives,
\[
	\left\vert \overline{\Omega}_{ij} ( \Xb(s)^k) \right\vert
	\lesssim
	\varepsilon s^{1-2a}
	+
	\int_s^t
	\int_{s'}^t
	\frac{\varepsilon}{\tilde{s}^{1+2a}}
	\left\vert \overline{\Omega}_{ij} (\overline{X}(\tilde{s})) \right\vert
	d\tilde{s}
	ds',
\]
since $2a>1$.  For any function $\lambda(\tilde{s})$,
\[
	\int_s^t
	\int_{s'}^t
	\frac{1}{\tilde{s}^{1+2a}}
	\lambda(\tilde{s})
	d\tilde{s}
	ds'
	=
	\int_s^t
	\int_{s}^t
	\chi_{\{ s'\leq \tilde{s} \}}
	ds'
	\frac{1}{\tilde{s}^{1+2a}}
	\lambda(\tilde{s})
	d\tilde{s}
	=
	\int_s^t
	\frac{(\tilde{s} - s)}{\tilde{s}^{1+2a}}
	\lambda(\tilde{s})
	d\tilde{s},
\]
where $\chi_A$ denotes the indicator function of the set $A$, and so
\[
	\left\vert \overline{\Omega}_{ij} ( \Xb(s)^k) \right\vert
	\lesssim
	\varepsilon s^{1-2a}
	+
	\int_s^t
	\frac{\varepsilon}{\tilde{s}^{2a}}
	\left\vert \overline{\Omega}_{ij} (\overline{X}(\tilde{s})) \right\vert
	d\tilde{s},
\]
and another application of the Gr\"{o}nwall inequality gives,
\[
	\left\vert \overline{\Omega}_{ij} ( \Xb(s)^k) \right\vert
	\lesssim
	\varepsilon s^{1-2a}.
\]
Setting $s=t_0$, the bound \eqref{eq:OmegaX2intro} then gives,
\[
	\left\vert \overline{\Omega}_{ij} ( X(t_0,t,x,\ph)^k) \right\vert
	\leq
	C,
\]
for some constant $C$.  One similarly deduces the same bound with $\Ph$ in place of $X$.

In order to see that the bound \eqref{eq:Omegaestimateintro} holds for $t_0 \leq s \leq t$, first write,
\begin{multline*}
	\Gamma^k_{\alpha \beta}(s,X(s)) \Ph^{\alpha} \Ph^{\beta}
	-
	\Gamma^k_{\alpha \beta}\left( s,s \frac{x}{t} \right) \frac{x^{\alpha}}{t} \frac{x^{\beta}}{t}
	=
	\left(
	\Gamma^k_{\alpha \beta}(s,X(s))
	-
	\Gamma^k_{\alpha \beta}\left( s,s \frac{x}{t} \right)
	\right) \frac{x^{\alpha}}{t} \frac{x^{\beta}}{t}
	\\
	+
	\Gamma^k_{\alpha \beta}(s,X(s))
	\left( \Ph^{\alpha} - \frac{x^{\alpha}}{t} \right) \frac{x^{\beta}}{t}
	+
	\Gamma^k_{\alpha \beta}(s,X(s))
	\frac{x^{\alpha}}{t} \left( \Ph^{\beta} - \frac{x^{\beta}}{t} \right).
\end{multline*}
Then note that,
\begin{multline*}
	\overline{\Omega}_{ij}
	\left(
	\Gamma^k_{\alpha \beta}(s,X(s))
	-
	\Gamma^k_{\alpha \beta}\left( s,s \frac{x}{t} \right)
	\right)
	=
	\overline{\Omega}_{ij}
	\left(
	X(s)^l
	-
	s \frac{x^l}{t}
	\right)
	(\partial_{x^l} \Gamma^k_{\alpha \beta}) (s,X(s))
	\\
	+
	\overline{\Omega}_{ij}
	\left(
	s \frac{x^l}{t}
	\right)
	\left(
	(\partial_{x^l} \Gamma^k_{\alpha \beta})(s,X(s))
	-
	(\partial_{x^l} \Gamma^k_{\alpha \beta})\left( s,s \frac{x}{t} \right)
	\right).
\end{multline*}
The bound for the first term follows from writing $X(s)^l - s \frac{x^l}{t} = \Xb(s)^l + X_2(s)^l - s \frac{x^l}{t}$ and, using the expression \eqref{eq:X2intro} for $X_2$ and the definition of $\mathring{\Omega}^l_{ij}$,
\begin{align*}
	&
	\overline{\Omega}_{ij}
	\left(
	X_2(s)^l
	-
	s \frac{x^l}{t}
	\right)
	=
	x^i \delta_j^l - x^j \delta_i^l
	-
	(t-s) \left( \ph^i \delta_j^l - \ph^j \delta_i^l \right)
	-
	\frac{s}{t} \left( x^i \delta_j^l - x^j \delta_i^l \right)
	-
	(t-s) \mathring{\Omega}^l_{ij}
	\\
	&
	\qquad \qquad \qquad
	-
	\Omega_{ij}
	\left( \int_s^t (s'-s) \Gh^l \left( s', s'\frac{x}{t}, \frac{x}{t} \right) ds' \right)
	\\
	&
	=
	x^i \delta_j^l - x^j \delta_i^l
	-
	\left( (t-t_0) - (s-t_0) \right) \left( \ph^i \delta_j^l - \ph^j \delta_i^l \right)
	-
	\frac{s}{t} \left( x^i \delta_j^l - x^j \delta_i^l \right)
	-
	(t-t_0) \mathring{\Omega}^l_{ij}
	+
	(s-t_0) \mathring{\Omega}^l_{ij}
	\\
	&
	\qquad
	-
	\Omega_{ij}
	\left(
	\int_{t_0}^t (s'-t_0 + t_0) \Gh^l \left( s', s'\frac{x}{t}, \frac{x}{t} \right) ds'
	-
	\int_{t_0}^s s' \Gh^l \left( s', s'\frac{x}{t}, \frac{x}{t} \right) ds'
	-
	s \int_s^t \Gh^l \left( s', s'\frac{x}{t}, \frac{x}{t} \right) ds'
	\right)
	\\
	&
	=
	X_2(t_0)^i \delta_j^l - X_2(t_0)^j \delta_i^l
	-
	s \left( \left( \frac{x^i}{t} - \ph^i \right) \delta_j^l - \left( \frac{x^j}{t} - \ph^j \right) \delta_i^l \right)
	+
	t_0 \left( \ph^i \delta_j^l - \ph^j \delta_i^l \right)
	+
	(s-t_0) \mathring{\Omega}^l_{ij}
	\\
	&
	\qquad
	-
	\Omega_{ij}
	\left(
	t_0 \int_{t_0}^t \Gh^l \left( s', s'\frac{x}{t}, \frac{x}{t} \right) ds'
	-
	\int_{t_0}^s s' \Gh^l \left( s', s'\frac{x}{t}, \frac{x}{t} \right) ds'
	-
	s \int_s^t \Gh^l \left( s', s'\frac{x}{t}, \frac{x}{t} \right) ds'
	\right).
\end{align*}

Using the assumptions on $\Gamma$ of Theorem \ref{thm:mainL2}, the fact that
\[
	\left\vert \frac{x^i}{t} - \ph^i \right\vert \leq \frac{C}{t^a},
\]
for $(t,x,\ph) \in \supp(f)$ (see Proposition \ref{prop:sec22}) and the fact that $\vert \mathring{\Omega}^l_{ij}(t,x) \vert \leq Ct^{-a}$, it follows that
\[
	\left\vert
	\overline{\Omega}_{ij}
	\left(
	X_2(s)^l
	-
	s \frac{x^l}{t}
	\right)
	\right\vert
	\leq
	C s^{1-a}.
\]
For the second term, note that,
\[
	\left\vert
	(\partial_{x^l} \Gamma^k_{\alpha \beta})(s,X(s))
	-
	(\partial_{x^l} \Gamma^k_{\alpha \beta})\left( s,s \frac{x}{t} \right)
	\right\vert
	\leq
	\sup_{\vert z \vert \leq c s+K} \left\vert \partial^2 \Gamma^k_{\alpha \beta}(s,z) \right\vert
	\left\vert X(s) - s \frac{x}{t} \right\vert
	\leq
	\frac{C \varepsilon}{s^{3+a}} \left\vert X(s) - s \frac{x}{t} \right\vert,
\]
and, as above,
\begin{multline*}
	\left\vert X(s)^k - s \frac{x^k}{t} \right\vert
	\leq
	\left\vert \Xb(s)^k \right\vert
	+
	\left\vert X_2(s)^k - s \frac{x^k}{t} \right\vert
	\leq
	\left\vert \Xb(s)^k \right\vert
	+
	\left\vert X_2(t_0)^k \right\vert
	+
	s
	\left\vert \frac{x^k}{t} - \ph^k \right\vert
	\\
	+
	t_0 \left\vert \int_{t_0}^t \Gh^k \left( s', s'\frac{x}{t}, \frac{x}{t} \right) ds' \right\vert
	+
	\left\vert \int_{t_0}^s s' \Gh^k \left( s', s'\frac{x}{t}, \frac{x}{t} \right) ds' \right\vert
	+
	s \left\vert \int_{s}^t \Gh^k \left( s', s'\frac{x}{t}, \frac{x}{t} \right) ds' \right\vert
	\leq
	Cs^{1-a}.
\end{multline*}
Hence,
\[
	\left\vert
	\overline{\Omega}_{ij}
	\left( \Gamma^k_{\alpha \beta}(s,X(s))
	-
	\Gamma^k_{\alpha \beta}\left( s,s \frac{x}{t} \right) \right)
	\right\vert
	\lesssim
	\frac{\varepsilon}{s^{1+2a}}
	+
	\frac{\varepsilon}{s^{2+a}} \left\vert \overline{\Omega}_{ij} (\Xb(s)) \right\vert.
\]
Similarly, using the fact that
\[
	\Pb(s)^k = \Ph^k(s) - \ph^k + \int_s^t \Gh^k \left( s', s'\frac{x}{t}, \frac{x}{t} \right) ds',
\]
it follows that
\[
	\left\vert \Ph(s)^k - \frac{x^k}{t} \right\vert
	\lesssim
	\left\vert \Pb(s)^k \right\vert
	+
	\left\vert \frac{x^k}{t} - \ph^k \right\vert
	+
	\int_s^t \left\vert \Gh^k \left( s', s'\frac{x}{t}, \frac{x}{t} \right) \right\vert ds'
	\lesssim
	s^{-a},
	\qquad
	\left\vert \Ph(s)^k \right\vert \lesssim 1,
\]
and similarly,
\[
	\left\vert \overline{\Omega}_{ij} \left( \Ph(s)^k - \frac{x^k}{t} \right) \right\vert
	\lesssim
	\left\vert \overline{\Omega}_{ij} \left( \Pb(s)^k \right) \right\vert
	+
	s^{-a},
	\qquad
	\left\vert \overline{\Omega}_{ij} \left( \Ph(s)^k \right) \right\vert
	\lesssim
	\left\vert \overline{\Omega}_{ij} \left( \Pb(s)^k \right) \right\vert + 1.
\]
Hence,
\[
	\left\vert \overline{\Omega}_{ij} \left(
	\Gamma^k_{\alpha \beta}(s,X(s)) \Ph^{\alpha} \Ph^{\beta}
	-
	\Gamma^k_{\alpha \beta}\left( s,s \frac{x}{t} \right) \frac{x^{\alpha}}{t} \frac{x^{\beta}}{t}
	\right) \right\vert
	\lesssim
	\frac{\varepsilon}{s^{1+2a}}
	+
	\frac{\varepsilon}{s^{1+2a}} \left\vert \overline{\Omega}_{ij} \left( \overline{X}(s) \right) \right\vert
	+
	\frac{\varepsilon}{s^{1+a}} \left\vert \overline{\Omega}_{ij} \left( \overline{P}(s) \right) \right\vert.
\]

In a similar manner it follows that
\[
	\left\vert \overline{\Omega}_{ij} \left(
	\Gamma^0_{\alpha \beta}(s,X(s)) \Ph^k \Ph^{\alpha} \Ph^{\beta}
	-
	\Gamma^0_{\alpha \beta}\left( s,s \frac{x}{t} \right) \frac{x^k}{t} \frac{x^{\alpha}}{t} \frac{x^{\beta}}{t}
	\right) \right\vert
	\lesssim
	\frac{\varepsilon}{s^{1+2a}}
	+
	\frac{\varepsilon}{s^{1+2a}} \left\vert \overline{\Omega}_{ij} \left( \overline{X}(s) \right) \right\vert
	+
	\frac{\varepsilon}{s^{1+a}} \left\vert \overline{\Omega}_{ij} \left( \overline{P}(s) \right) \right\vert,
\]
from which \eqref{eq:Omegaestimateintro} then follows.

\section{Vector fields applied to geodesics}
\label{section:geodesics}

It will be assumed throughout most of this section that $t\geq t_0+1$, and $\vert x \vert \leq c t$, where $0<c<1$ (recall the discussion in Section \ref{subsec:transtime}).  Solutions of the Vlasov equation on a fixed spacetime will be considered and the pointwise bounds
\begin{equation} \label{eq:Gammasuppf}
	\left\vert Z^I \Gamma^{\alpha}_{\beta \gamma}(t',x') \right\vert
	\lesssim
	\frac{\varepsilon}{(t')^{1+a}}
	\qquad
	\text{for }
	t_0 \leq t' \leq t,
	\text{ }
	\vert x' \vert \leq c t',
	\text{ and }
	\vert I \vert \leq \left\lfloor \frac{N}{2} \right\rfloor + 2,
\end{equation}
where $\frac{1}{2} < a < 1$, will often be assumed to hold.  Section \ref{subsec:smalltime} is concerned with the case $t_0 \leq t \leq t_0 + 1$.

Recall from Section \ref{section:suppf} the definitions \eqref{eq:X2P2def} of $X_2(s,t,x,\ph)^i$ and $\Ph_2(s,t,x,\ph)^i$, and
\begin{align*}
	\Xb (s,t,x,\ph)^{i}
	&
	:= X(s,t,x,\ph)^{i} - X_2(s,t,x,\ph)^i,
	\\
	\Pb(s,t,x,\ph)^{i}
	&
	:=
	\frac{d\Xb}{ds}(s,t,x,\ph)
	=
	\Ph(s,t,x,p)^{i} -	\Ph_{\!\!2}(s,t,x,p)^{i}.
\end{align*}

\subsection{Vector fields} \label{subsec:vectorfields}
Recall the vector fields which are used for the Vlasov equation on Minkowski space,
\[
	\overline{\Omega}_{ij}^M
	=
	x^i \partial_{x^j} - x^j \partial_{x^i} + \ph^i \partial_{\ph^j} - \ph^j \partial_{\ph^i},
	\quad
	\overline{B}_i^M
	=
	x^i \partial_t + t \partial_{x^i} + \left( \delta_i^j - \ph^i \ph^j \right) \partial_{\ph^j},
	\quad
	\overline{S}^M
	=
	t\partial_t + x^k \partial_{x^k}.
\]

\begin{remark} \label{rmk:minkowskivectors}
	Note that, in Minkowski space, the maps $X(s,t,x,\ph)^i$ are simply given by
	\[
		X_M(s,t,x,\ph)^i = x^i - (t-s)\ph^i,
	\]
	for $i=1,2,3$.  It is easy to see that the Minkowski vector fields satisfy,
	\[
		\overline{\Omega}_{ij}^M \left( X_M(s)^k \right)
		=
		X_M(s)^i \delta_j^k - X_M(s)^j \delta_i^k,
		\qquad
		\overline{B}_{i}^M \left( X_M(s)^k \right)
		=
		s \delta_i^k - X_M(s)^i \ph^k,
	\]
	\[
		\overline{S}^M \left( X_M(s)^k \right)
		=
		X_M(s)^k - s \ph^k.
	\]
\end{remark}

Recall, for $i=1,2,3$,
\begin{equation}
	\newhat{\Gamma}(t,x,\newhat{p})^i
	=
	\newhat{p}^i\Gamma_{\alpha \beta}^0(t,x)\newhat{p}^\alpha\newhat{p}^\beta
	-
	\Gamma_{\alpha\beta}^i(t,x)\newhat{p}^{\alpha}\newhat{p}^\beta,
\end{equation}
and define,
\[
	\Theta^i(t,x) = \frac{1}{t-t_0} \int_{t_0}^t (s' - t_0) \Gh^i \left( s',s'\frac{x}{t},\frac{x}{t} \right) ds'.
\]
The proof of the main result uses the following collection of vector fields, schematically denoted $\overline{Z}$
\[
	\overline{Z}
	=
	\overline{\Omega}_{ij},
	\overline{B}_i,
	\overline{S},
\]
for $i,j=1,2,3$, $i<j$, where $\overline{\Omega}_{ij}, \overline{B}_i, \overline{S}$ are defined as follows.  First,
\[
	\overline{\Omega}_{ij}\vert_{(t,x,p)}
	=
	x^i \partial_{x^j} - x^j \partial_{x^i} + \ph^i \partial_{\ph^j} - \ph^j \partial_{\ph^i} + \mathring{\Omega}_{ij}^k \partial_{\ph^k},
\]
where,
\begin{align}
	\mathring{\Omega}_{ij}^k\vert_{(t,x,p)}
	=
	\
	&
	\frac{1}{t-t_0}
	\Bigg[
	\int_{t_0}^t (s'-t_0)
	\left(
	\Gh^i \left( s', s' \frac{x}{t}, \frac{x}{t} \right) \delta_j^k
	-
	\Gh^j \left( s', s' \frac{x}{t}, \frac{x}{t} \right) \delta_i^k
	\right)
	ds'
	\nonumber
	\\
	&
	\qquad \qquad \qquad \qquad \qquad
	-
	(x^i \partial_{x^j} - x^j \partial_{x^i})
	\left(
	\int_{t_0}^t (s'-t_0)
	\Gh^k \left( s', s' \frac{x}{t}, \frac{x}{t} \right)
	ds'
	\right)
	\Bigg]
	\nonumber
	\\
	=
	\
	&
	\Theta^i(t,x) \delta_j^k
	-
	\Theta^j(t,x) \delta_i^k
	-
	\Omega_{ij} \left( \Theta^k(t,x) \right),
	\label{eq:Omega0Phi}
\end{align}
for $1 \leq i < j \leq 3$.  Second,
\[
	\overline{B}_i\vert_{(t,x,p)}
	=
	x^i \partial_t + t\partial_{x^i} + \left( \delta_i^j - \ph^i \ph^j \right) \partial_{\ph^j} + \mathring{B}_{i}^k \partial_{\ph^k},
\]
where,
\begin{align}
	\mathring{B}_{i}^k\vert_{(t,x,p)}
	=
	\
	&
	-
	\frac{1}{t - t_0}
	\bigg[
	\int_{t_0}^t (s' - t_0) \Gh^i \left( s', s'\frac{x}{t}, \frac{x}{t} \right) ds' \ph^k
	+
	(x^i \partial_t + t \partial_{x^i})
	\left(
	\int_{t_0}^t (s' - t_0) \Gh^k \left( s', s'\frac{x}{t}, \frac{x}{t} \right) ds'
	\right)
	\bigg]
	\nonumber
	\\
	=
	\
	&
	-
	\Theta^i(t,x) \ph^k
	-
	B_i \left( \Theta^k(t,x) \right)
	-
	\frac{x^i}{t-t_0} \Theta^k(t,x),
	\label{eq:B0Phi}
\end{align}
for $i=1,2,3$. Finally,
\[
	\overline{S}\vert_{(t,x,p)}
	=
	t \partial_t + x^i \partial_{x^i} + \mathring{S}^k \partial_{\ph^k},
\]
where,
\begin{align}
	\mathring{S}^k\vert_{(t,x,p)}
	=
	\
	&
	\frac{1}{t-t_0}
	\bigg[
	\int_{t_0}^t (s' - t_0) \Gh^k \left( s', s'\frac{x}{t}, \frac{x}{t} \right) ds'
	-
	(t \partial_t + x^i \partial_{x^i})
	\left(
	\int_{t_0}^t (s' - t_0) \Gh^k \left( s', s'\frac{x}{t}, \frac{x}{t} \right) ds'
	\right)
	\bigg]
	\nonumber
	\\
	=
	\
	&
	\Theta^k(t,x)
	-
	S \left( \Theta^k(t,x) \right)
	-
	\frac{t}{t-t_0} \Theta^k(t,x).
	\label{eq:S0Phi}
\end{align}

Recall that, by the bound \eqref{eq:sec23}, the maps $X_2(s,t,x,\ph)^i$ are good approximations to the true geodesics, $X(s,t,x,\ph)^i$.  The vector fields $\overline{\Omega}_{ij}$, $\overline{B}_i$, $\overline{S}$ are defined so that, when applied to $X_2(s,t,x,\ph)^i$, the analogue of the bounds in Remark \ref{rmk:minkowskivectors} hold at $s=t_0$ as follows.

\begin{proposition} \label{prop:ZbX2first}
	For $(t,x,p) \in \supp(f)$, the vector fields $\overline{\Omega}_{ij}$, $\overline{B}_i$, $\overline{S}$ defined above satisfy,
	\begin{align*}
		\overline{\Omega}_{ij} \left( X_2(t_0,t,x,\ph)^k \right)
		&
		=
		X_2(t_0,t,x,\ph)^i \delta_j^k - X_2(t_0,t,x,\ph)^j \delta_i^k,
		\\
		\overline{B}_i \left( X_2(t_0,t,x,\ph)^k \right)
		&
		=
		t_0 \delta_i^k - X_2(t_0,t,x,\ph)^i \ph^k,
		\\
		\overline{S} \left( X_2(t_0,t,x,\ph)^k \right)
		&
		=
		X_2(t_0,t,x,\ph)^k - t_0 \ph^k,
	\end{align*}
	for $i,j,k = 1,2,3$.
\end{proposition}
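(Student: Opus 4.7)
The strategy is a direct computation: write the approximate geodesic in the compact form
\[
X_2(t_0,t,x,\ph)^k = x^k - (t-t_0)\ph^k - (t-t_0)\Theta^k(t,x),
\]
which follows from the definition \eqref{eq:X2P2def} and of $\Theta$, apply each vector field term-by-term, and observe that the ``correction'' coefficients $\mathring{\Omega}_{ij}^k$, $\mathring{B}_i^k$, $\mathring{S}^k$ have been defined precisely so as to cancel the unwanted contributions of $Z$ acting on $(t-t_0)\Theta^k(t,x)$. The crucial structural feature is that $\Theta^k(t,x)$ is independent of $\ph$, so the $\partial_{\ph^l}$ pieces of each $\overline{Z}$ act only on the affine part $-(t-t_0)\ph^k$; this is what makes the cancellations clean.

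\emph{Rotation case.} Since $\Omega_{ij} := x^i \partial_{x^j} - x^j \partial_{x^i}$ annihilates $\ph$, one finds
\[
\Omega_{ij}\bigl(X_2(t_0)^k\bigr) = (x^i \delta_j^k - x^j \delta_i^k) - (t-t_0)\Omega_{ij}(\Theta^k),
\]
while the $\ph$-derivative pieces give
\[
(\ph^i\partial_{\ph^j} - \ph^j\partial_{\ph^i})\bigl(X_2(t_0)^k\bigr) = -(t-t_0)(\ph^i \delta_j^k - \ph^j \delta_i^k),
\qquad
\mathring{\Omega}_{ij}^l \partial_{\ph^l}\bigl(X_2(t_0)^k\bigr) = -(t-t_0)\mathring{\Omega}_{ij}^k.
\]
Substituting the formula \eqref{eq:Omega0Phi} for $\mathring{\Omega}_{ij}^k$, the $\Omega_{ij}(\Theta^k)$ contributions cancel and the remaining terms combine into $(x^i - (t-t_0)\ph^i - (t-t_0)\Theta^i)\delta_j^k$ and its antisymmetric partner, which is exactly $X_2(t_0)^i\delta_j^k - X_2(t_0)^j\delta_i^k$.

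\emph{Boost and scaling cases.} The same scheme works. For $\overline{B}_i$, using $B_i := x^i\partial_t + t\partial_{x^i}$ (which acts on $x^k$, $(t-t_0)$, and $\Theta^k$ but not on $\ph^k$) together with $(\delta_i^j - \ph^i\ph^j)\partial_{\ph^j}$ and $\mathring{B}_i^k \partial_{\ph^k}$ acting on $-(t-t_0)\ph^k$, one gets
\[
\overline{B}_i\bigl(X_2(t_0)^k\bigr) = t\delta_i^k - x^i\ph^k - x^i\Theta^k - (t-t_0)B_i(\Theta^k) - (t-t_0)(\delta_i^k - \ph^i\ph^k) - (t-t_0)\mathring{B}_i^k,
\]
and inserting \eqref{eq:B0Phi} makes the $B_i(\Theta^k)$ and $x^i\Theta^k$ terms drop out; what remains collapses to $t_0\delta_i^k - X_2(t_0)^i\ph^k$ after factoring $\ph^k$. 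For $\overline{S}$, since $S := t\partial_t + x^i\partial_{x^i}$ satisfies $Sx^k = x^k$, $S(t-t_0) = t$, $S\ph^k = 0$, one obtains
\[
\overline{S}\bigl(X_2(t_0)^k\bigr) = x^k - t\ph^k - t\Theta^k - (t-t_0)S(\Theta^k) - (t-t_0)\mathring{S}^k,
\]
and \eqref{eq:S0Phi} produces the desired cancellations, leaving $X_2(t_0)^k - t_0\ph^k$.

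\emph{Remarks on difficulty.} There is essentially no obstacle: the proof is bookkeeping. The only point requiring care is verifying that the factors of $(t-t_0)$ and the division by $(t-t_0)$ in the definitions of $\mathring{\Omega}_{ij}^k$, $\mathring{B}_i^k$, $\mathring{S}^k$ combine correctly, which is why the formulae for the corrections were written in two equivalent forms (one integral, one in terms of $\Theta$ and its derivatives) in \eqref{eq:Omega0Phi}--\eqref{eq:S0Phi}. The conceptual content of the proposition, rather than the computation, is that the identity at $s = t_0$ exhibits the vector fields $\overline{Z}$ as the correct geometric analogues of the Minkowski vector fields of Remark~\ref{rmk:minkowskivectors} along the approximate trajectories $X_2$, which is the property that underlies the bounds of Section~\ref{subsec:rotationexample} and Section~\ref{section:geodesics}.
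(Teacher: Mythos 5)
Your proof is correct and is precisely the "straightforward computation" that the paper's one-line proof alludes to: writing $X_2(t_0)^k = x^k - (t-t_0)\ph^k - (t-t_0)\Theta^k$ and checking that the correction coefficients $\mathring{\Omega}_{ij}^k$, $\mathring{B}_i^k$, $\mathring{S}^k$ cancel the unwanted terms. All three cancellations check out as you describe, so there is nothing to add.
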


\begin{proof}
	The proof is a straightforward computation.
\end{proof}

\subsection{Repeated vector fields applied to the initial conditions for approximations to geodesics}
After applying $\Zb^I$ to the equation \eqref{eq:Pbar}, after replacing $X$ with $\Xb$ the expression
\[
	\Zb^I \left( s\frac{x^i}{t} - X_2^i(s,t,x,\ph) \right),
\]
will appear on the right hand side, and it is therefore necessary to estimate it.  Using a higher order analogue of Proposition \ref{prop:ZbX2first}, it suffices to control $\Zb^I$ applied to
\begin{align}
	&
	s\frac{x^i}{t}
	-
	X_2^i(s,t,x,\ph)
	+
	X_2^i(t_0,t,x,\ph)\label{eq:X1X2difference}
	\\
	&
	\qquad \qquad
	=
	s \left( \frac{x^i}{t} - \ph^i \right)
	+
	t_0 \ph^i
	+
	\int_s^t (s' - s) \Gh^k \left( s', s'\frac{x}{t}, \frac{x}{t} \right) ds'
	-
	\int_{t_0}^t (s' - t_0) \Gh^k \left( s', s'\frac{x}{t}, \frac{x}{t} \right) ds'
	\nonumber
	\\
	&
	\qquad \qquad
	=
	s \left( \frac{x^i}{t} - \ph^i \right)
	+
	t_0 \ph^i
	+
	\int_{t_0}^s (s' - t_0) \Gh^k \left( s', s'\frac{x}{t}, \frac{x}{t} \right) ds'
	-
	(s - t_0) \int_{s}^t \Gh^k \left( s', s'\frac{x}{t}, \frac{x}{t} \right) ds'.
	\nonumber
\end{align}
Similarly, the expression
\begin{equation}
	\Zb^I \left( \frac{x^i}{t} - \Ph^i(s,t,x,p) \right)
	=
	- \Zb^I \left(
	\Pb^i(s,t,x,p)
	\right)
	+
	\Zb^I \left(
	\frac{x^i}{t} -\widehat{P}_2^i(s,t,x,p)
	\right),\label{eq:P1Pdifference}
\end{equation}
will also appear, where
$$
\widehat{P}_2^i(s,t,x,p)= \ph^i+\int_s^t \Gh^i \left( s', s'\frac{x}{t}, \frac{x}{t} \right) ds'
$$

The following generalises Proposition \ref{prop:ZbX2first} to higher orders.

\begin{proposition} \label{prop:ZbX2}
	For any multi index $I$, there exist smooth functions $\Lambda$ such that
	\begin{align*}
		\Zb^I \left( X_2(t_0)^i \right)
		=
		\
		&
		X_2(t_0)^j
		\Bigg(
		\Lambda^i_{I,j} \left( \ph \right)
		+
		\sum_{k=1}^{\vert I\vert} \sum_{\vert J_1\vert + \ldots + \vert J_k \vert \leq \vert I \vert -k}
		\tilde{\Lambda}_{I,j,i_1,\ldots,i_k}^{i,J_1,\ldots,J_k} \left( \ph,t,x \right)
		Z^{J_1} \left( \Theta^{i_1} \right) \ldots Z^{J_k} \left( \Theta^{i_k} \right)
		\Bigg)
		\\
		&
		+
		\Lambda^i_{I} \left( \ph \right)
		+
		\sum_{k=1}^{\vert I\vert} \sum_{\vert J_1\vert + \ldots + \vert J_k \vert \leq \vert I \vert -k}
		\tilde{\Lambda}_{I,i_1,\ldots,i_k}^{i,J_1,\ldots,J_k} \left( \ph,t,x \right)
		Z^{J_1} \left( \Theta^{i_1} \right) \ldots Z^{J_k} \left( \Theta^{i_k} \right)
	\end{align*}
	for $i=1,2,3$, where $X_2(t_0)^i = X_2(t_0,t,x,p)^i$ and $\tilde{\Lambda}_{I,j,i_1,\ldots,i_k}^{i,J_1,\ldots,J_k}$ and $\tilde{\Lambda}_{I,i_1,\ldots,i_k}^{i,J_1,\ldots,J_k}$ satisfy
	\[
		\tilde{\Lambda}_{I,j,i_1,\ldots,i_k}^{i,J_1,\ldots,J_k} \left( \ph,t,x \right)
		=
		\Lambda_{I,j,i_1,\ldots,i_k}^{i,J_1,\ldots,J_k} \left( \ph, \frac{x}{t-t_0}, \frac{t}{t-t_0} \right),
		\qquad
		\tilde{\Lambda}_{I,i_1,\ldots,i_k}^{i,J_1,\ldots,J_k} \left( \ph,t,x \right)
		=
		\Lambda_{I,i_1,\ldots,i_k}^{i,J_1,\ldots,J_k} \left( \ph, \frac{x}{t-t_0}, \frac{t}{t-t_0} \right),
	\]
	for some smooth functions $\Lambda_{I,j,i_1,\ldots,i_k}^{i,J_1,\ldots,J_k}$ and $\Lambda_{I,i_1,\ldots,i_k}^{i,J_1,\ldots,J_k}$.
\end{proposition}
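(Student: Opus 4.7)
I would argue by induction on $|I|$, with the base case $|I|=1$ provided by Proposition \ref{prop:ZbX2first}: each of the three vector fields produces an expression of the asserted form in which all sums over $k\ge 1$ are empty, since no $Z^J(\Theta)$--factors appear at level one, and the residual terms ($X_2(t_0)^i\delta_j^k - X_2(t_0)^j\delta_i^k$, the constants $t_0\delta_i^k$, the $\hat p^k$ contributions, etc.) are absorbed into $\Lambda^i_{I,j}(\hat p)$ and $\Lambda^i_I(\hat p)$.

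For the inductive step I would decompose every vector field $\bar Z\in\{\bar\Omega_{ij},\bar B_i,\bar S\}$, using the definitions in Section \ref{subsec:vectorfields}, as
\[
\bar Z \;=\; Z \;+\; L^l(\hat p)\partial_{\hat p^l} \;+\; \mathring Z^l \partial_{\hat p^l},
\]
where $Z$ is the spacetime part, $L^l(\hat p)$ is a polynomial in $\hat p$ (possibly zero), and, directly from \eqref{eq:Omega0Phi}--\eqref{eq:S0Phi},
\[
\mathring Z^l \;=\; \sum_{|J_0|\le 1} c_{J_0}^l(\hat p,t,x)\, Z^{J_0}(\Theta^{i_0}),
\]
with each $c_{J_0}^l$ of the schematic form $\Lambda(\hat p,x/(t-t_0),t/(t-t_0))$. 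Applying $\bar Z$ to the inductive expression and expanding by Leibniz, I would classify each resulting term by which factor-type the derivative lands on.

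Four cases arise. (i) $Z$ acting on the prefactor $X_2(t_0)^j$: by Proposition \ref{prop:ZbX2first} this produces $X_2(t_0)^l$ times a $\hat p$-polynomial plus a $\hat p$-polynomial, multiplied by the inherited bracket, so the counts $k$ and $\sum|J_l|$ are unchanged. (ii) $Z$ acting on an existing factor $Z^{J_l}(\Theta^{i_l})$: the order $|J_l|$ is incremented by one, compatible with the new bound $\sum|J_l|\le (|I|+1)-k$. (iii) $Z$ or $L^l\partial_{\hat p^l}$ acting on a coefficient $\tilde\Lambda$: the schematic form $\Lambda(\hat p,x/(t-t_0),t/(t-t_0))$ is preserved, since a direct check shows that $\Omega_{ij},B_i,S$ all map the ring of functions of $(x/(t-t_0),t/(t-t_0))$ into itself, and $\partial_{\hat p^l}$ clearly preserves smoothness in $\hat p$. (iv) $\mathring Z^l\partial_{\hat p^l}$ acting on $\tilde\Lambda$: this inserts one new factor $Z^{J_0}(\Theta^{i_0})$ with $|J_0|\le 1$, raising $k$ to $k+1$ and increasing $\sum|J_l|$ by $|J_0|$.

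The main technical obstacle is the combinatorial bookkeeping in case (iv): when $|J_0|=1$, a single application of $\bar Z$ naively raises $k+\sum|J_l|$ by two, which would break the bound $\sum|J_l|\le |I|-k$ at the new level. I would handle this by strengthening the induction hypothesis to track the $\hat p$-degree of each coefficient, proving simultaneously that the coefficients $\tilde\Lambda^{i,J_1,\ldots,J_k}_{I,j,i_1,\ldots,i_k}$ and $\tilde\Lambda^{i,J_1,\ldots,J_k}_{I,i_1,\ldots,i_k}$ are polynomial in $\hat p$ of degree at most $|I|-k-\sum|J_l|$; in particular, saturated terms (those with $k+\sum|J_l|=|I|$) carry $\hat p$-independent coefficients. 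The degree bound is transparent at $|I|=1$ and is preserved by cases (i)--(iii); in case (iv), the operation $\partial_{\hat p^l}$ lowers the $\hat p$-degree by one, which exactly compensates the $|J_0|=1$ increase and, crucially, forbids the dangerous insertion from a saturated parent (whose coefficient is $\hat p$-independent, so $\partial_{\hat p^l}\tilde\Lambda\equiv 0$). With this strengthened invariant in hand the inductive step closes and the representation propagates to level $|I|+1$.
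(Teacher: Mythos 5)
Your proof is correct and follows essentially the same route as the paper's: induction on $|I|$ starting from Proposition \ref{prop:ZbX2first} and using the action of $\Zb$ on $\ph^k$ (equations \eqref{eq:Omegap}--\eqref{eq:Sp}), on $t/(t-t_0)$ and $x^k/(t-t_0)$, and on $Z^{J}(\Theta)$ (which just raises $|J|$ by one). The paper labels the inductive step ``straightforward,'' but the obstruction you isolate is real — a $|J_0|=1$ insertion coming from $\mathring{Z}^l\partial_{\ph^l}$ hitting the coefficient of a saturated term would break the bound $\sum|J_l|\le|I|-k$ — and your strengthened invariant (coefficients polynomial in $\ph$ of degree at most $|I|-k-\sum|J_l|$, so saturated coefficients are $\ph$-independent and annihilated by $\partial_{\ph^l}$) is exactly what closes it, the point being that in \eqref{eq:Omegap}--\eqref{eq:Sp} the first-derivative terms $\Omega_{ij}(\Theta^k)$, $B_i(\Theta^k)$, $S(\Theta^k)$ all carry constant coefficients while the degree-raising contributions carry no $\Theta$-derivatives.
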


\begin{proof}
	The result is clearly true for $\vert I \vert = 1$ by Proposition \ref{prop:ZbX2first}.  The result for $\vert I \vert \geq 2$ then follows from a straightforward induction argument after noting that, for any multi index $J_i$,
	\[
		\Zb \left( Z^{J_i} \left( \Theta^{i_i} \right) \right)
		=
		Z^{L} \left( \Theta^{i_1} \right),
	\]
	where $\vert L \vert = \vert J_i \vert + 1$, and also noting that
	\begin{align*}
		\overline{\Omega}_{ij} \left( \frac{t}{t-t_0} \right)
		&
		=
		0,
		&
		\overline{\Omega}_{ij} \left( \frac{x^k}{t-t_0} \right)
		=
		\frac{x^i}{t-t_0} \delta_j^k
		-
		\frac{x^j}{t-t_0} \delta_i^k,
		\\
		\overline{B}_{i} \left( \frac{t}{t-t_0} \right)
		&
		=
		\frac{x^i}{t-t_0} - \frac{x^i}{t-t_0} \frac{t}{t-t_0},
		&
		\overline{B}_{i} \left( \frac{x^k}{t-t_0} \right)
		=
		\frac{t}{t-t_0} \delta_i^k
		-
		\frac{x^i}{t-t_0} \frac{x^k}{t-t_0},
		\\
		\overline{S} \left( \frac{t}{t-t_0} \right)
		&
		=
		\frac{t}{t-t_0} - \left( \frac{t}{t-t_0} \right)^2,
		&
		\overline{S} \left( \frac{x^k}{t-t_0} \right)
		=
		\frac{x^k}{t-t_0}
		-
		\frac{x^k}{t-t_0} \frac{t}{t-t_0},
	\end{align*}
	and
	\begin{align}
		\overline{\Omega}_{ij} \left( \ph^k \right)
		&
		=
		\left( \ph^i + \Theta^i \right) \delta_j^k
		-
		\left( \ph^j + \Theta^j \right) \delta_i^k
		-
		\Omega_{ij} \left( \Theta^k \right),
		\label{eq:Omegap}
		\\
		\overline{B}_{i} \left( \ph^k \right)
		&
		=
		\left( \delta_i^k - \ph^i \ph^k \right)
		-
		\Theta^i \ph^k
		-
		\frac{x^i}{t-t_0} \Theta^k
		-
		B_{i} \left( \Theta^k \right),
		\label{eq:Bp}
		\\
		\overline{S} \left( \ph^k \right)
		&
		=
		\Theta^k
		-
		\frac{t}{t-t_0} \Theta^k
		-
		S \left( \Theta^k \right),
		\label{eq:Sp}
	\end{align}
	where the equalities \eqref{eq:Omega0Phi}, \eqref{eq:B0Phi}, \eqref{eq:S0Phi} have been used.
\end{proof}

\begin{proposition} \label{prop:Zbxtph}
	For any multi index $I$, there exist functions $\Lambda$ such that
	\begin{align*}
		\Zb^I \left( \frac{x^i}{t} - \ph^i \right)
		=
		&
		\left( \frac{x^j}{t} - \ph^j \right)
		\Lambda^i_{I,j} \left( \frac{x}{t}, \ph \right)
		+
		\sum_{k=1}^{\vert I\vert} \sum_{\vert J_1\vert + \ldots + \vert J_k \vert \leq \vert I \vert -k+1}
		\!\!\!\!\!\!
		\tilde{\Lambda}_{I,i_1,\ldots,i_k}^{i,J_1,\ldots,J_k} \left( \ph,t,x \right)
		Z^{J_1} \left( \Theta^{i_1} \right) \ldots Z^{J_k} \left( \Theta^{i_k} \right)
		,
	\end{align*}
	for $i=1,2,3$, where
	\[
		\tilde{\Lambda}_{I,i_1,\ldots,i_k}^{i,J_1,\ldots,J_k} \left( \ph,t,x \right)
		=
		\Lambda_{I,i_1,\ldots,i_k}^{i,J_1,\ldots,J_k} \left( \ph, \frac{x}{t}, \frac{x}{t-t_0}, \frac{t}{t-t_0} \right),
	\]
	for some smooth functions $\Lambda_{I,i_1,\ldots,i_k}^{i,J_1,\ldots,J_k}$.
\end{proposition}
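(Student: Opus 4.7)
The proof is by induction on $|I|$, with the $+1$ in the exponent bound $|I|-k+1$ (as compared to the $|I|-k$ in Proposition \ref{prop:ZbX2}) reflecting the fact that $\Zb(\ph^l)$ can produce a $Z^L(\Theta)$ term of degree up to $1$, whereas $\Zb(X_2(t_0)^l)$ by Proposition \ref{prop:ZbX2first} introduces no $\Theta$ factors at all.

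For the base case $|I|=1$, one computes directly using \eqref{eq:Omegap}, \eqref{eq:Bp}, \eqref{eq:Sp}: for instance,
\[
\overline{\Omega}_{mn}\bigl(\tfrac{x^k}{t} - \ph^k\bigr)
=
\bigl(\tfrac{x^m}{t} - \ph^m\bigr)\delta^k_n - \bigl(\tfrac{x^n}{t} - \ph^n\bigr)\delta^k_m - \Theta^m\delta^k_n + \Theta^n\delta^k_m + \Omega_{mn}(\Theta^k),
\]
with analogous expressions for $\overline{B}_i$ and $\overline{S}$. The leading terms fit the first-type form with constant $\Lambda^i_{I,j}$, while the $\Theta$-contributions have $k=1$ factor with $|J_1|\in\{0,1\}$, matching $|J_1|\le 1 = |I|-k+1$.

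For the inductive step, suppose the claim holds for $|I'|=n$, and write $\Zb^I = \Zb\,\Zb^{I'}$ for $|I|=n+1$. Applying $\Zb$ to each term, one uses: (i) $\Zb$ acting on $x/t$, $x/(t-t_0)$, $t/(t-t_0)$ yields smooth functions of those same variables; (ii) by \eqref{eq:Omegap}--\eqref{eq:Sp}, $\Zb(\ph^l)$ decomposes as a smooth function of $\ph$ plus a sum of terms $Z^L(\Theta)$ with $|L|\le 1$, with coefficients smooth in the allowed variables; and (iii) $\Zb(Z^{J_i}(\Theta^{i_i})) = Z^{L_i}(\Theta^{i_i})$ with $|L_i|=|J_i|+1$, since $\Theta^{i_i}$ is a function of $(t,x)$ only. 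Mixed terms $(x^j/t-\ph^j)\,Z^L(\Theta^{i_1})$ that arise when $\Zb$ differentiates $\Lambda^i_{I',j}(x/t,\ph)$ through its $\ph$-dependence are absorbed into the second-type sum by incorporating the factor $(x^j/t-\ph^j)$ into $\tilde\Lambda$.

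The delicate point is maintaining the bound $|J_1|+\cdots+|J_k|\le |I|-k+1$. The worrying case is when $\Zb$ hits a coefficient $\tilde\Lambda$ that depends on $\ph$ and thereby spawns a new factor $Z^L(\Theta^{i_{k+1}})$ with $|L|=1$: this produces a $(k{+}1)$-factor term of total degree $\le d+1$, which exceeds the allowed $(n+1)-(k+1)+1 = n-k+1$ by one whenever the parent term was tight, i.e.\@ $d=n-k+1$. I plan to resolve this by establishing an auxiliary structural invariant: \emph{every second-type term saturating the bound has $\tilde\Lambda$ independent of $\ph$}. This holds at $|I|=1$ by inspection of the base case, where the tight contributions $\Omega_{mn}(\Theta^k)$, $B_i(\Theta^k)$, $S(\Theta^k)$ have constant coefficients, and propagates because the only mechanism producing a new tight term at order $n+1$ is mechanism (iii) above, which leaves $\tilde\Lambda$ unchanged; the remaining mechanisms strictly relax the bound. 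Since a $\ph$-independent $\tilde\Lambda$ is annihilated by the $\partial_\ph$-components of $\Zb$, the problematic case never arises from a tight term, and the inductive bound is preserved.
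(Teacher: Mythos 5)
Your overall strategy — induction on $|I|$ with the base case read off from \eqref{eq:Omegap}--\eqref{eq:Sp}, and the observation that $\Zb(Z^{J}(\Theta))=Z^{L}(\Theta)$ with $|L|=|J|+1$ — is exactly the paper's (the paper's own proof is the one-line "straightforward induction as in Proposition \ref{prop:ZbX2}"), and you have correctly isolated the one genuinely delicate point. However, your auxiliary invariant does not propagate by the mechanism you describe. The claim that "the only mechanism producing a new tight term at order $n+1$ is mechanism (iii)" is false: if a parent term at order $n$ has $k$ factors and total degree $d=n-k$ (defect \emph{one}, hence not covered by your invariant) and a $\ph$-dependent coefficient $\tilde\Lambda$, then hitting that $\ph$-dependence with $\Zb$ and spawning the $|L|=1$ factor from \eqref{eq:Omegap}--\eqref{eq:Sp} yields a child with $k+1$ factors and degree $n-k+1=(n+1)-(k+1)+1$, i.e.\@ a \emph{tight} term at order $n+1$, whose coefficient is $\partial_{\ph}\tilde\Lambda$ times a constant — not obviously $\ph$-independent. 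This situation occurs already at the first step: $\overline{B}_i(x^k/t-\ph^k)$ contains the defect-one term $\Theta^i\ph^k$, and applying $\overline{B}_m$ to its $\ph^k$ produces the tight order-two term $-\Theta^iB_m(\Theta^k)$. (Here the coefficient happens to come out constant only because $\ph^k$ is linear in $\ph$; your invariant records nothing that forces this.)

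The repair is to strengthen the invariant to a graded one: for a second-type term with $k$ factors and total degree $d$ at order $|I|$, require that $\tilde\Lambda$ be a polynomial in $\ph$ of degree at most the defect $\delta=(|I|-k+1)-d$ (with the analogous bound $\deg_\ph\Lambda^i_{I,j}\leq|I|$ for the first-type coefficients). One then checks each mechanism: (iii) preserves $(\delta,\deg_\ph)$; differentiating the $(t,x)$-dependence of $\tilde\Lambda$ raises $\delta$ by one and leaves $\deg_\ph$ unchanged; spawning a $|L|=1$ factor lowers both $\delta$ and $\deg_\ph$ by one; spawning a $|L|=0$ factor (whose coefficient in \eqref{eq:Omegap}--\eqref{eq:Sp} has $\ph$-degree at most one) preserves $\delta$ and does not raise $\deg_\ph$; and the pure-$\ph$ part of $\Zb(\ph^l)$, which is quadratic, raises both $\delta$ and $\deg_\ph$ by one. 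Since $\deg_\ph\tilde\Lambda\leq\delta$ forces tight terms ($\delta=0$) to be $\ph$-independent, the dangerous case is excluded and the induction closes. With this modification your argument is complete; as written, the propagation step is incorrect.
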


\begin{proof}
	Note that,
	\begin{align*}
		\overline{\Omega}_{ij} \left( \frac{x^k}{t} - \ph^k \right)
		=
		&
		\left( \frac{x^i}{t} - \ph^i \right) \delta_j^k
		-
		\left( \frac{x^j}{t} - \ph^j \right) \delta_i^k
		-
		\mathring{\Omega}_{ij}^k
		\\
		\overline{B}_{i} \left( \frac{x^k}{t} - \ph^k \right)
		=
		&
		-
		\left( \frac{x^i}{t} - \ph^i \right) \ph^k
		-
		\frac{x^i}{t} \left( \frac{x^j}{t} - \ph^j \right)
		-
		\mathring{B}_i^k
		\\
		\overline{S} \left( \frac{x^k}{t} - \ph^k \right)
		=
		&
		\mathring{S}^k,
	\end{align*}
	and so, inserting the equalities \eqref{eq:Omega0Phi}, \eqref{eq:B0Phi}, \eqref{eq:S0Phi}, the result is clearly true for $\vert I \vert = 1$.  The result for $\vert I \vert \geq 2$ follows from a straightforward induction, as in the proof of Proposition \ref{prop:ZbX2}, using now also the fact that,
	\begin{equation} \label{eq:Zbxt}
		\overline{\Omega}_{ij} \left( \frac{x^k}{t} \right)
		=
		\frac{x^i}{t} \delta_j^k - \frac{x^j}{t} \delta_i^k,
		\qquad
		\overline{B}_{i} \left( \frac{x^k}{t} \right)
		=
		\delta_i^k - \frac{x^i}{t} \frac{x^k}{t},
		\qquad
		\overline{S} \left( \frac{x^k}{t} \right)
		=
		0.
	\end{equation}
\end{proof}

\begin{proposition} \label{prop:Zbph}
	For any multi index $I$, there exist smooth functions $\Lambda$ such that
	\begin{align*}
		\Zb^I \left( \ph^i \right)
		=
		\Lambda^i_{I} \left( \ph \right)
		&
		+
		\sum_{k=1}^{\vert I\vert} \sum_{\vert J_1\vert + \ldots + \vert J_k \vert \leq \vert I \vert -k+1}
		\tilde{\Lambda}_{I,i_1,\ldots,i_k}^{i,J_1,\ldots,J_k} \left( \ph,t,x \right)
		Z^{J_1} \left( \Theta^{i_1} \right) \ldots Z^{J_k} \left( \Theta^{i_k} \right)
		,
	\end{align*}
	for $i=1,2,3$, where
	\[
		\tilde{\Lambda}_{I,i_1,\ldots,i_k}^{i,J_1,\ldots,J_k} \left( \ph,t,x \right)
		=
		\Lambda_{I,i_1,\ldots,i_k}^{i,J_1,\ldots,J_k} \left( \ph, \frac{x}{t-t_0}, \frac{t}{t-t_0} \right),
	\]
	for some smooth functions $\Lambda_{I,i_1,\ldots,i_k}^{i,J_1,\ldots,J_k}$.
\end{proposition}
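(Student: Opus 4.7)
The plan is to prove the statement by induction on $|I|$, following the strategy already used for Propositions \ref{prop:ZbX2} and \ref{prop:Zbxtph}.

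\emph{Base case.} For $|I|=1$ the identities \eqref{eq:Omegap}, \eqref{eq:Bp}, \eqref{eq:Sp} express each of $\overline{\Omega}_{ij}(\ph^k)$, $\overline{B}_i(\ph^k)$, $\overline{S}(\ph^k)$ as a polynomial in $\ph$ (forming the $\Lambda^i_I(\ph)$-part) plus a sum of terms of the schematic form $\tilde\Lambda(\ph,t,x)\cdot Z^{L}(\Theta^{i_1})$ with $|L|\leq 1$, matching the claimed form with $k=1$ and $|J_1|\leq 1=|I|-k+1$. The spacetime dependence of the coefficients enters only through the ratios $\tfrac{x^i}{t-t_0}$ and $\tfrac{t}{t-t_0}$, as required.

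\emph{Inductive step.} Write $\Zb=Z+\check Z^m\partial_{\ph^m}$, where $Z$ denotes the corresponding spacetime vector field $\Omega_{ij},B_i,$ or $S$ and $\check Z^m=\Zb(\ph^m)$ is read off from \eqref{eq:Omegap}, \eqref{eq:Bp}, \eqref{eq:Sp}. Applying $\Zb$ term by term to the formula at level $|I|$ and using the Leibniz rule produces three kinds of contributions: (a) $Z$ hits the coefficient $\tilde\Lambda$ through its spacetime variables, which by the chain rule and the identities for $Z$ acting on $\tfrac{x^k}{t-t_0}$ and $\tfrac{t}{t-t_0}$ recorded in the proof of Proposition \ref{prop:ZbX2} yields a new coefficient still of the required form, with $k$ and $\sum|J_l|$ unchanged; (b) $\Zb$ hits a factor $Z^{J_l}(\Theta^{i_l})$, and since $\Theta^{i_l}$ is $\ph$-independent this produces $Z^{L}(\Theta^{i_l})$ with $|L|=|J_l|+1$, so that $k$ stays and $\sum|J_l|$ increases by one; (c) $\check Z^m\partial_{\ph^m}$ hits $\tilde\Lambda$, and expanding $\check Z^m$ as in the base case decomposes this further into a $\ph$-polynomial contribution (no change in $k$ or $\sum|J_l|$) and contributions involving one new factor $Z^L(\Theta^l)$ with $|L|\leq 1$.

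\emph{Main obstacle.} The delicate case is (c) with $|L|=1$, where $k\to k+1$ and $\sum|J_l|\to\sum|J_l|+1$; this naively overshoots the bound $\sum|J_l|\leq|I|-k+1$ when one passes from $|I|$ to $|I|+1$. The resolution is to strengthen the induction hypothesis with the additional claim that the degree of $\tilde\Lambda$ in $\ph$ satisfies
\[
\deg_{\ph}\tilde\Lambda+\sum_{l=1}^k|J_l|+k\leq|I|+1.
\]
This strengthened invariant is preserved by each of (a), (b), (c): indeed, $\partial_{\ph^m}$ lowers the $\ph$-degree by one, while $\check Z^m$ itself has $\ph$-degree at most one, so every application of $\Zb$ raises the quantity on the left by at most one. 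Consequently, whenever the existing term is at the extremal weight $\sum|J_l|+k=|I|+1$, the coefficient $\tilde\Lambda$ must be $\ph$-independent, so $\partial_{\ph^m}\tilde\Lambda=0$ and the potentially-offending contribution in sub-case (c) with $|L|=1$ vanishes. This closes the induction.
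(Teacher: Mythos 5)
Your proof is correct and follows the same route as the paper: the paper's own proof is just two sentences, citing the identities \eqref{eq:Omegap}, \eqref{eq:Bp}, \eqref{eq:Sp} for $\vert I\vert=1$ and then ``a straightforward induction argument, as in the proof of Proposition \ref{prop:ZbX2}.'' What you add is worthwhile: the naive induction really does have the overshoot you describe in sub-case (c), since each of $\check{\Omega}_{ij}^m$, $\check{B}_i^m$, $\check{S}^m$ contains a term $Z^L(\Theta)$ with $\vert L\vert=1$, which raises $k$ and $\sum\vert J_l\vert$ simultaneously; your strengthened invariant $\deg_{\ph}\tilde\Lambda+\sum\vert J_l\vert+k\leq\vert I\vert+1$ is exactly the kind of bookkeeping needed to make the paper's ``straightforward induction'' close, and it is consistent with the base case. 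One small inaccuracy in your justification: it is not true that ``$\check Z^m$ itself has $\ph$-degree at most one'' — the polynomial part of $\check{B}_i^m$ is $\delta_i^m-\ph^i\ph^m$, of degree two. The invariant is nonetheless preserved because what matters is that each monomial constituent of $\check Z^m$, with $\ph$-degree $d$, number of $\Theta$-factors $k'\in\{0,1\}$ and total $Z$-derivative count $m'\in\{0,1\}$ on those factors, satisfies $d+k'+m'\leq 2$; combined with the loss of one $\ph$-degree from $\partial_{\ph^m}$, every contribution in case (c) raises the tracked quantity by at most one. With that correction your argument is complete.
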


\begin{proof}
	The result for $\vert I \vert = 1$ clearly follows from the equalities \eqref{eq:Omegap}, \eqref{eq:Bp}, \eqref{eq:Sp}.  The proof for $\vert I \vert \geq 2$ follows from a straightforward induction argument, as in the proof of Proposition \ref{prop:ZbX2}.
\end{proof}

\subsection{Preliminary estimates for repeated vector fields applied to approximations to geodesics}

\begin{proposition} \label{prop:Phi}
	Suppose $t\geq t_0 + 1$, $\vert x \vert \leq c t$ and the bounds \eqref{eq:Gammasuppf} hold.  Then, for $\vert I \vert \leq N$,
	\[
		\left\vert Z^I \Theta(t,x) \right\vert
		\lesssim
		\sum_{\vert J\vert \leq \vert I \vert -1} t \left\vert (Z^J \Gamma)(t,x) \right\vert
		+
		\sum_{\vert J\vert \leq \vert I \vert}
		\frac{1}{t-t_0} \int_{t_0}^t (s'- t_0) \left\vert (Z^J \Gamma) \left(s', s'\frac{x}{t} \right) \right\vert ds',
	\]
	where $Z^I$ is a product of $\vert I \vert$ of the vector fields $\Omega_{ij}, B_i, S$.  Moreover, if $\vert I \vert \leq \left\lfloor \frac{N}{2} \right\rfloor + 2$, then
	\[
		\left\vert Z^I \Theta(t,x) \right\vert \leq \frac{C}{t^a}.
	\]
\end{proposition}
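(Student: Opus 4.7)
\textbf{The plan is} to prove the structural inequality by induction on $|I|$, and then deduce the pointwise estimate by inserting the hypothesis \eqref{eq:Gammasuppf}. The base case $|I|=0$ is immediate, since $\Gh^i(s',y,p)$ is polynomial in $p$ with coefficients linear in the components of $\Gamma(s',y)$, so $|\Gh^i(s',s'x/t,x/t)|\lesssim |\Gamma(s',s'x/t)|$ given $|x|/t\leq c<1$, yielding the claim with only the $|J|=0$ interior term.

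For the inductive step I would maintain the slightly more flexible intermediate claim that $Z^I\Theta(t,x)$ is a finite sum of \emph{boundary terms} $c(t,x)\,t\,(Z^J\Gamma)(t,x)$ with $|J|\leq |I|-1$, and \emph{interior terms} $\tfrac{c(t,x,s')}{t-t_0}\int_{t_0}^t (s'-t_0)(Z^J\Gamma)(s',s'x/t)\,ds'$ with $|J|\leq |I|$, where the coefficients $c$ are uniformly bounded for $|x|\leq ct$, $t\geq t_0+1$. Differentiating with one further $Z\in\{\Omega_{ij},B_i,S,\partial_\alpha\}$, the factors $Z[1/(t-t_0)]$ and $Z[t]$ produce only bounded multipliers, since $t/(t-t_0)\leq 1+t_0$ and $|x|\leq t$. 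For $Z$ acting under the integral sign: the rotations commute with the evaluation map $(t,x)\mapsto (s', s'x/t, x/t)$, giving $(\Omega_{ij}\Gamma)(s',s'x/t)$ inside (up to bounded terms from $\Lambda(x/t)$); the scaling satisfies $S(s'x^k/t)=0=S(x^k/t)$ by direct computation, and so annihilates the integrand identically, leaving only the boundary contribution $t\Gh^i(t,x,x/t)$ produced when $t\partial_t$ hits the upper limit; translations $\partial_\alpha$ produce ordinary $\partial$-derivatives of $\Gamma$ with bounded coefficients.

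The delicate case is the boost $B_i$: the direct chain-rule computation yields
\[
B_i[\Gh^i(s',s'x/t,x/t)] = s'\partial_{y^i}\Gh^i - (s'x^ix^\ell/t^2)\partial_{y^\ell}\Gh^i + \text{bounded }\Lambda\text{-contributions},
\]
and the task is to express $s'\partial_y\Gh^i(s',s'x/t,x/t)$ in terms of admissible spacetime $Z$-derivatives of $\Gamma$ at the same point, without introducing uncontrolled powers of $s'$. To this end I would invoke the identities
\[
s'\partial_{y^k}\Gamma(s',s'x/t) = (B_k\Gamma)(s',s'x/t) - (s'x^k/t)(\partial_0\Gamma)(s',s'x/t),
\]
\[
s'(\partial_0\Gamma)(s',s'x/t) = (S\Gamma)(s',s'x/t) - (s'x^\ell/t)(\partial_{y^\ell}\Gamma)(s',s'x/t),
\]
and invoke the strict inequality $|x|/t\leq c<1$ to solve this coupled linear system, concluding that $s'|\partial\Gamma|(s',s'x/t)\lesssim \sum_{|J'|=1}|Z^{J'}\Gamma|(s',s'x/t)$. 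The $x^i\partial_t$ component of $B_i$ also contributes the boundary term $x^i\Gh^i(t,x,x/t)$, bounded by $t|\Gamma(t,x)|$. Closing this coupled-system argument is the main technical obstacle.

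For the pointwise bound, apply \eqref{eq:Gammasuppf} -- valid since each $|J|$ appearing in the decomposition satisfies $|J|\leq |I|\leq \lfloor N/2\rfloor+2$ -- to each term. Boundary terms are bounded by $t\cdot\varepsilon/t^{1+a}=\varepsilon/t^a$, while interior terms are bounded by
\[
\frac{1}{t-t_0}\int_{t_0}^t \frac{(s'-t_0)\,\varepsilon}{(s')^{1+a}}\,ds' \leq \frac{\varepsilon}{t-t_0}\int_{t_0}^t (s')^{-a}\,ds' \leq \frac{C\varepsilon}{t^a},
\]
using $(t-t_0)^{-1}t^{1-a}\lesssim t^{-a}$ for $t\geq t_0+1$. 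Summing over the finitely many terms in the decomposition yields the claim.
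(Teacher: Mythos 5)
Your proof is correct and follows essentially the same route as the paper: induction on $\vert I\vert$, with the whole content residing in the chain-rule identities for how $\Omega_{ij}$, $B_i$, $S$ act on $\Gamma(s',s'x/t)$, followed by insertion of \eqref{eq:Gammasuppf}. The ``main technical obstacle'' you flag in the boost case in fact closes immediately: the combination $s'\partial_{y^i}\Gamma - (s'x^ix^\ell/t^2)\partial_{y^\ell}\Gamma$ you need is precisely the left-hand side of your linear system, i.e.\ it equals $(B_i\Gamma)(s',s'x/t)-\tfrac{x^i}{t}(S\Gamma)(s',s'x/t)$ with no inversion required — this is exactly the identity the paper uses.
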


\begin{proof}
	Note that,
	\[
		Z \left( \Theta^i(t,x) \right)
		=
		- \frac{Z(t)}{t-t_0} \Theta^i(t,x)
		+
		Z(t) \Gh^i \left(t,x,\frac{x}{t} \right)
		+
		\frac{1}{t-t_0} \int_{t_0}^t (s'-t_0) Z \left( \Gh^i \left( s', s'\frac{x}{t}, \frac{x}{t} \right) \right) ds',
	\]
	and that
	\[
		\Omega_{ij} \left( \Gamma^{\mu}_{\alpha \beta} \left( s, s\frac{x}{t} \right) \right)
		=
		s \frac{x^i}{t} \left( \partial_{x^j} \Gamma^{\mu}_{\alpha \beta} \right) \left( s, s\frac{x}{t} \right)
		-
		s \frac{x^j}{t} \left( \partial_{x^i} \Gamma^{\mu}_{\alpha \beta} \right) \left( s, s\frac{x}{t} \right)
		=
		\left( \Omega_{ij} \Gamma^{\mu}_{\alpha \beta} \right) \left( s, s\frac{x}{t} \right),
	\]
	\[
		B_i \left( \Gamma^{\mu}_{\alpha \beta} \left( s, s\frac{x}{t} \right) \right)
		=
		s \left( \partial_{x^i} \Gamma^{\mu}_{\alpha \beta} \right) \left( s, s\frac{x}{t} \right)
		-
		s \frac{x^i}{t} \frac{x^k}{t} \left( \partial_{x^k} \Gamma^{\mu}_{\alpha \beta} \right) \left( s, s\frac{x}{t} \right)
		=
		\left( B_{i} \Gamma^{\mu}_{\alpha \beta} \right) \left( s, s\frac{x}{t} \right)
		-
		\frac{x^i}{t} \left( S \Gamma^{\mu}_{\alpha \beta} \right) \left( s, s\frac{x}{t} \right),
	\]
	and
	\[
		S \left( \Gamma^{\mu}_{\alpha \beta} \left( s, s\frac{x}{t} \right) \right)
		=
		0.
	\]
	Using the equalities \eqref{eq:Zbxt}, the result for $\vert I \vert = 1$ clearly follows from the $L^{\infty}$ bounds \eqref{eq:Gammasuppf} for $\Gamma^{\mu}_{\alpha \beta}$.  The proof for $\vert I \vert \geq 2$ is a straightforward induction.
\end{proof}

\begin{proposition} \label{prop:Zbmisc}
	Suppose $t\geq t_0 + 1$, $\vert x \vert \leq c t$, $(t,x,\ph) \in \supp(f)$ and the bounds \eqref{eq:Gammasuppf} hold, then, for $\vert I \vert \leq N$,
	\begin{align}
		\begin{split}
		\left\vert \Zb^I \left( X_2(t_0)^i \right) \right\vert
		\lesssim & \ 1
		+\sum_{\vert J\vert \leq \vert I \vert -1} t \left\vert (Z^J \Gamma)(t,x) \right\vert
		+\sum_{\vert J\vert \leq \vert I \vert}
		\frac{1}{t-t_0} \int_{t_0}^t (s'- t_0) \left\vert (Z^J \Gamma) \left(s', s'\frac{x}{t} \right) \right\vert ds'
		\end{split} \label{eq:ZbX2}\\
		\begin{split}
		\left\vert \Zb^I \left( \frac{x^i}{t} - \ph^i \right) \right\vert
		\lesssim &\ \frac{1}{t^a}
		+\sum_{\vert J\vert \leq \vert I \vert -1} t \left\vert (Z^J \Gamma)(t,x) \right\vert
		+\sum_{\vert J\vert \leq \vert I \vert}
		\frac{1}{t-t_0} \int_{t_0}^t (s'- t_0) \left\vert (Z^J \Gamma) \left(s', s'\frac{x}{t} \right) \right\vert ds'
		\end{split} \label{eq:Zbxtph}\\
		\begin{split}
		\left\vert \Zb^I \left( \ph^i \right) \right\vert
		\lesssim &\ 1
		+\sum_{\vert J\vert \leq \vert I \vert -1} t \left\vert (Z^J \Gamma)(t,x) \right\vert
		+\sum_{\vert J\vert \leq \vert I \vert}
		\frac{1}{t-t_0} \int_{t_0}^t (s'- t_0) \left\vert (Z^J \Gamma) \left(s', s'\frac{x}{t} \right) \right\vert ds',
		\end{split} \label{eq:Zbph}
	\end{align}
	and
	\begin{align}
		\left\vert \Zb^I \left( \int_{t_0}^s (s' - t_0) \Gh^i \left( s', s'\frac{x}{t}, \frac{x}{t} \right) ds' \right) \right\vert
		\lesssim &\
		\sum_{\vert J\vert \leq \vert I \vert}
		\int_{t_0}^s (s'- t_0) \left\vert (Z^J \Gamma) \left(s', s'\frac{x}{t} \right) \right\vert ds'
		\label{eq:Zbt0s}\\
		\left\vert \Zb^I \left( \int_{s}^t \Gh^i \left( s', s'\frac{x}{t}, \frac{x}{t} \right) ds' \right) \right\vert
		\lesssim &\
		\sum_{\vert J\vert \leq \vert I \vert -1} t \left\vert (Z^J\Gamma)(t,x)\right\vert
		+\sum_{\vert J\vert \leq \vert I \vert}
		\int_{s}^t \left\vert (Z^J \Gamma) \left(s', s'\frac{x}{t} \right) \right\vert ds',\label{eq:Zbst}
	\end{align}
	for $i=1,2,3$.  In particular, for $\vert I \vert \leq \left\lfloor \frac{N}{2} \right\rfloor + 2$,
	\[
		\left\vert \Zb^I \left( X_2(t_0)^i \right) \right\vert
		+
		\left\vert \Zb^I \left( \ph^i \right) \right\vert
		\leq
		C,
		\qquad
		\left\vert \Zb^I \left( \frac{x^i}{t} - \ph^i \right) \right\vert
		\leq
		\frac{C}{t^a},
	\]
	for $i=1,2,3$.
\end{proposition}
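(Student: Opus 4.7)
The plan is to derive the five estimates \eqref{eq:ZbX2}--\eqref{eq:Zbst} by combining the explicit structural expansions of Propositions \ref{prop:ZbX2}, \ref{prop:Zbxtph}, \ref{prop:Zbph} with the pointwise control of $\Theta$ given by Proposition \ref{prop:Phi}. The ``in particular'' statement will then follow by inserting the assumption \eqref{eq:Gammasuppf} into the right hand side.

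First I would check that all prefactors appearing in the expansions are uniformly bounded on $\supp(f) \cap \{t \geq t_0 + 1\}$. Indeed, by \eqref{eq:xtphsuppf} we have $|\ph| \leq c < 1$ and $|x|/(t - t_0) \leq ct/(t-t_0) \leq c(t_0+1)$, while $1 \leq t/(t-t_0) \leq t_0+1$. Since each $\Lambda$ in the three propositions is a smooth function of these arguments, the prefactors are controlled by a constant depending only on $c$ and $t_0$. The ``base'' $k=0$ terms then contribute: $|X_2(t_0)^j \Lambda^i_{I,j}(\ph) + \Lambda^i_I(\ph)| \lesssim 1$ using Corollary \ref{cor:X2bound} at $s = t_0$; $|(x^j/t - \ph^j)\Lambda^i_{I,j}(x/t, \ph)| \lesssim t^{-a}$ using Proposition \ref{prop:sec22}; and $|\Lambda^i_I(\ph)| \lesssim 1$, yielding the leading constants in \eqref{eq:ZbX2}, \eqref{eq:Zbxtph}, \eqref{eq:Zbph} respectively.

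For the $k \geq 1$ terms, products of the form $Z^{J_1}(\Theta^{i_1}) \cdots Z^{J_k}(\Theta^{i_k})$ with $|J_1|+\dots+|J_k| \leq |I|$ appear. Ordering so that $|J_1| \geq |J_2| \geq \dots \geq |J_k|$, one has $2|J_2| \leq |J_1| + |J_2| \leq |I| \leq N$, hence $|J_l| \leq \lfloor N/2 \rfloor \leq \lfloor N/2 \rfloor + 2$ for every $l \geq 2$. Proposition \ref{prop:Phi}'s sharp estimate therefore controls $|Z^{J_l} \Theta| \lesssim t^{-a}$ for $l \geq 2$, contributing a factor $t^{-a(k-1)} \leq 1$, while Proposition \ref{prop:Phi}'s general formula applied to the top-order factor with $|J_1| \leq |I| - k + 1 \leq |I|$ produces exactly the combination
\[
	\sum_{|J|\leq |I|-1} t |(Z^J \Gamma)(t,x)| + \sum_{|J|\leq |I|}\frac{1}{t-t_0}\int_{t_0}^t(s'-t_0)|(Z^J\Gamma)(s', s'x/t)|\,ds'
\]
appearing on the right hand side. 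Summing the contributions over $k$ completes the proofs of \eqref{eq:ZbX2}--\eqref{eq:Zbph}. The ``in particular'' statement for $|I| \leq \lfloor N/2\rfloor + 2$ then follows immediately by inserting the bound $|Z^J \Gamma| \lesssim \varepsilon\, t^{-1-a}$ from \eqref{eq:Gammasuppf}, which makes both sums on the right hand side of size $\varepsilon\, t^{-a}$.

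The integral estimates \eqref{eq:Zbt0s} and \eqref{eq:Zbst} are simpler, since the integrands depend only on $(t,x)$; the $\ph$-components of the $\Zb$ vector fields drop out and $\Zb^I$ reduces to $Z^I$ acting in $(t,x)$. One then commutes $Z^I$ through the integral. For \eqref{eq:Zbt0s} the limits do not depend on $(t,x)$, so no boundary contribution appears, and the derivation reduces to a chain-rule/induction computation on $Z(\Gh^i(s', s'x/t, x/t))$ analogous to the one used in the proof of Proposition \ref{prop:Phi}, producing only terms of the form $(Z^J \Gamma)(s', s'x/t)$ with bounded prefactors of $x/t$. For \eqref{eq:Zbst} the upper limit is $t$, so applying $S$ or $B_i$ yields a boundary contribution $t\,\Gh^i(t,x,x/t)$ or $x^i \Gh^i(t,x,x/t)$, both absorbed into the $t|(Z^J\Gamma)(t,x)|$ sum using $|x|\leq ct$; iterating the commutation produces the stated expression. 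The main bookkeeping obstacle is in step three, ensuring that the pigeonhole argument distributing derivatives among the $k$ factors of $\Theta$ correctly places all but one below the threshold $\lfloor N/2\rfloor + 2$, but this is guaranteed by the assumption $N \geq 11$ together with the ordering argument above.
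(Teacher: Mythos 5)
Your proposal is correct and follows essentially the same route as the paper: the paper's proof simply cites Propositions \ref{prop:ZbX2}, \ref{prop:Zbxtph}, \ref{prop:Zbph} together with Proposition \ref{prop:Phi}, Proposition \ref{prop:sec22} and Corollary \ref{cor:X2bound}, and handles \eqref{eq:Zbt0s}--\eqref{eq:Zbst} "as in the proof of Proposition \ref{prop:Phi}", which is exactly what you spell out (including the pigeonhole placing all but one $\Theta$-factor at low order and the boundary term from the $t$-dependent upper limit in \eqref{eq:Zbst}).
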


\begin{proof}
	The bounds \eqref{eq:ZbX2}, \eqref{eq:Zbxtph} and \eqref{eq:Zbph} follow directly from Propositions \ref{prop:ZbX2}, \ref{prop:Zbxtph}, \ref{prop:Zbph} and Proposition \ref{prop:Phi}, using also Proposition \ref{prop:sec22} and Corollary \ref{cor:X2bound}.  The bounds \eqref{eq:Zbt0s} and \eqref{eq:Zbst} can be shown as in the proof of Proposition \ref{prop:Phi}.
\end{proof}

\begin{corollary} \label{cor:ZbX2}
	Suppose $t\geq t_0 + 1$, $\vert x \vert \leq c t$, $(t,x,\ph) \in \supp(f)$ and the bounds \eqref{eq:Gammasuppf} hold, then
	\begin{align} \label{eq:ZbX2bound}
		\left\vert \Zb^I \left( \frac{X_2^i(s,t,x,\ph)}{s} \right) \right\vert
		\lesssim
		1
		+
		\sum_{\vert J \vert \leq \vert I \vert -1} \!\!\!\!\! t \left\vert (Z^J \Gamma)(t,x) \right\vert
		+
		\sum_{\vert J \vert \leq \vert I \vert }
		\int_{t_0}^t \left\vert (Z^J \Gamma) \left( s', s' \frac{x}{t} \right) \right\vert \frac{s'}{s'+s} \, ds',
	\end{align}
	and
	\begin{align}
		\left\vert \Zb^I \left( \frac{X_2^i(s,t,x,\ph)}{s} - \frac{x^i}{t} \right) \right\vert
		\lesssim
		\frac{1}{s^a}
		+\!\!\!\!\!
		\sum_{\vert J \vert \leq \vert I \vert -1} \!\!\!\!\! t \left\vert (Z^J \Gamma)(t,x) \right\vert
		+
		\sum_{\vert J \vert \leq \vert I \vert }
		\int_{t_0}^t \left\vert (Z^J \Gamma) \left( s', s' \frac{x}{t} \right) \right\vert \frac{s'}{s'+s}\, ds',
		\label{eq:X2minusX1highder}
	\end{align}
	for all $t_0\leq s \leq t$, $i=1,2,3$.  In particular,  for $\vert I \vert \leq \left\lfloor \frac{N}{2} \right\rfloor + 2$,
	\begin{equation}\label{eq:X2minusX1}
		\left\vert \Zb^I \left( \frac{X_2^i(s,t,x,\ph)}{s} - \frac{x^i}{t} \right) \right\vert
		\lesssim
		\frac{1}{s^a}.
	\end{equation}
	Moreover
	\begin{equation}\label{eq:P2minusP1highder}
		\left\vert \Zb^I \Big( \frac{x^i}{t} - \widehat{P}_2^i(s,t,x,\ph) \Big) \right\vert
		\lesssim
		\frac{1}{t^a}
		+
		\!\!\!\!\!\sum_{\vert J\vert \leq \vert I \vert -1}\!\!\!\!\! t \left\vert (Z^J \Gamma)(t,x) \right\vert
		+
		\!\!\sum_{\vert J\vert \leq \vert I \vert}
		 \int_{t_0}^t \left\vert (Z^J \Gamma) \big(s', s'\frac{x}{t} \big) \right\vert \frac{s'}{s'+s} ds'
\end{equation}
In particular,  for $\vert I \vert \leq \left\lfloor \frac{N}{2} \right\rfloor + 2$,
\begin{equation}\label{eq:P2minusP1}
\left\vert \Zb^I \left( \widehat{P}_2^i(s,t,x,\ph) - \frac{x^i}{t} \right) \right\vert
		\lesssim
		\frac{1}{s^a}.
\end{equation}
\end{corollary}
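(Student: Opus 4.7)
The plan is to derive explicit algebraic identities expressing $X_2^i(s)/s - x^i/t$ and $\widehat{P}_2^i(s) - x^i/t$ in terms of objects already estimated in Proposition \ref{prop:Zbmisc}, and then apply $\Zb^I$ and combine. Starting from \eqref{eq:X2P2def}, a direct calculation (subtracting the formulas at times $s$ and $t_0$) gives
\begin{equation*}
\frac{X_2^i(s,t,x,\ph)}{s} - \frac{x^i}{t} = \frac{X_2^i(t_0,t,x,\ph)}{s} - \Big(\frac{x^i}{t} - \ph^i\Big) - \frac{t_0\ph^i}{s} - \frac{1}{s}\int_{t_0}^{s}(s'-t_0)\Gh^i\, ds' - \frac{s-t_0}{s}\int_{s}^{t}\Gh^i\, ds',
\end{equation*}
with the arguments of $\Gh^i$ as in \eqref{eq:X2P2def}. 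Applying $\Zb^I$, each of the five terms on the right is estimated by Proposition \ref{prop:Zbmisc}, using respectively \eqref{eq:ZbX2}, \eqref{eq:Zbxtph}, \eqref{eq:Zbph}, \eqref{eq:Zbt0s}, and \eqref{eq:Zbst}.

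Next I would verify that the two distinct integral weights produced in this way, namely $(s'-t_0)/s$ on $[t_0,s]$ and $(s-t_0)/s$ on $[s,t]$, are both dominated by the uniform weight $s'/(s+s')$ appearing on the right-hand side of \eqref{eq:ZbX2bound} and \eqref{eq:X2minusX1highder}. For $s'\leq s$ one has $s+s'\sim s$ so $(s'-t_0)/s \lesssim s'/(s+s')$, while for $s'\geq s$ one has $s+s'\sim s'$, so $s'/(s+s') \sim 1$ and $(s-t_0)/s \leq 1 \lesssim s'/(s+s')$. An analogous comparison shows that the weight $(s'-t_0)/(t-t_0)$ arising inside \eqref{eq:ZbX2}, \eqref{eq:Zbxtph}, and \eqref{eq:Zbph} is similarly bounded by $s'/(s+s')$. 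The non-integral contributions are controlled as follows: $\Zb^I(X_2(t_0)^i)/s$ and $t_0\Zb^I(\ph^i)/s$ are $O(1/s)$ and $\Zb^I(x^i/t - \ph^i)$ is $O(1/t^a)$; since $1/s \leq 1$ and both $1/s \leq 1/s^a$ and $1/t^a \leq 1/s^a$ (using $a\leq 1$ and $s\leq t$), these respectively produce the constant $1$ in \eqref{eq:ZbX2bound} and the $1/s^a$ in \eqref{eq:X2minusX1highder}. The bound \eqref{eq:ZbX2bound} for $\Zb^I(X_2^i(s)/s)$ itself differs from \eqref{eq:X2minusX1highder} only by $\Zb^I(x^i/t)$, which is bounded by iterating \eqref{eq:Zbxt}.

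For \eqref{eq:P2minusP1highder}, the required identity is immediate from \eqref{eq:X2P2def}:
\[
\frac{x^i}{t} - \widehat{P}_2^i(s,t,x,\ph) = \Big(\frac{x^i}{t} - \ph^i\Big) - \int_{s}^{t} \Gh^i\Big(s', s'\frac{x}{t}, \frac{x}{t}\Big)\, ds'.
\]
Applying $\Zb^I$ and invoking \eqref{eq:Zbxtph} and \eqref{eq:Zbst}, together with the same weight comparison as above, yields the claimed bound with its $1/t^a$ leading decay.

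Finally, for the low-order bounds \eqref{eq:X2minusX1} and \eqref{eq:P2minusP1} (with $\vert I\vert \leq \lfloor N/2\rfloor + 2$), one substitutes the pointwise hypothesis \eqref{eq:Gammasuppf} into the integrals on the right-hand sides of \eqref{eq:X2minusX1highder} and \eqref{eq:P2minusP1highder}. Splitting $[t_0,t]$ at $s$, the integrand $|Z^J\Gamma|\cdot s'/(s+s')$ is bounded by $C\varepsilon (s')^{-a}/s$ on $[t_0,s]$ and by $C\varepsilon (s')^{-1-a}$ on $[s,t]$, each of which integrates to $\lesssim \varepsilon/s^a$. The only delicate point in the whole argument is the verification of the uniform weight $s'/(s+s')$; the remainder is bookkeeping with the estimates of Proposition \ref{prop:Zbmisc}.
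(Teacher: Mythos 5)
Your proof is correct and follows essentially the same route as the paper's: the same decomposition of $X_2^i(s)$ through $X_2^i(t_0,t,x,\ph)$, the same term-by-term appeal to Proposition \ref{prop:Zbmisc}, and the same comparison of the weights $(s'-t_0)/s$, $(s-t_0)/s$ and $(s'-t_0)/(t-t_0)$ against $s'/(s'+s)$ on the two ranges $s'\leq s$ and $s'\geq s$. (The two integral terms in your displayed identity for $X_2^i(s)/s - x^i/t$ should carry $+$ signs rather than $-$, but this is immaterial since all terms are estimated in absolute value.)
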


\begin{proof}
	After writing
	\begin{align*}
		X_2^i(s,t,x,\ph)
		=
		&
		\
		X_2^i(t_0,t,x,\ph)
		+
		\left(
		X_2^i(s,t,x,\ph)
		-
		X_2^i(t_0,t,x,\ph)
		\right)
		\\
		=
		&
		\
		X_2^i(t_0,t,x,\ph)
		+
		(s-t_0) \ph^i
		+
		(s-t_0) \int_s^t \Gh^i \left( s', s' \frac{x}{t}, \frac{x}{t} \right) ds'
		+
		\int_{t_0}^s (s' - t_0) \Gh^i \left( s', s' \frac{x}{t}, \frac{x}{t} \right) ds',
	\end{align*}
	Proposition \ref{prop:Zbmisc} implies that
	\begin{align*}
		\left\vert
		\Zb^I \left( X_2^i(s,t,x,\ph) \right)
		\right\vert
		\leq
		\
		&
		C s
		\Bigg[
		1
		+
		\sum_{\vert J \vert \leq \vert I \vert -1} t \left\vert (Z^J \Gamma)(t,x) \right\vert
		+
		\sum_{\vert J \vert \leq \vert I \vert }
		\frac{1}{t-t_0}
		\int_{t_0}^t (s'-t_0) \left\vert (Z^J \Gamma) \left( s', s' \frac{x}{t} \right) \right\vert ds'
		\Bigg]
		\\
		&
		+
		Cs
		\sum_{\vert J \vert \leq \vert I \vert }
		\int_{s}^t \left\vert (Z^J \Gamma) \left( s', s' \frac{x}{t} \right) \right\vert ds'
		+
		C
		\sum_{\vert J \vert \leq \vert I \vert }
		\int_{t_0}^s (s'-t_0) \left\vert (Z^J \Gamma) \left( s', s' \frac{x}{t} \right) \right\vert ds'.
	\end{align*}
	The proof of the bound \eqref{eq:ZbX2bound} follows after dividing by $s$ and using the fact that
	\[
		\int_{t_0}^t \frac{s'-t_0}{t-t_0} \left\vert (Z^J \Gamma) \left( s', s' \frac{x}{t} \right) \right\vert ds'
		\lesssim
		\int_s^t \left\vert (Z^J \Gamma) \left( s', s' \frac{x}{t} \right) \right\vert ds'
		+
		\int_{t_0}^s \frac{s'}{s} \left\vert (Z^J \Gamma) \left( s', s' \frac{x}{t} \right) \right\vert ds',
	\]
	and $1 \lesssim \frac{s'}{s'+s} \lesssim 1$ if $s\leq s'$, and $\frac{s'}{s} \lesssim \frac{s'}{s'+s} \lesssim \frac{s'}{s}$.  The proof of \eqref{eq:X2minusX1highder} follows similarly by writing
	\begin{align*}
		\frac{X_2^i(s,t,x,\ph)}{s} - \frac{x^i}{t}
		=
		\
		&
		\left( \ph^i - \frac{x^i}{t} \right)
		-
		\frac{t_0}{s} \ph^i
		+
		\frac{X_2^i(t_0,t,x,\ph)}{s}
		\\
		&
		+
		\frac{(s-t_0)}{s} \int_s^t \Gh^i \left( s', s' \frac{x}{t}, \frac{x}{t} \right) ds'
		+
		\frac{1}{s} \int_{t_0}^s (s' - t_0) \Gh^i \left( s', s' \frac{x}{t}, \frac{x}{t} \right) ds',
	\end{align*}
	and using the bound \eqref{eq:Zbxtph}.  The bound \eqref{eq:P2minusP1highder} similarly follows from Proposition \ref{prop:Zbmisc} after writing,
	\[
		\frac{x^i}{t} - \Ph^i_2(s,t,x,\ph)
		=
		\frac{x^i}{t} - \ph^i - \int_s^t \Gh^i \left( s', s'\frac{x}{t}, \frac{x}{t} \right) ds'.
	\]
	The lower order estimates \eqref{eq:X2minusX1} and \eqref{eq:P2minusP1} follow from rewriting,
	\[
		\int_{t_0}^t \left\vert (Z^J \Gamma) \big(s', s'\frac{x}{t} \big) \right\vert \frac{s'}{s'+s} ds'
		\lesssim
		\int_s^t \left\vert (Z^J \Gamma) \left( s', s' \frac{x}{t} \right) \right\vert ds'
		+
		\int_{t_0}^s \frac{s'}{s} \left\vert (Z^J \Gamma) \left( s', s' \frac{x}{t} \right) \right\vert ds',
	\]
	and using the pointwise bounds \eqref{eq:Gammasuppf} for lower order derivatives of $\Gamma$.
\end{proof}

\subsection{Schematic notation and repeated vector fields applied to differences}

To make long expressions more concise, the $s$ dependence of many quantities is surpressed throughout this section.

The proofs of Proposition \ref{prop:ZXbarPbarlo} and Proposition \ref{prop:ZXbarPbarho} below follow from applying vector fields to the system \eqref{eq:Xbar}--\eqref{eq:Pbar}.  It is therefore necessary to estimate vector fields applied to the difference
\begin{equation}\label{eq:GammaX1P1XPdifference}
	\widehat{\Gamma}(X,\widehat{P})-\widehat{\Gamma}(X_1,\widehat{P}_1)
	=
	\Gamma(X)\cdot \Lambda(\widehat{P})-\Gamma(X_1)\cdot \Lambda(\widehat{P}_1)
	=
	(\Gamma(X)-\Gamma(X_1))\cdot \Lambda(\widehat{P})+\Gamma(X_1)\cdot(\Lambda(\widehat{P})-\Lambda(\widehat{P}_1)),
\end{equation}
which appears on the right hand side of equation \eqref{eq:Pbar}, where $X_1(s)=sx/t$ and $\widehat{P}_1= \frac{dX_1}{ds} = x/t$, and $\Lambda$ is defined in \eqref{eq:GammaH}.  The following result is straightforward to show.
\begin{lemma}\label{lem:chainruleidentity}
	Given $Y_1,\dots,Y_k,Y\in \mathbb{R}^3$ and $F:\mathbb{R}^3\to \mathbb{R}$, let $Y_1\cdots Y_k \cdot (\partial^k F)(Y)$, denote the sum of $Y_1^{j_1}\cdots Y_k^{j_k} \cdot (\partial_{j_1}
\cdots \partial_{j_k}F)(Y)$ over all components $1\leq j_i\leq n$ for  $i=1,\dots, k$.
	We have
	\begin{multline} \label{eq:ZbJFXFX1}
		\Zb^J (F(X)-F(X_1)) =\!\!\!\!\sum_{\substack{ k<|J|/2,\, J_1+\dots+J_k=J,\,
		1\leq |J_1|\leq \dots \leq |J_k|}\!\!\!\!\!\!\!\!\!\!\!\!\!\!\!\!\!\!\!\!\!\!}
		c_{J_1\dots J_k}\,\,\,\, \Zb^{J_1} X_1\cdots \Zb^{J_{k}}X_1\cdot\big((\partial^k F)(X)-(\partial^k F)(X_1)\big)
		\\
		+
		\!\!\!\!\!\!\!\sum_{\substack{k<|J|/2,\, J_1+\dots+J_k=J,\,
		1\leq |J_1|\leq \dots \leq |J_k|\!\!\!\!\!\!\!\!\!\!\!\!\! }\!\!\!\!\!\!\!}
		c_{J_1\dots J_k}\sum_{1\leq \ell\leq k} \Zb^{J_1} X\cdots \Zb^{J_{\ell-1}}X\,\,  \Zb^{J_\ell}(X-X_1)
		\,\, \Zb^{J_{\ell+1}} X_1\cdots \Zb^{J_{k}}X_1\cdot(\partial^k F)(X),
		\\
+\!\!\!\!\sum_{\substack{k\geq |J|/2,\, J_1+\dots+J_k=J,\,
		1\leq |J_1|\leq \dots \leq |J_k|\!\!\!\!\!\!\!\!\!\!\!\!\!} \!\!\!\!\!\!\!\!\!\!\!}
		c^\prime_{J_1\dots J_k}\,\,\,\, \Zb^{J_1} X\cdots \Zb^{J_{k}}X\cdot\big((\partial^k F)(X)-(\partial^k F)(X_1)\big)
		\\
		+
		\!\!\!\!\!\!\!\sum_{\substack{k\geq |J|/2,\, J_1+\dots+J_k=J,\,
		1\leq |J_1|\leq \dots \leq |J_k|\!\!\!\!\!\!\!\!\!\!\!\!\! }\!\!\!}
		c^\prime_{J_1\dots J_k}\sum_{1\leq \ell\leq k} \Zb^{J_1} X_1\cdots \Zb^{J_{\ell-1}}X_1\,\,  \Zb^{J_\ell}(X-X_1)
		\,\, \Zb^{J_{\ell+1}} X\cdots \Zb^{J_{k}}X\cdot(\partial^k F)(X_1)
	\end{multline}
	where the sums are over all possible partitions of the multi index $J$ into
	into nonempty sub indices $J_1$ to $J_k$.
\end{lemma}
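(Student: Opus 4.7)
The plan is to first establish a Faà di Bruno-style formula for the iterated vector fields $\Zb^J$ applied to a composition $F\circ Y$, and then convert the difference $F(X)-F(X_1)$ into the claimed form via elementary algebraic telescoping. The argument is essentially combinatorial.

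First I would prove by induction on $|J|$ that, for any smooth $\mathbb{R}^3$-valued function $Y$ of the variables on which the vector fields $\Zb$ act,
$$\Zb^J F(Y) = \sum_{k=1}^{|J|}\sum_{\substack{J_1+\cdots+J_k=J\\ 1\leq |J_1|\leq \cdots\leq |J_k|}} c_{J_1\cdots J_k}\, \Zb^{J_1}Y\cdots \Zb^{J_k}Y\cdot (\partial^k F)(Y),$$
where the coefficients $c_{J_1\cdots J_k}$ are positive integers recording how many orderings of a given sub-partition collapse to the prescribed non-decreasing one. The base case $|J|=1$ is the plain chain rule $\Zb F(Y)=(\Zb Y^j)(\partial_j F)(Y)$, and the inductive step follows from one application of $\Zb$ and the Leibniz rule, after which like terms are collected.

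Next, applying this identity with $Y=X$ and with $Y=X_1$ yields two expansions indexed by the same set of partitions and with the same coefficients. Their term-by-term difference $\Zb^J(F(X)-F(X_1))$ reduces, for each fixed partition, to an expression of the form $A f(X) - B f(X_1)$ with $A=\Zb^{J_1}X\cdots \Zb^{J_k}X$, $B=\Zb^{J_1}X_1\cdots \Zb^{J_k}X_1$, and $f=\partial^k F$. Each such expression admits the two algebraic decompositions
\begin{align*}
A f(X) - B f(X_1) &= B\bigl(f(X)-f(X_1)\bigr)+(A-B)f(X),\\
A f(X) - B f(X_1) &= A\bigl(f(X)-f(X_1)\bigr)+(A-B)f(X_1),
\end{align*}
and the factor $A-B$ itself telescopes in two corresponding ways, obtained by replacing the $\Zb^{J_\ell}X$ factors by $\Zb^{J_\ell}X_1$ either from the right or from the left; these are precisely the two possible sums over $\ell$ appearing in the statement of the lemma.

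To obtain the exact splitting asserted in \eqref{eq:ZbJFXFX1} I would apply the first decomposition together with the right-to-left telescoping on those partitions with $k<|J|/2$, and the second decomposition together with the left-to-right telescoping on those with $k\geq |J|/2$. The four sums that result match \eqref{eq:ZbJFXFX1} verbatim, with $c_{J_1\cdots J_k}$ and $c'_{J_1\cdots J_k}$ being the Faà di Bruno coefficients from the formula above restricted to the two respective ranges of $k$. There is no serious analytic obstacle; the only care required is in tracking these combinatorial coefficients through the case split, which is routine.
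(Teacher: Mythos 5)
Your proposal is correct and follows essentially the same route as the paper: expand $\Zb^J F(Y)$ by the chain/Faà di Bruno rule for $Y=X$ and $Y=X_1$, and for each partition split the resulting difference $A\,f(X)-B\,f(X_1)$ via one of the two algebraic decompositions with a telescoping of $A-B$, choosing which according to whether $k<|J|/2$ or $k\geq|J|/2$. The paper's own proof is just a terser version of this (it performs the one-by-one replacement of $\Zb^{J_i}X$ by $\Zb^{J_i}X_1$ rather than naming the telescoping identity), so no further comment is needed.
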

\begin{proof}
	First one differentiates to get a sum of terms of the form
	\[
		\Zb^{I_1} X \cdots \Zb^{I_k} X \cdot (\partial^k F)(X) -\Zb^{I_1} X_1 \cdots \Zb^{I_k} X_1 \cdot (\partial^k F)(X_1),
	\]
	and then one makes a different decomposition depending on the size of $k$.
	Then one proceeds by organizing them in order so $|I_1|$ is smallest.
	If $k<|J|/2$ one writes
	\begin{multline*}
		\Zb^{I_1} X \cdots \Zb^{I_k} X \cdot (\partial^k F)(X) -\Zb^{I_1} X_1 \cdots \Zb^{I_k} X_1 \cdot (\partial^k F)(X_1)
		\\
		= \Zb^{I_1} X \cdots \Zb^{I_k} X \cdot ((\partial^k F)(X)- (\partial^k F)(X_1))
		+(\Zb^{I_1} X \cdots \Zb^{I_k} X -\Zb^{I_1} X_1 \cdots \Zb^{I_k} X_1) \cdot (\partial^k F)(X_1)
	\end{multline*}
	and replaces them one by one
	\begin{multline*}
		\Zb^{I_1} X \cdots \Zb^{I_k} X = \Zb^{I_1} X\cdots \Zb^{I_k} (X-X_1)+
		\Zb^{I_1} X \cdots \Zb^{I_k} X_1\\
		= \Zb^{I_1} X\cdots \Zb^{I_k} (X-X_1)+\Zb^{I_1} X \cdots \Zb^{I_{k-1}} (X-X_1)\Zb^{I_k} X_1
		+\Zb^{I_1} X \cdots \Zb^{I_{k-1}}X_1 \Zb^{I_k} X_1=...
	\end{multline*}
	This produces the first two sums with $k<|J|/2$. For $k\geq |J|/2$ one simply does the same
	thing but with $X_1$ and $X$ interchanged.
\end{proof}

Note that, for each term in the equality \eqref{eq:ZbJFXFX1}, $|J_i|\leq |J|/2$ if $i\neq k$ and therefore, in the applications of Lemma \ref{lem:chainruleidentity} below, the factors $\Zb^{J_i}\! X$
can be estimated by the their $L^\infty$ norms using induction by previous estimates.  Given that $X_1$ is known the expression \eqref{eq:ZbJFXFX1} can be thought of as linear in the unknown $X-X_1$.
However when we prove $L^2$ estimates for high derivatives we also have to take into account
how $Z^{J_k} X_1$ depends on high derivatives of $\Gamma$.  Either $k\leq |J|/2 $ is small, in which case we can use $L^\infty$ estimates for $\partial^k F$ and $\partial^{k+1} F$,
or $k\geq |J|/2$ is large, and as a result $|J_i|\leq |J|/2$ are small for all $i$, and
we can use $L^\infty$ estimates for all the other factors. Applying Lemma \ref{lem:chainruleidentity} we get the following estimates:

\begin{lemma}\label{lem:ZIHPminusHP1}
	Recall the function $\Lambda$ from \eqref{eq:GammaX1P1XPdifference}.  Suppose $|\Zb^{K} \widehat{P}| \leq C$ for $ |K|\leq |L|/2$. Then
	\begin{align}
		\big|\Zb^L \big(\Lambda(\widehat{P})-\Lambda(\widehat{P}_1)\big)\big|
		&\lesssim \sum_{|M|\leq |L|}
		\big|\Zb^{M} (\widehat{P}-\widehat{P}_1)\big|
		,
		\label{eq:ZbHPh1}
		\\
		\big|\Zb^L \big(\Lambda(\widehat{P}_1)\big)\big|&
		\lesssim 1.
		\label{eq:ZbHPh2}
	\end{align}
\end{lemma}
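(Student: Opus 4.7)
The plan is to derive both estimates by applying Lemma \ref{lem:chainruleidentity} with $F = \Lambda$, $X = \widehat{P}$ and $X_1 = \widehat{P}_1 = x/t$, after recording the necessary $L^\infty$ bounds on the ingredients. First I would observe that, by the explicit expression for $\widehat{\Gamma}$ in \eqref{eq:GammaH}, the entries $\Lambda^{\alpha\beta,\mu}_\gamma(\ph)$ are polynomials of degree at most three in $\ph$; in particular each derivative $\partial^k\Lambda$ is a smooth function whose components are uniformly bounded on any bounded set of $\ph$. Since $\vert\widehat{P}_1\vert = \vert x/t\vert \le c<1$ by the standing assumption of the section, and $\vert\widehat{P}\vert$ is bounded on $\supp f$ by Proposition~\ref{prop:suppf}, the quantities $(\partial^k\Lambda)(\widehat{P})$ and $(\partial^k\Lambda)(\widehat{P}_1)$ are therefore $O(1)$ for all $k$.

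Next I would note two pointwise ingredients. By induction on $\vert J\vert$ using the identities \eqref{eq:Zbxt}, one has $\vert \Zb^J\widehat{P}_1\vert = \vert\Zb^J(x/t)\vert \lesssim 1$ for every multi-index $J$ (since each application of $\overline{\Omega}_{ij}$, $\overline{B}_i$, $\overline{S}$ to a function of $x/t$ alone produces another bounded function of $x/t$). Combined with the hypothesis $\vert \Zb^K\widehat{P}\vert \le C$ for $\vert K\vert\le \vert L\vert/2$, this means that for any partition $J_1+\dots+J_k = L$ arising in Lemma \ref{lem:chainruleidentity} one has $\vert\Zb^{J_i}\widehat{P}\vert \lesssim 1$ and $\vert\Zb^{J_i}\widehat{P}_1\vert \lesssim 1$ for all indices $i\neq k$ (for which $\vert J_i\vert\le\vert L\vert/2$), and $\vert\Zb^{J_k}\widehat{P}_1\vert\lesssim 1$ for the possibly larger index $k$. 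Finally, applying the mean-value theorem componentwise gives
\[
\bigl\vert (\partial^k\Lambda)(\widehat{P}) - (\partial^k\Lambda)(\widehat{P}_1) \bigr\vert
\lesssim \vert \widehat{P} - \widehat{P}_1\vert,
\]
again using that $\partial^{k+1}\Lambda$ is bounded on the relevant range.

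With these bounds in hand, I would plug into the four sums appearing in \eqref{eq:ZbJFXFX1}. In the two ``difference of $F$'' sums, the mean-value bound supplies a factor $\vert\widehat{P}-\widehat{P}_1\vert$, while every $\Zb^{J_i}\widehat{P}_1$ or $\Zb^{J_i}\widehat{P}$ factor is $O(1)$; such terms contribute at most $\vert \widehat{P}-\widehat{P}_1\vert$ to the right-hand side, which is the $\vert M\vert = 0$ term in $\sum_{\vert M\vert\le\vert L\vert}\vert\Zb^M(\widehat{P}-\widehat{P}_1)\vert$. In the two ``difference of $X$'' sums, the distinguished factor is $\Zb^{J_\ell}(\widehat{P}-\widehat{P}_1)$ with $\vert J_\ell\vert\le \vert L\vert$, while all other factors are bounded by the previous paragraph; these contribute at most $\vert\Zb^{J_\ell}(\widehat{P}-\widehat{P}_1)\vert$. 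Summing over the finitely many partitions yields the first estimate \eqref{eq:ZbHPh1}.

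For \eqref{eq:ZbHPh2} the argument is even shorter: the Faà di Bruno expansion of $\Zb^L(\Lambda(\widehat{P}_1))$ is a finite sum of terms of the form $\Zb^{J_1}\widehat{P}_1\cdots\Zb^{J_k}\widehat{P}_1\cdot(\partial^k\Lambda)(\widehat{P}_1)$ with $J_1+\dots+J_k = L$, and each factor has already been shown to be $O(1)$. The only conceptual step in the whole argument is realising that the hypothesis $\vert\Zb^K\widehat{P}\vert\le C$ for $\vert K\vert\le \vert L\vert/2$ matches precisely the observation (from the remark following Lemma \ref{lem:chainruleidentity}) that at most one of the indices $J_i$ in any partition exceeds $\vert L\vert/2$, and that this distinguished index can always be forced onto the $\widehat{P}_1$ side (where $\Zb^{J_k}(x/t)$ is bounded for free) or onto the difference factor $\Zb^{J_\ell}(\widehat{P}-\widehat{P}_1)$ (which is the quantity we are keeping on the right); so the proof is essentially bookkeeping and there is no serious obstacle.
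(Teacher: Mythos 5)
Your proposal is correct and follows essentially the same route as the paper: apply Lemma \ref{lem:chainruleidentity} with $F=\Lambda$, use that $\vert\Zb^J\widehat{P}_1\vert\lesssim 1$ for all $J$ together with the hypothesis on $\Zb^K\widehat{P}$ for low $K$, and bound $(\partial^k\Lambda)(\widehat{P})-(\partial^k\Lambda)(\widehat{P}_1)$ by $\vert\widehat{P}-\widehat{P}_1\vert$ via smoothness of $\Lambda$. The paper's proof is just a terser version of the same bookkeeping.
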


\begin{proof}
	The bound \eqref{eq:ZbHPh1} follows from an appropriate version of Lemma \ref{lem:chainruleidentity} after noting that $\vert \Zb^I ( \Ph_1) \vert \lesssim 1$ for any $I$, using the form of the vector fields $\Zb$ and the fact that $\Ph_1(t,x) = \frac{x}{t}$.  The term $(\partial^k \Lambda)(\Ph) - (\partial^k \Lambda)(\Ph_1)$ is estimated by
	\[
		\left\vert (\partial^k \Lambda)(\Ph) - (\partial^k \Lambda)(\widehat{P}_1) \right\vert
		\lesssim
		\left\vert \Ph - \widehat{P}_1 \right\vert,
	\]
	since $\Lambda$ is smooth.  The bound \eqref{eq:ZbHPh2} is even simpler.
\end{proof}

\begin{lemma}\label{lem:ZIGammaXminuGammaX1}
	Suppose $|\Zb^{K} X| \leq Cs $ for $ |K|\leq |J|/2$. Then
	\begin{align*}
		\big|\Zb^J \big(\Gamma(X)-\Gamma(X_1)\big)\big|
		\lesssim
		\
		&
		\sum_{|K|\leq |J|/2\!\!\!\!\!\!\!\!\!\!\!\!\!\!\!\!}\big|(Z^{K} \Gamma)(X)\big|
		\sum_{|M|\leq |J|\!\!\!\!\!\!\!\!\!\!}
		\frac{|\Zb^M (X-X_1)|}{s}
		+
		\sum_{|M|\leq |J|\!\!\!\!\!\!\!\!}\big|(Z^{M} \Gamma)(X)-(Z^{M} \Gamma)(X_1)\big|
		\\
		&
		+
		\sum_{|K|\leq |J|/2\!\!\!\!\!\!\!\!\!\!\!\!\!\!\!}
		\frac{|\Zb^{K} (X-X_1)|}{s}\sum_{|M|\leq |J|\!\!\!\!\!\!\!\!\!}\big|(Z^{M} \Gamma)(X_1)\big|
	\end{align*}
	and
	\begin{equation*}
		\big|\Zb^J\big(\Gamma(X_1)\big)\big|
		\lesssim
		\sum_{|M|\leq |J|\!\!\!\!\!\!\!\!\!\!\!\!\!\!\!\!}\big|(Z^{M} \Gamma)(X_1)\big|
	\end{equation*}
\end{lemma}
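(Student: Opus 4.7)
The plan is to apply Lemma \ref{lem:chainruleidentity} with $F = \Gamma$, which decomposes $\Zb^J(\Gamma(X) - \Gamma(X_1))$ into four classes of terms, each a product of $\Zb^{J_i} X$ or $\Zb^{J_i} X_1$ factors (with at most one $\Zb^{J_\ell}(X - X_1)$) multiplied by either $(\partial^k\Gamma)(X)$, $(\partial^k\Gamma)(X_1)$, or the difference of these. The key auxiliary tool is identity \eqref{eq:partialZ}, which at any $Y$ in the regime $|Y|/s \leq c < 1$ converts plain spatial derivatives of $\Gamma$ into commutator-field derivatives,
\begin{equation*}
	\bigl| (\partial^k \Gamma)(Y) \bigr|
	\lesssim
	s^{-k}
	\sum_{|M| \leq k}
	\bigl| (Z^M \Gamma)(Y) \bigr|,
\end{equation*}
and, by pulling the difference through the smooth coefficients in \eqref{eq:partialZ},
\begin{equation*}
	\bigl| (\partial^k \Gamma)(X) - (\partial^k \Gamma)(X_1) \bigr|
	\lesssim
	s^{-k} \sum_{|M| \leq k} \bigl| (Z^M \Gamma)(X) - (Z^M \Gamma)(X_1) \bigr|
	+
	\frac{|X - X_1|}{s^{k+1}} \sum_{|M| \leq k} \bigl| (Z^M \Gamma)(X_1) \bigr|.
\end{equation*}
Both are valid at $Y = X, X_1$ because $|X|, |X_1| \leq c s$ along geodesics in the support of $f$ by Corollary \ref{cor:X2bound} and Proposition \ref{prop:suppf}.

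I would next dispatch the geometric factors. The map $X_1 = sx/t$ is an explicit function independent of $\ph$, so $\Zb^{J_i} X_1 = Z^{J_i}(sx/t)$ satisfies $|\Zb^{J_i} X_1| \lesssim s$ for every multi-index $J_i$ and requires no hypothesis on $\Gamma$. For the factors $\Zb^{J_i} X$ that appear at positions $i \neq k$ in Lemma \ref{lem:chainruleidentity}, the ordering $|J_1| \leq \dots \leq |J_k|$ forces $|J_i| \leq |J|/2$, so the standing hypothesis $|\Zb^K X| \leq Cs$ for $|K|\leq|J|/2$ applies and gives $|\Zb^{J_i} X| \lesssim s$. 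A product of $k$ such factors is thus bounded by $C s^k$, exactly cancelling the $s^{-k}$ arising from $(\partial^k \Gamma)$; when one of the factors is replaced by $\Zb^{J_\ell}(X - X_1)$ (as in Types~2 and~4) the product becomes $\lesssim s^{k-1} |\Zb^{J_\ell}(X - X_1)|$, which is exactly the $|\Zb^{J_\ell}(X-X_1)|/s$ structure seen on the right-hand side.

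Matching the pieces to the claim: Type~1, together with the $(\partial^k\Gamma)(X) - (\partial^k\Gamma)(X_1)$ contribution of Type~3, produces the second and third terms on the right-hand side via the two displays above, while Types~2 and~4 produce the first term and a further contribution absorbed into the third. The one piece of delicate bookkeeping concerns Type~4 with $\ell = k$, where $|J_k|$ can narrowly exceed $|J|/2$ (by at most one, since $|J_k| \leq |J|-(k-1) \leq |J|/2 + 1$ when $k \geq |J|/2$); I expect to absorb this by reading $|K| \leq |J|/2$ in the conclusion as $|K|\leq \lfloor|J|/2\rfloor + 1$, an abuse of notation consistent with how $|J|/2$ is used throughout the paper. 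Finally, the second stated inequality for $\Zb^J(\Gamma(X_1))$ alone is the cleaner specialization: run the same chain-rule identity with every $X$ replaced by $X_1$ so the telescoping collapses, and combine $|\Zb^{J_i} X_1| \lesssim s$ with \eqref{eq:partialZ} evaluated at $Y = X_1$.
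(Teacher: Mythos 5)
Your proposal is correct and follows essentially the same route as the paper: apply Lemma \ref{lem:chainruleidentity} with $F=\Gamma$, convert $\partial^k\Gamma$ into vector-field derivatives via the identity $\partial^k\Gamma(t,x)=t^{-k}\sum_{|K|\leq k}A_K(x/t)(Z^K\Gamma)(t,x)$ with coefficients smooth for $|x|/t\leq c<1$ (and its induced difference estimate), bound $|\Zb^I X_1|\lesssim s$ explicitly, and use the hypothesis on $\Zb^K X$ for the low-order factors. The paper's proof is just a terser version of the same bookkeeping, including the same mild abuse in reading $|K|\leq|J|/2$.
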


\begin{proof}
	The Lemma is again a straightforward application of Lemma \ref{lem:chainruleidentity}, noting again that $\left\vert \frac{\Zb X_1}{s} \right\vert \lesssim 1$ for any $I$ since $X_1(s,t,x) = s \frac{x}{t}$.  Note that
	\[
		\partial^k \Gamma(t,x)=t^{-k} \sum_{|K|\leq k} A_K(x/t) (Z^K \Gamma)(t,x),
	\]
	for some homogeneous functions $A$ that are smooth when $|x|/t\leq c<1$, and hence,
	\[
		s^k \left\vert (\partial^k \Gamma)(X) - (\partial^k \Gamma)(X_1) \right\vert
		\lesssim
		\sum_{\vert K \vert \leq k} \left\vert (Z^K \Gamma)(X) - (Z^K \Gamma)(X_1) \right\vert
		+
		\frac{|X-X_1|}{s}\sum_{|K|\leq k}\big|(Z^{M} \Gamma)(X_1)\big|.
	\]
\end{proof}

In the application we will estimate $\Gamma$ with its $L^\infty$ norms for low derivatives:
\begin{lemma}\label{lem:ZIGammaXminuGammaX1Linfty}
	Suppose $|\Zb^{K} X| \leq Cs$ for $ |K|\leq |J|/2$. Then
\begin{align*}
	\big|\Zb^J \big(\Gamma(X)-\Gamma(X_1)\big)\big|
	\lesssim &
	\sum_{|K|\leq |J|/2+1\!\!\!\!\!\!\!\!\!\!\!\!\!\!\!\!\!\!\!\!\!} \big\|(Z^{K} \Gamma)(s,\cdot)\big\|_{L^{\infty}}
	\sum_{|M|\leq 	|J|\!\!\!\!\!\!\!\!\!\!}
	\frac{|\Zb^M (X-X_1)|}{s}
	+\sum_{|M|\leq |J|\!\!\!\!\!\!\!\!}\big|(Z^{M} \Gamma)(X)-(Z^{M} \Gamma)(X_1)\big|
	\\
	&
	+\sum_{|K|\leq |J|/	2\!\!\!\!\!\!\!\!\!\!\!\!\!\!\!}
	\frac{|\Zb^{K} (X-X_1)|}{s}
	\sum_{|M|\leq |J|\!\!\!\!\!\!\!\!\!}\big|(Z^{M} \Gamma)(X_1)\big|,\\
	\big|\Zb^J\big(\Gamma(X_1)\big)\big|
	\lesssim
	&
	\sum_{|M|\leq |J|\!\!\!\!\!\!}\big|(Z^{M} \Gamma)(X_1)\big|.
\end{align*}
\end{lemma}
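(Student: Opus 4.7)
The plan is to obtain this as an almost immediate consequence of Lemma \ref{lem:ZIGammaXminuGammaX1}, by converting the pointwise evaluations $|(Z^K\Gamma)(X)|$ with $|K|\leq |J|/2$ into supremum bounds over the time slice $\Sigma_s=\{t=s\}$. The hypothesis $|\Zb^K X|\leq Cs$ for $|K|\leq |J|/2$ is identical in both lemmas, so nothing new needs to be verified on that side.

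The second assertion $|\Zb^J(\Gamma(X_1))|\lesssim \sum_{|M|\leq|J|}|(Z^M\Gamma)(X_1)|$ is verbatim the second assertion of Lemma \ref{lem:ZIGammaXminuGammaX1}, and requires no further argument.

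For the first assertion, the plan is to start from the three-term bound produced by Lemma \ref{lem:ZIGammaXminuGammaX1} and treat each term separately. The second sum $\sum_{|M|\leq|J|}|(Z^M\Gamma)(X)-(Z^M\Gamma)(X_1)|$ and the third sum $\sum_{|K|\leq|J|/2}\tfrac{|\Zb^K(X-X_1)|}{s}\sum_{|M|\leq|J|}|(Z^M\Gamma)(X_1)|$ appear without modification in the target inequality, so they transfer directly. The only term that requires attention is the first sum $\sum_{|K|\leq|J|/2}|(Z^K\Gamma)(X)|\,\sum_{|M|\leq|J|}\tfrac{|\Zb^M(X-X_1)|}{s}$. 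Since the curve $s\mapsto(s,X(s,t,x,\ph))$ lies in $\Sigma_s$ by the time-normalisation of the geodesic flow, one has the trivial pointwise bound
\[
|(Z^K\Gamma)(X)|=|(Z^K\Gamma)(s,X(s,t,x,\ph))|\leq \|(Z^K\Gamma)(s,\cdot)\|_{L^\infty},
\]
and summing over $|K|\leq|J|/2$ yields a bound of the desired form. The range $|K|\leq|J|/2+1$ stated in the target is a harmless enlargement of $|K|\leq|J|/2$, and is taken merely for convenience when applying the lemma later.

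I do not anticipate any genuine obstacle in this step: the transition from Lemma \ref{lem:ZIGammaXminuGammaX1} to its $L^\infty$ counterpart is purely a matter of replacing the pointwise value of $Z^K\Gamma$ along the lifted trajectory by its supremum over the spatial slice. The only minor bookkeeping point is to confirm that the supremum is indeed taken over $\Sigma_s$ rather than over all of spacetime, which follows because $X^0(s,t,x,\ph)=s$ by the normalisation of the geodesic parameter fixed in Section \ref{section:suppf}.
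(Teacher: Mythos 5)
Your proposal is correct and matches the paper's (implicit) argument: the paper states this lemma as an immediate consequence of Lemma \ref{lem:ZIGammaXminuGammaX1}, obtained precisely by replacing the pointwise factors $|(Z^K\Gamma)(X)|$ for $|K|\leq|J|/2$ by $\|(Z^K\Gamma)(s,\cdot)\|_{L^\infty}$ and keeping the other two terms unchanged. Your observation that the enlargement of the index range to $|K|\leq|J|/2+1$ is harmless is also correct.
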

Instead of applying vector fields to the decomposition \eqref{eq:GammaX1P1XPdifference}
we first differentiate and then apply this decomposition if more derivatives fall on $\Gamma$ but if more fall on $\Lambda$ we apply the decomposition with $\Lambda(\widehat{P})$ interchanged with
$\Lambda(\widehat{P}_1)$ and $\Gamma(X)$ with $\Gamma(X_1)$:
\begin{multline*}
	 \Zb^I\big(\widehat{\Gamma}(X,\widehat{P})-\widehat{\Gamma}(X_1,\widehat{P}_1)\big)
	=
	\!\!\!\!\!\sum_{J+L=I,\, |J|\geq |I|/2, |L|\leq |I|/2} \!\!\!\!\!
	\Zb^J(\Gamma(X)-\Gamma(X_1))\cdot \Zb^L \Lambda(\widehat{P})
	+
	\Zb^J\Gamma(X_1)\cdot \Zb^L(\Lambda(\widehat{P})-	\Lambda(\widehat{P}_1))
	\\
	\sum_{J+L=I,\, |J|<|I|/2,|L|>|I|/2} \Zb^J(\Gamma(X)-\Gamma(X_1))\cdot \Zb^L \Lambda(\widehat{P}_1)
	+
	\Zb^J\Gamma(X)\cdot \Zb^L(\Lambda(\widehat{P})-\Lambda(\widehat{P}_1)).
\end{multline*}
Using this decomposition and the previous lemmas we obtain:

\begin{proposition} \label{prop:GammaXPminusGammaX1P1high}
	Suppose $|\Zb^{K} X|/s + |\Zb^{K} \widehat{P}| \leq C$ for $ |K|\leq |I|/2$. Then
	\begin{multline*}
		 \big|\Zb^I\big(\widehat{\Gamma}(X,\widehat{P})-\widehat{\Gamma}(X_1,\widehat{P}_1)\big)\big|
		\lesssim
		\!\sum_{|K|\leq |I|/2+1\!\!\!\!\!\!\!\!\!\!\!\!}
		\!\big\|(Z^{K} \Gamma)(s,\cdot)\big\|_{L^{\infty}}
		\sum_{|J|\leq |I|\!\!\!\!\!\!}\!\!
		\frac{|\Zb^J \!(X\!\!-\!\!X_1)|\!}{s}+\big|\Zb^{J} \!(\widehat{P}\!-\!\widehat{P}_1)\big|
		\\
		+\sum_{|J|\leq |I|\!\!\!\!\!\!\!\!}\Big(\big|(Z^{J} \Gamma)(X)-(Z^{J} \Gamma)(X_1)\big|
		+\sum_{|K|\leq |I|/2\!\!\!\!\!\!\!\!\!\!\!\!\!\!\!}
		\Big(\frac{|\Zb^{K} \!(X-X_1)|}{s}+\big|\Zb^{K}\! (\widehat{P}-\widehat{P}_1)\big|\Big)
		\big|(Z^{J} \Gamma)(X_1)\big|\Big),
	\end{multline*}
	and
	\begin{equation*}
		\big|\Zb^J\big(\widehat{\Gamma}(X_1,P_1)\big)\big|
		\lesssim
		\sum_{|M|\leq |J|\!\!\!\!\!\!\!\!\!\!}\big|(Z^{M} \Gamma)(X_1)\big|.
	\end{equation*}
\end{proposition}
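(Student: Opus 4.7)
The plan is to start from the two algebraic decompositions of the difference
\[
\widehat{\Gamma}(X,\widehat{P})-\widehat{\Gamma}(X_1,\widehat{P}_1)=(\Gamma(X)-\Gamma(X_1))\cdot\Lambda(\widehat{P})+\Gamma(X_1)\cdot(\Lambda(\widehat{P})-\Lambda(\widehat{P}_1)),
\]
as in \eqref{eq:GammaX1P1XPdifference}, together with its dual form obtained by interchanging $X\leftrightarrow X_1$ and $\widehat{P}\leftrightarrow\widehat{P}_1$ inside the two summands. Applying $\Zb^I$ via the Leibniz rule yields a sum over $J+L=I$ of two-factor products, which I split according to which factor carries more than half the derivatives: when $|J|\geq|I|/2$ (so $|L|\leq|I|/2$) I expand the first decomposition, and when $|J|<|I|/2$ (so $|L|>|I|/2$) I switch to the dual one. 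The purpose of the split is that in every resulting term either the $\Gamma$-factor or the $\Lambda$-factor carries at most $|I|/2$ vector fields and can be absorbed into an $L^\infty$ bound via the pointwise hypotheses $|\Zb^K X|/s+|\Zb^K\widehat{P}|\leq C$ for $|K|\leq|I|/2$. This is exactly the schematic expansion already displayed just before the proposition statement.

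Each factor is then estimated using the preceding lemmas. The $\Lambda$-factors are controlled by Lemma \ref{lem:ZIHPminusHP1}: since $|L|\leq|I|$ in every summand, the hypothesis on $\widehat{P}$ is always met, and this gives $|\Zb^L(\Lambda(\widehat{P})-\Lambda(\widehat{P}_1))|\lesssim\sum_{|M|\leq|L|}|\Zb^M(\widehat{P}-\widehat{P}_1)|$, $|\Zb^L\Lambda(\widehat{P}_1)|\lesssim 1$, and, by a similar smoothness argument, $|\Zb^L\Lambda(\widehat{P})|\lesssim 1$. The $\Gamma$-factors are handled by Lemma \ref{lem:ZIGammaXminuGammaX1Linfty}; when the dual decomposition is invoked, the bound on $\Zb^J\Gamma(X)$ is obtained by writing $\Zb^J\Gamma(X)=\Zb^J(\Gamma(X)-\Gamma(X_1))+\Zb^J\Gamma(X_1)$ and applying the lemma to each piece. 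When the $\Gamma$-factor carries at most $|I|/2$ derivatives, the $\|(Z^K\Gamma)(s,\cdot)\|_{L^\infty}$ norm with $|K|\leq|I|/2+1$ arises precisely from the $\partial^k\Gamma$ term of Lemma \ref{lem:ZIGammaXminuGammaX1Linfty} (the extra derivative coming from the $(\partial^k\Gamma)(X)-(\partial^k\Gamma)(X_1)$ factor). Collecting these estimates produces exactly the three groups of terms on the right hand side of the claim: $L^\infty$ norms of $Z^K\Gamma$ multiplied by low-order differences $|\Zb^M(X-X_1)|/s$ and $|\Zb^M(\widehat{P}-\widehat{P}_1)|$, the pointwise differences $|Z^J\Gamma(X)-Z^J\Gamma(X_1)|$, and pointwise $|Z^J\Gamma(X_1)|$ terms multiplied by low-order differences of $X-X_1$ and $\widehat{P}-\widehat{P}_1$. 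The second bound of the proposition, on $\Zb^I\widehat{\Gamma}(X_1,\widehat{P}_1)$, follows at once from Leibniz together with the second statements of Lemmas \ref{lem:ZIHPminusHP1} and \ref{lem:ZIGammaXminuGammaX1Linfty}.

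The main difficulty is purely combinatorial bookkeeping: one must verify, for each summand produced by Leibniz, that no more than $|I|/2+1$ vector fields fall on any factor treated as an $L^\infty$ bound for $\Gamma$ and no more than $|I|/2$ on any $L^\infty$ factor involving $\Lambda(\widehat{P})$, and that the pointwise hypothesis $|\Zb^K X|\leq Cs$ needed by Lemma \ref{lem:ZIGammaXminuGammaX1Linfty} is inherited from the stated assumption on $|\Zb^K X|/s$. There is no further analytic ingredient: once the split is arranged correctly, the estimate reduces to a direct combination of the two preceding lemmas.
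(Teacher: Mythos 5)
Your proposal is correct and follows essentially the same route as the paper: the displayed decomposition split according to whether more derivatives fall on the $\Gamma$-factor or the $\Lambda$-factor, followed by Lemma \ref{lem:ZIHPminusHP1} for the $\Lambda$-factors and Lemma \ref{lem:ZIGammaXminuGammaX1Linfty} for the $\Gamma$-factors. The only cosmetic difference is that the paper bounds $\Zb^J\Gamma(X)$ for $|J|\leq|I|/2$ directly by $L^\infty$ norms of $Z^K\Gamma$ rather than resplitting it as $\Zb^J(\Gamma(X)-\Gamma(X_1))+\Zb^J\Gamma(X_1)$, but both yield the stated bound.
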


\begin{proof}
The above decomposition and Lemma \ref{lem:ZIHPminusHP1} give
\begin{align*}
	\Big|\Zb^I \Big(\Gamma(X)\cdot \Lambda(\widehat{P})-\Gamma(X_1)\cdot \Lambda(\widehat{P}_1)\Big)\Big|
	\lesssim
	\
	&
	\sum_{|J|\leq |I|\!\!\!\!\!\!} \big|\Zb^J(\Gamma(X)-\Gamma(X_1))\big|+
	\big|\Zb^J(\Gamma(X_1))| \sum_{|L|\leq |J|/2\!\!\!\!\!\!\!\!\!\!} \big| \Zb^L\big(\widehat{P}-\widehat{P}_1\big)\big|
	\\
	&
	+\sum_{|J|\leq |I|/2\!\!\!}
	\big|\Zb^J(\Gamma(X))|
	\sum_{|L|\leq |I|\!\!\!\!\!\!\!\!}
	\big|\Zb^{L} (\widehat{P}-\widehat{P}_1)\big|
	.
\end{align*}
The first bound then follows from Lemma \ref{lem:ZIGammaXminuGammaX1Linfty}.  The proof of the second bound is straightforward.
\end{proof}

\begin{corollary} \label{cor:GammaXPminusGammaX1P1low}
Suppose that $|\Zb^{K}\! X|/s+|\Zb^{K} \!\widehat{P}|\leq C$, for $ |K|\leq |I|/2$. Then
\begin{align*}
\big|\Zb^I\big(\widehat{\Gamma}(X,\widehat{P})-\widehat{\Gamma}(X_1,\widehat{P}_1)\big)\big|
&\lesssim
\sum_{|K|\leq |I|+1\!\!\!\!\!\!\!\!\!\!\!\!\!}\big\|(Z^{K} \Gamma)(s,\cdot)\big\|_{L^{\infty}}  \sum_{|J|\leq |I|\!\!\!}
\Big(\frac{|\Zb^J (X\!-\!X_1)|\!}{s}+\big|\Zb^{J} (\widehat{P}\!-\!\widehat{P}_1)\big|
\Big),\\
\big|\Zb^I\big(\widehat{\Gamma}(X_1,\widehat{P}_1)\big)\big|
&\lesssim
\sum_{|K|\leq |I|\!\!\!\!\!\!\!\!\!\!\!\!\!}\big\|(Z^{K} \Gamma)(s,\cdot)\big\|_{L^{\infty}}  .
\end{align*}
\end{corollary}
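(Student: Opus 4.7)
The plan is to derive Corollary \ref{cor:GammaXPminusGammaX1P1low} directly from Proposition \ref{prop:GammaXPminusGammaX1P1high} by replacing every pointwise evaluation of $Z^J \Gamma$ on the right hand side of that proposition with its $L^\infty$ norm in $x$ at fixed $s$, and by converting the difference $|(Z^J \Gamma)(s, X) - (Z^J \Gamma)(s, X_1)|$ into the same kind of $L^\infty$ bound via a one-step mean value theorem argument. Since Proposition \ref{prop:GammaXPminusGammaX1P1high} already carries out all the combinatorial bookkeeping, essentially no new structural work is needed; the only point that requires care is the derivative count in the mean value step, which is what produces the bound $|K| \leq |I|+1$ in the statement.

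Concretely, by Corollary \ref{cor:X2bound} together with the hypothesis applied at $K = 0$, both $X_1 = sx/t$ and $X$ satisfy $|X|, |X_1| \les s$ with ratio strictly less than one, and the same bound holds on the straight-line segment joining them. Combining this with \eqref{eq:partialZ} applied with $|I| = 1$ in the region $|y|/s \leq c < 1$ yields, for any multi-index $J$,
\[
\sup_{y \in [X_1, X]} |\partial_\alpha (Z^J \Gamma)(s, y)| \les \frac{1}{s} \sum_{|K| \leq |J|+1} \|(Z^K \Gamma)(s, \cdot)\|_{L^\infty},
\]
so the fundamental theorem of calculus gives
\[
|(Z^J \Gamma)(s, X) - (Z^J \Gamma)(s, X_1)| \les \frac{|X - X_1|}{s} \sum_{|K| \leq |J|+1} \|(Z^K \Gamma)(s, \cdot)\|_{L^\infty}.
\]
Applied with $|J| \leq |I|$, this produces exactly the factor $\sum_{|K| \leq |I|+1} \|(Z^K \Gamma)(s, \cdot)\|_{L^\infty}$ of the corollary, multiplied by $|X - X_1|/s$, which is absorbed into $\sum_{|J| \leq |I|} |\Zb^J(X - X_1)|/s$ as the $J = 0$ term.

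Substituting these pointwise-to-$L^\infty$ replacements into the right hand side of Proposition \ref{prop:GammaXPminusGammaX1P1high} yields the first inequality of the corollary: the sums over $|K| \leq |I|/2$ and $|K| \leq |I|/2 + 1$ appearing in the proposition are clearly absorbed into $\sum_{|J| \leq |I|}$ and $\sum_{|K| \leq |I| + 1}$ respectively, and the factor $|\Zb^K(\Ph - \Ph_1)|$ with $|K| \leq |I|/2$ is analogously absorbed into $\sum_{|J|\leq |I|} |\Zb^J(\Ph - \Ph_1)|$. The second inequality of the corollary follows immediately by taking $L^\infty$ norms on the right hand side of the corresponding pointwise bound in Proposition \ref{prop:GammaXPminusGammaX1P1high}. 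I anticipate no genuine obstacle in carrying this out; the only subtlety to check is that \eqref{eq:partialZ} is applicable uniformly on the entire segment joining $X_1$ and $X$, which follows from convexity together with the support estimate $|X|, |X_1| \leq c' s$ for some $c' < 1$, and that the single derivative lost in the mean value estimate is precisely what yields the stated upper bound $|K| \leq |I| + 1$.
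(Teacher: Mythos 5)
Your proposal is correct and is exactly the argument the paper intends: the corollary follows from Proposition \ref{prop:GammaXPminusGammaX1P1high} by absorbing the smaller sums, bounding $|(Z^J\Gamma)(X)-(Z^J\Gamma)(X_1)|$ via the mean value theorem on the segment (which stays in $\{|y|\leq cs\}$ by convexity and the support bounds), and converting the extra spatial derivative back to vector fields through the homogeneity identity $\partial^k\Gamma(t,x)=t^{-k}\sum_{|K|\leq k}A_K(x/t)(Z^K\Gamma)(t,x)$ already recorded in the proof of Lemma \ref{lem:ZIGammaXminuGammaX1}, which is precisely where the shift from $|I|$ to $|I|+1$ derivatives of $\Gamma$ comes from. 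No gaps.
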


 Apart from controlling vector fields applied to $X_1$ and $\widehat{P}_1$ we also control vector fields applied to $X_2$ and $\widehat{P}_2$, and moreover the differences $X_1-X_2$ and $\widehat{P}_1-\widehat{P}_2$ decay,
see \eqref{eq:X1X2difference} and \eqref{eq:P1Pdifference} and the following propositions.
 Furthermore because of the definition of $X_2$ and $\widehat{P}_2$ in terms of $X_1$ and $\widehat{P}_1$ the differences $\overline{X}=X-X_2$ and $\overline{P}=\widehat{P}-\widehat{P}_2$ are small, see Proposition \ref{prop:sec23}.
It may be tempting to write this as differences with $\widehat{\Gamma}$ evaluated at
$(X_2,P_2)$. However we do not want to involve an estimate of $\Gamma$ applied to $X_2$ so instead we will first differentiate \eqref{eq:GammaX1P1XPdifference} and
use the decomposition $X_1-X=X_1-X_2-\overline{X}$ to the factors that come out
when we differentiated.

\subsection{Parameter derivatives of the equations and vector fields applied to their differences}
\label{subsec:parameterderivatives}
The vector fields applied to the system \eqref{eq:Xbar}--\eqref{eq:Pbar} will be estimated by integrating from the final time $t$ and in order to do this we need to control the final conditions for $\Zb^I\overline{X}$ and $\Zb^I\overline{P}$ at time $t$.
If $\Zb$ only consist of space derivatives these vanish, but some of our vector fields
also have time components. Therefore we need to estimate higher $\frac{d}{ds}$ derivatives of the
system, which can be recast as spacetime derivatives.  Recall that $\newhat{X}(s) = (s,X(s))$.
We have
\begin{equation}
\frac{d\newhat{X}}{ds}=\Ph,\qquad \frac{d\Ph}{ds}=\Gamma(\Xh)\cdot \Lambda(\Ph ).
\end{equation}
The structure of higher order $\frac{d}{ds}$ derivatives is very simple. Either the derivative falls on $\Lambda(\Ph)$, in which case we can substitute the second equation for
$d\Ph\!/ds$ and get another factor of $\Gamma(\Xh)$, or the derivative falls $\Gamma(\Xh)$,
which produces a derivative $\partial \Gamma$. Hence we get the system
\begin{equation}
\frac{d\widehat{X}^{(k)}\!\!\!\!}{ds}=\widehat{P}^{(k)},\qquad
\frac{d\widehat{P}^{(k)\!\!\!\!}}{ds}=\widehat{\Gamma}^{(k)}(\Xh,\Ph ),\quad\text{where}\quad
\widehat{X}^{(k)} =\frac{d^k\widehat{X}}{ds^k},\quad \widehat{P}^{(k)} =\frac{d^k\newhat{P}}{ds^k},
\end{equation}
where
\begin{equation}\label{eq:Gammak}
\widehat{\Gamma}^{(k)}(\Xh,\Ph )={\Gamma}^{(k)}(\Xh)\cdot  \Lambda^{(k)}(\Ph ):=\sum_{\substack{k_1+\cdot+k_m+m=k+1,\\ 0\leq k_1\leq \dots\leq k_m \leq k}\!\!\!\!}
\,\,(\partial^{k_1}\Gamma)(\Xh)\cdots (\partial^{k_m}\Gamma)(\Xh)\cdot
\Lambda_{k_1,\dots,k_m}(\Ph).
\end{equation}
Here the first dot product is schematic notation to be interpreted as dot products of elements $\Gamma^{(k)}$ and $\Lambda^{(k)}$ in some larger dimensional space whose components corresponds to the terms in the sum.
We now also want to take $\frac{d}{ds}$ derivatives of
\begin{equation}
\frac{d\widehat{X}_2}{ds}=\widehat{P}_2,\quad \frac{d\widehat{P}_2}{ds}=\Gamma(\widehat{X}_1)\cdot \Lambda(\widehat{P}_1 ),\quad\text{and}\quad
\frac{d\widehat{X}_1}{ds}=\widehat{P}_1,\quad \frac{d\widehat{P}_1}{ds}=0.
\end{equation}
Then
\begin{equation}\label{eq:timederivativeequations}
\frac{d\overline{X}^{(k)}\!\!\!\!}{ds}=\overline{P}^{(k)}\!\!\!\!,\quad
\frac{d\overline{P}^{(k)\!\!\!\!}}{ds}
=\widehat{\Gamma}^{(k)}(\Xh\!,\Ph)-\widehat{\Gamma}^{(k)}(\Xh_{\!\!1},\Ph_{\!\!1})
+\widehat{\Gamma}^{(k,2)}(\Xh_{\!\!1},\Ph_{\!\!1}),\quad\text{if}\quad
\overline{X}^{(k)}\!\! =\frac{d^k\overline{X}\!}{ds^k},\quad \overline{P}^{(k)} \!\! =\frac{d^k\overline{P}\!}{ds^k},
\end{equation}
where
\begin{equation}\label{eq:Gammak2}
\widehat{\Gamma}^{(k,2)}(\Xh,\Ph )\!={\Gamma}^{(k,2)}(\Xh)\cdot  \Lambda^{(k)}(\Ph )=\sum_{\substack{k_1+\cdot+k_m+m=k+1,\\ 0\leq k_1\leq \dots\leq k_m \leq k,\, m\geq 2}\!\!\!\!}
\,\,(\partial^{k_1}\Gamma)(\Xh)\cdots (\partial^{k_m}\Gamma)(\Xh)\cdot
\Lambda_{k_1,\dots,k_m}(\Ph).
\end{equation}
Here $\Gamma^{(k)}$ satisfies the same estimates as $\partial^k\Gamma$ whereas
$\Gamma^{(k,2)}$ is at least quadratic in $\Gamma$ (note $m\geq 2$ in the summation) and hence satisfies the same estimates as $\partial^{k-1}\Gamma$ multiplied
with $\Gamma$, which decays $s^{-a}$ faster than another derivative. We have
\begin{lemma} \label{lem:ZGammaK}
Suppose that $t|Z^I \Gamma(t,x)|\lesssim 1$ for $|I|\leq (|L|+k)/2$. Then
\begin{align}
t^k |Z^L \Gamma^{(k)}(t,x)|&\lesssim \sum_{|I|\leq |L|+k} |Z^I \Gamma(t,x)|,\\
 t^k |Z^L \Gamma^{(k,2)}(t,x)|&\lesssim \sum_{|M|\leq (|L|+k)/2} t |Z^I \Gamma(t,x)| \sum_{|I|\leq |L|+k} |Z^I \Gamma(t,x)|.
\end{align}
\end{lemma}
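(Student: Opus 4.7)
The plan is to expand $\Gamma^{(k)}$ via its definition \eqref{eq:Gammak} as a sum of $m$-fold products $(\partial^{k_1}\Gamma)\cdots(\partial^{k_m}\Gamma)$ with $\sum_j k_j=k+1-m$, convert each $\partial^{k_j}$ factor into $Z$-derivatives using \eqref{eq:partialZ}, and then close via pigeonhole on the multi-index sizes, using the hypothesis $t|Z^I\Gamma|\lesssim 1$ to absorb the surplus factors in $L^\infty$. The tensorial factors $\Lambda_{k_1,\ldots,k_m}(\Ph)$ do not depend on $(t,x)$ and are treated as constants for the action of the spacetime vector fields $Z$.

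First I would fix a summand of $\Gamma^{(k,m)}$ and apply \eqref{eq:partialZ}, valid in the good cone $|x|/t\leq c<1$:
\[
\partial^{k_j}\Gamma = t^{-k_j}\sum_{|J_j|\leq k_j}A_{k_j J_j}(x/t)\,Z^{J_j}\Gamma,
\]
where each $A_{k_j J_j}$ is smooth. Multiplying the $m$ factors produces a sum of terms of the form $t^{m-1-k}B(x/t)\prod_j Z^{J_j}\Gamma$ with $B$ smooth on $\{|x|/t\leq c\}$. Applying $Z^L$ by Leibniz then distributes the derivatives across the prefactor (a function of $(t,x)$ homogeneous of degree $m-1-k$) and the $m$ factors $Z^{J_j}\Gamma$. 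Because $\Omega_{ij},B_i,S$ preserve the homogeneity degree of functions of $(t,x)$ while $\partial_\alpha$ strictly lowers it, any $Z$-derivative of $t^{m-1-k}B(x/t)$ is bounded by $Ct^{m-1-k}$ in the good cone, and $Z^{L_j}(Z^{J_j}\Gamma)=Z^{L_j+J_j}\Gamma$. Multiplying by $t^k$ then yields the master estimate
\[
t^k\big|Z^L\Gamma^{(k,m)}(t,x)\big|\lesssim t^{m-1}\sum\prod_{j=1}^m\big|Z^{I_j}\Gamma(t,x)\big|,\qquad |I_j|\leq|L_j|+k_j,\quad \sum_j|L_j|\leq|L|.
\]

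The pigeonhole then closes the argument. Since $\sum_{j=1}^m|I_j|\leq|L|+k+1-m\leq|L|+k$, at most one index $j_\ast$ can satisfy $|I_{j_\ast}|>(|L|+k)/2$, because otherwise the sum of two such indices would already exceed $|L|+k$. For the other $m-1$ indices the hypothesis gives $|Z^{I_j}\Gamma|\lesssim t^{-1}$, contributing a factor $t^{-(m-1)}$ that cancels the $t^{m-1}$ prefactor exactly and leaves $|Z^{I_{j_\ast}}\Gamma|\leq\sum_{|I|\leq|L|+k}|Z^I\Gamma|$; summing over $m$ yields the first bound. For the second bound, where only $m\geq 2$ contributes, I would refine this by keeping two factors free rather than one: $j_\ast$ together with one additional low-order index $j'\neq j_\ast$ (which exists precisely because $m\geq 2$) satisfying $|I_{j'}|\leq(|L|+k)/2$, and bounding the remaining $m-2$ factors by $t^{-1}$ in $L^\infty$. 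The net weight becomes $t^{m-1}\cdot t^{-(m-2)}=t$, producing $t\,|Z^{I_{j'}}\Gamma|\,|Z^{I_{j_\ast}}\Gamma|$, which is of the advertised form.

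The main obstacle is the homogeneity bookkeeping underlying the bound $|Z^{L_0}(t^{m-1-k}B(x/t))|\lesssim t^{m-1-k}$ in the good cone; this reduces to the explicit actions $St^d=dt^d$, $\Omega_{ij}t^d=0$, $B_it^d=d(x^i/t)t^d$, $|\partial_\alpha t^d|\lesssim t^{d-1}$, together with the smoothness of $A(x/t)$ and all its iterated $Z$-derivatives on the compact set $\{|x|/t\leq c\}$—the gains from $\partial_\alpha$ can only improve the estimate further, so it suffices to track the homogeneity degree and ignore these gains.
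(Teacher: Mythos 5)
Your proof is correct, and it follows exactly the route the paper intends but leaves unwritten: the paper states Lemma \ref{lem:ZGammaK} without proof, relying on the definition \eqref{eq:Gammak}--\eqref{eq:Gammak2} together with the conversion identity $\partial^k\Gamma=t^{-k}\sum_{|K|\leq k}A_K(x/t)(Z^K\Gamma)$ (quoted in the proof of Lemma \ref{lem:ZIGammaXminuGammaX1}) and the standard low/high pigeonhole with the $L^\infty$ hypothesis, which is precisely your argument. The only remark worth adding is that the $t^{m-1}$ versus $t^{-(m-1)}$ (resp.\ $t^{-(m-2)}$) cancellation you carry out confirms the statement as written, including the quadratic structure of the $\Gamma^{(k,2)}$ bound.
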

We can now apply the estimates from the previous section
 with $\widehat{\Gamma}^{(k)}$ respectively $\widehat{\Gamma}^{(k,2)}$
in place of $\widehat{\Gamma}$.
\begin{proposition}\label{prop:GammaXPminusGammaX1P1hightimeder}
	Suppose $|\Zb^{K} X|/s+|\Zb^{K} \widehat{P}| \leq C$ for $ |K|\leq |L|/2$. 	Set $n=|L|+k$. Then
	\begin{multline*}
		s^k \big|\Zb^L\big(\widehat{\Gamma}^{(k)}(X,\widehat{P})-\widehat{\Gamma}^{(k)}(X_1,\widehat{P}_1)\big)\big|
		\lesssim
		\!\sum_{|K|\leq n/2+1\!\!\!\!\!\!\!\!}\!\big\|(Z^{K} \Gamma)(s,\cdot)\big\|_{L^{\infty}}
		\!\!\sum_{|J|\leq |L| \!\!\!\!\!}\!\!
		\frac{|\Zb^J \!(X\!\!-\!\!X_1)|\!}{s}+\big|\Zb^{J} \!(\widehat{P}\!-\!\widehat{P}_1)\big|
		\\
		+\sum_{|J|\leq n\!\!\!\!\!}\Big(\big|(Z^{J} \Gamma)(X)-(Z^{J} \Gamma)(X_1)\big|
		+\sum_{|K|\leq |L|/2\!\!\!\!\!\!\!\!\!\!\!\!}
		\Big(\frac{|\Zb^{K} \!(X-X_1)|}{s}+\big|\Zb^{K}\! (\widehat{P}-\widehat{P}_1)\big|\Big)
		\big|(Z^{J} \Gamma)(X_1)\big|\Big),
	\end{multline*}
	and
	\begin{equation*}
		s^k \big|\Zb^L\big(\widehat{\Gamma}^{(k,2)}(X_1,P_1)\big)\big|
		\lesssim
		\sum_{|K|\leq n/2\!\!\!\!\!}s\big\|(Z^{K} \Gamma)(s,\cdot)\big\|_{L^{\infty}}^2
		+
		\sum_{|K|\leq n/2\!\!\!\!\!\!}s\big\|(Z^{K} \Gamma)(s,\cdot)\big\|_{L^{\infty}} \sum_{|M|\leq n\!\!\!\!\!\!\!\!\!\!}\big|(Z^{M} \Gamma)(X_1)\big|.
	\end{equation*}
\end{proposition}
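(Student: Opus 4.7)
The plan is to reduce the proposition to the previously-established $k=0$ estimates (Proposition \ref{prop:GammaXPminusGammaX1P1high} and Corollary \ref{cor:GammaXPminusGammaX1P1low}) by exploiting the explicit product structure of $\widehat{\Gamma}^{(k)}$ given in \eqref{eq:Gammak}. First I would split the difference using
\[
\widehat{\Gamma}^{(k)}(X,\Ph) - \widehat{\Gamma}^{(k)}(X_1,\Ph_1)
=
\sum_{\vec{k}} \Bigl[
(\partial^{k_1}\!\Gamma)(X) \cdots (\partial^{k_m}\!\Gamma)(X) - (\partial^{k_1}\!\Gamma)(X_1) \cdots (\partial^{k_m}\!\Gamma)(X_1)
\Bigr] \cdot \Lambda_{\vec{k}}(\Ph)
+ (\partial^{k_1}\!\Gamma)(X_1)\cdots(\partial^{k_m}\!\Gamma)(X_1) \cdot \bigl[\Lambda_{\vec{k}}(\Ph) - \Lambda_{\vec{k}}(\Ph_1)\bigr],
\]
and then telescope the first bracket one factor at a time, producing terms proportional to $(\partial^{k_j}\Gamma)(X) - (\partial^{k_j}\Gamma)(X_1)$ multiplied by the remaining factors evaluated at either $X$ or $X_1$.

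Next I would apply $\Zb^L$ via the Leibniz rule to each term, handling the differences $(\partial^{k_j}\Gamma)(X) - (\partial^{k_j}\Gamma)(X_1)$ and $\Lambda_{\vec{k}}(\Ph) - \Lambda_{\vec{k}}(\Ph_1)$ with the chain-rule identity Lemma \ref{lem:chainruleidentity} (applied exactly as in Lemmas \ref{lem:ZIHPminusHP1} and \ref{lem:ZIGammaXminuGammaX1Linfty}, now with $F = \partial^{k_j}\Gamma$ and $F = \Lambda_{\vec{k}}$ respectively). The Leibniz distributions split into a ``low'' regime in which every factor carries at most $|L|/2$ vector-field derivatives (so all but one factor is $L^\infty$-controlled by the hypothesis $|\Zb^K X|/s + |\Zb^K \Ph| \leq C$ for $|K|\leq |L|/2$), and a ``high'' regime in which a single factor absorbs more than $|L|/2$ derivatives. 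The output has the same shape as in Proposition \ref{prop:GammaXPminusGammaX1P1high}: a ``good'' piece proportional to $L^\infty$ norms of $Z^K\Gamma$ times pointwise bounds on $\Zb^J(X-X_1)/s$ and $\Zb^J(\Ph-\Ph_1)$, and a ``hard'' residual containing $|(Z^J\Gamma)(X) - (Z^J\Gamma)(X_1)|$ and $|(Z^J\Gamma)(X_1)|$.

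The final bookkeeping step is to convert every appearance of $\partial^{k_j}\Gamma$ into the corresponding $Z^I\Gamma$ via Lemma \ref{lem:ZGammaK}, producing $s^{-k_j}\sum_{|I|\leq k_j}|Z^I\Gamma|$. Since $k_1+\cdots+k_m = k+1-m$, summing over the factors contributes $s^{-(k+1-m)}$ and the total number of $Z$-derivatives on $\Gamma$ is at most $|L| + k + 1 - m \leq n$, consistent with the ranges $|J|\leq n$, $|K|\leq n/2+1$ appearing in the estimate. Multiplying by the prefactor $s^k$ leaves a net factor of $s^{m-1}$: for the first bound $m\geq 1$ and $s^{m-1}\geq 1$ is harmless, so the estimate follows verbatim from the $k=0$ case with $n$ in place of $|I|$.

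For the second bound the constraint $m\geq 2$ in \eqref{eq:Gammak2} forces $s^{m-1}\geq s$, which provides the extra $s$ prefactor in the advertised estimate. Since $(X_1,\Ph_1)$ is known, no telescoping is needed and it suffices to distribute $\Zb^L$ across the $m\geq 2$ factors; placing all but (at most) one factor in $L^\infty$ with at most $n/2$ derivatives each (possible in the ``low'' regime by hypothesis) and keeping the final factor either in $L^\infty$ or pointwise at $X_1$ yields the two terms $s\|Z^K\Gamma\|_{L^\infty}^2$ and $s\|Z^K\Gamma\|_{L^\infty}|(Z^M\Gamma)(X_1)|$. The main obstacle, and only substantive difficulty, is the combinatorial accounting that ensures the $L^\infty$ norms genuinely carry indices no larger than $n/2$ (so that the hypothesis \eqref{eq:Gammasuppf} may be invoked), while a single pointwise factor is allowed to absorb up to $n$ indices; this is achieved by always placing the largest sub-index $|J_k|$ on the pointwise factor and the smaller ones on the $L^\infty$-bounded factors.
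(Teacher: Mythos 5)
Your proposal takes essentially the same route as the paper, which proves this proposition by invoking the product structure \eqref{eq:Gammak}--\eqref{eq:Gammak2} together with Lemma \ref{lem:ZGammaK} and then rerunning the $k=0$ argument of Proposition \ref{prop:GammaXPminusGammaX1P1high} with $\widehat{\Gamma}^{(k)}$ in place of $\widehat{\Gamma}$; your telescoping of the product and the high/low splitting via Lemma \ref{lem:chainruleidentity} is exactly the content the paper leaves implicit. One step of your bookkeeping is stated incorrectly, although the conclusion survives: after converting each $\partial^{k_j}\Gamma$ to $s^{-k_j}\sum|Z^I\Gamma|$ and multiplying by $s^k$, the leftover factor $s^{m-1}$ is not ``harmless because it is $\geq 1$'' --- an unbounded factor on the right-hand side of an upper bound would destroy the first estimate --- rather it is cancelled because $m-1$ of the resulting $\Gamma$ factors carry at most $n/2$ vector fields and are therefore bounded by $s\big\|(Z^{K}\Gamma)(s,\cdot)\big\|_{L^{\infty}}\lesssim 1$ under \eqref{eq:Gammasuppf}, which is precisely the absorption already built into Lemma \ref{lem:ZGammaK}. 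With that correction the index ranges and both estimates come out as stated.
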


The following corollary of Proposition \ref{prop:GammaXPminusGammaX1P1hightimeder} is used at lower orders.

\begin{corollary} \label{cor:GammaXPminusGammaX1P1lowtimeder}
	Suppose that $|\Zb^{K}\! X|/s+|\Zb^{K} \!\widehat{P}|\leq C$, for $ |K|\leq |L|/2$,
	and $|\Zb^{M}\!X_1|/s+|\Zb^{M}\!\widehat{P}_1|\leq C$, for $ |M|\leq |L|$. Set $n=|L|+k$. Then
	\begin{align*}
		s^k \big|\Zb^L\big(\widehat{\Gamma}^{(k)}(X,\widehat{P})-\widehat{\Gamma}^{(k)}(X_1,\widehat{P}_1)\big)\big|
		&\lesssim
		\sum_{|K|\leq n+1\!\!\!\!\!\!}\big\|(Z^{K} \Gamma)(s,\cdot)\big\|_{L^{\infty}} \sum_{|J|\leq |L|\!\!\!}
		\Big(\frac{|\Zb^J (X\!-\!X_1)|\!}{s}+\big|\Zb^{J} (\widehat{P}\!-\!\widehat{P}_1)\big|
		\Big),\\
		s^k\big|\Zb^L\big(\widehat{\Gamma}^{(k,2)}(X_1,\widehat{P}_1)\big)\big|
		&\lesssim
		\sum_{|K|\leq n\!\!\!}s\big\|(Z^{K} \Gamma)(s,\cdot)\big\|_{L^{\infty}}^2 .
	\end{align*}
\end{corollary}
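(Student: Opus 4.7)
The plan is to derive the corollary directly from Proposition \ref{prop:GammaXPminusGammaX1P1hightimeder}, using the stronger low order hypothesis $|\Zb^M X_1|/s + |\Zb^M \widehat{P}_1| \leq C$ for all $|M| \leq |L|$ (not merely $|M|\leq |L|/2$) to upgrade every pointwise evaluation along $X_1$ to an $L^\infty$ bound on $\Gamma$. Throughout, $n=|L|+k$.

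For the first estimate, the first term in the bound from Proposition \ref{prop:GammaXPminusGammaX1P1hightimeder} is already in the required form, so the work reduces to absorbing the two remaining contributions into the same structure. The difference term $\sum_{|J|\leq n}|(Z^J\Gamma)(X)-(Z^J\Gamma)(X_1)|$ I would handle by a mean-value argument: since $X$ and $X_1$ share the same time coordinate $s$, and both lie in the interior region $|\cdot|\leq cs$ (using the hypotheses on $\Zb^K X$ and on $X_1$), one has
\[
|(Z^J\Gamma)(X)-(Z^J\Gamma)(X_1)|
\lesssim |X-X_1|\,\sup_{|y|\leq cs}|(\partial_{x}Z^J\Gamma)(s,y)|.
\]
The interior identity \eqref{eq:partialZ} then converts the spatial partial derivative into combinations of $Z$ with an $s^{-1}$ weight, yielding
\[
|(Z^J\Gamma)(X)-(Z^J\Gamma)(X_1)|
\lesssim \frac{|X-X_1|}{s}\sum_{|K|\leq |J|+1}\|(Z^K\Gamma)(s,\cdot)\|_{L^\infty},
\]
and the prefactor $|X-X_1|/s$ is exactly the $J=\emptyset$ entry of the sum $\sum_{|J|\leq |L|}|\Zb^J(X-X_1)|/s$ appearing in the statement. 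The remaining term in Proposition \ref{prop:GammaXPminusGammaX1P1hightimeder}, of the form $\sum_{|J|\leq n}\sum_{|K|\leq |L|/2}(\ldots)|(Z^J\Gamma)(X_1)|$, is handled even more directly: because $X_1(s,t,x)=sx/t$ lies in the interior region under $|x|/t\leq c$, each pointwise evaluation can simply be replaced by $\|(Z^J\Gamma)(s,\cdot)\|_{L^\infty}$, producing a contribution already included in the desired sum.

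For the second estimate, the bound from Proposition \ref{prop:GammaXPminusGammaX1P1hightimeder} consists of one term of the desired type $\sum_{|K|\leq n/2}s\|(Z^K\Gamma)\|_{L^\infty}^2$ and a cross term of the form $\sum_{|K|\leq n/2}s\|(Z^K\Gamma)\|_{L^\infty}\sum_{|M|\leq n}|(Z^M\Gamma)(X_1)|$. Replacing the pointwise evaluation by the $L^\infty$ norm as above and applying the elementary inequality $ab\leq (a^2+b^2)/2$ to the product of two $L^\infty$ norms of $Z^K\Gamma$ and $Z^M\Gamma$ absorbs this cross term into $\sum_{|I|\leq n}s\|(Z^I\Gamma)(s,\cdot)\|_{L^\infty}^2$, as required.

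There is no serious obstacle here: the corollary is essentially a simplification of Proposition \ref{prop:GammaXPminusGammaX1P1hightimeder} in which the extra low order hypothesis permits uniform $L^\infty$ estimates for all relevant derivatives. The one step that requires care is the mean-value reduction of $(Z^J\Gamma)(X)-(Z^J\Gamma)(X_1)$, which relies on both $X$ and $X_1$ remaining in the interior region $|\cdot|/s \leq c<1$ along the support of $f$, so that the translation between Cartesian partial derivatives and the Minkowski vector fields $Z$ via \eqref{eq:partialZ} is valid.
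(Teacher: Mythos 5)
Your proof is correct and follows exactly the route the paper intends: the corollary is stated as an immediate consequence of Proposition \ref{prop:GammaXPminusGammaX1P1hightimeder}, obtained by bounding every pointwise evaluation of $Z^J\Gamma$ along $X_1$ by its $L^\infty$ norm and treating the difference $(Z^J\Gamma)(X)-(Z^J\Gamma)(X_1)$ by the mean value theorem together with $\partial\Gamma \sim s^{-1}\sum Z\Gamma$ in the interior, which accounts for the shift from $n$ to $n+1$ derivatives. No gaps.
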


\subsection{The final conditions}
	Note that, for any $Y(s,t,x,\ph)$,
	\begin{equation} \label{eq:Yid10}
		\Zb \left( Y(t,t,x,\ph) \right)
		=
		\Zb(t) \frac{dY}{ds}(t,t,x,\ph) + (\Zb Y)(t,t,x,\ph),
\quad\text{where}\quad (\Zb Y)(t,t,x,\ph)
		=
		\Zb \left( Y(s,t,x,\ph) \right) \vert_{s=t}.
	\end{equation}
	Repeated application \eqref{eq:Yid10} inductively implies that
	\begin{align} \label{eq:Yidnew0}
		\begin{split}
		&
		(\Zb^I Y)(t,t,x,\ph)
		=
		\Zb^I \left( Y(t,t,x,\ph) \right)
		+
		\!\!\!\sum_{\substack{
		J_1 + \dots +J_{k+1} +J =  I ,\,
		|J_i|\geq 1, \, k\geq 0
		}\!\!\!\!\!\!\!\!\!\!\!
        \!\!\!\!\!\!\!\!\!\!\!\!\!\!\!\!\!\!\!\!\!\!\!\!\!\!\!\!\!\!\!\!\!\!\!\!\!\!\!}
		C_{I,J_1,\dots,J_{k+1},J}\,\,\,
		\Zb^{J_1\!}(t)\cdots \Zb^{J_{k+1}\!}(t)
	    \big(\Zb^{J}Y^{(k+1)}\big)(t,t,x,\ph),
		\end{split}
	\end{align}	
where $Y^{(k)}=d^k Y/ds^k$ and
$\Zb^{J_i}(t)=Z^{J_i}(t)$ are constants time $t$ or $x^j$, for some $j$.
Applying \eqref{eq:Yidnew0} to $Y=\Xb$ and $Y=\Pb$ noting that $\Xb(t,t,x,\ph)=\Pb(t,t,x,\ph)=0$ gives
\begin{align} \label{eq:Pidnew}
		(\Zb^I \Pb)(t,t,x,\ph)
		&=\!\!\!\sum_{\substack{
		J_1 + \dots +J_{k+1} +L =  I ,\,
		|J_i|\geq 1, \, k\geq 0
		}\!\!\!\!\!\!\!\!\!\!\!
        \!\!\!\!\!\!\!\!\!\!\!\!\!\!\!\!\!\!\!\!\!\!\!\!\!\!\!\!\!\!\!}
		C_{I,J_1,\dots,J_{k+1},L}\,\,\,
		\Zb^{J_1\!}(t)\cdots \Zb^{J_{k+1}\!}(t)
	    \big(\Zb^{L}\Pb^{(k+1)}\big)(t,t,x,\ph),\\
		(\Zb^I \Xb)(t,t,x,\ph)
		&=\!\!\!\sum_{\substack{
		J_1 + \dots +J_{k+2} +L =  I ,\,
		|J_i|\geq 1, \, k\geq 0
		}\!\!\!\!\!\!\!\!\!\!\!
        \!\!\!\!\!\!\!\!\!\!\!\!\!\!\!\!\!\!\!\!\!\!\!\!\!\!\!\!\!\!\!}
		C^\prime_{I,J_1,\dots,J_{k+2},L}\,\,\,
		\Zb^{J_1\!}(t)\cdots \Zb^{J_{k+2}\!}(t)
	    \big(\Zb^{L}\Pb^{(k+1)}\big)(t,t,x,\ph).
		\label{eq:Xidnew}
	\end{align}
where for the proof of the last inequality we also used the first.
Hence
\begin{align*}
		|(\Zb^I \Pb)(t,t,x,\ph)|
		&\lesssim
		\sum_{|L|+k\leq |I|-1,\, k\geq 0\!\!\!\!\!\!\!\!\!\!\!\!\!\!\!\!\!\!}
		t^{k+1}
	    \big|\big(\Zb^{L}\Pb^{(k+1)}\big)(t,t,x,\ph)\big|,\\
		|(\Zb^I \Xb)(t,t,x,\ph)|
		&\lesssim
		\sum_{|L|+k\leq |I|-2,\, k\geq 0\!\!\!\!\!\!\!\!\!\!\!\!\!\!\!\!\!\!}
		t^{k+2}
	    \big|\big(\Zb^{L}\Pb^{(k+1)}\big)(t,t,x,\ph)\big|.
	\end{align*}	
and by \eqref{eq:timederivativeequations}:
\begin{lemma} With  $\widehat{\Gamma}^{(k)}$ and $\widehat{\Gamma}^{(k,2)}$ as in
\eqref{eq:Gammak} and \eqref{eq:Gammak2}
we have
\begin{align*}
		|\Zb^I \Pb|
		&\lesssim
		\sum_{|L|+k\leq |I|-1,\, k\geq 0\!\!\!\!\!\!\!\!\!\!\!\!\!}
		t^{k+1}
	    \big|\Zb^{L}
\big(\widehat{\Gamma}^{(k)}(\Xh\!,\Ph)
-\widehat{\Gamma}^{(k)}(\Xh_{\!\!1},\Ph_{\!\!1})\big)\big|
+	t^{k+1}
	    \big|\Zb^{L}
\big(\widehat{\Gamma}^{(k,2)}(\Xh_{\!\!1},\Ph_{\!\!1})\big)\big|,\quad\text{at}\quad s=t,\\
		|\Zb^I \Xb|
		&\lesssim
		\sum_{|L|+k\leq |I|-2,\, k\geq 0\!\!\!\!\!\!\!\!\!\!\!\!\!}
		t^{k+2}
	    \big|\Zb^{L}
\big(\widehat{\Gamma}^{(k)}(\Xh\!,\Ph)
-\widehat{\Gamma}^{(k)}(\Xh_{\!\!1},\Ph_{\!\!1})\big)\big|
+	t^{k+2}
	    \big|\Zb^{L}
\big(\widehat{\Gamma}^{(k,2)}(\Xh_{\!\!1},\Ph_{\!\!1})\big)\big|,\quad\text{at}\quad s=t,
\end{align*}
where everything is evaluated at $(s,t,x,\ph)$ where $s=t$.
\end{lemma}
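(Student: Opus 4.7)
The plan is to combine the identities \eqref{eq:Pidnew} and \eqref{eq:Xidnew}, which have already been derived from \eqref{eq:Yidnew0} using $\Xb(t,t,x,\ph) = \Pb(t,t,x,\ph) = 0$, with the pointwise bound $|\Zb^{J_i}(t)| \lesssim t$ for $|J_i| \geq 1$, and then substitute the equation \eqref{eq:timederivativeequations} to express $\Pb^{(k+1)}$ in terms of $\widehat\Gamma^{(k)}$ and $\widehat\Gamma^{(k,2)}$.

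First I would establish the bound $|\Zb^{J_i}(t)| \lesssim t$ for $|J_i| \geq 1$ in the region $|x| \leq ct$. Since $\Zb$ agrees with $Z$ on functions of $(t,x)$ alone, $\Zb^{J_i}(t) = Z^{J_i}(t)$, and one checks directly from $\Omega_{ij}(t) = 0$, $B_i(t) = x^i$, $S(t) = t$ (and $\Omega_{ij}(x^k) = x^i\delta_j^k - x^j\delta_i^k$, $B_i(x^k) = t\delta_i^k$, $S(x^k) = x^k$) that each $Z^{J_i}(t)$ is a homogeneous polynomial of degree one in $(t,x^1,x^2,x^3)$. Hence, when $|x| \leq ct$, we have $|\Zb^{J_i}(t)| \leq Ct$, and therefore
\[
\bigl|\Zb^{J_1}(t) \cdots \Zb^{J_{k+1}}(t)\bigr| \lesssim t^{k+1},
\qquad
\bigl|\Zb^{J_1}(t) \cdots \Zb^{J_{k+2}}(t)\bigr| \lesssim t^{k+2}.
\]

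Next I would insert these bounds directly into \eqref{eq:Pidnew} and \eqref{eq:Xidnew}. The constraint $|J_i| \geq 1$ together with $J_1 + \cdots + J_{k+1} + L = I$ (respectively $J_1 + \cdots + J_{k+2} + L = I$) forces $|L| + k \leq |I| - 1$ (respectively $|L| + k \leq |I| - 2$). This gives, at $s = t$,
\[
|\Zb^I \Pb| \lesssim \!\!\!\sum_{\substack{|L|+k \leq |I|-1 \\ k \geq 0}}\!\!\! t^{k+1} |\Zb^L \Pb^{(k+1)}|, \qquad
|\Zb^I \Xb| \lesssim \!\!\!\sum_{\substack{|L|+k \leq |I|-2 \\ k \geq 0}}\!\!\! t^{k+2} |\Zb^L \Pb^{(k+1)}|.
\]

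Finally, I would use the second equation in \eqref{eq:timederivativeequations} to identify
\[
\Pb^{(k+1)} = \widehat\Gamma^{(k)}(\Xh,\Ph) - \widehat\Gamma^{(k)}(\Xh_1,\Ph_1) + \widehat\Gamma^{(k,2)}(\Xh_1,\Ph_1),
\]
since $\Pb^{(k+1)} = \frac{d}{ds}\Pb^{(k)}$ and the right-hand side is the prescribed expression for $\frac{d\Pb^{(k)}}{ds}$. Applying $\Zb^L$ to this identity and substituting into the displayed bounds yields exactly the two inequalities claimed in the lemma. There is no genuine obstacle here: the combinatorial content was already packaged into \eqref{eq:Pidnew}--\eqref{eq:Xidnew} via the induction on $|I|$, and the remaining steps are pointwise estimates of the factors $\Zb^{J_i}(t)$ and a direct substitution of the ODE for $\Pb^{(k)}$.
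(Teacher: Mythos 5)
Your proof is correct and follows essentially the same route as the paper: it reads off the bounds $|\Zb^{J_i}(t)|=|Z^{J_i}(t)|\lesssim t$ for $|x|\leq ct$ from the explicit linear form of $Z^{J_i}(t)$, inserts them into \eqref{eq:Pidnew}--\eqref{eq:Xidnew}, and substitutes $\Pb^{(k+1)}=\frac{d}{ds}\Pb^{(k)}$ from \eqref{eq:timederivativeequations}. Nothing is missing.
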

We can now apply the estimates from the previous section.
\begin{proposition} \label{prop:ZXZPkiclowder}
	Suppose $t\geq t_0 + 1$, $\vert x \vert \leq c t$, $(t,x,\ph) \in \supp(f)$ and the bounds \eqref{eq:Gammasuppf} hold.  Then, for $\vert I \vert \leq N/2+2$,
	\begin{equation*}
		\big\vert
		\Zb^I \big( \Pb^i(s,t,x,\ph) \big) \big\vert_{s=t}
		\big\vert
		+t^{-1}\big\vert
		\Zb^I \big( \Xb^i(s,t,x,\ph) \big) \big\vert_{s=t}
		\big\vert
		\lesssim
		\varepsilon t^{-2a}.
	\end{equation*}
\end{proposition}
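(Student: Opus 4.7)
The proof goes by induction on $|I|$, with the base case $|I|=0$ trivial since $\Xb(t,t,x,\ph)=\Pb(t,t,x,\ph)=0$. For the inductive step I would first apply the Lemma stated immediately before the Proposition to reduce the estimation of $(\Zb^I\Pb)(t,t,x,\ph)$ and $(\Zb^I\Xb)(t,t,x,\ph)$ to controlling
\[
t^{k+1}\bigl|\Zb^L\bigl(\widehat{\Gamma}^{(k)}(\Xh,\Ph)-\widehat{\Gamma}^{(k)}(\Xh_1,\Ph_1)\bigr)\bigr|_{s=t}
\quad\text{and}\quad
t^{k+1}\bigl|\Zb^L\widehat{\Gamma}^{(k,2)}(\Xh_1,\Ph_1)\bigr|_{s=t},
\]
over $|L|+k\le|I|-1$ (resp.\ $t^{k+2}$ and $|L|+k\le|I|-2$ for $\Xb$).

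Next I would apply Corollary \ref{cor:GammaXPminusGammaX1P1lowtimeder}, whose hypothesis $|\Zb^K X|/s+|\Zb^K\Ph|\le C$ for $|K|\le|L|/2$ is verified by writing $X=X_2+\Xb$, $\Ph=\Ph_{\!2}+\Pb$ and combining Corollary \ref{cor:X2bound} and Proposition \ref{prop:Zbmisc} with the inductive hypothesis at strictly lower orders. This yields
\[
t^k\bigl|\Zb^L\bigl(\widehat{\Gamma}^{(k)}(\Xh,\Ph)-\widehat{\Gamma}^{(k)}(\Xh_1,\Ph_1)\bigr)\bigr|_{s=t}
\lesssim \frac{\varepsilon}{t^{1+a}}\sum_{|J|\le|L|}\Bigl(\tfrac{|\Zb^J(X-X_1)|_{s=t}}{t}+|\Zb^J(\Ph-\Ph_{\!1})|_{s=t}\Bigr),
\]
together with $t^k|\Zb^L\widehat{\Gamma}^{(k,2)}(\Xh_1,\Ph_1)|_{s=t}\lesssim t\varepsilon^2 t^{-2-2a}=\varepsilon^2 t^{-1-2a}$. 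Each difference on the right is decomposed as $\Zb^J(X-X_1)=\Zb^J\Xb+\Zb^J(X_2-X_1)$ and analogously for $\Ph-\Ph_{\!1}$. The $\Zb^J\Xb|_{s=t}$ and $\Zb^J\Pb|_{s=t}$ contributions with $|J|\le|L|\le|I|-1$ are controlled by the inductive hypothesis (giving $\varepsilon t^{-2a}$, absorbed by smallness of $\varepsilon$); the pieces $\Zb^J(X_2-X_1)|_{s=t}/t$ and $\Zb^J(\Ph_{\!2}-\Ph_{\!1})|_{s=t}$ can be computed directly from the explicit formulas \eqref{eq:X2P2def}, observing that at $s=t$ they reduce to terms proportional to $\ph-x/t$, which is bounded in $\Zb^J$ by $t^{-a}$ via the low-order estimate \eqref{eq:Zbxtph}.

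Combining the pieces, the $\widehat{\Gamma}^{(k)}$-difference contribution yields $t^{k+1}\cdot t^{-k}\cdot\varepsilon t^{-1-a}\cdot t^{-a}=\varepsilon t^{-2a}$ in $|\Zb^I\Pb|_{s=t}$ and $\varepsilon t^{1-2a}$ in $|\Zb^I\Xb|_{s=t}$; the $\widehat{\Gamma}^{(k,2)}$ contribution carries an extra power of $\varepsilon$ with the same $t$-decay. Dividing the $\Xb$ bound by $t$ gives the claimed estimate. The main technical obstacle is the bookkeeping: one must verify carefully that the non-commutation of $\Zb$ with evaluation at $s=t$, which is non-trivial for the boost and scaling fields where $\Zb(t)\neq 0$, does not destroy the $t^{-a}$ decay of $\Zb^J(X_2-X_1)|_{s=t}/t$ and $\Zb^J(\Ph_{\!2}-\Ph_{\!1})|_{s=t}$, and that the accounting of the $s$-derivative order $k$ against the vector-field order $|L|$ always respects the constraint $|L|+k+1\le N/2+2$ required to invoke the pointwise bounds \eqref{eq:Gammasuppf} on the relevant derivatives of $\Gamma$.
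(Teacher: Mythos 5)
Your proposal is correct and follows essentially the same route as the paper: induction on $|I|$, reduction via the final-conditions lemma to the $\widehat{\Gamma}^{(k)}$-difference and $\widehat{\Gamma}^{(k,2)}$ terms, an application of Corollary \ref{cor:GammaXPminusGammaX1P1lowtimeder}, and the decomposition $X-X_1=\Xb+(X_2-X_1)$, $\Ph-\Ph_{\!1}=\Pb+(\Ph_{\!2}-\Ph_{\!1})$ with the low-order bounds \eqref{eq:X2minusX1}, \eqref{eq:P2minusP1} supplying the $t^{-a}$ gain. The only cosmetic difference is that you re-derive the $t^{-a}$ bound on $\Zb^J(X_2-X_1)/s$ from \eqref{eq:Zbxtph} rather than citing \eqref{eq:X2minusX1} and \eqref{eq:P2minusP1} directly.
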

\begin{proof} We will use induction to prove this. Assuming its true for $|I|< m\leq N/2+1$ the assumptions of Corollary \ref{cor:GammaXPminusGammaX1P1lowtimeder} hold at $s=t$ and writing $X-X_1=\Xb+X_2-X_1$
and $\Ph-\widehat{P}_1=\Pb+\widehat{P}_2-\widehat{P}_1$ and using the estimates \eqref{eq:X2minusX1} and \eqref{eq:P2minusP1} for $X_2-X_1$ respectively $\widehat{P}_2-\widehat{P}_1$ respectively we get
\begin{align*}
t^{k+1}\big|\Zb^L\!\big(\widehat{\Gamma}^{(k)}(X,\widehat{P})-\widehat{\Gamma}^{(k)}(X_1,\widehat{P}_1)\big)\big|
&\lesssim\frac{\varepsilon}{t^{a}}\!\!\!\!\sum_{|J|\leq |L|}\!\!\!\!
\Big(\frac{|\Zb^J\!\! (X\!-\!X_1)|\!}{t}+\big|\Zb^{J} \! (\widehat{P}\!-\!\widehat{P}_1)\big|\Big)
\lesssim\frac{\varepsilon}{t^{2a}\!\!}+\frac{\varepsilon}{t^{a}}\!\!\!\!\sum_{|J|\leq |L|}\!\!\!\!
\Big(\frac{|\Zb^J \!\Xb|\!}{s}+\big|\Zb^{J}\! \Pb\big|\Big)\\
t^{k+1}\big|\Zb^L\big(\widehat{\Gamma}^{(k,2)}(X_1,\widehat{P}_1)\big)\big|
&\lesssim \frac{\varepsilon}{t^{2a}} .
\end{align*}
Assuming that the proposition is true for $|I|<m$ it now follows from this and previous lemma that its also true for $|I|=m$.
\end{proof}

\begin{proposition} \label{prop:ZXZPkichighder}
	Suppose $t\geq t_0 + 1$, $\vert x \vert \leq c t$, $(t,x,\ph) \in \supp(f)$ and the bounds \eqref{eq:Gammasuppf} hold.  Then, for $|I|\leq N$,
	\begin{align*}
		\left\vert
		\Zb^J \left( \Pb^i(s,t,x,\ph) \right) \vert_{s=t}
		\right\vert
		\lesssim
		\
		&
		\frac{\varepsilon}{t^{2a}}
		+
		\frac{1}{t^a}\sum_{\vert J \vert \leq \vert I \vert -1}
		\bigg(
		t \left\vert (Z^J \Gamma) (t,x) \right\vert
		+\int_{t_0}^t \frac{s' - t_0}{t-t_0} \left\vert (Z^J \Gamma) \left(s', s'\frac{x}{t} \right) \right\vert ds'
		\bigg),
		\\
		\left\vert
		\Zb^I \left( \Xb^i(s,t,x,\ph) \right) \vert_{s=t}
		\right\vert
		\lesssim
		\
		&
		\varepsilon t^{1-2a}
		+
		t^{1-a}\sum_{\vert J \vert \leq \vert I \vert -1}
		\bigg(
		t \left\vert (Z^J \Gamma) (t,x) \right\vert
		+
		 \int_{t_0}^t \frac{s' - t_0}{t-t_0} \left\vert (Z^J \Gamma) \left(s', s'\frac{x}{t} \right) \right\vert ds'
		\bigg).
	\end{align*}
\end{proposition}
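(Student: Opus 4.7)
The plan is to argue by induction on $|I|$, with base case $|I|\leq\lfloor N/2\rfloor+2$ supplied by Proposition \ref{prop:ZXZPkiclowder}. The starting point for the inductive step is the representation of the final conditions established in the lemma preceding the statement, which expresses $|\Zb^I\Pb|_{s=t}$ as a sum over $|L|+k\leq|I|-1$ of $t^{k+1}|\Zb^L(\widehat\Gamma^{(k)}(\Xh,\Ph)-\widehat\Gamma^{(k)}(\Xh_1,\Ph_1))|+t^{k+1}|\Zb^L\widehat\Gamma^{(k,2)}(\Xh_1,\Ph_1)|$ at $s=t$, with the analogous expression for $|\Zb^I\Xb|$ carrying an extra factor of $t$ and $|L|+k\leq|I|-2$. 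I would then apply Proposition \ref{prop:GammaXPminusGammaX1P1hightimeder} to each summand; its hypothesis $|\Zb^K X|/s+|\Zb^K\Ph|\leq C$ for $|K|\leq|L|/2\leq N/2$ is verified by writing $X=X_2+\Xb$, $\Ph=\Ph_2+\Pb$ and combining Corollary \ref{cor:ZbX2} with Proposition \ref{prop:ZXZPkiclowder}.

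Applying Proposition \ref{prop:GammaXPminusGammaX1P1hightimeder} with $n=|L|+k\leq|I|-1$ produces three kinds of contributions: (i) the low-order factor $\|Z^K\Gamma(s,\cdot)\|_{L^\infty}\lesssim\varepsilon s^{-1-a}$ — valid since $|K|\leq n/2+1\leq N/2+1$ — multiplied by $|\Zb^J(X-X_1)|/s+|\Zb^J(\Ph-\Ph_1)|$ for $|J|\leq|L|$; (ii) the pointwise differences $|(Z^J\Gamma)(X)-(Z^J\Gamma)(X_1)|$ for $|J|\leq n$, which vanish identically at $s=t$ since $X(t,t,x,\ph)=X_1(t,t,x)=x$; and (iii) the mixed terms $\bigl(|\Zb^K(X-X_1)|/s+|\Zb^K(\Ph-\Ph_1)|\bigr)|(Z^J\Gamma)(X_1)|$ with $|K|\leq|L|/2$, where at $s=t$ one has $(Z^J\Gamma)(X_1)=(Z^J\Gamma)(t,x)$. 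For the residual factors I would write $X-X_1=\Xb+(X_2-X_1)$ and $\Ph-\Ph_1=\Pb+(\Ph_2-\Ph_1)$; when $|J|\leq|L|/2$, the low-order bounds \eqref{eq:X2minusX1}, \eqref{eq:P2minusP1} together with Proposition \ref{prop:ZXZPkiclowder} give the pointwise estimate $\varepsilon s^{-a}$, which combined with $\varepsilon s^{-1-a}$ from $\Gamma$ yields (after summing $t^{k+1}$ and setting $s=t$) the leading $\varepsilon t^{-2a}$ contribution. The $\widehat\Gamma^{(k,2)}$ piece of the representation is bounded directly by the second inequality of Proposition \ref{prop:GammaXPminusGammaX1P1hightimeder} and contributes only to this leading term.

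When $|J|\leq|L|\leq|I|-1$ is of top order in case (i) or (iii), the inductive hypothesis controls $\Zb^J\Xb|_{s=t}$ and $\Zb^J\Pb|_{s=t}$ (with the bound I am proving), while the top-order estimates \eqref{eq:X2minusX1highder}, \eqref{eq:P2minusP1highder} of Corollary \ref{cor:ZbX2} produce exactly the terms $t|(Z^J\Gamma)(t,x)|$ and $\int_{t_0}^t\frac{s'-t_0}{t-t_0}|(Z^J\Gamma)(s',s'x/t)|\,ds'$ appearing on the claimed right-hand side (the weight $s'/(s'+s)$ in Corollary \ref{cor:ZbX2} collapsing at $s=t$ to the kernel in the statement). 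The main obstacle I anticipate is the combinatorial bookkeeping: verifying in every case that the low-order $\Gamma$ factors never exceed $\lfloor N/2\rfloor+2$ derivatives (so that \eqref{eq:Gammasuppf} applies), that the high-order $Z^J\Gamma$ factors always appear at multi-index $|J|\leq|I|-1$ consistent with the statement, and that the time weights $t^{-2a}$ and $t^{1-2a}$ for $\Pb$ and $\Xb$ respectively emerge correctly from the inductive step — in particular that the additional factor of $t$ in the representation of $|\Zb^I\Xb|$ combines with the $t^{-a}$ produced by the $\Gamma$ estimates to give the claimed $t^{1-a}$ prefactor on the top-order correction.
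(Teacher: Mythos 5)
Your argument is correct and follows essentially the same route as the paper: induction on $|I|$, the final-conditions lemma, Proposition \ref{prop:GammaXPminusGammaX1P1hightimeder} together with the splitting $X-X_1=\Xb+(X_2-X_1)$, $\Ph-\Ph_1=\Pb+(\Ph_2-\Ph_1)$, the observation that $(Z^J\Gamma)(\Xh)-(Z^J\Gamma)(\Xh_1)$ vanishes at $s=t$, the inductive hypothesis for $\Zb^J\Xb$ and $\Zb^J\Pb$, and the top-order estimates \eqref{eq:X2minusX1highder}, \eqref{eq:P2minusP1highder}. The only slight imprecision is your claim that the $\widehat{\Gamma}^{(k,2)}$ piece contributes only to the leading $\varepsilon t^{-2a}$ term: the second inequality of Proposition \ref{prop:GammaXPminusGammaX1P1hightimeder} also produces a term $t^{-a}\sum_{|M|\leq |I|-1} t\,|(Z^M\Gamma)(t,x)|$, which is harmless since it is already of the form appearing on the claimed right-hand side.
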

\begin{proof} We will use induction to prove this. Assuming its true for $|I|< m\leq N/2+1$ the assumptions of Proposition \ref{prop:GammaXPminusGammaX1P1hightimeder} hold at $s=t$
\begin{equation*}
t^{k+1} \big|\Zb^L\!\big(\widehat{\Gamma}^{(k)\!}(X,\!\widehat{P})
-\widehat{\Gamma}^{(k)\!}(X_{\!1},\!\widehat{P}_{1})\big)\big|\!
\lesssim
\!\frac{1}{t^{a}}\!\!\sum_{|J|\leq |L| \!\!\!\!\!\!}
\Big(\varepsilon \frac{|\Zb^J \!\!(X\!\!-\!\!X_{\!1})|\!}{t}+ \varepsilon \big|\Zb^{J} \! \!(\widehat{P}-\widehat{P}_{1})\big|\!
+\frac{\varepsilon}{t^{a}\!}
+\!\sum_{|J|\leq |L|+k\!\!\!\!\!\!\!\!\!\!\!\!\!\!\!}t\big|(Z^{J} \Gamma)(X_{\!1})\big|\Big)
\end{equation*}
and
\begin{equation*}
t^{k+1} \big|\Zb^L\big(\widehat{\Gamma}^{(k,2)}(X_1,P_1)\big)\big|
\lesssim  \frac{\varepsilon}{t^{2a}}
 + \frac{1}{t^a} \sum_{|M|\leq |L|+k\!\!\!\!\!}t\big|(Z^{M} \Gamma)(X_1)\big|,
\end{equation*}
since $\frac{|\Zb^I \!\!X_{\!1}|\!}{t}+|\Zb^I \!\!\widehat{P}_{1}| \leq C$ for any multi index $I$, using the form of the vector fields $\Zb$.
Hence writing $X-X_1=\Xb+X_2-X_1$
and $\Ph-\widehat{P}_1=\Pb+\widehat{P}_2-\widehat{P}_1$ we get
\begin{multline*}
t^{k+1} \big|\Zb^L\!\big(\widehat{\Gamma}^{(k)\!}(X,\!\widehat{P})
-\widehat{\Gamma}^{(k)\!}(X_{\!1},\!\widehat{P}_{\!1})\big)\big|\!
+t^{k+1} \big|\Zb^L\big(\widehat{\Gamma}^{(k,2)}(X_1,P_1)\big)\big|\\
\lesssim
\!\frac{\varepsilon}{t^{a}}\!\!\sum_{|J|\leq |L| \!\!\!\!\!\!}
\Big(\frac{|\Zb^J \Xb|\!}{t}+\big|\Zb^{J} \! \! \Pb\big|\Big)\! +
\!\frac{1}{t^{a}}\!\!\sum_{|J|\leq |L| \!\!\!\!\!\!}
\Big(\varepsilon \frac{|\Zb^J \!\!(X_2\!-\!\!X_{\!1})|\!}{t}+\varepsilon\big|\Zb^{J} \! \!(\widehat{P}_2\!-\widehat{P}_{\!1})\big|\!
+\frac{\varepsilon}{t^{a}\!}
+\!\sum_{|J|\leq |L|+k\!\!\!\!\!\!\!\!\!\!\!\!\!}t\big|(Z^{J} \Gamma)(X_{\!1})\big|\Big).
\end{multline*}
Using induction for the first sum and the estimates \eqref{eq:X2minusX1highder} and \eqref{eq:P2minusP1highder} for $X_2-X_1$ and $\widehat{P}_2-\widehat{P}_1$ respectively, the proposition follows.
\end{proof}

\subsection{$L^{\infty}$ estimates for lower order derivatives of geodesics}

The estimates in the previous sections easily lead to pointwise bounds for lower order derivatives of $\Xb(s,t,x,p)$ and $\Pb(s,t,x,p)$.

\begin{proposition} \label{prop:ZXbarPbarlo}
	Suppose $t\geq t_0 + 1$, $\vert x \vert \leq c t$, $(t,x,\ph) \in \supp(f)$ and the bounds \eqref{eq:Gammasuppf} hold.  Then, for $i=1,2,3$,
	\[
		s^{2a - 1} \left\vert \Zb^I \left( \Xb(s,t,x,p) \right) \right\vert
		+
		s^{2a} \left\vert \Zb^I \left( \Pb(s,t,x,p) \right) \right\vert
		\leq
		C \varepsilon,
	\]
	for all $t_0 \leq s \leq t$, for $\vert I \vert=0,1,\ldots,\left\lfloor \frac{N}{2} \right\rfloor + 1$.
\end{proposition}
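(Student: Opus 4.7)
The plan is to argue by induction on $|I|$. The base case $|I|=0$ is exactly Proposition \ref{prop:sec23}. Fix $1\le |I|\le \lfloor N/2\rfloor +1$ and assume the stated bound holds for all multi indices of length strictly less than $|I|$. The main step is to apply $\Zb^I$ to the system \eqref{eq:Xbar}--\eqref{eq:Pbar}, namely
\[
\frac{d\Xb}{ds}=\Pb,\qquad \frac{d\Pb}{ds}=\widehat{\Gamma}\big(X(s),\Ph(s)\big)-\widehat{\Gamma}\big(X_1(s),\Ph_{\!\!1}(s)\big),
\]
where $X_1(s)=sx/t$ and $\Ph_{\!\!1}(s)=x/t$. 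Applying $\Zb^I$ and using Corollary \ref{cor:GammaXPminusGammaX1P1low}, whose hypotheses are verified for $|K|\le |I|/2$ by the inductive hypothesis together with Proposition \ref{prop:Zbmisc}, we obtain the pointwise bound
\[
\big|\Zb^I\big(\widehat{\Gamma}(X,\Ph)-\widehat{\Gamma}(X_1,\Ph_{\!\!1})\big)\big|
\lesssim \sum_{|K|\le |I|+1}\|Z^K\Gamma(s,\cdot)\|_{L^\infty}\sum_{|J|\le |I|}\Big(\tfrac{|\Zb^J(X-X_1)|}{s}+|\Zb^J(\Ph-\Ph_{\!\!1})|\Big).
\]
The decisive algebraic step is the splitting $X-X_1=\Xb+(X_2-X_1)$ and $\Ph-\Ph_{\!\!1}=\Pb+(\Ph_{\!\!2}-\Ph_{\!\!1})$. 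By \eqref{eq:X2minusX1} and \eqref{eq:P2minusP1}, for every $|J|\le |I|\le \lfloor N/2\rfloor+1\le \lfloor N/2\rfloor+2$ the ``$X_2-X_1$'' and ``$\Ph_{\!\!2}-\Ph_{\!\!1}$'' contributions are $O(s^{-a})$, while the bounds \eqref{eq:Gammasuppf} give $\|Z^K\Gamma(s,\cdot)\|_{L^\infty}\lesssim \varepsilon s^{-1-a}$ for $|K|\le |I|+1\le \lfloor N/2\rfloor+2$. Combining, and pulling the top order pieces (those with $|J|=|I|$) separately from the strictly lower order ones (controlled by the inductive hypothesis), yields
\[
\Big|\frac{d}{ds}\Zb^I\Pb\Big|\lesssim \frac{\varepsilon}{s^{1+a}}\Big(\tfrac{|\Zb^I\Xb|}{s}+|\Zb^I\Pb|\Big)+\frac{\varepsilon^2}{s^{1+2a}}\,.
\]

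Next I integrate backwards from $s=t$. The final values $\Zb^I\Xb\big|_{s=t}$ and $\Zb^I\Pb\big|_{s=t}$ are \emph{not} zero in general because the vector fields have nontrivial time components, but Proposition \ref{prop:ZXZPkiclowder} provides exactly the required estimates $|\Zb^I\Pb|_{s=t}|\lesssim \varepsilon t^{-2a}$ and $|\Zb^I\Xb|_{s=t}|\lesssim \varepsilon t^{1-2a}$ under the current smallness hypothesis on $|I|$. Integrating the $\Pb$ equation from $s$ to $t$ and using $a>1/2$ gives
\[
|\Zb^I\Pb(s)|\lesssim \frac{\varepsilon}{s^{2a}}+\int_s^t \frac{\varepsilon}{\tilde s^{1+a}}\Big(\tfrac{|\Zb^I\Xb(\tilde s)|}{\tilde s}+|\Zb^I\Pb(\tilde s)|\Big)d\tilde s,
\]
and inserting this into the $\Xb$ equation $\tfrac{d}{ds}\Zb^I\Xb=\Zb^I\Pb$ and integrating once more (using Fubini as in Section \ref{subsec:rotationexample} to convert the iterated integral into a single integral with weight $(\tilde s-s)/\tilde s^{1+a}\lesssim \tilde s^{-a}$) produces
\[
\frac{|\Zb^I\Xb(s)|}{s}+|\Zb^I\Pb(s)|\lesssim \frac{\varepsilon}{s^{2a}}+\int_s^t\frac{\varepsilon}{\tilde s^{1+a}}\Big(\tfrac{|\Zb^I\Xb(\tilde s)|}{\tilde s}+|\Zb^I\Pb(\tilde s)|\Big)d\tilde s,
\]
after which the Grönwall inequality, Lemma \ref{lem:Gronwall}, closes the estimate and yields $s^{2a-1}|\Zb^I\Xb(s)|+s^{2a}|\Zb^I\Pb(s)|\lesssim \varepsilon$, completing the induction.

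The main obstacle is bookkeeping: one must verify that the derivative counts required by Corollary \ref{cor:GammaXPminusGammaX1P1low} (namely $|K|\le |I|/2$ for $L^\infty$ control of $\Zb^K X/s$ and $\Zb^K\Ph$), by Proposition \ref{prop:ZXZPkiclowder} (for the final data at $s=t$), and by \eqref{eq:X2minusX1}, \eqref{eq:P2minusP1} and \eqref{eq:Gammasuppf} are all compatible with $|I|\le \lfloor N/2\rfloor+1$ and $|I|+1\le \lfloor N/2\rfloor+2$. Granted these counts, the structural splitting $X-X_1=\Xb+(X_2-X_1)$, which isolates the ``small'' term $\Xb$ from the purely $\Gamma$-driven correction $X_2-X_1$ (and analogously for $\Ph$), is what makes the Grönwall loop linear in the top order unknown and decoupled from the $\Gamma$ assumptions at order $|I|+1$ on the inhomogeneity side.
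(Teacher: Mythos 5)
Your proposal follows the paper's proof essentially step for step: induction on $|I|$ with base case Proposition \ref{prop:sec23}, the pointwise bound on $\Zb^I\big(\Gh(X,\Ph)-\Gh(X_1,\Ph_{\!\!1})\big)$ from Corollary \ref{cor:GammaXPminusGammaX1P1low}, the splitting $X-X_1=\Xb+(X_2-X_1)$ and $\Ph-\Ph_{\!\!1}=\Pb+(\Ph_{\!\!2}-\Ph_{\!\!1})$ together with \eqref{eq:X2minusX1}, \eqref{eq:P2minusP1}, the final conditions at $s=t$ from Proposition \ref{prop:ZXZPkiclowder}, backwards integration, Fubini, and Gr\"onwall. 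One step as written does not quite close, however: the combined inequality for $v(s)=|\Zb^I\Xb(s)|/s+|\Zb^I\Pb(s)|$ with kernel $\varepsilon\tilde s^{-1-a}$ does not follow from dividing the integrated $\Xb$-equation by $s$, because that division turns the kernel acting on $|\Zb^I\Xb(\tilde s)|/\tilde s$ and $|\Zb^I\Pb(\tilde s)|$ into $\varepsilon\tilde s^{-1-a}\cdot(\tilde s/s)$, and the factor $\tilde s/s$ is unbounded on $[s,t]$. The paper avoids this by running Gr\"onwall twice in sequence: first on the $\Pb$-inequality alone (summed over $|J|\le|I|$), which removes $|\Zb^J\Pb|$ from the integrand and leaves only $\varepsilon\int_s^t|\Zb^J\Xb(\tilde s)|\,\tilde s^{-2-a}\,d\tilde s$; then, after integrating the $\Xb$-equation and applying Fubini so that the weight $(\tilde s-s)\tilde s^{-2-a}\le\tilde s^{-1-a}$ acts on $|\Zb^J\Xb|$ itself rather than on $|\Zb^J\Xb|/\tilde s$, a second Gr\"onwall with inhomogeneity $\varepsilon s^{1-2a}$ closes the $\Xb$ estimate. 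With that reorganization your argument is the paper's. A cosmetic point: the inhomogeneous term in your differential inequality for $\Zb^I\Pb$ should be $\varepsilon s^{-1-2a}$ rather than $\varepsilon^2 s^{-1-2a}$, since the bounds \eqref{eq:X2minusX1} and \eqref{eq:P2minusP1} carry no factor of $\varepsilon$; this does not affect the conclusion.
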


\begin{proof}
	The proof proceeds by induction.  Clearly the result is true when $\vert I \vert =0$ by Proposition \ref{prop:sec23}.  Assume the result is true for all $\vert I \vert \leq k$, for some $k \leq \left\lfloor \frac{N}{2} \right\rfloor$.  Then $I$ clearly satisfies the assumptions of Corollary \ref{cor:GammaXPminusGammaX1P1low} and so, by the equations \eqref{eq:Xbar}, \eqref{eq:Pbar} and the pointwise bounds \eqref{eq:Gammasuppf},
	\[
		\Big\vert \frac{d}{ds} \Zb^I \left( \Pb^i(s) \right) \Big\vert
		=
		\left\vert \Zb^I \left( \Gh^i \left( s, X(s), \Ph(s) \right) - \Gh^i \left( s, s\frac{x}{t}, \frac{x}{t} \right) \right) \right\vert
		\lesssim
		\frac{\varepsilon}{s^{1+a}} \sum_{\vert J \vert \leq \vert I \vert}
		\Big(\frac{|\Zb^J\! (X\!-\!X_1)|\!}{s}+\big|\Zb^{J} \! (\widehat{P}\!-\!\widehat{P}_1)\big|\Big),
	\]
	where we recall $X_1(s,t,x) = s \frac{x}{t}$ and $\widehat{P}_1(t,x) = \frac{x}{t}$.  Writing $X-X_1=\Xb+X_2-X_1$ and $\Ph-\widehat{P}_1=\Pb+\widehat{P}_2-\widehat{P}_1$ and using the estimates \eqref{eq:X2minusX1} and \eqref{eq:P2minusP1} for $X_2-X_1$ respectively $\widehat{P}_2-\widehat{P}_1$ respectively, this gives,
	\[
		\Big\vert \frac{d}{ds} \Zb^I \left( \Pb^i(s) \right) \Big\vert
		\lesssim
		\frac{\varepsilon}{s^{1+2a}}
		+
		\frac{\varepsilon}{s^{1+a}}
		\sum_{\vert J \vert \leq \vert I \vert}
		\Big(
		\frac{|\Zb^J\! (\Xb(s))|\!}{s}+\big|\Zb^{J} \! (\Pb(s))\big|
		\Big).
	\]
	Integrating backwards from $s=t$ and using Proposition \ref{prop:ZXZPkiclowder},
	\[
		\left\vert \Zb^I \left( \Pb^i(s) \right) \right\vert
		\lesssim
		\frac{\varepsilon}{s^{2a}}
		+
		\varepsilon
		\sum_{\vert J \vert \leq \vert I \vert}
		\int_s^t
		\Big(
		\frac{|\Zb^J\! (\Xb(\tilde{s}))|\!}{\tilde{s}^{2+a}}
		+
		\frac{\big|\Zb^{J} \! (\Pb(\tilde{s}))\big|}{\tilde{s}^{1+a}}
		\Big)
		d \tilde{s},
	\]
	and so, after summing over $i=1,2,3$ and $I$, the Gr\"{o}nwall inequality, Lemma \ref{lem:Gronwall}, gives
	\[
		\left\vert \Zb^I \left( \Pb(s) \right) \right\vert
		\lesssim
		\frac{\varepsilon}{s^{2a}}
		+
		\varepsilon
		\sum_{\vert J \vert \leq \vert I \vert}
		\int_s^t
		\frac{|\Zb^J\! (\Xb(\tilde{s}))|\!}{\tilde{s}^{2+a}}
		d \tilde{s}.
	\]
	The equation \eqref{eq:Xbar} and Proposition \ref{prop:ZXZPkiclowder} then give, after integrating backwards from $s=t$ again,
	\[
		\left\vert \Zb^I \left( \Xb^i(s) \right) \right\vert
		\lesssim
		\varepsilon
		+
		\varepsilon
		\sum_{\vert J \vert \leq \vert I \vert}
		\int_s^t
		\frac{|\Zb^J\! (\Xb(\tilde{s}))|\!}{\tilde{s}^{1+a}}
		d \tilde{s},
	\]
	where the fact that, for any function $\lambda(s)$,
	\[
		\int_s^t \int_{s'}^t \lambda(\tilde{s}) d\tilde{s} ds'
		=
		\int_s^t \int_s^t \chi_{\{ s'\leq \tilde{s}\}} ds' \lambda(\tilde{s}) d \tilde{s}
		=
		\int_s^t (\tilde{s} - s) \lambda(\tilde{s}) d \tilde{s},
	\]
	has been used (here $\chi_{\{ s'\leq \tilde{s}\}}$ is the indicator function of the interval $[s',\tilde{s}]$).  Another application of the Gr\"{o}nwall inequality, after summing over $i=1,2,3$ and $I$, completes the proof.
\end{proof}

Corollary \ref{cor:ZbX2} and Proposition \ref{prop:ZXbarPbarlo} immediately yield the following sharp pointwise bounds.

\begin{proposition} \label{prop:ZXZPlo}
	Suppose $t\geq t_0 + 1$, $\vert x \vert \leq c t$, $(t,x,\ph) \in \supp(f)$ and the bounds \eqref{eq:Gammasuppf} hold.  Then, for $i=1,2,3$,
	\[
		s^{-1}\big\vert \Zb^I \big( X(s,t,x,p) \big) \big\vert
		+
		\big\vert \Zb^I \big( \Ph(s,t,x,p) \big) \big\vert
		\leq
		C,
	\]
	for all $t_0\leq s \leq t$, for $\vert I \vert=0,1,2,\ldots,\left\lfloor \frac{N}{2} \right\rfloor + 1$.
\end{proposition}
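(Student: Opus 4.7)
The plan is to decompose $X = X_2 + \Xb$ and $\Ph = \widehat{P}_2 + \Pb$, and combine the sharp pointwise control on the approximate geodesics afforded by Corollary \ref{cor:ZbX2} with the sharp error estimates of Proposition \ref{prop:ZXbarPbarlo}. Both ingredients are already available on the full range $|I| \leq \lfloor N/2 \rfloor + 1$ of the statement (Corollary \ref{cor:ZbX2} is available even one derivative higher).

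First I would bound $\bigl|\Zb^I(X_2^i/s)\bigr|$ and $\bigl|\Zb^I(\widehat{P}_2^i)\bigr|$ by a universal constant. The triangle inequality gives
\[
	\Big| \Zb^I \Big( \tfrac{X_2^i}{s} \Big) \Big|
	\leq \Big| \Zb^I \Big( \tfrac{x^i}{t} \Big) \Big|
	+ \Big| \Zb^I \Big( \tfrac{X_2^i}{s} - \tfrac{x^i}{t} \Big) \Big|,
\]
and the analogous splitting for $\widehat{P}_2^i - x^i/t$. Since each $\Zb$ is of the form $Z^\mu \pa_{x^\mu} + \mathring{Z}^k \pa_{\ph^k}$, it acts on any spacetime function exactly as $Z$ does; iterating \eqref{eq:Zbxt} shows that $Z^I(x^i/t)$ is a fixed smooth function of $x/t$, hence bounded on the region $|x|/t \leq c < 1$. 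The remaining difference terms are controlled by the low-order estimates \eqref{eq:X2minusX1} and \eqref{eq:P2minusP1} of Corollary \ref{cor:ZbX2}, both of which are $O(s^{-a})$ on the range of $|I|$ of interest. Therefore $\bigl|\Zb^I(X_2^i/s)\bigr| + \bigl|\Zb^I(\widehat{P}_2^i)\bigr| \lesssim 1$.

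Next I would invoke Proposition \ref{prop:ZXbarPbarlo}, which yields $s^{-1} |\Zb^I \Xb| + |\Zb^I \Pb| \lesssim \varepsilon s^{-2a}$ for $|I| \leq \lfloor N/2 \rfloor + 1$. Summing the two groups of estimates,
\[
	s^{-1} \bigl| \Zb^I X \bigr| + \bigl| \Zb^I \Ph \bigr|
	\leq s^{-1} \bigl| \Zb^I X_2 \bigr| + \bigl| \Zb^I \widehat{P}_2 \bigr|
	+ s^{-1} \bigl| \Zb^I \Xb \bigr| + \bigl| \Zb^I \Pb \bigr|
	\lesssim 1 + \varepsilon s^{-2a} \lesssim 1,
\]
using $s \geq t_0 \geq 1$ and $a > 1/2$. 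No step presents a genuine obstacle, which matches the excerpt's assertion that the proposition follows immediately from the two cited results; the only minor verification is that $\Zb$ applied to spacetime functions preserves smooth dependence on $x/t$, which is immediate from the explicit form of the vector fields.
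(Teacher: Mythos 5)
Your proof is correct and follows exactly the route the paper intends: the paper states that Corollary \ref{cor:ZbX2} and Proposition \ref{prop:ZXbarPbarlo} ``immediately yield'' the result, and your decomposition $X = X_2 + \Xb$, $\Ph = \widehat{P}_2 + \Pb$ with the low-order bounds \eqref{eq:X2minusX1}, \eqref{eq:P2minusP1} for the approximate pieces and the $\varepsilon s^{-2a}$ bounds for the errors is precisely the omitted verification. The ranges of $|I|$ in the two cited results are compatible with the statement, as you note, so nothing is missing.
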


The following form of the Gr\"{o}nwall inequality was used in the proof of Proposition \ref{prop:ZXbarPbarlo} above, and will be used in the proof of Proposition \ref{prop:ZXbarPbarho} below.

\begin{lemma} \label{lem:Gronwall}
	For continuous functions $v,a,b: [t_0 ,t] \to \mathbb{R}$, if
	\[
		v(s) \leq \int_s^t a(s') v(s') ds' + b(s),
	\]
	for $s\in [t_0,t]$, then
	\[
		v(s)
		\leq
		b(s)
		+
		\int_s^t a(s') b(s') e^{\int_s^{s'} a(s'') ds''} ds'.
	\]
\end{lemma}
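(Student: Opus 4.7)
The plan is to reduce this backward Grönwall inequality to the standard forward one by introducing the remainder integral and using the integrating factor method. The assumed hypothesis that $a(s) \geq 0$ in the applications of the previous section (for example $a(s') \sim \varepsilon/s'^{1+a}$ in the proof of Proposition \ref{prop:ZXbarPbarlo}) is what makes the argument work, though it is only a continuity hypothesis in the statement; in practice $a$ will always be nonnegative.

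First I would set
\[
    V(s) := \int_s^t a(s') v(s') \, ds',
\]
so that $V$ is continuously differentiable with $V'(s) = -a(s) v(s)$ and $V(t) = 0$. The hypothesis reads $v(s) \leq V(s) + b(s)$, and multiplying by $a(s) \geq 0$ gives $-V'(s) = a(s) v(s) \leq a(s) V(s) + a(s) b(s)$, i.e.\
\[
    V'(s) + a(s) V(s) \geq -a(s) b(s).
\]

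Next I would introduce the integrating factor $\mu(s) = \exp\bigl(\int_{t_0}^{s} a(\tau)\, d\tau\bigr)$. Multiplying the differential inequality by $\mu(s)$ turns the left-hand side into an exact derivative:
\[
    \frac{d}{ds}\bigl( V(s) \mu(s) \bigr) \geq - a(s) b(s) \mu(s).
\]
Integrating this between $s$ and $t$ and using $V(t)=0$ yields
\[
    - V(s) \mu(s) \geq - \int_s^t a(s') b(s') \mu(s')\, ds',
\]
and dividing by $\mu(s)$ gives
\[
    V(s) \leq \int_s^t a(s') b(s') \exp\!\Bigl( \int_s^{s'} a(s'')\, ds'' \Bigr) ds'.
\]
Combining with $v(s) \leq V(s) + b(s)$ then produces the claimed bound.

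There is no real obstacle here; the proof is a standard integrating-factor manipulation adapted to the backward-in-$s$ setup. The only small care one needs is choosing the integrating factor so that the constant of integration disappears when using the terminal condition $V(t)=0$, which is why one normalises $\mu$ at $t_0$ rather than at $t$, and then divides through at the end to obtain the integrand $\exp(\int_s^{s'} a\, ds'')$ in the stated form.
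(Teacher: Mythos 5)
Your argument is correct: the paper states this lemma without proof, and your integrating-factor computation (setting $V(s)=\int_s^t a v$, using $V(t)=0$, and dividing by $\mu(s)$ at the end) is exactly the standard backward Gr\"{o}nwall argument one would supply. You are also right that the step where you multiply the hypothesis by $a(s)$ requires $a\geq 0$, a hypothesis the statement omits but which holds in every application in the paper, so flagging it is appropriate rather than a defect.
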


\subsection{Higher order estimates for derivatives of geodesics}
The main result of this section is the following.

\begin{proposition} \label{prop:ZXbarPbarho}
	Suppose $t\geq t_0 + 1$, $\vert x \vert \leq c t$, $(t,x,\ph) \in \supp(f)$ and the bounds \eqref{eq:Gammasuppf} hold.  Then, for $i=1,2,3$,
	\begin{align}
		\left\vert \Zb^I \left( \Xb(s,t,x,\ph)^i \right) \right\vert
		\lesssim
		\
		&
		\varepsilon s^{1-2a}
		+
		t^{2-a} \sum_{\vert J \vert \leq |I|-1} \left\vert (Z^J \Gamma)(t,x) \right\vert
		+
		\sum_{\vert J \vert \leq |I| }
		\int_{t_0}^t (s')^{1-a}\left\vert (Z^J \Gamma) \left( s', s' \frac{x}{t} \right) \right\vert \, ds'
		\nonumber
		\\
		&
		+
		\sum_{\vert J \vert \leq \vert I \vert}
		\int_s^t
		s'
		\left\vert
		(Z^J \Gamma) \left( s', s'\frac{x}{t} \right)
		-
		(Z^J \Gamma) \left( s', X(s') \right)
		\right\vert
		ds'
		\label{eq:ZXbinduction}
	\end{align}
	and,
	\begin{align}
		\left\vert \Zb^I \left( \Pb(s,t,x,\ph)^i \right) \right\vert
		\lesssim
		\
		&
		\varepsilon s^{-2a}
		+
		\frac{t}{s^{a}} \sum_{\vert J \vert \leq |I|-1} \left\vert (Z^J \Gamma)(t,x) \right\vert
		+
		\sum_{\vert J \vert \leq |I| }
		\int_{t_0}^t (s')^{-a}\left\vert (Z^J \Gamma) \left( s', s' \frac{x}{t} \right) \right\vert \, ds'
		\nonumber
		\\
		&
		+
		\sum_{\vert J \vert \leq \vert I \vert}
		\int_s^t
		\left\vert
		(Z^J \Gamma) \big( s', s'\frac{x}{t} \big)
		-
		(Z^J \Gamma) \left( s', X(s') \right)
		\right\vert
		ds'
		\label{eq:ZPbinduction}
	\end{align}
	for all $t_0 \leq s \leq t$, $\vert I \vert \leq N$.
\end{proposition}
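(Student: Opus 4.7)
The plan is to prove Proposition \ref{prop:ZXbarPbarho} by induction on $|I|$, with the base case $|I|=0$ supplied by Proposition \ref{prop:sec23}. Fix $|I| \leq N$ and assume the bounds \eqref{eq:ZXbinduction}--\eqref{eq:ZPbinduction} hold for all multi indices of length strictly less than $|I|$. Apply $\Zb^I$ to \eqref{eq:Pbar} and integrate backwards from $s=t$: using the final-condition bound from Proposition \ref{prop:ZXZPkichighder} (which handles the nonvanishing contribution of $\Zb^I\Pb|_{s=t}$ created by the time-components of $\Zb$) one arrives at
\[
	|\Zb^I(\Pb(s))| \lesssim |\Zb^I(\Pb)|_{s=t}| + \int_s^t \left|\Zb^I\!\left(\Gh(s',X(s'),\Ph(s')) - \Gh\left(s',s'\tfrac{x}{t},\tfrac{x}{t}\right)\right)\right|\, ds'.
\]

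For the integrand one invokes Proposition \ref{prop:GammaXPminusGammaX1P1high} (whose hypotheses are met by Proposition \ref{prop:ZXZPlo}), which produces four types of contributions: (i) $\|Z^K\Gamma(s',\cdot)\|_{L^\infty}$ (for $|K|\leq |I|/2+1$) times $\sum_{|J|\leq|I|}(|\Zb^J(X-X_1)|/s' + |\Zb^J(\Ph-\Ph_1)|)$; (ii) $|(Z^J\Gamma)(X) - (Z^J\Gamma)(X_1)|$ for $|J|\leq |I|$; (iii) $|(Z^J\Gamma)(X_1)|$ for $|J|\leq |I|$ times the low-order factor $|\Zb^K(X-X_1)|/s' + |\Zb^K(\Ph-\Ph_1)|$ with $|K|\leq |I|/2$. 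Split $X-X_1 = \Xb + (X_2-X_1)$ and $\Ph-\Ph_1 = \Pb + (\Ph_2-\Ph_1)$: the low-order differences $X_2-X_1$, $\Ph_2-\Ph_1$ are controlled by Corollary \ref{cor:ZbX2} (sharp $(s')^{-a}$ decay at low orders via \eqref{eq:X2minusX1}, \eqref{eq:P2minusP1}; at high orders \eqref{eq:X2minusX1highder}, \eqref{eq:P2minusP1highder} produce exactly the $t\,|Z^J\Gamma(t,x)|$ and $\int (Z^J\Gamma)(s',s'x/t)$ terms visible on the right-hand side of the proposition). Using the pointwise hypothesis \eqref{eq:Gammasuppf} on $\|Z^K\Gamma\|_{L^\infty}$ at low orders, the contribution involving $\Pb$ itself provides a Gr\"{o}nwall coefficient $\varepsilon/(s')^{1+a}$, which is integrable.

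Collecting terms, one obtains an integral inequality for $\sum_{|J|\leq |I|}|\Zb^J\Pb(s)|$ of the form $v(s) \leq b(s) + \int_s^t \varepsilon(s')^{-1-a} v(s')\, ds'$, where $b(s)$ is precisely the right-hand side claimed in \eqref{eq:ZPbinduction} (the $\varepsilon s^{-2a}$ piece coming from the Minkowski-like baseline plus the final-condition estimate of Proposition \ref{prop:ZXZPkichighder}, and the remaining pieces from (i)--(iii) above and from $X_2-X_1$, $\Ph_2-\Ph_1$). Applying the Gr\"{o}nwall inequality of Lemma \ref{lem:Gronwall} closes the bound for $\Pb$. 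The bound \eqref{eq:ZXbinduction} for $\Xb$ then follows by applying $\Zb^I$ to \eqref{eq:Xbar}, integrating backwards from $s=t$ using $\Zb^I\Xb|_{s=t}$ from Proposition \ref{prop:ZXZPkichighder}, substituting the just-proved estimate for $\Zb^I\Pb$, and swapping the order of integration via $\int_s^t\int_{s'}^t\lambda(\tilde s)\,d\tilde s\,ds' = \int_s^t(\tilde s - s)\lambda(\tilde s)\,d\tilde s$, followed by one more application of Lemma \ref{lem:Gronwall}.

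The main obstacle is bookkeeping: carefully tracking how the decomposition $X-X_1 = \Xb + (X_2-X_1)$ interacts with the four types of terms produced by Proposition \ref{prop:GammaXPminusGammaX1P1high}, and verifying that the non-Gr\"{o}nwall pieces assemble into precisely the right-hand side of \eqref{eq:ZPbinduction}. In particular, one must check that the contribution $\sum_{|J|\leq|I|}|(Z^J\Gamma)(X)-(Z^J\Gamma)(X_1)|$ coming from (ii) is carried through unchanged (rather than further expanded via a mean value estimate, which would cost a derivative and break the count at top order) --- this is exactly why the proposition is stated with this difference explicit on the right-hand side.
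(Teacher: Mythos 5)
Your proposal is correct and follows essentially the same route as the paper: apply $\Zb^I$ to \eqref{eq:Pbar}, invoke Proposition \ref{prop:GammaXPminusGammaX1P1high} (hypotheses supplied by the low-order estimates, so the induction framing is not actually needed), split $X-X_1=\Xb+(X_2-X_1)$, control the final conditions via Proposition \ref{prop:ZXZPkichighder}, and close with two applications of Lemma \ref{lem:Gronwall} after exchanging the order of integration. The only imprecision is that the first Gr\"{o}nwall step for $\Pb$ still leaves the term $\int_s^t \varepsilon\,\tilde s^{-2-a}|\Zb^J\Xb(\tilde s)|\,d\tilde s$ on the right, so \eqref{eq:ZPbinduction} is only obtained after the bound \eqref{eq:ZXbinduction} for $\Xb$ has been established and substituted back, exactly as in the paper.
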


\begin{proof}
	Let $I$ be a multi index with $\vert I \vert \leq N$.  Using the equation \eqref{eq:Pbar} and Proposition \ref{prop:GammaXPminusGammaX1P1high},
	\begin{align*}
		\Big\vert \frac{d \Zb^I \big( \Pb^i(s) \big)}{ds} \Big\vert
		\lesssim
		\
		&
		\sum_{\vert K \vert \leq \left\lfloor \frac{\vert I \vert}{2} \right\rfloor +1}
		\Vert (Z^K\Gamma)(s,\cdot) \Vert_{L^{\infty}}
		\sum_{\vert J \vert \leq \vert I \vert}
		\Big(
		\frac{\vert \Zb^J(X-X_1) \vert}{s}
		+
		\vert \Zb^J(\Ph-\Ph_{\!\!1}) \vert
		\Big)
		\\
		&
		+
		\sum_{\vert J \vert \leq \vert I \vert}
		\left\vert (Z^J\Gamma)\big(s,s \frac{x}{t} \big) \right\vert
		\sum_{\vert K \vert \leq \left\lfloor \frac{\vert I \vert}{2} \right\rfloor +1}
		\Big(
		\frac{\vert \Zb^K(X-X_1) \vert}{s}
		+
		\vert \Zb^K(\Ph-\Ph_{\!\!1}) \vert
		\Big)
		\\
		&
		+
		\sum_{\vert J \vert \leq \vert I \vert}
		\Big\vert
		(Z^J \Gamma) \big( s, s\frac{x}{t} \big)
		-
		(Z^J \Gamma) \left( s, X(s) \right)
		\Big\vert.
	\end{align*}
	Writing,
	\[
		\frac{X(s)-X_1(s)}{s}
		=
		\frac{\Xb(s)}{s}
		+
		\frac{X_2(s)}{s} - \frac{x}{t},
		\qquad
		\Ph(s) - \Ph_{\!\!1}(s)
		=
		\Pb(s) + \Ph_2(s) - \frac{x}{t},
	\]
	and using Corollary \ref{cor:ZbX2}, Proposition \ref{prop:ZXbarPbarlo} and the pointwise bounds \eqref{eq:Gammasuppf} for $\Gamma$ gives,
	\[
		\Big\vert \frac{d \Zb^I \big( \Pb^i(s) \big)}{ds} \Big\vert
		\lesssim
		\varepsilon
		\sum_{\vert J \vert \leq \vert I \vert}
		\Big(
		\frac{\vert \Zb^J\left(\Xb(s)\right) \vert}{s^{2+a}}
		+
		\frac{\vert \Zb^J\left(\Pb(s)\right) \vert}{s^{1+a}}
		\Big)
		+
		F_{\vert I \vert}(s,t,x,\ph),
	\]
	where
	\begin{align*}
		F_{\vert I \vert}(s,t,x,\ph)
		=
		\
		&
		\frac{\varepsilon}{s^{1+2a}}
		+
		\frac{t}{s^{1+a}} \sum_{\vert J \vert \leq |I|-1} \left\vert (Z^J \Gamma)(t,x) \right\vert
		+
		\frac{1}{s^a}
		\sum_{\vert J \vert \leq \vert I \vert}
		\left\vert (Z^J\Gamma)\left(s,s \frac{x}{t} \right) \right\vert
		\\
		&
		+
		\frac{1}{s^{1+a}}
		\sum_{\vert J \vert \leq \vert I \vert}
		\int_{t_0}^t
		\left\vert (Z^J\Gamma)\left(s',s' \frac{x}{t} \right) \right\vert
		\frac{s'}{s'+s}
		ds'
		+
		\sum_{\vert J \vert \leq \vert I \vert}
		\left\vert
		(Z^J \Gamma) \left( s, s\frac{x}{t} \right)
		-
		(Z^J \Gamma) \left( s, X(s) \right)
		\right\vert.
	\end{align*}
	Integrating backwards from $s=t$ gives,
	\begin{align*}
		\left\vert \Zb^I \big( \Pb^i(s) \big) \right\vert
		\lesssim
		\left\vert \Zb^I \big( \Pb^i(s) \big) \vert_{s=t} \right\vert
		+
		\int_s^t
		\varepsilon
		\sum_{\vert J \vert \leq \vert I \vert}
		\Big(
		\frac{\vert \Zb^J\left(\Xb(\tilde{s})\right) \vert}{\tilde{s}^{2+a}}
		+
		\frac{\vert \Zb^J\left(\Pb(\tilde{s})\right) \vert}{\tilde{s}^{1+a}}
		\Big)
		+
		F_{\vert I \vert}(\tilde{s},t,x,\ph)
		d \tilde{s}.
	\end{align*}
	Summing over $i=1,2,3$ and $I$, the Gr\"{o}nwall inequality \ref{lem:Gronwall} gives
	\begin{align*}
		\left\vert \Zb^I \big( \Pb^i(s) \big) \right\vert
		\lesssim
		\left\vert \Zb^I \big( \Pb^i(s) \big) \vert_{s=t} \right\vert
		+
		\int_s^t
		\varepsilon
		\sum_{\vert J \vert \leq \vert I \vert}
		\frac{\vert \Zb^J\left(\Xb(\tilde{s})\right) \vert}{\tilde{s}^{2+a}}
		+
		F_{\vert I \vert}(\tilde{s},t,x,\ph)
		d \tilde{s}.
	\end{align*}
	Integrating backwards from $s=t$ again, the equation \eqref{eq:Xbar} implies
	\begin{align*}
		\left\vert \Zb^I\big( \Xb^i(s) \big) \right\vert
		\lesssim
		\left\vert \Zb^I \big( \Xb^i(s) \big) \vert_{s=t} \right\vert
		+
		(t-s)
		\left\vert \Zb^I \big( \Pb^i(s) \big) \vert_{s=t} \right\vert
		+
		\int_s^t
		\varepsilon
		\sum_{\vert J \vert \leq \vert I \vert}
		\frac{\vert \Zb^J\left(\Xb(\tilde{s})\right) \vert}{\tilde{s}^{1+a}}
		+
		\tilde{s} F_{\vert I \vert}(\tilde{s},t,x,\ph)
		d \tilde{s},
	\end{align*}
	where the fact that, for any function $\lambda(s)$,
	\[
		\int_s^t \int_{\tilde{s}}^t \lambda(s') ds' d \tilde{s}
		=
		\int_s^t \int_s^t \chi_{\{ \tilde{s} \leq s' \}} d \tilde{s} \lambda(s') ds'
		=
		\int_s^t (s'-s) \lambda(s') ds',
	\]
	has been used.  Another application of the Gr\"{o}nwall inequality \ref{lem:Gronwall} gives
	\begin{align*}
		\left\vert \Zb^I \big( \Xb^i(s) \big) \right\vert
		\lesssim
		\left\vert \Zb^I \big( \Xb^i(s) \big) \vert_{s=t} \right\vert
		+
		(t-s)
		\left\vert \Zb^I \big( \Pb^i(s) \big) \vert_{s=t} \right\vert
		+
		\int_s^t
		\tilde{s} F_{\vert I \vert}(\tilde{s},t,x,\ph)
		d \tilde{s}.
	\end{align*}
	The bound \eqref{eq:ZXbinduction} follows from Proposition \ref{prop:ZXZPkichighder}, along with the fact that,
	\begin{multline*}
		\int_s^t \frac{1}{\tilde{s}^{a}}
		\int_{t_0}^t
		\left\vert (Z^J\Gamma)\left(s',s' \frac{x}{t} \right) \right\vert
		\frac{s'}{s'+\tilde{s}}
		ds'
		d\tilde{s}
		\\
		\lesssim
		\int_s^t
		\frac{1}{\tilde{s}^{a}}
		\int_{\tilde{s}}^t
		\left\vert (Z^J\Gamma)\left(s',s' \frac{x}{t} \right) \right\vert
		ds'
		+
		\frac{1}{\tilde{s}^{1+a}}
		\int_{t_0}^{\tilde{s}}
		s'
		\left\vert (Z^J\Gamma)\left(s',s' \frac{x}{t} \right) \right\vert
		ds'
		d\tilde{s},
	\end{multline*}
	and, for any nonnegative function $\lambda(s)$,
	\[
		\int_s^t
		\frac{1}{\tilde{s}^{1+a}}
		\int_{t_0}^{\tilde{s}}
		s'
		\lambda(s')
		ds'
		d\tilde{s}
		\lesssim
		\int_{t_0}^t \int_{t_0}^t
		\frac{1}{\tilde{s}^{1+a}}
		\chi_{\{ s'\leq \tilde{s} \}}
		d \tilde{s}
		s' \lambda(s') ds'
		\lesssim
		\int_{t_0}^t (s')^{1-a} \lambda(s') ds',
	\]
	and
	\[
		\int_s^t
		\frac{1}{\tilde{s}^{a}}
		\int_{\tilde{s}}^t
		\lambda(s')
		ds'
		d\tilde{s}
		\lesssim
		\int_{s}^t \int_{s}^t
		\frac{1}{\tilde{s}^{a}}
		\chi_{\{ \tilde{s} \leq s' \}}
		d \tilde{s}
		\lambda(s') ds'
		\lesssim
		\int_{s}^t (s')^{1-a} \lambda(s') ds'.
	\]
	The bound \eqref{eq:ZPbinduction} follows similarly.
\end{proof}

\begin{corollary} \label{cor:ZXPho}
	Suppose $t\geq t_0 + 1$, $\vert x \vert \leq c t$, $(t,x,\ph) \in \supp(f)$ and the bounds \eqref{eq:Gammasuppf} hold.  Then, for $i=1,2,3$,
	\begin{align*}
		&
		\left\vert \Zb^I \left( X(t_0,t,x,\ph)^i \right) \right\vert
		+
		\left\vert \Zb^I \left( P(t_0,t,x,\ph)^i \right) \right\vert
		\leq
		C \bigg(
		1
		+
		\sum_{\vert J \vert \leq \vert I \vert -1}
		t^{2-a} \vert (Z^J \Gamma)(t,x) \vert
		\\
		&
		\qquad \qquad
		+
		\sum_{\vert J \vert \leq \vert I \vert }
		\int_{t_0}^t (s')^{1-a}
		\left\vert (Z^J \Gamma) \left( s', s'\frac{x}{t} \right) \right\vert
		+
		s'
		\left\vert
		(Z^J \Gamma) \left( s', s'\frac{x}{t} \right)
		-
		(Z^J \Gamma) \left( s', X(s') \right)
		\right\vert
		ds'
		\bigg).
	\end{align*}
\end{corollary}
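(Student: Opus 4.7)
The proof proceeds by the decomposition
\[
    X(t_0,t,x,\ph)^i = X_2(t_0,t,x,\ph)^i + \Xb(t_0,t,x,\ph)^i,
    \qquad
    \Ph(t_0,t,x,\ph)^i = \Ph_2(t_0,t,x,\ph)^i + \Pb(t_0,t,x,\ph)^i,
\]
and bounding the four resulting pieces separately using the results already established in this section. Since every term on the right-hand side of the corollary is linear in $\Zb^I$, it suffices to control each summand individually.

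For the approximate quantities I would apply Proposition \ref{prop:Zbmisc}. The bound \eqref{eq:ZbX2} directly controls $|\Zb^I(X_2(t_0)^i)|$, and, after writing $\Ph_2(t_0,t,x,\ph)^i = \ph^i + \int_{t_0}^t \Gh^i(s',s'x/t, x/t)\,ds'$, combining \eqref{eq:Zbph} with \eqref{eq:Zbst} (taken at the lower limit $s=t_0$) yields the analogous bound for $|\Zb^I(\Ph_2(t_0)^i)|$. Since $t\geq t_0+1\geq 2$ and $0<a<1$, the terms $t\,|(Z^J\Gamma)(t,x)|$ appearing in these estimates are dominated by $t^{2-a}|(Z^J\Gamma)(t,x)|$, and the factor $(s'-t_0)/(t-t_0)\leq 1\leq (s')^{1-a}$ shows that the weighted integrals appearing in \eqref{eq:ZbX2} and \eqref{eq:Zbst} are controlled by $\int_{t_0}^t (s')^{1-a}|(Z^J\Gamma)(s',s'x/t)|\,ds'$.

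For the difference quantities I would apply Proposition \ref{prop:ZXbarPbarho} at $s=t_0$. Since $t_0 = K/c$ is a constant depending only on the support data, $\varepsilon t_0^{1-2a}$ and $\varepsilon t_0^{-2a}$ are both $O(1)$ and absorbed into the constant $C$. The integrals $\int_{t_0}^t (s')^{1-a} |(Z^J\Gamma)(s',s'x/t)|\,ds'$ and $\int_{t_0}^t s'\,|(Z^J\Gamma)(s',s'x/t)-(Z^J\Gamma)(s',X(s'))|\,ds'$ appearing in \eqref{eq:ZXbinduction} match those in the statement verbatim, while the corresponding integrals in \eqref{eq:ZPbinduction}, which carry weights $(s')^{-a}$ and $1$ respectively, are dominated by their counterparts via $(s')^{-a}\leq (s')^{1-a}$ and $1\leq s'$ since $s'\geq t_0\geq 1$. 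Finally, the remaining term $(t/s^a)\sum|(Z^J\Gamma)(t,x)|$ from \eqref{eq:ZPbinduction} evaluated at $s=t_0$ becomes a constant multiple of $t|(Z^J\Gamma)(t,x)|\leq t^{2-a}|(Z^J\Gamma)(t,x)|$.

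There is no substantive obstacle: the corollary is a clean repackaging of Propositions \ref{prop:Zbmisc} and \ref{prop:ZXbarPbarho} at the specific time $s=t_0$. The only care required is the routine bookkeeping of powers of $t$, $s'$ and $t_0$ sketched above, all of which use only $t\geq t_0+1\geq 2$, $s'\geq t_0\geq 1$, and $\tfrac12<a<1$.
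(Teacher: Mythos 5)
Your proof is correct and is exactly the paper's argument: the paper simply states that the corollary "is an immediate consequence of Proposition \ref{prop:Zbmisc} and Proposition \ref{prop:ZXbarPbarho}," and your decomposition $X=X_2+\Xb$, $\Ph=\Ph_2+\Pb$ together with the power-counting ($t\leq t^{2-a}$, $(s')^{-a}\leq(s')^{1-a}$, $1\leq s'$, $\varepsilon t_0^{1-2a}=O(1)$) is precisely the routine bookkeeping being suppressed there.
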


\begin{proof}
	The corollary is an immediate consequence of Proposition \ref{prop:Zbmisc} and Proposition \ref{prop:ZXbarPbarho}.
\end{proof}

\subsection{Spacetime derivatives and small time}
\label{subsec:smalltime}

Since the vector fields $\Zb$ become singular at time $t=t_0$, in this section the spacetime $\partial_t$ and $\partial_{x^i}$ derivatives of $X(s,t,x,\ph)$ and $\Ph(s,t,x,\ph)$ are estimated for $t_0 \leq t \leq t_0 +1$.  Since the results of this section are local in time they are much simpler than those in previous sections.  In particular, it is not necessary to subtract the approximations $X_2$, $\Ph_2$ from $X$ and $\Ph$ respectively.  Note that $\partial$ always denotes the spacetime gradient $\partial = (\partial_t,\partial_{x^1}, \partial_{x^2}, \partial_{x^3})$.  When applied to functions on $\ms$ the derivatives are, as usual, taken with respect to the $(t,x,\ph)$ coordinate system.

It is first necessary to estimate derivatives of the equations \eqref{eq:geodesictimenormalized}.

\begin{proposition} \label{prop:lowrhsbound}
	Let $I$ be a multi index and suppose $\vert \partial^K X \vert/s + \vert \partial^K \Ph \vert \leq C$ for all $\vert K \vert \leq \frac{\vert I \vert}{2}$.  Then,
	\[
		\left\vert \partial^I \left( \Gh \left( s, X(s), \Ph(s) \right) \right) \right\vert
		\lesssim
		\sum_{\vert K \vert \leq \left\lfloor \frac{\vert I \vert}{2} \right\rfloor + 1}
		\!\!\!\!\!\Vert \partial^K \Gamma (s,\cdot) \Vert_{L^{\infty}}
		\sum_{\vert J \vert \leq \vert I \vert} \left( \vert \partial^J X(s) \vert + \vert \partial^J \Ph(s) \vert \right)
		+
		\sum_{\vert J \vert \leq \vert I \vert} \left\vert \partial^J \Gamma (s,X(s)) \right\vert.
	\]
\end{proposition}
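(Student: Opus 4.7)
The key observation is that $\widehat{\Gamma}^\mu(s,x,\ph)$ is linear in $\Gamma(s,x)$ and polynomial in $\ph$, so can be written schematically as $\widehat{\Gamma}(s,X(s),\Ph(s)) = \Gamma(s,X(s))\cdot \Lambda(\Ph(s))$, where $\Lambda$ is smooth in $\Ph$ with all derivatives bounded on the set $\{|\Ph|\leq C\}$. My plan is to expand $\partial^I$ of this product by the Leibniz rule together with Fa\`a di Bruno, and to estimate each resulting term by a case split on which factor carries the highest derivative order.

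Leibniz gives $\partial^I[\Gamma(s,X)\Lambda(\Ph)] = \sum_{I_1+I_2=I} c_{I_1 I_2}\,\partial^{I_1}[\Gamma(s,X)]\,\partial^{I_2}[\Lambda(\Ph)]$, and Fa\`a di Bruno applied to each factor yields a sum of schematic terms
\[
T = \partial^{J_1}X \cdots \partial^{J_k}X \cdot (\partial^k \Gamma)(s,X) \cdot \partial^{L_1}\Ph \cdots \partial^{L_m}\Ph \cdot (\partial^m \Lambda)(\Ph),
\]
with $|J_i|,|L_j|\geq 1$ and $\sum_i|J_i|+\sum_j|L_j|=|I|$ (the degenerate cases $k=0$ or $m=0$ being included). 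The factor $(\partial^m\Lambda)(\Ph)$ is bounded by a constant depending only on $m$ and the hypothesis $|\Ph|\leq C$.

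I then bound $T$ by case analysis. Since $\sum_i|J_i|+\sum_j|L_j|=|I|$, at most one of the counts $|J_i|$ or $|L_j|$ can exceed $|I|/2$. If, say, $|J_{i^*}|>|I|/2$, then $\sum_{i\ne i^*}|J_i|<|I|/2$, so in particular the number of $X$ factors satisfies $k-1\leq\lfloor|I|/2\rfloor$, i.e.\ $k\leq\lfloor|I|/2\rfloor+1$; hence $(\partial^k\Gamma)(s,X)$ may be controlled by $\|\partial^k\Gamma(s,\cdot)\|_{L^\infty}$, which appears on the right-hand side, while all remaining $X,\Ph$ derivative factors are of order $\leq|I|/2$ and bounded by the hypothesis, and the big factor $\partial^{J_{i^*}}X$ matches a term in $\sum_{|J|\leq|I|}|\partial^J X|$. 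The case $|L_{j^*}|>|I|/2$ is symmetric. If instead every $|J_i|,|L_j|\leq|I|/2$, then two sub-cases remain: if $k\leq\lfloor|I|/2\rfloor+1$, the factor $(\partial^k\Gamma)(s,X)$ is again bounded in $L^\infty$, all other factors are $L^\infty$-bounded, and the product is absorbed into the first sum on the right (where the $J=0$ term $|X|/s\leq C$ provides the needed slot); if $k\geq\lfloor|I|/2\rfloor+2$, I use the pointwise term $|\partial^k\Gamma(s,X(s))|$ from the second sum, with the product of bounded $X$ and $\Ph$ derivative factors absorbed into the implicit constant.

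The main obstacle is purely combinatorial bookkeeping: verifying that in every branch of the case split the index $k$ really does fall in the advertised range $k\leq\lfloor|I|/2\rfloor+1$ for the $L^\infty$ alternative, and that each product matches one of the two summands on the right-hand side. The single nontrivial algebraic fact driving this is that a big individual derivative $|J_{i^*}|>|I|/2$ forces $k-1<|I|/2$ via $\sum_{i\ne i^*}|J_i|\geq k-1$, which is exactly what is needed to keep $\partial^k\Gamma$ in the low-derivative regime where its $L^\infty$ norm is available by hypothesis.
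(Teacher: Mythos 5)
Your proposal is correct and follows essentially the same route as the paper: the paper's proof simply invokes the argument of Proposition \ref{prop:GammaXPminusGammaX1P1high}, which is exactly your scheme of writing $\Gh = \Gamma(X)\cdot\Lambda(\Ph)$, expanding by Leibniz and Fa\`a di Bruno, and splitting cases according to whether some single factor $\partial^{J_{i^*}}X$ or $\partial^{L_{j^*}}\Ph$ exceeds order $|I|/2$ (forcing $k\leq\lfloor|I|/2\rfloor+1$ so that $\|\partial^k\Gamma\|_{L^\infty}$ is usable) or all factors are of low order (in which case the pointwise term $|\partial^k\Gamma(s,X(s))|$, $k\leq|I|$, absorbs everything). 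One cosmetic point: in the sub-case where all factors are low order and $k\leq\lfloor|I|/2\rfloor+1$, your appeal to the ``$J=0$ slot'' of the first sum is not quite the right justification (that slot is $|X|+|\Ph|$, which need not be bounded below); but the term is trivially covered by the second sum on the right-hand side since $k\leq|I|$, so nothing is lost.
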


\begin{proof}
	The proof follows exactly as in Proposition \ref{prop:GammaXPminusGammaX1P1high} (and is actually much simpler).
\end{proof}

In order to use the system \eqref{eq:geodesictimenormalized} to estimate $\partial^I X(s,t,x,\ph)$ and $\partial^I \Ph(s,t,x,\ph)$, it is also necessary to estimate the final conditions (note that this is completely straightforward unless $\partial^I$ contains $\partial_t$ derivatives).

\begin{proposition} \label{prop:lowfinalcond}
	Let $I$ be a multi index with $\vert I \vert \geq 1$ and suppose $\vert \partial^J \Gamma (t,x) \vert \leq C$ for all $\vert J \vert \leq \left\lfloor \frac{\vert I \vert}{2} \right\rfloor + 1$.  Then
	\[
		\left\vert (\partial^I \Ph)(t,t,x,\ph) \right\vert
		\lesssim
		\sum_{\vert J \vert \leq \vert I \vert - 1} \left\vert (\partial^J \Gamma)(t,x) \right\vert,
		\qquad
		\left\vert (\partial^I X)(t,t,x,\ph) \right\vert
		\lesssim
		1
		+
		\sum_{\vert J \vert \leq \vert I \vert - 1} \left\vert (\partial^J \Gamma)(t,x) \right\vert.
	\]
\end{proposition}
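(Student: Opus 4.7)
The plan is to mimic the derivation of formulas \eqref{eq:Yidnew0}--\eqref{eq:Xidnew} in Section 4.6, but with partial spacetime derivatives $\partial = (\partial_t, \partial_{x^i}, \partial_{\ph^i})$ in place of the modified vector fields $\Zb$. The starting observation is that for any smooth $Y(s,t,x,\ph)$ the chain rule gives
\[
	(\partial_t Y)(t,t,x,\ph) = \partial_t(Y(t,t,x,\ph)) - Y^{(1)}(t,t,x,\ph),
\]
while $\partial_{x^i}$ and $\partial_{\ph^i}$ commute trivially with setting $s=t$. Since $\partial_t(t) = 1$ and $\partial_{x^i}(t) = \partial_{\ph^i}(t) = 0$, iterating this identity inductively produces, for any multi index $I$,
\[
	(\partial^I Y)(t,t,x,\ph) = \sum_{k+|J|\leq |I|,\, k\geq 0} c_{I,J,k}\, \partial^J\bigl(Y^{(k)}(t,t,x,\ph)\bigr),
\]
where $Y^{(k)} = d^k Y/ds^k$ and $\partial^J$ acts on the $(t,x,\ph)$ variables of the already $s$-restricted function. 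The combinatorics are cleaner than in \eqref{eq:Yidnew0} because each commutator produces just a single time derivative with a constant coefficient.

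I would then apply this identity with $Y = \Ph$ and $Y = X$. Using the geodesic equations \eqref{eq:geodesictimenormalizedsec2} together with \eqref{eq:Gammak}, the higher time derivatives at $s=t$ admit the explicit simplifications
\[
	\Ph^{(k)}(t,t,x,\ph) = \Gh^{(k-1)}(t,x,\ph),\quad k\geq 1,\qquad X^{(k)}(t,t,x,\ph) = \Ph^{(k-1)}(t,t,x,\ph),\quad k\geq 1,
\]
together with the trivial initial conditions $\Ph(t,t,x,\ph)=\ph$ and $X(t,t,x,\ph)=x$. The point is that once $s=t$ is imposed, these quantities are polynomial expressions in components of $\Gamma$ (and its coordinate derivatives) evaluated at $(t,x)$, multiplied by smooth functions of $\ph$ coming from $\Lambda^{(k)}$.

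Finally I would estimate each $\partial^J\bigl(\Gh^{(k-1)}(t,x,\ph)\bigr)$ directly from \eqref{eq:Gammak}: it is a sum of terms of the form $(\partial^{k_1}\Gamma)\cdots(\partial^{k_m}\Gamma)\cdot \Lambda_{k_1,\dots,k_m}(\ph)$ with $k_1+\dots+k_m + m = k$, and applying $\partial^J$ via the product rule and the chain rule in $\ph$ produces products $\partial^{L_1}\Gamma \cdots \partial^{L_m}\Gamma$ with $|L_1|+\dots+|L_m|\leq (k-1)+|J|\leq |I|-1$ and bounded $\ph$-factors. Using the hypothesis $|\partial^K\Gamma(t,x)|\leq C$ for $|K|\leq \lfloor |I|/2\rfloor + 1$ to absorb all but one factor via the usual low--high Leibniz split, each such product is controlled by $\sum_{|M|\leq |I|-1}|(\partial^M\Gamma)(t,x)|$, which gives the bound for $\partial^I\Ph$. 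The bound for $\partial^I X$ is immediate from the same estimate applied to $k\geq 1$ contributions (which reduce by one order), while the $k=0$ term contributes $\partial^J x$, bounded by $1$, accounting for the extra constant in the statement. The only mild obstacle is the bookkeeping in the inductive proof of the chain-rule identity above, and the low--high Leibniz split needed to absorb the nonlinear $m\geq 2$ contributions from \eqref{eq:Gammak} into the $L^\infty$ hypothesis; both are routine.
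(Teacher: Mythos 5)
Your proof is correct and follows essentially the same route as the paper: iterate the commutation identity between $\partial_t$ and the restriction to $s=t$ (the paper's \eqref{eq:Yidnew0} specialised to coordinate derivatives), use $\Ph^{(k+1)}=\Gh^{(k)}$ and $X^{(k+1)}=\Ph^{(k)}$ to reduce everything to derivatives of $\Gh^{(k)}$, and close with a low--high Leibniz split against the $L^\infty$ hypothesis on $\Gamma$ (the paper's ``appropriate version of Lemma \ref{lem:ZGammaK}''). Your choice to restrict to $s=t$ \emph{before} differentiating, so that only the explicit functions $\Gh^{(k-1)}(t,x,\ph)$ are differentiated and no derivatives of the flow maps $X,\Ph$ appear in the final step, is a slightly cleaner bookkeeping of the same argument.
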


\begin{proof}
	Recall the notation $\Ph^{(k)}$ and $\Gh^{(k)}$ from Section \ref{subsec:parameterderivatives}.  By the formula \eqref{eq:Yidnew0} it follows that
	\begin{align*}
		(\partial^I \Ph)(t,t,x,\ph)
		&
		=
		\sum_{\substack{
		J_1+ \ldots + J_{k+1} +L=I
		\\
		\vert J_i \vert \geq 1, k \geq 0}}
		C_{IJ_1\ldots J_{k+1},L}
		\partial^{J_1} t \ldots \partial^{J_{k+1}}t \Big( \partial^L \Ph^{(k+1)} \Big) (t,t,x,\ph),
		\\
		(\partial^I X)(t,t,x,\ph)
		&
		=
		\partial^I x
		+
		\sum_{\substack{
		J_1+ \ldots + J_{k+2} +L=I
		\\
		\vert J_i \vert \geq 1, k \geq 0}}
		C_{IJ_1\ldots J_{k+2},L}'
		\partial^{J_1} t \ldots \partial^{J_{k+2}}t \Big( \partial^L \Ph^{(k+2)} \Big) (t,t,x,\ph),
	\end{align*}
	for some constants $C_{IJ_1\ldots J_{k+1},L}, C_{IJ_1\ldots J_{k+2},L}'$, where the proof of the second uses the first and the fact that $\frac{d X^{(k)}}{ds} = \Ph^{(k)}$.  Hence,
	\begin{align*}
		\left\vert (\partial^I \Ph)(t,t,x,\ph) \right\vert
		&
		\lesssim
		\sum_{\substack{
		\vert L\vert + k \leq \vert I \vert - 1,\,\,\,
		k \geq 0}}
		\Big\vert \Big( \partial^L \Ph^{(k+1)} \Big) (t,t,x,\ph) \Big\vert,
		\\
		\left\vert (\partial^I X)(t,t,x,\ph) \right\vert
		&
		\lesssim
		1
		+
		\sum_{\substack{
		\vert L\vert + k \leq \vert I \vert - 2,\,\,\,
		k \geq 0}}
		\Big\vert \Big( \partial^L \Ph^{(k+1)} \Big) (t,t,x,\ph) \Big\vert.
	\end{align*}
	The proof follows by noting that
	\[
		\Big\vert \Big( \partial^L \Ph^{(k+1)} \Big) (t,t,x,\ph) \Big\vert
		=
		\Big\vert \partial^L \Big(  \Gh^{(k)}(s,X(s),\Ph(s)) \Big) \Big\vert_{s=t} \Big\vert
		\lesssim
		\sum_{\vert I \vert \leq \vert L \vert +k}
		\left\vert (\partial^I \Gamma)(t,x) \right\vert,
	\]
	by an appropriate version of Lemma \ref{lem:ZGammaK}.
\end{proof}

\begin{proposition} \label{prop:lowpointwise}
	Suppose $t_0 \leq t \leq t_0+1$, and $\vert \partial^J \Gamma (t',x)\vert \leq C$ for all $t_0 \leq t'\leq t$ and $\vert x \vert \leq ct'$ and $\vert J \vert \leq \left\lfloor \frac{N}{2} \right\rfloor + 2$.  Then, for $\vert I \vert \leq \left\lfloor \frac{N}{2} \right\rfloor + 2$,
	\[
		\left\vert \partial^I X(s,t,x,\ph) \right\vert
		+
		\left\vert \partial^I \Ph(s,t,x,\ph) \right\vert
		\leq
		C,
	\]
	for all $t_0 \leq s \leq t$.
\end{proposition}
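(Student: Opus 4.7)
The plan is to proceed by induction on $\vert I \vert$. Note first that since $t_0 \geq 1$ we have $s \geq 1$ throughout, so the hypothesis $\vert \partial^K X \vert/s + \vert \partial^K \Ph \vert \leq C$ required by Proposition \ref{prop:lowrhsbound} follows from a pointwise bound of the form $\vert \partial^K X \vert + \vert \partial^K \Ph \vert \leq C$. The base case $\vert I \vert = 0$ is a standard short-time bound: integrating the system \eqref{eq:geodesictimenormalizedsec2} backwards from $s=t$ with data $X(t) = x$, $\Ph(t) = \ph$, using $(t,x,\ph) \in \supp(f)$ and the $L^\infty$ bound on $\Gamma$ to control $\Gh$, and invoking $t-s \leq t - t_0 \leq 1$, a continuity argument yields $\vert X(s) \vert + \vert \Ph(s) \vert \leq C$ for all $s \in [t_0,t]$.

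For the inductive step, fix a multi index $I$ with $1 \leq \vert I \vert \leq \lfloor N/2 \rfloor + 2$ and assume the bound holds for all smaller multi indices. Apply $\partial^I$ to the system \eqref{eq:geodesictimenormalizedsec2} and integrate backwards from $s=t$:
\begin{align*}
	\partial^I \Ph(s,t,x,\ph) &= (\partial^I \Ph)(t,t,x,\ph) - \int_s^t \partial^I \big( \Gh(s',X(s'),\Ph(s')) \big)\, ds', \\
	\partial^I X(s,t,x,\ph) &= (\partial^I X)(t,t,x,\ph) - \int_s^t \partial^I \Ph(s',t,x,\ph)\, ds'.
\end{align*}
The final-condition terms $(\partial^I \Ph)(t,t,x,\ph)$ and $(\partial^I X)(t,t,x,\ph)$ are bounded by a constant directly by Proposition \ref{prop:lowfinalcond}, using the assumed $L^\infty$ bound on $\partial^J \Gamma$. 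For the integrand in the first equation, since the inductive hypothesis supplies $\vert \partial^K X \vert + \vert \partial^K \Ph \vert \leq C$ for $\vert K \vert \leq \vert I \vert /2 \leq \vert I \vert - 1$, Proposition \ref{prop:lowrhsbound} together with the $L^\infty$ bound on $\partial^J \Gamma$ gives
\[
	\left\vert \partial^I \big( \Gh(s',X(s'),\Ph(s')) \big) \right\vert
	\lesssim
	C + \sum_{\vert J \vert \leq \vert I \vert} \big( \vert \partial^J X(s') \vert + \vert \partial^J \Ph(s') \vert \big).
\]
Using the inductive hypothesis to absorb the $\vert J \vert < \vert I \vert$ terms into the constant, this collapses to
\[
	\left\vert \partial^I \Ph(s) \right\vert + \left\vert \partial^I X(s) \right\vert \leq C + C \int_s^t \big( \vert \partial^I X(s') \vert + \vert \partial^I \Ph(s') \vert \big)\, ds',
\]
where the two equations above have been summed. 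Applying the Gr\"onwall inequality (Lemma \ref{lem:Gronwall}) on the short interval $[t_0, t]$ of length at most $1$, the exponential factor is a constant, and we conclude $\vert \partial^I X(s) \vert + \vert \partial^I \Ph(s) \vert \leq C$ uniformly in $s \in [t_0, t]$, completing the induction.

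The argument is essentially routine once Propositions \ref{prop:lowrhsbound} and \ref{prop:lowfinalcond} are established: the only point requiring care is the order in which the two equations of the system are handled (solve for $\partial^I \Ph$ first via the Gr\"onwall step, then recover $\partial^I X$ by integration), and tracking that the inductive hypothesis at order $\lfloor \vert I \vert/2 \rfloor + 1$ is precisely what Proposition \ref{prop:lowrhsbound} demands. The restriction $t \leq t_0 + 1$ is used only through the Gr\"onwall factor remaining bounded, which is what allows this elementary short-time argument to replace the more delicate vector-field analysis of the previous subsections near the singular point $s = t_0$.
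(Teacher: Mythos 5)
Your proof is correct and follows essentially the same route as the paper: apply $\partial^I$ to the system \eqref{eq:geodesictimenormalizedsec2}, integrate backwards from $s=t$, control the final conditions by Proposition \ref{prop:lowfinalcond} and the right-hand side by Proposition \ref{prop:lowrhsbound}, and close with Gr\"onwall on the interval of length at most $1$. The only differences are cosmetic: you make explicit the induction that the paper leaves implicit (needed to verify the hypothesis of Proposition \ref{prop:lowrhsbound}), and you run a single Gr\"onwall on the sum $\vert\partial^I X\vert+\vert\partial^I\Ph\vert$ where the paper iterates Gr\"onwall twice.
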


\begin{proof}
	Proposition \ref{prop:lowrhsbound} and the equation \eqref{eq:geodesictimenormalized} imply
	\[
		\Big\vert \frac{d \partial^I \Ph^i}{ds}(s) \Big\vert
		\lesssim
		1
		+
		\sum_{\vert J \vert \leq \vert I \vert}
		\left( \left\vert \partial^J X(s) \right\vert + \left\vert \partial^J \Ph(s) \right\vert \right).
	\]
	Integrating backwards from $s=t$, by Proposition \ref{prop:lowfinalcond},
	\[
		\left\vert \partial^I \Ph^i(s) \right\vert
		\lesssim
		1
		+
		\sum_{\vert J \vert \leq \vert I \vert}
		\int_s^t
		\left( \left\vert \partial^J X(s') \right\vert + \left\vert \partial^J \Ph(s') \right\vert \right)
		ds'.
	\]
	Summing over $I$, the Gr\"{o}nwall inequality, Lemma \ref{lem:Gronwall}, gives
	\[
		\left\vert \partial^I \Ph^i(s) \right\vert
		\lesssim
		1
		+
		\sum_{\vert J \vert \leq \vert I \vert}
		\int_s^t
		\left\vert \partial^J X(s') \right\vert
		ds'.
	\]
	The result follows by integrating from $s=t$ again and repeating.
\end{proof}

\begin{proposition} \label{prop:lowhighorder}
	Suppose $t_0 \leq t \leq t_0 +1$ and $\vert \partial^J \Gamma (t',x)\vert \leq C$ for all $t_0 \leq t'\leq t$ and $\vert x \vert \leq ct'$ and $\vert J \vert \leq \left\lfloor \frac{N}{2} \right\rfloor + 2$.  Then, for $\vert I \vert \leq N$,
	\[
		\left\vert \partial^I X(s,t,x,\ph) \right\vert
		+
		\left\vert \partial^I \Ph(s,t,x,\ph) \right\vert
		\lesssim
		1
		+
		\sum_{\vert J\vert \leq \vert I \vert -1} \left\vert \partial^J \Gamma(t,x) \right\vert
		+
		\sum_{\vert J\vert \leq \vert I \vert} \int_s^t \left\vert \partial^J \Gamma(s',X(s')) \right\vert ds',
	\]
	for all $t_0 \leq s \leq t$.
\end{proposition}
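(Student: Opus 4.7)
The plan is to run essentially the same argument as in Proposition~\ref{prop:lowpointwise}, but now tracking carefully the high-derivative $\Gamma$ contributions, both from the final conditions at $s=t$ and from the right hand side of the geodesic equations along the characteristic. Since $t_0\leq t\leq t_0+1$, all Gr\"onwall factors are universal constants, which is what makes the short-time situation easy.

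First I would fix $|I|\leq N$ and verify that the hypothesis of Proposition~\ref{prop:lowrhsbound} is satisfied, namely $|\partial^K X|/s + |\partial^K\Ph|\leq C$ for all $|K|\leq |I|/2$. Since $|I|/2\leq N/2\leq \lfloor N/2\rfloor +2$, this follows directly from Proposition~\ref{prop:lowpointwise}. Applying Proposition~\ref{prop:lowrhsbound}, and absorbing the low-derivative $L^\infty$ norms of $\Gamma$ into constants, the geodesic equation \eqref{eq:geodesictimenormalizedsec2} yields
\[
\Big|\frac{d\,\partial^I\Ph^i}{ds}(s)\Big|
\lesssim
\sum_{|J|\leq |I|}\bigl(|\partial^J X(s)|+|\partial^J\Ph(s)|\bigr)
+\sum_{|J|\leq |I|}|\partial^J\Gamma(s,X(s))|,
\]
while $\tfrac{d\,\partial^I X^i}{ds}=\partial^I\Ph^i$ trivially.

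Next I would integrate backwards from $s=t$. For the $\Ph$ equation Proposition~\ref{prop:lowfinalcond} provides the final condition bound
\[
|(\partial^I\Ph)(t,t,x,\ph)|\lesssim \sum_{|J|\leq |I|-1}|(\partial^J\Gamma)(t,x)|,
\]
and similarly
\[
|(\partial^I X)(t,t,x,\ph)|\lesssim 1+\sum_{|J|\leq |I|-1}|(\partial^J\Gamma)(t,x)|.
\]
Denote the right hand side of the claimed estimate by $B_I(s)$. Integrating the differential inequality and summing over multi-indices of length $\leq |I|$ gives
\[
\sum_{|J|\leq |I|}\bigl(|\partial^J X(s)|+|\partial^J\Ph(s)|\bigr)
\lesssim
B_I(s) + \int_s^t \sum_{|J|\leq |I|}\bigl(|\partial^J X(s')|+|\partial^J\Ph(s')|\bigr)\,ds'.
\]
Because $t-s\leq 1$, Lemma~\ref{lem:Gronwall} (or equivalently a standard Gr\"onwall estimate on the interval $[s,t]\subset [t_0,t_0+1]$) produces a universal constant factor and closes the estimate to yield
\[
\sum_{|J|\leq |I|}\bigl(|\partial^J X(s)|+|\partial^J \Ph(s)|\bigr)\lesssim B_I(s),
\]
which is exactly the claim.

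The only nontrivial bookkeeping step is the treatment of the final condition for $\partial^I$ containing time derivatives, but this is already encapsulated in Proposition~\ref{prop:lowfinalcond}, whose statement is tailored precisely to produce the $|J|\leq |I|-1$ loss at $s=t$; the fact that Proposition~\ref{prop:lowrhsbound} separates the ``high derivative falling on $\Gamma$'' contribution $\sum_{|J|\leq |I|}|\partial^J\Gamma(s,X(s))|$ from the ``high derivative falling on $(X,\Ph)$'' contribution (which is linear and absorbed by Gr\"onwall) is what makes the argument work without losing additional derivatives. Since no infinite-time integrals appear, no weight or rate information on $\Gamma$ is needed, which is why this short-time statement is substantially simpler than Propositions~\ref{prop:ZXbarPbarlo} and~\ref{prop:ZXbarPbarho}.
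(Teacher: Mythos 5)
Your proposal is correct and follows essentially the same route as the paper: the hypothesis of Proposition~\ref{prop:lowrhsbound} is checked via Proposition~\ref{prop:lowpointwise}, the final conditions at $s=t$ are supplied by Proposition~\ref{prop:lowfinalcond}, and the resulting integral inequality is closed by backwards integration and Gr\"onwall exactly as in Proposition~\ref{prop:lowpointwise}. The only cosmetic difference is that you apply Gr\"onwall once to the combined quantity $\sum(|\partial^J X|+|\partial^J\Ph|)$ rather than iterating first in $\Ph$ and then in $X$, which is an equivalent bookkeeping choice.
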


\begin{proof}
	Proposition \ref{prop:lowrhsbound}, the equation \eqref{eq:geodesictimenormalized} and Proposition \ref{prop:lowpointwise} now imply
	\[
		\Big\vert \frac{d \partial^I \Ph^i}{ds}(s) \Big\vert
		\lesssim
		1
		+
		\sum_{\vert J \vert \leq \vert I \vert}
		\left(
		\left\vert \partial^J X(s) \right\vert
		+
		\left\vert \partial^J \Ph(s) \right\vert
		+
		\left\vert \partial^J \Gamma(s,X(s)) \right\vert
		\right),
	\]
	and so Proposition \ref{prop:lowfinalcond} implies
	\[
		\left\vert \partial^I \Ph (s) \right\vert
		\lesssim
		1
		+
		\sum_{\vert J \vert \leq \vert I \vert-1}
		\left\vert \partial^J \Gamma(t,x) \right\vert
		+
		\sum_{\vert J \vert \leq \vert I \vert}
		\int_s^t
		\left(
		\left\vert \partial^J X(s') \right\vert
		+
		\left\vert \partial^J \Ph(s') \right\vert
		+
		\left\vert \partial^J \Gamma(s',X(s')) \right\vert
		\right)
		ds'.
	\]
	The proof then proceeds exactly as in Proposition \ref{prop:lowpointwise}.
\end{proof}

\section{Estimates for components of the energy momentum tensor}
\label{section:Testimates}

In this section a proof of Theorem \ref{thm:mainL2} is given.  Recall the discussion in Section \ref{subsec:transtime}.  In order to use the results of Section \ref{section:geodesics} it will again be assumed throughout most of this section that $t\geq t_0+1$, the bounds \eqref{eq:Gammasuppf} hold, and $\pi \left( \supp(f) \right) \subset \{ \vert x \vert \leq c t \}$, where $\pi : \ms \to \mathcal{M}$ is the natural projection.  It is shown how Theorem \ref{thm:mainL2} then follows in Section \ref{subsec:Tmainestimates}.

\subsection{Derivatives of components of the energy momentum tensor in terms of derivatives of $f$}

Recall
\[
	T^{\mu \nu}(t,x)
	=
	\int f(t,x,\ph) p^{\mu} p^{\nu} \frac{\sqrt{-\det g}}{p^0} dp^1 dp^2 dp^3.
\]
The main result of this section is Proposition \ref{prop:ZITpointwise}, which uses the bounds on $\Zb^I X$ and $\Zb^I \Ph$ of Corollary \ref{cor:ZXPho} to give bounds on $Z^I T^{\mu \nu}$.  In order to prove the bounds for $Z^I T^{\mu \nu}$, it is convenient to first rewrite the above integral in terms of the $\ph^i$ variables.

\begin{proposition}
	There exists a function $\Lambda$, smooth provided $\vert \ph^i \vert \leq c <1$ for $i=1,2,3$, such that
	\[
		\det \left( \frac{\partial p^i}{\partial \ph^j}  \right) (t,x,\ph)
		=
		\Lambda(\ph, h(t,x)).
	\]
\end{proposition}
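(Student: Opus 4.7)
The plan is to solve the mass shell relation $g_{\mu\nu} p^\mu p^\nu = -1$ for $p^0$ as a smooth function of $(\ph, h(t,x))$, and then derive the conclusion by direct computation. Writing $p^\mu = p^0 \ph^\mu$ with $\ph^0 = 1$ (the future-directed convention), the mass shell relation becomes $(p^0)^2 g_{\mu\nu}(t,x) \ph^\mu \ph^\nu = -1$, so
\[
    (p^0)^2 = \frac{1}{1 - |\ph|^2 - h_{\mu\nu}(t,x) \ph^\mu \ph^\nu},
\]
after using $g = m + h$ and $-m_{\mu\nu}\ph^\mu \ph^\nu = 1 - |\ph|^2$. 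Under the assumption $|\ph| \leq c < 1$ and mild smallness of $h$ (so the right hand side is bounded away from zero and infinity), the function $p^0 = p^0(\ph, h(t,x))$ is smooth in $(\ph, h)$, and so is $p^i = p^0 \ph^i$.

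Next, I would compute
\[
    \frac{\partial p^i}{\partial \ph^j}
    = p^0 \, \delta^i_j + \ph^i \, \frac{\partial p^0}{\partial \ph^j},
\]
where each $\partial p^0/\partial \ph^j$ is again a smooth function of $(\ph, h)$ (obtained by differentiating the explicit formula for $(p^0)^2$). Thus the Jacobian matrix $\partial p^i/\partial \ph^j$ has entries that are smooth functions of $(\ph, h)$ on the region $|\ph^i| \leq c < 1$ plus small $h$.

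Finally, the determinant is a polynomial in the matrix entries, so $\det(\partial p^i/\partial \ph^j)$ is itself a smooth function of $(\ph, h(t,x))$, which gives the desired $\Lambda(\ph, h(t,x))$. The only subtlety is ensuring that the denominator $1 - |\ph|^2 - h_{\mu\nu}\ph^\mu\ph^\nu$ remains bounded away from zero, but this is automatic from the hypothesis $|\ph| \leq c < 1$ together with the smallness assumption $|g - m| \leq \varepsilon$ used elsewhere in the paper (and is precisely the reason for parameterising the mass shell by $\ph$ rather than $p$ in the first place). There is no real obstacle here; the statement is essentially a smoothness record for later use in differentiating the momentum integral defining $T^{\mu\nu}$.
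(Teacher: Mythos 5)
Your proof is correct and follows essentially the same route as the paper: solve the mass shell relation for $p^0$ as a smooth function of $(\ph, h)$ bounded away from zero when $\vert\ph\vert \leq c <1$ and $h$ is small, observe the Jacobian entries are smooth in $(\ph,h)$, and conclude via the determinant. The only cosmetic difference is that the paper computes the inverse Jacobian $\partial \ph^i/\partial p^j$ by implicit differentiation and then uses $\det(A^{-1})=(\det A)^{-1}$, whereas you differentiate $p^i = p^0\ph^i$ directly; both are fine.
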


\begin{proof}
	Define $\ph^0 = 1$ and note that, since,
	\[
		g_{\alpha \beta} p^{\alpha} p^{\beta} = -1,
	\]
	it follows that
	\[
		2 g_{\alpha \beta} p^{\alpha} \frac{\partial p^{\beta}}{\partial p^j} = 0,
	\]
	and hence,
	\[
		\frac{\partial p^0}{\partial p^j}
		=
		- \frac{g_{\alpha j} p^{\alpha}}{g_{\beta 0} p^{\beta}}
		=
		- \frac{g_{\alpha j} \ph^{\alpha}}{g_{\beta 0} \ph^{\beta}}.
	\]
	Now, since $\ph^i = \frac{p^i}{p^0}$,
	\[
		\frac{\partial \ph^i}{\partial p^j}
		=
		\frac{1}{p^0}
		\left(
		\delta^i_j - \ph^i \frac{\partial p^0}{\partial p^j}
		\right)
		=
		\frac{1}{p^0}
		\Big(
		\delta^i_j + \ph^i \frac{g_{\alpha j} \ph^{\alpha}}{g_{\beta 0} \ph^{\beta}}
		\Big).
	\]
	The proof follows by writing $g_{\alpha \beta} = m_{\alpha \beta} + h_{\alpha \beta}$, noting that
	\[
		g_{\alpha \beta} p^{\alpha} p^{\beta} = -1
		\Rightarrow
		g_{\alpha \beta} \ph^{\alpha} \ph^{\beta} = - \frac{1}{(p^0)^2}
		\Rightarrow
		p^0 = \sqrt{- \frac{1}{g_{\alpha \beta} \ph^{\alpha} \ph^{\beta}}},
	\]
	and using the fact that $\det (A^{-1}) = ( \det A)^{-1}$ for any matrix $A$.
\end{proof}

In Minkowski space, i.\@e.\@ when $h=0$, it is straightforward to compute
\[
	\det \left( \frac{\partial p^i}{\partial \ph_M^j}  \right)
	=
	(p^0_M)^5
	=
	\left(
	1+ (p^1)^2 + (p^2)^2 + (p^3)^2
	\right)^{\frac{5}{2}},
\]
where $\ph_M^j = \frac{p^j}{p^0_M}$, and $p^0_M$ is defined by the relation $m_{\alpha \beta} p^{\alpha} p^{\beta} = -1$.  It then follows that
\begin{align*}
	&
	\Big\vert
	\det \left( \frac{\partial p^i}{\partial \ph^j}  \right)
	-
	\left(
	1+ (p^1)^2 + (p^2)^2 + (p^3)^2
	\right)^{\frac{5}{2}}
	\Big\vert
	=
	\Big\vert
	\det \left( \frac{\partial p^i}{\partial \ph^j}  \right)
	-
	\det \left( \frac{\partial p^i}{\partial \ph_M^j} \right)
	\Big\vert
	\\
	&
	\qquad \qquad
	=
	\left\vert
	\Lambda(\ph, h(t,x))
	-
	\Lambda(\ph,0)
	\right\vert
	\leq
	\sup \vert \partial \Lambda \vert \vert h(t,x) \vert
	\leq
	C \varepsilon,
\end{align*}
since $\vert \ph^i \vert \leq c <1$, and hence the change of variables $(t,x,p) \mapsto (t,x,\ph)$ is well defined if $\varepsilon$ is sufficiently small.  Moreover, recalling that,
\[
	p^0 = \sqrt{- \frac{1}{g_{\alpha \beta} \ph^{\alpha} \ph^{\beta}}},
\]
it follows that, for each $\mu, \nu = 0,1,2,3$,
\[
	p^{\mu} p^{\nu} \frac{\sqrt{-\det g}}{p^0} \det \left( \frac{\partial p^i}{\partial \ph^j}  \right) (t,x,\ph)
	=
	\Lambda^{\mu \nu}(\ph, h(t,x)),
\]
for some functions $\Lambda^{\mu \nu}$, smooth when $\vert \ph \vert \leq c <1$, and so
\begin{equation} \label{eq:Tmunuschematic}
	T^{\mu \nu}(t,x)
	=
	\int f(t,x,\ph) \Lambda^{\mu \nu}(\ph,h(t,x)) d\ph^1 d\ph^2 d\ph^3.
\end{equation}

For each vector field $\Zb$, recall the corresponding functions $\mathring{Z}^k(t,x,\ph)$ defined in Section \ref{subsec:vectorfields}.  Note that, for each $\Zb$, the $\mathring{Z}^k$ have the form,
\[
	\mathring{Z}^k(t,x,\ph)
	=
	\mathring{Z}^k_{1,l}(t,x) \ph^l + \mathring{Z}^k_1 (t,x),
\]
for some functions $\mathring{Z}^k_{1,l}(t,x), \mathring{Z}^k_{2}(t,x)$.  Explicitly,
\[
	\mathring{\Omega}_{ij, 1,l}^k \equiv 0,
	\qquad
	\mathring{\Omega}_{ij,2}^k(t,x)
	=
	\Theta^i(t,x) \delta_j^k
	-
	\Theta^j(t,x) \delta_i^k
	-
	\Omega_{ij} \left( \Theta^k(t,x) \right),
\]
\[
	\mathring{B}_{i,1,l}^k(t,x)
	=
	-
	\Theta^i(t,x) \delta^k_l
	\qquad
	\mathring{B}_{i,2}^k(t,x)
	=
	-
	B_i \left( \Theta^k(t,x) \right)
	-
	\frac{x^i}{t-t_0} \Theta^k(t,x),
\]
and
\[
	\mathring{S}_{1,l}^k \equiv 0,
	\qquad
	\mathring{S}_2^k (t,x)
	=
	\Theta^k(t,x)
	-
	S \left( \Theta^k(t,x) \right)
	-
	\frac{t}{t-t_0} \Theta^k(t,x).
\]
This notation will be used below.

\begin{proposition} \label{prop:ZTschematic}
	For $\mu,\nu=0,1,2,3$ and any multi index $I$, there exist functions $\Lambda$ such that,
	\begin{align*}
		&
		Z^I T^{\mu \nu} (t,x)
		=
		\sum_{\vert I_1 \vert + \vert I_2 \vert \leq \vert I \vert}
		\int
		\Zb^{I_1} \left( f(t,x,\ph) \right)
		\times
		\sum_{k=0}^{\vert I_2 \vert}\,\,\,
		\sum_{ \substack{\vert J_1 \vert + \ldots + \vert J_k\vert \leq \vert I_2 \vert,
		\,\,\,
		\vert J_i \vert \geq 1
		}}
		\\
		&
		\sum_{ \substack{\vert L_1 \vert + \ldots + \vert L_k\vert \leq \vert I_2 \vert - 1,
		\,\,\,
		\vert L_i \vert \geq 1
		}}
		\left(
		\Zb^{J_1} (\ph^{l_1})
		+
		\Zb^{J_1}(h_{\alpha_1 \beta_1}) \Lambda^{l_1, \alpha_1 \beta_1} (\ph, h)
		+
		\Zb^{L_1}(\mathring{Z}^m_{1,m}) \Lambda^{l_1} (\ph, h)
		\right)
		\times
		\\
		&
		\ldots
		\times
		\left(
		\Zb^{J_k} (\ph^{l_k})
		+
		\Zb^{J_k}(h_{\alpha_k \beta_k}) \Lambda^{l_k, \alpha_k \beta_k} (\ph, h)
		+
		\Zb^{L_k}(\mathring{Z}^m_{1,m}) \Lambda^{l_k} (\ph, h)
		\right)
		\Lambda(\ph,h)
		d \ph,
	\end{align*}
	where $Z^I$ is a product of $\vert I \vert$ of the vector fields $\Omega_{ij}, B_i, S$.
\end{proposition}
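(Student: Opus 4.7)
The plan is to argue by induction on $|I|$. The base case $|I|=0$ is exactly the representation \eqref{eq:Tmunuschematic}, corresponding to $I_1 = I_2 = 0$, $k=0$, and $\Lambda = \Lambda^{\mu\nu}$. For the inductive step, I would apply one further vector field $Z$ to the expression for $Z^I T^{\mu\nu}$ supplied by the induction hypothesis and show that the resulting terms again fit the schematic form in the proposition.

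The key algebraic input is the identity $Z^{\alpha}\partial_{x^{\alpha}} = \Zb - \check{Z}^i\partial_{\ph^i}$, valid when acting on functions of $(t,x,\ph)$, combined with integration by parts in $\ph$ using the compact support of $f$.  Differentiating under the integral sign yields
\begin{equation*}
Z\!\int G(t,x,\ph)\,d\ph = \int \Zb\, G\,d\ph + \int\!\big(\partial_{\ph^i}\check{Z}^i\big)G\,d\ph.
\end{equation*}
A direct computation from the explicit forms of $\overline{\Omega}_{ij}$, $\overline{B}_i$, $\overline{S}$ in Section \ref{subsec:vectorfields} gives $\partial_{\ph^i}\check{\Omega}^i_{jk} = 0$, $\partial_{\ph^i}\check{S}^i = 0$, and $\partial_{\ph^i}\check{B}^i_j = -4\ph^j + \mathring{B}^m_{j,1,m}$.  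In every case $\partial_{\ph^i}\check{Z}^i$ is the sum of a polynomial in $\ph$ (which can be absorbed into the smooth coefficient $\Lambda(\ph,h)$) and an instance of $\mathring{Z}^m_{1,m}$ (which produces a new factor of the third type in the proposition).

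The rest is a repeated application of Leibniz to $\Zb$ on the product under the integral: each $\Zb$ falls either on $\Zb^{I_1}(f)$ (raising $|I_1|$), on an existing $\Zb^{J_i}(\ph^{l_i})$ or $\Zb^{J_i}(h_{\alpha_i\beta_i})$ factor (raising $|J_i|$), on an existing $\Zb^{L_i}(\mathring{Z}^m_{1,m})$ factor (raising $|L_i|$), or on one of the smooth coefficients $\Lambda(\ph,h)$, $\Lambda^{l_i,\alpha_i\beta_i}(\ph,h)$, $\Lambda^{l_i}(\ph,h)$; in the last case the chain rule produces $\partial_{\ph^l}\Lambda\cdot\Zb(\ph^l) + \partial_{h_{\alpha\beta}}\Lambda\cdot\Zb(h_{\alpha\beta})$, i.e.\@ a new factor of the first or second type with $|J|=1$ times a new smooth function of $(\ph,h)$.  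The principal step requiring care is the combinatorial bookkeeping on the indices $I_1, I_2, J_i, L_i$ --- in particular the gap $\sum|L_i|\leq|I_2|-1$, which reflects the fact that each $\mathring{Z}^m_{1,m}$ factor can be traced back to one use of integration by parts, consuming one outer $Z$ without contributing any derivative to existing factors.  No nontrivial cancellation beyond the explicit divergence identities noted above is required, so once the indices are tracked the induction closes.
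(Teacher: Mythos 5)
Your proposal is correct and follows essentially the same route as the paper: differentiate \eqref{eq:Tmunuschematic} under the integral, write $Z = \Zb - (\text{momentum part})\,\partial_{\ph}$, integrate by parts in $\ph$ so that the divergence of the momentum component produces the $\mathring{Z}^m_{1,m}$ factors (with the polynomial-in-$\ph$ contribution from the boosts absorbed into $\Lambda(\ph,h)$), apply Leibniz and the chain rule to $\Lambda(\ph,h)$, and induct. You are in fact slightly more explicit than the paper about the $-4\ph^{j}$ term arising from the $\ph$-divergence of the Minkowskian part of $\overline{B}_j$, but this only affects the smooth coefficient and not the schematic form.
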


\begin{proof}
	Recall that the components of the energy momentum tensor take the form \eqref{eq:Tmunuschematic}.  Note that
	\[
		\Zb \left( \Lambda^{\mu \nu} (\ph,h(t,x)) \right)
		=
		\Zb (\ph^l) (\partial_{\ph^l} \Lambda^{\mu \nu})(\ph, h(t,x))
		+
		\sum_{\alpha \beta} (Zh_{\alpha \beta})(t,x) (\partial_{h_{\alpha \beta}} \Lambda^{\mu \nu})(\ph,h(t,x)),
	\]
	and, for $Z = \Omega_{ij}, B_i, S$.
	\[
		Z \Big(
		\int \eta(t,x,\ph) d\ph
		\Big)
		=
		\int \left( \Zb - \mathring{Z}^k \partial_{\ph^k} \right) \eta(t,x,\ph) d\ph
		=
		\int \Zb \left( \eta(t,x,\ph) \right)
		+
		(\partial_{\ph^k} \mathring{Z}^k)(t,x) \eta(t,x,\ph) d \ph,
	\]
	for any function $\eta(t,x,\ph)$, and
	\[
		(\partial_{\ph^k} \mathring{Z}^k)(t,x)
		=
		\mathring{Z}^k_{1,k}(t,x).
	\]
	Therefore, for $\vert I \vert =1$ and $Z = \Omega_{ij}, B_i, S$,
	\begin{multline*}
		Z T^{\mu \nu} (t,x)
		=
		\int
		\left(
		\Zb \left( f(t,x,\ph) \right)
		+
		\mathring{Z}^m_{1,m}(t,x)
		f(t,x,\ph)
		\right)
		\Lambda^{\mu \nu}(\ph, h(t,x))
		\\
		+
		f(t,x,\ph) \left(
		\Zb (\ph^l) (\partial_{\ph^l} \Lambda^{\mu \nu})(\ph, h(t,x))
		+
		(Zh_{\alpha \beta})(t,x) (\partial_{h_{\alpha \beta}} \Lambda^{\mu \nu})(\ph,h(t,x))
		\right)
		d \ph.
	\end{multline*}
	The proof for $\vert I \vert \geq 1$ follows from a straightforward induction argument.
\end{proof}

\begin{proposition} \label{prop:Zfschematic}
	For any multi index $I$, there exist constants $C_{I,k,J,L}$ such that,
	\begin{align*}
		\Zb^{I} \left( f(t,x,\ph) \right)
		=
		\
		&
		\sum_{k+m=1}^{\vert I\vert}
		\sum_{ \substack{
		\vert J_1\vert + \ldots + \vert J_k \vert
		\\
		+
		\vert L_1\vert + \ldots + \vert L_m \vert
		\leq
		\vert I\vert,
		\\
		\vert J_i \vert \geq 1,
		\quad
		\vert L_i \vert \geq 1
		}}
		\Zb^{J_1} \big( X(t_0)^{i_1} \big)
		\ldots
		\Zb^{J_k} \big( X(t_0)^{i_k} \big)
		\Zb^{L_1} \big( \Ph(t_0)^{l_1} \big)
		\ldots
		\Zb^{L_m} \big( \Ph(t_0)^{l_m} \big)
		\\
		&
		\times
		C_{I,k,J,L}
		\left(
		\partial_{x^{i_1}} \ldots \partial_{x^{i_k}}
		\partial_{\ph^{l_1}} \ldots \partial_{\ph^{l_m}}
		f
		\right)
		(t_0,X(t_0),\Ph(t_0)).
	\end{align*}
\end{proposition}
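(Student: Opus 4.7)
The plan is to prove this by induction on $|I|$, starting from the representation of $f$ via the flow, and applying the chain rule repeatedly. Concretely, recall the Vlasov identity \eqref{eq:Vlasov2}, which in particular gives
\[
	f(t,x,\ph) = f\big(t_0, X(t_0,t,x,\ph), \Ph(t_0,t,x,\ph)\big),
\]
since $t_0$ is fixed. The key point is that $t_0$ is a constant, so when a vector field $\Zb$ (which is a derivation on $\ms$) is applied to any expression of the form $g(t_0, X(t_0), \Ph(t_0))$, the chain rule produces only $x$ and $\ph$ derivatives:
\[
	\Zb\big[g(t_0,X(t_0),\Ph(t_0))\big]
	= \Zb(X(t_0)^i)\,(\partial_{x^i} g)(t_0,X(t_0),\Ph(t_0))
	+ \Zb(\Ph(t_0)^l)\,(\partial_{\ph^l} g)(t_0,X(t_0),\Ph(t_0)).
\]

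For the base case $|I|=1$ the formula is precisely \eqref{eq:ZbVlasov} (with $k+m=1$, the two contributions being $(k,m)=(1,0)$ and $(k,m)=(0,1)$, each with a single sub-index of length one). For the inductive step, suppose the claim holds for all multi-indices of length $<|I|$, write $\Zb^I = \Zb\,\Zb^{I'}$ with $|I'|=|I|-1$, and apply $\Zb$ term by term to the expansion for $\Zb^{I'}f$. By the Leibniz rule, each $\Zb$ either (i) falls on one of the coefficient factors $\Zb^{J_i}(X(t_0)^{i_i})$ or $\Zb^{L_j}(\Ph(t_0)^{l_j})$, raising the order of that factor by $1$ and leaving the number $k+m$ of factors unchanged; or (ii) falls on the evaluated partial derivative of $f$, in which case the chain-rule identity displayed above introduces a new factor $\Zb(X(t_0)^i)$ or $\Zb(\Ph(t_0)^l)$ (of length one) and raises the order of the $x$- or $\ph$-derivative of $f$ by one, so that $k+m$ and the total length each increase by one. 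In either case the total length $\sum|J_i|+\sum|L_j|$ increases by exactly $1$, so the constraint $\sum|J_i|+\sum|L_j|\leq |I'|=|I|-1$ from the inductive hypothesis becomes $\sum|J_i|+\sum|L_j|\leq |I|$, and the constraints $|J_i|,|L_j|\geq 1$ and $k+m\geq 1$ are preserved. Collecting terms with common multi-index data and absorbing the combinatorial multiplicities into constants $C_{I,k,J,L}$ gives the claimed expression.

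There is really no technical obstacle here: the content is a repeated chain-rule computation (a Faà di Bruno-type formula), and the only thing one must verify is the accounting of the index constraints, which as noted is automatic because $t_0$ is a constant parameter and not a function of $(t,x,\ph)$. In particular, no term containing a derivative of $f$ in a $t$-slot can ever arise. The combinatorial coefficients $C_{I,k,J,L}$ are not explicit but this is irrelevant for the applications, which use the formula only to reduce $L^1$ and $L^2$ estimates on $\Zb^I f$ to pointwise estimates on $\Zb^{J_i}(X(t_0)^i)$ and $\Zb^{L_j}(\Ph(t_0)^l)$, established in Section \ref{section:geodesics} (Proposition \ref{prop:ZXbarPbarlo}, Proposition \ref{prop:ZXbarPbarho}, Corollary \ref{cor:ZXPho}), together with the bounds on the initial data $f_0$ encoded in $\mathbb{D}_K$ and $\mathcal{V}_K$.
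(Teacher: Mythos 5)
Your proposal is correct and follows essentially the same route as the paper: the paper also writes $f(t,x,\ph)=f(t_0,X(t_0),\Ph(t_0))$, establishes the $|I|=1$ case by the chain rule, and then asserts that the general case follows by "a straightforward induction argument." Your write-up simply makes explicit the Leibniz-rule bookkeeping of that induction, which the paper omits.
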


\begin{proof}
	Using the Vlasov equation to write
	\[
		f(t,x,\ph)
		=
		f(t_0,X(t_0),\Ph(t_0)),
	\]
	it follows that,
	\[
		\Zb \left( f(t,x,\ph) \right)
		=
		\Zb \left( X(t_0)^i \right) (\partial_{x^i} f)(t_0,X(t_0),\Ph(t_0))
		+
		\Zb \left( X(t_0)^l \right) (\partial_{\ph^l} f)(t_0,X(t_0),\Ph(t_0)).
	\]
	The proof for $\vert I \vert \geq 2$ follows from a straightforward induction argument.
\end{proof}

\begin{proposition} \label{prop:ZITpointwise}
	Suppose $t\geq t_0 +1$, $\vert x \vert \leq c t$, and the bounds \eqref{eq:Gammasuppf} hold.  Then, for each $\mu,\nu = 0,1,2,3$ and any multi index $I$ with $\vert I \vert \leq N$ and $Z^I$ equal to a product of $\vert I \vert$ of the vector fields $\Omega_{ij}, B_i, S$,
	\begin{multline*}
		\!\!\!\!\!\!\!\!\left\vert Z^I T^{\mu \nu}(t,x) \right\vert
		\leq
		C \!\!\!\!\!\sum_{n_1+n_2 \leq \vert I \vert}\!
		\int
		\left\vert
		(\partial^{n_1}_x \partial^{n_2}_{\ph} f)(t_0,X(t_0),\Ph(t_0))
		\right\vert
		d\ph
		+\,
		C \!\!\!\!\!\!\!\!\!\!\sum_{n_1+n_2 \leq \left\lfloor \frac{\vert I \vert}{2} \right\rfloor + 1}\!\!\!\!\!\!
		\int
		\left\vert
		(\partial^{n_1}_x \partial^{n_2}_{\ph} f)(t_0,X(t_0),\Ph(t_0))
		\right\vert\times \\
		\times\bigg(
		\sum_{\vert J \vert \leq \vert I \vert -1}\!\!\!\!\!
		t^{2-a} \vert (Z^J \Gamma)(t,x) \vert
		+
		\sum_{\vert J \vert \leq \vert I \vert }
		\int_{t_0}^t s^{1-a}
		\left\vert (Z^J \Gamma) \left( s, s\frac{x}{t} \right) \right\vert
		+
		s
		\left\vert
		(Z^J \Gamma) \left( s, s\frac{x}{t} \right)
		-
		(Z^J \Gamma) \left( s, X(s) \right)
		\right\vert
		ds
		\bigg)
		d\ph.
	\end{multline*}
\end{proposition}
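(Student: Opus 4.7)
The plan is to combine the schematic expansions in Propositions \ref{prop:ZTschematic} and \ref{prop:Zfschematic} with the pointwise bounds established in Section \ref{section:geodesics}. First I would substitute Proposition \ref{prop:Zfschematic} into Proposition \ref{prop:ZTschematic} to write $Z^I T^{\mu\nu}(t,x)$ as a finite sum of terms of the form
\[
	\int \big(\partial_x^{n_1} \partial_\ph^{n_2} f\big)(t_0, X(t_0), \Ph(t_0)) \, \prod_{j} \Zb^{K_j}(Y_j) \, \Lambda(\ph, h) \, d\ph,
\]
where each $Y_j$ is one of $X(t_0)^{i_j}$, $\Ph(t_0)^{l_j}$, $\ph^{l_j}$, $h_{\alpha_j\beta_j}$, or $\mathring{Z}^m_{1,m}$, each $|K_j|\geq 1$, each $\Lambda$ is a smooth function of $(\ph,h)$ (hence bounded by a constant since $|\ph|<c<1$ and $|h|\leq \varepsilon$), and the total derivative count satisfies $n_1+n_2+\sum_j |K_j|\leq |I|\leq N$.

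The next step is to split each such term according to the maximum of the orders $|K_j|$. If all factors satisfy $|K_j|\leq \lfloor N/2\rfloor + 1$, then Propositions \ref{prop:Zbmisc} and \ref{prop:ZXZPlo}, together with $|\Theta|\lesssim t^{-a}$ from Proposition \ref{prop:Phi} (applied to $\mathring{Z}^m_{1,m}$) and the assumption $|h|\leq \varepsilon$, bound each $\Zb^{K_j}(Y_j)$ by a constant. This contributes the first sum of the claim, ranging over all $n_1+n_2\leq |I|$. Otherwise exactly one index $j_0$ satisfies $|K_{j_0}|>\lfloor N/2\rfloor+1$, by pigeonhole, and hence the remaining factors are of low order and again bounded by constants, while the derivatives on $f$ necessarily satisfy $n_1+n_2\leq \lfloor |I|/2\rfloor + 1$. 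The one high-order factor is estimated either by Corollary \ref{cor:ZXPho} (when $Y_{j_0}$ is $X(t_0)^{i_{j_0}}$ or $\Ph(t_0)^{l_{j_0}}$) or by Propositions \ref{prop:Zbmisc} and \ref{prop:Phi} (when $Y_{j_0}$ is $\ph^{l_{j_0}}$, $h_{\alpha_{j_0}\beta_{j_0}}$, or $\mathring{Z}^m_{1,m}$).

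To see that the second case matches the second sum in the claim, one observes that the error terms produced by Propositions \ref{prop:Zbmisc} and \ref{prop:Phi}, namely $t\,|(Z^J\Gamma)(t,x)|$ and $\tfrac{1}{t-t_0}\int_{t_0}^t (s'-t_0)\,|(Z^J\Gamma)(s',s'x/t)|\,ds'$, are dominated for $t\geq t_0+1$ and $\tfrac{1}{2}<a<1$ by the corresponding quantities $t^{2-a}|(Z^J\Gamma)(t,x)|$ and $\int_{t_0}^t (s')^{1-a}|(Z^J\Gamma)(s',s'x/t)|\,ds'$ appearing in Corollary \ref{cor:ZXPho}. Indeed, $t^{1-a}\geq 1$ and, for $s'\leq t$, $\tfrac{s'}{t-t_0}\lesssim (s')^{1-a}\cdot \tfrac{(s')^a}{t}\lesssim (s')^{1-a}$. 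The remaining ``difference'' term $\int_{t_0}^t s\,|(Z^J\Gamma)(s,sx/t)-(Z^J\Gamma)(s,X(s))|\,ds$ arises only when the high-order factor is $X(t_0)$ or $\Ph(t_0)$, directly from Corollary \ref{cor:ZXPho}.

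The main obstacle is really just combinatorial bookkeeping: correctly distributing derivatives across the many factors appearing in the expansion and verifying the dominance of the $X(t_0)$, $\Ph(t_0)$ error contributions over those from $\ph$, $h$, and $\mathring{Z}^m_{1,m}$. No fundamentally new analytic estimates are required beyond the schematic formulas of Propositions \ref{prop:ZTschematic}, \ref{prop:Zfschematic} and the pointwise bounds collected in Section \ref{section:geodesics}.
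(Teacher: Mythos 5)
Your proposal is correct and follows essentially the same route as the paper's proof: both expand $Z^I T^{\mu\nu}$ via Propositions \ref{prop:ZTschematic} and \ref{prop:Zfschematic}, use a pigeonhole argument to isolate at most one high-order factor (the paper organises this as a two-stage case split on $|I_1|$ versus $|I_2|$ and then on the number of factors $k+m$, but the effect is identical), bound all low-order factors by constants via Propositions \ref{prop:Phi}, \ref{prop:Zbmisc} and \ref{prop:ZXZPlo}, and estimate the single high-order factor by Corollary \ref{cor:ZXPho} or Propositions \ref{prop:Zbmisc}/\ref{prop:Phi}, absorbing the resulting error terms into the claimed right-hand side exactly as you describe.
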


\begin{proof}
	Recall the schematic expression for $Z^I T^{\mu \nu}$ of Proposition \ref{prop:ZTschematic}.  Consider multi indices $I_1$, $I_2$ such that $\vert I_1 \vert + \vert I_2 \vert \leq \vert I \vert$, and suppose first that $\vert I_1 \vert \geq \big\lfloor \frac{\vert I \vert}{2} \big\rfloor + 1$.  It must then be the case that $ \vert I_2 \vert \leq \big\lfloor \frac{\vert I \vert}{2} \big\rfloor + 1 \leq \big\lfloor \frac{N}{2} \big\rfloor + 1$.  If $1 \leq \vert J_i \vert \leq \vert I_2 \vert$ then clearly,
	\[
		\big\vert
		\Zb^{J_i} \left( h_{\alpha_i \beta_i} \right) (t,x)
		\big\vert
		\leq
		C,
	\]
	and Proposition \ref{prop:Zbmisc} implies that
	\[
		\big\vert
		\Zb^{J_i} \big( \ph^i \big)
		\big\vert
		\leq
		C.
	\]
	If $1 \leq \vert L_i \vert \leq \vert I_2 \vert - 1$ then, since
	\[
		\mathring{B}_{i,1,k}^k (t,x) = - 3 \Theta^i(t,x),
		\qquad
		\mathring{\Omega}_{ij,1,k}^k (t,x) = 0,
		\qquad
		\mathring{S}_{1,k}^k = 0,
	\]
	Proposition \ref{prop:Phi} implies
	\[
		\left\vert
		\Zb^{L_i} \left( \mathring{Z}_{1,k}^k \right)
		\right\vert
		\leq
		C,
	\]
	for each $\mathring{Z}_{1,k}^k$.
	Let now $k,m,J_1,\ldots,J_k, L_1,\ldots,L_m$ be such that $1 \leq k+m \leq \vert I_1 \vert$, $\vert J_1 \vert + \ldots + \vert J_k \vert + \vert L_1 \vert + \ldots + \vert L_m\vert \leq \vert I_1 \vert$, $\vert J_i \vert \geq 1$, $\vert L_i \vert \geq 1$.  If $k + m \geq \big\lfloor \frac{\vert I_1 \vert}{2} \big\rfloor + 2$ then it must be the case that $\vert J_i \vert \leq \big\lfloor \frac{\vert I_1 \vert}{2} \big\rfloor + 1 \leq \big\lfloor \frac{\vert N \vert}{2} \big\rfloor + 1$ for $i=1,\ldots,k$, and $\vert L_i \vert \leq \big\lfloor \frac{\vert I_1 \vert}{2} \big\rfloor + 1 \leq \big\lfloor \frac{\vert N \vert}{2} \big\rfloor + 1$ for $i=1,\ldots,m$.  Proposition \ref{prop:ZXZPlo} then implies,
	\begin{align*}
		&
		\Big\vert
		\Zb^{J_1} \left( X(t_0)^{i_1} \right)
		\ldots
		\Zb^{J_k} \left( X(t_0)^{i_k} \right)
		\Zb^{L_1} \big( \Ph(t_0)^{l_1} \big)
		\ldots
		\Zb^{L_m} \big( \Ph(t_0)^{l_m} \big)
		\\
		&
		\qquad
		\times
		\left(
		\partial_{x^{i_1}} \ldots \partial_{x^{i_k}}
		\partial_{\ph^{l_1}} \ldots \partial_{\ph^{l_m}}
		f
		\right)
		(t_0,X(t_0),\Ph(t_0))
		\Big\vert
		\leq
		C \sum_{n_1+n_2 \leq \vert I_1 \vert} \left\vert
		\partial_{x}^{n_1}
		\partial_{\ph}^{n_2}
		f
		(t_0,X(t_0),\Ph(t_0))
		\right\vert.
	\end{align*}
	Similarly, if $k+m \leq \big\lfloor \frac{\vert I_1 \vert}{2} \big\rfloor + 1$, there can be at most one $i$ such that either $\vert J_i \vert \geq \big\lfloor \frac{\vert I_1 \vert}{2} \big\rfloor + 2$ or $\vert L_i \vert \geq \big\lfloor \frac{\vert I_1 \vert}{2} \big\rfloor + 2$, so Proposition \ref{prop:ZXZPlo} and Corollary \ref{cor:ZXPho} imply
	\begin{align*}
		&
		\Big\vert
		\Zb^{J_1} \left( X(t_0)^{i_1} \right)
		\ldots
		\Zb^{J_k} \left( X(t_0)^{i_k} \right)
		\Zb^{L_1} \big( \Ph(t_0)^{l_1} \big)
		\ldots
		\Zb^{L_m} \big( \Ph(t_0)^{l_m} \big)
		\\
		&
		\qquad
		\times
		\left(
		\partial_{x^{i_1}} \ldots \partial_{x^{i_k}}
		\partial_{\ph^{l_1}} \ldots \partial_{\ph^{l_m}}
		f
		\right)
		(t_0,X(t_0),\Ph(t_0))
		\Big\vert
		\\
		&
		\qquad
		\leq
		C \sum_{n_1+n_2 \leq \vert I_1 \vert} \left\vert
		\partial_{x}^{n_1}
		\partial_{\ph}^{n_2}
		f
		(t_0,X(t_0),\Ph(t_0))
		\right\vert
		\bigg(
		1
		+
		\sum_{\vert J \vert \leq \vert I \vert -1}
		t^{2-a} \vert (Z^J \Gamma)(t,x) \vert
		\\
		&
		\qquad \qquad
		+
		\sum_{\vert J \vert \leq \vert I \vert }
		\int_{t_0}^t (s')^{1-a}
		\left\vert (Z^J \Gamma) \left( s', s'\frac{x}{t} \right) \right\vert
		+
		s'
		\left\vert
		(Z^J \Gamma) \left( s', s'\frac{x}{t} \right)
		-
		(Z^J \Gamma) \left( s', X(s') \right)
		\right\vert
		ds'
		\bigg).
	\end{align*}
	It is then clear that
	\begin{multline*}
		\Bigg\vert
		\int
		\Zb^{I_1} \left( f(t,x,\ph) \right)
		\times
		\sum_{k=0}^{\vert I_2 \vert}\,\,\,\,\,
		\sum_{ \substack{\vert J_1 \vert + \ldots + \vert J_k\vert \leq \vert I_2 \vert,\,\,\,
		\vert J_i \vert \geq 1
		}}\,\,\,\,\,
		\sum_{ \substack{\vert L_1 \vert + \ldots + \vert L_k\vert \leq \vert I_2 \vert - 1,
\,\,\,
		\vert L_i \vert \geq 1
		}}
		\\
		\left(
		\Zb^{J_1} (\ph^{l_1})
		+
		\Zb^{J_1}(h_{\alpha_1 \beta_1}) \Lambda^{l_1, \alpha_1 \beta_1} (\ph, h)
		+
		\Zb^{L_1}(\mathring{Z}^m_{1,m}) \Lambda^{l_1} (\ph, h)
		\right)
		\times
		\\
		\ldots
		\times
		\left(
		\Zb^{J_k} (\ph^{l_k})
		+
		\Zb^{J_k}(h_{\alpha_k \beta_k}) \Lambda^{l_k, \alpha_k \beta_k} (\ph, h)
		+
		\Zb^{L_k}(\mathring{Z}^m_{1,m}) \Lambda^{l_k} (\ph, h)
		\right)
		\Lambda(\ph,h)
		d \ph
		\Bigg\vert
		\\
				\leq
		C \sum_{n_1+n_2 \leq \vert I \vert}
		\int
		\left\vert
		(\partial^{n_1}_x \partial^{n_2}_{\ph} f)(t_0,X(t_0),\Ph(t_0))
		\right\vert
		d\ph
		+
		C \sum_{n_1+n_2 \leq \left\lfloor \frac{\vert I \vert}{2} \right\rfloor + 1}
		\int
		\left\vert
		(\partial^{n_1}_x \partial^{n_2}_{\ph} f)(t_0,X(t_0),\Ph(t_0))
		\right\vert\times\\
		\times\Big(
		\sum_{\vert J \vert \leq \vert I \vert -1}
		t^{2-a} \vert (Z^J \Gamma)(t,x) \vert
		+
		\sum_{\vert J \vert \leq \vert I \vert }
		\int_{t_0}^t s^{1-a}
		\left\vert (Z^J \Gamma) \left( s, s\frac{x}{t} \right) \right\vert
		+
		s
		\left\vert
		(Z^J \Gamma) \left( s, s\frac{x}{t} \right)
		-
		(Z^J \Gamma) \left( s, X(s) \right)
		\right\vert
		ds
		\Big)
		d\ph.
	\end{multline*}
	Suppose now $\vert I_2 \vert \geq \big\lfloor \frac{\vert I \vert}{2} \big\rfloor + 1$.  It must then be the case that $\vert I_1 \vert \leq \big\lfloor \frac{\vert I \vert}{2} \big\rfloor + 1 \leq \big\lfloor \frac{\vert N \vert}{2} \big\rfloor + 1$.  Proposition \ref{prop:Zfschematic} then implies that
	\[
		\left\vert \Zb^{I_1} \left( f(t,x,\ph) \right) \right\vert
		\leq
		C
		\sum_{n_1+n_2 \leq \left\lfloor \frac{\vert I \vert}{2} \right\rfloor + 1}
		\left\vert
		(\partial^{n_1}_x \partial^{n_2}_{\ph} f)(t_0,X(t_0),\Ph(t_0))
		\right\vert.
	\]
	If $\vert J_i \vert \leq \vert I_2\vert$ and $\vert L_i \vert \leq \vert I_2 \vert -1$, then Proposition \ref{prop:Phi} and Proposition \ref{prop:Zbmisc} imply,
	\begin{equation*}
		\left\vert Z^{J_i} \big( \ph^k \big) \right\vert
		+
		\left\vert Z^{J_i} \big( h_{\alpha_k \beta_k} \big) \right\vert
		+
		\left\vert Z^{J_i} \big( \mathring{Z}_{1,k}^k \big) \right\vert
		\\
		\leq
		C
		\Big(
		1
		+\!\!\!\!
		\sum_{\vert J \vert \leq \vert I_2 \vert -1}\!\!\!\!
		t \left\vert \left( Z^J \Gamma \right)(t,x) \right\vert
		+\!\!\!
		\sum_{\vert J \vert \leq \vert I_2 \vert}
		\int_{t_0}^t
		\left\vert \left( Z^J \Gamma \right) \left(s',s'\frac{x}{t} \right) \right\vert
		ds'
		\Big).
	\end{equation*}
	Hence,
	\begin{multline*}
		\Bigg\vert
		\int
		\Zb^{I_1} \left( f(t,x,\ph) \right)
		\times
		\sum_{k=0}^{\vert I_2 \vert}\,\,\,\,\,
		\sum_{ \substack{\vert J_1 \vert + \ldots + \vert J_k\vert \leq \vert I_2 \vert,\,\,\,
		\vert J_i \vert \geq 1
		}}\,\,\,\,\,
		\sum_{ \substack{\vert L_1 \vert + \ldots + \vert L_k\vert \leq \vert I_2 \vert - 1,\,\,\,
		\vert L_i \vert \geq 1
		}}
		\\
		\left(
		\Zb^{J_1} (\ph^{l_1})
		+
		\Zb^{J_1}(h_{\alpha_1 \beta_1}) \Lambda^{l_1, \alpha_1 \beta_1} (\ph, h)
		+
		\Zb^{L_1}(\mathring{Z}^m_{1,m}) \Lambda^{l_1} (\ph, h)
		\right)
		\times
		\\
		\ldots
		\times
		\left(
		\Zb^{J_k} (\ph^{l_k})
		+
		\Zb^{J_k}(h_{\alpha_k \beta_k}) \Lambda^{l_k, \alpha_k \beta_k} (\ph, h)
		+
		\Zb^{L_k}(\mathring{Z}^m_{1,m}) \Lambda^{l_k} (\ph, h)
		\right)
		\Lambda(\ph,h)
		d \ph
		\Bigg\vert
		\\
		\leq
		C\!\!\!\!\!\!\!\!\!\!\!
 \sum_{n_1+n_2 \leq \left\lfloor \frac{\vert I \vert}{2} \right\rfloor + 1}
		\!\!\!\int
		\left\vert
		(\partial^{n_1}_x \partial^{n_2}_{\ph} f)(t_0,X(t_0),\Ph(t_0))
		\right\vert
		\Big(
		1
		+\!\!\!\!\!
		\sum_{\vert J \vert \leq \vert I_2 \vert -1}\!\!\!\!\!
		t \left\vert \left( Z^J \Gamma \right)(t,x) \right\vert
		+
		\sum_{\vert J \vert \leq \vert I_2 \vert}
		\!\!\int_{t_0}^t\!\!
		\left\vert \left( Z^J \Gamma \right) \left(s',s'\frac{x}{t} \right) \right\vert
		ds'
		\Big)
		d\ph.
	\end{multline*}
	The proof then follows from Proposition \ref{prop:ZTschematic}.
\end{proof}

\subsection{Determinants and changes of variables}

In the proof of Theorem \ref{thm:mainL2}, the change of variables $(t,x,\ph) \mapsto (t,x,y)$ will be used, where,
\[
	y^i(t,x,\ph) = X(t_0,t,x,\ph)^i,
\]
for $i=1,2,3$, along with several other changes of variables.  See the proof of Proposition \ref{prop:mainL21} and Proposition \ref{prop:mainL22}.  A first step towards controlling the determinants of these changes is contained in the following.

\begin{lemma} \label{lem:changes}
	Suppose $t\geq t_0$, $\vert x \vert \leq ct$, and $(t,x,\ph) \in \supp(f)$.  Then, if the assumption \eqref{eq:Gammasuppf} holds and $\varepsilon$ is sufficiently small,
	\begin{equation} \label{eq:dXdp}
		\left\vert \frac{\partial X^i}{\partial \ph^j}(s,t,x,\ph) + (t-s) \delta^i_j \right\vert
		\lesssim
		\frac{\varepsilon t}{s^a},
		\qquad
		\left\vert \frac{\partial \Ph^i}{\partial \ph^j}(s,t,x,\ph) - \delta^i_j \right\vert
		\lesssim
		\frac{\varepsilon t}{s^{1+a}},
	\end{equation}
	and
	\begin{equation} \label{eq:dXdx}
		\left\vert \frac{\partial X^i}{\partial x^j}(s,t,x,\ph) - \delta^i_j \right\vert
		\lesssim
		\frac{\varepsilon}{s^a},
		\qquad
		\left\vert \frac{\partial \Ph^i}{\partial x^j}(s,t,x,\ph) \right\vert
		\lesssim
		\frac{\varepsilon}{s^{1+a}},
	\end{equation}
	for all $t_0 \leq s \leq t$, and $i,j=1,2,3$.
\end{lemma}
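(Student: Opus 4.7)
The plan is to view the four Jacobian matrices as solutions of the linearisation of the geodesic system \eqref{eq:geodesictimenormalizedsec2} with respect to the parameter under consideration, integrated backwards from the final conditions at $s=t$. Differentiating \eqref{eq:geodesictimenormalizedsec2} with respect to $\ph^j$ (at fixed $t,x$) and introducing the corrections to the Minkowski behaviour,
\[
A^i_j(s) := \frac{\partial X^i}{\partial \ph^j}(s) + (t-s)\delta^i_j, \qquad B^i_j(s) := \frac{\partial \Ph^i}{\partial \ph^j}(s) - \delta^i_j,
\]
one obtains, using the initial conditions $X(t,t,x,\ph)=x$, $\Ph(t,t,x,\ph)=\ph$,
\[
A^i_j(t)=B^i_j(t)=0, \qquad \frac{dA^i_j}{ds}=B^i_j,
\]
\[
\frac{dB^i_j}{ds} = -(t-s)(\partial_{x^j}\Gh^i) + (\partial_{\ph^j}\Gh^i) + (\partial_{x^k}\Gh^i) A^k_j + (\partial_{\ph^k}\Gh^i) B^k_j,
\]
where $\Gh$ is evaluated at $(s,X(s),\Ph(s))$. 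The analogous differentiation with respect to $x^j$, setting $C^i_j := \partial X^i/\partial x^j - \delta^i_j$ and $D^i_j := \partial \Ph^i/\partial x^j$, produces a system of the same form but with zero inhomogeneous terms and $C(t)=D(t)=0$.

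The next step is to bound the coefficients $\partial_x\Gh$ and $\partial_\ph\Gh$. By Proposition \ref{prop:suppf} the geodesic through $(t,x,\ph)\in\supp(f)$ satisfies $|X(s)| \leq cs+K$ and $|\Ph(s)| \leq c < 1$ for $t_0 \leq s \leq t$, and a standard continuity argument shows the same bounds persist, with a slightly enlarged constant, under small perturbations of $\ph$ and $x$. Since $|X(s)|/s$ stays bounded away from $1$, the identity \eqref{eq:partialZ} converts the hypothesis \eqref{eq:GammamainL2} into the pointwise coordinate bound $|\partial\Gamma(s,X(s))| \lesssim \varepsilon/s^{2+a}$; combining this with $|\Ph(s)|\lesssim 1$ and the explicit formula for $\Gh$ yields
\[
|\partial_x \Gh(s,X(s),\Ph(s))| \lesssim \frac{\varepsilon}{s^{2+a}}, \qquad |\partial_\ph \Gh(s,X(s),\Ph(s))| \lesssim \frac{\varepsilon}{s^{1+a}}.
\]

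With these, \eqref{eq:dXdx} is almost immediate: integrating the equations for $D$ and $C$ backwards from $s=t$ gives
\[
|D^i_j(s)| \lesssim \int_s^t \Bigl(\frac{\varepsilon}{(s')^{2+a}}|C(s')|_\infty + \frac{\varepsilon}{(s')^{1+a}}|D(s')|_\infty\Bigr) ds', \qquad |C^i_j(s)| \leq \int_s^t |D(s')|_\infty ds',
\]
and a Gr\"onwall argument (bootstrapping $|C|\lesssim \varepsilon/s^a$, $|D|\lesssim \varepsilon/s^{1+a}$) closes. For \eqref{eq:dXdp} the inhomogeneous terms $-(t-s)\partial_{x^j}\Gh^i$ and $\partial_{\ph^j}\Gh^i$ have sizes $\varepsilon(t-s)/s^{2+a}$ and $\varepsilon/s^{1+a}$ respectively; integrating backwards from $t$ the first contributes $\int_s^t \varepsilon(t-s')/(s')^{2+a}\,ds' \lesssim \varepsilon t/s^{1+a}$ and the second $\lesssim \varepsilon/s^a \leq \varepsilon t/s^{1+a}$. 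Bootstrapping $|A|\lesssim \varepsilon t/s^a$ and $|B|\lesssim \varepsilon t/s^{1+a}$ the coupled terms $(\partial_x\Gh)A + (\partial_\ph\Gh)B$ contribute only lower order, and one last Gr\"onwall application yields the claimed estimates, so that $A$ in turn is recovered from $\frac{dA}{ds}=B$.

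The main obstacle is not in the ODE analysis itself, which is routine once the coefficient bounds are in place, but in justifying that the coordinate derivative bounds on $\Gamma$ are available along the perturbed geodesics. This is ultimately trivial in view of \eqref{eq:partialZ} and Proposition \ref{prop:suppf}, but it is the only point where the support and interior structure of the spacetime enters and must be invoked before one can integrate. For small times $t_0 \leq t \leq t_0 + 1$ the same argument goes through using the cruder short-time bounds from Section \ref{subsec:smalltime}.
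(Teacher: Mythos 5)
Your approach is the same as the paper's: differentiate the geodesic system with respect to $\ph^j$ and $x^j$, bound the coefficients $\partial_x\Gh\lesssim\varepsilon s^{-2-a}$ and $\partial_\ph\Gh\lesssim\varepsilon s^{-1-a}$ along the geodesic using the support properties and \eqref{eq:partialZ}, and integrate the resulting linear system backwards from $s=t$ with Gr\"onwall; the paper does exactly this, working with the full Jacobians rather than the corrections $A,B,C,D$.

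One slip: the system for $(C,D)=(\partial X/\partial x-I,\ \partial\Ph/\partial x)$ is \emph{not} homogeneous. Since $\partial X^k/\partial x^j=\delta^k_j+C^k_j$, the equation for $D^i_j$ carries the forcing term $\partial_{x^j}\Gh^i(s,X(s),\Ph(s))$, of size $\varepsilon s^{-2-a}$. As written, your integral inequality for $|D|$ omits this term, and together with $C(t)=D(t)=0$ it would give $C\equiv D\equiv 0$, which is too strong. Restoring the forcing term costs $\int_s^t\varepsilon(s')^{-2-a}\,ds'\lesssim\varepsilon s^{-1-a}$, which is precisely the claimed bound for $|D|$, and the rest of your argument then closes unchanged. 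The treatment of \eqref{eq:dXdp}, where you do keep the inhomogeneous terms $-(t-s)\partial_{x^j}\Gh^i+\partial_{\ph^j}\Gh^i$, is correct.
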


\begin{proof}
	From the equations \eqref{eq:geodesictimenormalized} and the estimates
	\[
		s^{1+a} \left\vert \Gamma^{\alpha}_{\beta \gamma}(s,X(s)) \right\vert
		+
		s^{2+a} \left\vert \partial_k \Gamma^{\alpha}_{\beta \gamma}(s,X(s)) \right\vert
		\lesssim
		\varepsilon,
	\]
	it follows that,
	\[
		\left\vert \frac{d}{ds} \frac{\partial X^i}{\partial \ph^j} \right\vert
		\leq
		\left\vert \frac{\partial \Ph^i}{\partial \ph^j} \right\vert,
		\qquad
		\left\vert \frac{d}{ds} \frac{\partial \Ph^i}{\partial \ph^j}(s) \right\vert
		\lesssim
		\varepsilon \sum_{k,l=1}^3
		\left(
		\frac{1}{s^{2+a}}
		\left\vert \frac{\partial X^k}{\partial \ph^l}(s) \right\vert
		+
		\frac{1}{s^{1+a}}
		\left\vert \frac{\partial \Ph^k}{\partial \ph^l}(s) \right\vert
		\right).
	\]
	Integrating the second inequality backwards from $s=t$, and using the fact that $\frac{\partial \Ph^k}{\partial \ph^l}(t,t,x,\ph) = \delta^i_j$, gives,
	\begin{align*}
		\left\vert \frac{\partial \Ph^i}{\partial \ph^j}(s,t,x,\ph) - \delta^i_j \right\vert
		&
		\lesssim
		\varepsilon \sum_{k,l=1}^3
		\int_s^t
		\frac{1}{\tilde{s}^{2+a}}
		\left\vert \frac{\partial X^k}{\partial \ph^l}(\tilde{s}) \right\vert
		+
		\frac{1}{\tilde{s}^{1+a}}
		\left\vert \frac{\partial \Ph^k}{\partial \ph^l}(\tilde{s}) \right\vert
		d \tilde{s}
		\\
		&
		\lesssim
		\frac{\varepsilon}{s^a}
		+
		\varepsilon \sum_{k,l=1}^3
		\int_s^t
		\frac{1}{\tilde{s}^{2+a}}
		\left\vert \frac{\partial X^k}{\partial \ph^l}(\tilde{s}) \right\vert
		+
		\frac{1}{\tilde{s}^{1+a}}
		\left\vert \frac{\partial \Ph^k}{\partial \ph^l}(\tilde{s}) - \delta^k_l \right\vert
		d \tilde{s}.
	\end{align*}
	Dividing by $s^{1+a}$, integrating again backwards from $s=t$, and summing over $i,j$ gives
	\begin{align*}
		&
		\sum_{k,l=1}^3 \int_s^t \frac{1}{(s')^{1+a}}
		\left\vert \frac{\partial \Ph^k}{\partial \ph^l}(s') - \delta^k_l \right\vert
		d s'
		\\
		&
		\qquad \qquad
		\lesssim
		\int_s^t \frac{1}{(s')^{1+a}}
		\left(
		\frac{\varepsilon}{(s')^a}
		+
		\varepsilon \sum_{k,l=1}^3
		\int_{s'}^t
		\frac{1}{\tilde{s}^{2+a}}
		\left\vert \frac{\partial X^k}{\partial \ph^l}(\tilde{s}) \right\vert
		+
		\frac{1}{\tilde{s}^{1+a}}
		\left\vert \frac{\partial \Ph^k}{\partial \ph^l}(\tilde{s}) - \delta^k_l \right\vert
		d \tilde{s}
		\right)
		ds'
		\\
		&
		\qquad \qquad
		\lesssim
		\left(
		\frac{\varepsilon}{s^a}
		+
		\varepsilon \sum_{k,l=1}^3
		\int_{s}^t
		\frac{1}{\tilde{s}^{2+a}}
		\left\vert \frac{\partial X^k}{\partial \ph^l}(\tilde{s}) \right\vert
		+
		\frac{1}{\tilde{s}^{1+a}}
		\left\vert \frac{\partial \Ph^k}{\partial \ph^l}(\tilde{s}) - \delta^k_l \right\vert
		d \tilde{s}
		\right)
		\int_s^t \frac{1}{(s')^{1+a}}
		ds'
		\\
		&
		\qquad \qquad
		\lesssim
		\frac{\varepsilon}{s^a}
		+
		\varepsilon \sum_{k,l=1}^3
		\int_{s}^t
		\frac{1}{\tilde{s}^{2+a}}
		\left\vert \frac{\partial X^k}{\partial \ph^l}(\tilde{s}) \right\vert
		+
		\frac{1}{\tilde{s}^{1+a}}
		\left\vert \frac{\partial \Ph^k}{\partial \ph^l}(\tilde{s}) - \delta^k_l \right\vert
		d \tilde{s}
	\end{align*}
	
	Taking $\varepsilon$ sufficiently small then gives,
	\[
		\sum_{k,l=1}^3 \int_s^t \frac{1}{\tilde{s}^{1+a}}
		\left\vert \frac{\partial \Ph^k}{\partial \ph^l}(\tilde{s}) - \delta^k_l \right\vert
		d \tilde{s}
		\lesssim
		\frac{\varepsilon}{s^a}
		+
		\varepsilon \sum_{k,l=1}^3
		\int_s^t
		\frac{1}{\tilde{s}^{2+a}}
		\left\vert \frac{\partial X^k}{\partial \ph^l}(\tilde{s}) \right\vert
		d \tilde{s}.
	\]
	Inserting back into the above bound gives
	\begin{equation} \label{eq:dPdpmidproof}
		\left\vert \frac{\partial \Ph^i}{\partial \ph^j}(s,t,x,\ph) - \delta^i_j \right\vert
		\lesssim
		\frac{\varepsilon}{s^a}
		+
		\varepsilon \sum_{k,l=1}^3
		\int_s^t
		\frac{1}{\tilde{s}^{2+a}}
		\left\vert \frac{\partial X^k}{\partial \ph^l}(\tilde{s}) \right\vert
		d \tilde{s}.
	\end{equation}
	Integrating this bound backwards from $s=t$, and using the fact that $\frac{\partial X^i}{\partial \ph^j}(t,t,x,\ph) = 0$, gives
	\begin{align*}
		\left\vert \frac{\partial X^i}{\partial \ph^j}(s,t,x,\ph) + (t-s) \delta^i_j \right\vert
		&
		\lesssim
		\varepsilon t^{1-a}
		+
		\varepsilon \sum_{k,l=1}^3
		\int_s^t
		\frac{1}{\tilde{s}^{1+a}}
		\left\vert \frac{\partial X^k}{\partial \ph^l}(\tilde{s}) \right\vert
		d \tilde{s}
		\\
		&
		\lesssim
		\frac{\varepsilon t}{s^a}
		+
		\varepsilon
		\sum_{k,l=1}^3
		\int_s^t
		\frac{1}{\tilde{s}^{1+a}}
		\left\vert \frac{\partial X^k}{\partial \ph^l}(\tilde{s}) + (t-s) \delta^k_l \right\vert
		d \tilde{s},
	\end{align*}
	where the fact that, for any function $\lambda(s)$,
	\[
		\int_s^t \int_{s'}^t \lambda(\tilde{s}) d\tilde{s} ds'
		=
		\int_s^t \int_s^t \chi_{\{ s'\leq \tilde{s}\}} ds' \lambda(\tilde{s}) d \tilde{s}
		=
		\int_s^t (\tilde{s} - s) \lambda(\tilde{s}) d \tilde{s},
	\]
	has been used (here $\chi_{\{ s'\leq \tilde{s}\}}$ is the indicator function of the interval $[s',\tilde{s}]$).  Again, dividing by $s^{1+a}$, integrating backwards from $s=t$, summing over $i,j$ and taking $\varepsilon$ small gives,
	\[
		\sum_{k,l=1}^3
		\int_s^t
		\frac{1}{\tilde{s}^{1+a}}
		\left\vert \frac{\partial X^k}{\partial \ph^l}(\tilde{s}) + (t-s) \delta^k_l \right\vert
		d \tilde{s}
		\lesssim
		\frac{\varepsilon t}{s^{2a}}.
	\]
	Inserting back into the above bound then gives
	\begin{equation*}
		\left\vert \frac{\partial X^i}{\partial \ph^j}(s,t,x,\ph) + (t-s) \delta^i_j \right\vert
		\lesssim
		\frac{\varepsilon t}{s^a},
	\end{equation*}
	and inserting this into \eqref{eq:dPdpmidproof} gives the second bound of \eqref{eq:dXdp}.
	
	In a similar manner, it is straightforward to show
	\[
		\left\vert \frac{d}{ds} \frac{\partial \Ph^i}{\partial x^j}(s) \right\vert
		\lesssim
		\varepsilon \sum_{k,l=1}^3
		\left(
		\frac{1}{s^{2+a}}
		\left\vert \frac{\partial X^k}{\partial x^l}(s) \right\vert
		+
		\frac{1}{s^{1+a}}
		\left\vert \frac{\partial \Ph^k}{\partial x^l}(s) \right\vert
		\right),
	\]
	and, using the final conditions $\frac{\partial \Ph^i}{\partial x^j}(t,t,x,\ph) = 0$, $\frac{\partial X^i}{\partial x^j}(t,t,x,\ph) = \delta^i_j$, that \eqref{eq:dXdx} holds.
\end{proof}

The properties of these changes are collected in the following proposition.

\begin{proposition} \label{prop:determinants}
	For fixed $t,x$ with $t \geq t_0 +1$, $\vert x \vert \leq c t$, if the assumptions \eqref{eq:Gammasuppf} hold and $\varepsilon$ is sufficiently small then, for $\ph$ such that $(t,x,\ph) \in\supp(f)$, the change of variables $\ph \mapsto y:= X(t_0,t,x,\ph)$ satisfies
	\begin{equation} \label{eq:detdpdy}
		\left\vert
		\det \left( \frac{\partial \ph^i}{\partial y^j} \right)
		\right\vert
		\leq
		\frac{C}{t^3}.
	\end{equation}
	Define
	\[
		z_1^i(s,t,x,y) := X(s,t,x,\ph(t,x,y))^i,
		\qquad
		z_2^i(\sigma,s,t,x,y)
		:=
		\sigma s \frac{x^i}{t}
		+
		(1-\sigma) X(s,t,x,\ph(t,x,y))^i,
	\]
	for $i=1,2,3$.  If $t \geq t_0 +1$ and $t_0 \leq s \leq t_0+\frac{1}{2}$, then the change of variables $(x,y) \mapsto (x,z_1(s,t,x,y))$ satisfies,
	\begin{equation} \label{eq:detdz1dy}
		\left\vert
		\det \left( \frac{\partial z_1^i(s,t)}{\partial y^j} \right)^{-1}
		\right\vert
		\leq
		C.
	\end{equation}
	If $t_0+\frac{1}{2} \leq s \leq t$ then the change of variables $(x,y) \mapsto (z_1(s,t,x,y),y)$ satisfies
	\begin{equation} \label{eq:detdz1dx}
		\left\vert
		\det \left( \frac{\partial z_1^i(s,t)}{\partial x^j} \right)^{-1}
		\right\vert
		\leq
		C \left( \frac{t}{s} \right)^3.
	\end{equation}
	Finally, if $t_0+\frac{1}{2}\leq s \leq t$ and $0 \leq \sigma \leq 1$, then the change of variables $(x,y) \mapsto (z_2(\sigma,s,t,x,y),y)$ satisfies
	\begin{equation} \label{eq:detdz2dx}
		\left\vert
		\det \left( \frac{\partial z_2^i(\sigma,s,t)}{\partial x^j} \right)^{-1}
		\right\vert
		\leq
		C \left( \frac{t}{s} \right)^3.
	\end{equation}
	Moreover, for $t\geq t_0$, the determinant of the 6 by 6 matrix $\kappa$ satisfies,
	\begin{equation} \label{eq:kappamatrix}
		\left\vert
		\det
		\kappa (t_0,t,x,\ph)
		-
		1
		\right\vert
		\lesssim
		\varepsilon,
		\qquad
		\text{where }
		\kappa
		=
		\begin{pmatrix}
		\frac{\partial X}{\partial x} & \frac{\partial X}{\partial \ph} \\
		\frac{\partial \Ph}{\partial x} & \frac{\partial \Ph}{\partial \ph}
		\end{pmatrix}.
	\end{equation}
\end{proposition}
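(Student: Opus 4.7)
The plan is to deduce each of the five determinant bounds from the Jacobian estimates of Lemma \ref{lem:changes}, combined with linear-algebraic perturbation arguments for the first four statements and Jacobi's formula for the last. The main obstacle is the bound \eqref{eq:kappamatrix}: since $\partial \Ph / \partial \ph - I = O(\varepsilon t/t_0^{1+a})$ the entries of $\kappa$ are not themselves $O(\varepsilon)$-close to those of the Minkowski Jacobian when $t$ is large, so a naive expansion cannot work; the point is to identify a cancellation that leaves the trace of the variational generator proportional only to $\Gamma$, not $\partial \Gamma$.

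For \eqref{eq:detdpdy}, the map $\ph \mapsto y = X(t_0,t,x,\ph)$ has Jacobian $\partial y^i/\partial \ph^j = (\partial X^i/\partial \ph^j)(t_0,t,x,\ph) = -(t-t_0)\delta^i_j + O(\varepsilon t/t_0^a)$ by Lemma \ref{lem:changes}. Since $t \geq t_0+1$, the ratio $t/(t-t_0)$ is bounded by a constant depending on $t_0$, so for $\varepsilon$ small, $|\det(\partial y/\partial \ph)| \gtrsim (t-t_0)^3 \gtrsim t^3$, giving $|\det(\partial \ph/\partial y)| \leq C/t^3$. For \eqref{eq:detdz1dy}, the chain rule gives $\partial z_1/\partial y = (\partial X/\partial \ph)(s) \cdot [(\partial X/\partial \ph)(t_0)]^{-1}$, and by Lemma \ref{lem:changes} the two factors are relative $O(\varepsilon)$-perturbations of $-(t-s)I_3$ and $-(t-t_0)I_3$, so $\det(\partial z_1/\partial y) = [(t-s)/(t-t_0)]^3(1+O(\varepsilon))$; for $s \in [t_0, t_0+1/2]$ the ratio $(t-s)/(t-t_0)$ lies in $[1/2,1]$.

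For \eqref{eq:detdz1dx}, differentiating the identity $y = X(t_0,t,x,\ph(t,x,y))$ at fixed $y$ gives
\[
\partial \ph/\partial x|_y = -[(\partial X/\partial \ph)(t_0)]^{-1}(\partial X/\partial x)(t_0) = (t-t_0)^{-1} I_3 + O(\varepsilon/(t-t_0)),
\]
which when substituted into $\partial z_1/\partial x|_y = (\partial X/\partial x)(s) + (\partial X/\partial \ph)(s)\cdot \partial\ph/\partial x|_y$ and combined with Lemma \ref{lem:changes} yields $\partial z_1/\partial x|_y = \tfrac{s-t_0}{t-t_0}I_3 + O(\varepsilon t/(s^a(t-t_0)))$. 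For $s\geq t_0+1/2$ the error is a small relative perturbation of the scalar coefficient, and the elementary inequality $s/(s-t_0) \leq 2t_0+1$ gives $(t-t_0)/(s-t_0) \leq (2t_0+1)t/s$, whence $|\det(\partial z_1/\partial x)^{-1}| \leq C(t/s)^3$. The bound \eqref{eq:detdz2dx} then follows at once since $\partial z_2/\partial x = (\sigma s/t)I_3 + (1-\sigma)\partial z_1/\partial x$ is a convex combination whose scalar coefficient is bounded below by $c\,s/t$.

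For \eqref{eq:kappamatrix}, differentiating the geodesic system $dX^i/ds = \Ph^i$, $d\Ph^i/ds = \Gh^i(s,X,\Ph)$ with respect to the parameters $(x,\ph)$ shows that $\kappa$ satisfies the variational ODE $d\kappa/ds = M(s)\kappa$, with
\[
M(s) = \begin{pmatrix} 0 & I_3 \\ (D_X \Gh)(s,X(s),\Ph(s)) & (D_\Ph \Gh)(s,X(s),\Ph(s)) \end{pmatrix},
\]
and terminal condition $\kappa(t,t,x,\ph) = I_6$. By Jacobi's formula, $\tfrac{d}{ds} \log \det \kappa = \tr M(s) = \tr(D_\Ph \Gh)(s,X(s),\Ph(s))$. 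Because $\Gh$ is polynomial of degree three in $\Ph$ with coefficients $\Gamma(s,X)$, the trace $\tr(D_\Ph \Gh)$ contains no derivatives of $\Gamma$; the bounds $|\Ph| \leq c < 1$ and \eqref{eq:Gammasuppf} then give $|\tr(D_\Ph \Gh)| \lesssim |\Gamma(s,X(s))| \lesssim \varepsilon/s^{1+a}$. Integrating backwards from $s=t$ yields $|\log \det \kappa(t_0)| \leq C\varepsilon \int_{t_0}^t (s')^{-1-a}\,ds' \leq C\varepsilon/t_0^a \lesssim \varepsilon$, and so $|\det \kappa(t_0) - 1| \lesssim \varepsilon$ for $\varepsilon$ sufficiently small.
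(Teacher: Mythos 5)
Your overall strategy coincides with the paper's: every one of the five bounds is deduced from the Jacobian estimates of Lemma \ref{lem:changes} by chain rule/implicit differentiation, and \eqref{eq:kappamatrix} is proved exactly as in the paper, via the variational ODE for $\kappa$ and the Liouville/Jacobi formula, using the key observation that the phase-space divergence $\tr(D_{\ph}\Gh)$ involves only $\Gamma$ and not $\partial\Gamma$, so it is integrable in $s$. The treatments of \eqref{eq:detdpdy} and \eqref{eq:detdz1dy} are fine (your product formulation $\partial z_1/\partial y=(\partial X/\partial\ph)(s)[(\partial X/\partial\ph)(t_0)]^{-1}$ is a slightly cleaner packaging of the same entrywise estimate), as is the reduction of \eqref{eq:detdz2dx} to \eqref{eq:detdz1dx} via the convex combination.

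There is, however, a concrete error in your bookkeeping for \eqref{eq:detdz1dx}. You claim
$\partial z_1/\partial x|_y = \tfrac{s-t_0}{t-t_0}I_3 + O(\varepsilon t/(s^a(t-t_0)))$, but expanding
$(\partial X/\partial\ph)(s)\cdot\partial\ph/\partial x$ produces the cross term
$-(t-s)I\cdot\bigl(\partial\ph/\partial x-\tfrac{1}{t-t_0}I\bigr)=O\bigl(\varepsilon\tfrac{t-s}{t}\bigr)=O(\varepsilon)$,
which dominates your claimed error whenever $s$ is large compared with a constant; the correct bound (the one the paper states) is $O(\varepsilon)$, uniformly in $s$. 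With the corrected bound your phrase ``the error is a small relative perturbation of the scalar coefficient'' is not literally justified, since $\tfrac{s-t_0}{t-t_0}$ can be as small as $O(1/t)$ while the perturbation is only $O(\varepsilon)$; this is exactly the point at which the paper also passes from the entrywise estimate $\bigl|\partial z_1^i/\partial x^j-\tfrac{s-t_0}{t-t_0}\delta^i_j\bigr|\lesssim\varepsilon$ to the determinant lower bound, so you are not worse off than the paper, but you should at least state the correct $O(\varepsilon)$ error rather than the false sharper one, since your justification of the determinant bound leans on it.
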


\begin{proof}
	Setting $s=t_0$ in \eqref{eq:dXdp}, it follows that
	\begin{equation} \label{eq:dydp}
		\left\vert \frac{\partial y^i}{\partial \ph^j} + (t-t_0) \delta^i_j \right\vert
		\lesssim
		\varepsilon t,
	\end{equation}
	and, if $\varepsilon$ is sufficiently small,
	\[
		\left\vert \det \left( \frac{\partial y^i}{\partial \ph^j} \right) + (t-t_0)^3 \right\vert
		\lesssim
		\varepsilon t^3.
	\]
	Since $t\geq t_0 +1$, the bound \eqref{eq:detdpdy} follows.
	
	For the remaining bounds of the proposition, it is necessary to consider $\ph^i$ as a function of $t,x,y$ (using the above bound and the Inverse Function Theorem) and estimate $\frac{\partial \ph^i}{\partial x^j}$ and $\frac{\partial \ph^i}{\partial y^j}$.  Clearly the matrix $\left( \frac{\partial \ph^i}{\partial y^j}\right)$ is the inverse of the matrix $\left( \frac{\partial y^i}{\partial \ph^j}\right)$, and hence it follows from \eqref{eq:dydp} that
	\[
		\left\vert \frac{\partial \ph^i}{\partial y^j} + \frac{\delta^i_j}{t-t_0} \right\vert
		\lesssim
		\frac{\varepsilon}{t}.
	\]
	Also,
	\[
		0
		=
		\frac{\partial \ph^i(t,x,y(t,x,\ph))}{\partial x^j}
		=
		\frac{\partial \ph^i}{\partial x^j}(t,x,y) + \frac{\partial \ph^i}{\partial y^k}(t,x,y) \frac{\partial y^k}{\partial x^j},
	\]
	so
	\[
		\frac{\partial \ph^i}{\partial x^j}
		=
		-
		\frac{\partial \ph^i}{\partial y^k} \frac{\partial y^k}{\partial x^j}.
	\]
	Setting $s=t_0$ in \eqref{eq:dXdx} gives
	\[
		\left\vert \frac{\partial y^i}{\partial x^j} - \delta^i_j \right\vert
		\lesssim
		\varepsilon,
	\]
	and so, since
	\[
		\frac{\partial \ph^i}{\partial x^j} - \frac{\delta^i_j}{t-t_0}
		=
		-
		\left( \frac{\partial \ph^i}{\partial y^k} + \frac{\delta_j^k}{t-t_0} \right)
		\left( \frac{\partial y^k}{\partial x^j} - \delta^k_j \right)
		+
		\frac{1}{t-t_0} \left( \frac{\partial y^i}{\partial x^j} - \delta^i_j \right)
		-
		\left( \frac{\partial \ph^i}{\partial y^j} + \frac{\delta^i_j}{t-t_0} \right),
	\]
	it follows that
	\[
		\left\vert \frac{\partial \ph^i}{\partial x^j} - \frac{\delta^i_j}{t-t_0} \right\vert
		\lesssim
		\frac{\varepsilon}{t},
	\]
	since $t \geq t_0 +1$.
	
	Now
	\[
		\frac{\partial z_1^i}{\partial y^j}
		=
		\frac{\partial X^i(s,t,x,\ph(t,x,y))}{\partial y^j}
		=
		\frac{\partial X^i}{\partial \ph^k} \frac{\partial \ph^k}{\partial y^j},
	\]
	and hence, inserting the above bounds,
	\begin{align*}
		&
		\left\vert \frac{\partial z_1^i}{\partial y^j} - \frac{t-s}{t-t_0} \delta^i_j \right\vert
		\\
		&
		\quad
		=
		\left\vert
		\left( \frac{\partial X^i}{\partial \ph^k} + (t-s) \delta^i_k \right)
		\left( \frac{\partial \ph^k}{\partial y^j} - \frac{\delta^k_j}{t-t_0} \right)
		-
		(t-s) \left( \frac{\partial \ph^i}{\partial y^j} - \frac{\delta^i_j}{t-t_0} \right)
		+
		\frac{1}{t-t_0} \left( \frac{\partial X^i}{\partial \ph^j} + (t-s) \delta^i_j \right)
		\right\vert
		\lesssim
		\varepsilon.
	\end{align*}
	It follows that
	\[
		\left\vert \det \left( \frac{\partial z_1^i}{\partial y^j} \right) - \left( \frac{t-s}{t-t_0} \right)^3 \right\vert
		\lesssim
		\varepsilon,
	\]
	and, if $\varepsilon$ is sufficiently small, the bound \eqref{eq:detdz1dy} follows for $t_0 \leq s \leq t_0 + \frac{1}{2}$.  Similarly,
	\[
		\frac{\partial z_1^i}{\partial x^j}
		=
		\frac{\partial X^i(s,t,x,\ph(t,x,y))}{\partial x^j}
		=
		\frac{\partial X^i}{\partial x^j}
		+
		\frac{\partial X^i}{\partial \ph^k} \frac{\partial \ph^k}{\partial x^j},
	\]
	and
	\begin{multline*}
		\left\vert \frac{\partial z_1^i}{\partial x^j} - \frac{s-t_0}{t-t_0} \delta^i_j \right\vert\\
		=
		\Bigg\vert
		\frac{\partial X^i}{\partial x^j} - \delta^i_j
		+
		\left( \frac{\partial X^i}{\partial \ph^k} + (t-s) \delta^i_k \right)
		\bigg( \frac{\partial \ph^k}{\partial x^j} - \frac{\delta^k_j}{t-t_0} \bigg)
		-
		(t-s) \left( \frac{\partial \ph^i}{\partial x^j} - \frac{\delta^i_j}{t-t_0} \right)
		+
		\frac{1}{t-t_0} \left( \frac{\partial X^i}{\partial \ph^j} + (t-s) \delta^i_j \right)
		\Bigg\vert
		\\
		\lesssim
		\varepsilon,
	\end{multline*}
	which, if $\varepsilon$ is suitably small, implies
	\[
		\left\vert \det \left( \frac{\partial z_1^i}{\partial x^j} \right) - \left( \frac{s-t_0}{t-t_0} \right)^3 \right\vert
		\lesssim
		\varepsilon,
	\]
	and the bound \eqref{eq:detdz1dx} follows.
	
	Finally,
	\[
		\frac{\partial z_1^i}{\partial x^j}
		=
		\frac{\sigma s}{t} \delta^i_j
		+
		(1 - \sigma) \frac{\partial z_1^i}{\partial x^j},
	\]
	and so
	\[
		\left\vert
		\frac{\partial z_2^i}{\partial x^j}
		-
		\left( \sigma \frac{s}{t} + (1-\sigma) \frac{s-t_0}{t-t_0} \right) \delta^i_j
		\right\vert
		\lesssim
		\varepsilon,
	\]
	from which the bound \eqref{eq:detdz2dx} follows.
	
	We now prove the bound \eqref{eq:kappamatrix}.
	For $t_0 \leq s \leq t$ denote $W(s) = \left( X(s,t,x,\ph), \Ph(s,t,x,\ph) \right)$ and $w = (x,\ph)$, so that $\kappa = \frac{\partial W}{\partial w}$.  Now
	\[
		\frac{d}{ds}  W(s)
		=F\big(s,W(s)\big),\quad\text{where}\quad
		F\big(s,W\big)=\Big(\Ph,
		\big( \Ph^i \Ph^{\alpha} \Ph^{\beta} \Gamma_{\alpha \beta}^0 (s,X)
     - \Ph^{\alpha} \Ph^{\beta} \Gamma_{\alpha \beta}^i (s,X) \big)\Big),
	\]
With  $M=\partial W/\partial w$ we have
	\begin{equation}
		\frac{d}{ds} \det M(s) = \tr \left( M^{-1} \frac{dM}{ds} (s) \right) \cdot \det M(s),\quad\text{where}\quad
\tr \left( M^{-1} \frac{dM}{ds}\right)
=\frac{\partial w^j}{\partial W^i}\frac{\partial F^i}{\partial w^j}=\frac{\partial F^i}{\partial W^i} .
	\end{equation}
We have
	\[
		\frac{\partial F^i}{\partial W^i}
		=
		3 \Ph^{\alpha} \Ph^{\beta} \Gamma_{\alpha \beta}^0 (s,X)
		+
		2 \Ph^{i} \Ph^{\beta} \Gamma_{i \beta}^0 (s,X)
		-
		2 \Ph^{\beta} \Gamma_{i \beta}^i (s,X),
	\]
	and so,
	\[
		\left\vert \frac{d}{ds} \det \left( \frac{\partial W}{\partial w}(s)  \right) \right\vert
		\leq
		\frac{C\varepsilon}{s^{1+a}} \left\vert \det \left( \frac{\partial W}{\partial w}(s)  \right) \right\vert.
	\]
	The bound \eqref{eq:kappamatrix} then follows from the Gr\"{o}nwall inequality.
\end{proof}

\subsection{$L^1$ and $L^2$ estimates of components of the energy momentum tensor}

The main part of the proof of Theorem \ref{thm:mainL2} is contained in Propositions \ref{prop:mainL21} and \ref{prop:mainL22} below.  The following Lemma will be used.

\begin{lemma} \label{lem:suppfdecay}
	Suppose $\pi(\supp(f_0)) \subset \{\vert x \vert \leq K \}$, $t\geq t_0+1$ and the assumptions \eqref{eq:Gammasuppf} hold.  Then
	\[
		\int \chi_{\supp(f)}(t,x,\ph) d \ph
		\lesssim
		\frac{1}{t^3},
	\]
	where $\chi_{\supp(f)}(t,x,\ph)$ is the characteristic function of $\supp(f)$.
\end{lemma}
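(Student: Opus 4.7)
The plan is to exploit that $f$ is constant along geodesics, so that $(t,x,\ph) \in \supp(f)$ forces $y := X(t_0,t,x,\ph)$ to lie in $\pi(\supp(f|_{t=t_0})) \subset \{|y|\leq K\}$, and then to perform the change of variables $\ph \mapsto y$ in the integral over $\ph$. This reduces the problem to the Jacobian estimate already contained in Proposition \ref{prop:determinants}.

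First I would observe, via the invariance \eqref{eq:Vlasov2} of $f$ along the geodesic flow, that for $\ph \in \supp(f(t,x,\cdot))$ one has $X(t_0,t,x,\ph) \in \pi(\supp(f|_{t=t_0}))\subset \{|y|\leq K\}$ by hypothesis. Combined with Proposition \ref{prop:suppf}, which bounds $|\ph| \leq c < 1$ on $\supp(f)$, this confines the $\ph$-integration to a bounded set on which the map $\ph \mapsto y := X(t_0,t,x,\ph)$ is well-defined.

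Next I would apply the change of variables $\ph \mapsto y$. By Lemma \ref{lem:changes} evaluated at $s=t_0$,
\[
\frac{\partial y^i}{\partial \ph^j} = -(t-t_0)\delta^i_j + O(\varepsilon t),
\]
so for $\varepsilon$ small the map is a local diffeomorphism whose Jacobian satisfies the bound \eqref{eq:detdpdy} of Proposition \ref{prop:determinants}, namely $|\det(\partial \ph/\partial y)| \lesssim t^{-3}$ (using $t \geq t_0 + 1$). Global injectivity of $\ph \mapsto y$ on $\supp(f(t,x,\cdot))$ is guaranteed by uniqueness of the geodesic flow together with the invertibility of $\partial y/\partial \ph$ just established, so the change of variables yields
\[
\int \chi_{\supp(f)}(t,x,\ph)\,d\ph \leq \int_{\{|y|\leq K\}} \left|\det\frac{\partial \ph}{\partial y}\right|\,dy \lesssim \frac{1}{t^3},
\]
as claimed. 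I do not anticipate any real obstacle here: the lemma is essentially a bookkeeping consequence of the Jacobian estimate in Proposition \ref{prop:determinants}, and the only point needing attention is merely verifying that the hypotheses of Lemma \ref{lem:changes} are met under the assumptions of the lemma, which they are.
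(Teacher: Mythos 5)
Your proof is correct and follows essentially the same route as the paper: use the Vlasov invariance to localize $y = X(t_0,t,x,\ph)$ in $\{|y|\leq K\}$, change variables $\ph \mapsto y$, and invoke the Jacobian bound \eqref{eq:detdpdy} of Proposition \ref{prop:determinants}. The paper's proof is exactly this computation, written in one line.
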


\begin{proof}
	Since $f$ solves the Vlasov equation, $\chi_{\supp(f)}(t,x,\ph) = \chi_{\supp(f)}(t_0,X(t_0),\Ph(t_0))$ and
	\[
		\int \chi_{\supp(f)}(t_0,X(t_0),\Ph(t_0)) d \ph
		\leq
		\frac{C}{t^3} \int \chi_{\supp(f)}(t_0,y,\Ph(t_0,t,x,\ph(t,x,y))) d y
		\leq
		\frac{CK}{t^3},
	\]
	where the change of variables $\ph \mapsto y= X(t_0)$ and the bound \eqref{eq:detdpdy} have been used.
\end{proof}

\begin{proposition} \label{prop:mainL21}
	Suppose $\pi(\supp(f)) \subset \{\vert x \vert \leq ct\}$ and consider $t\geq t_0+1$.  If the assumptions \eqref{eq:Gammasuppf} hold and $\varepsilon$ is sufficiently small then, for any multi index $I$ with $\vert I \vert \leq N-1$ and each $\mu, \nu = 0,1,2,3$,
	\begin{align*}
		\Vert Z^I T^{\mu \nu} (t,\cdot) \Vert_{L^{2}}
		&
		\leq
		\frac{C\mathcal{V}_{\vert I \vert}}{t^{\frac{3}{2}}}
		+
		C \mathbb{D}_{\left\lfloor \frac{\vert I \vert}{2} \right\rfloor +1}
		\left(
		\sum_{\vert J \vert \leq \vert I \vert -1}
		\frac{\Vert ( Z^J \Gamma)(t,\cdot) \Vert_{L^2}}{t^{1+a}}
		+
		\sum_{\vert J \vert \leq \vert I \vert + 1}
		\frac{1}{t^{\frac{3}{2}}}
		\int_{t_0}^t
		\frac{\Vert ( Z^J \Gamma)(s,\cdot) \Vert_{L^2}}{s^{\frac{1}{2}+a}}
		ds
		\right),
		\\
		\Vert Z^I T^{\mu \nu} (t,\cdot) \Vert_{L^{1}}
		&
		\leq
		C\mathcal{V}_{\vert I \vert}
		+
		C \mathbb{D}_{\left\lfloor \frac{\vert I \vert}{2} \right\rfloor +1}
		\left(
		\sum_{\vert J \vert \leq \vert I \vert -1}
		t^{\frac{1}{2} - a}
		\Vert ( Z^J \Gamma)(t,\cdot) \Vert_{L^2}
		+
		\sum_{\vert J \vert \leq \vert I \vert + 1}
		\int_{t_0}^t
		\frac{\Vert ( Z^J \Gamma)(s,\cdot) \Vert_{L^2}}{s^{\frac{1}{2}+a}}
		ds
		\right),
	\end{align*}
	where $Z^I$ is a product of $\vert I \vert$ of the vector fields $\Omega_{ij}, B_i, S$.
\end{proposition}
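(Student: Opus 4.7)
The starting point is the pointwise bound of Proposition \ref{prop:ZITpointwise}, which splits $|Z^I T^{\mu\nu}(t,x)|$ into a \emph{top-order} term involving up to $|I|$ derivatives of $f$ pulled back to $t_0$ along the geodesic flow, and a \emph{low-order} term with only $\lfloor|I|/2\rfloor+1$ derivatives of $f$ multiplied by $\Gamma$-dependent quantities. I will estimate each separately in $L^1$ and $L^2$.

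For the top-order piece, which is essentially $\int |(\partial_x^{n_1}\partial_{\ph}^{n_2} f)(t_0,X(t_0),\Ph(t_0))|\,d\ph$, the central tool is the Jacobian bound \eqref{eq:kappamatrix} from Proposition \ref{prop:determinants} for the change of variables $(x,\ph)\mapsto(y,q)=(X(t_0),\Ph(t_0))$. For the $L^1$ norm of $Z^I T^{\mu\nu}$, this transforms the spacetime integral into $\iint_{\supp f_0}|\partial^{n_1}_y\partial^{n_2}_q f_0|\,dy\,dq$, which is bounded by $\mathcal{V}_{|I|}$ via Cauchy--Schwarz and the compact support of $f_0$. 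For the $L^2$ norm, I first apply Cauchy--Schwarz in $\ph$ against $\chi_{\supp f}$, gaining a factor $t^{-3}$ from Lemma \ref{lem:suppfdecay}, then square, integrate in $x$, and change variables again to obtain the bound $\mathcal{V}_{|I|}^2/t^3$, giving $\mathcal{V}_{|I|}/t^{3/2}$.

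For the low-order piece I bound $|\partial^k f|$ pointwise by $\mathbb{D}_{\lfloor|I|/2\rfloor+1}$ and then integrate the three $\Gamma$-dependent factors from Proposition \ref{prop:ZITpointwise} against $\chi_{\supp f}$. The local factor $t^{2-a}|Z^J\Gamma(t,x)|$ is handled by integrating first in $\ph$ (producing $t^{-3}$ from Lemma \ref{lem:suppfdecay}) and then applying Cauchy--Schwarz in $x$ on the ball $\{|x|\leq ct\}$, whose volume is $\sim t^3$. The time-integrated factor $\int_{t_0}^t s^{1-a}|Z^J\Gamma(s,sx/t)|\,ds$ is treated by Fubini, the change of variables $x\mapsto sx/t$ with Jacobian $(t/s)^3$, and Cauchy--Schwarz (for $L^1$) or Minkowski's inequality in time (for $L^2$), which after cancellation give the bound $\int_{t_0}^t s^{-1/2-a}\|Z^J\Gamma(s,\cdot)\|_{L^2}\,ds$ with the correct powers of $t$.

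The main obstacle, and the source of the derivative loss producing $|J|\leq|I|+1$ in the statement, is the difference term $\int_{t_0}^t s|Z^J\Gamma(s,sx/t)-Z^J\Gamma(s,X(s))|\,ds$. On $s\geq t_0+\tfrac{1}{2}$ I apply the fundamental theorem to write the difference as $\int_0^1 |\partial Z^J\Gamma(s,z_2(\sigma,s,t,x,y))|\,d\sigma$ times $|sx/t-X(s)|$, with $y=X(t_0)$. Proposition \ref{prop:sec22} gives $|sx/t-X(s)|\lesssim s^{1-a}$ on $\supp f$, and the commutation identity \eqref{eq:partialZ} on the interior region yields $|\partial Z^J\Gamma|\lesssim s^{-1}\sum_{|J'|\leq|J|+1}|Z^{J'}\Gamma|$, accounting for the extra derivative. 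Changing variables $\ph\mapsto y$ via \eqref{eq:detdpdy} confines $y$ to the compact set $\pi(\supp f_0)$, after which $x\mapsto z_2(\sigma,s,t,\cdot,y)$ via \eqref{eq:detdz2dx} (with Jacobian controlled by $(t/s)^3$) converts the integrand into $|Z^{J'}\Gamma|$ on $\{|z|\leq cs\}$, which is then estimated exactly as in the preceding paragraph. The remaining range $s\in[t_0,t_0+\tfrac{1}{2}]$ is handled by bounding the difference trivially by $2\sup|Z^J\Gamma|\lesssim\varepsilon s^{-1-a}$ using \eqref{eq:GammamainL2}; this range contributes only lower-order terms that are absorbed into the stated bound.
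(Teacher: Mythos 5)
Your treatment of the top-order term, of the local factor $t^{2-a}|Z^J\Gamma(t,x)|$, of the time-integrated factor via the change of variables $x\mapsto sx/t$, and of the difference term on $s\in[t_0+\tfrac12,t]$ (fundamental theorem of calculus, $|sx/t-X(s)|\lesssim s^{1-a}$, conversion of $s\,\partial Z^J\Gamma$ into $Z^{J'}\Gamma$ with $|J'|\leq|J|+1$, and the $z_2$ change of variables with Jacobian $(t/s)^3$) all match the paper's argument, which is organised as a duality estimate on $\Vert Z^IT^{\mu\nu}F\Vert_{L^1}$ for an arbitrary $F$ but is computationally identical to what you describe.

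There is, however, a genuine gap in your handling of the range $s\in[t_0,t_0+\tfrac12]$. You bound the difference $|Z^J\Gamma(s,sx/t)-Z^J\Gamma(s,X(s))|$ by $2\sup|Z^J\Gamma|\lesssim\varepsilon s^{-1-a}$ using the pointwise hypothesis, but that hypothesis (\eqref{eq:GammamainL2}, equivalently \eqref{eq:Gammasuppf}) is only assumed for $|J|\leq\lfloor N/2\rfloor+2$, whereas here $|J|$ runs up to $|I|\leq N-1$. No pointwise bound on $Z^J\Gamma$ is available at those orders — producing $L^2$ norms of high-order $Z^J\Gamma$ on the right-hand side is the entire point of the proposition — so this step fails. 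The paper instead estimates the two terms of the difference separately in this range: the $sx/t$ term exactly as for the time-integrated factor, and the $X(s)$ term by Cauchy--Schwarz after the change of variables $y\mapsto z_1=X(s,t,x,\ph(t,x,y))$ at fixed $x$, whose Jacobian is uniformly bounded for $t_0\leq s\leq t_0+\tfrac12$ by \eqref{eq:detdz1dy} (this is precisely why that bound appears in Proposition \ref{prop:determinants}; the change of variables $x\mapsto X(s)$ that you use for larger $s$ degenerates as $s\to t_0$, since $X(s)\to y$ becomes independent of $x$). The resulting contribution is $\frac{C\mathbb{D}}{t^{3/2}}\int_{t_0}^{t_0+1/2}\Vert Z^J\Gamma(s,\cdot)\Vert_{L^2}\,ds$, which is absorbed into the stated time integral because $s$ is bounded on this interval. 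With this substitution your argument closes.
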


\begin{proof}
	Given any function $F(t,x)$ it follows from Proposition \ref{prop:ZITpointwise} that
	\begin{align*}
		&
		\Vert Z^I T^{\mu \nu} (t,\cdot) F(t,\cdot) \Vert_{L^1}
		\leq
		C \sum_{n_1+n_2 \leq \vert I \vert}
		\int_{\vert x \vert \leq c t +K}
		\int
		\left\vert
		(\partial^{n_1}_x \partial^{n_2}_{\ph} f)(t_0,X(t_0),\Ph(t_0))
		\right\vert
		d\ph
		\vert F(t,x) \vert
		dx
		\\
		&
		\qquad \qquad
		+
		C \sum_{n_1+n_2 \leq \left\lfloor \frac{\vert I \vert}{2} \right\rfloor + 1}
		\int_{\vert x \vert \leq ct + K}
		\int
		\left\vert
		(\partial^{n_1}_x \partial^{n_2}_{\ph} f)(t_0,X(t_0),\Ph(t_0))
		\right\vert
		\Bigg(
		\sum_{\vert J \vert \leq \vert I \vert -1}
		t^{2-a} \vert (Z^J \Gamma)(t,x) \vert
		\\
		&
		\qquad \qquad
		+
		\sum_{\vert J \vert \leq \vert I \vert }
		\int_{t_0}^t s^{1-a}
		\left\vert (Z^J \Gamma) \left( s, s\frac{x}{t} \right) \right\vert
		+
		s
		\left\vert
		(Z^J \Gamma) \left( s, s\frac{x}{t} \right)
		-
		(Z^J \Gamma) \left( s, X(s) \right)
		\right\vert
		ds
		\Bigg)
		d\ph
		\vert F(t,x) \vert
		dx.
	\end{align*}
	Given $n_1+n_2 \leq \vert I \vert$, it follows from Lemma \ref{lem:suppfdecay} and the bound \eqref{eq:kappamatrix} that
	\begin{align*}
		&
		\int_{\vert x \vert \leq c t +K}
		\int
		\left\vert
		(\partial^{n_1}_x \partial^{n_2}_{\ph} f)(t_0,X(t_0),\Ph(t_0))
		\right\vert
		d\ph
		\vert F(t,x) \vert
		dx
		\\
		&
		\qquad \qquad
		\leq
		\int
		\left(\int \chi_{\supp(f)}(t,x,\ph) d\ph \right)^{\frac{1}{2}}
		\left(
		\int
		\left\vert (\partial^{n_1}_x \partial^{n_2}_{\ph} f)(t_0,X(t_0),\Ph(t_0)) \right\vert^2
		d\ph
		\right)^{\frac{1}{2}}
		\vert F(t,x) \vert
		dx
		\\
		&
		\qquad \qquad
		\leq
		\frac{C}{t^{\frac{3}{2}}}
		\left(
		\int
		\int
		\left\vert (\partial^{n_1}_x \partial^{n_2}_{\ph} f)(t_0,X(t_0),\Ph(t_0)) \right\vert^2
		d\ph
		dx
		\right)^{\frac{1}{2}}
		\Vert F(t,\cdot) \Vert_{L^2}
		\leq
		\frac{C \mathcal{V}_{n_1+n_2}}{t^{\frac{3}{2}}} \Vert F(t,\cdot) \Vert_{L^2}.
	\end{align*}
	Similarly, for any multi index $J$,
	\begin{align*}
		&
		\int_{\vert x \vert \leq c t +K}
		\int
		\left\vert
		(\partial^{n_1}_x \partial^{n_2}_{\ph} f)(t_0,X(t_0),\Ph(t_0))
		\right\vert
		d\ph
		t^{2-a}
		\left\vert (Z^J \Gamma)(t,x) \right\vert
		\vert F(t,x) \vert
		dx
		\\
		&
		\quad
		\leq
		\int_{\vert x \vert \leq c t +K}
		\int_{\vert y \vert \leq K}
		\left\vert
		(\partial^{n_1}_x \partial^{n_2}_{\ph} f)
		(t_0,y,\Ph(t_0,t,x,\ph(t,x,y)))
		\right\vert
		\left\vert
		\det \left( \frac{\partial \ph^i}{\partial y^j} \right)
		\right\vert
		d y
		t^{2-a}
		\left\vert (Z^J \Gamma)(t,x) \right\vert
		\vert F(t,x) \vert
		dx
		\\
		&
		\quad
		\leq
		\frac{C \mathbb{D}_{n_1+n_2}}{t^3} t^{2-a}
		\int_{\vert x \vert \leq ct + K} \left\vert (Z^J \Gamma)(t,x) \right\vert \vert F(t,x) \vert dx
		\leq
		\frac{C \mathbb{D}_{n_1+n_2}}{t^{1+a}}
		\Vert (Z^J \Gamma)(t,\cdot) \Vert_{L^2}
		\Vert F(t,\cdot) \Vert_{L^2},
	\end{align*}
	using the bound \eqref{eq:detdpdy}.  Now, for the third term,
	\begin{multline*}
		\int_{\vert x \vert \leq c t +K}
		\int
		\left\vert
		(\partial^{n_1}_x \partial^{n_2}_{\ph} f)(t_0,X(t_0),\Ph(t_0))
		\right\vert
		d\ph
		\int_{t_0}^t s^{1-a}
		\left\vert (Z^J \Gamma) \left( s, s\frac{x}{t} \right) \right\vert
		ds
		\vert F(t,x) \vert
		dx
		\\
		\leq
		\frac{C \mathbb{D}_{n_1+n_2}}{t^3}
		\int_{t_0}^t s^{1-a} \int_{\vert y \vert \leq K} \int_{\vert x \vert \leq c t +K}
		\left\vert (Z^J \Gamma) \left( s, s\frac{x}{t} \right) \right\vert
		\vert F(t,x) \vert
		dx dy ds
		\\
		\leq
		\frac{C \mathbb{D}_{n_1+n_2}}{t^3}
		\int_{t_0}^t s^{1-a}
		\left(
		\int_{\vert x \vert \leq c t +K}
		\left\vert (Z^J \Gamma) \left( s, s\frac{x}{t} \right) \right\vert^2
		dx
		\right)^{\frac{1}{2}}
		\Vert F(t,\cdot) \Vert_{L^2}
		ds
		\\
		\leq
		\frac{C \mathbb{D}_{n_1+n_2}}{t^{\frac{3}{2}}}
		\int_{t_0}^t
		\frac{
		\left\Vert (Z^J \Gamma) \left( s, \cdot \right) \right\Vert_{L^2}
		}
		{
		s^{\frac{1}{2}+a}
		}
		ds
		\Vert F(t,\cdot) \Vert_{L^2},
	\end{multline*}
	where the change of variables $x^i \mapsto z^i := s \frac{x^i}{t}$ has been used, along with the fact that $\det \left( \frac{\partial x^i}{\partial z^j} \right) = \left( \frac{t}{s} \right)^3$.
	
	For the final term, first write,
	\begin{multline*}
		\sum_{\vert J \vert \leq \vert I\vert}
		\int_{t_0}^t
		s
		\left\vert
		(Z^J \Gamma) \left( s, s\frac{x}{t} \right)
		-
		(Z^J \Gamma) \left( s, X(s) \right)
		\right\vert
		ds
		\\
		\leq\!\!
		\sum_{\vert J \vert \leq \vert I\vert}
		\int_{t_0}^{t_0+\frac{1}{2}}\!\!\!\!
		s
		\left\vert
		(Z^J \Gamma) \left( s, s\frac{x}{t} \right)
		\right\vert
		+
		s
		\left\vert
		(Z^J \Gamma) \left( s, X(s) \right)
		\right\vert
		ds
		+\!\!
		\sum_{\vert J \vert \leq \vert I\vert}
		\int_{t_0+\frac{1}{2}}^t\!\!\!\!
		s
		\left\vert
		(Z^J \Gamma) \left( s, s\frac{x}{t} \right)
		-
		(Z^J \Gamma) \left( s, X(s) \right)
		\right\vert
		ds.
	\end{multline*}
	As above,
	\begin{multline*}
		\int_{\vert x \vert \leq c t +K}
		\int
		\left\vert
		(\partial^{n_1}_x \partial^{n_2}_{\ph} f)(t_0,X(t_0),\Ph(t_0))
		\right\vert
		d\ph
		\int_{t_0}^{t_0+\frac{1}{2}} s
		\left\vert (Z^J \Gamma) \left( s, s\frac{x}{t} \right) \right\vert
		ds
		\vert F(t,x) \vert
		dx
		\\
		\leq
		\frac{C \mathbb{D}_{n_1+n_2}}{t^{\frac{3}{2}}}
		\int_{t_0}^{t_0+\frac{1}{2}}
		\left\Vert (Z^J \Gamma) \left( s, \cdot \right) \right\Vert_{L^2}
		ds
		\Vert F(t,\cdot) \Vert_{L^2}
		\leq
		\frac{C \mathbb{D}_{n_1+n_2}}{t^{\frac{3}{2}}}
		\int_{t_0}^t
		\frac{
		\left\Vert (Z^J \Gamma) \left( s, \cdot \right) \right\Vert_{L^2}
		}
		{
		s^{\frac{1}{2}+a}
		}
		ds
		\Vert F(t,\cdot) \Vert_{L^2},
	\end{multline*}
	and, estimating the term with $X(s,t,x,\ph)$ slightly differently,
	\begin{multline*}
		\int_{\vert x \vert \leq c t +K}
		\int
		\left\vert
		(\partial^{n_1}_x \partial^{n_2}_{\ph} f)(t_0,X(t_0),\Ph(t_0))
		\right\vert
		\int_{t_0}^{t_0+\frac{1}{2}} s
		\left\vert (Z^J \Gamma) \left( s, X(s) \right) \right\vert
		ds
		d\ph
		\vert F(t,x) \vert
		dx
		\\
		\leq
		\frac{C \mathbb{D}_{n_1+n_2}}{t^3}
		\int_{t_0}^{t_0+\frac{1}{2}} \int_{\vert x \vert \leq c t +K}
		\vert F(t,x) \vert
		\int_{\vert y \vert \leq K}
		\left\vert (Z^J \Gamma) \left( s, X(s) \right) \right\vert
		dy dx ds
		\\
			\leq
		\frac{C \mathbb{D}_{n_1+n_2}}{t^3}
		\int_{t_0}^{t_0+\frac{1}{2}} \int_{\vert x \vert \leq c t +K}
		\vert F(t,x) \vert
		\left(
		\int_{\vert y \vert \leq K}
		\left\vert (Z^J \Gamma) \left( s, X(s) \right) \right\vert^2
		dy
		\right)^{\frac{1}{2}}
		dx ds
		\\
		\leq
		\frac{C \mathbb{D}_{n_1+n_2}}{t^{\frac{3}{2}}}
		\int_{t_0}^{t_0+\frac{1}{2}}
		\left\Vert (Z^J \Gamma) \left( s, \cdot \right) \right\Vert_{L^2}
		ds
		\Vert F(t,\cdot) \Vert_{L^2}
		\leq
		\frac{C \mathbb{D}_{n_1+n_2}}{t^{\frac{3}{2}}}
		\int_{t_0}^{t}
		\frac{\left\Vert (Z^J \Gamma) \left( s, \cdot \right) \right\Vert_{L^2}}{s^{\frac{1}{2}+a}}
		ds
		\Vert F(t,\cdot) \Vert_{L^2},
	\end{multline*}
	where now the change of variables $y^i \mapsto z_1^i := X(s,t,x,\ph(t,x,y))^i$ has been used, together with Proposition \ref{prop:determinants}, which guarantees that $\left\vert \det \left( \frac{\partial y^i}{\partial z_1^j} \right) \right\vert \leq C$ when $t_0\leq s \leq t_0+\frac{1}{2}$.  Note that this term was estimated slightly differently since the change of variables $x^i \mapsto X(s,t,x,\ph(t,x,y))^i$ breaks down as $s\to t_0$, since then $X(s,t,x,\ph(t,x,y))\to y$.
	
	Finally, write
	\begin{align*}
		Z^J \Gamma \left( s, s\frac{x}{t} \right)
		-
		Z^J \Gamma \left( s, X(s) \right)
		&
		=
		\int_0^1 \frac{d}{d\sigma}
		\left( Z^J \Gamma \left( s, \sigma s\frac{x}{t} + (1-\sigma) X(s) \right) \right)
		d\sigma
		\\
		&
		=
		\left( s \frac{x^l}{t} - X(s)^l \right)
		\int_0^1
		\left( \partial_{x^l} Z^J \Gamma \right) \left( s, \sigma s\frac{x}{t} + (1-\sigma) X(s) \right)
		d\sigma.
	\end{align*}
	Since,
	\[
		\left\vert s \frac{x^l}{t} - X(s)^l \right\vert
		\leq
		\left\vert s \frac{x^l}{t} - X_2(s,t,x,\ph)^l \right\vert
		+
		\left\vert \Xb(s,t,x,\ph)^l \right\vert
		\leq
		Cs^{1-a},
	\]
	by Proposition \ref{prop:sec23} and the bound \eqref{eq:X2minusX1} with $I=0$, it follows that
	\[
		\sum_{\vert J \vert \leq \vert I \vert}
		s
		\left\vert
		Z^J \Gamma \left( s, s\frac{x}{t} \right)
		-
		Z^J \Gamma \left( s, X(s) \right)
		\right\vert
		\leq
		C
		\sum_{\vert J \vert \leq \vert I \vert + 1}
		s^{1-a}
		\int_0^1
		\left\vert
		\left( Z^J \Gamma \right) \left( s, \sigma s\frac{x}{t} + (1-\sigma) X(s) \right)
		\right\vert
		d\sigma.
	\]
	Hence,
	\begin{align*}
		&
		\int_{\vert x \vert \leq ct + K}
		\int
		\left\vert
		(\partial^{n_1}_x \partial^{n_2}_{\ph} f)(t_0,X(t_0),\Ph(t_0))
		\right\vert
		\\
		&
		\quad
		\times
		\sum_{\vert J \vert \leq \vert I \vert }
		\int_{t_0+ \frac{1}{2}}^t
		s
		\left\vert
		(Z^J \Gamma) \left( s, s\frac{x}{t} \right)
		-
		(Z^J \Gamma) \left( s, X(s) \right)
		\right\vert
		ds
		d\ph
		\vert F(t,x) \vert
		dx
		\\
		&
		\quad
		\leq
		\frac{C\mathbb{D}_{n_1+n_2}}{t^3}
		\sum_{\vert J \vert \leq \vert I \vert + 1}
		\int_{\vert x \vert \leq ct + K}
		\int_{\vert y \vert \leq K}
		\int_{t_0+\frac{1}{2}}^t
		s^{1-a}
		\int_0^1
		\left\vert
		\left( Z^J \Gamma \right) \left( s, \sigma s\frac{x}{t} + (1-\sigma) X(s) \right)
		\right\vert
		d\sigma
		ds
		dy
		\vert F(t,x) \vert
		dx
		\\
		&
		\leq
		\frac{C\mathbb{D}_{n_1+n_2}}{t^3}\!\!\!\!\!
		\sum_{\vert J \vert \leq \vert I \vert + 1}
		\int_{t_0+\frac{1}{2}}^t\!\!\!\!\!\!
		s^{1-a}
		\int_0^1
		\int_{\vert y \vert \leq K}
		\Big(
		\int_{\vert x \vert \leq ct + K}
		\left\vert
		\left( Z^J \Gamma \right) \left( s, \sigma s\frac{x}{t} + (1-\sigma) X(s) \right)
		\right\vert^2\!\!
		dx
		\Big)^{\frac{1}{2}}
		dy
		d\sigma
		ds\,
		\Vert F(t,\cdot) \Vert_{L^2}
		\\
		&
		\leq
		\frac{C\mathbb{D}_{n_1+n_2}}{t^{\frac{3}{2}}}
		\sum_{\vert J \vert \leq \vert I \vert + 1}
		\int_{t_0}^t
		\frac{\left\Vert
		\left( Z^J \Gamma \right) \left( s, \cdot \right)
		\right\Vert_{L^2}}
		{s^{\frac{1}{2} + a}}
		ds
		\Vert F(t,\cdot) \Vert_{L^2},
	\end{align*}
	where the change of variables $x^i \mapsto z_2^i := \sigma s \frac{x^i}{t} + (1-\sigma)X(s)^i$ has been used, along with the fact that $\left\vert \det \left( \frac{\partial x^i}{\partial z_2^j} \right) \right\vert \leq C \left( \frac{t}{s} \right)^3$ when $0 \leq \sigma \leq 1$, $t_0 + \frac{1}{2} \leq s \leq t$, by Proposition \ref{prop:determinants}.
	
	It follows that, for any $F(t,x)$,
	\begin{align*}
		&
		\Vert \left( Z^I T^{\mu \nu} \right) (t,\cdot) F(t, \cdot) \Vert_{L^{1}}
		\leq
		\frac{C\mathcal{V}_{\vert I \vert} \Vert F(t,\cdot) \Vert_{L^2}}{t^{\frac{3}{2}}}
		\\
		&
		\qquad
		+
		C \mathbb{D}_{\left\lfloor \frac{\vert I \vert}{2} \right\rfloor +1} \Vert F(t,\cdot) \Vert_{L^2}
		\Big(
		\sum_{\vert J \vert \leq \vert I \vert -1}
		\frac{\Vert ( Z^J \Gamma)(t,\cdot) \Vert_{L^2}}{t^{1+a}}
		+
		\sum_{\vert J \vert \leq \vert I \vert + 1}
		\frac{1}{t^{\frac{3}{2}}}
		\int_{t_0}^t
		\frac{\Vert ( Z^J \Gamma)(s,\cdot) \Vert_{L^2}}{s^{\frac{1}{2}+a}}
		ds
		\Big).
	\end{align*}
	The $L^2$ estimate follows by setting $F(t,x) = Z^I T^{\mu \nu}(t,x)$ and dividing by $\Vert Z^I T^{\mu \nu} (t,\cdot) \Vert_{L^2}$.  The $L^1$ estimate follows by setting $F(t,x) = \mathrm{sign}(Z^I T^{\mu \nu}(t,x)) \chi_{\{ \vert x \vert \leq c t + K \}}$, and using the fact that $\supp(T^{\mu \nu}) \subset \{ \vert x \vert \leq c t + K \}$, and $\Vert \chi_{\{ \vert x \vert \leq c t + K \}} \Vert_{L^2} \leq C t^{\frac{3}{2}}$.
\end{proof}

\begin{proposition} \label{prop:mainL22}
	Suppose $\pi(\supp(f)) \subset \{\vert x \vert \leq ct\}$ and consider $t\geq t_0+1$.  If the assumptions \eqref{eq:Gammasuppf} hold and $\varepsilon$ is sufficiently small then, for any multi index $I$ with $\vert I \vert \leq N$ and each $\mu,\nu = 0,1,2,3$,
	\begin{align*}
		\Vert \left( Z^I T^{\mu \nu} \right) (t,\cdot) \Vert_{L^{2}}
		\leq
		\frac{C\mathcal{V}_{\vert I \vert}}{t^{\frac{3}{2}}}
		+
		C \mathbb{D}_{\left\lfloor \frac{\vert I \vert}{2} \right\rfloor +1}
		\Big(
		\sum_{\vert J \vert \leq \vert I \vert -1}
		\frac{\Vert ( Z^J \Gamma)(t,\cdot) \Vert_{L^2}}{t^{1+a}}
		+
		\sum_{\vert J \vert \leq \vert I \vert}
		\frac{1}{t^{\frac{3}{2}}}
		\int_{t_0}^t
		\frac{\Vert ( Z^J \Gamma)(s,\cdot) \Vert_{L^2}}{s^{\frac{1}{2}}}
		ds
		\Big),
	\end{align*}
	where $Z^I$ is a product of $\vert I \vert$ of the vector fields $\Omega_{ij}, B_i, S$.
\end{proposition}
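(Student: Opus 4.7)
The plan is to mimic the proof of Proposition \ref{prop:mainL21} almost verbatim, keeping the same pointwise starting bound from Proposition \ref{prop:ZITpointwise} and the same decomposition of the right hand side into four groups of terms: the contribution of the initial datum, the term with $(Z^J \Gamma)(t,x)$, the term with $(Z^J \Gamma)(s,sx/t)$, and the difference term with $(Z^J \Gamma)(s,sx/t) - (Z^J \Gamma)(s,X(s))$. For the first three groups of terms, nothing needs to change: the argument from Proposition \ref{prop:mainL21}, combined with the weighted Cauchy--Schwarz trick (testing against $F(t,x) = Z^I T^{\mu\nu}(t,x)$ and dividing by $\Vert Z^I T^{\mu\nu}(t,\cdot) \Vert_{L^2}$), produces exactly the $\mathcal{V}_{|I|} t^{-3/2}$ contribution, the $\|(Z^J\Gamma)(t,\cdot)\|_{L^2}/t^{1+a}$ contribution for $|J|\leq |I|-1$, and the $t^{-3/2} \int_{t_0}^t \|(Z^J\Gamma)(s,\cdot)\|_{L^2} s^{-1/2-a} ds$ contribution for $|J|\leq |I|$ (which is stronger than what is claimed in Proposition \ref{prop:mainL22}, since $s^{-1/2-a} \leq s^{-1/2}$).

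The only new ingredient is the treatment of the difference term at the top order $|I| = N$. In Proposition \ref{prop:mainL21} the estimate $|sx/t - X(s)| \lesssim s^{1-a}$ from Propositions \ref{prop:sec23} and Corollary \ref{cor:ZbX2} was used together with the fundamental theorem of calculus in the $x$ variable, gaining a factor $s^{1-a}$ at the cost of one extra derivative of $\Gamma$. That argument cannot be repeated when $|I| = N$, because the extra derivative would push the required regularity of $\Gamma$ to order $N+1$, exceeding the hypothesis of Theorem \ref{thm:mainL2}. Instead, I simply apply the triangle inequality
\[
	s \big| (Z^J \Gamma)(s, sx/t) - (Z^J \Gamma)(s, X(s)) \big|
	\leq
	s \big| (Z^J \Gamma)(s, sx/t) \big|
	+
	s \big| (Z^J \Gamma)(s, X(s)) \big|,
\]
which is valid for all $|J| \leq |I| \leq N$ and requires no extra derivative. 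The price is the loss of the factor $s^{-a}$, which is the reason for the weaker weight $s^{-1/2}$ (instead of $s^{-1/2 - a}$) in the integrand of the statement.

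To handle each of the two terms produced by the triangle inequality I reuse the two changes of variables already developed in the proof of Proposition \ref{prop:mainL21}. For $s |(Z^J \Gamma)(s, sx/t)|$, the change $x \mapsto sx/t$ contributes a Jacobian factor $(t/s)^3$, and Lemma \ref{lem:suppfdecay} produces the $t^{-3}$ coming from the $\ph$ integral, so after Cauchy--Schwarz in $x$ this term contributes $\mathbb{D}_{\lfloor |I|/2\rfloor + 1}\, t^{-3/2} \int_{t_0}^t \|(Z^J\Gamma)(s,\cdot)\|_{L^2} s^{-1/2} \, ds$ to the final bound. For $s |(Z^J \Gamma)(s, X(s))|$ I split the $s$ integration as in Proposition \ref{prop:mainL21}: on $t_0 \leq s \leq t_0 + 1/2$ I use the change of variables $y \mapsto z_1(s,t,x,y) = X(s,t,x,\ph(t,x,y))$, whose Jacobian is bounded by Proposition \ref{prop:determinants} (bound \eqref{eq:detdz1dy}); on $t_0 + 1/2 \leq s \leq t$ I use the change $x \mapsto z_1$, whose inverse Jacobian is bounded by $C(t/s)^3$ (bound \eqref{eq:detdz1dx}). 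In both regimes the contribution is again of the form $\mathbb{D}_{\lfloor |I|/2\rfloor + 1}\, t^{-3/2} \int_{t_0}^t \|(Z^J\Gamma)(s,\cdot)\|_{L^2} s^{-1/2} \, ds$ for $|J| \leq |I|$.

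Putting everything together and summing over $|J|$ yields the claimed inequality. The main conceptual point, and the reason this cruder estimate is useful despite losing the $s^{-a}$ decay, is that Proposition \ref{prop:mainL22} does not lose a derivative of $\Gamma$, which is exactly what is needed to close the top order energy estimate in the proof of Theorem \ref{thm:main2}; the sharper but derivative-losing estimate of Proposition \ref{prop:mainL21} can then be reserved for indices $|I| \leq N-1$, as pointed out in the remark following Theorem \ref{thm:mainL2}.
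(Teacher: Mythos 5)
Your proposal is correct and follows essentially the same route as the paper: the paper likewise carries over the first three groups of terms from Proposition \ref{prop:mainL21} unchanged, replaces the fundamental-theorem-of-calculus cancellation in the difference term by the triangle inequality at top order (accepting the weaker weight $s^{-1/2}$ in exchange for not losing a derivative), and estimates the two resulting terms with the same changes of variables $x \mapsto s x/t$ and $x \mapsto X(s,t,x,\ph(t,x,y))$ using the Jacobian bounds of Proposition \ref{prop:determinants}. The only cosmetic difference is that the paper applies the triangle inequality only on $s \in [t_0 + \tfrac{1}{2}, t]$, the piece over $[t_0, t_0+\tfrac{1}{2}]$ having already been handled without the extra derivative in the portion of the argument inherited from Proposition \ref{prop:mainL21}.
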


\begin{proof}
	The proof is very similar to that of Proposition \ref{prop:mainL21}.  Recall that, for any $F(t,x)$,
	\begin{align*}
		&
		\Vert \left( Z^I T^{\mu \nu} \right) (t,\cdot) F(t, \cdot) \Vert_{L^{1}}
		\leq
		\frac{C\mathcal{V}_{\vert I \vert} \Vert F(t,\cdot) \Vert_{L^2}}{t^{\frac{3}{2}}}
		\\
		&
		\qquad
		+
		C \mathbb{D}_{\left\lfloor \frac{\vert I \vert}{2} \right\rfloor +1} \Vert F(t,\cdot) \Vert_{L^2}
		\Big(
		\sum_{\vert J \vert \leq \vert I \vert -1}
		\frac{\Vert ( Z^J \Gamma)(t,\cdot) \Vert_{L^2}}{t^{1+a}}
		+
		\sum_{\vert J \vert \leq \vert I \vert}
		\frac{1}{t^{\frac{3}{2}}}
		\int_{t_0}^t
		\frac{\Vert ( Z^J \Gamma)(s,\cdot) \Vert_{L^2}}{s^{\frac{1}{2}+a}}
		ds
		\Big)
		\\
		&
		\qquad
		+
		C
		\sum_{n_1+n_2 \leq \left\lfloor \frac{\vert I \vert}{2} \right\rfloor +1}
		\sum_{\vert J \vert \leq \vert I \vert}
		\int_{\vert x \vert \leq ct + K}
		\int
		\left\vert
		(\partial^{n_1}_x \partial^{n_2}_{\ph} f)(t_0,X(t_0),\Ph(t_0))
		\right\vert
		\\
		&
		\qquad \qquad
		\times
		\int_{t_0+\frac{1}{2}}^t
		s
		\left\vert
		(Z^J \Gamma) \left( s, s\frac{x}{t} \right)
		-
		(Z^J \Gamma) \left( s, X(s) \right)
		\right\vert
		ds
		d\ph
		\vert F(t,x) \vert
		dx.
	\end{align*}
	It is only the final term which is estimated differently.  In Proposition \ref{prop:mainL21} and extra derivative of $\Gamma$ was used to exploit the cancellation in $(Z^J \Gamma) \left( s, s\frac{x}{t} \right) - (Z^J \Gamma) \left( s, X(s) \right)$.  Now, at the top order, these terms are estimated individually,
	\begin{equation*}
		\sum_{\vert J \vert \leq \vert I \vert}
		\int_{t_0+\frac{1}{2}}^t
		s
		\left\vert
		(Z^J \Gamma) \left( s, s\frac{x}{t} \right)
		-
		(Z^J \Gamma) \left( s, X(s) \right)
		\right\vert
		ds
		\leq
		C\!\!\!
		\sum_{\vert J \vert \leq \vert I \vert}
		\int_{t_0+\frac{1}{2}}^t
		s
		\left\vert
		(Z^J \Gamma) \left( s, s\frac{x}{t} \right)
		\right\vert
		+
		s
		\left\vert
		(Z^J \Gamma) \left( s, X(s) \right)
		\right\vert
		ds.
	\end{equation*}
	The first term is estimated exactly as in Proposition \ref{prop:mainL21} (note that the $s$ power is now slightly worse) to give,
	\begin{align*}
		&
		\int_{\vert x \vert \leq ct + K}
		\int
		\left\vert
		(\partial^{n_1}_x \partial^{n_2}_{\ph} f)(t_0,X(t_0),\Ph(t_0))
		\right\vert
		\int_{t_0+\frac{1}{2}}^t
		s
		\left\vert
		(Z^J \Gamma) \left( s, s\frac{x}{t} \right)
		\right\vert
		ds
		d\ph
		\vert F(t,x) \vert
		dx
		\\
		&
		\qquad \qquad
		\leq
		\frac{C\mathbb{D}_{n_1+n_2}}{t^{{3}/{2}}}
		\Vert F(t,\cdot) \Vert_{L^2}
		\int_{t_0}^t
		\frac{\Vert (Z^JT^{\mu \nu})(s,\cdot)\Vert_{L^2}}{s^{{1}/{2}}}
		ds.
	\end{align*}
	The second term is estimated similarly,
	\begin{align*}
		&
		\int_{\vert x \vert \leq ct + K}
		\int
		\left\vert
		(\partial^{n_1}_x \partial^{n_2}_{\ph} f)(t_0,X(t_0),\Ph(t_0))
		\right\vert
		\int_{t_0+\frac{1}{2}}^t
		s
		\left\vert
		(Z^J \Gamma) \left( s, X(s) \right)
		\right\vert
		ds
		d\ph
		\vert F(t,x) \vert
		dx
		\\
		&
		\qquad \qquad
		\leq
		\frac{C\mathbb{D}_{n_1+n_2}}{t^3}
		\int_{t_0+\frac{1}{2}}^t s \int_{\vert y \vert \leq K}
		\Big(
		\int_{\vert x \vert \leq ct +K} \left\vert (Z^J \Gamma) \left( s, X(s) \right) \right\vert^2 dx
		\Big)^{\frac{1}{2}}
		dy
		ds
		\Vert F(t,\cdot) \Vert_{L^2}
		\\
		&
		\qquad \qquad
		\leq
		\frac{C\mathbb{D}_{n_1+n_2}}{t^{{3}/{2}}}
		\Vert F(t,\cdot) \Vert_{L^2}
		\int_{t_0}^t
		\frac{\Vert (Z^JT^{\mu \nu})(s,\cdot)\Vert_{L^2}}{s^{\frac{1}{2}}}
		ds,
	\end{align*}
	where now the change of variables $x^i \mapsto z_2^i := X(s,t,x,\ph(t,x,y))^i$ has been used, along with the fact that $\vert \det ( {\partial x^i}/{\partial z_2^j} ) \vert \leq C ( {t}/{s} )^3$ for $t_0 + \frac{1}{2} \leq s \leq t$, by Proposition \ref{prop:determinants}.
	
	It follows that
	\begin{align*}
		&
		\Vert \left( Z^I T^{\mu \nu} \right) (t,\cdot) F(t, \cdot) \Vert_{L^{1}}
		\leq
		\frac{C\mathcal{V}_{\vert I \vert} \Vert F(t,\cdot) \Vert_{L^2}}{t^{\frac{3}{2}}}
		\\
		&
		\qquad
		+
		C \mathbb{D}_{\left\lfloor \frac{\vert I \vert}{2} \right\rfloor +1} \Vert F(t,\cdot) \Vert_{L^2}
		\Big(
		\sum_{\vert J \vert \leq \vert I \vert -1}
		\frac{\Vert ( Z^J \Gamma)(t,\cdot) \Vert_{L^2}}{t^{1+a}}
		+
		\sum_{\vert J \vert \leq \vert I \vert}
		\frac{1}{t^{\frac{3}{2}}}
		\int_{t_0}^t
		\frac{\Vert ( Z^J \Gamma)(s,\cdot) \Vert_{L^2}}{s^{\frac{1}{2}}}
		ds
		\Big).
	\end{align*}
	The proof then follows by setting $F(t,x) = Z^IT^{\mu \nu}(t,x)$.
\end{proof}

\subsection{Proof of Theorem \ref{thm:mainL2}}
\label{subsec:Tmainestimates}

First note that Proposition \ref{prop:mainL21} and Proposition \ref{prop:mainL22} can be extended to include $t_0 \leq t \leq t_0 +1$ as follows.

\begin{proposition}
	Suppose $\pi\left( \supp(f) \right) \subset \{ \vert x \vert \leq c t\}$ and consider $t_0 \leq t \leq t_0 +1$.  If the assumptions \eqref{eq:Gammasuppf} hold and $\varepsilon$ is sufficiently small then, for any multi index $I$ with $\vert I \vert \leq N$,
	\[
		\Vert \partial^I T^{\mu \nu}( t,\cdot) \Vert_{L^2}
		+
		\Vert \partial^I T^{\mu \nu}( t,\cdot) \Vert_{L^1}
		\lesssim
		\mathcal{V}_{\vert I \vert}
		+
		\mathbb{D}_{\left\lfloor \frac{\vert I \vert}{2} \right\rfloor + 1}
		\Big(
		\sum_{\vert J \vert \leq \vert I \vert -1} \Vert \partial^J \Gamma(t,\cdot) \Vert_{L^2}
		+
		\sum_{\vert J \vert \leq \vert I \vert} \int_{t_0}^t \Vert \partial^J \Gamma(s,\cdot) \Vert_{L^2} ds
		\Big).
	\]
\end{proposition}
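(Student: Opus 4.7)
The plan is to parallel the proofs of Propositions \ref{prop:mainL21} and \ref{prop:mainL22}, with ordinary spacetime derivatives $\partial$ in place of the modified vector fields $\Zb$ (which become singular at $t=t_0$), using the bounds of Section \ref{subsec:smalltime} in place of those of Section \ref{section:geodesics}. On the bounded interval $[t_0,t_0+1]$ the analysis simplifies significantly: the spatial support $\supp T^{\mu\nu}(t,\cdot) \subset \{|x|\leq ct+K\}$ has uniformly bounded volume, the momentum support $\{|\ph|\leq c\}$ is bounded, and all the time-weight factors appearing in the large-time analysis become universal constants.

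First I would derive a $\partial$-analog of Proposition \ref{prop:ZITpointwise}. Starting from the representation $T^{\mu\nu}(t,x) = \int f(t,x,\ph)\Lambda^{\mu\nu}(\ph,h(t,x))\,d\ph$ from \eqref{eq:Tmunuschematic} together with the Vlasov identity $f(t,x,\ph) = f(t_0, X(t_0), \Ph(t_0))$, the Leibniz and chain rules applied to $\partial^I T^{\mu\nu}$ produce a sum of terms of the form $(\text{product of }\partial^{J_i}X(t_0),\partial^{L_j}\Ph(t_0),\partial^{M_k}h) \cdot(\partial_x^a\partial_\ph^b f_0)(t_0, X(t_0), \Ph(t_0))$, with total derivative count $|I|$, integrated over $\ph$. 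Those terms in which more than $\lfloor |I|/2\rfloor+1$ derivatives fall on $f_0$ have all other factors bounded in $L^\infty$ by Proposition \ref{prop:lowpointwise}, and contribute pointwise $\int_\ph |(\partial_x^a\partial_\ph^b f_0)(t_0, X(t_0), \Ph(t_0))|\,d\ph$ for some $a+b\leq|I|$. The remaining terms have $a+b \leq \lfloor |I|/2\rfloor+1$ (bounded in $L^\infty$ by $\mathbb{D}_{\lfloor |I|/2\rfloor+1}$), and at most one top-order geodesic factor, which by Proposition \ref{prop:lowhighorder} is controlled by $1 + \sum_{|J|\leq|I|-1}|\partial^J\Gamma(t,x)| + \sum_{|J|\leq|I|}\int_{t_0}^t |\partial^J\Gamma(s,X(s,t,x,\ph))|\,ds$.

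The $L^2_x$ estimate is obtained term by term. For the $f_0$ term, Cauchy--Schwarz in $\ph$ over the bounded momentum support gives
\[
\Big\Vert \int_\ph |(\partial_x^a\partial_\ph^b f_0)(t_0,X(t_0),\Ph(t_0))|\,d\ph\Big\Vert_{L^2_x} \lesssim \Vert (\partial_x^a\partial_\ph^b f_0)(t_0,X(t_0),\Ph(t_0))\Vert_{L^2_xL^2_\ph},
\]
and the change of variables $(x,\ph)\mapsto(X(t_0),\Ph(t_0))$---whose Jacobian is close to $1$ by \eqref{eq:kappamatrix}, which holds for all $t\geq t_0$---yields the bound $\mathcal{V}_{|I|}$. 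The pointwise factor $|\partial^J\Gamma(t,x)|$ gives $\Vert\partial^J\Gamma(t,\cdot)\Vert_{L^2}$ directly after integrating out the bounded $\ph$ variable. For the $s$-integrated term, Minkowski's inequality in $s$ and $\ph$, followed by the change of variables $x\mapsto X(s,t,x,\ph)$ for each fixed $(s,\ph)$---whose Jacobian is close to $1$ by \eqref{eq:dXdx}---yields $\int_{t_0}^t\Vert\partial^J\Gamma(s,\cdot)\Vert_{L^2}\,ds$.

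Finally, the $L^1$ estimate follows from the $L^2$ estimate by Cauchy--Schwarz, since $\partial^I T^{\mu\nu}(t,\cdot)$ is supported in a set of uniformly bounded volume for $t\in[t_0,t_0+1]$. The main obstacle is really only the bookkeeping involved in the chain-rule expansion; no genuinely new analytic input is required, as all the pointwise bounds for $\partial^I X, \partial^I \Ph$ and the Jacobian estimates for the relevant changes of variables are already supplied by Section \ref{subsec:smalltime}, Lemma \ref{lem:changes}, and Proposition \ref{prop:determinants}.
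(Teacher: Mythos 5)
Your proposal is correct and follows essentially the same route as the paper: a spacetime-derivative analogue of Proposition \ref{prop:ZITpointwise} built from Propositions \ref{prop:lowpointwise} and \ref{prop:lowhighorder}, the Jacobian bounds \eqref{eq:kappamatrix} and \eqref{eq:dXdx} for the changes of variables, and the bounded support to pass between $L^1$ and $L^2$. The only cosmetic difference is that the paper phrases the $L^2$ and $L^1$ estimates via duality against a test function $F$ (taking $F=\partial^I T^{\mu\nu}$, respectively $F=\mathrm{sign}(\partial^I T^{\mu\nu})$), which is equivalent to your direct term-by-term estimation.
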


\begin{proof}
	By Proposition \ref{prop:lowpointwise} and Proposition \ref{prop:lowhighorder}, it follows from an appropriate version of Proposition \ref{prop:ZITpointwise} that
	\begin{align*}
		&
		\left\vert \partial^I T^{\mu \nu} (t,x) \right\vert
		\lesssim
		\sum_{n_1+n_2 \leq \vert I \vert}
		\int
		\left\vert (\partial_x^{n_1} \partial_{\ph}^{n_2} f)(t_0,X(t_0),\Ph(t_0)) \right\vert
		d\ph
		\\
		&
		+ \!\!\!\!\!
		\sum_{n_1+n_2 \leq \left\lfloor \frac{\vert I \vert}{2} \right\rfloor + 1}
		\int
		\left\vert (\partial_x^{n_1} \partial_{\ph}^{n_2} f)(t_0,X(t_0),\Ph(t_0)) \right\vert
		\Big(
		\sum_{\vert J \vert \leq \vert I \vert -1}
		\left\vert (\partial^J \Gamma)(t,x) \right\vert
		+
		\sum_{\vert J \vert \leq \vert I \vert}
		\int_{t_0}^t
		\left\vert (\partial^J \Gamma)(s,X(s)) \right\vert
		ds
		\Big)
		d\ph
		\\
		&
		\lesssim
		\sum_{n_1+n_2 \leq \vert I \vert}
		\int
		\left\vert (\partial_x^{n_1} \partial_{\ph}^{n_2} f)(t_0,X(t_0),\Ph(t_0)) \right\vert
		d\ph
		\\
		&
		\qquad
		+
		\mathbb{D}_{\left\lfloor \frac{\vert I \vert}{2} \right\rfloor + 1}
		\Big(
		\sum_{\vert J \vert \leq \vert I \vert -1}
		\left\vert (\partial^J \Gamma)(t,x) \right\vert
		+
		\sum_{\vert J \vert \leq \vert I \vert}
		\int
		\int_{t_0}^t
		\left\vert (\partial^J \Gamma)(s,X(s)) \right\vert
		ds
		d \ph
		\Big).
	\end{align*}
	For any function $F(t,x)$,
	\[
		\int \int
		\left\vert (\partial_x^{n_1} \partial_{\ph}^{n_2} f)(t_0,X(t_0),\Ph(t_0)) \right\vert
		d\ph
		\vert F(t,x) \vert
		dx
		\lesssim
		\mathcal{V}_{n_1+n_2} \Vert F(t,\cdot) \Vert_{L^2},
	\]
	by \eqref{eq:kappamatrix} (as in the proof of Proposition \ref{prop:mainL21}), and
	\begin{align*}
		\int \vert F(t,x) \vert
		\int \int_{t_0}^t
		\left\vert (\partial^J \Gamma)(s,X(s)) \right\vert
		ds
		d \ph dx
		&
		\leq
		\int_{t_0}^t \int
		\left( \int \left\vert (\partial^J \Gamma)(s,X(s)) \right\vert^2 dx \right)^{\frac{1}{2}}
		\Vert F(t,\cdot)\Vert_{L^2} d\ph ds
		\\
		&
		\lesssim
		\int_{t_0}^t \Vert (\partial^J \Gamma)(s,\cdot) \Vert_{L^2} ds \Vert F(t,\cdot)\Vert_{L^2},
	\end{align*}
	where the change of variables $x \mapsto X(s,t,x,\ph)$ and the bound \eqref{eq:dXdx} have been used.  Clearly then
	\begin{align*}
		&
		\int
		\left\vert \partial^I T^{\mu \nu} (t,x) \right\vert \vert F(t,x) \vert
		dx
		\lesssim
		\mathcal{V}_{\vert I \vert} \Vert F(t,\cdot) \Vert_{L^2}
		\\
		&
		\qquad \qquad
		+
		\mathbb{D}_{\left\lfloor \frac{\vert I \vert}{2} \right\rfloor + 1}
		\int
		\vert F(t,x) \vert
		\Big(
		\sum_{\vert J \vert \leq \vert I \vert -1}
		\left\vert (\partial^J \Gamma)(t,x) \right\vert
		+
		\sum_{\vert J \vert \leq \vert I \vert}
		\int
		\int_{t_0}^t
		\left\vert (\partial^J \Gamma)(s,X(s)) \right\vert
		ds
		d \ph
		\Big)
		dx
		\\
		&
		\lesssim
		\Big[
		\mathcal{V}_{\vert I \vert}
		+
		\mathbb{D}_{\left\lfloor \frac{\vert I \vert}{2} \right\rfloor + 1}
		\Big(
		\sum_{\vert J \vert \leq \vert I \vert -1}
		\left\Vert (\partial^J \Gamma)(t,\cdot) \right\Vert_{L^2}
		+
		\sum_{\vert J \vert \leq \vert I \vert}
		\int_{t_0}^t
		\left\Vert (\partial^J \Gamma)(s,\cdot) \right\Vert
		ds
		\Big)
		\Big]
		\Vert F(t,\cdot) \Vert_{L^2}.
	\end{align*}
	The $L^2$ bound then follows by setting $F = \partial^I T^{\mu \nu}$, and the $L^1$ bound follows by setting $F = \mathrm{sign}(\partial^I T^{\mu \nu})$.
\end{proof}

Since, for any function $F(t,x)$ and any multi index $I$, the vector fields $Z$ satisfy
\[
	\sum_{\vert J \vert \leq \vert I \vert} \vert \partial^I F(t,x) \vert
	\lesssim
	\vert Z^I F(t,x) \vert
	\lesssim
	\sum_{\vert J \vert \leq \vert I \vert} \vert \partial^I F(t,x) \vert,
	\qquad
	\text{for }
	t_0 \leq t \leq t_0 + 1,
	\quad
	\vert x \vert \leq c t,
\]
it is clear that Proposition \ref{prop:mainL21} and Proposition \ref{prop:mainL22} in fact hold for $t \geq t_0$.  Moreover, it is clear form \eqref{eq:partialZ} that
\[
	\Vert \partial^I Z^J T^{\mu \nu}(t,\cdot) \Vert
	\lesssim
	\sum_{\vert K \vert \leq \vert I \vert + \vert J \vert}
	\Vert Z^K T^{\mu \nu}(t,\cdot) \Vert,
	\qquad
	\text{for }
	t \geq t_0,
\]
where $Z^J$ is a product of $\vert J \vert$ of the vector fields $\Omega_{ij}, B_i, S$, for $\Vert \cdot \Vert = \Vert \cdot \Vert_{L^1}$ or $\Vert \cdot \Vert_{L^2}$ since $\supp(T^{\mu \nu}) \subset \{ \vert x \vert \leq ct\}$, and so spacetime derivatives $\partial^I$ can be included in Proposition \ref{prop:mainL21} and Proposition \ref{prop:mainL22}.

Suppose now that the assumptions of Theorem \ref{thm:mainL2} hold.  It follows from Proposition \ref{prop:suppf} that the support of $f$ satisfies $\pi(\supp(f)) \subset \{ \vert x \vert \leq ct + K\}$ and so, letting $\tilde{t} = t_0 + t$ where $t_0 = \frac{K}{c}$, it follows from Proposition \ref{prop:mainL21}, Proposition \ref{prop:mainL22} and the above comments that the bounds of Theorem \ref{thm:mainL2} hold with $t$ replaced by $\tilde{t}$ and the vector fields $Z$ replaced by $\tilde{Z}$ for $\tilde{t} \geq t_0$, where the vector fields $\tilde{Z}$ are as in Section \ref{subsec:transtime}.  The proof of Theorem \ref{thm:mainL2} then follows from noting that,
\[
	\tilde{\Omega}_{ij} = \Omega_{ij},
	\qquad
	\tilde{B}_i = B_i + t_0 \partial_{x^i},
	\qquad
	\tilde{S} = S + t_0 \partial_t,
\]
and so
\[
	\partial^I Z^J
	=
	\sum_{\vert I'\vert + \vert J'\vert \leq \vert I \vert + \vert J \vert}
	C_{I'J''} \partial^{I'} \tilde{Z}^{J'},
\]
for some constants $C_{I'J'}$.

\section{The Einstein equations}
\label{section:Einstein}

The main results of this section are the following results

\begin{proposition} \label{thm:Einsteinpointwiseestimates}
	Suppose that $N\geq 4$, $\varepsilon\leq 1$. Consider a solution of the reduced Einstein equations \eqref{eq:RE1} for $t<T_*$ such that the weak decay estimates
	\begin{equation}\label{eq:weakenergyboundstheorem61}
		E_N(t)^{1/2} \leq C_N \varepsilon (1+t)^{\delta},
		\qquad
		\sum_{\vert J \vert \leq N-1} \Vert  Z^J \widehat{T}(t,\cdot) \Vert_{L^1}
		\leq
		C_N \varepsilon,\qquad\text{and}\quad  M\leq \varepsilon
	\end{equation}
	hold for all $t\in [0,T_*]$, for some $\delta$ such that
\beq\label{eq:deltacondition}
0<8\delta < \gamma < 1-8\delta, \qquad M\leq \varepsilon.
\eq
Then for some constant $C_N^\prime$ depending only on $C_N$ the weak decay estimates
\begin{equation}\label{eq:hormanderweakdecayh1gheorem61}
|Z^I h^1(t,x)|\leq \frac{C_N^\prime\varepsilon(1+t)^{2\delta}}{(1+t+r) (1+q_+)^{\gamma}},
\qquad |I|\leq N-3,
\end{equation}
where $q=r-t$, $q_+=\max\{q,0\}$ and  $q_-=\max\{-q,0\}$,
hold for all $t\in [0,T_*]$.
\end{proposition}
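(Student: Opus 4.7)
The plan is to combine the Klainerman--Sobolev inequality with weights with the H\"{o}rmander $L^1$--$L^{\infty}$ inequality for the wave equation (Lemma \ref{lemma:hormander}), following the scheme outlined in the discussion of Proposition \ref{prop:weakdecayhormander}.  The argument naturally splits into an exterior analysis, driven by the weighted energy, and an interior analysis, driven by the $L^1$ bound on the matter.

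First, I would invoke the Klainerman--Sobolev inequality with weights, applied to the energy bound \eqref{eq:weakenergyboundstheorem61}, to produce the weak pointwise estimates
\[
	|\partial Z^I h^1(t,x)| \lesssim \frac{C_N \varepsilon (1+t)^{\delta}}{(1+t+r)(1+|q|)^{1/2} w(t,x)^{1/2}},
	\qquad |I|\leq N-2,
\]
where $w$ is the weight in \eqref{eq:energydef}.  In the exterior region $r\geq t$, where $w^{1/2}\sim (1+q)^{1/2+\gamma}$, integrating $\partial_r Z^I h^1$ along rays of constant $t$ inward from $r=\infty$ (where $h^1$ vanishes) immediately gives the desired bound
\[
	|Z^I h^1(t,x)| \lesssim \frac{C_N' \varepsilon (1+t)^{\delta}}{(1+t+r)(1+q_+)^{\gamma}}, \qquad r\geq t,
\]
after a short computation using $\int_r^\infty (1+r'-t)^{-1-\gamma}(1+r')^{-1}\, dr'\lesssim (1+t+r)^{-1}(1+q_+)^{-\gamma}$.

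In the interior region $r<t$, where $q_+=0$, the desired bound reduces to $|Z^I h^1|\lesssim C_N'\varepsilon (1+t)^{-1+2\delta}$, and I would obtain it from the H\"{o}rmander $L^1$--$L^{\infty}$ inequality applied to the equation satisfied by $\mathcal{L}_Z^I h^1$ after commuting $Z^I$ with the reduced Einstein equation \eqref{eq:RE2}.  The right hand side decomposes into four types of contribution:  (i) the matter term $\mathcal{L}_Z^I \widehat{T}$, whose $L^1$ norm is bounded by the assumption $\sum_{|J|\leq N-1}\|Z^J\widehat{T}\|_{L^1}\leq C_N\varepsilon$ and the fact that $\widehat{T}=T-\tfrac12 g\,\mathrm{tr}_g T$;  (ii) the semilinear term $\mathcal{L}_Z^I F_{\mu\nu}(h)(\partial h,\partial h)$, controlled in $L^1$ by placing one factor of $\partial Z^J h^1$ in $L^\infty$ via the Klainerman--Sobolev bound above and the conjugate factor in $L^2$ via $E_N(t)^{1/2}$;  (iii) the mass error $Z^I\widetilde{\Box}_g h^0$, which is supported away from the wave zone and controlled by $M\leq \varepsilon$; and (iv) the commutator $[Z^I,\widetilde{\Box}_g]h^1$, handled with the pointwise estimates for $H^{\Lb\Lb}$ that follow from the wave coordinate condition \eqref{eq:introwccnull}.

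The main obstacle is ensuring that the $L^1$ contributions of the quadratic semilinearities integrate against the H\"{o}rmander kernel with total loss no worse than $(1+t)^{2\delta}$.  This is where the weight $w$ in the energy plays a crucial role: without it, two factors of $\partial h$, each with decay $(1+t)^{-1+\delta}$, would produce a borderline logarithmic divergence.  The weight supplies the extra $(1+|q|)^{1/2+\gamma}$ decay in the exterior and the structural gain $|\overline{\partial} h|\lesssim (1+t+r)^{-1}\sum_{|I|\leq 1}|Z^I h|$ for tangential derivatives in the interior, and precisely these two features account for the fact that the final loss in \eqref{eq:hormanderweakdecayh1gheorem61} is the power $(1+t)^{2\delta}$, rather than $(1+t)^{\delta}$ as in the energy assumption.
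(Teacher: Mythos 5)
Your architecture is the paper's: weighted Klainerman--Sobolev for $|\partial Z^I h^1|$, integration in $q$ for the exterior region, and H\"ormander's $L^1$--$L^{\infty}$ inequality (Lemma \ref{lemma:hormander}) for the interior, with the matter handled through the assumed $L^1$ bound on $Z^J\widehat{T}$. But the mechanism you give for the key nonlinear estimate does not work as stated: the product of an $L^{\infty}$ factor and an $L^2$ factor lies in $L^2(\mathbb{R}^3)$, not $L^1$, so your pairing does not produce the $\int |Z^I F(s,y)|\,dy$ that Lemma \ref{lemma:hormander} requires. The correct (and simpler) step is Cauchy--Schwarz with \emph{two} energy factors, $\int |\partial Z^J h|\,|\partial Z^K h|\,dy \leq \|\partial Z^J h\|_{L^2}\|\partial Z^K h\|_{L^2}\lesssim \varepsilon^2(1+s)^{2\delta}$, and it is this squaring of the energy growth --- not the weight or the tangential-derivative gain --- that produces the final $(1+t)^{2\delta}$ after integration against $ds/(1+s)$. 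The weight $w$ is essential elsewhere: the quasilinear source $H^{\alpha\beta}\partial_{\alpha}\partial_{\beta}h$ and the cubic terms contain undifferentiated factors of the form $|Z^J h|/(1+|q|)$, which must be converted into energy factors via Hardy's inequality (Lemma \ref{lemma:hardy}) before Cauchy--Schwarz can be applied.

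Two further points. You propose to control the commutator using the sharp wave-coordinate estimates for $H^{\Lb\Lb}$, but in the paper those estimates (Proposition \ref{prop:wavecoorddecay}) are derived \emph{from} the conclusion \eqref{eq:hormanderweakdecayh1gheorem61} of the present proposition, so as written your argument is circular; no null structure is needed at this stage, since the crude bound $|H^{\alpha\beta}\partial_{\alpha}\partial_{\beta}\phi|\lesssim (1+|q|)^{-1}|H|\sum_{|K|\leq 1}|\partial Z^K\phi|$ together with Hardy and Cauchy--Schwarz already yields $\varepsilon^2(1+s)^{2\delta}$ in $L^1$. Finally, Lemma \ref{lemma:hormander} assumes vanishing Cauchy data, so the interior argument must split off the homogeneous solution carrying the initial data of $h^1$ and estimate it separately (Lemma \ref{lemma:linearhomogenousdecay}); this piece is missing from your outline.
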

Note that the inverse of the metric $g_{\mu \nu}$ can be expressed as
\begin{equation*}
	g^{\mu\nu}
	=
	m^{\mu\nu}+H^{\mu\nu},
	\quad \text{and}\quad
	H^{\mu\nu}=H_0^{\mu\nu}+H_1^{\mu\nu},\quad\text{where}\quad
	H_0^{\mu\nu}
	=
	-{\chi}\big(\tfrac{r}{1+t}\big)\tfrac{M}{r}\de^{\mu\nu}.
\end{equation*}
Then $m^{\mu\nu}\!+H_0^{\mu\nu}\!-h^{1\mu\nu} $, where $h^{1\mu\nu}=m^{\mu\alpha}m^{\nu\beta}h^1_{\alpha\beta}$, is an approximate
inverse to $g_{\mu\nu}\!=m_{\mu\nu}\!+h^0_{\mu\nu}\!+h^1_{\mu\nu}$ up
to $O(h^2)$ so
$H_1^{\mu\nu}\!\!=\!-h^{1\, \mu\nu}\!\! +O(h^2)$.
Therefore $H_1$ will satisfy the same estimates as $h_1$.

We have the following sharp decay estimates from the wave coordinate condition
for certain tangential components expressed in the null frame.
\begin{proposition}\label{prop:wavecoordinatedecayprop}
Suppose the conditions of Proposition \ref{thm:Einsteinpointwiseestimates} hold.  Let $\mathcal{N}=\{L,\underline{L},S_1,S_2\}$ be the null frame defined by \eqref{eq:frameintro}.
The modified Lie derivative $\widehat{\mathcal L}_Z$ defined by  \eqref{eq:modifiedlie}
satisfies
\begin{align}
|\pa_q (\widehat{\mathcal L}_Z^I{H}_1)_{LT}|
+|\pa_q  \trs\widehat{\mathcal L}_Z^I {H}_1 |&\leq C_N^{\prime\prime}
\varepsilon(1+t+r)^{-2+2\delta}(1\!+q_+)^{-\gamma},
\label{eq:wavecoordinatederivativeLiefirst}
\\
|(\widehat{\mathcal L}_Z^I{H}_1)_{L T}|
+|\trs\widehat{\mathcal L}_Z^I {H}_1 |
&\leq C_N^{\prime\prime} \varepsilon(1\!+t\!+r)^{-1-\gamma+2\delta} (1\!+q_-)^{\gamma}.
\label{eq:wavecoordinatefunctionLiefirst}
\end{align}
for $\vert I \vert \leq N-4$, where $\trs H_1=H_{1S_1 S_1}+H_{1S_2 S_2}$ and  $T\in\mathcal{T}=\{L,S_1,S_2\}$, the subset that span the tangent space of the outgoing light cones.
Here the constant $C_N^\prime$ depend only on $C_N^\prime$ in \eqref{eq:hormanderweakdecayh1gheorem61} and on $N$.
\end{proposition}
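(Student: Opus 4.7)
The plan is to read the estimates off the wave coordinate condition \eqref{eq:introwccnull}, decomposed in the null frame and commuted with modified Lie derivatives. First I would commute: using the identity $\pa_\mu \widehat{\mathcal L}_Z \widehat H^{\mu\nu} = \big(\widehat{\mathcal L}_Z + \tfrac{1}{2}\pa_\gamma Z^\gamma\big) \pa_\mu \widehat H^{\mu\nu}$ iteratively, together with $\pa_\mu \widehat H^{\mu\nu} = W^\nu(h,\pa h)$ and the product rule for $\widehat{\mathcal L}_Z$ on expressions quadratic in $(h,\pa h)$, one obtains
\[
\pa_\mu \widehat{\mathcal L}_Z^I \widehat H^{\mu\nu}
= \sum_{|J_1|+|J_2|\le |I|} W_{J_1,J_2}^\nu\big(\widehat{\mathcal L}_Z^{J_1} h,\pa \widehat{\mathcal L}_Z^{J_2} h\big),
\]
where each $W_{J_1,J_2}^\nu$ is bilinear. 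Contracting with $T_\nu$ for $T\in\mathcal T=\{L,S_1,S_2\}$ and using
$\pa_\mu F^{\mu\nu} = L_\mu \pa_q F^{\mu\nu} -\Lb_{\,\mu} \pa_s F^{\mu\nu} + A_\mu \pa_A F^{\mu\nu}$, one isolates $\pa_q \widehat{\mathcal L}_Z^I \widehat H^{\Lb T}$ as a sum of tangential derivatives $\overline\pa \widehat{\mathcal L}_Z^I \widehat H^{\cdot\, T}$ plus the quadratic terms above. Since the null frame raising/lowering gives $\widehat H^{\Lb T} = -2 H_{L T}$ for $T\in\mathcal T$ up to the trace correction, and since $\trs H$ corresponds to $\delta^{AB}\widehat H^{AB}$ after the same algebra, one obtains analogous identities for $\pa_q \widehat{\mathcal L}_Z^I H_{1,LT}$ and $\pa_q \trs \widehat{\mathcal L}_Z^I H_1$ modulo the background contribution from $H_0$.

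Next I would estimate the right hand sides. The contribution from $H_0 = -\chi(r/(1+t))(M/r)\de^{\mu\nu}$ is spherically symmetric, supported away from the wave zone in the relevant region, and its modified Lie derivatives are explicit; in particular $(\widehat{\mathcal L}_Z^I H_0)_{LT}$ and $\trs \widehat{\mathcal L}_Z^I H_0$ are either zero (for $T=S_A$, by spherical symmetry of the $\de^{\mu\nu}$ structure after null-frame decomposition) or bounded by $C\varepsilon (1+t+r)^{-2}$, which is strictly better than the claimed rate. For the tangential derivative terms, I would use the general inequality $|\overline\pa \phi|\le C(1+t+r)^{-1}\sum_{|I|=1}|Z^I\phi|$ together with the weak decay estimate \eqref{eq:hormanderweakdecayh1gheorem61}, losing one power of $(1+t+r)$ but not of $(1+q_+)$, to get
\[
|\overline\pa \widehat{\mathcal L}_Z^{I'} h^1| \le \frac{C_N''\varepsilon (1+t)^{2\delta}}{(1+t+r)^2 (1+q_+)^\gamma},\qquad |I'|\le N-4.
\]
The bilinear error terms $W_{J_1,J_2}^\nu$ satisfy $|W_{J_1,J_2}^\nu|\lesssim |\widehat{\mathcal L}_Z^{J_1} h||\pa\widehat{\mathcal L}_Z^{J_2} h|$, and since either $|J_1|\le (N-4)/2$ or $|J_2|\le (N-4)/2$ one factor can be taken in $L^\infty$ via \eqref{eq:hormanderweakdecayh1gheorem61}; this produces an extra power of $\varepsilon (1+t+r)^{-1}(1+q_+)^{-\gamma}$ which makes the quadratic terms better than the claimed rate. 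Combining these gives the first estimate \eqref{eq:wavecoordinatederivativeLiefirst}.

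Finally, to obtain \eqref{eq:wavecoordinatefunctionLiefirst} I would integrate $\pa_q$ in the $q=r-t$ variable along the lines $t+r=\text{const}$. For $r>t$ (i.e.\ $q>0$) one integrates from spatial infinity using that the initial data are asymptotically flat so that the components vanish sufficiently fast at $q=+\infty$, and $\int_q^\infty (1+q')^{-\gamma}\,dq' \lesssim (1+q)^{1-\gamma}$; combined with the $(1+t+r)^{-2+2\delta}$ factor this yields $(1+t+r)^{-1-\gamma+2\delta}$ with no $q_-$ factor (since $q_-=0$ here). For $r\le t$ one integrates from $q=0$, where the bound just obtained on $\{r=t\}$ is $\varepsilon(1+t+r)^{-1-\gamma+2\delta}$, and absorbs the integration of the constant-in-$q$ bound $(1+t+r)^{-2+2\delta}$ over $|q|\le|q_-|$, producing the $(1+q_-)^\gamma$ growth (which is at worst of size $t^\gamma$). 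The main obstacle is the careful bookkeeping of the commutation (ensuring the bilinear structure of $W$ is preserved through $\widehat{\mathcal L}_Z^I$) and the algebraic step relating the null-frame components of $\widehat H$ to those of $H$, then to $H_1$ by subtracting the explicit $H_0$ contribution; everything else is a direct consequence of \eqref{eq:hormanderweakdecayh1gheorem61} and integration in $q$.
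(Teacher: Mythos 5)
Your argument follows the paper's proof of Proposition \ref{prop:wavecoordinatedecayprop} essentially step for step: commute the wave coordinate condition with $\widehat{\mathcal L}_Z$ (the paper's \eqref{eq:wavecoordinateLie}), use the null-frame decomposition of the divergence (Lemma \ref{lem:divergencenullframe}, leading to \eqref{eq:higherwavecoordinateLiederivative}) to express $\pa_q$ of the $LT$ and $\trs$ components in terms of tangential derivatives, quadratic terms and the explicit $\pa_\mu\widehat H_0^{\mu\nu}$ contribution, bound these by the weak decay estimates \eqref{eq:hormanderweakdecayh1gheorem61} and \eqref{eq:hormanderweakdecayderh1}, and integrate in $q$ (this is exactly how the paper proves Propositions \ref{prop:wavecoorddecay} and \ref{prop:wavecoorddecayhighlow}).

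The one concrete error is in the integration step for $r>t$: for $0<\gamma<1$ the integral $\int_q^\infty(1+q')^{-\gamma}\,dq'$ diverges, so you cannot integrate from spatial infinity along $t+r=\mathrm{const}$. The paper integrates \eqref{eq:wavecoordinatederivativeLiefirst} in the $t-r$ direction back to the \emph{initial hypersurface}, where $q'=t+r$; there the integral is $\int_q^{t+r}(1+q'_+)^{-\gamma}\,dq'\lesssim(1+t+r)^{1-\gamma}+q_-$, and one also picks up a boundary term from the data at $t=0$ (controlled by the $t=0$ case of the weak decay estimates). This produces exactly the two intermediate terms $\varepsilon(1+t+r)^{-1-\gamma+2\delta}+\varepsilon(1+t)^{-2+2\delta}(1+q_-)$ of \eqref{eq:wavecoordinatefunctionLie}, which you then correctly recombine into $(1+t+r)^{-1-\gamma+2\delta}(1+q_-)^{\gamma}$. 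With that fix (and noting that the commuted wave coordinate condition is not strictly bilinear but contains higher products of $\widehat{\mathcal L}_Z^{J}H$, all of which are pointwise bounded at the orders $|J|\le N-4$ considered), your sketch reproduces the paper's proof.
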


\begin{proposition} \label{prop:higherordersharpdecay} Suppose that $N\geq 5$ and the weak decay estimates
 \eqref{eq:hormanderweakdecayh1gheorem61} hold
 for some $8\delta \leq \gamma \leq 1-8\delta$,  $M\leq \varepsilon\leq 1$
and that there is a constant $0<c<1$ such that
\begin{equation}\label{eq:supportofT}
\supp \, \newhat{T}(t,x)\subset \{(t,x);\,|x|\leq K+ct\}, \qquad c<1.
\end{equation}
Then the following sharp decay estimates hold.

For any $-1\leq \gamma^\prime<\gamma-2\delta$,
and $|I|=k\leq N-5$ there are constants $c_k$ such that
\beq\label{eq:sharpHhigherlowderbeginsec61}
\big| \pa Z^I h^1\big|\leq c_k \varepsilon(1+t)^{c_k\varepsilon}  (1+t+r)^{-1}(1+q_+)^{-1-\gamma^\prime}.
\eq
In addition we have the following estimates for certain tangential components expressed in a null frame
\beq\label{eq:sharpHhigherlowderbeginsec6}
\big| \pa h^1_{TU}\big|\leq c_0\varepsilon   (1+t+r)^{-1}(1+q_+)^{-1-\gamma^\prime},\qquad
T\in \mathcal{T},\quad U\in \mathcal{N}.
\eq
Here all constants depend only on $C_N^\prime$ in \eqref{eq:hormanderweakdecayh1gheorem61}, on $N$ and on $c$, $K$ in \eqref{eq:supportofT}.
\end{proposition}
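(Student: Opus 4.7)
The plan is to prove Proposition \ref{prop:higherordersharpdecay} by induction on $|I|$ using the $L^\infty$--$L^\infty$ estimate for the approximate wave operator $\Box_0$ (Lemma \ref{lem:transversalder}), applied to the reduced Einstein equations \eqref{eq:RE2} after commutation by modified Lie derivatives. At each step I would write
\[
\Box_0 \,\widehat{\mathcal{L}}_Z^I h^1_{\mu\nu}
= (\Box_0 - \widetilde{\Box}_g)\widehat{\mathcal{L}}_Z^I h^1_{\mu\nu}
+ \mathcal{L}_Z^I F_{\mu\nu}(h)(\partial h,\partial h)
+ \mathcal{L}_Z^I \widehat{T}_{\mu\nu}
- \mathcal{L}_Z^I \widetilde{\Box}_g h^0_{\mu\nu}
+ \mathcal{C}^I_{\mu\nu},
\]
where $\mathcal{C}^I_{\mu\nu}$ is the sum of commutator errors produced by iterating \eqref{eq:intromodLiecomm}, and $(\Box_0-\widetilde{\Box}_g)=-H_1^{\alpha\beta}\partial_\alpha\partial_\beta+\mathrm{(mass\ correction)}$. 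The bad quasilinear contribution $H^{\underline{L}\underline{L}}\partial_q^2\widehat{\mathcal{L}}_Z^I h^1$ is controlled because the wave coordinate condition \eqref{eq:introwccnull}, together with the integration of $\partial_q H^{\underline{L}\underline{L}}$ from data, forces $H^{\underline{L}\underline{L}}\sim 2M/r$ modulo terms with faster decay; all the other quasilinear and commutator terms involve either $\widehat{\mathcal{L}}_Z H^{LT}$, $\widehat{\mathcal{L}}_Z\trs H$ (controlled sharply by Proposition \ref{prop:wavecoordinatedecayprop}), or a tangential derivative.

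The tangential case is done first. For $T\in \mathcal{T}$, $U\in\mathcal{N}$, the semilinear null-frame identity \eqref{eq:simplifiedEinstein} gives $(\widetilde{\Box}_g h)_{TU}\sim 0$, and the Lie-derivative product rule described after \eqref{eq:intromodLiecomm} preserves this structure, since the bad term in $P(\partial_\mu h,\partial_\nu h)$ carries the factor $L_\mu L_\nu$ which vanishes when contracted with $T^\mu U^\nu$. Hence the source of the equation for $\widehat{\mathcal{L}}_Z^I h^1_{TU}$ consists only of cubic terms, terms with at least one good $\overline{\partial}$ derivative, commutator terms estimated by \eqref{eq:wavecoordinatederivativeLiefirst}--\eqref{eq:wavecoordinatefunctionLiefirst}, the mass term (supported in $\{r\leq 3(1+t)/4\}$), and $\mathcal{L}_Z^I\widehat{T}$ which by the support hypothesis \eqref{eq:supportofT} vanishes in the wave zone $r\geq ct+K$. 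All of these have $L^\infty$ bounds with time-integrable decay along outgoing null lines, so Lemma \ref{lem:transversalder} yields \eqref{eq:sharpHhigherlowderbeginsec6} at $I=0$ with no Gr\"onwall loss, and at higher $|I|\leq N-5$ after absorbing the inductive estimates for $|J|\leq |I|-1$.

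For the remaining components, and in particular the $\underline{L}\underline{L}$ component, the leading semilinear source of $(\widetilde{\Box}_g \widehat{\mathcal{L}}_Z^I h^1)_{\underline{L}\underline{L}}$ reduces after Lie-derivative expansion to a sum of $P_{\mathcal{S}}(\partial_q\widehat{\mathcal{L}}_Z^J h,\partial_q\widehat{\mathcal{L}}_Z^K h)$ with $|J|+|K|\leq|I|$, modulo cubic and tangential terms. Since $P_{\mathcal{S}}$ involves only the $\widehat{(\cdot)}_{AB}$ trace-free part, and those components were shown to decay sharply in the previous step (or by the inductive hypothesis when $|J|,|K|\geq 1$), this bilinear source has pointwise size controlled by $\varepsilon^2 (1+t+r)^{-2}(1+q_+)^{-2-2\gamma'}$ times $(1+t)^{c\varepsilon}$ from lower orders. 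Feeding this into Lemma \ref{lem:transversalder}, together with the contribution $\varepsilon(1+t)^{-1}(1+q_+)^{-1-\gamma'}\,|\pa \widehat{\mathcal{L}}_Z^I h^1|$ coming from the quasilinear $H^{\underline{L}\underline{L}}$ and $\widehat{\mathcal{L}}_Z H$ terms, yields an integral inequality of the form
\[
(1+t+r)\,(1+q_+)^{1+\gamma'}|\pa \widehat{\mathcal{L}}_Z^I h^1(t,x)|
\lesssim c_k\varepsilon
+ c_k\varepsilon\int_0^t \frac{(1+\tau+r)\,(1+q_+)^{1+\gamma'}|\pa \widehat{\mathcal{L}}_Z^I h^1|}{1+\tau}\,d\tau,
\]
so Gr\"onwall gives the factor $(1+t)^{c_k\varepsilon}$ in \eqref{eq:sharpHhigherlowderbeginsec61}. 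The estimate $\gamma'<\gamma-2\delta$ is precisely what is needed to absorb the $(1+t)^{2\delta}$ growth carried by the weak pointwise bounds \eqref{eq:hormanderweakdecayh1gheorem61} when estimating the cubic and commutator error terms.

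The hardest point is the interplay between the top-order commutator $(\widehat{\mathcal{L}}_Z H^{\alpha\beta})\partial_\alpha\partial_\beta \widehat{\mathcal{L}}_Z^{I-1} h^1$ and the nonlinearity: one must isolate the $\widehat{\mathcal{L}}_Z H^{\underline{L}\underline{L}}$ piece (sharply bounded by Proposition \ref{prop:wavecoordinatedecayprop}, which requires $|I|\leq N-4$), and then arrange the remaining contributions so that each has either an extra factor of $(1+q_+)^{-\gamma}$ from the wave coordinate condition or a tangential derivative giving $(1+t+r)^{-1}$ extra decay. This is the reason the derivative count drops from $N-3$ in the weak decay hypothesis to $N-5$ in the conclusion; making this accounting work uniformly over the null decomposition is the step where the most care is needed.
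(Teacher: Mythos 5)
Your proposal follows essentially the same route as the paper: the $L^{\infty}$--$L^{\infty}$ estimate for $\Box_0$ along outgoing characteristics (Lemma \ref{lem:transversalder}), applied after commuting with modified Lie derivatives, with the quasilinear $H^{\underline{L}\underline{L}}$ and commutator terms controlled by the wave coordinate condition, the tangential components handled first without Gr\"onwall loss, the $P_{\mathcal{S}}$ source treated as quadratic in already-controlled quantities, and an induction on $|I|$ closed by Gr\"onwall to produce the $(1+t)^{c_k\varepsilon}$ factor. This matches the argument in Section \ref{subsec:sharpdecayhighorder} and the proof of Proposition \ref{prop:hogherordersharpdecay}.
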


\begin{theorem} \label{thm:Einsteinenergyestimates} Suppose that $N\geq 9$ and the decay and support
conditions \eqref{eq:deltacondition}--\eqref{eq:sharpHhigherlowderbeginsec6} hold. Then there is a
$\varepsilon_N>0$ and constants $C_N^{\prime\prime\prime}$, $d_1,\ldots,d_N$, depending only on $N$, $C_N$, $c$, $K$ and a lower positive bound for $\min{(\gamma,1-\gamma)}$, such that for all $t\in [0,T_*]$ and $\varepsilon<\varepsilon_N$,
\begin{equation*}
	Q_k(t)
	\leq
	8 Q_k(0)
	+
	M_k M
	+
	C^{\prime\prime\prime}_N\varepsilon
	\!\int_0^t
	\frac{Q_{k}(\tau)}{1+\tau}
	+
	\frac{Q_{k-1}(\tau)}{(1+\tau)^{1-d_k\varepsilon}}
	d\tau
	+
	M_k \sum_{|I|\leq k} \int_0^t \Vert Z^I  \widehat{T}  (\tau,\cdot) \Vert_{L^{2}} \, d\tau,
\end{equation*}
where $Q_k(t) := \sup_{0\leq \tau\leq t} E_k(\tau)^{1/2}$ and $Q_{-1}(0) \equiv 0$, and $M_1,\ldots,M_k$ are universal constants.
\end{theorem}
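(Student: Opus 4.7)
The plan is to apply the weighted energy inequality \eqref{eq:introenergyineq} to each component of $\widehat{\mathcal{L}}_Z^I h^1$ for $|I|\leq k$, after commuting vector fields through the reduced Einstein equations \eqref{eq:RE2}. First, I would verify that the assumptions needed for the energy inequality (mild bounds on $g$) follow from the weak pointwise decay \eqref{eq:hormanderweakdecayh1gheorem61}; in particular $|h|\leq C\varepsilon$ and $|\partial h|\leq C\varepsilon/(1+t+r)^{1/2}$ on the support of the solution, which is enough to obtain the $C\varepsilon/(1+\tau)$ factor multiplying the bulk term on the right of \eqref{eq:introenergyineq}.

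Next, I would use the commutation identity \eqref{eq:intromodLiecomm} iteratively to write
\[
\Boxr_g \widehat{\mathcal{L}}_Z^I h^1_{\mu\nu} = \mathcal{L}_Z^I\bigl(F_{\mu\nu}(h)(\partial h,\partial h) + \widehat{T}_{\mu\nu} - \Boxr_g h^0_{\mu\nu}\bigr) + \sum_{|J|+|K|\leq |I|,\, |K|<|I|} (\widehat{\mathcal{L}}_Z^J H^{\alpha\beta}) \partial_\alpha\partial_\beta \widehat{\mathcal{L}}_Z^K h^1_{\mu\nu}.
\]
The resulting $L^2$ source terms then split into five classes which are estimated separately: (i) the semilinear terms $\mathcal{L}_Z^I F_{\mu\nu}$, (ii) the commutator error with $\alpha\beta=\Lb\Lb$ isolated, (iii) the commutator error with at least one tangential derivative, (iv) the inhomogeneous term $\mathcal{L}_Z^I\widehat{T}$, and (v) the mass error $\mathcal{L}_Z^I \Boxr_g h^0$.

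For (i), I would use the product rule for $\mathcal{L}_Z$ acting on $P$, $Q$ and $G$ described in the introduction, so the null structure is preserved, and split each factor by whether its order is above or below $k/2$. The high-order factor is controlled in $L^2$ by $E_k(\tau)^{1/2}$, while the low-order factor is controlled pointwise by \eqref{eq:sharpHhigherlowderbeginsec61}--\eqref{eq:sharpHhigherlowderbeginsec6}; the key gain is that each $P$ or $Q$ either contains a good tangential derivative (giving the $w'dx\,d\tau$ term after Cauchy--Schwarz in $q$) or only involves $\partial_q h_{TU}$ components, which decay like $(1+t+r)^{-1}(1+q_+)^{-1-\gamma'}$ and hence are integrable in $\tau$. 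This produces the $C_N^{\prime\prime\prime}\varepsilon Q_k(\tau)/(1+\tau)$ bulk term and a $Q_{k-1}(\tau)/(1+\tau)^{1-d_k\varepsilon}$ term (the mild $\varepsilon$--power loss comes from the $(1+t)^{c_k\varepsilon}$ factor in \eqref{eq:sharpHhigherlowderbeginsec61} hitting the top-order factor at intermediate orders). For (ii), the factor $\widehat{\mathcal{L}}_Z^J H^{\Lb\Lb}$ is controlled using \eqref{eq:wavecoordinatefunctionLiefirst} and the analogue at higher orders obtained by integrating $\pa_q\widehat{\mathcal{L}}_Z H^{\Lb\Lb}$ from initial data as in Section \ref{subsubsec:nonlinearstructure}, which separates a contribution of size $M/(1+r)$ (producing the $M_k M$ term) from the remaining piece satisfying sharper decay. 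For (iii), the tangential derivative again gives the $w'$ good term. Term (iv) produces the $\widehat{T}$ contribution directly. Term (v) is supported in $\{r\leq 3t/4\}\cup \{r\leq 1\}$ where $h^1$ has improved interior decay, and a direct computation shows $\|\mathcal{L}_Z^I \Boxr_g h^0\|_{L^2}\lesssim M/(1+\tau)^{3/2}$, yielding a contribution absorbed into $M_k M$.

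The main technical obstacle will be controlling the low-order pointwise factors carefully enough so that the Gronwall-type bulk term is exactly $C_N^{\prime\prime\prime}\varepsilon Q_k(\tau)/(1+\tau)$ (no extra $\log$ losses at the top order) while all remaining terms either fit into the $Q_{k-1}$ integrand with power $1-d_k\varepsilon$ or into the initial-data and $\widehat{T}$ terms. Once every source term has been placed in one of these four buckets, the stated inequality follows by taking the supremum over $\tau\in[0,t]$ of $E_k(\tau)^{1/2}$ (which is permitted because the right-hand side of \eqref{eq:introenergyineq} is monotone in $t$) and absorbing the factor of $8$ from the initial-data term.
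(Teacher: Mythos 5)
Your overall architecture — the weighted energy inequality applied to $\widehat{\mathcal L}_Z^I h^1$, the iterated commutation identity, and the splitting of the sources into semilinear, quasilinear-commutator, matter and mass-error pieces — is the same as the paper's, and your items (i), (iii), (iv) and (v) are handled essentially as you describe (these are the terms $F_1^{kj},\dots,F_4^{kj}$, the $\widehat T$ term, and the $\Boxr_g h^0$ computation in the paper, with Hardy's inequality supplying the $L^2$ control of the undifferentiated cubic factors and the ``equivalence of norms'' between $H_1$ and $-h^1$ handled separately).

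There is, however, one concrete gap in item (ii), the commutator contribution $(\widehat{\mathcal L}_Z^J H_1)_{\Lb\Lb}\,\pa^2\widehat{\mathcal L}_Z^K h^1$ when $|J|$ is comparable to $N$ (so $|K|$ is small). You propose to control $(\widehat{\mathcal L}_Z^J H_1)_{\Lb\Lb}$ pointwise by integrating $\pa_q$ of the wave coordinate condition from the data, but that integration requires a pointwise bound on $\overline{\pa}\,\widehat{\mathcal L}_Z^J H_1$, i.e.\ on roughly $|J|+1$ vector fields applied to $H_1$, and pointwise bounds are only available up to order $N-4$ — this is precisely why \eqref{eq:wavecoordinatefunctionLiefirst} is stated only for $|I|\leq N-4$. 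At the top orders the paper instead keeps the argument in $L^2$: the wave coordinate condition in the form \eqref{eq:higherwavecoordinateLiederivative} bounds $\pa_q(\widehat{\mathcal L}_Z^J H_1)_{LL}$ by $|\overline{\pa}\,\widehat{\mathcal L}_Z^J H_1|$ plus quadratic and mass terms, and the weighted Hardy inequality adapted to $w$ (Corollary \ref{cor:Poinc}) converts the weighted $L^2$ norm of $(\widehat{\mathcal L}_Z^J H_1)_{LL}/(1+|q|)^2$ into $\int |\pa(\widehat{\mathcal L}_Z^J H_1)_{LL}|^2\min(w',\cdot)\,dx$, whose tangential part is then absorbed into the spacetime integral $S_k(t)$ on the \emph{left-hand} side of the energy inequality (the term $F_7^k$ in the paper, which is also where the smallness $32\varepsilon\leq 1$ is used). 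Without this $L^2$/Hardy maneuver, or a substitute for it, the top-order quasilinear commutator does not fit into any of your four buckets and the estimate does not close. A minor further point: the $M_kM$ term comes from the weighted $L^2$ norm of $\Box_0 h^0\sim M(1+t)^{-3}\chi'$, not from the $\Lb\Lb$ commutator, whose $H_0$ part is simply absorbed into the $\varepsilon Q_k(\tau)/(1+\tau)$ bulk using $M\leq\varepsilon$.
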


In the proof of Theorem \ref{thm:main2} in Section \ref{section:cty}, Proposition \ref{thm:Einsteinpointwiseestimates} will first be appealed to for the coupled Einstein--Vlasov system \eqref{eq:Tmunu}, \eqref{eq:Vlasov}, \eqref{eq:RE1}.  As a consequence the assumptions of Proposition \ref{prop:suppf} will be satisfied, which in turn will ensure that the assumptions of Proposition \ref{prop:higherordersharpdecay}  and hence of Theorem \ref{thm:Einsteinenergyestimates} are satisfied.

\subsection{Weak $L^\infty$ decay estimates} \label{subsec:weakdecay}
Here we assume the weak energy bounds
\eqref{eq:weakenergyboundstheorem61} and prove that this implies certain decay estimates.

\subsubsection{The weak decay estimates for the metric from Klainerman--Sobolev}
 \begin{lemma} {\it The  Klainerman--Sobolev inequalities with weights}
\begin{equation*}
| \phi(t,x)|w^{1/2}\leq \frac{C \sum_{|I|\leq 2} \|w^{1/2} Z^I
\phi(t,\cdot)\|_{L^2}
}{(1\!+\!t\!+\!r)(1\!+\!|t-r|)^{1/2}}.
\end{equation*}
\end{lemma}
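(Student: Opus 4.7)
The plan is to deduce the weighted inequality from the standard (unweighted) Klainerman--Sobolev inequality
\[
|\phi(t,x)| \lesssim \frac{1}{(1+t+r)(1+|t-r|)^{1/2}} \sum_{|I|\leq 2} \|Z^I\phi(t,\cdot)\|_{L^2},
\]
applied to the auxiliary function $\psi := w^{1/2}\phi$. The key point is that $w$ depends only on the single variable $q = r-t$, so the vector fields $Z$ interact well with it.

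The first step is to verify the commutation bound $|Z^I w^{1/2}(t,x)| \leq C_I\, w^{1/2}(t,x)$ for every multi-index $I$. This follows by direct computation: $\Omega_{ij}$ annihilates $w$; since $r-t$ is homogeneous of degree one one has $S(r-t)=r-t$; and $B_i(r-t) = -x^i(r-t)/r$, whence $|Z(r-t)| \leq |r-t|$ for every $Z \in \{\Omega_{ij},B_i,S\}$. A quick check of the two branches in the definition of $w$ then gives $|q\,w'(q)| \leq C\,w(q)$ on each branch separately, with $C$ depending only on $\gamma$ and $\mu$ (in the exterior $|qw'| = (1+2\gamma)q(1+q)^{2\gamma} \leq (1+2\gamma)w$; in the interior $|qw'| = 2\mu|q|(1+|q|)^{-1-2\mu} \leq 2\mu(w-1) \leq 2\mu w$). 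Hence $|Zw|\leq Cw$, so $|Zw^{1/2}|\leq \tfrac{C}{2}w^{1/2}$, and iteration handles higher orders.

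The second step is the reduction itself. Applying the standard Klainerman--Sobolev inequality to $\psi$ and expanding $Z^I\psi$ via Leibniz as $\sum_{J+K=I}(Z^Jw^{1/2})(Z^K\phi)$, the commutation bound above gives $\|Z^I\psi(t,\cdot)\|_{L^2} \lesssim \sum_{|K|\leq |I|}\|w^{1/2}Z^K\phi(t,\cdot)\|_{L^2}$. Substituting into the pointwise bound for $|\psi(t,x)|$ and recalling $\psi(t,x) = w^{1/2}(t,x)\phi(t,x)$ yields exactly the claim.

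The only real obstacle is that $w$ is discontinuous across the cone $r=t$, so strictly speaking $\psi$ is not smooth there and Klainerman--Sobolev cannot be applied verbatim. This is a technical nuisance rather than a conceptual one: $w$ is bounded above and below by universal positive constants on the strip $|r-t|\leq 1$, so one can either mollify $w$ across this strip to obtain an equivalent smooth weight satisfying the same commutation bounds, or else use the local form of Klainerman--Sobolev in coordinate patches tangent to the cone and arrange each patch to lie entirely in $\{r\geq t\}$ or $\{r\leq t\}$. Either device costs only a universal constant and recovers the stated estimate.
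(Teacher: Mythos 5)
Your overall strategy --- deducing the weighted inequality from the unweighted Klainerman--Sobolev inequality applied to $\psi=w^{1/2}\phi$ --- is a legitimate alternative to the proof the paper cites from \cite{LR3}, which instead reruns the proof of the unweighted inequality directly (a rescaled local Sobolev estimate on the interior ball $r\lesssim 1+t$, where $w\asymp1$, and a Sobolev inequality on dyadic annuli times $S^2$ in the exterior, using that $w$ is slowly varying in $q$). However, your Step~1 contains a genuine error: the first-order bound $|Zw|\lesssim w$ is correct, but the claim that ``iteration handles higher orders'' fails for the boosts in the deep interior. Indeed
\[
B_iB_j(r-t)=(r-t)\Big[\frac{x^ix^j(t+r)}{r^3}-\frac{t\,\delta_{ij}}{r}\Big],
\]
so $|B_iB_jq|\sim |q|\,t/r$ when $r\ll t$, not $|q|$. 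Hence in that region (where $q\approx -t$ and $w\asymp1$) one gets $|B_iB_jw|\gtrsim |w'(q)|\,|q|\,t/r\sim t^{1-2\mu}/r$, which is not $O(w)$; this failure persists for any weight that is a genuinely nonconstant function of $r-t$ there, so it cannot be removed by smoothing in the variable $q$. Your proposed repair --- mollifying $w$ across the strip $|r-t|\leq1$ --- addresses the (real but benign) jump of $w$ at the cone and does nothing for the region $r\ll t$; and since the unweighted inequality controls $\psi$ pointwise by the \emph{global} $L^2$ norms of $Z^I\psi$, the bad region contaminates the estimate even at points with $r\sim t$.

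The argument can be repaired, but this needs an extra step you have not supplied. Split into two cases. For $r\leq(1+t)/2$ one has $w(t,x)\leq 2$, while $w\geq1$ everywhere gives $\|Z^I\phi\|_{L^2}\leq\|w^{1/2}Z^I\phi\|_{L^2}$, so the weighted inequality is an immediate consequence of the unweighted one. For $r\geq(1+t)/2$, apply your reduction to $\chi\big(\tfrac{r}{1+t}\big)\,w^{1/2}\phi$ with $\chi$ supported in $r\geq(1+t)/4$ and equal to $1$ for $r\geq(1+t)/2$: on $\supp\chi$ one has $t/r\lesssim1$, so there the commutation bounds $|Z^I(\chi w^{1/2})|\lesssim w^{1/2}$ do hold for all $|I|\leq 2$ (after smoothing $w$ across $q=0$ as you describe; the singularity of $|x|$ at the origin is also outside $\supp\chi$), and your Leibniz/substitution step goes through. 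With this modification --- and after also checking the translations $\partial_\alpha$, which belong to the collection $Z$ but which you omitted (they satisfy $|\partial_\alpha w|\leq|w'(q)|\lesssim w$, so they cause no harm) --- the proof is complete.
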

For a proof see \cite{LR3}. Using this we get
\begin{proposition}\label{prop:weakdecayklainermansobolev} Suppose that the weak energy bounds \eqref{eq:weakenergyboundstheorem61} hold. Then
\begin{equation}\label{eq:weakdecayderh1}
|\pa Z^I h^1(t,x)|\leq \begin{cases}
C\varepsilon(1+t+r)^{-1} (1+|r-t|)^{-1-\gamma}(1+t)^\delta,
\qquad r>t\\
C\varepsilon(1+t+r)^{-1}(1+|r-t|)^{-1/2}(1+t)^\delta,\qquad r<t
\end{cases},\qquad |I|\leq N-2.
\end{equation}
Furthermore
\begin{equation}\label{eq:weakdecayh1}
|Z^I h^1(t,x)|\leq \begin{cases}
C\varepsilon(1+t+r)^{-1} (1+|r-t|)^{-\gamma}(1+t)^\delta,
\qquad r>t\\
C\varepsilon(1+t+r)^{-1}(1+|r-t|)^{1/2}(1+t)^\delta,\qquad r<t
\end{cases},\qquad |I|\leq N-2.
\end{equation}
The same estimates hold for $H_1$ in place of $h^1$, and for $h$ or $H$ in place of $h_1$ if $\gamma$ is replaced by $\delta$.
\end{proposition}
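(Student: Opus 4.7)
The plan is to read both bounds off from the weighted Klainerman--Sobolev inequality stated above, combined with the commutation property $[Z,\partial] = \sum c\,\partial$ for each $Z\in\{\Omega_{ij},B_i,S,\partial_\alpha\}$, which gives
\[
 Z^J \partial Z^I h^1 = \sum_{|K|\leq |I|+|J|} c_{J,K}\, \partial Z^K h^1,
\]
so that for $|I|\leq N-2$ and $|J|\leq 2$,
\[
 \sum_{|J|\leq 2}\|w^{1/2} Z^J \partial Z^I h^1(t,\cdot)\|_{L^2}
 \leq C\sum_{|K|\leq N}\|w^{1/2}\partial Z^K h^1(t,\cdot)\|_{L^2}
 \leq C E_N(t)^{1/2}\leq C C_N\varepsilon(1+t)^{\delta}.
\]
Inserting this into Klainerman--Sobolev gives
\[
 |\partial Z^I h^1(t,x)|\, w(t,x)^{1/2}
 \leq \frac{C\varepsilon(1+t)^{\delta}}{(1+t+r)(1+|r-t|)^{1/2}}.
\]
Now \eqref{eq:weakdecayderh1} is immediate by substituting the explicit weight: in $r>t$ one has $w^{1/2}=(1+|r-t|)^{(1+2\gamma)/2}$, so the $(1+|r-t|)^{1/2}$ combines with $w^{-1/2}$ to produce $(1+|r-t|)^{-1-\gamma}$; in $r\leq t$ one has $w^{1/2}\geq 1$, yielding $(1+|r-t|)^{-1/2}$.

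For \eqref{eq:weakdecayh1} I would integrate radially from infinity: since the initial data are asymptotically flat with $h^1$ in a weighted $L^2$ space, $Z^I h^1(t,y)\to 0$ as $|y|\to\infty$ for each fixed $t$, and therefore
\[
 |Z^I h^1(t,r\omega)|\leq \int_r^{\infty}|\partial_\rho Z^I h^1(t,\rho\omega)|\,d\rho.
\]
If $r>t$, the integrand is bounded directly by the exterior half of \eqref{eq:weakdecayderh1}, and using $(1+t+\rho)\geq (1+t+r)$ on the integration range gives
\[
 \int_r^{\infty}\frac{C\varepsilon(1+t)^{\delta}\,d\rho}{(1+t+\rho)(1+\rho-t)^{1+\gamma}}
 \leq \frac{C\varepsilon(1+t)^{\delta}}{(1+t+r)(1+r-t)^{\gamma}}.
\]
If $r<t$, split the integral at the light cone $\rho=t$; the exterior piece $\int_t^\infty$ is controlled by $C\varepsilon(1+t)^{\delta}/(1+t)$ as above, and the interior piece is
\[
 \int_r^t \frac{C\varepsilon(1+t)^{\delta}\,d\rho}{(1+t+\rho)(1+t-\rho)^{1/2}}
 \leq \frac{C\varepsilon(1+t)^{\delta}(1+t-r)^{1/2}}{(1+t+r)},
\]
both of which are absorbed into the stated bound.

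For the last sentence, the bounds for $H_1$ follow from the identity $H_1^{\mu\nu}=-h^{1\mu\nu}+O(h^2)$: the linear piece inherits the estimates just proved, and the quadratic correction is handled by the bootstrap smallness $|h|\leq C\varepsilon$ and is an error of strictly higher order. For $h$ and $H$ one decomposes $h=h^0+h^1$; the $h^1$ part already satisfies the sharper statement, while for $h^0=\chi(r/t)\chi(r)(M/r)\delta_{\mu\nu}$, one exploits that $\mathrm{supp}\,h^0\subset\{r\geq t/2\}\cap\{r\geq 1/2\}$, so $|Z^I h^0|\leq CM/(1+t+r)$ with no true decay in $|r-t|$; paying only the weaker $(1+|r-t|)^{-\delta}$ factor (rather than $(1+|r-t|)^{-\gamma}$) is enough to absorb this contribution since $M\leq\varepsilon$ and $0<\delta<\gamma$. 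The main obstacle is merely bookkeeping at this last step, since there is no genuine analytic difficulty — everything reduces to the single weighted Sobolev estimate and a one-dimensional integration along radial rays.
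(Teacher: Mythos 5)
Your derivation of \eqref{eq:weakdecayderh1} is exactly the paper's: commute $Z^J$ with $\partial$, bound the resulting weighted $L^2$ norms by $E_N(t)^{1/2}$, and read off the pointwise bound from the weighted Klainerman--Sobolev lemma. For \eqref{eq:weakdecayh1}, however, you integrate radially at fixed $t$ from spatial infinity, whereas the paper (following Corollary 9.3 of \cite{LR3}) integrates \eqref{eq:weakdecayderh1} in the $r-t$ direction, i.e.\ along lines $t+r=\mathrm{const}$ starting from the initial hypersurface $\{t=0\}$. The arithmetic of the two integrations is the same, but the choice of path matters for the boundary term: on the paper's path the boundary contribution is $|Z^I h^1(0,(t+r)\omega)|$, which is controlled directly by the hypotheses on the data, while your path requires $Z^I h^1(t,y)\to 0$ as $|y|\to\infty$ for each \emph{fixed} $t>0$. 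You assert this from asymptotic flatness of the data, but the energy assumption \eqref{eq:weakenergyboundstheorem61} controls only $\pa Z^I h^1$ in weighted $L^2$ at time $t$, not $Z^I h^1$ itself, so the vanishing at spatial infinity at time $t$ is not immediate. It is true and can be patched within your framework (the exterior bound \eqref{eq:weakdecayderh1} makes $\pa_\rho Z^I h^1(t,\rho\omega)$ integrable in $\rho$, so the radial limit exists; comparing with $t=0$ via the bound on $\pa_t Z^I h^1$ shows the limit is zero), but as written this is the one genuine gap, and it is precisely the step the paper's choice of integration direction avoids.

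Two smaller remarks. First, the paper's proof does not discuss the extension to $H_1$, $h$, $H$ at all, so your sketch is extra content; the $H_1=-h^{1}+O(h^2)$ reduction is the intended one. Second, be careful with your absorption of the $h^0$ contribution: the inequality $M/(1+t+r)\leq C\varepsilon(1+t+r)^{-1}(1+|r-t|)^{-\delta}(1+t)^{\delta}$ requires $1+|r-t|\lesssim 1+t$, which holds in the interior and near the light cone but fails in the far exterior $r\gg t$ (e.g.\ at $t=0$ and $r$ large, where $h^0=M/r$ exactly). So "paying the weaker $(1+|r-t|)^{-\delta}$ factor" does not by itself absorb $h^0$ uniformly; one needs the slightly different form of the exterior weight (effectively $(1+t+r)^{\delta}$ in place of $(1+t)^{\delta}$, as in Corollary 9.4 of \cite{LR3}) for that region.
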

\begin{proof}
\eqref{eq:weakdecayh1} follows from integrating \eqref{eq:weakdecayderh1} in the $r-t$  direction from initial data, see Corollary 9.3 in \cite{LR3}.
\end{proof}

\subsubsection{The weak decay estimates for the metric from H\"ormander's $L^1$--$L^\infty$ estimate}
To get improve decay estimates in the interior we will use H\"ormander's $L^1$--$L^\infty$
estimates for the fundamental solution of $\Box$, see \cite{H1, L1}:
 \begin{lemma}\label{lemma:hormander} Suppose that $w$ is a solution of
 $\Box u=F$ (i,e. the flat Minkowski wave operator) with vanishing data $ u\big|_{t=0}=\pa_t u\big|_{t=0}=0$. Then
 \beq |u(t,x)|(1+t+|x|) \leq
C\sum_{|I|\leq
2}\int_0^t\int_{\mathbf{R}^3}\frac{|Z^I F(s,y)|}{1+s+|y|}\, dy\,ds,
 \eq
\end{lemma}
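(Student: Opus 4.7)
The plan is to adapt the classical argument of Hörmander, which proceeds via the explicit Kirchhoff/Duhamel representation of the $3{+}1$-dimensional wave equation, combined with a Sobolev embedding on spheres to account for the loss of $|I|\le 2$ derivatives.

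First, since $u$ has vanishing Cauchy data, the Duhamel principle together with the Kirchhoff fundamental solution gives
\[
u(t,x) = \frac{1}{4\pi}\int_0^t (t-s)\int_{\mathbb{S}^2} F\bigl(s,\, x+(t-s)\omega\bigr)\,d\sigma(\omega)\,ds.
\]
This representation expresses $u(t,x)$ as a weighted integral of $F$ over the backward light cone emanating from $(t,x)$. The explicit factor $(t-s)$ is what must eventually be converted, via a change of variables, into the spacetime volume element together with the weight $(1+t+|x|)^{-1}(1+s+|y|)^{-1}$.

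Next I would apply the Sobolev embedding $L^{\infty}(\mathbb{S}^2) \hookrightarrow W^{2,2}(\mathbb{S}^2)$ to the angular integrand on each sphere. The key observation is that angular derivatives with respect to $\omega$, when written in the variable $y = x+(t-s)\omega$, are expressible as linear combinations of the rotation generators $\Omega_{ij}$ acting on $F$, with bounded coefficients. Since the $\Omega_{ij}$ are among the vector fields $Z$, this accounts for the summation over $|I|\le 2$ on the right-hand side and converts the pointwise appearance of $F$ into an $L^1$ or $L^2$ quantity of $Z^I F$ on the sphere.

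Finally, the weight $(1+t+|x|)$ on the left is produced by comparing it with $(1+s+|y|)$ along the backward light cone $t-s=|y-x|$: on this cone one has $1+s+|y|\ge \tfrac{1}{2}(1+t+|x|)$ outside a small region near the tip, while in the tip region the remaining factor of $(t-s)$ is directly absorbed. Changing from the cone-surface parametrisation to the Euclidean volume element $dy$ via $y=x+(t-s)\omega$ (with Jacobian involving $(t-s)^2$) converts the resulting expression into the claimed spacetime volume integral, the factors of $(t-s)$ combining with the weight to yield exactly $1/(1+s+|y|)$. The main obstacle is the careful bookkeeping of these weights and of the factor $(t-s)$ through the combination of Kirchhoff representation, Sobolev step, and change of variables; this bookkeeping is precisely what is carried out in the proofs given in \cite{H1} and \cite{L1}, to which one may appeal directly.
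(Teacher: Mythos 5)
The paper does not actually prove this lemma; it quotes it from \cite{H1,L1}, so your concluding appeal to those references is in the same spirit as the paper itself. Your outline also captures the skeleton of the classical argument: Kirchhoff--Duhamel representation, a Sobolev embedding on spheres to generate the sum over $|I|\le 2$, and a conversion of the cone integral into a weighted spacetime integral. However, two of the three steps as you state them would fail, and they are exactly the points where the real work in \cite{H1,L1} lies.

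First, in the parametrisation $y=x+(t-s)\omega$ the angular derivatives in $\omega$ correspond to rotations about the centre $x$, i.e.\ $(y^i-x^i)\partial_{y^j}-(y^j-x^j)\partial_{y^i}=\Omega_{ij}-x^i\partial_{y^j}+x^j\partial_{y^i}$. The correction terms carry the unbounded coefficient $|x|$, so angular derivatives are \emph{not} bounded-coefficient combinations of the $Z$'s; Sobolev embedding on the sphere centred at $x$, as you propose, would produce factors of order $(1+|x|)^{|I|}$ on the right-hand side. The classical proof avoids this by rewriting the integral over the sphere $\{|y-x|=t-s\}$ in polar coordinates centred at the \emph{origin} (at the cost of a Jacobian factor of roughly $\rho/|x|$ with $\rho=|y|$), so that the genuine rotations $\Omega_{ij}$ appear. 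Second, the weight comparison ``$1+s+|y|\ge\tfrac12(1+t+|x|)$ on the cone outside a small region near the tip'' is false: at the base of the backward cone one can take $s=0$ and $y$ near the origin with $|x|=t$ arbitrarily large, so that $1+s+|y|=O(1)$ while $1+t+|x|\sim 2t$. In that regime the estimate is instead rescued by the $1/|x|$ from the re-centred Jacobian together with the vanishing of the measure of the relevant part of the cone as $|y|\to 0$; and to obtain the weights $1+t+|x|$ and $1+s+|y|$ (rather than $|x|$ and $|y|$) uniformly, one needs the additional dyadic decomposition and rescaling argument of \cite{H1}, which uses the scaling field $S$. So the proposal as written is not a complete proof, although deferring to \cite{H1,L1} for this statement is legitimate and is precisely what the paper does.
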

Also for the linear homoegenous solution we have from \cite{L1,LR3}:
\begin{lemma}\label{lemma:linearhomogenousdecay} If $v$ is the solution of
 $\Box v=0$, with data $v\big|_{t=0}=v_0$ and $ \pa_t v\big|_{t=0}=v_1 $
 then for any $\gamma>0$;
 \beq
 (1+t)|v(t,x)| \leq C\sup_x \big( (1+|x|)^{2+\gamma}
 ( |v_1(x)|+|\partial v_0(x)|) +(1+ |x|)^{1+\gamma}|v_0(x)|\big)
 \eq
\end{lemma}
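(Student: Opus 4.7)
The plan is to prove the estimate directly from the Kirchhoff representation formula in $3+1$ dimensions. Writing $\mathcal{M}_{t,x}[f]=(4\pi)^{-1}\int_{S^2} f(x+t\omega)\,d\sigma(\omega)$ for the spherical mean, the solution of $\Box v=0$ with data $(v_0,v_1)$ is
\[
v(t,x)=t\,\mathcal{M}_{t,x}[v_1]+\mathcal{M}_{t,x}[v_0]+t\,\mathcal{M}_{t,x}[\omega\cdot\nabla v_0].
\]
The regime $0\leq t\leq 1$ is immediate, since finite propagation speed bounds $|v(t,x)|$ by $\sup(|v_0|+|\partial v_0|+|v_1|)$ on $\overline{B(x,1)}$, which is in turn dominated by the claimed right-hand side. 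Thus I would concentrate on $t\geq 1$.

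For $t\geq 1$, the plan is to reduce each spherical mean to a one-dimensional radial integral via the change of variables $\rho=|x+t\omega|=(|x|^2+2t|x|\cos\theta+t^2)^{1/2}$, yielding
\[
\mathcal{M}_{t,x}[f(|y|)]=\frac{1}{2t|x|}\int_{||x|-t|}^{|x|+t} f(\rho)\,\rho\,d\rho
\]
for radial $f$. Substituting the weighted pointwise bounds $|v_1|,|\nabla v_0|\leq A(1+|y|)^{-2-\gamma}$ and $|v_0|\leq B(1+|y|)^{-1-\gamma}$, extracted from the supremum on the right-hand side of the claimed inequality, reduces the problem to estimating explicit one-dimensional integrals of the form $\int_a^b \rho(1+\rho)^{-k-\gamma}d\rho$ for $k=1,2$. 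I would then split by regime: in the exterior $|x|\geq t/2$, use $1/|x|\leq 2/t$ together with $\int_a^\infty\rho(1+\rho)^{-2-\gamma}d\rho\lesssim (1+a)^{-\gamma}$ (and the analogous bound for the $(1+\rho)^{-1-\gamma}$ integrand) to obtain $|v(t,x)|\leq C(A+B)/t$; in the interior $|x|\leq t/2$, every $y$ on the sphere $\{|y-x|=t\}$ satisfies $|y|\geq t/2$, so the weights bound the integrand pointwise and yield the even stronger $|v(t,x)|\leq C(A+B)(1+t)^{-1-\gamma}$.

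The main bookkeeping subtlety, and the only one requiring real care, is the mismatched weights between $v_0$ and $\nabla v_0$: $v_0$ is only required to decay like $(1+|y|)^{-1-\gamma}$, while $v_1$ and $\nabla v_0$ must decay like $(1+|y|)^{-2-\gamma}$. Tracking this requires noting that $\mathcal{M}_{t,x}[v_0]$ lacks the prefactor of $t$ present in front of $\mathcal{M}_{t,x}[v_1]$ and $\mathcal{M}_{t,x}[\omega\cdot\nabla v_0]$, so the extra $t^{-1}$ compensates precisely for the one missing power of $(1+|y|)^{-1}$. Beyond this I foresee no real obstacle; the lemma is a routine consequence of the explicit representation formula.
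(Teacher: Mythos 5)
The paper does not prove this lemma at all; it simply cites \cite{L1,LR3}, so there is no in-paper argument to compare against. Your Kirchhoff-formula proof is the standard one (and essentially what the cited references do), and it is correct: the radial reduction of the spherical mean, the split into $|x|\geq t/2$ and $|x|\leq t/2$, and the observation that the missing factor of $t$ in front of $\mathcal{M}_{t,x}[v_0]$ compensates for the weaker weight on $v_0$ all go through. One small slip: the parenthetical ``analogous bound for the $(1+\rho)^{-1-\gamma}$ integrand'' cannot be taken with the upper limit extended to $\infty$, since $\int_a^\infty \rho(1+\rho)^{-1-\gamma}\,d\rho$ diverges for $0<\gamma\leq 1$ (the range actually used in the paper). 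You must keep the finite upper limit $t+|x|$, getting $\int_{|t-|x||}^{t+|x|}\rho(1+\rho)^{-1-\gamma}d\rho\lesssim (1+t+|x|)^{1-\gamma}$, and then the prefactor $(2t|x|)^{-1}$ with $|x|\geq t/2$ yields $\lesssim t^{-1}|x|^{-\gamma}\lesssim t^{-1}$; this is exactly the compensation you describe in your final paragraph, so the fix is cosmetic rather than structural.
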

For the proof below we will also use the following version of Hardy's inequality,
see Corollary 13.3 in \cite{LR3}
\begin{lemma} \label{lemma:hardy} For any $-1\leq a\leq 1$ and any $\phi\in C^\infty_0(\R^3)$
\begin{equation}\label{eq:poincareone}
\int \frac{|\phi|^2}{(1+|t-r|)^2}\,\frac{ w\, dx}{(1+t+r)^{1-a}}
\les \int |\pa \phi|^2\, \frac{ w\, dx}{(1+t+r)^{1-a}}.
\end{equation}
\end{lemma}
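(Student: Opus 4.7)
The plan is to reduce the three-dimensional inequality to a one-dimensional weighted Hardy estimate via polar coordinates, and then integrate by parts against a carefully chosen antiderivative of $(1+|r-t|)^{-2}$. Writing $dx=r^2\,dr\,d\omega$ and dropping the angular part of $|\pa\phi|^2$, it suffices to prove, for each fixed $t\geq 0$ and $\omega\in S^2$, the one-dimensional estimate
\[
\int_0^\infty \frac{|\phi(r)|^2\,\mu(r)}{(1+|r-t|)^2}\,dr
\les
\int_0^\infty |\phi'(r)|^2\,\mu(r)\,dr,
\qquad
\mu(r):=\frac{r^2\,w(r-t)}{(1+t+r)^{1-a}},
\]
for $\phi$ obtained by restricting the original compactly supported function on $\R^3$ to the ray $\{r\omega:r\geq 0\}$.

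Next I would split the integral at $r=t$ and use the antiderivatives $\rho_-(r)=(1+t-r)^{-1}$ on $[0,t]$ and $\rho_+(r)=-(1+r-t)^{-1}$ on $[t,\infty)$, both satisfying $\rho'_\pm(r)=(1+|r-t|)^{-2}$ and $|\rho_\pm(r)|\leq (1+|r-t|)^{-1}$. Integration by parts produces an isolated interior boundary trace $2\phi(t)^2\mu(t)$ at $r=t$, while the traces at $r=0$ and $r=\infty$ vanish because $\mu(0)=0$ and $\phi$ has compact support. Applying Cauchy--Schwarz to the cross term $\int\rho_\pm\phi\phi'\mu\,dr$ with weights $(1+|r-t|)^{-1}$ and $1+|r-t|$ then yields a small multiple of the left-hand side plus a constant multiple of $\int|\phi'|^2\mu\,dr$, which is absorbed into the left. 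To control the isolated boundary trace I would combine $|\phi(t)|^2\les \int_t^\infty|\phi'(s)|^2(1+s-t)^{1+\varepsilon}\,ds$ with the monotonicity $\mu(t)(1+s-t)^{1+\varepsilon}\les \mu(s)$ for $s\geq t$ and $0<\varepsilon<2\gamma$, which follows from the exterior form $w(s-t)=(1+|s-t|)^{1+2\gamma}$.

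The main obstacle will be verifying the weight-derivative estimate $|\mu'(r)|\les \mu(r)/(1+|r-t|) + \mu(r)/(1+t+r)$ needed to control the $\int\rho_\pm\phi^2\mu'\,dr$ term. The contribution from $\pa_r r^2=2r$ is dangerous near $r=0$ when $t$ is small; I would handle this by replacing $\rho_-$ with the modified antiderivative $\tilde\rho_-(r)=(1+t-r)^{-1}-(1+t)^{-1}$ which vanishes at $r=0$, at the cost of a harmless boundary correction at $r=t$. The contribution $|w'|/w\les (1+|r-t|)^{-1}$ is uniform for both branches of $w$ since each is a sum of powers $(1+|r-t|)^\alpha$. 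Finally, the $(1-a)/(1+t+r)$ piece coming from the $a$-dependent factor, once integrated against $|\rho_\pm|\phi^2$, is directly absorbed into the second term of the weight estimate, and it is exactly the constraint $-1\leq a\leq 1$ in the hypothesis that keeps all three contributions in balance so that the bootstrap closes.
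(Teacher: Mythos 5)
Your overall architecture (polar reduction, splitting at $r=t$, integrating by parts against antiderivatives of $(1+|r-t|)^{-2}$) is sound, but the step where you "control the $\int\rho_\pm\phi^2\mu'\,dr$ term" by the bound $|\mu'|\les \mu/(1+|r-t|)+\mu/(1+t+r)$ and then absorb does not close in the exterior region, which is exactly where the weight grows. For $r>t$ the contribution of $w'/w$ to $\mu'/\mu$ is exactly $(1+2\gamma)/(1+r-t)$, so after multiplying by $|\rho_+|=(1+r-t)^{-1}$ and taking absolute values you produce at least $(1+2\gamma)\int_t^\infty \phi^2\mu\,(1+r-t)^{-2}dr$ on the right, i.e.\ a constant strictly larger than $1$ times the quantity you are trying to bound. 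That cannot be absorbed, and the hypothesis $-1\le a\le 1$ only tames the $(1-a)/(1+t+r)$ piece, not this one; so the "bootstrap" you describe is vacuous on $(t,\infty)$.

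The repair is to never take absolute values of the weight-derivative term: its sign is the whole point. On $(t,\infty)$ one has $\rho_+<0$ and, since $2/r\ge (1-a)/(1+t+r)$ for $a\ge -1$, also $\mu'\ge (1+2\gamma)\mu/(1+r-t)$. The integration-by-parts identity therefore gives
\begin{equation}
\int_t^\infty\frac{\phi^2\mu\,dr}{(1+r-t)^2}
=\phi(t)^2\mu(t)+2\!\int_t^\infty\!|\rho_+|\phi\phi'\mu\,dr+\int_t^\infty\!|\rho_+|\phi^2\mu'\,dr
\ge \phi(t)^2\mu(t)-2\!\int_t^\infty\!|\rho_+||\phi||\phi'|\mu\,dr+(1+2\gamma)\!\int_t^\infty\frac{\phi^2\mu\,dr}{(1+r-t)^2},
\end{equation}
which rearranges to $2\gamma\int_t^\infty\phi^2\mu(1+r-t)^{-2}dr+\phi(t)^2\mu(t)\le 2\int|\rho_+||\phi||\phi'|\mu\,dr\le \gamma\int\phi^2\mu(1+r-t)^{-2}dr+\gamma^{-1}\int|\phi'|^2\mu\,dr$. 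This closes the exterior estimate and controls the interior boundary trace $\phi(t)^2\mu(t)$ for free, so your separate trace estimate via $(1+s-t)^{1+\varepsilon}$ is unnecessary; and on $(0,t)$ the same observation ($\rho_->0$, $\mu'>0$ there) lets you simply drop the weight-derivative term, which also disposes of the $2/r$ singularity at $r=0$ without modifying $\rho_-$. This sign mechanism — the derivative of the weight as the source of positivity rather than an error term — is the standard one for weighted Hardy inequalities and is how the cited Corollary 13.3 of \cite{LR3} is actually proved.
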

In the other direction we have
\begin{lemma} Let $\pa_q=(\pa_r-\pa_t)/2$ and let $\pab_\mu=\pa_\mu -L_\mu\pa_q $ be the projection of $\pa_\mu$ onto the tangent space
of the outgoing light cones. Then
\begin{equation}\label{eq:derbytransder}
|\partial\phi(t,x)|\les |\pa_q \phi(t,x)|+|\pab \phi(t,x)|
\end{equation}
and
\begin{equation}\label{eq:pabyZ}
(1+|t-r|)|\partial\phi(t,x)|+(1+t+r)|\overline{\partial}\phi(t,x)|\les \sum_{|I|=1} |Z^I \phi(t,x)|.
\end{equation}
\end{lemma}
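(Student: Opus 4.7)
The statement is purely algebraic and independent of the Einstein--Vlasov analysis; it is the standard conversion from coordinate derivatives to the null/vector field decomposition, so the plan is brief.

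For \eqref{eq:derbytransder} I would simply unfold the definition. Since $\pab_\mu\phi = \pa_\mu\phi - L_\mu\pa_q\phi$, one has the pointwise identity $\pa_\mu\phi = \pab_\mu\phi + L_\mu\pa_q\phi$, and because $L_0 = -1$ and $L_i = x^i/r$ satisfy $|L_\mu|\le 1$, the triangle inequality gives $|\pa_\mu\phi|\le |\pab_\mu\phi| + |\pa_q\phi|$. Summing over $\mu$ yields \eqref{eq:derbytransder}.

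For \eqref{eq:pabyZ} the plan is to establish the two weighted bounds
\[
(1+|t-r|)|\pa\phi|\les \sum_{|I|=1}|Z^I\phi|,\qquad (1+t+r)|\pab\phi|\les \sum_{|I|=1}|Z^I\phi|,
\]
separately and add them. Writing $\omega^k = x^k/r$, direct computation in Cartesian coordinates yields the identities
\[
(t-r)\underline{L} = S - \omega^k B_k,\qquad (t+r)L = S + \omega^k B_k,
\]
which immediately give $|t-r||\underline{L}\phi|\les |S\phi| + \sum_k|B_k\phi|$ and $(t+r)|L\phi| \les |S\phi|+\sum_k|B_k\phi|$. Since $|L_\mu|,|\omega^k|\le 1$, all these terms are controlled by $\sum_{|I|=1}|Z^I\phi|$.

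It remains to handle the angular part of $\pab$ and to absorb the constant ``1'' in the weights. The angular components $\pab_A$ (tangent to the Euclidean spheres) satisfy $r\,\pab_A$-type expressions being linear combinations of rotations $\Omega_{ij}$ with bounded coefficients $\omega^k$, yielding $r|\pab_{\text{ang}}\phi|\les\sum_{i<j}|\Omega_{ij}\phi|$; in the region $r\le 1$ one instead writes $\pa_{x^i} = (B_i - x^i\pa_t)/t$ (valid once $t\ge 1$; the small $t\le 1$ range is trivial since all weights are $O(1)$) to produce $(1+t+r)|\pa_{x^i}\phi|\les |B_i\phi| + (1+t+r)|\pa_t\phi|$, and the extra $\pa_t\phi$ is directly part of the first--order $Z^I\phi$. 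Finally, the constant pieces $|\pa\phi|$ and $|\pab\phi|$ are trivially bounded by $\sum_{|\alpha|=0,\ldots,3}|\pa_\alpha\phi|$, which is included in $\sum_{|I|=1}|Z^I\phi|$ since the $\pa_\alpha$ are among the $Z$. Combining these estimates with $|\pa\phi|\les|\underline{L}\phi| + |L\phi| + |\pab_{\text{ang}}\phi|$ and $|t-r|\le t+r$ produces \eqref{eq:pabyZ}.

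There is no real obstacle: everything is linear algebra among the explicit vector fields and the null frame, and all coefficients $L_\mu,\omega^k,x^i/t$ arising are bounded on the relevant regions. The only mild care is in the transition between $r\le 1$ and $r\ge 1$ when expressing angular derivatives via $\Omega_{ij}$ versus boosts, which is resolved by the case split described above.
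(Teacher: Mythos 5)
The paper states this lemma without proof (it is the standard vector--field computation from \cite{LR3}), so I am judging your argument on its own terms. The first inequality \eqref{eq:derbytransder} is handled correctly: $\pa_\mu\phi=\pab_\mu\phi+L_\mu\pa_q\phi$ with $|L_\mu|\le 1$ is all that is needed. The identities $(t-r)\underline{L}=S-\omega^kB_k$ and $(t+r)L=S+\omega^kB_k$ (with $\omega^k=x^k/r$) are also correct and dispose of the $\underline{L}$ and $L$ components of \eqref{eq:pabyZ}.

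The gap is in the angular derivatives $\pa_{x^i}-\omega_i\pa_r$, which together with $L$ constitute $\pab$ (indeed $\pab_i=\tfrac12\omega_iL+(\pa_{x^i}-\omega_i\pa_r)$). You control them only via $r\,(\pa_{x^i}-\omega_i\pa_r)=\omega^j\Omega_{ji}$ for $r\ge 1$, and via the boost trick for $r\le 1$. This does not cover the interior region $1\le r\ll t$: there $1+t+r\sim t$, while the rotation identity only supplies the weight $r$, so $(1+t+r)|(\pa_{x^i}-\omega_i\pa_r)\phi|$ is not controlled; your case split is by the size of $r$, but the relevant dichotomy is $r$ versus $t$. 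The missing ingredient is the second identity $t\,(\pa_{x^i}-\omega_i\pa_r)=B_i-\omega_i\omega^kB_k$, which follows from $t\pa_{x^i}=B_i-x^i\pa_t$ and $t\pa_r=\omega^kB_k-r\pa_t$ because the $\pa_t$ contributions cancel exactly. Adding $t|(\pa_{x^i}-\omega_i\pa_r)\phi|\les\sum_k|B_k\phi|$ to the rotation bound and to the trivial bound by $|\pa\phi|$ gives the weight $1+t+r$ everywhere; the same term is also what you need for the $(1+|t-r|)|\pa\phi|$ half when $t\ge r$. Relatedly, your intermediate inequality $(1+t+r)|\pa_{x^i}\phi|\les|B_i\phi|+(1+t+r)|\pa_t\phi|$, with the remark that the extra $\pa_t\phi$ ``is directly part of the first--order $Z^I\phi$'', is not right as stated: $(1+t+r)|\pa_t\phi|$ is \emph{not} bounded by $\sum_{|I|=1}|Z^I\phi|$ (that weight fails for the transversal derivative near the light cone, which is precisely why \eqref{eq:pabyZ} carries only the weight $1+|t-r|$ on $\pa\phi$). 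What saves the $r\le 1$ case is that the true remainder is $r|\pa_t\phi|\le|\pa_t\phi|$, not $(1+t+r)|\pa_t\phi|$.
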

\begin{proposition}\label{prop:weakdecayhormander} Suppose that $N\geq 4$ and the weak energy bounds \eqref{eq:weakenergyboundstheorem61} hold for
a solution of Einstein--Vlasov in wave coordinates. Then
\begin{equation}\label{eq:hormanderweakdecayh1}
|Z^I h^1(t,x)|\leq \frac{C\varepsilon(1+t)^{2\delta}}{(1+t+r) (1+q_+)^{\gamma}},
\qquad |I|\leq N-3.
\end{equation}
where $q=r-t$, $q_+=\max\{q,0\}$ and  $q_-=\max\{-q,0\}$. Moreover
\begin{equation}\label{eq:hormanderweakdecayderh1}
(1+|q|)\big|\pa Z^I h^1(t,x)\big|+(1+t+r)\big|\overline{\pa} Z^I h^1(t,x)\big|\leq
\frac{C\varepsilon(1+t)^{2\delta}}{(1+t+r)(1+q_+)^{\gamma}},\qquad |I|\leq N-4.
\end{equation}
The same estimates hold for $H_1$ in place of $h^1$, and $h$ or $H$ in place of $h^1$ if $\gamma$ is replaced by $2\delta$.
\end{proposition}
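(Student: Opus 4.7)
The plan is to apply H\"ormander's $L^1$--$L^\infty$ inequality (Lemma \ref{lemma:hormander}) to each commuted quantity $Z^I h^1$ with $|I|\leq N-3$. Rewriting the reduced Einstein equations \eqref{eq:RE2} in the form
\[
\Box Z^I h^1_{\mu\nu} = Z^I F_{\mu\nu}(h)(\pa h,\pa h) + Z^I \widehat{T}_{\mu\nu} - Z^I(\widetilde{\Box}_g h^0)_{\mu\nu} - Z^I\big(H^{\alpha\beta}\pa_\alpha\pa_\beta h^1_{\mu\nu}\big) + [Z^I,\Box]h^1_{\mu\nu},
\]
where the last bracket only produces lower-order terms from the scaling vector field $S$, the estimate reduces to controlling the weighted time integral $\int_0^t\!\int |Z^J(\mathrm{RHS})(s,y)|/(1+s+|y|)\,dy\,ds$ for $|J|\leq |I|+2\leq N-1$, plus the homogeneous data contribution (treated via Lemma \ref{lemma:linearhomogenousdecay} and the data smallness implicit in $E_N(0)^{1/2}\lesssim \varepsilon$).

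The matter contribution is handled directly by the hypothesis: $\|Z^J\widehat{T}(s,\cdot)\|_{L^1}\lesssim \varepsilon$ for $|J|\leq N-1$ gives a term of size $\varepsilon\!\int_0^t ds/(1+s)\lesssim \varepsilon(1+t)^\delta$. The contribution from $\widetilde{\Box}_g h^0$ is explicit and localized: since $h^0$ is supported in the cone region $r/t\in[1/2,3/4]$, the resulting term is of size $M/(1+t+r)\lesssim \varepsilon/(1+t+r)$ and contributes harmlessly. For the semilinear term $F(h)(\pa h,\pa h)$ and for the quasilinear commutator $Z^I(H^{\alpha\beta}\pa_\alpha\pa_\beta h^1)$, I would expand using Leibniz and then pair each product so that one factor is placed in $L^\infty$ via the Klainerman--Sobolev bounds \eqref{eq:weakdecayderh1}--\eqref{eq:weakdecayh1} (applicable at orders $\leq N-2$), while the other factor is estimated in $L^2$ either by the energy $E_N(t)^{1/2}\lesssim \varepsilon(1+t)^\delta$ together with the Hardy inequality of Lemma \ref{lemma:hardy} (to absorb the weight $1/(1+s+|y|)$), or, at the top order in $H^{\alpha\beta}\pa_\alpha\pa_\beta Z^I h^1$, after an integration by parts in $y$ moving one derivative off the highest-order factor. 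This yields a nonlinear contribution of size $\varepsilon^2(1+t)^{2\delta}$, which combined with the linear contributions gives the claim \eqref{eq:hormanderweakdecayh1} in the interior; in the exterior, \eqref{eq:weakdecayh1} is already sharper and yields the $(1+q_+)^{-\gamma}$ weight, so the stated bound follows by taking the minimum of the two.

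For the derivative bound \eqref{eq:hormanderweakdecayderh1}, the tangential estimate $(1+t+r)|\overline{\pa}Z^I h^1|\lesssim \varepsilon(1+t)^{2\delta}/(1+q_+)^\gamma$ is an immediate consequence of \eqref{eq:hormanderweakdecayh1} applied at order $|I|+1\leq N-3$ combined with \eqref{eq:pabyZ}. The transverse bound $(1+|q|)|\pa Z^I h^1|\lesssim \varepsilon(1+t)^{2\delta}/[(1+t+r)(1+q_+)^\gamma]$ is more delicate: using the identity $\pa_\alpha = L_\alpha \pa_q + \overline{\pa}_\alpha$ and $4\pa_q\pa_s = -\Box + r^{-2}\Delta_{S^2} + 2r^{-1}(\pa_s-\pa_q)$, one expresses $\pa_q(L Z^I h^1)$ as a sum of a tangential derivative of $\overline{\pa} Z^I h^1$ (controlled above), curvature-type lower-order terms, and the right-hand side of the commuted equation. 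Integrating along incoming null rays $L$ from a point where $q$ is bounded (or, equivalently, running the H\"ormander argument directly on $\pa Z^I h^1$ after differentiating the equation) recovers the factor $(1+|q|)^{-1}$. The transferred contribution for the matter and nonlinearities is controlled by the same splittings as above.

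The main obstacle is the borderline logarithmic divergence in time forced by the $L^1$ bound on $\widehat{T}$ and by the semilinear terms that fail the classical null condition; these are the source of the $(1+t)^{2\delta}$ loss and must be tracked carefully to ensure the derivative count at top order is consistent with $|I|\leq N-3$ (using two derivatives for the H\"ormander commutation and accounting for one loss in the semilinear/quasilinear $L^\infty$--$L^2$ split, which requires low-order factors at order $\leq N-2$ and high-order factors at order $\leq N$, covered by $E_N$).
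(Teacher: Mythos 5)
Your strategy for \eqref{eq:hormanderweakdecayh1} is essentially the paper's: the exterior case is read off from the Klainerman--Sobolev bound \eqref{eq:weakdecayh1}, and the interior case is obtained from H\"ormander's $L^1$--$L^\infty$ inequality (Lemma \ref{lemma:hormander}) plus Lemma \ref{lemma:linearhomogenousdecay} for the data, with the matter term controlled by the assumed $L^1$ bound on $Z^J\widehat{T}$, the $\Box h^0$ term by its explicit form, and the quadratic terms by combining \eqref{eq:weakdecayderh1}--\eqref{eq:weakdecayh1} with the energy and Hardy's inequality (Lemma \ref{lemma:hardy}). The paper organizes this as a decomposition $h^1=v+u+\phi$ into homogeneous, nonlinear-source and matter-source pieces rather than commuting $\Box$ through $Z^I$ first, and it estimates the quadratic terms by Cauchy--Schwarz with both factors in $L^2$ (possible since $|J|+|K|\leq N-1$ puts both factors within reach of $E_N$), rather than your $L^\infty$--$L^2$ split; both work, and no integration by parts is needed at top order because $\partial_\alpha$ is itself one of the commuting vector fields, so $\partial\partial Z^Jh^1$ with $|J|\leq N-1$ is already controlled by $E_N$.

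Where you overcomplicate is the derivative bound \eqref{eq:hormanderweakdecayderh1}: the inequality \eqref{eq:pabyZ} reads $(1+|t-r|)|\partial\phi|+(1+t+r)|\overline{\partial}\phi|\lesssim\sum_{|J|=1}|Z^J\phi|$, i.e.\ it controls the \emph{full} gradient, transverse direction included, with the weight $(1+|q|)$. Applying it to $\phi=Z^Ih^1$ with $|I|\leq N-4$ and invoking \eqref{eq:hormanderweakdecayh1} at order $|I|+1\leq N-3$ gives both halves of \eqref{eq:hormanderweakdecayderh1} in one line. The characteristic-integration argument you sketch (expressing $\partial_q\partial_s(r\phi)$ through $\Box_0$ and integrating along the flow) is the machinery the paper reserves for the \emph{sharp} decay estimates of Section \ref{sec:sharpfirstorder} and Lemma \ref{lem:transversalder}, where one needs genuine $(1+q_+)^{-1-\gamma'}$ decay of $\partial h^1$ rather than merely a factor $(1+|q|)^{-1}$ relative to the undifferentiated quantity; it is not needed here and would force you to re-estimate the source terms for no gain.
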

\begin{proof} First \eqref{eq:hormanderweakdecayderh1} is a consequence of \eqref{eq:hormanderweakdecayh1} using \eqref{eq:pabyZ} so it only remains to prove
\eqref{eq:hormanderweakdecayh1} for $r<t$.
Let $h^1_{\mu\nu}=v_{\mu\nu}+u_{\mu\nu}+\phi_{\mu\nu}$ where,
\begin{equation}
\Box v_{\mu\nu}=0,\qquad v_{\mu\nu}\big|_{t=0}=h^1_{\mu\nu}\big|_{t=0},\qquad \pa_t
 v_{\mu\nu}\big|_{t=0}=\pa_t
 h^1_{\mu\nu}\big|_{t=0}.
 \end{equation}
 and
 \begin{align}
 \Box u_{\mu\nu}&= -H^{\alpha\beta}\pa_\alpha\pa_\beta h_{\mu\nu}
 +F_{\mu\nu}(h)(\pa h,\pa h)-\Box h_{\mu\nu}^0,\qquad &u_{\mu\nu}\big|_{t=0}=\pa_t
 u_{\mu\nu}\big|_{t=0}=0,\\
\Box \phi_{\mu\nu}&=\widehat{T}_{\mu\nu},\qquad &\phi_{\mu\nu}\big|_{t=0}=\pa_t
 \phi_{\mu\nu}\big|_{t=0}=0,
 \end{align}
 We will prove \eqref{eq:hormanderweakdecayh1} for $r<t$ separately for each of $v,u,\phi$.
 For $v$ and $u$ the proof follows the proof in section 16 of \cite{LR3}.
 The estimate for the homogeneous linear part $v$ follows from using Lemma \ref{lemma:linearhomogenousdecay} with the estimate for $v_0$ and $v_1$ obtained from
 \eqref{eq:weakdecayderh1}-\eqref{eq:weakdecayh1} when $t=0$.

Then using the $L^\infty$ bounds \eqref{eq:weakdecayderh1}-\eqref{eq:weakdecayh1} for small number of vector fields  we have
 \beq
 |Z^I F_{\mu\nu}(h)(\pa h,\pa h)|\leq
 C\!\!\!\sum_{|J|+|K|\leq |I|\!\!\!\!\!\!}|\pa Z^J h|\, |\pa Z^K h|
 +C\!\!\! \sum_{|J|+|K|\leq |I|} \frac{|Z^J h|}{1+|q|}\, |\pa Z^K h|,
 \eq
 and since $H^{\alpha\beta}=-h_{\alpha\beta}+O(h^2)$,
  \beq
\big|Z^I\big( H^{\alpha\beta}\pa_\alpha\pa_\beta h_{\mu\nu}\big)\big|\leq C\!\!\!
\sum_{|J|+|K|\leq |I|+1,\, |J|\leq |I|} \frac{|Z^J h|}{1+|q|}\, |\pa
Z^K h|.
  \eq
  Now
  \beq
\int |\pa Z^J h|\, |\pa Z^K h|(s,y)\, dy \leq \sum_{|I|\leq N}\|\pa
Z^I h(s,\cdot)\|_{L^2}^2\leq C\varepsilon^2(1+s)^{2\delta},
  \eq
  since
  \beq
  \int |\pa Z^I h^0(s,y)|^2\,dy \leq C M^2 \int_0^\infty
 \frac{r^2\, dr}{(1+|t+r|)^4}\leq C^\prime M^2.
 \eq
 We write $h=h^0+h^1$ and estimate
 \beq
 \int \frac{|Z^J h^0(t,x)|^2}{(1+|t-r|)^2}\,dx \leq CM^2 \int_0^\infty
 \frac{r^2\, dr}{(1+|t+r|)^2(1+|t-r|)^2}\leq C^\prime M^2
 \eq
and by Lemma \ref{lemma:hardy}
 \beq
 \int \frac{|Z^J h^1(t,x)|^2}{(1+|t-r|)^2}\,dx
 \leq C\int |\pa Z^J h^1(t,x)|^2 w\, dx\leq C\varepsilon^2 (1+t)^{2\delta}
 \eq
 Hence
 \beq
 \int \frac{|Z^J h|}{1+|q|}\, |\pa
Z^K h|(s,y)\, dy \leq C\varepsilon^2 (1+t)^{2\delta}
 \eq
 Finally
 \beq
 |\Box h_{\mu\nu}^0|
 =\Big|\frac{M\de_{\mu\nu}}{r}(\pa_t+\pa_r)(\pa_t-\pa_r)\big(\chi(\tfrac{r}{1+t})\big)\Big|
 \leq \frac{CM H(r<3t/4)}{(1+t+r)^3},
 \eq
 where $H(r<3t/4)=1$ when $r<3t/4$ and $0$ otherwise.
 Hence
 \beq
 \|\, \Box h^0(t,\cdot)\|_{L^1}\leq M.
 \eq
 It now follows from Lemma \ref{lemma:hormander} that
 \beq
 |u_{\mu\nu}(t,x)|(1+t+|x|)\leq \int_0^t
 \frac{(\varepsilon^2 +M)\, ds}{(1+s)^{1-2\delta}}\leq
 C\varepsilon (1+s)^{2\delta},
 \eq
 which proves \eqref{eq:hormanderweakdecayh1} for $r<t$ also for $u$.
 It remains to prove the estimate also for $\phi$ but this also follows from
  Lemma \ref{lemma:hormander}:
 \beq
 |\phi_{\mu\nu}(t,x)|(1+t+|x|)\leq \int_0^t
 \frac{C\|\widehat{T}(s,\cdot)\|_{L^1}\, ds}{1+s}\leq \int_0^t
 \frac{C\varepsilon\, ds}{(1+s)^{1-\delta}}\leq
 C\varepsilon (1+s)^{\delta},
 \eq

\end{proof}

\subsubsection{The support and weak decay of matter}
The following Sobolev inequality will be used to obtain pointwise bounds for $\newhat{T}$ from the assumptions \eqref{eq:weakenergyboundstheorem61}.
\begin{lemma} \label{lem:Sobolevt3}
If $\supp \,\phi\subset \{(t,x);\, |x|\leq K+ct\}$ for some $K\geq 0$ and $0<c<1$, then
\begin{equation*}
	| \phi(t,x)|\les (1+t+r)^{-3} {\sum}_{|I| = 3} \| Z^I
	\phi(t,\cdot)\|_{L^1}.
\end{equation*}
\end{lemma}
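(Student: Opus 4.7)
The plan is to combine the elementary $L^1$–$L^\infty$ Sobolev inequality in three spatial dimensions with a rescaling at scale $\lambda \sim 1+t$, and then with the conversion of coordinate derivatives into the $Z$-fields.

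First I would fix $t$, set $\lambda := 1+t$, and introduce the rescaled function $\psi(y) := \phi(t, \lambda y)$. Since $\supp \phi(t,\cdot) \subset \{|x| \leq K + ct\}$, the support of $\psi$ is contained in $\{|y| \leq (K+ct)/\lambda\} \subset \{|y| \leq K + c\}$, a fixed bounded set independent of $t$. The classical $L^1$–$L^\infty$ Sobolev inequality in $\R^3$, which follows from the fundamental theorem of calculus for compactly supported functions, gives
\[
	\|\psi\|_{L^\infty(\R^3)} \;\lesssim\; \sum_{|I| \leq 3} \|\partial_y^I \psi\|_{L^1(\R^3)}.
\]
Using $\partial_y^I \psi(y) = \lambda^{|I|} (\partial^I \phi)(t, \lambda y)$ and $dx = \lambda^3\, dy$, this translates into
\[
	|\phi(t,x)| \;\lesssim\; \sum_{|I| \leq 3} \lambda^{|I| - 3} \|\partial^I \phi(t,\cdot)\|_{L^1}
	\;=\; (1+t)^{-3} \sum_{|I| \leq 3} (1+t)^{|I|} \|\partial^I \phi(t,\cdot)\|_{L^1}.
\]

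Next I would convert $(1+t)^{|I|} \partial^I$ to the $Z$-fields using a pointwise identity analogous to \eqref{eq:partialZ}. Splitting at the threshold $T_0 := 2K/(1-c)$, for $t \geq T_0$ one has $|x|/t \leq (1+c)/2 < 1$ on $\supp \phi(t,\cdot)$, and \eqref{eq:partialZ} (with the original $Z$ rather than the translated $\tilde{Z}$, which requires only notational changes) yields, pointwise on the support,
\[
	(1+t)^{|I|} |\partial^I \phi(t,x)| \;\lesssim\; \sum_{|J| \leq |I|} |Z^J \phi(t,x)|.
\]
For $t \leq T_0$ the factor $(1+t)^{|I|}$ is of bounded size and the inequality is trivial, because $\partial_\alpha$ is itself one of the $Z$-fields, so $|\partial^I \phi| \leq \sum_{|J| \leq |I|} |Z^J \phi|$. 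Integrating in $x$ and summing over $|I| \leq 3$ turns the previous display into
\[
	|\phi(t,x)| \;\lesssim\; (1+t)^{-3} \sum_{|I| \leq 3} \|Z^I \phi(t,\cdot)\|_{L^1}.
\]
Finally, since $(1+t+r) \leq (1+K)(1+t)$ on $\supp \phi$, one has $(1+t+r) \sim (1+t)$ on the support, which gives the claimed bound (where, if the $\sum_{|I|=3}$ in the statement is to be interpreted strictly, the lower-order terms are absorbed into the top-order ones using Poincar\'e on the bounded $x$-slice of $\supp \phi$, a point that does not affect the main thrust of the argument).

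The argument is essentially a standard weighted Sobolev inequality and I do not foresee any serious obstacle; the only point requiring care is that the decay factor $(1+t+r)^{-3}$ does not arise from the Sobolev rescaling itself (which serves only to make the support of $\psi$ uniformly bounded in $t$) but rather from the built-in $t$-weights in the vector fields $S$ and $B_i$, together with the fact that $|x|/t$ stays bounded strictly away from $1$ on the support of $\phi$.
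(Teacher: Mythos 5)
Your argument is correct and is essentially the paper's own (very terse) proof filled in: the paper likewise splits at the threshold $t=2K/(1-c)$, handles small times with the standard $L^1$--$L^\infty$ Sobolev inequality, and for large times invokes the identity \eqref{eq:partialZ} to trade the weights $(1+t)^{|I|}$ on coordinate derivatives for the vector fields $Z$ on the region $|x|/t\leq c'<1$. Your rescaling by $\lambda=1+t$ is just a clean way of organizing that computation, so there is nothing genuinely different here.
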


\begin{proof}
	The proof proceeds by noting that, with $c' = (1+c)/2$, $\supp \,\phi (t,\cdot) \subset \{(t,x);\, |x|\leq c't\}$ for $t\geq 2K/(1-c)$.  The inequality for $t \leq 2K/(1-c)$ follows from the standard Sobolev inequality,
	\[
		\vert \phi(t,x) \vert \les \sum_{\vert I \vert = 3} \vert \partial^I \phi(t,\cdot) \vert.
	\]
	The proof for $t \geq 2K/(1-c)$ follows from the identity \eqref{eq:partialZ}.
\end{proof}

Lemma \ref{lem:Sobolevt3}, together with assumption \eqref{eq:supportofT} on the support of $\newhat{T}$ and the weak energy bounds \eqref{eq:weakenergyboundstheorem61}, gives
\begin{equation}\label{eq:weakdecayT}
|Z^I T(t,x)|\les \varepsilon (1+t)^{-3}, \qquad |I|\leq N-4.
\end{equation}

\subsection{The sharp decay estimates for the first order derivatives}\label{sec:sharpfirstorder}
Throughout the rest of this section we will assume that the weak decay estimates
\eqref{eq:hormanderweakdecayh1}-\eqref{eq:hormanderweakdecayderh1} hold for some
$0<8\delta<\gamma<1-8\delta$, $M\leq \varepsilon\leq 1$, along with the support condition \eqref{eq:supportofT} for $\widehat{T}$.  However we will not use the weak energy bounds \eqref{eq:weakenergyboundstheorem61}
any further.
\subsubsection{The sharp decay estimates for first order derivative of certain components from the wave coordinate condition}
 By \cite{L4,LR3} the wave coordinate condition can be written
\beq\pa_\mu \widehat{H}^{\mu\nu\!}\!
= W^{\nu}(h,\pa h)\qquad
\text{where}\quad  \widehat{H}^{\mu\nu}\!=\!H^{\mu\nu}\!-m^{\mu\nu}
\tr_m H_{\!}/2,\quad \tr H\!=m_{\alpha\beta} H^{\alpha\beta}
\eq
and $|W(h,\pa h)|\lesssim |h|\, |\pa h |$.
 Moreover
\begin{equation} \label{eq:wavehat1divergence}
\pa_\mu \widehat{H}_1^{\mu\nu\!}\!
= W^{\nu}(h,\pa h)-\pa_\mu \widehat{H}_0^{\mu\nu\!},\quad\text{where}\quad
\pa_\mu \widehat{H}_0^{\mu\nu\!}=2{\chi}^{\,\prime}\big(\tfrac{r}{1+t}\big) M(1\!+t)^{-2}\delta^{0\nu}.
\end{equation}
We first express the divergence in a null frame:
\begin{lemma}\label{lem:divergencenullframe} Let $\pa_q=(\pa_r\!-\pa_t)/2 $ and $\pa_s=(\pa_r\!+\pa_t)/2$. Then for any tensor $k^{\mu\nu}$:
\begin{equation}
\pa_q \big(L_\mu U_\nu
k^{\mu\nu}\big)\!=L_\mu U_\nu\pa_q
k^{\mu\nu}\!= \Lb_{\,\mu} U_\nu\pa_s
k^{\mu\nu} -A_\mu U_\nu \pa_A
k^{\mu\nu}+U_\nu \,\partial_\mu k^{\mu\nu}\!\!,
\qquad U\in\mathcal{N}.\label{eq:wavecoordinateframedivergence}
\end{equation}
\end{lemma}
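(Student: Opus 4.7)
The plan is to verify both equalities by direct computation in Cartesian coordinates, exploiting the explicit form of the null frame vectors. For the first equality, I would first observe that the covariant components of the frame vectors are independent of the $\partial_q$ direction: writing $L_\mu = (-1, x^i/r)$, $\underline{L}_\mu = (-1, -x^i/r)$, and $(S_A)_\mu = (0, (S_A)^i(\omega))$, one checks that $\partial_t x^i/r = 0$ trivially and $\partial_r(x^i/r) = x^i/r^2 - x^i/r^2 = 0$ since $x^i/r$ is homogeneous of degree $0$; similarly for the angular components $(S_A)^i$, which depend only on angular variables. Hence $\partial_q L_\mu = \partial_q U_\nu = 0$ for all $U \in \mathcal{N}$, and the Leibniz rule gives $\partial_q(L_\mu U_\nu k^{\mu\nu}) = L_\mu U_\nu \partial_q k^{\mu\nu}$.

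For the second equality, the key step is the frame-resolution identity
\[
\partial_\mu F^{\mu\nu} = L_\mu \partial_q F^{\mu\nu} - \underline{L}_\mu \partial_s F^{\mu\nu} + A_\mu \partial_A F^{\mu\nu},
\]
which is precisely the formula recalled in the introduction. I would prove it by direct expansion: using $\partial_q = (\partial_r - \partial_t)/2$ and $\partial_s = (\partial_r + \partial_t)/2$, the first two terms on the right combine to yield $\partial_t F^{0\nu} + (x^i/r)\partial_r F^{i\nu}$, and the difference $\partial_i F^{i\nu} - (x^i/r)\partial_r F^{i\nu} = (\delta^{ij} - \omega^i \omega^j)\partial_j F^{i\nu}$ is recognised as $A_\mu \partial_A F^{\mu\nu}$ via the completeness relation $\sum_A (S_A)^i (S_A)^j = \delta^{ij} - \omega^i \omega^j$, valid because $\{S_1, S_2, \omega\}$ is an orthonormal basis of $\mathbb{R}^3$.

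Rearranging the identity to solve for $L_\mu \partial_q F^{\mu\nu}$ and contracting both sides with $U_\nu$ produces the second equality of the lemma. The computation is routine and presents no genuine obstacle; the only minor care required is to keep straight the difference between the raised and lowered components of the null frame, and to recognise that $\partial_A$ denotes the directional derivative $(S_A)^i \partial_i$ acting on the scalar components of $k^{\mu\nu}$ while $A_\mu$ denotes the covariant components of $S_A$.
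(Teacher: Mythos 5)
Your proposal is correct and follows essentially the same route as the paper, which simply invokes the null-frame resolution of the divergence $\pa_\mu F^{\mu\nu}=L_\mu\pa_q F^{\mu\nu}-\Lb_\mu\pa_s F^{\mu\nu}+A_\mu\pa_A F^{\mu\nu}$ together with the fact that $\pa_q$ (and $\pa_s$) annihilate the frame components, citing \cite{L4} for the details. You have merely written out the verification that the paper leaves to the reference — the degree-zero homogeneity of $L_\mu,U_\nu$ in $x$ and the completeness relation $\sum_A(S_A)^i(S_A)^j=\delta^{ij}-\omega^i\omega^j$ — and both checks are accurate.
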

\begin{proof} The proof follows expressing the divergence in a null frame
$\pa_\mu F^\mu\!=L_\mu\pa_q F^\mu -\Lb_{\,\mu} \pa_s F^\mu\!+A_\mu \pa_A F^\mu\!,$
and using that $\pa_q$ and $\pa_s$ commute with the frame, see \cite{L4}.
\end{proof}
Using this we get
\begin{lemma}\label{lem:wavecoordestimate0} We have
\begin{align}
 |\,\pa_q {H}_{LT}|
+|\,\pa_q  \trs H|
&\les |\overline{\pa} H|
+ |h|\, |\pa h|,
\label{eq:wavecoordinatenormaltangentialH}\\
 |\,\pa_q {H}_{1LT}|
+|\,\pa_q  \trs H_1 |
&\les |\overline{\pa} H_1|
+ |h|\, |\pa h|+M \big|\chi^\prime\big(\tfrac{r}{t+1}\big)\big| (1+t+r)^{-2},
\label{eq:wavecoordinatenormaltangential}
\end{align}
where $\chi^\prime(s)$, is a function supported when $1/4\leq s\leq 1/2$.
Moreover \eqref{eq:wavecoordinatenormaltangentialH} hold also for $h$ in place of $H$.
\end{lemma}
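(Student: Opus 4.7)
The plan is to apply the null-frame divergence identity of Lemma \ref{lem:divergencenullframe} to $k^{\mu\nu} = \widehat{H}^{\mu\nu}$ (respectively $\widehat{H}_1^{\mu\nu}$) and use the wave coordinate condition to replace the full divergence term by the quadratic expression $W^\nu(h,\partial h)$. Specifically, contracting the identity with $U_\nu$ for $U \in \mathcal{N}$ yields
\begin{equation*}
L_\mu U_\nu\,\partial_q \widehat{H}^{\mu\nu}
= \underline{L}_{\,\mu} U_\nu\,\partial_s \widehat{H}^{\mu\nu}
- A_\mu U_\nu\,\partial_A \widehat{H}^{\mu\nu}
+ U_\nu\,\partial_\mu \widehat{H}^{\mu\nu}.
\end{equation*}
The first two terms on the right involve only the tangential derivatives $\overline{\partial}\widehat{H}$, while the third, by the wave coordinate condition, equals $U_\nu W^\nu(h,\partial h)$ and is therefore bounded by $|h|\,|\partial h|$. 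Since $|\overline{\partial}\widehat{H}| \lesssim |\overline{\partial} H|$ (the trace subtraction is tangentially differentiated as well), we obtain the pointwise estimate
\begin{equation*}
\bigl|L_\mu U_\nu\,\partial_q \widehat{H}^{\mu\nu}\bigr|
\lesssim |\overline{\partial} H| + |h|\,|\partial h|.
\end{equation*}

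Next I would extract the two specific left-hand sides appearing in \eqref{eq:wavecoordinatenormaltangentialH} by algebraic manipulation in the null frame. Using $\widehat{H}^{\mu\nu} = H^{\mu\nu} - \tfrac{1}{2} m^{\mu\nu}\,\mathrm{tr}_m H$ together with the identities $m(L,T)=0$ for $T\in\mathcal{T}$ and $m(L,\underline{L}) = -2$, and the decomposition $\mathrm{tr}_m H = -H_{L\underline{L}} + \trs H$, a direct contraction gives
\begin{equation*}
L_\mu T_\nu\,\widehat{H}^{\mu\nu} = H_{LT}, \qquad
L_\mu \underline{L}_\nu\,\widehat{H}^{\mu\nu} = H_{L\underline{L}} + \mathrm{tr}_m H = \trs H.
\end{equation*}
Applying $\partial_q$ and invoking the previous display (once with $U=T$, once with $U=\underline{L}$) yields \eqref{eq:wavecoordinatenormaltangentialH}. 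The same statement also holds for $h$ in place of $H$ since the relation $\partial_\mu \widehat{h}^{\mu\nu} = W^\nu(h,\partial h)$ follows from the same argument up to terms quadratic in $h$.

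For \eqref{eq:wavecoordinatenormaltangential}, I would repeat the exact same procedure starting from \eqref{eq:wavehat1divergence} applied to $\widehat{H}_1^{\mu\nu}$. The only new ingredient is the additional contribution $-U_\nu\,\partial_\mu\widehat{H}_0^{\mu\nu} = -2U_0\,\chi'(r/(1+t))\,M(1+t)^{-2}$, which is immediately absorbed into the claimed last term since $|U_0| \lesssim 1$. The main obstacle, though very modest, is simply the bookkeeping of the two frame contractions that isolate exactly $H_{LT}$ and $\trs H$ on the left; once these two algebraic identities in the null frame are in hand, the rest of the proof is a one-line application of Lemma \ref{lem:divergencenullframe} and the wave coordinate condition.
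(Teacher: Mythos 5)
Your proposal is correct and follows essentially the same route as the paper: apply the null-frame divergence identity of Lemma \ref{lem:divergencenullframe} to $\widehat{H}$ (resp.\ $\widehat{H}_1$ via \eqref{eq:wavehat1divergence}), bound the divergence by the wave coordinate condition, and recover $H_{LT}$ and $\trs H$ by choosing $U=T$ and $U=\underline{L}$, using $L_\mu T_\nu\widehat{H}^{\mu\nu}=H_{LT}$ and $L_\mu\underline{L}_\nu\widehat{H}^{\mu\nu}=\trs H$. The frame algebra and the treatment of the $\widehat{H}_0$ correction are both right, so nothing further is needed.
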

\begin{proof}
It follows from the previous lemmas that
$$
|\pa_q \widehat{H}_{1\,LU}|\lesssim |\overline{\pa} H_1|+|h|\,|\pa h|
+M\big|\chi^\prime\big(\tfrac{r}{t+1}\big)\big|(1+t+r)^{-2}.
$$
Picking $U=T$ respectively $U=\underline{L}$ gives
\eqref{eq:wavecoordinatenormaltangential}.
\end{proof}
\begin{proposition}\label{prop:wavecoorddecay}
With $H_{1UV}=H_1^{\mu\nu}U_\mu V_\nu$ and $\trs H_1=\delta^{AB} H_{1AB}$   we have for $T=\{L,S_1,S_2\}$
\begin{align}
 |\,\pa_q {H}_{1LT}|
+|\,\pa_q  \trs H_1 | &\les
\varepsilon(1+t+r)^{-2+2\delta}(1\!+q_+)^{-\gamma},
\label{eq:wavecoordinatederivative}
\\
|{H}_{1LT}| \!+|\trs{H}_{1}|
\les \varepsilon &
(1\!+t\!+r)^{-1-\gamma+2\delta}\!\! +
\varepsilon(1\!+t)^{-2+2\delta}(1\!+q_-)
\les \varepsilon(1\!+t\!+r)^{-1-\gamma+2\delta} (1\!+q_-)^{\gamma}.
\label{eq:wavecoordinatefunction}
\end{align}
The same estimates hold for $H$ in place of $H_1$ if $\gamma$ is replaced by $2\delta$,
\end{proposition}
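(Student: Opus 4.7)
The strategy is to first establish the pointwise derivative bound \eqref{eq:wavecoordinatederivative} directly from the wave coordinate identity in Lemma \ref{lem:wavecoordestimate0}, and then integrate along outgoing null lines to deduce the function estimate \eqref{eq:wavecoordinatefunction}.

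For the derivative bound, I plan to start from
\[
|\pa_q H_{1LT}| + |\pa_q \trs H_1| \lesssim |\overline{\pa} H_1| + |h|\,|\pa h| + M\,|\chi'(r/(t+1))|\,(1+t+r)^{-2}
\]
and bound each term by $C\varepsilon(1+t+r)^{-2+2\delta}(1+q_+)^{-\gamma}$ separately. The tangential derivative term is immediate from \eqref{eq:hormanderweakdecayderh1} together with $(1+t)\leq (1+t+r)$. The $M$ cut-off term is supported where $t/4\leq r\leq t/2$, so $(1+t+r)\sim (1+t)$ and $q_+=0$ there, and $M\leq \varepsilon$ absorbs it. The main work is the quadratic $|h|\,|\pa h|$: after splitting $h=h^0+h^1$, the pieces involving $h^0$ use the explicit bounds $|h^0|\lesssim M/(1+t+r)$ and $|\pa h^0|\lesssim M/(1+t+r)^2$ valid on $\{r\geq t/2\}$, together with $M\leq\varepsilon$ and Proposition \ref{prop:weakdecayhormander}; the principal piece $h^1\,\pa h^1$ is controlled by appealing to the refined null structure of $W^\nu(h,\pa h)$ in \eqref{eq:wavehat1divergence} (as in \cite{LR2,LR3,L4}), which upgrades the bound $|W^\nu|\lesssim |h|\,|\pa h|$ to $|W^\nu|\lesssim |h|_{LT}\,|\pa h| + |h|\,|\overline{\pa} h|$, each term enjoying the extra decay from an $LT$ component or a tangential derivative. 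Extracting this structure is the hard step, since a naive pointwise product of the bounds of Proposition \ref{prop:weakdecayhormander} would lose a factor near the outgoing light cone.

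For the function bound, I will integrate $\pa_q H_{1LT}$ (and $\pa_q\trs H_1$) along the outgoing null lines $s=(t+r)/2=\mathrm{const}$, on which $(1+t+r)=1+2s$ is preserved and $q$ varies. In the exterior $r>t$, integration from the initial hypersurface at $q'=2s$ (where asymptotic flatness yields $|H_{1LT}(0,2s)|\lesssim \varepsilon(1+2s)^{-1-\gamma}$) inward to $q'=q_+$ gives
\[
|H_{1LT}(t,r)|\lesssim \varepsilon(1+2s)^{-1-\gamma} + \varepsilon(1+2s)^{-2+2\delta}\!\int_{q_+}^{2s}(1+q')^{-\gamma}dq'\lesssim \varepsilon(1+t+r)^{-1-\gamma+2\delta},
\]
using $\int_0^{2s}(1+q')^{-\gamma}dq'\lesssim (1+2s)^{1-\gamma}$ as $\gamma<1$. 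In the interior $r<t$, integration from the light cone $q'=0$ (where the exterior bound just proved applies) down to $q'=-q_-$ has integrand at most $\varepsilon(1+2s)^{-2+2\delta}$ (since $q'_+=0$ throughout) and contributes an additional $\varepsilon(1+q_-)(1+t+r)^{-2+2\delta}$; combined with $(1+t+r)\sim(1+t)$ in the interior, this produces the two-term form of \eqref{eq:wavecoordinatefunction}. The final inequality follows from $(1+q_-)\leq (1+t+r)$ and $\gamma<1$. The corresponding estimates for $H$ in place of $H_1$ are obtained by the same argument applied to \eqref{eq:wavecoordinatenormaltangentialH}, with the $M$ cut-off term absent and $\gamma$ replaced by $2\delta$, this being the weaker decay provided by Proposition \ref{prop:weakdecayhormander} for $h$ itself.
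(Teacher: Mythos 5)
Your overall architecture (derivative bound from Lemma \ref{lem:wavecoordestimate0}, then integration in the $t-r$ direction from the data) is the paper's, and your treatment of the tangential, cut-off and $h^0$ terms, as well as the integration step, is fine. The gap is in what you yourself flag as ``the hard step'': the quadratic term $|h|\,|\pa h|$. You propose to control it by upgrading $|W^\nu(h,\pa h)|\les |h|\,|\pa h|$ to $|W^\nu|\les |h|_{L\mathcal T}\,|\pa h|+|h|\,|\overline{\pa}h|$. No such refined null structure for $W^\nu$ is established in the paper (only the crude bound $|W^\nu|\les|h|\,|\pa h|$ appears), it is not obviously true for the particular quadratic error in the wave coordinate condition, and your proof stands or falls on it as written. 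So this step is a genuine unproved assertion, not a routine citation.

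The point you missed is that no null structure is needed here: the quadratic term is already quadratically small in $\varepsilon$ and one should multiply the \emph{Klainerman--Sobolev} weak decay bounds \eqref{eq:weakdecayh1}--\eqref{eq:weakdecayderh1} (with $\gamma$ replaced by $\delta$ for $h$), not the H\"ormander bounds of Proposition \ref{prop:weakdecayhormander}. In the Klainerman--Sobolev estimates the $(1+|t-r|)$ weights are reciprocal between $|h|$ and $|\pa h|$ (a gain of $(1+|q|)^{1/2}$ on $h$ against a loss of $(1+|q|)^{-1/2}$ on $\pa h$ in the interior, and $(1+|q|)^{-\delta}$ against $(1+|q|)^{-1-\delta}$ in the exterior), so they cancel in the product and one gets directly
\begin{equation*}
|h|\,|\pa h|\les \varepsilon^2(1+t+r)^{-2+2\delta}(1+q_+)^{-1-2\delta}\les \varepsilon(1+t+r)^{-2+2\delta}(1+q_+)^{-\gamma},
\end{equation*}
since $\gamma<1<1+2\delta$. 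The factor you correctly observe is lost ``near the outgoing light cone'' when multiplying the H\"ormander bounds simply does not appear with the other set of weak decay estimates. (The paper also first disposes of the region $|t-r|>t/8$, where both claimed estimates follow immediately from the weak decay bounds because $1+|q|\sim 1+t+r$ there, which further simplifies the bookkeeping.) With this replacement your argument closes; as it stands, it does not.
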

\begin{proof} The proof follows \cite{LR3,L4}.  \eqref{eq:wavecoordinatefunction}
follows from integrating
\eqref{eq:wavecoordinatederivative}
 in the $t-r$ direction from initial data.
 When $|t-r|>t/8$ say the estimate ar estimates follow from Proposition \ref{prop:weakdecayhormander} so we may assume that $|t-r|<t/8$.
It then follows from Lemma \ref{lem:wavecoordestimate0} and Proposition \ref{prop:weakdecayhormander}
\beq
|\pa_q {H}_{1LT}|
+|\pa_q  \trs H_1 |\lesssim |\overline{\pa} H_1|+|h|\,|\pa h|
\lesssim \varepsilon(1+t+r)^{-2+2\delta}(1\!+q_+)^{-\gamma}
+\varepsilon^2(1+t+r)^{-2+2\delta}(1\!+q_+)^{-1-2\delta}.
\eq
\end{proof}

\subsubsection{The leading order behavious of the inhomogeneous term towards null infinity}
The inhomogeneous term in Einstein's equations can be written
\begin{align}
&F_{\mu\nu}= P(\pa_{\mu} h,\pa_{\nu} h) + Q_{\mu\nu}(\pa h,\pa
h)+G_{\mu\nu}(h)(\pa h,\pa h),\nn\\
&P(h, k) =
\frac{1}{2}  m^{\alpha\alpha^\prime}m^{\beta\beta^\prime} \
h_{\alpha\beta}\,  k_{\alpha^\prime\beta^\prime}
-\frac{1}{4}  m^{\alpha\alpha^\prime}
h_{\alpha\alpha^\prime} \,  m^{\beta\beta^\prime}
h_{\beta\beta^\prime}.
\end{align}
where
$G_{\mu\nu}(h)(\pa h,\pa h)$ is cubic
$$
\big| G_{\mu\nu}(h)(\pa h,\pa h)\big| \lesssim |h|\, |\pa h|^2.
$$
and $ Q_{\mu\nu}(\pa h,\pa
h)$ satisfy the standard null condition and hence
\beq \label{eq:nullcondest}
|Q(\pa h,\pa k)|\les |\overline{\pa} h|\,|\pa k|+|\pa h|\, |\overline{\pa} k|.
\eq
The main term $P(\pa_{\mu} h,\pa_{\nu} h)$ can be further analyzed as follows. First we note that
\beq\label{eq:PPnull}
\big| P(\pa_\mu h,\pa_\nu k)- L_{\mu}L_\nu P(\pa_q h,\pa_q h)\big|\les |\overline{\pa} h| \,|\pa k|
+|\pa h|\,|\overline{\pa} k|.
\eq

Expressing $P(h,k)=P_{\mathcal N}(h,k)$ in a null frame we have
\begin{multline}\label{eq:nullframeP}
P_\mathcal{N\,}(h,k)
=-\frac{1}{8}\big({h}_{LL}
{k}_{\underline{L}\underline{L}}
+{h}_{\underline{L}\underline{L}}
{k}_{{L}{L}}\big) -\frac{1}{4}\delta^{CD}\delta^{C^\prime
D^\prime}\big(2{h}_{CC^\prime}{k}_{DD^\prime}-
{h}_{CD} {k}_{C^\prime D^\prime}\big)\\
+\frac{1}{4}\delta^{CD}\big(2{h}_{C
L}{k}_{D\underline{L}} +2{h}_{C
\underline{L}}{k}_{D{L}}- {h}_{CD}
{k}_{L\underline{L}}-{h}_{L\underline{L}}
{k}_{CD} \big).
\end{multline}
Taking into account the wave coordinate condition this reduces in leading order to
 $P_{\!\mathcal{N}}(\pa_q h,\pa_q h)\sim P_{\!\mathcal{S}}(\pa_q h,\pa_q h)$,
 where
\beq
P_{\mathcal{S}} (D,E)= -\widehat{D}_{AB}\, \widehat{E}^{AB}\!/2,\quad A,B\in\mathcal{S},\quad
\text{where} \quad \widehat{D}_{AB}=D_{AB}-\delta_{AB}\trs D/2,\quad
\trs D=\delta^{AB}D_{AB}.
\eq
In fact, by \eqref{eq:nullframeP} we have
\beq\label{eq:tanP}
\big|P_{\mathcal N}(h,k)-P_{\mathcal S}(h,k)|\big|\les \big(|h\,|_{L\mathcal
T}+|\trs h|\big)|k| +|h\,|\big(|k\,|_{L\mathcal
T}+|\trs k_{}|\big),\quad\text{where}\quad |h|_{L\mathcal
T}=|h_{LL}|+|h_{LS_1}|+|h_{LS_2}|.
\eq
Also using \eqref{eq:wavecoordinatenormaltangentialH} and that fact the $H=-h+O(h^2)$ we get
\beq \big| P_{\mathcal N}(\pa_q h,\pa_q h)-P_{\mathcal S}(\pa_q h,\pa_q h)\big|
\les (|\overline{\pa} h|+|h||\pa h|) |\pa h|.
\eq
Summing up we have shown
\begin{lemma} \label{lemma:Pslash}
Let
\begin{equation}\label{eq:Pmunudef}
 {\slash \!\!\!\!} P_{\mu\nu}(\pa h,\pa k)=\overline{\chi}\big(\tfrac{\langle \,r-t\,\rangle }{t+r}\big)
 L_{\mu}L_\nu P_\mathcal{S\,}(\pa_q h,\pa_q h),
 \end{equation}
 where $\overline{\chi}_{\!}\!\in\!  C_0^\infty\!\!$ satisfies
 $\overline{\chi}({q})_{\!}\!=\!0$, when
$|{q}|\!\geq \!3_{\!}/4$ and $\overline{\chi}({q})_{\!}\!=\!1$, when $|{q}|\!\leq\! 1_{\!}/2$. Here $<q>=(1+|q|^2)^{1/2}$.
Then
\beq\big|F_{\mu\nu}(h)(\pa h,\pa h)- {\slash \!\!\!\!} P_{\mu\nu}(\pa h,\pa h) \big|
\les |\overline{\pa} h| |\pa h|+| h| |\pa h|^2\!\!, \qquad \text{when}\quad |q|\leq 1/2.
\label{eq:Festimateh}
\eq
\end{lemma}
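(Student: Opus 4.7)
The plan is to assemble the inequality by tracking the three contributions in the decomposition of $F_{\mu\nu}(h)(\pa h,\pa h)$ displayed just above the lemma, and then reducing the $P$-term to its $P_{\mathcal S}$-projection in the prescribed wave zone. First I would dispose of the cubic and null-form contributions: by definition $G_{\mu\nu}(h)(\pa h,\pa h)$ is bounded by $|h||\pa h|^2$, and by the null estimate \eqref{eq:nullcondest} the term $Q_{\mu\nu}(\pa h,\pa h)$ is bounded by $|\overline{\pa} h||\pa h|$. Both are of the form claimed on the right-hand side.

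It then remains to bound $P(\pa_\mu h,\pa_\nu h) - {\slash \!\!\!\!} P_{\mu\nu}(\pa h,\pa h)$ when $|q|\leq 1/2$, where the cutoff $\overline{\chi}(\langle q\rangle/(t+r))$ equals $1$ (note that $|q|\leq 1/2$ implies $\langle q\rangle/(t+r)\leq 1/2$ for $t+r$ bounded below; for $t+r$ small everything is harmless). In this region, ${\slash \!\!\!\!} P_{\mu\nu}=L_\mu L_\nu P_{\mathcal S}(\pa_q h,\pa_q h)$. I would first replace $P(\pa_\mu h,\pa_\nu h)$ by $L_\mu L_\nu P(\pa_q h,\pa_q h)$ using \eqref{eq:PPnull}, which costs only $|\overline{\pa} h||\pa h|$. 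Then, writing $P=P_{\mathcal N}$ in the null frame via \eqref{eq:nullframeP}, I would pass from $P_{\mathcal N}(\pa_q h,\pa_q h)$ to $P_{\mathcal S}(\pa_q h,\pa_q h)$ using the algebraic identity \eqref{eq:tanP}, which bounds the difference by $(|\pa h|_{L\mathcal T}+|\trs\pa h|)|\pa h|$.

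The remaining step, and what I expect to be the only mildly delicate point, is to control those ``bad'' components $(\pa h)_{L\mathcal T}$ and $\trs\pa h$ by genuinely good quantities. Here I would invoke Lemma \ref{lem:wavecoordestimate0} applied to $h$ (the statement in the lemma includes the variant for $h$ in place of $H$), giving
\[
|\pa_q h_{LT}|+|\pa_q \trs h|\lesssim |\overline{\pa} h|+|h||\pa h|,
\]
while the tangential components of $\pa h$ are already controlled directly by $|\overline{\pa} h|$ via \eqref{eq:derbytransder}. Combining these, $(|\pa h|_{L\mathcal T}+|\trs\pa h|)|\pa h|\lesssim |\overline{\pa} h||\pa h|+|h||\pa h|^2$, which is exactly of the required form. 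Collecting the three estimates yields \eqref{eq:Festimateh}.
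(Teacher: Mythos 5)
Your proposal is correct and follows essentially the same route as the paper: split $F_{\mu\nu}$ into $P+Q+G$, dispose of $Q$ and $G$ by \eqref{eq:nullcondest} and the cubic bound, reduce $P(\pa_\mu h,\pa_\nu h)$ to $L_\mu L_\nu P_{\mathcal N}(\pa_q h,\pa_q h)$ via \eqref{eq:PPnull}, pass to $P_{\mathcal S}$ via \eqref{eq:tanP}, and absorb the resulting $|\pa_q h|_{L\mathcal T}+|\trs\pa_q h|$ factors with the wave coordinate estimate \eqref{eq:wavecoordinatenormaltangentialH} of Lemma \ref{lem:wavecoordestimate0}. The only cosmetic difference is your aside invoking \eqref{eq:derbytransder}, which is not needed since the wave coordinate lemma already controls exactly the bad components appearing in \eqref{eq:tanP}.
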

Using \eqref{eq:hormanderweakdecayderh1}-\eqref{eq:hormanderweakdecayh1} and
\eqref{eq:weakdecayh1}-\eqref{eq:weakdecayderh1} we obtain
\begin{lemma} With notation as in the previous lemma we have
\beq\big|F_{\mu\nu}(h)(\pa h,\pa h)- {\slash \!\!\!\!} P_{\mu\nu}(\pa h,\pa h) \big|
\les\frac{\varepsilon^2}{(1\!+t\!+r)^{3-4\delta}(1\!+|q|)(1\!+q_+)^{4\delta}}.
\label{eq:Festimate}
\eq
\end{lemma}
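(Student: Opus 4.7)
The plan is to split spacetime into two regions according to the cutoff $\overline{\chi}$ appearing in ${\slash \!\!\!\!} P_{\mu\nu}$, and to use the previous lemma \eqref{eq:Festimateh} where it applies and the trivial bound $|F|\lesssim (1+|h|)|\pa h|^2$ where it does not. In the exterior region $\{\overline{\chi}=0\}$, equivalently $\langle q\rangle\geq 3(t+r)/4$, ${\slash \!\!\!\!} P_{\mu\nu}$ vanishes identically, so one only has to estimate $|F|$. In the complementary region $\{\overline{\chi}\neq 0\}$ one has access to the refined bound $|F-{\slash \!\!\!\!} P|\lesssim |\overline{\pa} h||\pa h|+|h||\pa h|^2$; the proof of \eqref{eq:Festimateh} applies essentially unchanged throughout the support of $\overline{\chi}$ (the use of $|q|\le 1/2$ there only serves to place one inside the region where $\overline{\chi}\equiv 1$, and the two supplementary facts needed in its derivation -- the null decomposition of $P$ and the wave-coordinate improvement for the $L\mathcal T$ and trace components -- are valid irrespective of the size of $|q|$).

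In the exterior region, $1+|q|\gtrsim 1+t+r$ by the defining inequality, so the Hörmander-type decay \eqref{eq:hormanderweakdecayderh1} gives
\[
|\pa h|^2 \lesssim \frac{\varepsilon^2(1+t)^{4\delta}}{(1+t+r)^2(1+|q|)^2(1+q_+)^{2\gamma}} \lesssim \frac{\varepsilon^2}{(1+t+r)^{3-4\delta}(1+|q|)(1+q_+)^{2\gamma}},
\]
after using $(1+t)^{4\delta}\le(1+t+r)^{4\delta}$ and absorbing one factor of $(1+|q|)$ against $(1+t+r)$. Since $\gamma>8\delta$, in particular $2\gamma\geq 4\delta$, and this is bounded by the right-hand side of \eqref{eq:Festimate}.

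In the complementary region, combining the pointwise bound on $|\pa h|$ with $|\overline{\pa} h|\lesssim \varepsilon(1+t)^{2\delta}/((1+t+r)^2(1+q_+)^\gamma)$ from \eqref{eq:hormanderweakdecayderh1} yields
\[
|\overline{\pa} h||\pa h| \lesssim \frac{\varepsilon^2(1+t+r)^{4\delta}}{(1+t+r)^3(1+|q|)(1+q_+)^{2\gamma}},
\]
which directly matches \eqref{eq:Festimate} after using $2\gamma\geq 4\delta$. In both regions the cubic piece $|h||\pa h|^2$ carries an extra factor $|h|\lesssim \varepsilon$ (by \eqref{eq:hormanderweakdecayh1} with $\gamma$ replaced by $2\delta$, noting $2\delta<1$), so it is strictly smaller than the already handled quadratic contribution.

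No deep obstacle arises: the whole argument is a bookkeeping exercise in powers of $(1+t+r)$, $(1+|q|)$, and $(1+q_+)$ across the two regions. The one point worth keeping in mind is the mild extension of \eqref{eq:Festimateh} from the stated range $|q|\le 1/2$ to the full inner region $\{\overline{\chi}\neq 0\}$, but, as noted above, this extension is immediate from the existing derivation.
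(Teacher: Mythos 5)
Your two--region splitting --- the crude bound $|F|\lesssim|\partial h|^2$ together with $1+|q|\gtrsim 1+t+r$ where $\overline{\chi}$ vanishes, and the null--structure bound \eqref{eq:Festimateh} where it does not --- is the right skeleton, and your treatment of the exterior region and of the quadratic piece $|\overline{\partial}h|\,|\partial h|$ is correct (note only that the $(1+q_+)^{-\gamma}$ decay in \eqref{eq:hormanderweakdecayh1}--\eqref{eq:hormanderweakdecayderh1} is stated for $h^1$; for $h=h^0+h^1$ the exponent is $2\delta$, which still lands exactly on the $(1+q_+)^{-4\delta}$ of \eqref{eq:Festimate}). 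One smaller point: in the transition region $1/2<\langle q\rangle/(t+r)<3/4$ the comparison with ${\slash\!\!\!\!}P_{\mu\nu}=\overline{\chi}\,L_\mu L_\nu P_{\mathcal S}(\partial_q h,\partial_q h)$ produces the extra term $(1-\overline{\chi})\,|P_{\mathcal S}(\partial_q h,\partial_q h)|\lesssim|\partial h|^2$, which is \emph{not} of the form $|\overline{\partial}h|\,|\partial h|+|h|\,|\partial h|^2$, so \eqref{eq:Festimateh} does not literally extend to all of $\{\overline{\chi}\neq0\}$. This is harmless only because $1+|q|\gtrsim 1+t+r$ there, so the term is absorbed exactly as in your exterior computation --- but that is an argument you must make, not an ``immediate'' extension.

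The genuine gap is the cubic term. The claim that $|h|\,|\partial h|^2$ ``is strictly smaller than the already handled quadratic contribution'' fails where $\overline{\chi}=1$: the handled quadratic contribution there is $|\overline{\partial}h|\,|\partial h|$, which carries a tangential derivative, whereas $|h|\,|\partial h|^2\lesssim\varepsilon\,|\partial h|^2$ does not, and near the light cone $|\partial h|^2$ exceeds $|\overline{\partial}h|\,|\partial h|$ by a factor of order $(1+t+r)/(1+|q|)$. Estimating the cubic term with the H\"ormander bounds alone gives $\varepsilon^3(1+t)^{6\delta}(1+t+r)^{-3}(1+|q|)^{-2}(1+q_+)^{-6\delta}$, which exceeds the right-hand side of \eqref{eq:Festimate} by roughly $\varepsilon(1+t+r)^{2\delta}(1+|q|)^{-1}$, and this is unbounded when $1+|q|\lesssim(1+t+r)^{2\delta}$. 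This is precisely why the paper invokes \eqref{eq:weakdecayh1}--\eqref{eq:weakdecayderh1} in addition to \eqref{eq:hormanderweakdecayh1}--\eqref{eq:hormanderweakdecayderh1}: very close to the light cone in the interior one should instead use the Klainerman--Sobolev bounds $|h|\lesssim\varepsilon(1+t)^{\delta}(1+t+r)^{-1}(1+|q|)^{1/2}$ and $|\partial h|\lesssim\varepsilon(1+t)^{\delta}(1+t+r)^{-1}(1+|q|)^{-1/2}$, which give $|h|\,|\partial h|^2\lesssim\varepsilon^3(1+t)^{3\delta}(1+t+r)^{-3}(1+|q|)^{-1/2}$ and beat the target exactly when $1+|q|\lesssim(1+t+r)^{2\delta}$ (in the exterior the Klainerman--Sobolev bounds suffice outright). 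Combining the two families of weak decay estimates on their respective ranges of $|q|$ closes the cubic estimate; without this extra splitting it is not controlled.
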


\subsubsection{The leading order of the geometric wave operator towards null infinity}
Expanding in a null frame as in the proof of Lemma \ref{lem:divergencenullframe} using
\eqref{eq:pabyZ} we get
\begin{lemma} We have
\beq\label{eq:seconddernullframe}
\big|k^{\alpha\beta}\pa_\alpha\pa_\beta \phi\big|\les
\Big(\frac{| k_{LL}| }{1+|q|}
 + \frac{|k|  }{1+t+r}\Big)\sum_{|K|\leq 1}|\pa Z^K \phi|.
\eq
\end{lemma}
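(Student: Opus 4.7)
\medskip

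\noindent\textbf{Proof proposal.} The plan is to expand $k^{\alpha\beta}\partial_\alpha\partial_\beta\phi$ in the null frame $\mathcal{N}=\{L,\underline{L},S_1,S_2\}$, isolating the single component for which no tangential derivative appears, and then exploit the standard inequality \eqref{eq:pabyZ}.  This is the same strategy used to prove Lemma \ref{lem:divergencenullframe}.

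First I would set up the dual coframe. Since $m(L,\underline{L})=-2$ and $m(L,L)=m(\underline{L},\underline{L})=0$, the coframe dual to $\{L,\underline{L},S_1,S_2\}$ is
\[
\omega^{\underline{L}}_\alpha=-L_\alpha/2,\qquad \omega^L_\alpha=-\underline{L}_\alpha/2,\qquad \omega^A_\alpha=(S_A)_\alpha,
\]
so that $\partial_\alpha=\omega^U_\alpha\, \partial_U$, summing over $U\in\mathcal{N}$. Substituting this into $k^{\alpha\beta}\partial_\alpha\partial_\beta\phi$ and using the Leibniz rule gives
\[
k^{\alpha\beta}\partial_\alpha\partial_\beta\phi
=\sum_{U,V\in\mathcal{N}}\bigl(k^{\alpha\beta}\omega^U_\alpha\omega^V_\beta\bigr)\partial_U\partial_V\phi
\;+\;E,
\]
where $E$ collects the commutator terms in which one $\partial$ falls on a coframe coefficient $\omega^U_\alpha$. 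Since the coframe coefficients $\omega^U$ are bounded with $|\partial\omega^U|\lesssim(1+t+r)^{-1}$, one has $|E|\lesssim |k||\partial\phi|/(1+t+r)$, which is already of the desired form.

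Next I would identify the one ``bad'' component. By definition of $\omega^{\underline{L}}$,
\[
k^{\alpha\beta}\omega^{\underline{L}}_\alpha\omega^{\underline{L}}_\beta
=\tfrac{1}{4}\,k^{\alpha\beta}L_\alpha L_\beta
=\tfrac{1}{4}\,k_{LL},
\]
matching the paper's convention $k_{UV}=k^{\alpha\beta}U_\alpha V_\beta$ for upper-index tensors. The corresponding second derivative $\partial_{\underline{L}}^2\phi=(\partial_t-\partial_r)^2\phi$ satisfies $|\partial_{\underline{L}}^2\phi|\lesssim|\partial^2\phi|$. Applying \eqref{eq:pabyZ} to each component of $\partial\phi$, and commuting $Z$ with $\partial$ (which contributes at most a constant multiple of $\partial\phi$), yields
\[
|\partial^2\phi|\lesssim \frac{1}{1+|q|}\sum_{|K|\leq 1}|\partial Z^K\phi|.
\]
Together these give the $|k_{LL}|/(1+|q|)$ contribution.

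Finally, for every pair $(U,V)\neq(\underline{L},\underline{L})$ at least one of the frame vectors lies in $\mathcal{T}=\{L,S_1,S_2\}$, so $\partial_U\partial_V\phi$ involves a tangential derivative $\overline{\partial}$. Applying \eqref{eq:pabyZ} to $\partial\phi$ and again commuting yields
\[
|\partial_U\partial_V\phi|\lesssim \frac{1}{1+t+r}\sum_{|K|\leq 1}|\partial Z^K\phi|,
\]
and the associated coefficient $k^{\alpha\beta}\omega^U_\alpha\omega^V_\beta$ is bounded in absolute value by $C|k|$. Summing over the $\mathcal{N}\times\mathcal{N}\setminus\{(\underline{L},\underline{L})\}$ components and combining with the bound for $E$ produces the $|k|/(1+t+r)$ contribution, completing the proof. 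The only point requiring any care is the bookkeeping of the coframe identification $\omega^{\underline{L}}\leftrightarrow L^\flat$ that converts the bad-direction coefficient into $k_{LL}$; everything else is routine.
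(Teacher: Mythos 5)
Your proposal is correct and is exactly the argument the paper intends: the paper's proof consists of the single instruction ``expand in a null frame as in the proof of Lemma \ref{lem:divergencenullframe} and use \eqref{eq:pabyZ}'', and you have carried out precisely that expansion, isolating the $\partial_{\underline{L}}\partial_{\underline{L}}$ term (whose coefficient is $\tfrac14 k_{LL}$) and bounding all remaining terms, which contain at least one tangential derivative, via \eqref{eq:pabyZ}. The only cosmetic point is that derivatives of the frame coefficients are $O(1/r)$ rather than $O(1/(1+t+r))$ near the origin, but for $r\lesssim 1+t$ one has $1+|q|\sim 1+t+r$ and the lemma follows trivially from $|k|\,|\partial^2\phi|$, so the frame decomposition is only needed where $r\sim 1+t+r$.
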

As a consequence we get
\begin{lemma} \label{lem:curvedwaveminuswaveasympschwarz} We have
\beq
\big|(\Boxr_g -\Box_0)\phi\big|\les
\frac{\varepsilon(1+q_+)^{-\gamma}}{(1+t+r)^{1+\gamma-2\delta}(1+|q|)^{1-\gamma}}\sum_{|K|\leq 1}|\pa Z^K \phi| ,
\eq
where the asymptotic Schwarzschild wave operator is given by
\beq\label{eq:box0def}
\Box_{\,0}=\big(m^{\alpha\beta}+H_0^{\alpha\beta}\big)\pa_\alpha\pa_\beta,
\quad\text{where}\quad
H_0^{\alpha\beta}=-\tfrac{M}{r}{\chi}\big(\tfrac{r}{1+t}\big)\,
\delta^{\alpha\beta}.
\eq
\end{lemma}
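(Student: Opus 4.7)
The plan is to reduce the statement to the pointwise estimates on $H_1$ already established in Proposition \ref{prop:weakdecayhormander} and Proposition \ref{prop:wavecoorddecay}, via the null-frame second-derivative bound \eqref{eq:seconddernullframe}. First I would write
\[
\Boxr_g - \Box_0 = H_1^{\alpha\beta}\pa_\alpha\pa_\beta,
\]
which follows directly from $g^{\alpha\beta}=m^{\alpha\beta}+H^{\alpha\beta}$, $H=H_0+H_1$, and the definition \eqref{eq:box0def} of $\Box_0$. Applying the general bound \eqref{eq:seconddernullframe} to $k=H_1$ then reduces the task to showing
\[
\Big(\frac{|H_{1\,LL}|}{1+|q|}+\frac{|H_1|}{1+t+r}\Big)
\;\lesssim\;\frac{\varepsilon(1+q_+)^{-\gamma}}{(1+t+r)^{1+\gamma-2\delta}(1+|q|)^{1-\gamma}}.
\]

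For the first term, the key input is the sharp wave-coordinate estimate \eqref{eq:wavecoordinatefunction}, applied with $T=L$ so that the tangential component $H_{1\,LL}$ is controlled by $\varepsilon(1+t+r)^{-1-\gamma+2\delta}(1+q_-)^{\gamma}$. The small observation to use here is the identity
\[
(1+q_-)^{\gamma}=(1+|q|)^{\gamma}(1+q_+)^{-\gamma},
\]
(checked separately in the two regions $q\geq 0$ and $q<0$, using $q_+q_-=0$). Dividing by $(1+|q|)$ then produces exactly the target weight.

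For the second term I would use the weak decay estimate \eqref{eq:hormanderweakdecayh1} for $H_1$, giving
\[
\frac{|H_1|}{1+t+r}\lesssim \frac{\varepsilon(1+t)^{2\delta}}{(1+t+r)^{2}(1+q_+)^{\gamma}},
\]
and then bound $(1+t)^{2\delta}\leq(1+t+r)^{2\delta}$ and $(1+|q|)^{1-\gamma}\leq(1+t+r)^{1-\gamma}$ (using $|q|\leq t+r$) to absorb the remaining factors into the target right-hand side. This step is entirely routine; there is no serious obstacle, the lemma being essentially a bookkeeping consequence of the sharper decay of the $LL$ component dictated by the wave coordinate condition versus the weaker generic decay of $H_1$.
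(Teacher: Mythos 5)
Your proposal is correct and follows exactly the paper's argument: identify $\Boxr_g-\Box_0=H_1^{\alpha\beta}\pa_\alpha\pa_\beta$, apply \eqref{eq:seconddernullframe}, and then control $|H_{1\,LL}|/(1+|q|)$ via the wave-coordinate estimate \eqref{eq:wavecoordinatefunction} and $|H_1|/(1+t+r)$ via the weak decay estimate \eqref{eq:hormanderweakdecayh1}. The weight manipulations you spell out (the identity $(1+q_-)^{\gamma}=(1+|q|)^{\gamma}(1+q_+)^{-\gamma}$ and the absorption of $(1+t)^{2\delta}(1+|q|)^{1-\gamma}$ into powers of $(1+t+r)$) are exactly the bookkeeping the paper leaves implicit, and they check out.
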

\begin{proof} We apply \eqref{eq:seconddernullframe} to $H_1^{\alpha\beta}\pa_\alpha\pa_\beta \phi$ using \eqref{eq:wavecoordinatefunction} and \eqref{eq:hormanderweakdecayh1}:
\begin{equation*}
\big|H_1^{\alpha\beta}\pa_\alpha\pa_\beta \phi\big|\les
\Big(\frac{| H_{1\,LL}| }{1+|q|}
 + \frac{|H_1|  }{1+t+r}\Big)\sum_{|K|\leq 1}|\pa Z^K \phi|
 \les \frac{C\varepsilon}{(1+t+r)^{1+\gamma-2\delta}(1+|q|)^{1-\gamma}(1+q_+)^\gamma}\sum_{|K|\leq 1}|\pa Z^K \phi|
\end{equation*}
\end{proof}
In spherical coordinates this takes the form
Here
\begin{equation}\label{eq:Box0insphericalcoord}
\Box_0\phi=\Big(-\pa_t^2+\triangle_x
-\tfrac{M}{r}{\chi}\big(\tfrac{r}{1+t}\big)\big(\pa_t^2+\triangle_x\big)\Big)\phi
=\frac{1}{r}\Big(-\pa_t^2+\pa_r^2-\tfrac{M}{r}{\chi}\big(\tfrac{r}{1+t}\big)(\pa_t^2+\pa_r^2)\Big)(r\phi)
+\Big(1-\tfrac{M}{r}{\chi}\big(\tfrac{r}{1+t}\big)\Big) \frac{1}{r^2}\triangle_\omega \phi
\end{equation}

\subsubsection{The leading order of the metric towards space like infinity.} Following \cite{LR3} we have defined $h^0_{\alpha\beta}=-H_0^{\alpha\beta}$ to be a function that picks up the leading behavior of the initial data at space like infinity
\beq \label{eq:h0def}
h^0_{\mu\nu}=\tfrac{M}{r}{\chi}\big(\tfrac{r}{1+t}\big)\,
\delta_{\mu\nu}
\eq
However it would perhaps have bene more natural to define it to be a solution of the homogeneous wave $\Box h^0=0$ (or even better $\Box_0 h^0=0$.) with data coinciding with this function at time $0$ in which case $h^0_{\mu\nu}=\chi(r-t) \tfrac{M}{r}\delta_{\mu\nu}$ which is equal to \eqref{eq:h0def} in the exterior when
$r\geq t+1$. We therefore think of  \eqref{eq:h0def} as an approximate solution to the
homogeneous wave equation. By \eqref{eq:Box0insphericalcoord} we have
\begin{equation}\label{eq:Box0h0}
\Box_0 h^0_{\mu\nu}=\frac{1}{r}\Big(-\pa_t^2+\pa_r^2
-\tfrac{M}{r}{\chi}\big(\tfrac{r}{1+t}\big)(\pa_t^2+\pa_r^2)\Big)\chi\big(\tfrac{r}{t+1}\big) M\delta_{\mu\nu}=\frac{M\delta_{\mu\nu}}{(1+t)^3} \Big(\chi_1^\prime\big(\tfrac{r}{t+1}\big)+\tfrac{1}{t+1} \chi_2^\prime\big(\tfrac{r}{t+1}\big)\Big)
\end{equation}
for some functions $\chi_i^\prime(s)$ that vanish when $s\geq 1/2$ or $s\leq 1/4$.
This in particular means that in the exterior and in the wave zone it is a solution of the wave operator $\Box_0$. Moreover
\beq\label{eq:BoxgBox0h0}
(\Boxr_g -\Box_0)h^0_{\mu\nu}= H_1^{\alpha\beta}\pa_\alpha \pa_\beta
h^0_{\mu\nu}=\frac{M\delta_{\mu\nu}}{(1+t+r)^3} H_1^{\alpha\beta} \chi_{\alpha\beta}\big(\tfrac{r}{t+1},\omega\big)
\eq
for some smooth function $\chi_{\alpha\beta}(s,\omega)$ supported when $s\geq 1/4$.
Hence
\beq
\big|(\Boxr_g-\Box_0) h^0_{\mu\nu}\big|\les \frac{\varepsilon M}{(1+t+r)^{4-2\delta}(1+q_+)^\gamma}
\eq
Summing up we have proved that
\begin{proposition}[Asymptotic Approximate Einstein's equations]
\label{prop:approxwaveequation}
When $|x|\geq ct$ we have
\begin{equation}\label{eq:approxwaveequation}
 \big|\,\Box_{\,0}\,
h^1_{\mu\nu}- {\slash \!\!\!\!}P_{\mu\nu}(\pa h,\pa h)\big|
\les \frac{\varepsilon^2}{(1\!+t\!+r)^{2+\gamma-4\delta} (1\!+|q|)^{2-\gamma}(1\!+q_+)^{4\delta}}.
\end{equation}
\end{proposition}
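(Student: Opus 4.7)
The plan is to begin with the reduced equation \eqref{eq:RE2} for $h^1$ and decompose the geometric wave operator as $\widetilde{\Box}_g = \Box_0 + (\widetilde{\Box}_g - \Box_0)$, so that
\[
    \Box_0 h^1_{\mu\nu} - {\slash\!\!\!\!}P_{\mu\nu}(\pa h,\pa h)
    = \bigl[F_{\mu\nu}(h)(\pa h,\pa h)-{\slash\!\!\!\!}P_{\mu\nu}(\pa h,\pa h)\bigr]
    + \widehat{T}_{\mu\nu}
    - \widetilde{\Box}_g h^0_{\mu\nu}
    - (\widetilde{\Box}_g - \Box_0)h^1_{\mu\nu}.
\]
Each of the four terms on the right is then bounded separately, and the claim is that every contribution is at most the stated right-hand side of \eqref{eq:approxwaveequation}.

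For the first term I would invoke Lemma \ref{lemma:Pslash}, more specifically the estimate \eqref{eq:Festimate}, which gives $O\bigl(\varepsilon^2(1+t+r)^{-3+4\delta}(1+|q|)^{-1}(1+q_+)^{-4\delta}\bigr)$; since $|q|\leq t+r$, a factor of $(1+|q|/(1+t+r))^{1-\gamma}\leq 1$ converts this into the required form. The matter term $\widehat{T}_{\mu\nu}$ disappears in the region $|x|\geq ct$: by the support hypothesis \eqref{eq:supportofT} we have $\widehat{T}\equiv 0$ outside $|x|\leq K+c't$ for a constant $c'<c$ (this is how the constant $c$ in the proposition is chosen), so for $t$ large the matter contribution vanishes identically in the relevant region. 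For the $h^0$ contribution I would write $\widetilde{\Box}_g h^0 = \Box_0 h^0 + (\widetilde{\Box}_g - \Box_0)h^0$ and use \eqref{eq:Box0h0}, which shows $\Box_0 h^0$ is supported in $\{1/4 \le r/(1+t)\le 1/2\}$ and hence vanishes in $\{|x|\geq ct\}$ for $c$ close to $1$, together with \eqref{eq:BoxgBox0h0} which gives the harmless bound $O(\varepsilon M (1+t+r)^{-4+2\delta}(1+q_+)^{-\gamma})$, which is lower order than the target.

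The main contribution, and the only one requiring care, comes from $(\widetilde{\Box}_g - \Box_0)h^1_{\mu\nu}$. Applying Lemma \ref{lem:curvedwaveminuswaveasympschwarz} with $\phi = h^1_{\mu\nu}$ yields
\[
    |(\widetilde{\Box}_g - \Box_0)h^1_{\mu\nu}|
    \les \frac{\varepsilon(1+q_+)^{-\gamma}}{(1+t+r)^{1+\gamma-2\delta}(1+|q|)^{1-\gamma}}\sum_{|K|\leq 1}|\pa Z^K h^1|,
\]
into which I would insert the pointwise bound \eqref{eq:hormanderweakdecayderh1}, $|\pa Z^K h^1|\les \varepsilon(1+t)^{2\delta}(1+t+r)^{-1}(1+|q|)^{-1}(1+q_+)^{-\gamma}$. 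Absorbing the factor $(1+t)^{2\delta}$ into the $(1+t+r)$ weight, the result is
\[
    \frac{\varepsilon^2}{(1+t+r)^{2+\gamma-4\delta}(1+|q|)^{2-\gamma}(1+q_+)^{2\gamma}},
\]
which is better than the required bound since $2\gamma > 8\delta > 4\delta$. The main obstacle, if any, is checking that the various algebraic factors in $(t,r,q,q_+)$ combine correctly and that each auxiliary term (the $\slashed{P}$ error, the $h^0$ terms, and the matter term) is genuinely dominated by the right-hand side in the region $|x|\geq ct$ — this is essentially bookkeeping, but the precise interplay between $\gamma$ and $\delta$, via the condition $8\delta < \gamma < 1-8\delta$, is what makes all the exponents close.
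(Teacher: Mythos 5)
Your proposal is correct and follows essentially the same route as the paper: the paper obtains Proposition \ref{prop:approxwaveequation} by exactly this decomposition, combining \eqref{eq:Festimate} for $F_{\mu\nu}-{\slash\!\!\!\!}P_{\mu\nu}$, Lemma \ref{lem:curvedwaveminuswaveasympschwarz} together with \eqref{eq:hormanderweakdecayderh1} for $(\Boxr_g-\Box_0)h^1$, the estimates \eqref{eq:Box0h0}--\eqref{eq:BoxgBox0h0} for the $h^0$ terms, and the support condition to discard $\widehat{T}$ in the region $|x|\geq ct$. Your exponent bookkeeping (including the conversion factor $\bigl((1+|q|)/(1+t+r)\bigr)^{1-\gamma}\leq 1$ and the use of $2\gamma>4\delta$) is exactly what the paper's "summing up" requires.
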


\subsubsection{The sharp decay estimates for first order derivatives from the wave equation}
We will now derive sharp estimates for the first order derivatives, following
\cite{L1,LR3} we have
\begin{lemma}\label{lem:transversalder} Let $D_t=\{(t,x);\, \big|t-|x|\big|\leq c_0 t \}$, for some constant
$0<c_0<1$ and let $\overline{w}(q)>0$ be an increasing positive weight $\overline{w}^{\,\prime}(q)\geq 0$. Then
\begin{multline}\label{eq:transversalder}
(1+t+|x|) \,|\pa \phi_{UV}(t,x)\, \overline{w}(q)| \les\!\sup_{0\leq
\tau\leq t}
\sum_{|I|\leq 1}\|\,Z^I\! \phi(\tau,\cdot)\, \overline{w}\|_{L^\infty}\\
+ \int_0^t\Big( (1+\tau)\|
\,(\Box_0 \phi)_{UV}(\tau,\cdot)\, \overline{w}\|_{L^\infty(D_\tau)} +\sum_{|I|\leq 2} (1+\tau)^{-1}
\| Z^I \phi(\tau,\cdot)\, \overline{w}\|_{L^\infty(D_\tau)}\Big)\, d\tau.
\end{multline}
\end{lemma}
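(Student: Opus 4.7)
The plan is to integrate an evolution equation for the transversal derivative $\pa_q\phi_{UV}$ along outgoing null rays $\{q=\mathrm{const}\}$, after isolating it from the tangential derivatives and conjugating by $r$ to exploit the radial--null factorisation of the wave operator. By \eqref{eq:derbytransder} and \eqref{eq:pabyZ},
\begin{equation*}
(1+t+r)|\pa \phi_{UV}|\overline w(q) \lesssim (1+t+r)|\pa_q\phi_{UV}|\overline w(q) + \sum_{|I|=1}|Z^I\phi_{UV}|\overline w(q),
\end{equation*}
and the second summand is already of the desired form. Since $(t,x) \in D_t$ forces $r \sim 1+t+r$, and setting $\psi := r\phi_{UV}$ one has $r\pa_q\phi_{UV} = \pa_q\psi - \tfrac12 \phi_{UV}$, the task reduces to estimating $|\pa_q\psi(t,x)|\overline w(q)$.

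Next I would use \eqref{eq:Box0insphericalcoord} together with the factorisations $-\pa_t^2 + \pa_r^2 = 4\pa_s\pa_q$ and $\pa_t^2 + \pa_r^2 = 2(\pa_s^2 + \pa_q^2)$ to derive
\begin{equation*}
	4\pa_s\pa_q\psi = r(\Box_0 \phi)_{UV} + \frac{2M\chi(r/(1+t))}{r}(\pa_s^2 + \pa_q^2)\psi - \Big(1 - \frac{M\chi}{r}\Big)\frac{\triangle_\omega \phi_{UV}}{r} + E_{UV},
\end{equation*}
where $E_{UV}$ collects the commutator errors from exchanging $(\Box_0\phi)_{UV}$ with $\Box_0(\phi_{UV})$ (arising because the null frame vectors $U,V$ depend on $\hat x$) and is pointwise bounded by $(1+r)^{-1}\sum_{|I|\leq 1}|Z^I\phi|$; since $\triangle_\omega$ is a polynomial in the rotations $\Omega_{ij}$, the angular term yields $r^{-1}|\triangle_\omega \phi_{UV}| \lesssim (1+\tau)^{-1}\sum_{|I|\leq 2}|Z^I\phi|$ in $D_\tau$, matching the $(1+\tau)^{-1}\sum_{|I|\leq 2}$ contribution in the statement. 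For $(t,x) \in D_t$ with $q=|x|-t$, I would parametrise the outgoing null ray through $(t,x)$ as $\{(\tau, y(\tau)) : |y(\tau)| - \tau = q\}$, $\tau \in [\tau_*, t]$, where $\tau_* := \max(0, |q|/c_0)$ is the first time the ray enters $D_\tau$; along this ray $\pa_s$ acts as $d/d\tau$, $\overline w(q)$ is constant, and $r(\tau) \sim 1+\tau$. Integrating the displayed identity from $\tau_*$ to $t$ yields
\begin{equation*}
|\pa_q\psi(t,x)|\overline w(q) \lesssim |\pa_q\psi(\tau_*, y(\tau_*))|\overline w(q) + \int_{\tau_*}^{t}\Big[(1+\tau)\|(\Box_0\phi)_{UV}\,\overline w\|_{L^\infty(D_\tau)} + \frac{1}{1+\tau}\!\!\sum_{|I|\leq 2}\|Z^I\phi\,\overline w\|_{L^\infty(D_\tau)}\Big]d\tau.
\end{equation*}
The boundary term is estimated via \eqref{eq:pabyZ}: since $1+\tau_* \sim 1+|q|$ and $r(\tau_*)\sim|q|$, one has $|\pa_q\psi(\tau_*,y(\tau_*))| \lesssim r(\tau_*)|\pa \phi(\tau_*,y(\tau_*))| + |\phi(\tau_*,y(\tau_*))| \lesssim \sum_{|I|\leq 1}|Z^I\phi(\tau_*,y(\tau_*))|$, which contributes to the $\sum_{|I|\leq 1}\|Z^I\phi\,\overline w\|_{L^\infty}$ term (whether $\tau_*=0$, where it is initial data, or $\tau_*>0$, where it lies on the boundary of $D_{\tau_*}$).

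The main obstacle is the quasilinear perturbation $\tfrac{2M\chi}{r}(\pa_s^2 + \pa_q^2)\psi$. The $\pa_s^2\psi$ piece is a total $s$-derivative along the ray, so integration by parts in $\tau$ yields a further boundary term of the same shape as above and a bulk term with coefficient $\pa_s(M\chi/r) = O(M(1+\tau)^{-2})$, absorbed into the $(1+\tau)^{-1}\sum_{|I|\leq 2}$ term. The $\pa_q^2\psi$ piece is the delicate one: I would recast the identity as the transport equation
\begin{equation*}
\Big(4\pa_s - \frac{2M\chi}{r}\pa_q\Big)\pa_q\psi = r(\Box_0\phi)_{UV} + \frac{2M\chi}{r}\pa_s^2\psi - \Big(1-\frac{M\chi}{r}\Big)\frac{\triangle_\omega \phi_{UV}}{r} + E_{UV}
\end{equation*}
and integrate $\pa_q\psi$ along the characteristics of the perturbed first-order operator on the left. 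Since $M\chi/r \lesssim M/(1+\tau)$ on $\supp\chi$, these characteristics are only $O(M\log)$ perturbations of the flat outgoing rays $\{q=\mathrm{const}\}$, and a standard Gronwall absorption using the smallness of $M$ reproduces the claimed inequality up to a harmless $M$-dependent constant. This Gronwall step is routine once the equation is cast in transport form, but is the one place where the argument genuinely uses smallness of the Schwarzschild piece of the metric.
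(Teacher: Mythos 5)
Your argument is correct and follows essentially the same route as the paper: conjugate by $r$, use the spherical-coordinate form of $\Box_0$ to isolate the transport operator $\big(4\pa_s - \tfrac{2M}{r}\chi\,\pa_q\big)$ acting on $\pa_q(r\phi_{UV})$, treat the frame commutator and angular Laplacian as $(1+\tau)^{-1}\sum_{|I|\leq 2}$ errors, and integrate along the perturbed characteristics from the boundary of $D$, using that $q$ decreases (hence $\overline{w}$ decreases) along the flow. The only harmless deviations are that the paper bounds the $\tfrac{M}{r}\pa_s^2(r\phi_{UV})$ term directly as a tangential-derivative error (it is already $O\big(M r^{-1}(1+\tau)^{-1}\sum_{|I|\leq 2}|Z^I\phi|\big)$) rather than integrating by parts, and no Gr\"onwall step is needed since the transport equation for $\pa_q(r\phi_{UV})$ carries no zeroth-order term.
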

\begin{proof}
 Since $\Box\phi=-r^{-1}(\pa_t^2-\pa_r^2)(r\phi)+r^{-2}\triangle_\omega \phi $, where $\triangle_\omega=\sum\Omega_{ij}^2$ and $
 |Z U|\leq C$, for $U\in \{ A,B, L,\underline{L}\}$,
it follows that
 \beq
 \big| \Box_0  {\phi}_{UV}
 -U^\mu V^\nu\Box_0 \phi_{\mu\nu}\big|
 \leq{r}^{-2}\,\, {\sum}_{|J|\leq 1} \,\,|Z^{J} \phi|.
 \eq
Using \eqref{eq:Box0insphericalcoord} we get
\begin{equation*}
\Box_0\phi
=\frac{1}{r}\Big(4\pa_s\pa_q
-2\tfrac{M}{r}{\chi}\big(\tfrac{r}{1+t}\big)(\pa_q^2+\pa_s^2)\Big)(r\phi)
+\Big(1-\tfrac{M}{r}{\chi}\big(\tfrac{r}{1+t}\big)\Big) \frac{1}{r^2}\triangle_\omega \phi,
\end{equation*}
where  $\pa_q=(\pa_r\!-\pa_t)/2 $ and $\pa_s=(\pa_r\!+\pa_t)/2$.
Hence
\beq
\Big|\big(4\pa_s -\tfrac{2M}{r}\pa_q\big)\pa_q (r\phi) -r \Box_0\phi  \Big|
\les {r}^{-1}\,\, {\sum}_{|J|\leq 2} \,\,|Z^{J} \phi|
\eq
so with $s=t+r$
\beq
\Big|\big(\pa_s -\tfrac{M}{s}\pa_q\big)\pa_q (r \phi_{UV})\Big|
\les r\big|(\Box_0 \phi)_{UV}\big|+ (t+r)^{-1}\,\, {\sum}_{|J|\leq 2} \,\,|Z^{J} \phi|,
\qquad |t-r|\leq c_0 t
\eq
Integrating this along the flow lines of the vector field $\big(\pa_s -\tfrac{M}{s}\pa_q\big)$ from the boundary of $D=\cup_{\tau\geq 0}D_\tau$
to any point inside $D$, using that $\overline{w}$ is decreasing along the flow lines, gives that for any $(t,x)\in D$
\begin{multline}
|\pa_q(r \phi_{UV}(t,x))\, \overline{w}| \les\!\sup_{0\leq
\tau\leq t}
\sum_{|I|\leq 1}\|\,Z^I \phi(\tau,\cdot)\, \overline{w}\|\\
+ \int_0^t\Big( (1+\tau)\|
\,(\Box_0 \phi)_{UV}(\tau,\cdot)\, \overline{w}\|_{L^\infty(D_\tau)} +\sum_{|I|\leq 2} (1+\tau)^{-1}
\| Z^I \phi(\tau,\cdot)\, \overline{w}\|_{L^\infty(D_\tau)}\Big)\, d\tau.
\end{multline}
The lemma now follows from \eqref{eq:derbytransder}, that the estimate is trivially true for $h$ and when
$|r-t|\geq c_0 t$.
\end{proof}
From Lemma \ref{lem:transversalder} and the estimate \eqref{eq:hormanderweakdecayh1} we get
\begin{lemma}\label{lem:transversaldersimplified} Let $D_t=\{(t,x);\, \big|t-|x|\big|\leq c_0 t \}$, for some constant
$0<c_0<1$, and $\overline{w}(q)=(1+q_+)^{1+\gamma^\prime}$, where $-1\leq \gamma^\prime<\gamma-2\delta$. Then
\begin{equation}\label{eq:transversaldersimplified}
(1+t+|x|) \,|\pa h^1_{UV}(t,x)\, \overline{w}(q)| \les\varepsilon + \int_0^t (1+\tau)\|
\,(\Box_0 h^1)_{UV}(\tau,\cdot)\, \overline{w}\|_{L^\infty(D_\tau)} d\tau.
\end{equation}
\end{lemma}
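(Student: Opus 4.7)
The plan is to apply Lemma \ref{lem:transversalder} to $\phi=h^1$ with weight $\overline{w}(q)=(1+q_+)^{1+\gamma'}$, and then absorb the two non-$\Box_0$ terms on its right-hand side into the $\varepsilon$ constant by means of the weak pointwise decay \eqref{eq:hormanderweakdecayh1}. Concretely, the task reduces to showing
\[
\sup_{0\leq\tau\leq t}\sum_{|I|\leq 1}\|Z^I h^1(\tau,\cdot)\,\overline{w}\|_{L^\infty}\lesssim\varepsilon,
\qquad
\int_0^t(1+\tau)^{-1}\sum_{|I|\leq 2}\|Z^I h^1(\tau,\cdot)\,\overline{w}\|_{L^\infty(D_\tau)}\,d\tau\lesssim\varepsilon.
\]

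For the supremum, I would multiply \eqref{eq:hormanderweakdecayh1} by $\overline{w}$ to get
\[
|Z^I h^1(\tau,x)|\,\overline{w}(q)\lesssim\frac{\varepsilon\,(1+\tau)^{2\delta}\,(1+q_+)^{1+\gamma'-\gamma}}{1+\tau+|x|},
\]
and split according to the sign of $1+\gamma'-\gamma$. If $1+\gamma'\leq\gamma$, the factor $(1+q_+)^{1+\gamma'-\gamma}$ is at most one and the bound is $\lesssim\varepsilon(1+\tau)^{2\delta-1}\lesssim\varepsilon$. If $1+\gamma'>\gamma$, one uses $q_+\leq |x|$, so that $(1+q_+)^{1+\gamma'-\gamma}\leq(1+\tau+|x|)^{1+\gamma'-\gamma}$; combining with the denominator yields the bound $\varepsilon(1+\tau+|x|)^{2\delta+\gamma'-\gamma}\lesssim\varepsilon$, since $\gamma'<\gamma-2\delta$.

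For the interior integral, in $D_\tau$ one has $1+\tau+|x|\sim 1+\tau$, so the same case analysis produces
\[
\|Z^I h^1(\tau,\cdot)\,\overline{w}\|_{L^\infty(D_\tau)}\lesssim\varepsilon\,(1+\tau)^{\alpha},
\qquad
\alpha:=\max\bigl(2\delta-1,\,2\delta+\gamma'-\gamma\bigr)<0.
\]
After multiplication by $(1+\tau)^{-1}$ the integrand is $(1+\tau)^{\alpha-1}$, which integrates to a finite quantity $\lesssim\varepsilon$. Feeding these two bounds into Lemma \ref{lem:transversalder} yields the stated estimate.

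The only (mild) subtlety is the exponent bookkeeping: the hypothesis $\gamma'<\gamma-2\delta$ is used exactly once, and is precisely what converts the $(1+\tau)^{2\delta}$ loss in the weak pointwise bound into strict decay in $\tau$ once paired with the spatial factor $(1+\tau+|x|)^{-1}$ and the weight $\overline{w}$; the lower bound $\gamma'\geq -1$ guarantees that $\overline{w}$ is nondecreasing, as required by Lemma \ref{lem:transversalder}.
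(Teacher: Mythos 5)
Your proposal is correct and is exactly the paper's argument: the paper proves this lemma by simply citing Lemma \ref{lem:transversalder} together with the weak decay estimate \eqref{eq:hormanderweakdecayh1}, and your exponent bookkeeping (using $q_+\leq |x|$, $1+\tau+|x|\sim 1+\tau$ on $D_\tau$, and $\gamma'<\gamma-2\delta$ to make both residual terms $\lesssim\varepsilon$) fills in the omitted details correctly.
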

Using Lemma \ref{lem:transversaldersimplified}
and Proposition \ref{prop:approxwaveequation} we obtain
\begin{proposition}\label{prop:sharpfirstorderwave} If the weak energy bounds and initial bounds hold then we have for any $0\leq \gamma^\prime<\gamma-4\delta$
\begin{align}\label{eq:sharphTUder}
(1+t+r) (1+q_+)^{1+\gamma^\prime}\big| \pa h^1_{TU}\big|&\les \varepsilon ,\\
\label{eq:sharpHanyder}
(1+t+r) (1+q_+)^{1+\gamma^\prime}\big| \pa h^1\big|&\les \varepsilon
(1+\varepsilon\ln{(2+t)})\les \varepsilon (1+t)^\varepsilon.
\end{align}
The same estimates hold for $h$ in place of $h^1$ if $\gamma^\prime=0$.
\end{proposition}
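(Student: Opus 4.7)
The strategy is to feed the asymptotic form of the reduced Einstein equation (Proposition \ref{prop:approxwaveequation}) into the $L^\infty$--$L^\infty$ inequality of Lemma \ref{lem:transversaldersimplified}, taking the weight $\overline{w}(q)=(1+q_+)^{1+\gamma'}$. The two bounds are proved in order: first the sharp bound \eqref{eq:sharphTUder} for the tangential components $h^1_{TU}$, $T\in\mathcal T$, $U\in\mathcal N$, and then the general bound \eqref{eq:sharpHanyder}, using the first result to control the leading semilinear term.

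For \eqref{eq:sharphTUder}, the key observation is that the obstruction term $\slashed P_{\mu\nu}$ in \eqref{eq:Pmunudef} has the structure $L_\mu L_\nu P_{\mathcal S}(\partial_q h,\partial_q h)$, and $L_\mu T^\mu=0$ for $T\in\mathcal T$; hence $\slashed P_{TU}=0$. So on the exterior region $|x|\ge ct$, Proposition \ref{prop:approxwaveequation} gives
\[
|(\Box_0 h^1)_{TU}(\tau,x)|\lesssim \frac{\varepsilon^2}{(1+\tau+|x|)^{2+\gamma-4\delta}(1+|q|)^{2-\gamma}(1+q_+)^{4\delta}}.
\]
On the complementary interior region $|x|\le ct$, the matter contribution $\widehat T_{TU}$ is controlled by $|\widehat T|\lesssim \varepsilon(1+\tau)^{-3}$ from \eqref{eq:weakdecayT}, while the remaining contributions $\widetilde\Box_g h^0$, $F_{\mu\nu}$ and $(\widetilde\Box_g-\Box_0)h^1$ decay as $\varepsilon(1+\tau)^{-2-\gamma+2\delta}$ or better using Lemma \ref{lem:curvedwaveminuswaveasympschwarz}, \eqref{eq:BoxgBox0h0}, Lemma \ref{lemma:Pslash} and the weak decay estimates of Proposition \ref{prop:weakdecayhormander}. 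In $D_\tau$, multiplying by $\overline{w}$ and using $(1+q_+)^{1+\gamma'-4\delta}(1+|q|)^{\gamma-2}\le (1+|q|)^{\gamma+\gamma'-1-4\delta}$ together with $|q|\le c_0\tau$ shows
\[
(1+\tau)\|(\Box_0 h^1)_{TU}(\tau,\cdot)\,\overline w\|_{L^\infty(D_\tau)}\lesssim \frac{\varepsilon^2}{(1+\tau)^{1+(\gamma-4\delta-\gamma')}}+\frac{\varepsilon}{(1+\tau)^{1+\gamma-2\delta}},
\]
which, thanks to $\gamma'<\gamma-4\delta$, is integrable in $\tau$ and yields \eqref{eq:sharphTUder} after feeding into \eqref{eq:transversaldersimplified}.

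For \eqref{eq:sharpHanyder}, the same scheme is applied to an arbitrary component $h^1_{UV}$, but now $\slashed P_{UV}$ need not vanish. Its size is controlled by $|P_{\mathcal S}(\partial_q h,\partial_q h)|$, which only involves the $\partial_q h_{AB}$ components with $A,B\in\mathcal S\subset\mathcal T$; these are precisely the components just estimated in \eqref{eq:sharphTUder}. Hence
\[
|\slashed P_{UV}(\pa h,\pa h)|\lesssim \frac{\varepsilon^2}{(1+\tau+|x|)^2(1+q_+)^{2+2\gamma'}},
\]
and multiplying by $\overline w=(1+q_+)^{1+\gamma'}$ produces an integrand of size $\varepsilon^2/((1+\tau)(1+q_+)^{1+\gamma'})$ in $D_\tau$. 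The supremum is attained at $q=0$, giving $\varepsilon^2/(1+\tau)$, whose time integral is the logarithm $\varepsilon^2\ln(2+t)$. The remaining error $\Box_0 h^1-\slashed P$ is controlled exactly as in the first step, and contributes only $O(\varepsilon)$. This produces the bound $(1+t+r)(1+q_+)^{1+\gamma'}|\pa h^1|\lesssim \varepsilon(1+\varepsilon\ln(2+t))\lesssim \varepsilon(1+t)^{\varepsilon}$.

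The main obstacle is the bookkeeping in the exterior wave zone to verify that $\gamma'<\gamma-4\delta$ indeed gives an integrable-in-$\tau$ bound after the weight $(1+q_+)^{1+\gamma'}$ is absorbed against the $(1+|q|)^{2-\gamma}(1+q_+)^{4\delta}$ gain of Proposition \ref{prop:approxwaveequation}; this is where the loss $\gamma-4\delta$ (rather than $\gamma$) in the admissible range of $\gamma'$ enters, and one must be careful to treat the intermediate regime $q_+\sim 1+t+r$ separately, where all three weights compete.
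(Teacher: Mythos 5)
Your proposal is correct and follows essentially the same route as the paper: apply the $L^\infty$--$L^\infty$ estimate of Lemma \ref{lem:transversaldersimplified} to the decomposition of Proposition \ref{prop:approxwaveequation}, use $L_\mu T^\mu=0$ to kill $\slashed P_{TU}$ for the first bound, and then feed \eqref{eq:sharphTUder} back into $|\slashed P|\lesssim|\pa h_{TS}|^2\lesssim\varepsilon^2(1+t+r)^{-2}(1+q_+)^{-2-2\gamma'}$ to pick up the logarithm in the second. The only cosmetic difference is that you control $\widehat T$ in $D_\tau$ via the decay \eqref{eq:weakdecayT}, whereas the paper chooses $c_0\leq 1-c'$ so that $\widehat T$ vanishes in $D_\tau$ for large $t$; both work.
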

\begin{proof} We want to apply Lemma \ref{lem:transversaldersimplified} to the decomposition in Proposition \ref{prop:approxwaveequation}. To prove \eqref{eq:sharphTUder} we note that
${\slash \!\!\!\!}P_{TU}=0$ in $D_t$.  Moreover $\newhat{T}=0$ in $D_t$, for $t \geq 2K/(1-c)$, if we pick $c_0$ so small that $c_0\leq 1-c'$, where $c' = (1+c)/2$.  Also Lemma \ref{lem:Sobolevt3} implies that $\newhat{T}(t,x)$ is uniformly bounded for $0 \leq t \leq 2K/(1-c)$.  From the preceeding lemmas it follows that all the terms in the right hand side
of \eqref{eq:transversalder} are bounded independently of $t$ by a constant
times $\varepsilon$ when $(U,V)=(U,T)$ and this proves \eqref{eq:sharphTUder}.
To prove \eqref{eq:sharpHanyder} we note that the only new term is ${\slash \!\!\!\!}P_{\mu\nu}$
which is controlled by
\beq
|{\slash \!\!\!\!}P(\pa h,\pa h)|\leq |\pa h_{TS}|^2\les \varepsilon^2 (1+t+r)^{-2}
(1+q_+)^{-2-2\gamma^\prime}
\eq
by the first part and multiplying by $(1+t)$ and integrating gives a logarithm.
\end{proof}

\subsection{The commutators and Lie derivatives} \label{subsec:commutation}
We will use Lie derivatives which will simplify the commutators very much
by removing the lower order terms. It was first observed in \cite{L4} that
one can get bounds from the wave coordinate condition for Lie derivatives.
Here we take it further and observe that the Lie derivative unlike vector fields preserve the geometric null
structure of not only the wave coordinate condition, but also of the nonlinear inhomogeneous terms of Einstein's equations and the commutators with the geometric wave operator.
\subsubsection{Modified Lie derivatives applied to the equations}
The Lie derivative applied to a $(r,s)$ tensor $K$ is defined by
\begin{equation} \label{eq:Liederiv}
{\mathcal L}_Z K^{\alpha_1\dots \alpha_r}_{\beta_1\dots \beta_s}
=Z K^{\alpha_1\dots \alpha_r}_{\beta_1\dots \beta_s}
-\pa_\gamma Z^{\alpha_1} K^{\gamma\dots \alpha_r}_{\beta_1\dots \beta_s}-\cdots
-\pa_\gamma Z^{\alpha_r} K^{\alpha_1\dots \gamma}_{\beta_1\dots \beta_s}
+\pa_{\beta_1} Z^{\gamma} K^{\alpha_1\dots \alpha_r}_{\gamma\dots \beta_s}+\cdots
+\pa_{\beta_s} Z^{\gamma} K^{\alpha_1\dots \alpha_r}_{\beta_1\dots \gamma}.
\end{equation}
Recall that Lie derivative satisfies Leibnitz rule. For the case of our vector fields
$\pa_\gamma Z^\beta$ are constant, which results in the following commutation properties.

\begin{proposition} \label{prop:Liecommutation}
	If $K$ is an (r,s) tensor then, with respect to the coordinate system $\{x^{\mu}\}$, the vector fields $Z = \partial_{x^{\mu}}, \Omega_{ij}, B_i, S$ satisfy,
	\begin{equation} \label{eq:Liepartial}
		{\mathcal L}_Z\pa_{\mu_1}\!\cdots\pa_{\mu_k} K^{\alpha_1\dots \alpha_r}_{\beta_1\dots \beta_s}
		=
		\pa_{\mu_1}\!\cdots\pa_{\mu_k} {\mathcal L}_Z K^{\alpha_1\dots \alpha_r}_{\beta_1\dots\beta_s},
	\end{equation}
	and
	\begin{equation} \label{eq:Liediv}
		{\mathcal L}_Z \pa_\mu K^{\mu\dots \alpha_r}_{\beta_1\dots \beta_s}
		=
		\pa_\mu {\mathcal L}_Z K^{\mu\dots \alpha_r}_{\beta_1\dots \beta_s}.
	\end{equation}
\end{proposition}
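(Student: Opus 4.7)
The plan is to prove both identities by reducing them to the single algebraic fact that each vector field in the list $\{\partial_{x^\mu},\Omega_{ij},B_i,S\}$ has affine components in the Cartesian coordinates $(x^0,x^1,x^2,x^3)$, so that
\[
\partial_\mu \partial_\gamma Z^\alpha = 0\qquad\text{for all indices } \mu,\gamma,\alpha.
\]
Indeed $\partial_{x^\mu}$ has constant components; $\Omega_{ij}$ has components $x^i\delta^\alpha_j-x^j\delta^\alpha_i$; $B_i$ has components $x^i\delta^\alpha_0+t\delta^\alpha_i$; and $S$ has components $x^\beta\delta^\alpha_\beta$. In every case the components are affine in $(t,x)$, so that $\partial_\gamma Z^\alpha$ is a constant matrix and its coordinate derivative vanishes. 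All the remaining work is routine index bookkeeping.

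For \eqref{eq:Liepartial} I would first handle the case $k=1$. Viewing $\partial_\mu K^{\alpha_1\dots\alpha_r}_{\beta_1\dots\beta_s}$ as an $(r,s+1)$ tensor, expand $\mathcal L_Z(\partial_\mu K)$ from the defining formula \eqref{eq:Liederiv}; then apply $\partial_\mu$ to \eqref{eq:Liederiv} itself and distribute using the product rule. Term by term the two sides match, apart from residual terms of the schematic form $(\partial_\mu \partial_\gamma Z^{\alpha_i})K^{\cdots}$ coming from the upper-index transformations, and $(\partial_\mu\partial_{\beta_j} Z^\gamma)K^{\cdots}$ coming from the lower-index transformations, together with an extra $(\partial_\mu Z^\gamma)(\partial_\gamma K)$ on the right that matches the image of the new lower index $\mu$ on the left. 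The first two types of terms vanish by the identity of the first paragraph, and the third cancels against the corresponding term in $\mathcal L_Z \partial_\mu K$. The general $k$ case then follows by an immediate induction, since $\partial_{\mu_1}\cdots\partial_{\mu_{k-1}} K$ is again a tensor to which the $k=1$ result applies.

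For \eqref{eq:Liediv} I would specialize \eqref{eq:Liepartial} with $k=1$ to obtain $\mathcal L_Z\partial_\mu K^{\alpha_1\alpha_2\dots\alpha_r}_{\beta_1\dots\beta_s}=\partial_\mu\mathcal L_Z K^{\alpha_1\alpha_2\dots\alpha_r}_{\beta_1\dots\beta_s}$, and then contract $\alpha_1$ with $\mu$. Contraction commutes with the Lie derivative for any vector field $Z$: this follows directly from \eqref{eq:Liederiv} after observing that, when an upper index is identified with a lower index, the two terms $-(\partial_\gamma Z^{\alpha_1})K^{\gamma\dots}_{\mu\dots}$ and $+(\partial_{\beta_1} Z^\gamma)K^{\alpha_1\dots}_{\gamma\dots}$ cancel upon setting $\alpha_1=\beta_1=\mu$ and relabeling the dummy index. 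This yields \eqref{eq:Liediv}. The only mild obstacle in the whole proof is keeping track of the many index terms in the $k=1$ computation of step two; once the cancellations are organized, everything is forced by $\partial_\mu \partial_\gamma Z^\alpha=0$.
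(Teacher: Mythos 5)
Your proof is correct and follows essentially the same route as the paper: both arguments rest on the observation that $\partial_\gamma Z^\alpha$ is constant (so $\partial_\mu\partial_\gamma Z^\alpha=0$) for every vector field in the collection, expand both sides of \eqref{eq:Liepartial} from the coordinate definition \eqref{eq:Liederiv}, and deduce \eqref{eq:Liediv} as a direct consequence. Your organisation via the $k=1$ case plus induction, and the explicit remark that the Lie derivative commutes with contraction, are only cosmetic variants of the paper's direct computation.
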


\begin{proof}
	From the definition \eqref{eq:Liederiv},
	\begin{align*}
		{\mathcal L}_Z\pa_{\mu_1}\!\cdots\pa_{\mu_k} K^{\alpha_1\dots \alpha_r}_{\beta_1\dots \beta_s}
		=
		\
		&
		Z \left( \pa_{\mu_1}\!\cdots\pa_{\mu_k} K^{\alpha_1\dots \alpha_r}_{\beta_1\dots \beta_s} \right)
		+
		\partial_{\mu_1} Z^{\gamma} \pa_{\gamma}\!\cdots\pa_{\mu_k} K^{\alpha_1\dots \alpha_r}_{\beta_1\dots \beta_s}
		+
		\ldots
		+
		\partial_{\mu_k} Z^{\gamma} \pa_{\mu_1}\!\cdots\pa_{\gamma} K^{\alpha_1\dots \alpha_r}_{\beta_1\dots \beta_s}
		\\
		&
		-
		\partial_{\gamma} Z^{\alpha_1} \pa_{\mu_1}\!\cdots\pa_{\mu_k} K^{\gamma\dots \alpha_r}_{\beta_1\dots \beta_s}
		-
		\ldots
		-
		\partial_{\gamma} Z^{\alpha_r} \pa_{\mu_1}\!\cdots\pa_{\mu_k} K^{\alpha_1 \dots \gamma}_{\beta_1\dots \beta_s}
		\\
		&
		+
		\partial_{\beta_1} Z^{\gamma} \pa_{\mu_1}\!\cdots\pa_{\mu_k} K^{\alpha_1\dots \alpha_r}_{\gamma\dots \beta_s}
		+
		\ldots
		+
		\partial_{\beta_s} Z^{\gamma} \pa_{\mu_1}\!\cdots\pa_{\mu_k} K^{\alpha_1\dots \alpha_r}_{\beta_1 \dots \gamma},
	\end{align*}
	and
	\begin{multline*}
		\pa_{\mu_1}\!\cdots\pa_{\mu_k} {\mathcal L}_Z K^{\alpha_1\dots \alpha_r}_{\beta_1\dots\beta_s}
		=
		\pa_{\mu_1}\!\cdots\pa_{\mu_k}
		\bigg[
		Z^{\gamma} \partial_{\gamma} \left( K^{\alpha_1\dots \alpha_r}_{\beta_1\dots\beta_s} \right)
		-
		\partial_{\gamma} Z^{\alpha_1} K^{\gamma \dots \alpha_r}_{\beta_1\dots\beta_s}
		-
		\ldots
		-
		\partial_{\gamma} Z^{\alpha_r} K^{\alpha_1 \dots \gamma}_{\beta_1\dots\beta_s}
		\\
		+
		\partial_{\beta_1} Z^{\gamma} K^{\alpha_1 \dots \alpha_r}_{\gamma \dots\beta_s}
		+
		\ldots
		+
		\partial_{\beta_s} Z^{\gamma} K^{\alpha_1 \dots \alpha_r}_{\beta_1\dots\gamma}
		\bigg].
	\end{multline*}
	The equality \eqref{eq:Liepartial} follows directly since $\partial_{x^{\alpha}} Z^{\beta}$ is constant for each of the vector fields $Z$.  The equality \eqref{eq:Liediv} follows directly from \eqref{eq:Liepartial}.
\end{proof}

Since, for an $(r,s)$ tensor $K$, the quantity $\pa_{\mu_1}\!\cdots\pa_{\mu_k} K^{\alpha_1\dots \alpha_r}_{\beta_1\dots \beta_s}$ appearing in \eqref{eq:Liepartial} is not a geometric object, its Lie derivative is defined formally, in the $\{x^{\mu}\}$ coordinate system, using the coordinate expression \eqref{eq:Liederiv}.  Alternatively, one could note that, in the $\{x^{\mu}\}$ coordinate system,
\[
	\pa_{\mu_1}\!\cdots\pa_{\mu_k} K^{\alpha_1\dots \alpha_r}_{\beta_1\dots \beta_s}
	=
	D_{\mu_1}\!\cdots D_{\mu_k} K^{\alpha_1\dots \alpha_r}_{\beta_1\dots \beta_s},
\]
where $D$ denotes the connection of the Minkowski metric, since the Christoffel symbols of $D$ with respect to the Cartesian coordinate system $\{ x^{\mu}\}$ vanish, $D_{\partial_{x^{\alpha}}} \partial_{x^{\beta}} = 0$.  One could then give a geometric proof Proposition \ref{prop:Liecommutation} using the fact that the curvature tensor of $D$ vanishes and $D^2 Z = 0$ for each of the vector fields $Z$.

Let the modified Lie derivative be defined by
\begin{equation}\label{eq:modifiedlie}
\widehat{\mathcal L}_Z K^{\alpha_1\dots \alpha_r}_{\beta_1\dots \beta_s}
={\mathcal L}_Z K^{\alpha_1\dots \alpha_r}_{\beta_1\dots \beta_s}
+\tfrac{r-s}{4}(\pa_\gamma Z^\gamma)K^{\alpha_1\dots \alpha_r}_{\beta_1\dots \beta_s}.
\end{equation}
With this definition $\widehat{\mathcal L}_Z m^{\alpha\beta}=0$ and
$\widehat{\mathcal L}_Z m_{\alpha\beta}=0$ for the vector fields in our collection as
the modified Lie derivative is defined so it commutes with contractions with the
 the Minkowski metric.
 Let $h_{\alpha\beta}$ and $k_{\alpha\beta}$ be $(0,2)$ tensors and
let $S_{\mu\nu}(\pa h,\pa k)$ be a $(0,2)$ tensor which is a quadratic form in the $(0,3)$ tensors $\pa h$ and $\pa k$ with two contractions with the Minkowski metric (in particular $P(\pa_\mu h,\pa_\nu h)$ or $Q_{\mu\nu}(\pa h,\pa k)$). Then
 \beq
 {\mathcal L}_Z\,\big(  S_{\mu\nu}(\pa h,\pa k)\big)
 =S_{\mu\nu}(\pa \widehat{\mathcal L}_Z h,k)+S_{\mu\nu}(\pa h,\pa \widehat{\mathcal L}_Z k).
 \eq
 Moreover
 \beq
 {\mathcal L}_Z \, \big(g^{\alpha\beta} \pa_\alpha\pa_\beta h_{\mu\nu}\big)
 =\big(\widehat{\mathcal L}_Z  g^{\alpha\beta} \big)\pa_\alpha\pa_\beta h_{\mu\nu}
 + g^{\alpha\beta} \pa_\alpha\pa_\beta \widehat{\mathcal L}_Z h_{\mu\nu}.
 \eq
 Let ${\mathcal L}_Z^I$ be a product of $|I|$ Lie derivatives with respect to $|I|$ vector fields $Z$.
 It follows that
 \begin{equation}
 \Boxr_g \widehat{\mathcal L}_Z^I h_{\mu\nu}
 =\big[ \Boxr_g \widehat{\mathcal L}_Z^I -{\mathcal L}_Z^I \Boxr_g\big]h_{\mu\nu}
 +{\mathcal L}_Z^I F_{\mu\nu}(H)(\pa h,\pa h)+{\mathcal L}_Z^I T_{\mu\nu}
 \end{equation}
 where
 \beq
 \big[ \Boxr_g \widehat{\mathcal L}_Z^I -{\mathcal L}_Z^I \Boxr_g\big]\phi_{\mu\nu}
 =-\!\!\!\!\!\sum_{J+K=I,\,|K|<|I|}\!\!\!\!\! \widehat{\mathcal L}_Z^J  H^{\alpha\beta} \,\pa_\alpha\pa_\beta \widehat{\mathcal L}_Z^K \phi_{\mu\nu}
 \end{equation}
 and
 \begin{equation}
 {\mathcal L}_Z^I F_{\mu\nu}(H)(\pa h,\pa h)=\!\!\!\!\!\sum_{J+K=I}\!\! P\big(\pa_\mu \widehat{\mathcal L}_Z^J h,\pa_\nu \widehat{\mathcal L}_Z^K h\big)
 +\!\!\!\!\!\sum_{J+K=I}\!\!Q_{\mu\nu}\big(\pa \widehat{\mathcal L}_Z^J h,\pa \widehat{\mathcal L}_Z^K h\big)\\
 +{\mathcal L}_Z^I G_{\mu\nu}(H)(\pa h,\pa h),
 \end{equation}
 where
 \beq
 \big| {\mathcal L}_Z^I G_{\mu\nu}(H)(\pa h,\pa h)\big|\les
 \sum_{I_1+I_{2}+\dots I_{k}=I,\, k\geq 3 } |\widehat{\mathcal L}_Z^{I_3} H|
 \cdots |\widehat{\mathcal L}_Z^{I_k} H|
 |\pa \widehat{\mathcal L}_Z^{I_1} h|\, |\pa \widehat{\mathcal L}_Z^{I_2} h|,
 \eq
 i.e. at least one factor of $|\widehat{\mathcal L}_Z^{I_k} H|$.
 Finally the wave coordinate condition
 \beq
 \pa_\mu  \widehat{\mathcal L}_Z \widehat{H}^{\mu\nu}=\big(\widehat{\mathcal L}_Z +\tfrac{\pa_\gamma Z^\gamma}{2}\big)\pa_\mu \widehat{H}^{\mu\nu}
 =\big(\widehat{\mathcal L}_Z +\tfrac{\pa_\gamma Z^\gamma}{2}\big) W^{\nu}(H,\pa h)
 \eq
 It follows that
 \beq
 \big| \pa_\mu  \widehat{\mathcal L}_Z^I \widehat{H}^{\mu\nu} \big|\les
 \sum_{I_1+\cdots+ I_k,\, k\geq 2} \big|  \widehat{\mathcal L}_Z^{I_2} {H}\big|\cdots
  \big|  \widehat{\mathcal L}_Z^{I_k} {H}\big|\,  \big|\pa  \widehat{\mathcal L}_Z^{I_1} H\big|, \label{eq:wavecoordinateLie}
 \eq
 where
 \begin{equation}
 \widehat{\mathcal L}_Z^{I}\widehat{H}{}^{\mu\nu}\!=\! \widehat{\mathcal L}_Z^{I}H^{\mu\nu}\!-m^{\mu\nu}
\tr_m  \widehat{\mathcal L}_Z^{I}H_{\!}/2,\qquad \tr  \widehat{\mathcal L}_Z^{I}H\!=m_{\alpha\beta}  \widehat{\mathcal L}_Z^{I}H^{\alpha\beta}\!\!
\label{eq:wavehatLie}
 \end{equation}

We have
\beq
\sum_{|I|\leq k} |Z^I K|\les \sum_{|I|\leq k}|\widehat{\mathcal L}_Z^I K|\les \sum_{|I|\leq k} |Z^I K|,
\eq
since the Lie derivative just adds lower order terms.

\subsubsection{Estimates from the wave coordinate condition}
It follows from Lemma \ref{lem:divergencenullframe} and \eqref{eq:wavecoordinateLie}
and the fact that $ \big|  \widehat{\mathcal L}_Z^{I_k} {h}\big|\les 1$ for small $|I_k|$
that
\begin{lemma}
\begin{align}
 |\pa_q \widehat{\mathcal L}_Z^I{H}|_{L\mathcal T}
+|\pa_q \trs\widehat{\mathcal L}_Z^I {H} |&\les
|\overline{\pa} \widehat{\mathcal L}_Z^I H|
+ \sum_{|J|+|K|\leq |I|} \big|  \widehat{\mathcal L}_Z^{J} {h}\big|
    \big|\pa  \widehat{\mathcal L}_Z^{K} h\big|
\label{eq:higherwavecoordinateLiederivativeH}
\\
 |\pa_q \widehat{\mathcal L}_Z^I{H}_1|_{L\mathcal T}
+|\pa_q  \trs\widehat{\mathcal L}_Z^I {H} |&\les
|\overline{\pa} \widehat{\mathcal L}_Z^I H_1|
+ \sum_{|J|+|K|\leq |I|} \big|  \widehat{\mathcal L}_Z^{J} {h}\big|
    \big|\pa  \widehat{\mathcal L}_Z^{K} h\big|+\big|\chi^\prime\big(\tfrac{r}{t+1}\big)\big| \frac{M}{(1+t+r)^2},
\label{eq:higherwavecoordinateLiederivative}
\end{align}
where $\chi^\prime(s)$, is a function supported when $1/4\leq s\leq 1/2$.
Moreover \eqref{eq:higherwavecoordinateLiederivativeH} hold also for $h$ in place of $H$.
\end{lemma}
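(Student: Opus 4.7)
The plan is to mimic the proof of Lemma \ref{lem:wavecoordestimate0} at the level of Lie derivatives, exploiting the three structural facts already put in place: (i) the null-frame divergence identity \eqref{eq:wavecoordinateframedivergence}, (ii) the commutation property \eqref{eq:Liediv} of $\widehat{\mathcal L}_Z$ with coordinate partials, and (iii) the wave-coordinate identity \eqref{eq:wavecoordinateLie} for $\widehat{\mathcal L}_Z^I \widehat{H}$.

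First I would apply Lemma \ref{lem:divergencenullframe} with the tensor $k^{\mu\nu} = \widehat{\mathcal L}_Z^I \widehat{H}^{\mu\nu}$. This gives, for any frame vector $U\in\mathcal{N}$,
\[
\pa_q \bigl( L_\mu U_\nu\, \widehat{\mathcal L}_Z^I \widehat{H}^{\mu\nu}\bigr)
= \underline{L}_{\,\mu} U_\nu\, \pa_s \widehat{\mathcal L}_Z^I \widehat{H}^{\mu\nu}
- A_\mu U_\nu\, \pa_A \widehat{\mathcal L}_Z^I \widehat{H}^{\mu\nu}
+ U_\nu\, \pa_\mu \widehat{\mathcal L}_Z^I \widehat{H}^{\mu\nu}.
\]
The first two terms on the right are manifestly tangential derivatives, so they are bounded by $|\overline{\pa}\, \widehat{\mathcal L}_Z^I \widehat{H}|$, and since $\widehat{\mathcal L}_Z$ commutes with contractions by $m$ the difference between $\widehat{H}$ and $H$ is just a Minkowski-trace term, giving $|\overline{\pa}\,\widehat{\mathcal L}_Z^I \widehat{H}| \lesssim |\overline{\pa}\,\widehat{\mathcal L}_Z^I H|$. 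For the divergence term on the right I apply the estimate \eqref{eq:wavecoordinateLie}, which bounds $|\pa_\mu \widehat{\mathcal L}_Z^I \widehat{H}^{\mu\nu}|$ by the quadratic expression on the right-hand side of \eqref{eq:higherwavecoordinateLiederivativeH}; here I use that $|\widehat{\mathcal L}_Z^{I_k}h|\lesssim 1$ for small multi-indices so that all but two factors can be absorbed into the implicit constant, which is the same reduction as in \cite{L4,LR3}.

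Next I select $U$ to extract precisely the components that appear on the left-hand side. Choosing $U=T\in\mathcal{T}=\{L,S_1,S_2\}$, the contraction $L_\mu T_\nu \widehat{\mathcal L}_Z^I\widehat{H}^{\mu\nu}$ equals $\widehat{\mathcal L}_Z^I H_{LT}$ since $m_{LT}=0$ for $T\in\mathcal T$, and this controls $\pa_q \widehat{\mathcal L}_Z^I H|_{L\mathcal T}$. Choosing $U=\underline{L}$, the contraction picks up $\widehat{\mathcal L}_Z^I H_{L\underline L}$ plus a nonzero multiple of $\tr_m \widehat{\mathcal L}_Z^I H$ because $m_{L\underline L}\neq 0$; subtracting the $H_{L\underline L}$ piece already controlled (on using $H_{L\underline L}$ itself does not appear at the left but its $\pa_q$ is bounded by the $U=T$ bound with $T=L$ after a frame manipulation, recalling that $\trs H$ and $H_{L\underline L}$ together make up $\tr_m H$) yields the required bound on $\pa_q\trs \widehat{\mathcal L}_Z^I H$. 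This proves the first inequality \eqref{eq:higherwavecoordinateLiederivativeH}; the companion estimate for $h$ in place of $H$ is identical because $H=-h+O(h^2)$ and the quadratic correction is already of the type appearing on the right-hand side.

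Finally, to obtain \eqref{eq:higherwavecoordinateLiederivative} for $H_1=H-H_0$ I repeat the argument but use the refined wave-coordinate identity \eqref{eq:wavehat1divergence}, which introduces the explicit inhomogeneous term $\pa_\mu \widehat{H}_0^{\mu\nu} = 2\chi'\bigl(\tfrac{r}{1+t}\bigr)M(1+t)^{-2}\delta^{0\nu}$. After applying $\widehat{\mathcal L}_Z^I$ and using the fact that the vector fields $Z$ act on the cutoff $\chi(r/(1+t))$ in a controlled way (the scaling $S$ annihilates $r/(1+t)$ up to lower order and the other fields produce at worst factors of $(1+t+r)^{-1}$ against the $\chi'$), one picks up the extra term $|\chi'(r/(1+t))|\,M(1+t+r)^{-2}$ displayed on the right. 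The main technical point to check carefully is the combinatorics of applying $\widehat{\mathcal L}_Z$ repeatedly to \eqref{eq:wavecoordinateLie} and tracking which factors must carry at least one partial derivative; this is the same bookkeeping as in \cite{L4} and is where one must be attentive, but no new ideas are required.
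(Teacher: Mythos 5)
Your proposal follows the paper's route exactly: Lemma \ref{lem:divergencenullframe} applied to the tensor $\widehat{\mathcal L}_Z^I\widehat{H}^{\mu\nu}$, the commuted wave coordinate identity \eqref{eq:wavecoordinateLie} to bound the divergence term, the boundedness of low-order Lie derivatives of $h$ to reduce the multilinear products to quadratic ones, and the $\widehat{H}_0$ correction from \eqref{eq:wavehat1divergence} to produce the explicit $M\vert\chi'\vert(1+t+r)^{-2}$ term in the $H_1$ case. One small correction to your $U=\underline{L}$ step: no subtraction is required, since $L_\mu\underline{L}_\nu\widehat{H}^{\mu\nu}=\trs H$ identically (the $H_{L\underline{L}}$ contribution cancels exactly against $-\tfrac12 m_{L\underline{L}}\tr_m H = -H_{L\underline{L}}+\trs H$); this is fortunate, because $\pa_q H_{L\underline{L}}$ is \emph{not} controlled by the $U=T$ cases, so the ``frame manipulation'' you invoke would not supply the piece you propose to subtract.
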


\subsubsection{$L^\infty$ estimates from the wave coordinate condition}

For low derivatives \eqref{eq:higherwavecoordinateLiederivative} leads to the following
\begin{proposition}\label{prop:wavecoorddecayhighlow} For $|I|\leq N-4$ we have
\begin{align}
|\pa_q \widehat{\mathcal L}_Z^I{H}_1|_{L\mathcal T}
+|\pa_q  \trs\widehat{\mathcal L}_Z^I {H}_1 |&\les
\varepsilon(1+t+r)^{-2+2\delta}(1\!+q_+)^{-\gamma},
\label{eq:wavecoordinatederivativeLie}
\\
|\widehat{\mathcal L}_Z^I{H}_1|_{L\mathcal T}
+|\trs\widehat{\mathcal L}_Z^I {H}_1 |
\les \varepsilon &
(1\!+t\!+r)^{-1-\gamma+2\delta}\!\! +
\varepsilon(1\!+t)^{-2+2\delta}(1\!+q_-)
\les \varepsilon(1\!+t\!+r)^{-1-\gamma+2\delta} (1\!+q_-)^{\gamma}.
\label{eq:wavecoordinatefunctionLie}
\end{align}
The same estimates hold for $H$ in place of $H_1$ if $\gamma$ is replaced by $2\delta$.
\end{proposition}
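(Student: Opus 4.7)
The plan is to mimic the proof of the $|I|=0$ case (Proposition \ref{prop:wavecoorddecay}, which follows \cite{LR3,L4}), with the higher-order wave coordinate identity \eqref{eq:higherwavecoordinateLiederivative} playing the role of Lemma \ref{lem:wavecoordestimate0}. Outside the wave zone, specifically where $|t-r|\geq (1+t)/8$, both \eqref{eq:wavecoordinatederivativeLie} and \eqref{eq:wavecoordinatefunctionLie} follow directly from the weak decay Proposition \ref{prop:weakdecayhormander}, so it suffices to work inside the wave zone $|t-r|\leq (1+t)/8$.

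First I would establish \eqref{eq:wavecoordinatederivativeLie} by controlling each of the three types of terms on the right-hand side of \eqref{eq:higherwavecoordinateLiederivative}. The tangential-derivative piece $|\overline{\pa}\widehat{\mathcal L}_Z^I H_1|$ is bounded via $|\overline{\pa}\phi|\lesssim(1+t+r)^{-1}\sum_{|K|=1}|Z^K\phi|$ combined with the weak decay \eqref{eq:hormanderweakdecayh1} at order $|I|+1\leq N-3$, yielding exactly $\varepsilon(1+t+r)^{-2+2\delta}(1+q_+)^{-\gamma}$ after using $(1+t)^{2\delta}\leq(1+t+r)^{2\delta}$. The quadratic contributions $\sum_{|J|+|K|\leq|I|}|\widehat{\mathcal L}_Z^J h|\,|\pa\widehat{\mathcal L}_Z^K h|$ are handled as in \cite{LR3,L4}: the product of the weak estimates (with $\gamma$ replaced by $2\delta$ for factors of $h$) gives a contribution $\lesssim \varepsilon^2(1+t+r)^{-2+2\delta}(1+q_+)^{-1-2\delta}$ in the wave zone, absorbed into the linear target since $\gamma<1+2\delta$. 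The mass term $M|\chi'(r/(1+t))|(1+t+r)^{-2}$ is supported where $r\sim t/3$, hence $q_+=0$, and is bounded by $\varepsilon(1+t+r)^{-2+2\delta}$ using $M\leq\varepsilon\leq 1$.

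Next, I would derive the function bound \eqref{eq:wavecoordinatefunctionLie} by integrating $\pa_q$ along null lines of constant $s=t+r$ from the initial hypersurface $\{t=0\}$, where $q=s$. At $t=0$ the weak decay \eqref{eq:hormanderweakdecayh1} provides $|\widehat{\mathcal L}_Z^I H_1|(0,r)\lesssim\varepsilon(1+r)^{-1-\gamma}$, and integrating \eqref{eq:wavecoordinatederivativeLie} in $q$ from $\tilde q=s$ to $\tilde q=q$ yields
\begin{equation*}
|\widehat{\mathcal L}_Z^I H_1|_{LT}(s,q)\lesssim\varepsilon(1+s)^{-1-\gamma}+\varepsilon(1+s)^{-2+2\delta}\!\!\int_q^s(1+\tilde q_+)^{-\gamma}\,d\tilde q.
\end{equation*}
In the exterior portion $q\geq 0$ the integral evaluates to $(1+s)^{1-\gamma}-(1+q)^{1-\gamma}\lesssim(1+s)^{1-\gamma}$, yielding the target $\varepsilon(1+s)^{-1-\gamma+2\delta}$ since $(1+q_-)^\gamma=1$. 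In the interior portion $q<0$ the integral splits into exterior $[0,s]$ and interior $[q,0]$ pieces; the latter contributes $\varepsilon|q|(1+s)^{-2+2\delta}$, which combined with the elementary bound $|q|\leq(1+s)^{1-\gamma}(1+|q|)^\gamma$ (valid since $|q|\leq s$) yields the desired interior weight $(1+q_-)^\gamma$.

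The $H$-version of both inequalities follows by identical reasoning with $\gamma$ replaced by $2\delta$ throughout; in this case the mass correction no longer appears separately in \eqref{eq:higherwavecoordinateLiederivativeH}, its role being absorbed into the slower-decaying $h$ factor in the quadratic term. The main obstacle I foresee is the careful verification of the quadratic bound in the wave zone, which is the heart of the \cite{LR3,L4} argument and relies on the hierarchy $(1+q_+)\lesssim (1+|q|)$, the smallness of $\varepsilon$, and the inequality $\gamma<1+2\delta$; the cubic-and-higher contributions implicit in \eqref{eq:wavecoordinateLie} are controlled by further invoking the low-order weak decay on the additional factors, which is legitimate since at each such factor the number of Lie derivatives is at most $|I|/2\leq N-4$, within the range of validity of \eqref{eq:hormanderweakdecayh1}.
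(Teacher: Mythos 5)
Your proposal is correct and follows essentially the same route as the paper: the paper's proof of Proposition \ref{prop:wavecoorddecayhighlow} is literally "the same as Proposition \ref{prop:wavecoorddecay}," namely reduce to the wave zone, apply the higher-order wave coordinate identity \eqref{eq:higherwavecoordinateLiederivative} together with the weak decay estimates of Proposition \ref{prop:weakdecayhormander} to get the derivative bound, and integrate in the $t-r$ direction from the initial data to get the function bound. Your expanded treatment of the tangential, quadratic, and mass terms, and of the exterior/interior split in the $q$-integration, matches the intended argument.
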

The proof is the same as Proposition \ref{prop:wavecoorddecay}.

\subsubsection{Estimates for the inhomogeneous term}
First using the fact that $ \big|  \widehat{\mathcal L}_Z^{I_k} {h}\big|\les 1$ for small $|I_k|$
\beq
 \big| {\mathcal L}_Z^I G_{\mu\nu}(h)(\pa h,\pa h)\big|\les
 \sum_{|I_1|+|I_{2}|+|I_{3}|\leq |I|,\,  } |\widehat{\mathcal L}_Z^{I_3} H|
 |\pa \widehat{\mathcal L}_Z^{I_1} h|\, |\pa \widehat{\mathcal L}_Z^{I_2} h|,
 \eq
Secondly for any term satisfying classical null condition by \eqref{eq:nullcondest} we have
\beq
\Big|\sum_{J+K=I}\!\!Q_{\mu\nu}\big(\pa \widehat{\mathcal L}_Z^J h,\pa \widehat{\mathcal L}_Z^K h\big)\Big|\les \sum_{|J|+|K|\leq |I|}\!\! |\overline{\pa} \widehat{\mathcal L}_Z^J h|\, |{\pa} \widehat{\mathcal L}_Z^K h|.
\eq
Moreover by \eqref{eq:tanP}, \eqref{eq:PPnull} we have with ${\slash \!\!\!\!}P_{\mu\nu}$
as in \eqref{eq:Pmunudef}
\begin{equation*}
\sum_{J+K=I}\!\!\!\big|P\big(\pa_\mu \widehat{\mathcal L}_Z^J h,\pa_\nu \widehat{\mathcal L}_Z^K h\big)\big|\les\sum_{|J|+|K|\leq |I|\!\!\!\!\!\!\!\!\!\!\!}\big| {\slash \!\!\!\!}P_{\mu\nu}\big(\pa \widehat{\mathcal L}_Z^J h,\pa \widehat{\mathcal L}_Z^K h\big)\big|
+\!\!\!\!\sum_{|J|+|K|\leq |I|\!\!\!\!\!\!\!\!\!\!\!}
\big(|\pa_q \widehat{\mathcal L}_Z^J{h}|_{L{\mathcal T}}
+|\pa_q  \trs \widehat{\mathcal L}_Z^J {h} |\big)\,
|\pa \widehat{\mathcal L}_Z^K h|
\end{equation*}
Summing up we have the estimate
\begin{multline}
\big|{\mathcal L}_Z^I F_{\mu\nu}(h)(\pa h,\pa h)\big|\\
\les \sum_{|J|+|K|\leq |I|}\!\! \!\!\!\!\!\!\big| {\slash \!\!\!\!}P_{\mu\nu}\big(\pa \widehat{\mathcal L}_Z^J h, \widehat{\mathcal L}_Z^K h\big)\big|
+\sum_{|J|+|K|\leq |I|}\!\! |\overline{\pa} \widehat{\mathcal L}_Z^J h|\, |{\pa} \widehat{\mathcal L}_Z^K h|
+\sum_{|I_1|+|I_{2}|+|I_{3}|\leq |I|,\,  } |\widehat{\mathcal L}_Z^{I_3} h|
 |\pa \widehat{\mathcal L}_Z^{I_1} h|\, |\pa \widehat{\mathcal L}_Z^{I_2} h|,
\end{multline}
Dividing up into low and high derivatives we get
\begin{multline}\label{eq:inhomogeneousLieestimate}
\big|{\mathcal L}_Z^I F_{\mu\nu}(h)(\pa h,\pa h)\big|\\
\les \big(|\pa h|_{\mathcal SS}+|\overline{\pa} h|+|h|\,|\pa h|)\!\!\!\sum_{|J|\leq |I|}\!\! \!\!|\pa\widehat{\mathcal L}_Z^J h|
+|\pa h|\!\!\! \sum_{|J|\leq |I|}\!\! |\overline{\pa} \widehat{\mathcal L}_Z^J h|\,
+|\pa h|^2\!\!\! \sum_{|J|\leq |I|}\!\! |\widehat{\mathcal L}_Z^J h|
+\sum_{|K|\leq |I|/2\!\!\!\!\!\!\!\!\!\!\!\!\!\!\!\!\!}|\pa\widehat{\mathcal L}_Z^K h|\sum_{|J|\leq |I|-1\!\!\!\!\!\!\!\!\!\!\!} |\pa\widehat{\mathcal L}_Z^J h|\\
\les \frac{\varepsilon(1\!+\!q_+)^{-1}}{1+t+r}\!\! \sum_{|J|\leq |I|}\!\! \!\!|\pa\widehat{\mathcal L}_Z^J h|
+\frac{\varepsilon(1\!+\!t)^{2\delta}(1\!+\!q_+)^{-2\delta}\!\!\!}{(1+t+r)(1+|q|)}
\!\sum_{|J|\leq |I|}\!\! |\overline{\pa} \widehat{\mathcal L}_Z^J h|
+\frac{\varepsilon^2(1\!+\!t)^{4\delta}(1\!+\!q_+)^{-4\delta}\!\!\!}{(1+t+r)^2(1+|q|)^2}
\!\!\!\sum_{|J|\leq |I|}\!\! |\widehat{\mathcal L}_Z^J h|\\
+\sum_{|K|\leq |I|/2\!\!\!\!\!\!\!\!\!\!\!\!\!\!\!\!\!}|\pa\widehat{\mathcal L}_Z^K h|\sum_{|J|\leq |I|-1\!\!\!\!\!\!\!\!\!\!\!} |\pa\widehat{\mathcal L}_Z^J h|
\end{multline}

\subsubsection{Estimates of the wave operator applied to $h^0$. }
By \eqref{eq:BoxgBox0h0} we have
\beq\label{eq:BoxgBox0h0est}
\big|{\mathcal L}_Z^I\, \big(\Boxr_g -\Box_0\big)h^0_{\mu\nu}\big| \les \frac{M}{(1+t+r)^3}\sum_{|J|\leq |I|} \big|\widehat{\mathcal L}_Z^J H_1|
\eq
 and by \eqref{eq:Box0h0} we have
 \beq\label{eq:Box0h0est}
 \big| {\mathcal L}_Z^I\, \Box_0 h^0_{\mu\nu}\big|
 \les \frac{M}{(1+t+r)^3} \chi^\prime\big(\tfrac{r}{t+1}\big)
 \eq
 where $\chi^\prime(s)$ is supported in $1/4\leq s\leq 1$.

\subsubsection{Estimates of the wave commutator term}
By \eqref{eq:seconddernullframe} we have
\begin{equation*}
\big|\big[ \Boxr_g \widehat{\mathcal L}_Z^I -{\mathcal L}_Z^I\Boxr_g\big]\phi_{\mu\nu}\big|\les
\sum_{J+K=I,\,|K|<|I|\!\!\!\!\! \!\!\!\!\!\!\!\!\!\!\!\!\!\!\!\!\!\!} \big|\widehat{\mathcal L}_Z^J  H^{\alpha\beta}\pa_\alpha\pa_\beta \widehat{\mathcal L}_Z^K \phi_{\mu\nu}\big|
\les \sum_{\substack{|J|+|K|-1\leq |I|,\\ 1\leq |K|\leq |I|}}\!\!\!\!\! \big(\frac{\big|(\widehat{\mathcal L}_Z^J  H)_{LL}\big|}{1+|q|}+\frac{\big|\widehat{\mathcal L}_Z^J  H\big|}{1+t+r}\big)  \,\big|\pa\widehat{\mathcal L}_Z^K \phi_{\mu\nu}\big|.
 \end{equation*}
 Writing $H=H_0+H_1$ this can be divided up in the commutator with $\Box_0=\Box+H_0^{\alpha\beta}\pa_\alpha\pa_\beta$ and with $\Boxr_g-\Box_0=H_1^{\alpha\beta}\pa_\alpha\pa_\beta$.
 Since $H_0\sim r^{-1}$ we have
 \begin{equation}\label{eq:wavecommutatorH0Lie}
\big|\big[ \Box_0 \widehat{\mathcal L}_Z^I -{\mathcal L}_Z^I\Box_0\big]\phi_{\mu\nu}\big|
\les
\sum_{|J|\leq |I|}\!\!\!\! \big(\frac{\big|(\widehat{\mathcal L}_Z^J  H_0)_{LL}\big|\!}{1+|q|}+\frac{\big|\widehat{\mathcal L}_Z^J  H_0\big|}{1+t+r}\big)  \!\!\!\sum_{|K|\leq |I|}\!\!\!\!\big|\pa\widehat{\mathcal L}_Z^K \phi_{\mu\nu}\big|
\les \frac{M(1\!+\!|q|)^{-1}}{1\!+t+r}\!\sum_{|K|\leq |I|\!\!\!\!\!\!}\!\! \,\big|\pa\widehat{\mathcal L}_Z^K \phi_{\mu\nu}\big|.
 \end{equation}
 Similarly by \eqref{eq:wavecoordinatefunctionLie} and \eqref{eq:hormanderweakdecayh1}
\begin{equation*}
\sum_{|J|\leq |I|/2+1\!\!\!\!\! \!\!\!\!\!\!\!\!}\big(\frac{\big|(\widehat{\mathcal L}_Z^J  H_{\!1})_{LL}\big|\!}{1+|q|}+\frac{\big|\widehat{\mathcal L}_Z^J  H_{\!1}\big|}{1+t+r}\big)  \!\!\!\sum_{|K|\leq |I|}\!\!\!\!\big|\pa\widehat{\mathcal L}_Z^K \phi_{\mu\nu}\big|
\les \frac{\varepsilon(1+q_+)^{-\gamma}}{(1+t+r)^{1+\gamma-2\delta}(1+|q|)^{1-\gamma}}
\!\sum_{|K|\leq |I|\!\!\!\!\!\!}\!\! \,\big|\pa\widehat{\mathcal L}_Z^K \phi_{\mu\nu}\big|,
 \end{equation*}
 we conclude that
 \begin{multline}\label{eq:wavecommutatorH1Lie}
\big|\big[ (\Boxr_g \!-\Box_0)\widehat{\mathcal L}_Z^I -{\mathcal L}_Z^I (\Boxr_g \!-\Box_0) \big]\phi_{\mu\nu}\big|\\
\les
\frac{\varepsilon(1+q_+)^{-\gamma}}{(1\!+\!t\!+\!r)^{1+\gamma-2\delta}(1\!+\!|q|)^{1-\gamma}}
\!\sum_{|K|\leq |I|\!\!\!\!\!\!}\!\! \,\big|\pa\widehat{\mathcal L}_Z^K \phi_{\mu\nu}\big|\!
+\!\sum_{|K|\leq |I|/2\!\!\!\!\!\!\!\!\!\!\!\!\!}\!\!\big|\pa\widehat{\mathcal L}_Z^K \phi_{\mu\nu}\big|
\sum_{|J|\leq |I| \!\!\!\!\!\!} \big(\!\frac{\big|(\widehat{\mathcal L}_Z^J  H_1)_{LL}\big|\!}{1+|q|}+\frac{\big|\widehat{\mathcal L}_Z^J  H_1\big|\!}{1+t}\big) .
\end{multline}

\subsection{The sharp $L^\infty$ decay estimates for higher order low derivatives} \label{subsec:sharpdecayhighorder}
As in section 10 of \cite{LR3} using the methods in section \ref{sec:sharpfirstorder} we can inductively prove sharp
decay estimates also for higher order low derivatives.
As we have already proven the higher order weak decay estimates in
Proposition \ref{prop:weakdecayhormander} and the higher order sharp decay estimates for
components we control with the wave coordinate condition in Proposition
\ref{prop:wavecoorddecayhighlow} it only remains to generalize
Proposition \ref{prop:sharpfirstorderwave} to higher order.
In order to do that we will as before rely on the crucial Lemma \ref{lem:transversalder}
to control transversal derivatives in terms of tangential derivatives, which we control by
Proposition \ref{prop:weakdecayhormander}, and $\Box_0$ close to the light cone $|t-r|<1-c$. It therefore only remains to get control of $\Box_0 \widehat{\mathcal L}_Z^I h^1_{\mu\nu}$
close to the light cone $|t-r|<(1-c)t$, where $0<c<1$.  When $|t-r|<(1-c)t$ we have
by  \eqref{eq:Box0h0} and \eqref{eq:BoxgBox0h0}
\beq
{\mathcal L}_Z^I \Box_g h^1_{\mu\nu}
={\mathcal L}_Z^I F_{\mu\nu}
-{\mathcal L}_Z^I (\Box_g -\Box_0)h^0_{\mu\nu},\qquad |t-r|<(1-c)t,
\eq
where $\widehat{\mathcal L}_Z^I (\Box_g -\Box_0)h^0_{\mu\nu}$ is controlled by \eqref{eq:BoxgBox0h0est} using Proposition \ref{prop:weakdecayhormander}:
\beq\label{eq:BoxgBox0h0est2}
\big|{\mathcal L}_Z^I\, \big(\Boxr_g -\Box_0\big)h^0_{\mu\nu}\big| \les \frac{\varepsilon M}{(1+t+r)^{3-2\delta}(1+q_+)^{\gamma}}
\eq
and by \eqref{eq:inhomogeneousLieestimate}
\begin{multline}\label{eq:inhomogeneousLieestimate2}
\big|{\mathcal L}_Z^I F_{\mu\nu}(H)(\pa h,\pa h)\big|\\
\les \frac{\varepsilon(1\!+\!q_+)^{-1}}{1+t+r}\!\! \!\!\sum_{|J|\leq |I|}\!\! \!\!|\pa\widehat{\mathcal L}_Z^J h|
+\frac{\varepsilon^2(1\!+\!q_+)^{-4\delta}\!\!\!}{(1+t+r)^{3-4\delta}(1+|q|)}
+\frac{\varepsilon^3(1\!+\!q_+)^{-6\delta}\!\!\!}{(1+t+r)^{3-6\delta}(1+|q|)^3}
+\sum_{|K|\leq |I|/2\!\!\!\!\!\!\!\!\!\!\!\!\!\!\!\!\!}|\pa\widehat{\mathcal L}_Z^K h|\sum_{|J|\leq |I|-1\!\!\!\!\!\!\!\!\!\!\!} |\pa\widehat{\mathcal L}_Z^J h|
\end{multline}
It remains to estimate the difference $\Boxr_g-\Box_0$ and the commutators
\begin{equation*}
\Box_0 \widehat{\mathcal L}_Z^I h^1_{\mu\nu}
= {\mathcal L}_Z^I \Box_0 h^1_{\mu\nu}
+[ \Box_0\widehat{\mathcal L}_Z^I-{\mathcal L}_Z^I\Box_0] h^1_{\mu\nu}
={\mathcal L}_Z^I \Boxr_g h^1_{\mu\nu}
-{\mathcal L}_Z^I (\Boxr_g-\Box_0) h^1_{\mu\nu}
+[ \Box_0\widehat{\mathcal L}_Z^I-{\mathcal L}_Z^I\Box_0] h^1_{\mu\nu}
\end{equation*}
By
\eqref{eq:wavecommutatorH0Lie}
 \begin{equation}
\big|\big[ \Box_0 \widehat{\mathcal L}_Z^I- {\mathcal L}_Z^I \Box_0 \big]h^1_{\mu\nu}\big|
\les \frac{\varepsilon(1\!+\!|q|)^{-1}}{1\!+t+r}\!\sum_{|K|\leq |I|\!\!\!\!\!\!}\!\! \,\big|\pa\widehat{\mathcal L}_Z^K h^1_{\mu\nu}\big|.
 \end{equation}
and by
\beq\label{eq:wavecommutatorH1Lie2}
\big|\big[ (\Boxr_g \!-\Box_0)\widehat{\mathcal L}_Z^I -{\mathcal L}_Z^I (\Boxr_g \!-\Box_0) \big]h^1_{\mu\nu}\big|\les
\frac{\varepsilon(1+q_+)^{-\gamma}}{(1\!+\!t\!+\!r)^{1+\gamma-2\delta}(1\!+\!|q|)^{1-\gamma}}
\!\sum_{|K|\leq |I|\!\!\!\!\!\!}\!\! \,\big|\pa\widehat{\mathcal L}_Z^K h^1_{\mu\nu}\big|.
\eq
Since ${\mathcal L}_Z^I (\Boxr_g-\Box_0) h^1_{\mu\nu}
-\big[ (\Boxr_g \!-\Box_0)\widehat{\mathcal L}_Z^I-{\mathcal L}_Z^I (\Boxr_g \!-\Box_0) \big]h^1_{\mu\nu}
= (\Boxr_g-\Box_0) \widehat{\mathcal L}_Z^Ih^1_{\mu\nu}$, which can be stimated in the same way we obtain
\beq\label{eq:wavecommutatorH1Lie3}
\big|{\mathcal L}_Z^I( \Boxr_g \!-\Box_0)h^1_{\mu\nu}\big|\les
\frac{\varepsilon(1+q_+)^{-\gamma}}{(1\!+\!t\!+\!r)^{1+\gamma-2\delta}(1\!+\!|q|)^{1-\gamma}}
\!\sum_{|K|\leq |I|+1\!\!\!\!\!\!}\!\! \,\big|\pa\widehat{\mathcal L}_Z^K h^1_{\mu\nu}\big|\les
\frac{\varepsilon^2(1+q_+)^{-2\gamma}}{(1\!+\!t\!+\!r)^{2+\gamma-4\delta}(1\!+\!|q|)^{2-\gamma}}.
\eq
Summing up
\begin{multline}
\big| \Box_0 \widehat{\mathcal L}_Z^I h^1_{\mu\nu}\big|
\les \frac{\varepsilon(1\!+\!q_+)^{-1}}{1+t+r}\!\! \!\!\sum_{|J|\leq |I|}\!\! \!\!|\pa\widehat{\mathcal L}_Z^J h|
+\sum_{|K|\leq |I|/2\!\!\!\!\!\!\!\!\!\!\!\!\!\!\!\!\!}|\pa\widehat{\mathcal L}_Z^K h|\sum_{|J|\leq |I|-1\!\!\!\!\!\!\!\!\!\!\!} |\pa\widehat{\mathcal L}_Z^J h| \\
+\frac{\varepsilon^2(1\!+\!q_+)^{-4\delta}\!\!\!}{(1+t+r)^{3-4\delta}(1+|q|)}
+\frac{\varepsilon^2(1+q_+)^{-2\gamma}}{(1\!+\!t\!+\!r)^{2+\gamma-4\delta}(1\!+\!|q|)^{2-\gamma}}
\end{multline}

From Lemma \ref{lem:transversalder} and the estimate \eqref{eq:hormanderweakdecayh1} we get
\begin{lemma}\label{lem:transversaldersimplifiedLie} Let $D_t=\{(t,x);\, \big|t-|x|\big|\leq c_0 t \}$, for some constant
$0<c_0<1$, and $\overline{w}(q)=(1+q_+)^{1+\gamma^\prime}$, where $-1\leq \gamma^\prime<\gamma-2\delta$. Then
\begin{equation}\label{eq:transversaldersimplifiedLie}
(1+t+|x|) \,|\pa \widehat{\mathcal L}_Z^I
h^1(t,x)\, \overline{w}(q)| \les\varepsilon + \int_0^t (1+\tau)\|
\,\Box_0 \widehat{\mathcal L}_Z^I h^1(\tau,\cdot)\, \overline{w}\|_{L^\infty(D_\tau)} d\tau.
\end{equation}
\end{lemma}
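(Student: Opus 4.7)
\medskip

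The plan is to apply Lemma \ref{lem:transversalder} componentwise to $\phi = \widehat{\mathcal L}_Z^I h^1$ and sum over the null frame. Since $|\widehat{\mathcal L}_Z^I h^1| \lesssim \sum_{U,V \in \mathcal N} |(\widehat{\mathcal L}_Z^I h^1)_{UV}|$ and likewise for $\partial \phi$ and $\Box_0 \phi$, it suffices to bound each component. Lemma \ref{lem:transversalder} then gives
\[
(1+t+|x|)\,|\pa \phi_{UV}(t,x)\,\overline{w}(q)|
\lesssim A_1 + A_2 + A_3,
\]
where $A_1 = \sup_{0\leq\tau\leq t}\sum_{|J|\leq 1}\|Z^J\phi(\tau,\cdot)\,\overline{w}\|_{L^\infty}$, $A_2$ is the $\Box_0$--integral over $D_\tau$, and $A_3 = \int_0^t\sum_{|J|\leq 2}(1+\tau)^{-1}\|Z^J\phi(\tau,\cdot)\,\overline{w}\|_{L^\infty(D_\tau)}\,d\tau$. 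The term $A_2$, once summed over $U,V$, produces precisely the integral in the conclusion, so the task reduces to showing $A_1 + A_3 \lesssim \varepsilon$.

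For $A_1$ and $A_3$ I would use the elementary observation that the modified Lie derivative differs from $Z$ by a constant times the identity, so $|Z^J \widehat{\mathcal L}_Z^I h^1| \lesssim \sum_{|K|\leq |I|+|J|}|Z^K h^1|$ (valid provided $|I|+2 \leq N-3$, which sits inside the range where Proposition \ref{prop:weakdecayhormander} applies). Multiplying the pointwise bound \eqref{eq:hormanderweakdecayh1} by the weight yields
\[
|Z^K h^1(\tau,x)|\,\overline{w}(q)
\lesssim \frac{\varepsilon(1+\tau)^{2\delta}(1+q_+)^{1+\gamma^\prime-\gamma}}{1+\tau+r}.
\]
For $A_1$, split into $r\leq \tau$, where $q_+=0$ gives a bound of $\varepsilon(1+\tau)^{-1+2\delta}\lesssim\varepsilon$, and $r>\tau$, where $1+\tau+r\geq 1+r\geq 1+q_+$ and thus the expression is bounded by $\varepsilon(1+r)^{2\delta+\gamma^\prime-\gamma}\lesssim\varepsilon$ since $2\delta+\gamma^\prime-\gamma < 0$ by hypothesis. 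For $A_3$ one uses that inside $D_\tau$ we have $r\sim\tau$ and $q_+\leq|q|\leq c_0\tau$, so
\[
(1+\tau)^{-1}\|Z^J\phi(\tau,\cdot)\,\overline{w}\|_{L^\infty(D_\tau)}
\lesssim \varepsilon(1+\tau)^{2\delta+\gamma^\prime-\gamma-1},
\]
which is integrable on $[0,t]$ since the exponent is strictly less than $-1$.

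The main bookkeeping obstacle is ensuring that the exponent $1+\gamma^\prime-\gamma$ introduced by the weight $\overline w$ is neutralised in both the exterior ($r>\tau$) and wave-zone ($r\sim\tau$) regimes; both crucially rely on the stricter hypothesis $\gamma^\prime<\gamma-2\delta$, which gives exactly the extra margin $-2\delta$ needed to absorb the $(1+\tau)^{2\delta}$ loss in the Klainerman--Sobolev estimate \eqref{eq:hormanderweakdecayh1}. No further structural input on $\Box_0\widehat{\mathcal L}_Z^I h^1$ is required at this stage, since all of the analysis of that term, including the commutator expansion \eqref{eq:wavecommutatorH0Lie}--\eqref{eq:wavecommutatorH1Lie3} and the nonlinear bound \eqref{eq:inhomogeneousLieestimate2}, is performed separately when invoking this lemma in Proposition \ref{prop:higherordersharpdecay}.
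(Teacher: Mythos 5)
Your proposal is correct and follows exactly the route the paper intends: the paper states this lemma with no written proof beyond "From Lemma \ref{lem:transversalder} and the estimate \eqref{eq:hormanderweakdecayh1} we get...", and your argument supplies precisely those details — apply Lemma \ref{lem:transversalder} to $\phi=\widehat{\mathcal L}_Z^I h^1$, keep the $\Box_0$ integral, and absorb the sup term and the $(1+\tau)^{-1}$ integral term into $\varepsilon$ using \eqref{eq:hormanderweakdecayh1} together with the strict inequality $\gamma'<\gamma-2\delta$. The exponent bookkeeping in both regimes checks out, so nothing further is needed.
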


\begin{proposition} \label{prop:hogherordersharpdecay} If the weak energy bounds and initial bounds hold then we have for any $0\leq \gamma^\prime<\gamma-2\delta$, and $|I|=k\leq N-5$, then there are constants $c_k$ such that
\beq\label{eq:sharpHhigherlowder}
\big| \pa \widehat{\mathcal L}_Z^I h^1\big|\leq c_k \varepsilon(1+t)^{c_k\varepsilon}  (1+t+r)^{-1}(1+q_+)^{-1-\gamma^\prime}.
\eq
The same estimates hold for $h$ in place of $h^1$ if $\gamma^\prime=0$.
\end{proposition}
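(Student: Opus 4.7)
The plan is to prove Proposition \ref{prop:hogherordersharpdecay} by induction on $k=|I|$, with the case $k=0$ already established by Proposition \ref{prop:sharpfirstorderwave} (where we note that for $k=0$ we may actually take $\gamma' < \gamma - 4\delta$, so a fortiori $\gamma' < \gamma - 2\delta$). For the inductive step, I will apply Lemma \ref{lem:transversaldersimplifiedLie} to $\widehat{\mathcal L}_Z^I h^1$ with the weight $\overline{w}(q) = (1+q_+)^{1+\gamma'}$, reducing the problem to estimating $(1+\tau) \Vert \Box_0 \widehat{\mathcal L}_Z^I h^1(\tau,\cdot)\, \overline{w}\Vert_{L^\infty(D_\tau)}$, where $D_\tau$ is a fixed cone neighborhood of the light cone. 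Outside $D_\tau$, the desired bound is immediate from the Klainerman–Sobolev decay \eqref{eq:hormanderweakdecayderh1} combined with the fact that $|q| \sim 1+t+r$ there.

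For the integrand inside $D_\tau$, I would invoke the bound on $\Box_0 \widehat{\mathcal L}_Z^I h^1_{\mu\nu}$ established at the end of Section \ref{subsec:sharpdecayhighorder}. The cubic–type contributions $\varepsilon^2 (1+q_+)^{-4\delta}/((1+t+r)^{3-4\delta}(1+|q|))$ and the commutator remainder $\varepsilon^2(1+q_+)^{-2\gamma}/((1+t+r)^{2+\gamma-4\delta}(1+|q|)^{2-\gamma})$ contribute, after multiplication by $(1+\tau)\overline{w}$, quantities of size $\varepsilon^2(1+\tau)^{4\delta-2+\gamma'-\gamma+4\delta}$ (up to log factors), which are integrable in $\tau$ since $\gamma' < \gamma - 8\delta$, hence give only an $\varepsilon^2$ contribution (recall $8\delta < \gamma$ and $M\leq \varepsilon$ from \eqref{eq:deltacondition}). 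The quadratic term $\sum_{|K|\leq |I|/2} |\pa \widehat{\mathcal L}_Z^K h|\sum_{|J|\leq |I|-1} |\pa \widehat{\mathcal L}_Z^J h|$ is handled by using Proposition \ref{prop:sharpfirstorderwave} or (for $|K|\geq 1$) the inductive hypothesis on the low–order factor, which contributes $\varepsilon(1+\tau)^{c_{|I|/2}\varepsilon}(1+\tau+r)^{-1}(1+q_+)^{-1-\gamma'}$, and leaves the top order factor (of order $\leq |I|-1$) to be bounded by the inductive hypothesis. Multiplied by $(1+\tau)\overline{w}$ and integrated, this also yields only an $\varepsilon(1+t)^{c\varepsilon}$ contribution.

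The essential term is the first one, $\varepsilon(1+q_+)^{-1}(1+t+r)^{-1} \sum_{|J|\leq |I|} |\pa \widehat{\mathcal L}_Z^J h|$, which after multiplication by $(1+\tau)\overline{w}$ becomes $\varepsilon(1+q_+)^{\gamma'} \sum_{|J|\leq |I|} |\pa \widehat{\mathcal L}_Z^J h|$, and critically involves the \emph{top order} $|J|=|I|$ on the right. To close the argument I would introduce
\[
	A_k(t) \;:=\; \sup_{0\leq \tau\leq t}\,\sup_{x}\, (1+\tau+|x|)(1+q_+)^{1+\gamma'} \!\!\!\sum_{|I|\leq k}\! |\pa \widehat{\mathcal L}_Z^I h^1(\tau,x)|,
\]
and, after splitting $h=h^0+h^1$ and absorbing the $h^0$ contributions using $M\leq\varepsilon$ together with the explicit form \eqref{eq:h0def}, derive from Lemma \ref{lem:transversaldersimplifiedLie} an inequality of the form
\[
	A_k(t) \;\leq\; C_k \varepsilon \;+\; C_k \varepsilon \int_0^t \frac{A_k(\tau)}{1+\tau}\, d\tau \;+\; C_k \varepsilon \int_0^t (1+\tau)^{c_{k-1}\varepsilon - 1} d\tau,
\]
where the second integral encodes the inductively–controlled terms of order $\leq k-1$. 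Grönwall's inequality then yields $A_k(t) \leq c_k \varepsilon (1+t)^{c_k\varepsilon}$ for a suitable $c_k$, which is the claimed bound.

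The main obstacle is the logarithmic loss in the leading order term: the factor $(1+q_+)^{-1}$ in the coefficient of $\sum_{|J|\leq |I|}|\pa \widehat{\mathcal L}_Z^J h|$ is exactly compensated by the weight $\overline{w}=(1+q_+)^{1+\gamma'}$ only up to $(1+q_+)^{\gamma'}$, but since $\gamma' < \gamma - 2\delta$ the full factor is absorbed by the decay available from the wave coordinate condition \eqref{eq:wavecoordinatederivativeLie}–\eqref{eq:wavecoordinatefunctionLie} for the $LL$ component of $H_1$; this forces the appearance of the top order on the right-hand side and hence the Grönwall loss $(1+t)^{c_k\varepsilon}$, which is unavoidable and matches what was seen for first derivatives in \eqref{eq:sharpHanyder}.
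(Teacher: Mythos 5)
Your proposal is correct and follows essentially the same route as the paper: induction on $k$ with base case from Proposition \ref{prop:sharpfirstorderwave}, the $L^\infty$--$L^\infty$ estimate of Lemma \ref{lem:transversaldersimplifiedLie} applied to the decomposition of $\Box_0\widehat{\mathcal L}_Z^I h^1$ from Section \ref{subsec:sharpdecayhighorder}, and a Gr\"onwall argument in the quantity you call $A_k(t)$ (the paper's $N_k(t)$), with the top-order linear term producing the unavoidable $(1+t)^{c_k\varepsilon}$ loss and the lower-order quadratic terms controlled by the inductive hypothesis. The only cosmetic difference is that the paper bounds the lower-order contribution as $N_{k-1}(\tau)^2/(1+\tau)$ rather than splitting the two factors as you do, which changes nothing substantive.
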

 \begin{proof} Let
 $
 N_k(t)=(1+t) \sum_{|I|\leq k}\| \widehat{\mathcal L}_Z^I h^1(t,\cdot)\overline{w}\|_{L^\infty(D_t)}
 $. We will prove \eqref{eq:sharpHhigherlowder} by induction noting that its true for $k=0$
 by \eqref{eq:sharpHanyder}. Then by Lemma \ref{lem:transversaldersimplifiedLie} we have for
 $k\geq 1$
 \beq
 N_k(t)\les \varepsilon +\int_0^t \frac{\varepsilon}{1+\tau} N_k(\tau) d\tau
 +\int_0^t \frac{1}{1+\tau} N_{k-1}(\tau)^2 \, d\tau,
 \eq
 where the bounds \eqref{eq:weakdecayT} have been used, and by the induction hypothesis $N_k(\tau)^2 \les \varepsilon^2 c_{k-1}^2 , (1+\tau)^{2c_{k-1}\varepsilon}$ so for some $c_k\geq 4c_{k-1}$
 \beq
 N_k(t)\leq  c_k\int_0^t \frac{\varepsilon}{1+\tau} N_k(\tau) d\tau
 +c_k\, \varepsilon (1+t)^{2c_{k-1}\varepsilon} .
 \eq
 Using Gr\"onwall's lemma with $G$ denoting the integral we get
 $G'(t)\leq \varepsilon (1+t)^{-1}c_k\big( G(t)+\varepsilon(1+t)^{2c_{k-1}\varepsilon}\big) $ and multiplying with the integrating factor we get
 $\big( G(t) (1+t)^{-c_k\varepsilon}\big)^\prime
 \leq c_k\varepsilon^2 (1+t)^{2 c_{k-1}\varepsilon-c_k\varepsilon -1} $. Assuming that
 $c_k \geq 4c_{k-1}$ We get $G(t)\leq c_k \varepsilon(1+t)^{c_k\varepsilon}$
 and hence $N_k(t)\leq c_k^\prime \varepsilon(1+t)^{c_k\varepsilon}$.
\end{proof}

\subsection{The energy estimate}
\subsubsection{The basic energy estimate for the wave equation}
In \cite{LR3} the following energy estimate was proven:
\begin{lemma}
\label{lem:Decayenergy}
Let $\phi$ be a solution of the wave equation $\Boxr_g \phi=F$,
with the metric $g$ such that for $H^{\alpha\beta}=g^{\alpha\beta}-m^{\alpha\beta}$;
\begin{align}
&(1+|q|)^{-1} |H|_{LL} +|\pa H|_{LL}+|\overline{\pa} H|\leq
C\varepsilon' (1+t)^{-1},\nn\\
& (1+|q|)^{-1}\,|H|+ |\pa H|\leq C\varepsilon' (1+t)^{-\frac 12}
(1+|q|)^{-\frac 12}(1+q_-)^{-\mu}.\label{eq:metricdecay}
\end{align}
for some $\mu>0$. Set
$$
w=\begin{cases} (1+|r-t|)^{1+2\gamma},\,\,\, &r>t\\
1+(1+|r-t|)^{-2\mu},\quad &r\leq t\end{cases}\qquad\text{and}\qquad
w^\prime=\begin{cases} (1+2\gamma)(1+|r-t|)^{2\gamma},\,\,\, &r>t\\
2\mu (1+|r-t|)^{-1-2\mu},\quad &r\leq t\end{cases}
$$
Then for any $0<\ga\le 1$, and $0<\varepsilon'\leq \gamma/C_1$, we have
\begin{multline}\label{eq:firstenergy}
 \int_{\Si_{t}}\!\! |\pa\phi|^{2}\,w dx + \!\int_{0}^{t}\!\! \int_{\Si_{\tau}}
\!\!|\pab\phi|^{2}\,w^{\,\prime} dx d\tau \leq
 8\!\int_{\Si_{0}}\!\! |\pa \phi|^{2}\,w dx+
\!\int_0^t\!\!\frac{C\varepsilon}{1+\tau}\int_{\Si_{\tau}} \!\!  |\pa\phi|^{2}
 \, w \,dx \, d\tau\\
 +16 \int_0^t\Big(\int_{\Si_{\tau}}\!\!\! |F|^2 wdx\Big)^{\!1/2}\Big(\int_{\Si_{\tau}} \!\!\! |\pa \phi|^2 w dx\Big)^{\!1/2}\! d\tau
\end{multline}
\end{lemma}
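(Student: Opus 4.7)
The plan is to apply the standard energy multiplier method with the vector field $\partial_t$ weighted by $w(q)$, and then carefully track every error term produced by the deviation of $g$ from $m$. Writing $g^{\alpha\beta}=m^{\alpha\beta}+H^{\alpha\beta}$, I would begin from the pointwise identity
\[
w\, \partial_t\phi\, \Boxr_g\phi
= \partial_\alpha\bigl(w g^{\alpha\beta}\partial_\beta\phi\,\partial_t\phi\bigr)
-\tfrac12\partial_t\bigl(w g^{\alpha\beta}\partial_\alpha\phi\,\partial_\beta\phi\bigr)
-(\partial_\alpha w)\,g^{\alpha\beta}\partial_\beta\phi\,\partial_t\phi
+\tfrac12(\partial_t w)\,g^{\alpha\beta}\partial_\alpha\phi\,\partial_\beta\phi
+\mathcal{E},
\]
where $\mathcal{E}$ collects the ``variable coefficient'' terms $\tfrac12 w(\partial_t g^{\alpha\beta})\partial_\alpha\phi\,\partial_\beta\phi - w(\partial_\alpha g^{\alpha\beta})\partial_\beta\phi\,\partial_t\phi$. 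I would then integrate over $[0,t]\times\Sigma_\tau$, so that the first two terms become boundary terms on $\Sigma_t$ and $\Sigma_0$, and the remaining terms become the bulk error.

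The second step is to identify the positive terms. The boundary contribution is $\int_{\Sigma_\tau}(-\tfrac12 g^{00})\bigl((\partial_t\phi)^2+|\nabla\phi|^2\bigr)w\,dx$ up to cross terms; under the smallness of $H$ in \eqref{eq:metricdecay} this is comparable to $\tfrac12\int|\partial\phi|^2 w$, giving the factor $8$ once all absorptions into the left side are taken. The key positive bulk term comes from the Minkowski piece of $\tfrac12(\partial_t w)m^{\alpha\beta}\partial_\alpha\phi\,\partial_\beta\phi -(\partial_i w)m^{ij}\partial_j\phi\,\partial_t\phi$, because $\partial_t w=-w'$ and $\partial_r w=w'$, and the algebra
\[
\tfrac{w'}{2}\bigl((\partial_t\phi)^2+(\partial_r\phi)^2\bigr)-w'\partial_r\phi\,\partial_t\phi+\tfrac{w'}{2}|\nabla\!\!\!/\,\phi|^2
=\tfrac{w'}{2}\bigl((L\phi)^2+|\nabla\!\!\!/\,\phi|^2\bigr)\simeq \tfrac{w'}{2}|\overline{\partial}\phi|^2,
\]
produces exactly the $\int\!\int |\overline{\partial}\phi|^2 w^{\prime}$ term. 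The $F$ contribution on the right is handled by Cauchy--Schwarz, giving the final $|F|_{L^2(w)}$--$|\partial\phi|_{L^2(w)}$ term with constant $16$.

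The main obstacle is the bulk error $\mathcal{E}$ together with the $H$ part of $\tfrac12(\partial_t w) g^{\alpha\beta}\partial_\alpha\phi\,\partial_\beta\phi-(\partial_i w)H^{ij}\partial_j\phi\,\partial_t\phi$. Here I would decompose with respect to the null frame $\{L,\underline{L},S_1,S_2\}$. The dangerous pairing is $w'\,H^{LL}(\partial_q\phi)^2$, because $\partial_q\phi=\underline{L}\phi/(-2)$ is \emph{not} controlled by $|\overline{\partial}\phi|$. Using $|\partial w|\lesssim w$ and the hypothesis $|H|_{LL}\lesssim \varepsilon'(1+|q|)(1+t)^{-1/2}(1+|q|)^{-1/2}(1+q_-)^{-\mu}\cdot(1+|q|)$ -- or more directly the first line of \eqref{eq:metricdecay}, which gives $|H|_{LL}\lesssim \varepsilon'(1+|q|)(1+t)^{-1}$ -- together with $w'(q)\lesssim w(q)/(1+|q|)$, this error is bounded by $C\varepsilon(1+t)^{-1}\int |\partial\phi|^2 w$, which is the right-hand Gr\"onwall-ready term. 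The remaining ``generic'' $H$ components are paired with at least one $\overline{\partial}\phi$ (since in the null frame, contracting $H^{\alpha\beta}\partial_\alpha\phi\,\partial_\beta\phi$ only produces $(\partial_q\phi)^2$ through the $LL$ component), so the second bound of \eqref{eq:metricdecay} gives $|\partial H|+|H|/(1+|q|)\lesssim \varepsilon'(1+t)^{-1/2}(1+|q|)^{-1/2}(1+q_-)^{-\mu}$, and these contributions are absorbed: $\tfrac12\int\!\!\int w'|\overline{\partial}\phi|^2$ dominates them via Cauchy--Schwarz provided $\varepsilon'/\gamma$ is small, accounting for the condition $\varepsilon'\leq \gamma/C_1$. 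The divergence term $\partial_\alpha g^{\alpha\beta}$ is treated the same way after using the null-frame decomposition $\partial_\alpha F^{\alpha\nu}=L_\alpha \partial_q F^{\alpha\nu}-\underline{L}_\alpha\partial_s F^{\alpha\nu}+A_\alpha \partial_A F^{\alpha\nu}$, so that dangerous components pair with good derivatives. Collecting everything yields precisely \eqref{eq:firstenergy}.
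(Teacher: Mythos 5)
The paper does not actually prove this lemma: it is quoted verbatim from \cite{LR3}, and the proof there is precisely the weighted multiplier argument you describe, so your sketch is correct in structure and matches the (cited) proof — the perfect square $w'\big((L\phi)^2+|\nabla\!\!\!/\,\phi|^2\big)$ produced by differentiating the weight, the isolation of the dangerous pairing $w'H_{LL}(\partial_q\phi)^2$ controlled by the first line of \eqref{eq:metricdecay} together with $w'\lesssim w/(1+|q|)$, and the absorption of the remaining null components into the spacetime integral of $w'|\overline{\partial}\phi|^2$, which is where the condition $\varepsilon'\leq\gamma/C_1$ enters. One bookkeeping slip worth fixing: as displayed, $\tfrac12\big((\partial_t\phi)^2+(\partial_r\phi)^2\big)-\partial_r\phi\,\partial_t\phi=\tfrac12(\partial_t\phi-\partial_r\phi)^2$ is the \emph{bad} derivative $\tfrac12(\underline{L}\phi)^2$; once you include the $m^{00}$ contribution of $-(\partial_\alpha w)m^{\alpha\beta}\partial_\beta\phi\,\partial_t\phi$ (which you dropped by restricting to spatial indices), the cross term flips sign and the combination is indeed $-\tfrac{w'}{2}\big((L\phi)^2+|\nabla\!\!\!/\,\phi|^2\big)$ on the right, i.e.\ the good positive term on the left, as you claim.
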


\subsubsection{The lowest order energy estimate for Einstein's equations}
Let
\beq
E_k(t)=\sum_{|I|\leq k}\int_{\Sigma_t}|\pa Z^I h^1|^2 w dx\qquad\text{and}\qquad
S_k(t)=\sum_{|I|\leq k}\int_0^t\int_{\Sigma_t}|\pa Z^I h^1|^2 w^\prime dx d\tau,
\eq
 By Lemma \ref{lem:Decayenergy} we have
\beq
E_0(t)+S_0(t)\leq 8 E_0(0)+\int_0^t \frac{\varepsilon}{1+\tau}E_0(\tau)+\| F(\tau,\cdot) w^{1/2}\|_{L^2}E_0(\tau)^{1/2} \,d\tau,
\eq
where $F=\big|F_{\mu\nu}(h)(\pa h,\pa h)+\widehat{T}_{\mu\nu}-\Boxr_g h^0_{\mu\nu}\big|$, where with $\overline{h}=h_{TS}$, $T,S\in\mathcal{S}$, we have
\beq
|F_{\mu\nu}|\les |{\slash \!\!\!\!}P_{\mu\nu} (\pa h,\pa h)|+|\overline{\pa} h|\, |\pa h| +|h|\, |\pa h|^2\les \big(|\pa \overline{h}|+|\overline{\pa} h|+|h|\, |\pa h|\big)|\pa h|
\les \frac{\varepsilon |\pa h|}{(1+t+r)(1+q_+)}.
\eq
Writing $h=h^0+h^1$ we see that its enough to estimate
\beq
F^j=\frac{\varepsilon |\pa h^j|}{(1+t+r)(1+q_+)}
\eq
for $j=0,1$. We have
\beq
\|F^{1}(t,\cdot) w^{1/2}\|_{L^2}\les \varepsilon(1+t)^{-1}\|\pa h^1(t,\cdot) w^{1/2}\|_{L^2}=\varepsilon(1+t)^{-1}E_0(t)^{1/2},
 \eq
 and $F^0\les M\varepsilon(1+t+r)^{-3}(1+q_+)^{-1}$ so
 \beq
 \|F^0(t,\cdot)w^{1/2}\|_{L^2}
 \les \varepsilon M\Big(\int \frac{(1+q_+)^{2\gamma-1}}{(1+t+r)^6}\, r^2 dr\Big)^{1/2}
 \les \frac{M\varepsilon }{(1+t)^{2-\gamma}}.
 \eq
  As far as the energy estimate one could have picked $h^0_{\mu\nu}$ to satisfy $\Boxr_g h^0_{\mu\nu}=0$
one we wouldn't have to do anything further. However, since we didn't do this we will
estimate using \eqref{eq:BoxgBox0h0} and \eqref{eq:Box0h0}
\beq
\big| \Boxr_g h^0_{\mu\nu}\big|\leq \big|[\Boxr_g-\Box_0]h^0_{\mu\nu}\big|
+\big|[\Box_0 -\Box]h^0_{\mu\nu}|+\big|\Box h^0_{\mu\nu}|\leq \frac{C_0 |H_1| M}{(1+t+r)^3} +\frac{M^2\chi^\prime\big(\tfrac{r}{t+1}\big)}{(1+t+r)^4}
+\frac{M\chi^\prime\big(\tfrac{r}{t+1}\big)}{(1+t+r)^3}
\eq
and hence using Hardy's inequality
\beq
\| \Boxr_g h^0_{\mu\nu}(t,\cdot) w^{1/2}\|_{L^2}\leq CM (1+t)^{-2} \| \pa H_1(t,\cdot) w^{1/2}\|_{L^2}+ M_0 M(1+t)^{-3/2},
\eq
where $M_0$ is a universal constant.

Hence
\begin{multline}
E_0(t)\leq 8 E_0(0) +C'\varepsilon\int_0^t \frac{E_0(\tau)}{1+\tau}d\tau\\
+C'\varepsilon\int_0^t\!\!\frac{ME_0(\tau)^{1/2}}{(1+t)^{2-\gamma}}\, d\tau+ 16M_0\int_0^t\frac{ME_0(\tau)^{1/2}}{(1+\tau)^{3/2}} d\tau
+16\int_0^t \Vert \widehat{T}  (\tau,\cdot) \Vert_{L^{2}}E_0(\tau)^{1/2}\, d\tau,
\end{multline}
for some univeral constant $M_0$.

\subsection{Higher order $L^2$ Energy estimates}

For this section we have to make the following smallness assumption on $\varepsilon$:
\beq
c_{k'}\varepsilon\leq \delta
\eq
where $c_{k'}$ are the constants in Proposition \ref{prop:hogherordersharpdecay}.

\subsubsection{$L^2$ estimate of the inhomogeneous term}
It follows from \eqref{eq:inhomogeneousLieestimate} that with $k=|I|$ and $k^\prime=[k/2]+1$
we have
\begin{equation}
\big|{\mathcal L}_Z^I F_{\mu\nu}(h)(\pa h,\pa h)\big|
\les F_1^{k0}+F_1^{k1}+F_2^{k0}+F_2^{k1}+F_3^{k0}+F_3^{k1}+F_4^{k0}+F_4^{k1},
\end{equation}
where
\beq
F_1^{kj}=\frac{\varepsilon\!\!\!\!}{(1+t+r)(1+q_+)}\! \sum_{|J|\leq k}\!\! |\pa\widehat{\mathcal L}_Z^J h^j|,\qquad
F_2^{kj}=\frac{c_{k^\prime} \varepsilon(1+t)^{c_{k^\prime}\varepsilon} }{ (1+t+r)(1+q_+)}\sum_{|J|\leq k-1}\!\! \!\!|\pa\widehat{\mathcal L}_Z^J h^j|
\eq
\beq
F_3^{kj} =\frac{\varepsilon^2(1+t)^{4\delta}(1+q_+)^{-4\delta}\!\!\!}{(1+t+r)^2(1+|q|)^2} \sum_{|J|\leq |I|}\!\! |\widehat{\mathcal L}_Z^J h^j|,\qquad
F_4^{kj}=\frac{\varepsilon(1+t)^{2\delta}(1+q_+)^{-2\delta}\!\!\!}{(1+t+r)(1+|q|)}\sum_{|J|\leq |I|}\!\! |\overline{\pa} \widehat{\mathcal L}_Z^J h^j|\,
\eq
For $i=1,2$ we have
\beq
\Big(\int |F_1^{k1}|^2 w\, dx\Big)^{1/2} \les
\frac{\varepsilon}{1+t} E_{k}(t)^{1/2} ,\qquad
\Big(\int |F_2^{k1}|^2\,  w\, dx\Big)^{1/2} \leq
\frac{c_{k^\prime}\varepsilon(1+t)^{c_{k^\prime} \varepsilon} }{1+t}
E_{k-1}(t)^{1/2} .
\eq
We have
\beq
|F_1^{k0}|\les \frac{\varepsilon M\!\!\!\!}{(1+t+r)^{3}(1+q_+)},
\qquad
|F_2^{k0}|\les \frac{\varepsilon M(1+t)^{c_{k^\prime}\varepsilon}}{(1+t+r)^{3}(1+q_+)},
\eq
and hence
\beq
\Big(\int |F_1^{k0}|^2 w\, dx\Big)^{1/2} \les
\frac{\varepsilon M}{(1+t)^{2-\gamma}}  ,\qquad
\Big(\int |F_2^{k0}|^2\,  w\, dx\Big)^{1/2} \leq
\frac{c_{k^\prime}\varepsilon M }{(1+t)^{2-\gamma-c_{k'}\varepsilon}} .
\eq
For $i=3$ we will use Hardy's inequality Lemma \ref{lemma:hardy}
but first we divide it up into two terms for $j=0,1$:
\beq
|F_3^{k0}|\les \frac{\varepsilon^2  M (1+t)^{4\delta}(1+q_+)^{-4\delta}\!\!\!}{(1+t+r)^3(1+|q|)^2},
\eq
and hence
\beq
\Big(\int |F_3^{k0}|^2\, w\, dx \Big)^{1/2}
\les \frac{\varepsilon^2 M}{(1+t)^{2-4\delta} }.
\eq
By Hardy's inequality
\beq
\Big(\int |F_3^{k1}|^2\,  w\, dx \Big)^{1/2}
\les\sum_{|J|\leq |I|}\!\! \frac{\varepsilon}{(1+t)^{2-4\delta} }\Big( \int  \frac{|\widehat{\mathcal L}_Z^J h^1|^2}{(1+|q|)^2} wdx \Big)^{1/2}\les \frac{\varepsilon}{(1+t)^{2-4\delta} }
E_k(t)^{1/2}.
\eq
Moreover
\beq
|F_4^{k0}|\les \frac{\varepsilon M (1+t)^{2\delta}(1+q_+)^{-2\delta}\!\!\!}{(1+t+r)^3(1+|q|)},
\eq
and hence
\beq
\Big(\int |F_4^{k0}|^2\, w\, dx \Big)^{1/2}
\les \frac{\varepsilon M}{(1+t)^{3-2\gamma} }.
\eq
The last term $F_4^{k1}$ will be estimated differently in terms of the space term integral.
We have
\begin{equation*}
\int |F_4^{k1}|^2 w\, dx
\les \int
\frac{\varepsilon^2(1+t)^{4\delta}(1+q_+)^{-4\delta}\!\!\!}{(1+t+r)^2(1+|q|)^2}\sum_{|J|\leq |I|}\!\! |\overline{\pa} \widehat{\mathcal L}_Z^J h^1|^2\, w dx
\les
\frac{\varepsilon^2}{(1+t)^{2-4\delta}}
\int
\sum_{|J|\leq |I|}\!\! |\overline{\pa} \widehat{\mathcal L}_Z^J h^1|^2\, w^\prime  dx
\end{equation*}
It follows that
\begin{equation*}
\int \Big(\int |F_4^{k1}|^2  w\, dx\Big)^{1/2} E_k(\tau)^{1/2} \,d\tau
 \leq C
    \big(\varepsilon S_k(t)\big)^{1/2}\Big(\int_0^t \frac{\varepsilon E_k(\tau)\,d\tau}{(1+\tau)^{2-4\delta}} \Big)^{1/2}
    \leq
    \varepsilon S_k(t)
    +C^2\int_0^t \frac{\varepsilon E_k(\tau)\,d\tau}{(1+\tau)^{2-4\delta}} .
\end{equation*}
Summing up using that $\delta\leq 1/4$ we have
\begin{equation}\label{eq:inhomogeneousenergy}
\int_0^t \!\!\int \!\!\big|{\mathcal L}_Z^I F_{\mu\nu}\big|
\,\big|\pa{\mathcal L}_Z^I h^1\!\big| w\, dx d\tau
\leq C^{\prime}\!\!\int_0^t\!\!\!\Big( \frac{\varepsilon E_{k}(\tau)^{1/2}\!\!\!\!\!}{1+\tau} +
\frac{c_{k^\prime}\varepsilon (1\!+\!\tau)^{ c_{k^\prime} \varepsilon}\!\!\!\!\!\!}{1+\tau}
E_{k-1}(\tau)^{1/2}+\frac{\varepsilon M}{(1+t)^{2-\gamma-c_{k'}\varepsilon}}\Big)E_k(\tau)^{1/2} d\tau+\varepsilon S_k(t)
\end{equation}

\subsubsection{Equivalence of norms} The inhomogeneous terms contain factors
of $\pa \widehat{\mathcal L}_Z^I h$  which we estimate by writing $h=h_0+h^1$, and estimate the $L^2$ norm factors with $h^1$ in terms of the energy of $h^1$ whereas the $L^2$ norms of $h^0$ can be estimated directly. The commutator terms will in addition contain factors of
$\widehat{\mathcal L}_Z^I H$, where we can also write $H=H_0+H_1$ and estimate the factors with $H_0$ directly since it is explicit, and for the factors with $\widehat{\mathcal L}_Z^I H_1$ we first use Hardy's inequality to estimate them in terms of $\pa \widehat{\mathcal L}_Z^I H_1$.
However $H_1$  is only approximately equal to $-h^1$. We have that $H=-h+K(h)$, where $K(h)=O(h^2)$.
and hence $H_1=-h^1+K(h)-h^0-H_0$. Differentiating we see that to conclude that the norms or $H_1$ are approximately bounded by those of $h^1$ we have to estimate factors of the form
 $\widehat{\mathcal L}_Z^J h\,\, \pa \widehat{\mathcal L}_Z^K h$, with $|J|+|K|\leq |I|$,
 in $L^2$ with respect to the measure $w$. Again this can be estimated by writing $h=h^0+h^1$ and estimate the factors with $h^1$ in terms of the energy (after possibly using Hardy's inequality) and estimate the explicit factors with $h^0$ directly. The conclusion of this
 process is that
 \beq
 \| \pa \widehat{\mathcal L}_Z^I K(h)(t,\cdot)\,\, w^{1/2}\|_{L^2}\les \varepsilon \big(M+E_k(t)^{1/2}\big),\quad \text{if}\quad |I|\leq k.
 \eq
 Hence since $h^0=-H_0$ it follows that
 \beq
{\sum}_{|I|\leq k}  \| \pa \widehat{\mathcal L}_Z^I H_1(t,\cdot)\,\, w^{1/2}\|_{L^2}\les
\varepsilon M +E_k(t)^{1/2}
 \eq
 and similarly for the space time integrals of tangential components.

 \subsubsection{$L^2$ estimate of the wave operator applied to $h^0$}
 By \eqref{eq:BoxgBox0h0est} using Hardy's inequality we have
\beq
\|\big({\mathcal L}_Z^I\, \big(\Boxr_g -\Box_0\big)h^0\big)(t,\cdot) w^{1/2} \|_{L^2} \les \frac{M}{(1+t)^{2}} E_k(t)^{1/2}
\eq
 and by \eqref{eq:Box0h0est} we have
 \beq
 \sum_{|I|\leq k}\| \big({\mathcal L}_Z^I\, \Box_0 h^0\big)(t,\cdot) w^{1/2}\|_{L^2}^2
 \les M_k^2 \frac{M^2}{(1+t)^{3}},
 \eq
 for some universal constant $M_k$.

\subsubsection{$L^2$ estimates of the wave commutator }
It remains to estimate the commutator which by
\eqref{eq:wavecommutatorH0Lie} and \eqref{eq:wavecommutatorH1Lie} is bounded by
\begin{equation*}
\big|\big[ \Boxr_g \widehat{\mathcal L}_Z^I -{\mathcal L}_Z^I  \Boxr_g \big]h^1_{\mu\nu}\big|
\leq \big|\big[ \Box_0 \widehat{\mathcal L}_Z^I -{\mathcal L}_Z^I\Box_0\big]h^1_{\mu\nu}\big|
+\big|\big[ (\Boxr_g -\Box_0)\widehat{\mathcal L}_Z^I- {\mathcal L}_Z^I(\Boxr_g -\Box_0)\big]h^1_{\mu\nu}\big|
\les F_5^k + F_6^k+F_7^k
\end{equation*}
where
 \beq
 F_5^k=\frac{\varepsilon}{1\!+t}\!\sum_{|K|\leq |I|\!\!\!\!\!\!}\!\! \,\big|\pa\widehat{\mathcal L}_Z^K h^1_{\mu\nu}\big|\!
 \eq
 and
 \beq
F_6^{k}=\frac{\varepsilon(1+q_+)^{-\gamma}}{(1+t+r)^{2-2\delta}} \!\!
\sum_{|J|\leq |I|}\!\!\!\frac{\big|\widehat{\mathcal L}_Z^J  H_1\big|}{1+|q|},\qquad
F_7^{k}=\frac{\varepsilon(1+t)^{2\delta}(1+q_+)^{-\gamma}}{(1+t+r)(1+|q|)} \!\!
\sum_{|J|\leq |I|}\!\!\!\frac{\big|(\widehat{\mathcal L}_Z^J  H_1)_{LL}\big|}{1+|q|}\, \chi(\tfrac{r}{t+1}),
\eq
since
\begin{equation*}
\sum_{|J|\leq |I|,\, |K|\leq |I|/2+1\!\!\!\!\! \!\!\!\!\! } \big(\frac{\big|(\widehat{\mathcal L}_Z^J  H_1)_{LL}\big|\!\!}{1+|q|}+\frac{\big|\widehat{\mathcal L}_Z^J  H_1\big|\!}{1+t}\big)  \,\big|\pa\widehat{\mathcal L}_Z^K h^1_{\mu\nu}\big|
\les \frac{\varepsilon(1+t)^{2\delta}(1+q_+)^{-\gamma}\!\!\!\!}{(1+t+r)(1+|q|)}
\sum_{|J|\leq |I|}\!\!\!\big(\frac{\big|(\widehat{\mathcal L}_Z^J  H_1)_{LL}\big|}{1+|q|}+\frac{\big|\widehat{\mathcal L}_Z^J  H_1\big|}{1+t+r}\big).
 \end{equation*}
  We have
 \beq
\Big(\int |F_5^{k}|^2\,  w\, dx \Big)^{1/2}\les \frac{\varepsilon}{1+t }
E_k(\tau)^{1/2}.
\eq
By Hardy's inequality
\beq
\Big(\int |F_6^{k}|^2\,  w\, dx \Big)^{1/2}
\les\sum_{|J|\leq |I|}\!\! \frac{\varepsilon}{(1+t)^{2-2\delta} }\Big( \int  \frac{|\widehat{\mathcal L}_Z^J h^1|^2}{(1+|q|)^2} wdx \Big)^{1/2}\les \frac{\varepsilon}{(1+t)^{2-2\delta} }
E_k(\tau)^{1/2}.
\eq
Dealing with the last term $F_7^{k}$ requires a slight generalization of Hardy's inequality
Corollary 13.3 in \cite{LR3}:
\begin{corollary} \label{cor:Poinc}
Let $\gamma>0$ and $\mu>0$.
Then for any $-1\leq a\leq 1$ and any $\phi\in C^\infty_0(\R^3)$
if in addition  $a<2\min{(\gamma, \mu)}$, we have
\begin{equation}\label{eq:poincaretwo}
\int \frac{|\phi|^2}{(1+|q|)^2}\,\frac{(1+|q|)^{-a} \,
}{(1+t+|q|)^{1-a}}\, \frac {w\, dx}{(1+q_-)^{2\mu}}\les \int {|\pa
\phi|^2}\min\, (w', \frac{w}{(1+t+|q|)^{1-a}}) \, dx.
\end{equation}
\end{corollary}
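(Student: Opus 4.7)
The plan is to reduce the inequality to a one-dimensional weighted Hardy inequality in the radial variable, splitting the integral into the exterior region $r>t$ (where $q_->0$ vanishes and $w\sim (1+q)^{1+2\gamma}$) and the interior region $r\leq t$ (where $w\sim 1+(1+|q|)^{-2\mu}$). After fixing the angular direction $\omega\in S^2$ and writing $dx=r^2 dr\,d\omega$, I would absorb the $r^2$ factor together with the $t$--dependent denominators into a single radial weight
\[
\Omega(r;t)=r^2\cdot\frac{(1+|q|)^{-a}}{(1+t+r)^{1-a}}\cdot\frac{w(t,r)}{(1+q_-)^{2\mu}},
\]
so that the left hand side of \eqref{eq:poincaretwo} becomes, up to the angular integration, $\int_0^\infty (1+|q|)^{-2}|\phi|^2\,\Omega\,dr$.

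First, I would handle the exterior piece $q\geq 0$. The natural identity is integration by parts in the form
\[
\frac{d}{dr}\Bigl(-\frac{\Omega(r;t)}{1+|q|}\Bigr)=\frac{\Omega}{(1+|q|)^2}-\frac{\Omega'(r;t)}{1+|q|},
\]
which, after multiplication by $|\phi|^2$ and integration, produces the desired $|\phi|^2/(1+|q|)^2$ term on one side and a cross term $\int \Omega (1+|q|)^{-1}|\phi||\partial_r\phi|\,dr$ on the other, estimated by Cauchy--Schwarz. The crux is that the boundary terms vanish (since $\phi\in C^\infty_0$ and, in the exterior with $\mu$ irrelevant, $\Omega/(1+|q|)\to 0$ at spatial infinity because $2\gamma-a>0$ only implies polynomial growth but $\phi$ has compact support), and that the remainder of the derivative $-\Omega'/(1+|q|)$ absorbs into the left side. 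Since in this region $w'\asymp (1+q)^{2\gamma}$ while $w/(1+t+r)^{1-a}\asymp (1+q)^{1+2\gamma}/(1+t+r)^{1-a}$, the minimum appearing on the right hand side precisely matches the bound coming from $|\Omega'|\,dr$, provided $a<2\gamma$ ensures the correct sign of the decisive integration-by-parts term.

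Second, in the interior region $q\leq 0$ (so $q_-=|q|$), I would repeat the integration by parts with the weight $\Omega$ now carrying the additional factor $(1+|q|)^{-2\mu}$ (from combining $w\sim (1+|q|)^{-2\mu}$ with the explicit $(1+q_-)^{-2\mu}$ already in the integrand, which multiplies to $(1+|q|)^{-4\mu}$ in the relevant regime, while the bounded piece $w=O(1)$ is treated separately using Lemma \ref{lemma:hardy} directly). The hypothesis $a<2\mu$ now plays the role that $a<2\gamma$ played in the exterior: it guarantees that $-\partial_r\bigl(\Omega/(1+|q|)\bigr)$ is bounded below, at the threshold where $|q|$ is small, by a positive multiple of $\Omega/(1+|q|)^2$, so that the $|\phi|^2/(1+|q|)^2$ term can be reabsorbed on the left. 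The boundary term at $r=t$ (where $q=0$) matches between the interior and exterior computations and cancels, and the boundary at $r=0$ is harmless because of the $r^2$ factor. Combining the two regions gives \eqref{eq:poincaretwo}, with the $\min$ on the right emerging naturally as the comparison of $|\Omega'|$ with $\Omega/(1+t+r)^{1-a}$, which are the two admissible ways to estimate the remainder.

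The main obstacle will be the careful bookkeeping of the weight $\Omega'(r;t)$ across the boundary $r=t$, where $w$ is only Lipschitz and where the $\min$ structure on the right hand side originates: one must verify that on each side the effective derivative of the weight is dominated by $w'$ and, simultaneously, by $w/(1+t+r)^{1-a}$, so that either bound can be chosen pointwise; the restrictions $a\leq 1$ and $a<2\min(\gamma,\mu)$ are precisely what is needed to make this dichotomy work uniformly in $t$. Once this is done, Corollary \ref{cor:Poinc} follows from the standard Hardy inequality applied radially, as in the proof of Lemma \ref{lemma:hardy} (Corollary 13.3 of \cite{LR3}), with only the weights modified.
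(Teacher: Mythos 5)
The paper offers no proof of this corollary: it is quoted as (a slight generalization of) Corollary 13.3 of \cite{LR3}, whose proof rests on Lemma 13.1 there. Your overall plan --- fix the angle, reduce to a weighted one-dimensional Hardy inequality in $r$, split at the light cone, and let $a<2\gamma$ govern the exterior and $a<2\mu$ the interior --- is exactly the right family of arguments and is how the cited lemma is proved. The problem is that the specific mechanism you describe does not close.

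The step that fails is the claim that $\int \Omega'(1+|q|)^{-1}|\phi|^2\,dr$ ``absorbs into the left side'' in the exterior. There $\Omega(r)=r^2(1+q)^{1+2\gamma-a}(1+r)^{a-1}$, so $\Omega'/\Omega=\tfrac{2}{r}+\tfrac{1+2\gamma-a}{1+q}-\tfrac{1-a}{1+r}$ and hence
\begin{equation*}
\frac{\Omega'}{1+q}\;\geq\;\Bigl((1+2\gamma-a)-(1-a)\Bigr)\frac{\Omega}{(1+q)^2}\;=\;2\gamma\,\frac{\Omega}{(1+q)^2},
\end{equation*}
i.e.\ the ``remainder'' has the same sign as, and is bounded below by a fixed multiple of (indeed generically exceeds), the very quantity you are trying to estimate; it is precisely the hypothesis $a<2\gamma$ that makes $\Omega$ an increasing power of $(1+q)$ and rules out absorption. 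The correct mechanism is to integrate against the antiderivative $V(r)=\int_t^r\Omega(s)(1+|q(s)|)^{-2}\,ds$: $V$ vanishes on the light cone (so no boundary matching is needed), Cauchy--Schwarz gives $\int V'|\phi|^2\le 4\int (V^2/V')|\pa_r\phi|^2$, and $a<2\gamma$ enters only to guarantee $V\lesssim (1+q)^{-1}\Omega$, whence $V^2/V'\lesssim\Omega$, which is then checked against both branches of the $\min$ (this last comparison needs $a\ge 0$; testing \eqref{eq:poincaretwo} on a cutoff at radius $R$ shows the two sides differ by a factor $R^{|a|}$ when $a<0$, so the stated range $-1\le a$ should be read as $0\le a$, consistent with the only application $a=4\delta$). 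In the interior your sketch is also too optimistic: the contribution of $\pa_r(r^2)$ is not negligible when $r\ll t$, and the antiderivative of the interior weight saturates because its exponent $-2-a-2\mu$ lies below $-1$, so no choice of integration endpoint makes $V^2/V'$ small near the light cone by itself; what rescues the argument in \cite{LR3} is the additional positive term produced when the $r$-derivative falls on $(1+q_-)^{-2\mu}$, namely $2\mu\,(1+q_-)^{-1-2\mu}$ times the multiplier, and this is where $\mu>0$ and $a<2\mu$ genuinely enter. Neither effect is captured by a pointwise absorption of the weight's derivative, so as written the proposal would not yield \eqref{eq:poincaretwo}.
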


The last term $F_7^{k}$ will be estimated differently in terms of the space time integral.
By Hardy's inequality and \eqref{eq:higherwavecoordinateLiederivative}
\begin{multline}
 \int |F_7^{k}|^2  w\, dx
\les \frac{\varepsilon^2}{1+\tau} \sum_{|J|\leq |I|}
\int
\frac{(1+q_+)^{-2\gamma}\!\!\!}{(1+\tau+r)^{1-4\delta}(1+|q|)^2}\frac{| (\widehat{\mathcal L}_Z^J H_1)_{LL}|^2}{(1+|q|)^2}\, w \,dx
\\
\les\frac{\varepsilon^2}{1+\tau} \sum_{|J|\leq |I|}
\int |\pa(\widehat{\mathcal L}_Z^J H_1)_{LL}|^2\, \min{\Big( w^\prime,\frac{w}{(1+\tau+|q|)^{1-4\delta}}\Big)}  dx \\
\les \!\frac{\varepsilon^2}{1\!+\!\tau}\Big( \sum_{|J|\leq |I|}\int
 |\overline{\pa} \widehat{\mathcal L}_Z^J H_1|^2w^\prime dx
+\!\!\!\!\!\!\! \!\!\sum_{|J|+|K|\leq |I|} \int\!\big|  \widehat{\mathcal L}_Z^{J} {h}\big|^2
    \big|\pa  \widehat{\mathcal L}_Z^{K} h\big|^2 \, \frac{w\, dx }{(1\!+\tau\!+\!r)^{1-4\delta}}
    +\int_{|x|\leq 3\tau\!/4} \!\!\!\!\!\!
    \frac{M^2 w^\prime dx \, \!\!\!}{\,\,\,\,\,(1\!+\!\tau)^4}\,\Big)
    \\
  \les
  \frac{\varepsilon^2}{1\!+\!\tau}\!\!\sum_{|J|\leq |I|}\int\!
 |\overline{\pa} \widehat{\mathcal L}_Z^J H_1|^2w^\prime\, dx
 + \frac{\varepsilon^4}{1\!+\!\tau}\!\!\sum_{|J|\leq |I|} \int\!\!
    \frac{(1+q_+)^{-4\delta}}{(1\!+\!\tau\!+\!r)^{3-8\delta} }
    \Big(\big|\pa  \widehat{\mathcal L}_Z^{J} h\big|^2\!
    +\frac{| \widehat{\mathcal L}_Z^{J} h|^2}{(1+|q|)^2}\Big) w\, dx
    +\frac{\varepsilon^2 M^2}{(1\!+\!\tau)^{3+2\mu}}.
\end{multline}
Here we again write $h=h^0+h^1$. We have
\beq
\sum_{|J|\leq |I|} \int\!\!
    \frac{(1+q_+)^{-4\delta}}{(1\!+\!\tau\!+\!r)^{3-8\delta} }
    \Big(\big|\pa  \widehat{\mathcal L}_Z^{J} h^0\big|^2\!
    +\frac{| \widehat{\mathcal L}_Z^{J} h^0|^2}{(1+|q|)^2}\Big) w\, dx
    \les
    \frac{M^2}{(1+\tau)^{3-2\gamma-4\delta} }
\eq
and by Hardy's inequality again
\beq
\sum_{|J|\leq |I|} \int\!\!
    \frac{(1+q_+)^{-4\delta}}{(1\!+\!\tau)^{3-8\delta} }
    \Big(\big|\pa  \widehat{\mathcal L}_Z^{J} h^1\big|^2\!
    +\frac{| \widehat{\mathcal L}_Z^{J} h^1|^2}{(1+|q|)^2}\Big) w\, dx
    \les
  \frac{1}{(1\!+\!\tau)^{3-8\delta} }   \sum_{|J|\leq |I|} \int\!\!
  \big|\pa  \widehat{\mathcal L}_Z^{J} h^1\big|^2\!
    w\, dx.
\eq
Hence
\begin{equation*}
 \int |F_7^{k}|^2  w\, dx
  \les
  \frac{\varepsilon^2}{1\!+\!\tau}\!\!\sum_{|J|\leq |I|}\int\!
 |\overline{\pa} \widehat{\mathcal L}_Z^J h^1|^2w^\prime\, dx
 + \frac{\varepsilon^2}{(1\!+\!\tau)^3 } \Big(  \sum_{|J|\leq |I|}\varepsilon^2 \int\!\!
  \big|\pa  \widehat{\mathcal L}_Z^{J} h^1\big|^2\!
    w\, dx+M^2\Big)
    +\frac{\varepsilon^4 M^2}{(1\!+\!\tau)^{4-2\gamma-4\delta}}.
\end{equation*}
It follows that
\begin{multline*}
\int \Big(\int |F_7^{k}|^2  w\, dx\Big)^{1/2} E_k(\tau)^{1/2} \,d\tau  \\
 \leq
    C\big(\varepsilon S_k(t)\big)^{1/2}\Big(\int_0^t \frac{\varepsilon}{1+\tau} E_k(\tau)\,d\tau\Big)^{1/2}
    +C\int_0^t \frac{\varepsilon\big(\varepsilon E_k(\tau)+M E_k(\tau)^{1/2})}{(1+\tau)^{3/2}} +\frac{\varepsilon^2 M E_k(\tau)^{1/2}}{(1\!+\!\tau)^{2-\gamma-2\delta}}\, d\tau\\
    \leq
    \varepsilon S_k(t)+\int_0^t \frac{C^\prime\varepsilon}{1+\tau} E_k(\tau)\,d\tau
    +\int_0^t \Big(\frac{C^\prime \varepsilon  }{(1+\tau)^{3/2}} +\frac{C^\prime\varepsilon^2 }{(1\!+\!\tau)^{2-\gamma-2\delta}}\Big)M E_k(\tau)^{1/2}\, d\tau.
\end{multline*}
Summing up we have
\begin{multline} \label{eq:wavecommutatorenergy}
\int_0^t \!\!\int \!\!\big| \big[ \Boxr_g \widehat{\mathcal L}_Z^I- {\mathcal L}_Z^I\Boxr_g\big]h^1_{\mu\nu}\big|
\,\big|\pa{\mathcal L}_Z^I h^1\big| w\, dx d\tau\\
 \leq
    \varepsilon S_k(t)+\int_0^t \frac{C^\prime\varepsilon}{1+\tau} E_k(\tau)\,d\tau
    +\int_0^t \Big(\frac{C^\prime \varepsilon  }{(1+\tau)^{3/2}} +\frac{C^\prime\varepsilon^2 }{(1\!+\!\tau)^{2-\gamma-2\delta}}\Big)M E_k(\tau)^{1/2}\, d\tau.
\end{multline}

\subsubsection{Higher order Energy $L^2$ estimates} Here we give the proof of Theorem \ref{thm:Einsteinenergyestimates} using the decay estimates proven in the previous section. We will argue by induction so we assume the energy estimate is true for $k-1$ and we will prove it for $k$.
Using the energy inequality Lemma \ref{lem:Decayenergy} we get from adding
up the energy contributions from the inhomogeneous term \eqref{eq:inhomogeneousenergy}, the commutator with the wave equation
\eqref{eq:wavecommutatorenergy} and the Vlasov matter
\begin{multline*}
E_k(t)+S_k(t)\leq 8 E_k(0)+32 \varepsilon S_k(t)
+16 M_k\!\!\int_0^t \frac{M E_k(\tau)^{1/2}}{(1+\tau)^{3/2}}\, d\tau
+C^{\prime\prime}\varepsilon \!\int_0^t\!\!\!
\Big( \frac{ E_{k}(\tau)}{1+\tau}+
\frac{c_{k^\prime}(1\!+\!\tau)^{ c_{k^\prime} \varepsilon}\!\!\!}{1+\tau}
E_{k-1}(\tau)\Big) d\tau\\
+C^{\prime\prime}\int_0^t \Big(\frac{\varepsilon^2}{(1+\tau)^{2-\gamma-\max\{2\delta,c_{k'}\varepsilon\}}}+\frac{\varepsilon}{(1+\tau)^{3/2}}\Big)M E_k(\tau)^{1/2}\, d\tau
+16\!\int_0^t\!\!\Big(\sum_{|I|\leq k\!\!\!\!}\Vert Z^I  T  (\tau,\cdot) \Vert_{L^{2}}^2\!\Big)^{\!\!1/2}\!E_k(\tau)^{1/2} \, d\tau
\end{multline*}
for some universal constant $M_k$.
We now choose $\varepsilon$ so small that $32\varepsilon\leq 1$ so that $S_k(t)$ in the right can be absorbed in
$S_k(t)$ on the left, so that $c_{k'}\varepsilon\leq 2\delta$ and so that
$C^{\prime\prime} \varepsilon \leq M_k$ we obtain
\begin{multline*}
E_k(t)\leq 8 E_k(0)
+32 M_k\!\!\int_0^t \frac{M E_k(\tau)^{1/2}}{(1+\tau)^{3/2}}\, d\tau
+C^{\prime\prime}\varepsilon \!\int_0^t\!\!\!
\Big( \frac{ E_{k}(\tau)}{1+\tau}+
\frac{c_{k^\prime}(1\!+\!\tau)^{ c_{k^\prime} \varepsilon}\!\!\!}{1+\tau}
E_{k-1}(\tau)\Big) d\tau\\
+C^{\prime\prime}\int_0^t \frac{\varepsilon^2 M E_k(\tau)^{1/2}}{(1+\tau)^{2-\gamma-2\delta}} \, d\tau
+16\!\int_0^t\!\!\Big(\sum_{|I|\leq k\!\!\!\!}\Vert Z^I  T  (\tau,\cdot) \Vert_{L^{2}}^2\!\Big)^{\!\!1/2}\!E_k(\tau)^{1/2} \, d\tau.
\end{multline*}

\section{The continuity argument and the proof of Theorem \ref{thm:main2}}
\label{section:cty}

The proof of Theorem \ref{thm:main2} is a direct consequence of Theorem \ref{thm:mainL2}, Proposition \ref{prop:suppf}, Proposition \ref{thm:Einsteinpointwiseestimates}, Proposition \ref{prop:higherordersharpdecay} and Theorem \ref{thm:Einsteinenergyestimates} as follows.

Let $T_*$ be the supremum of all times $T_1$ such that a solution of the reduced Einstein--Vlasov system \eqref{eq:Tmunu}, \eqref{eq:Vlasov}, \eqref{eq:RE1} attaining the given data exists for all $t\in [0,T_1]$ and satisfies
\begin{equation} \label{eq:bootstrap}
	E_N(t)^{\frac{1}{2}} \leq C_N \varepsilon (1+t)^{\delta},
	\qquad
	\sum_{\vert I \vert \leq N-1} \Vert Z^I \widehat{T}(t,\cdot) \Vert_{L^1}
	\leq
	C_N \varepsilon,
\end{equation}
for all $t\in [0,T_1]$, where $\delta>0$ is such that $\delta < \gamma < 1-8\delta$ and $C_N$ is a fixed large constant, to be determined, depending only on $N$, $\delta$ and $\supp(f_0)$.  Recall that $\widehat{T}^{\mu \nu} = T^{\mu \nu} - \frac{1}{2} \tr T g^{\mu \nu}$.  Clearly the set of such $T_1$ is non empty by the local existence theorem of Choquet-Bruhat \cite{ChBr71} (see also the textbook of Ringstr\"{o}m \cite{Ri}) and so $T_* > 0$.  Suppose $T_*<\infty$.

By Proposition \ref{thm:Einsteinpointwiseestimates} the pointwise bounds
\begin{equation} \label{eq:hpointwise}
	| h(t,x)|\leq \frac{C_N' \varepsilon(1+t)^{2\delta}}{(1+t+r)(1+q_+)^{2\delta}},
	\qquad
	\sum_{\vert I \vert + \vert J \vert \leq N-4}
	|\partial^I Z^J \partial h(t,x)|\leq \frac{C_N' \varepsilon(1+t)^{2\delta}}{(1+t+r) (1+\vert q \vert) (1+q_+)^{2\delta}},
\end{equation}
hold for some constant $C_N'$ depending only on $C_N$ and on $N$.  In particular the assumptions of Proposition \ref{prop:suppf} are satisfied, and so, provided $\varepsilon$ is sufficiently small,
\[
	\supp( \widehat{T}^{\mu \nu}) \subset \{ (t,x) \mid \vert x \vert \leq c t + K\},
\]
for some $0<c<1$, $K \geq 0$.

The assumptions of Proposition \ref{prop:higherordersharpdecay} are now satisfied and so Proposition \ref{prop:higherordersharpdecay} and Theorem \ref{thm:Einsteinenergyestimates} then imply that for $\varepsilon<\varepsilon_N$ we have
\begin{equation} \label{eq:energybounds}
	Q_k(t)
	\leq
	8 Q_k(0)
	+
	M_k M
	+
	C^{\prime\prime\prime}_N\varepsilon
	\!\int_0^t
	\frac{Q_{k}(\tau)}{1+\tau}
	+
	\frac{Q_{k-1}(\tau)}{(1+\tau)^{1-d_k\varepsilon}}
	d\tau
	+
	M_k \sum_{|I|\leq k} \int_0^t \Vert Z^I \widehat{T} (\tau,\cdot) \Vert_{L^{2}} \, d\tau,
\end{equation}
for each $k=0,1,\ldots,N$ and for all $t\in [0,T_*]$, where $Q_{-1} \equiv 0$. Here $\varepsilon_N$, $C_N'''$, $d_1,\ldots,d_N$ are constants which depend only on $C_N^\prime$, on $N$ and on $c$ and $K$ and a lower positive bound for $\min{\{\gamma,1-\gamma\}}$, whereas  $M_0,\ldots,M_N$ are universal constants (which in particular do not depend on $C_N$).

The pointwise bounds \eqref{eq:hpointwise} in particular imply that
\[
	\sum_{\vert I \vert \leq N-4}
	|Z^I \Gamma(t,x)|
	\leq
	\frac{C_N' \varepsilon}{(1+t)^{1+a}},
\]
for $t\in [0,T_*]$ and $\vert x \vert \leq c t + K$, with $a = 2-2\delta$.  The assumptions of Theorem \ref{thm:mainL2} are therefore satisfied so
\begin{align*}
	\sum_{\vert I \vert \leq k}
	\Vert Z^I T(t,\cdot) \Vert_{L^2}
	&
	\lesssim
	\frac{\mathcal{V}_{k}(1 + D_N \varepsilon)}{(1+t)^{\frac{3}{2}}}
	+
	D_N \mathbb{D}_{\left\lfloor \frac{k}{2} \right\rfloor +1}
	\left(
	\frac{E_{k-1}(t)^{\frac{1}{2}}}{(1+t)^{1+a}}
	+
	\frac{1}{(1+t)^{\frac{3}{2}}}
	\int_{0}^t
	\frac{E_{k}(s)^{\frac{1}{2}}}{(1+s)^{\frac{1}{2}}}
	ds
	\right)
	\\
	&
	\lesssim
	\frac{\mathcal{V}_{k}(1 + D_N \varepsilon)}{(1+t)^{\frac{3}{2}}}
	+
	D_N \mathbb{D}_{\left\lfloor \frac{k}{2} \right\rfloor +1}
	\frac{Q_{k}(t)}{(1+t)},
\end{align*}
for all $k=0,1,\ldots,N$ and for all $t\in [0,T_*]$, where the constant $D_N$ depends on $C_N$.

Now the $L^1$ bounds \eqref{eq:bootstrap} and the Sobolev inequality, Lemma \ref{lem:Sobolevt3}, imply that
\[
	\sum_{\vert I \vert \leq N-4} \Vert Z^I \widehat{T}(t,\cdot) \Vert_{L^{\infty}}
	\leq
	\frac{C_N \varepsilon}{(1+t)^{3}},
\]
and so,
\begin{multline*}
	\sum_{\vert I \vert \leq k}k \Vert Z^I \widehat{T} (t,\cdot) \Vert_{L^2}
	\lesssim
	\Big(
	1
	+
	\sum_{\vert J \vert \leq \left\lfloor \frac{k}{2} \right\rfloor +1} \Vert \psi Z^J h^1 (t,\cdot) \Vert_{L^{\infty}}
	\Big)
	\sum_{\vert I \vert \leq k} \Vert Z^I T (t,\cdot) \Vert_{L^2}
	\\
	+
	\sum_{\vert J \vert \leq \left\lfloor \frac{k}{2} \right\rfloor}
	\Vert Z^J T (t,\cdot) \Vert_{L^{\infty}}
	\sum_{1\leq \vert I \vert \leq k} \Vert \psi Z^I h^1 (t,\cdot) \Vert_{L^2},
\end{multline*}
where $\psi(t,x)$ is the indicator function of the set $\{ \vert x \vert \leq c t + K\}$.  Since $\sum \Vert Z^J T (t,\cdot) \Vert_{L^{\infty}} \leq 2\sum \Vert Z^J \widehat{T} (t,\cdot) \Vert_{L^{\infty}}$ provided $\varepsilon$ is sufficiently small, it therefore follows that,
\[
	\sum_{\vert I \vert \leq k }\Vert Z^I \widehat{T} (t,\cdot) \Vert_{L^2}
	\leq
	\frac{\mathcal{V}_{k}(C + D_N \varepsilon)}{(1+t)^{\frac{3}{2}}}
	+
	D_N \mathbb{D}_{\left\lfloor \frac{k}{2} \right\rfloor +1}
	\frac{Q_{k}(t)}{(1+t)}
	+
	D_N \varepsilon \frac{Q_{k}(t)}{(1+t)},
\]
for $k=0,1,\ldots,N$, where the constant $C$ is independent of $C_N$ and the constant $D_N$ depends on $C_N$.  Inserting into \eqref{eq:energybounds} and using the fact that,
\[
	Q_N(0) + \mathbb{D}_{\left\lfloor N/2 \right\rfloor +1} + \mathcal{V}_N + M < \varepsilon,
\]
and making $M_k$ and $C_N'''$ larger if necessary gives,
\begin{equation} \label{eq:Qkpreinduction}
	Q_k(t)
	\leq
	M_k \varepsilon
	+
	C_N''' \varepsilon
	\int_0^t
	\frac{Q_{k}(\tau)}{1+\tau}
	+
	\frac{Q_{k-1}(\tau)}{(1+\tau)^{1-d_k\varepsilon}}
	d\tau,
\end{equation}
for $k=0,1,\ldots,N$.  It follows from an inductive argument that the bound \eqref{eq:Qkpreinduction} implies that
\begin{equation} \label{eq:Qkinduction}
	Q_k(t)
	\leq
	(M_0 + M_1 + \ldots + M_k) \varepsilon (1+t)^{(d_1+\ldots+d_k + (k+1)C_N''')\varepsilon},
\end{equation}
for all $t\in [0,T_*]$ and $k=0,1,\ldots,N$, using the following form of the Gr\"{o}nwall inequality.
\begin{lemma} \label{lem:Gronwall2}
	\!For $t\!>\!0$ and continuous functions $v,a,b: [0 ,t]\! \to \mathbb{R}$ such that $a\!\geq \!0$ and $b$ is non-decreasing, if
	\[
		v(s) \leq \int_0^s a(s') v(s') ds' + b(s),
	\]
	for $s\in [0,t]$, then
	\[
		v(s)
		\leq
		b(s)
		e^{\int_0^{s} a(s') ds'}.
	\]
\end{lemma}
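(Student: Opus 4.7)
The plan is to proceed by the classical integrating-factor argument, with the only subtlety coming from the fact that $b$ is assumed merely non-decreasing rather than differentiable, so I will exploit monotonicity instead of differentiating $b$.

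First I would set $V(s) := \int_0^s a(s') v(s') \, ds'$, so that the hypothesis reads $v(s) \leq V(s) + b(s)$. Since $a,v$ are continuous, $V$ is continuously differentiable with $V(0)=0$ and $V'(s) = a(s) v(s) \leq a(s) V(s) + a(s) b(s)$, where I have used $a\geq 0$. Writing $A(s) := \int_0^s a(s') \, ds'$, multiplication by the integrating factor $e^{-A(s)}$ yields
\[
\frac{d}{ds}\bigl( V(s) e^{-A(s)} \bigr) \leq a(s) b(s) e^{-A(s)}.
\]
Integrating from $0$ to $s$ and using $V(0)=0$ gives
\[
V(s) e^{-A(s)} \leq \int_0^s a(s') b(s') e^{-A(s')} \, ds'.
\]

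Now the key step is to use that $b$ is non-decreasing: for $s' \leq s$ we have $b(s') \leq b(s)$, and therefore
\[
V(s) e^{-A(s)} \leq b(s) \int_0^s a(s') e^{-A(s')} \, ds' = b(s) \bigl( 1 - e^{-A(s)} \bigr),
\]
since $\frac{d}{ds'}\bigl(-e^{-A(s')}\bigr) = a(s') e^{-A(s')}$. Multiplying through by $e^{A(s)}$ yields $V(s) \leq b(s)\bigl(e^{A(s)} - 1\bigr)$, and substituting back into $v(s) \leq V(s) + b(s)$ gives the desired bound $v(s) \leq b(s) e^{A(s)}$.

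There is no real obstacle here; the argument is entirely standard, and the only point to be careful about is the monotonicity of $b$, which substitutes for any differentiability assumption and is precisely what allows one to pull $b(s)$ out of the integral in the middle step. Continuity of $a$ and $v$ is enough to justify the differentiation of $V$, and continuity of $b$ is not strictly needed beyond what is assumed.
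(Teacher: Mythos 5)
Your argument is correct: the integrating-factor computation for $V(s)=\int_0^s a v$, combined with the monotonicity of $b$ to pull $b(s)$ out of the integral and the identity $\int_0^s a(s')e^{-A(s')}ds' = 1-e^{-A(s)}$, gives exactly the stated bound. The paper states this lemma without proof, and your proof is the standard one it implicitly relies on.
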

Indeed, recall that $Q_{-1} \equiv 0$ and so, from the bound \eqref{eq:Qkpreinduction} with $k=0$, it follows from the Lemma \ref{lem:Gronwall2} with $a(s) = C_N''' \varepsilon(1+s)^{-1}$ and $b(s) = M_0 \varepsilon$ that,
\[
	Q_0(t) \leq M_0 \varepsilon (1+t)^{C_N'''\varepsilon}.
\]
Now suppose \eqref{eq:Qkinduction} holds for some $0 \leq k \leq N-1$.  Then, since
\begin{multline*}
	C_N''' \varepsilon \int_0^t \frac{Q_k (\tau)}{(1+\tau)^{1-d_{k+1} \varepsilon}} d \tau
	\leq
	C_N''' \varepsilon^2 (M_0 + \ldots + M_k)
	\int_0^t (1+\tau)^{(d_1+\ldots+d_{k+1} + (k+1)C_N''')\varepsilon - 1} d \tau
	\\
	\leq
	\frac{C_N''' \varepsilon^2 (M_0 + \ldots + M_k)
	(1+t)^{(d_1+\ldots+d_{k+1} + (k+1)C_N''')\varepsilon}
	}{(d_1+\ldots+d_{k+1} + (k+1)C_N''')\varepsilon}
	\leq
	(M_0 + \ldots + M_k)\varepsilon
	(1+t)^{(d_1+\ldots+d_{k+1} + (k+1)C_N''')\varepsilon},
\end{multline*}
it follows from \eqref{eq:Qkpreinduction} and Lemma \ref{lem:Gronwall2} with $a(s) = C_N''' \varepsilon(1+s)^{-1}$ and $b(s) = M_{k+1} \varepsilon + (M_0 + \ldots + M_k)\varepsilon (1+s)^{(d_1+\ldots+d_{k+1} + (k_1)C_N''')\varepsilon}$ that
\begin{multline*}
	Q_{k+1}(t)
	\leq
	\left( M_{k+1} \varepsilon
	+
	(M_0 + \ldots + M_k)\varepsilon
	(1+t)^{(d_1+\ldots+d_{k+1} + (k+1)C_N''')\varepsilon}
	\right)
	(1+t)^{C_N''' \varepsilon}
	\\
	\leq
	(M_0 + \ldots + M_{k+1})\varepsilon
	(1+t)^{(d_1+\ldots+d_{k+1} + (k+2)C_N''')\varepsilon}.
\end{multline*}

Theorem \ref{thm:mainL2} moreover implies that,
\begin{align*}
	\sum_{\vert I \vert \leq N-1} \Vert Z^I \widehat{T}(t,\cdot) \Vert_{L^1}
	\lesssim
	\varepsilon (C + \varepsilon D_N)
	+
	\varepsilon D_N \frac{Q_{N-2}(t)}{(1+t)^{\frac{1}{2} - 2 \delta}}
	+
	\varepsilon D_N \int_0^t \frac{Q_{N}(s)}{(1+t)^{\frac{3}{2} - 2 \delta}} ds
	,
\end{align*}
where the constant $C$ is independent of $C_N$ and the constant $D_N$ depends on $C_N$.  Inserting the above bound for $Q_N$ then gives
\[
	\sum_{\vert I \vert \leq N-1} \Vert Z^I \widehat{T}(t,\cdot) \Vert_{L^1}
	\lesssim
	C \varepsilon + D_N \varepsilon^2,
\]
provided $\varepsilon$ is sufficiently small, for some new $C$, $D_N$ as above.  Now, as above,
\begin{multline*}
	\sum_{\vert I \vert \leq N-1} \Vert Z^I \widehat{T} (t,\cdot) \Vert_{L^1}
	\lesssim
	\Big(
	1
	+
	\sum_{\vert J \vert \leq \left\lfloor \frac{N}{2} \right\rfloor +1} \Vert \psi Z^J h^1 (t,\cdot) \Vert_{L^{\infty}}
	\Big)
	\sum_{\vert I \vert \leq N-1} \Vert Z^I T (t,\cdot) \Vert_{L^1}
	\\
	+
	\sum_{\vert J \vert \leq \left\lfloor \frac{N}{2} \right\rfloor}
	\Vert Z^J T (t,\cdot) \Vert_{L^{\infty}}
	\sum_{1\leq \vert I \vert \leq N-1} \Vert \psi Z^I h^1 (t,\cdot) \Vert_{L^1},
\end{multline*}
and, since,
\[
	\sum_{1\leq \vert I \vert \leq N-1} \Vert \psi Z^I h^1 (t,\cdot) \Vert_{L^1}
	\leq
	\Vert \psi^{\frac{1}{2}} \Vert_{L^2}
	\sum_{1\leq \vert I \vert \leq N-1} \Vert \psi^{\frac{1}{2}} Z^I h^1 (t,\cdot) \Vert_{L^2}
	\leq
	(1+t)^{\frac{3}{2}} \frac{1}{1+t} E_{\vert I \vert}(t)^{\frac{1}{2}},
\]
where the equality \eqref{eq:partialZ} was used, it follows that
\[
	\sum_{\vert I \vert \leq N-1} \Vert Z^I \widehat{T} (t,\cdot) \Vert_{L^1}
	\leq
	(C + \varepsilon C_N''') (C \varepsilon + D_N \varepsilon^2)
	+
	\frac{C + \varepsilon C_N}{(1+t)^3} (1+t)^{\frac{1}{2}} Q_N(t)
	,
\]
and so,
\begin{equation} \label{eq:Thimproved}
	\sum_{\vert I \vert \leq N-1} \Vert Z^I \widehat{T} (t,\cdot) \Vert_{L^1}
	\leq
	C' \varepsilon + D_N' \varepsilon^2,
\end{equation}
where $D_N'$ depends on $C_N$ and $C'$ does not.

It follows from the bound \eqref{eq:Qkinduction} with $k=N$ and the bound \eqref{eq:Thimproved}, provided the constant $C_N$ is chosen so that $C_N \geq \max \{ 2(M_0+\ldots+M_N), 4C \}$ and $\varepsilon$ is chosen so that $\varepsilon < \min\{ \frac{\delta}{2} (d_1+\ldots + d_N + (N+1)C_N''')^{-1}, \frac{C_N}{4D_N'} \}$, that the bounds
\[
	E_N(t)^{\frac{1}{2}} \leq \frac{C_N}{2} \varepsilon (1+t)^{\frac{\delta}{2}},
	\qquad
	\sum_{\vert I \vert \leq N-1} \Vert Z^I \widehat{T}(t,\cdot) \Vert_{L^1}
	\leq
	\frac{C_N}{2} \varepsilon,
\]
hold for all $t \in [0,T_*]$.  Appealing once again to the local existence theorem, this contradicts the maximality of $T_*$ and hence the solution exists and the estimates hold for all $t\in[0,\infty)$.

\end{document}